\DeclareFontFamily{OMX}{MnSymbolE}{}
\DeclareSymbolFont{MnLargeSymbols}{OMX}{MnSymbolE}{m}{n}
\DeclareFontShape{OMX}{MnSymbolE}{m}{n}{
    <-6>  MnSymbolE5
   <6-7>  MnSymbolE6
   <7-8>  MnSymbolE7
   <8-9>  MnSymbolE8
   <9-10> MnSymbolE9
  <10-12> MnSymbolE10
  <12->   MnSymbolE12
}{}
\DeclareFontShape{OMX}{MnSymbolE}{b}{n}{
    <-6>  MnSymbolE-Bold5
   <6-7>  MnSymbolE-Bold6
   <7-8>  MnSymbolE-Bold7
   <8-9>  MnSymbolE-Bold8
   <9-10> MnSymbolE-Bold9
  <10-12> MnSymbolE-Bold10
  <12->   MnSymbolE-Bold12
}{}
\let\llangle\@undefined
\let\rrangle\@undefined
\DeclareMathDelimiter{\llangle}{\mathopen}%
                     {MnLargeSymbols}{'164}{MnLargeSymbols}{'164}
\DeclareMathDelimiter{\rrangle}{\mathclose}%
                     {MnLargeSymbols}{'171}{MnLargeSymbols}{'171}
\renewcommand{\labelenumi}{\textup{(\roman{enumi})}}
\newtheorem{theorem}{Theorem}[section]
\newtheorem*{theorem*}{Theorem}
\newtheorem{proposition}[theorem]{Proposition}
\newtheorem*{proposition*}{Proposition}
\newtheorem{lemma}[theorem]{Lemma}
\newtheorem*{lemma*}{Lemma}
\newtheorem{corollary}[theorem]{Corollary}
\newtheorem*{corollary*}{Corollary}
\newtheorem*{definition*}{Definition}
\newtheorem*{fact*}{Fact}
\begin{document}


\title{Quasi-probabilities in Conditioned Quantum Measurement and a Geometric/Statistical Interpretation of Aharonov's Weak Value}

\author{\name{Jaeha Lee}{1}, \name{Izumi Tsutsui}{2}}

\address{
\affil{}{Theory Center, Institute of Particle and Nuclear Studies, High Energy Accelerator Research Organization (KEK), 1-1 Oho, Tsukuba, Ibaraki 305-0801, Japan}
\affil{1}{jlee@post.kek.jp}
\affil{2}{izumi.tsutsui@kek.jp}
}

\begin{abstract}%
We show that the joint behaviour of an arbitrary pair of (generally non-commuting) quantum observables can be described by quasi-probabilities, which are an extended version of the standard probabilities used for describing the outcome of measurement for a single observable.  The physical situations that require these quasi-probabilities arise when one considers quantum measurement of an observable conditioned by some other variable, with the notable example being the weak measurement employed to obtain Aharonov's weak value.  Specifically, we present a general prescription for the construction of quasi-joint-probability (QJP) distributions associated with a given combination of observables.  These QJP distributions are introduced in two complementary approaches: one from a bottom-up, strictly operational construction realised by examining the mathematical framework of the conditioned measurement scheme, and the other from a top-down viewpoint realised by applying the results of spectral theorem for normal operators and its Fourier transforms.  It is then revealed that, for a pair of simultaneously measurable observables, the QJP distribution reduces to the unique standard joint-probability distribution of the pair, whereas for a non-commuting pair there exists an inherent indefiniteness in the choice of such QJP distributions, admitting a multitude of candidates that may equally be used for describing the joint behaviour of the pair.  In the course of our argument, we find that the QJP distributions furnish the space of operators in the underlying Hilbert space with their characteristic geometric structures such that the orthogonal projections and inner products of observables can, respectively, be given statistical interpretations as `conditionings' and `correlations'.  The weak value $A_{w}$ for an observable $A$ is then given a geometric/statistical interpretation as either the orthogonal projection of $A$ onto the subspace generated by another observable $B$, or equivalently, as the conditioning of $A$ given $B$ with respect to the QJP distribution under consideration.
\end{abstract}

\maketitle

%
%


\tableofcontents

\newpage

\section{Introduction}

Since the discovery of quantum mechanics in the beginning of the last century, our classical understanding of the concept of observables has undergone a drastic change.  It is by now widely accepted that, in the microscopic world, measured values of a physical quantity, termed \lq observable\rq\ in quantum mechanics, are intrinsically random, and that certain combinations of quantum observables do not admit coexistence, as exemplified typically by the pair of observables corresponding to the position and the momentum of a particle.

Such remarkable characteristics of quantum observables impose a strong limitation to the mathematical framework to be employed for describing their probabilistic behaviour; namely, it is no longer possible, in general, to assign probability spaces for the description of the joint behaviour of their arbitrary combinations in the classical sense.
Nonetheless, various attempts have been made to construct a proper mathematical framework for the probabilistic description of the combination of quantum observables that resembles the Kolmogorovian style of formulation of classical probability theory.
Extending the notion of probability has since been one of the major trends, which yielded the
extended notion of probability which goes generally by the name of `quasi-probability' or `pseudo-probability'.
Among the most celebrated proposal is the Wigner-Ville (WV) distribution \cite{Wigner_1932,Ville_1948}, commonly known as the Wigner function in the physics community, which is primarily considered for a canonically conjugate pair of quantum observables to describe their joint behaviour.  
Another, though less known, example is the Kirkwood-Dirac (KD) distribution \cite{Kirkwood_1933,Dirac_1945}, which is structured differently but is meant to serve a similar purpose for arbitrary pairs.

Historically, those proposals including the WV and KD distributions have been made more or less in a heuristic manner, and as such, the general mathematical framework for the study, including the prescription for the concrete construction of such distributions to a pair of arbitrary quantum observables, which may comprehensively be termed `quasi-joint-probability' (QJP) distributions, is still underdeveloped, not to mention a transparent overview of the relations among the QJPs.  We know, for instance, that 
both the WV and KD distributions retain similar properties to the standard joint-probability distributions defined for a pair of classical random variables,
but they exhibit their own outstanding queerness in that the former admits negative numbers to be assigned whereas the latter takes even complex numbers.  However, we still do not know whether the peculiar properties of joint-probability including those of the WV and KD distributions, which have occasionally been considered a serious impediment to their physical interpretation, are a norm of QJP distributions, or there can be other types of examples which share classical properties of joint-probability in different aspects.  The theme of this paper revolves around the concept of QJP distributions of quantum observables, with the first objective being to present a mathematically solid framework to address some of their problems in a more systematic and lucid manner.

Another motivation of this paper comes from the recent rise of interest in the novel quantum observable called the \emph{weak value},
which has been put forward by Aharonov and co-workers \cite{Aharonov_1988} based on their time-symmetric formulation of quantum mechanics \cite{Aharonov_1964} proposed more than a half century ago.  In simple terms, the weak value
\begin{equation}
A_{w} := \frac{\langle \psi^{\prime}, A \psi \rangle}{\langle \psi^{\prime}, \psi \rangle}
\end{equation}
is a physical quantity that supposedly characterises the value of the observable $A$ in the process specified by an initial state $\vert\psi \rangle$ and a final state $\vert\psi^{\prime} \rangle$ both specified in advance.  Unlike the standard physical value which is given by one of the eigenvalues of an observable $A$, 
the weak value admits a definite value for any $A$, and is envisaged to be meaningful even for a set of non-commutable observables simultaneously.

This inspired a new insight for analysing the quantum nature of the system as well as for 
understanding various counter-intuitive phenomena in quantum mechanics based on the weak value. For instance, the complex-valued nature of $A_{w}$ allows for a direct measurement of the wave function, offering a novel technique to rival the existing technology of quantum tomography.  This in turn alludes us to contemplate on the possible trajectory of a particle \cite{Lundeen_2011,Mori_2015}, a notion which has conventionally been deemed untenable due to the incompatibility of measuring the position and the momentum simultaneously.
The weak value also admits novel physical interpretations on such fundamental aspects of quantum mechanics as the wave-particle duality and the local existence of the physical quantity itself, offering us a possible resolution to some of the quantum paradoxes, including the three-box paradox \cite{Aharonov_1991}, Hardy's paradox \cite{Yokota_2009} and the Cheshire cat paradox \cite{Aharonov_2005}.  

Despite its growing attention, the status of the weak value in quantum mechanics is still not solid, and especially its physical interpretation is still open to debate.  One of the recent strategies in addressing this question has been to investigate its relations to quasi-probabilities, specifically those to the KD distribution \cite{Ozawa_2011, Hofmann_2014}.  In this paper, we shall follow this line of study and show, among others, that a novel geometric/statistical interpretation emerges from these distributions.  This necessitates a sound mathematical basis of QJP distributions, which we will provide in the course of our discussions.

The main theme of this paper is thus to obtain a more coherent understanding of the formalism of QJP distributions of quantum observables, and subsequently to apply the results in some areas of the foundational problems of quantum mechanics.  In view of this, the key problems regarding QJP distributions may be to\begin{enumerate}
\item 
provide a reasonably solid mathematical framework for the study of QJP distributions based on measure and integration theory, and possibly on the theory of generalised functions,
\item 
present a viable scheme to address the inherent indefiniteness/arbitrariness to the possible candidates for QJP distributions of non-commuting pairs of quantum observables, a methodical way for their constructions, and the relation between each of the candidates, and
\item 
devise a procedure for measuring such various candidates of QJP distributions in a systematic manner.
\end{enumerate}
We shall address these problems from two complementary approaches: one from a bottom-up, strictly operational construction realised by carefully reviewing the mathematical description of the conditioned measurement scheme, and the other from a top-down viewpoint realised by applying the results of spectral theorem for normal operators and its Fourier transforms.

The results of the study shall be subsequently applied to the analysis for the physical interpretation of the weak value.  To this end, we first concentrate on the $L^{2}$ structures which the QJP distributions naturally induce, and observe that they furnish a statistical interpretation of the geometric structures introduced on the space of observables in the underlying Hilbert space, analogously to those introduced in the space of random variables in classical probability theory.  Geometric concepts such as orthogonal projections and inner products are accordingly endowed with statistical interpretations as `conditionings' and `correlations', respectively, and in addition the representation of linear operators by functions provides us with a convenient tool for evaluating statistical quantities involved.  These observations form a basis to perform further study on the weak value in general.  As a result, 
the weak value $A_{w}$ is given a geometric/statistical interpretation: either as the orthogonal projection of an observable $A$ on the subspace generated by another observable $B$ which is determined by one of the predetermined states entering in the weak value, or equivalently, as the conditioning of $A$ given $B$ with respect to the QJP distribution under consideration. 
Although we shall not discuss it here, we mention that 
this interpretation also leads to a set of novel and remarkable inequalities of uncertainty relations for approximation/estimation which are capable of treating both the standard position-momentum inequality and the time-energy inequality \cite{Lee_2016}.

As for the practical outcomes of our argument laid out for QJP distributions, 
we mentioned earlier the systematic construction of QJP distributions and the geometric/statistical interpretation of the weak value, but each of these can be made more explicit as follows.  
First, for the systematic construction of QJP distributions, we furnish a general prescription which ensures that it can describe the joint behaviour of an arbitrary pair of quantum observables.
Specifically, inspired by the observations made on the Fourier transform of the product spectral measure of two simultaneously measurable observables $A$ and $B$, we introduce a mixture $\#(s,t)$ of the disintegrated components of $e^{-isA}$ and $e^{-itB}$ with real parameters $s, t$ 
for arbitrary pairs of (generally non-commuting) observables $A$ and $B$, and thereby define the QJP distribution of the pair by the inverse Fourier transform of the distribution $(s,t) \mapsto \langle \psi, \#(s,t) \psi \rangle / \|\psi\|^{2}$ to a given quantum state $|\psi\rangle$. 
Each of the QJP distributions is then found to possess reasonable properties to be qualified as what its name suggests to be, and one can confirm that both the WV distribution and the KD distribution do belong to this class.  The inherent arbitrariness observed to the candidates for QJP distributions is then understood as the possible variety of the way one could mix the disintegrated components of the unitary operators, which originates directly from the non-commutative nature of the pair of the observables $A$ and $B$.
A concrete measurement scheme for members of a specific subfamily of QJP distributions is further proposed.

For the geometric/statistical interpretation of the weak value, on the other hand,
we start by noting that, as distributions, each QJP distribution naturally induces an $L^{2}$ structure.  We will then find that the QJP distributions provide convenient methods of representing geometric structures in terms of the inner products of the form
\begin{equation}\label{def:intro_inn_prod}
\llangle B, A \rrangle_{\psi,\alpha} := \frac{1 + \alpha}{2} \cdot \frac{\langle B\psi, A\psi\rangle}{\|\psi\|^{2}} + \frac{1 - \alpha}{2} \cdot \frac{\langle A\psi, B\psi\rangle}{\|\psi\|^{2}}, \quad -1 \leq \alpha \leq 1,
\end{equation}
which can be introduced on the space of operators in the underlying Hilbert space by integration of functions.  With this inner product, we are allowed to consider orthogonal projections onto the subspaces $\mathfrak{E}_{\psi}(B)$ of operators generated by self-adjoint operators $B$, and find that the orthogonal projections can be interpreted as conditioning given $B$ with respect to the QJP distributions under consideration.  The projection
\begin{equation}
P_{\alpha}(A|B;\psi) = \int_{\mathbb{R}}  \left( \frac{1 + \alpha}{2} \cdot \frac{\langle b, A \psi \rangle}{\langle b, \psi \rangle} + \frac{1 - \alpha}{2} \cdot \frac{\langle \psi, A b \rangle}{\langle \psi, b \rangle} \right) dE_{B}(b)
\end{equation}
of the observable $A$ on the subspace $\mathfrak{E}_{\psi}(B)$ is further found to be described by the weak value\footnote{
Here, $E_{B}$ denotes the unique spectral measure associated to the self-adjoint operator $B$ (more on this in Section~\ref{sec:ups_II_pre}). Intuitively, $E_{B}(b) = |b\rangle\langle b|$ is the projection associated with each of the eigenvalues $b$ of $B$, and thus $dE_{B}(b) = |b\rangle\langle b| db$ in a laxer expression.},
providing us with its proper geometric/statistical interpretation (Proposition~\ref{prop:orth_proj_cond_qe}).

Having furnished a general introduction to the topic of QJP distributions of quantum observables and the weak value along with a brief summary of the content, we now
give the outline of the present paper.  
After this introductory section, we organize the main body, Section 2 to Section 7, into the following three logical groups of mutually interrelated topics:
\begin{enumerate}
\renewcommand{\labelenumi}{(\Alph{enumi})}
\item {\bf QJP: Heuristic Construction}\hspace{5pt}  Four sections starting from Section~\ref{sec:ups_I} to \ref{sec:ps_II} are devoted to a heuristic and bottom-up construction of QJP distributions of a pair of quantum observables. 
This is accomplished by a thorough analysis on the mathematical formalism of two measurement schemes.  One is the 
standard scheme, which we call the \lq unconditioned measurement  (UM) scheme\rq, in which we measure an observable $A$ under a given state as conventionally done (Section~\ref{sec:ups_I} to \ref{sec:ups_II}).  The other is what we call the \lq conditioned measurement (CM) scheme\rq, in which under a given  state we measure an observable $A$ along with another observable $B$ whose outcome is used for conditioning (Section~\ref{sec:ps_I} to \ref{sec:ps_II}).
Each of these analyses will be conducted on the level of (conditional) expectations and (conditional) probabilities.
\begin{enumerate}
\renewcommand{\labelenumii}{(\roman{enumii})}
\item {\bf UM I}\hspace{5pt}
We start by reviewing, in Section~\ref{sec:ups_I}, the UM scheme by a standard operator-centric approach, and investigate how one could reclaim the information of the target system by that means.
\item {\bf UM II}\hspace{5pt}
Subsequently, in Section~\ref{sec:ups_II}, we take a closer look on the UM scheme in the level of probabilities, where the quantity of interest is now not only the statistical average, but also the `raw' probability measure describing the probabilistic behaviour of the measurement outcomes of the meter observable,  and discuss how one could recover the probability measure describing the outcomes of the target observable.
\item {\bf CM I}\hspace{5pt}
From Section~\ref{sec:ps_I} onward, we turn our attention to the CM scheme.  
In Section~\ref{sec:ps_I}, we first conduct, in a parallel manner as we have done in the preceding Section~\ref{sec:ups_I}, an analysis in the operator level, where now the quantity of interest becomes the conditional expectation of the meter observable   given another conditioning observable $B$ of the target system.  
\item {\bf CM II}\hspace{5pt}
In Section~\ref{sec:ps_II}, the study of the CM scheme is given a probabilistic approach, where the quantity of interest is the Wigner-Ville distribution of a pair of canonically conjugate observables on the meter system conditioned by the outcome of the conditioning observable $B$ of the target system.  We then see that this implies the existence of the concept of QJP distributions of pairs of generally non-commuting observables.

\end{enumerate}

\item {\bf QJP: Formal Definition}\hspace{5pt}
Inspired by the heuristic arguments employed in the operational analyses over the preceding four sections, we devote Section~\ref{sec:qp_qo} to the top-down construction of QJP distributions for arbitrary pairs of generally non-commutating quantum observables.  We shall then summarise our findings obtained through Section~\ref{sec:ups_I} to Section~\ref{sec:ps_II} from a rather aerial viewpoint, discussing where the heuristic arguments and observations in the preceding sections find their places in this relatively general framework.

\item {\bf Application to the Interpretation of Weak Values}\hspace{5pt}
As an application of the mathematical formalism provided so far, in Section~\ref{sec:app} we conduct a study on the quantum analogue of correlations, which can be defined even for a pair of non-commuting observables.  This leads us to the aforementioned geometric/statistical interpretation of the weak value as conditional quasi-expectations.
\end{enumerate}
We shall finally summarise our results and give some concluding remarks in the last Section~\ref{sec:sc}.

Prior to our main discussions, however, we wish to say a few words about the mathematical preliminaries we supposed for the readers in preparing this paper.
The formalism that we intend to provide necessarily requires, on top of the mandatory functional analysis, moderate acquaintance to  measure and integration theory, preferably some familiarity with the basic terminologies in general topology, and ideally insight into the basic ideas of the theory of generalised functions.  The obvious difficulty is then to find a decent balance between rigour and generality on one side, and accessibility on the other.  To achieve this balance as much as possible, and assure our entire arguments to be fully accessible without any prior knowledge of advanced mathematics, we have included at the beginning of each section a subsection entitled Reference Materials containing a rather lengthy introduction of mathematical concepts that are used in the subsequent discussions.  While the authors took care in introducing these mathematical concepts and their results in a self-contained manner to respect their logical sequence, these Reference Materials are primarily intended to serve as a convenient place to summarise the basic concepts and results in a crash-course, and as such, the mathematical theories presented there are not intended to be learned from scratch.  For those who are interested in the mathematics itself are advised to be referred to standard textbooks on the respective topics, {\it e.g.}, for general topology \cite{Kelley_1975,Munkres,Querenburg_2011}, measure and integration theory \cite{Elstrodt_2011, Amann_1998a, Amann_1998b, Amann_2001, Rudin_1976, Rudin_1986}, functional analysis
\cite{Rudin_1991, Werner_2011},
and also those specifically targeting the audience from the physics community \cite{Reed_Simon_1975, Reed_Simon_1980, Goldhorn_2009, Goldhorn_2010}.
  Naturally, those who are already familiar with the preparatory materials may safely skip them and directly go to the main arguments that follow.

Admittedly, the style of discussion found in this paper is heavily oriented toward mathematical rigorousness and logical clarity rather than brevity and physical intuition, especially compared to those found in the majority of the literature in physics.
However, in spite of the possible initial hesitation that may be expected for the general readers due to the unfamiliarity of the style, the authors decided to adopt it in the belief 
that this way of presentation has its own merit, and that the costs will outweigh the rewards in the end.  In fact, several important concepts and results from the branches of mathematics mentioned above (specifically, measure and integration theory and functional analysis) are quite indispensable in understanding some of the interesting results obtained in this paper.  This is so, for instance, in defining the conditional quasi-expectations (to which Aharonov's weak value belongs as a special case) in terms of the Radon-Nikod{\'y}m derivative to understand their properties (Section~\ref{sec:psI_cond_quasi_exp}), in formulating the problem of the `limit of amplification' by conditioning in terms of essential suprema (Section~\ref{sec:psI_amplification}), in defining a family of QJP distributions of a combination of generally non-commuting quantum observables by the method of hashing (Section~\ref{sec:qp_qo}), and in providing geometric and `statistical' interpretation of conditional quasi-expectations (Section~\ref{sec:app}).  The authors hope that the readers will not be discouraged by these mathematical materials, but rather enjoy them to go through the discussions and reach the fruit of the physical results they finally brings forth.

\paragraph{Mathematical Notations Employed}
Throughout this paper, we denote by $\mathbb{K}$ either the real field $\mathbb{R}$ or the complex field $\mathbb{C}$, and define $\mathbb{K}^{\times} := \mathbb{K} \setminus \{0\}$. In order to avoid confusion, we denote the collection of all natural numbers including $0$ by $\mathbb{N}_{0}$, and $\mathbb{N}^{\times} := \mathbb{N}_{0} \setminus \{0\}$. Since our primary interest is on quantum mechanics, Hilbert spaces are always assumed to be complex. Conforming to the convention in physical literature, we denote the complex conjugate of a complex number $c \in \mathbb{C}$ by $c^{*}$, and an inner product $\langle \,\cdot\, , \,\cdot\, \rangle$ defined on a complex linear space is anti-linear in its first argument and linear in the second.
For simplicity, we adopt the natural units where we specifically have $\hbar = 1$, unless stated otherwise.

\newpage
\section{Unconditioned Measurement I: In Terms of Expectations}\label{sec:ups_I}

We start by providing a brief review on the archetype of the indirect measurement scheme widely known as the von Neumann measurement scheme. The scheme will be referred to as the \emph{unconditioned measurement} (UM) scheme in generic terms throughout this paper, primarily in order to contrast it with the \emph{conditioned measurement} (CM) scheme (which includes the \emph{post-selected measurement} scheme as a special case) discussed later.

\subsection{Reference Materials}\label{sec:ups_I_pre}

As a preamble to this section, we here include three introductory topics that form the basis of our study. We start by collecting some of the basic terminologies and results of measure and integration theory, based on which modern probability theory was established by Kolmogorov {\it et al.}
Subsequently, we provide a brief note on both the \emph{Schr{\"o}dinger representation} and the \emph{Weyl representation} of the canonical commutation relations (CCR), which will be extensively employed in describing the meter system in our measurement scheme. We finally close this subsection by providing a short summary on the precise definition of tensor products of Hilbert spaces and that of self-adjoint operators. Since these materials are included just to make our presentation self-contained, those who are already familiar with the subject may safely skip the contents and proceed directly to Section~\ref{sec:ups_I_ups}.

\subsubsection{A Crash-Course into Measure and Integration Theory}\label{sec:usp_I_MI}

We begin by presenting some of the most basic concepts and results of measure and integration theory, starting from the definition of measure spaces up to the construction of the Lebesgue integration, followed by the definition of $L^{p}$ spaces.

\paragraph{$\sigma$-algebras and Measurable Spaces}
Let $X$ be any set, and let $\mathfrak{P}(X)$ denote the power set%
\footnote{The symbol $\mathfrak{A}$ is the capital letter of the Fraktur typeface of `A' as in `Algebra', $\mathfrak{B}$ for `B' as in `Borel', $\mathfrak{E}$ for `E' as in `Erzeuger (generator)', $\mathfrak{O}$ for `O' as in `offen (open)' and $\mathfrak{P}$ for `P' as in `Potenz (power)' (some of them introduced shortly after).}
of $X$, {\it i.e.}, the collection of all subsets of $X$. A family $\mathfrak{A} \subset \mathfrak{P}(X)$ of subsets of $X$ is called a \emph{$\sigma$-algebra}
over $X$, if it satisfies the following conditions:
\begin{enumerate}
\item $X \in \mathfrak{A}$.
\item $A \in \mathfrak{A}$ implies $A^{c} := X \setminus A \in \mathfrak{A}$.
\item For any sequence $(A_{n})_{n \geq 1}$ of subsets of $X$, $\bigcup_{n=1}^{\infty} A_{n} \in \mathfrak{A}$ holds.
\end{enumerate}
Given a $\sigma$-algebra $\mathfrak{A}$ over $X$, each element $A \in \mathfrak{A}$ is called a \emph{measurable set}, and the ordered pair $(X, \mathfrak{A})$ is called a \emph{measurable space}.

\paragraph{Generator of a $\sigma$-algebra}
A trivial, but important property of $\sigma$-algebras is that, for any collection $(\mathfrak{A}_{i})_{i \in I}$ of $\sigma$-algebras over $X$ indexed by an index set $I$, the intersection $\bigcap_{i \in I} \mathfrak{A}_{i} = \{ A \in \mathfrak{P}(X) : A \in \mathfrak{A}_{i}, \forall i \in I\}$ is itself a $\sigma$-algebra over $X$.
This leads to the following basic fact: For any collection $\mathfrak{E} \subset \mathfrak{P}(X)$ of subsets of $X$, there exists a smallest (with respect to the set inclusion) $\sigma$-algebra encompassing $\mathfrak{E}$, namely, the intersection of all $\sigma$-algebras that encompass $\mathfrak{E}$. The intersection is called the \emph{$\sigma$-algebra generated by $\mathfrak{E}$}, denoted as $\sigma(\mathfrak{E})$, and $\mathfrak{E}$ is in turn called the \emph{generator} of $\sigma(\mathfrak{E})$.

\paragraph{Borel $\sigma$-algebras}
Let $X$ be a metric (or, in general, a topological) space, and let $\mathfrak{O}$ denote the collection of all open sets of $X$. We call the $\sigma$-algebra generated by $\mathfrak{O}$, the \emph{Borel $\sigma$-algebra of $X$}, and denote it by $\mathfrak{B}(X) := \sigma(\mathfrak{O})$. We prepare a special symbol for the special case $X = \mathbb{R}^{n}$ ($n \in \mathbb{N}^{\times}$), in which we denote the Borel $\sigma$-algebra of $\mathbb{K}^{n}$ by $\mathfrak{B}^{n} := \mathfrak{B}(\mathbb{R}^{n})$, which is among the most well-known examples of $\sigma$-algebras that, incidentally, also plays an important role in quantum theory. For simplicity, we occasionally denote $\mathfrak{B} := \mathfrak{B}^{1}$ whenever there is no risk of confusion.

\paragraph{Measures and Measure Spaces}

Let $(X, \mathfrak{A})$ be a measurable space. A map $\mu: \mathfrak{A} \to \overline{\mathbb{R}}$ from the $\sigma$-algebra $\mathfrak{A}$ to the \emph{extended real line} $\overline{\mathbb{R}} := \mathbb{R} \cup \{-\infty, \infty\}$ is called a \emph{measure}, if $\mu$ satisfies the following conditions:
\begin{enumerate}
\item $\mu(\emptyset) = 0$.
\item $\mu \geq 0$.
\item For any sequence $(A_{n})_{n \geq 1}$ of pairwise disjoint subsets of $X$, the \emph{countable additivity}
\begin{equation}\label{def:count_add}
\mu\left(\bigcup_{n=1}^{\infty} A_{n}\right) = \sum_{n=1}^{\infty} \mu(A_{n})
\end{equation}
holds.
\end{enumerate}
Given a measure $\mu$ over a measurable space $(X, \mathfrak{A})$, the ordered triple $(X, \mathfrak{A}, \mu)$ is called a \emph{measure space}.

\paragraph{Lebesgue-Borel Measure}
As a concrete example, we make notes on the $n$-dimensional \emph{Lebesgue-Borel measure} $\beta^{n}$ ($n \in \mathbb{N}^{\times}$) defined on the measurable space $(\mathbb{R}^{n}, \mathfrak{B}^{n})$, which is among the most well-known and important examples of measure spaces. To this end, we first recall that a measure $\mu$ on $(\mathbb{R}^{n}, \mathfrak{B}^{n})$ is called \emph{translation invariant}, if
\begin{equation}
\mu(B + a) = \mu(B), \quad B \in \mathfrak{B}^{n},
\end{equation}
holds for any $a \in \mathbb{R}^{n}$, where $B + a := \{ x + a: x \in B\}$. The Lebesgue-Borel 
measure $\beta^{n}$ is then specified as the unique translation invariant measure on $(\mathbb{R}^{n}, \mathfrak{B}^{n})$ that satisfies the normalisation condition $\beta^{n}(]0,1]^{n}) = 1$, where
\begin{equation}
]0,1]^{n} := \{ x \in \mathbb{R}^{n} : 0 < x_{i} \leq 1\,\,\, \text{for} \,\, 1 \leq i \leq n, \text{ $x_{i}$ is the $i$th coordinate of $x$}\}.
\end{equation}
This is the measure which is implicitly assumed for the most case in performing the usual integration by the symbol
\begin{equation}
\int_{-\infty}^{\infty} f(x)\, dx := \int_{\mathbb{R}} f(x)\ d\beta(x),
\end{equation}
which is a common practice in the physics community (the precise definition of the integral on the r.~h.~s. will be presented shortly after).  The proof of the existence and uniqueness of the Lebesgue-Borel measure will be found in most elementary textbooks on the topic.

\paragraph{Measurable Functions}

Let $(X, \mathfrak{A})$ and $(X^{\prime}, \mathfrak{A}^{\prime})$ be measurable spaces. A map $f: X \to X^{\prime}$ is called \emph{$\mathfrak{A}$-$\mathfrak{A}^{\prime}$ measurable} (or just \emph{measurable} for short, whenever the measure spaces concerned are obvious by context), if $f^{-1}(\mathfrak{A}^{\prime}) \subset \mathfrak{A}$ holds. In particular, we call a map $f:X \to X^{\prime}$ from a metric (or a topological) space $X$ to another metric (or a topological) space $X^{\prime}$ \emph{Borel-measurable} if it is $\mathfrak{B}(X)$-$\mathfrak{B}(Y)$ measurable. An important fact to note is that a continuous map $f: X \to X^{\prime}$ is necessarily Borel-measurable.

\paragraph{Numerical Functions}
In integration theory, it proves fruitful to consider not only real functions 
$f: X \to \mathbb{R}$, but also functions that take values in the extended real line $\overline{\mathbb{R}}$, which is called a \emph{numerical function}. One naturally equips $\overline{\mathbb{R}}$ with the ordering $-\infty < a < +\infty$, $a \in \mathbb{R}$, and may also define agreeable operations of addition, subtraction and multiplication,
where most of them should be self-evident, except for the following rather arbitrary definition
\begin{equation}
0 \cdot (\pm \infty) := (\pm \infty) \cdot 0 := 0, \quad \infty -\infty := -\infty + \infty := 0.
\end{equation}
We then define the $\sigma$-algebra on $\overline{\mathbb{R}}$ by
\begin{equation}
\overline{\mathfrak{B}} := \{ B \cup E : B \in \mathfrak{B}, \, E \subset \{-\infty, +\infty\}\},
\end{equation}
where, in particular, its restriction on the real line gives $\overline{\mathfrak{B}}|_{\mathbb{R}} = \mathfrak{B}$. We then say that a numerical function $f: (X, \mathfrak{A}) \to (\overline{\mathbb{R}}, \overline{\mathfrak{B}})$ is measurable, if it is $\mathfrak{A}$-$\overline{\mathfrak{B}}$ measurable.  Throughout this paper, we denote by $\mathcal{M}^{+}(\mathfrak{A})$  (or occasionally by $\mathcal{M}^{+}$, whenever the $\sigma$-algebra concerned is evident by context) the collection of all measurable non-negative numerical functions.

\paragraph{Lebesgue Integration}
In introducing the concept of integration, we proceed in three steps: We first define the integration for non-negative step functions, then extend the treatment to functions belonging to $\mathcal{M}^{+}$, and finally discuss the integrability of measurable numerical or complex functions.
\begin{enumerate}
\item {\it Integration of Step Functions. \hspace{15pt}}
Let $(X, \mathfrak{A}, \mu)$ be a measure space. A measurable function $f: (X, \mathfrak{A}) \to (\mathbb{R}, \mathfrak{B})$ is called a \emph{step function} (staircase function, simple function), if it takes only finite distinct values in $\mathbb{R}$. The collection of all measurable non-negative step functions will be denoted by $\mathcal{T}^{+}$. One readily sees that a non-negative step function $f \in \mathcal{T}^{+}$ admits an expression
\begin{equation}\label{eq:step_function_expression}
f = \sum_{k=1}^{m} a_{k}\chi_{A_{k}},
\end{equation}
where $a_{1}, \dots, a_{m} \geq 0$ are non-negative real numbers, $A_{1}, \dots, A_{m} \in \mathfrak{A}$ are measurable sets, and $\chi_{A}$ denotes the characteristic function
\begin{align}\label{def:characteristic_function}
\chi_{A}(x) =
\begin{cases}
    1, & x \in A, \\
    0, & x \notin A,
\end{cases}
\end{align}
of the subset $A \subset X$.
We then define the \emph{($\mu$-)integral of $f$ (over $X$)} as
\begin{equation}\label{def:lebesgue_integral_01}
\int_{X} f\ d\mu := \sum_{k=1}^{m} a_{k}\mu(A_{k}),
\end{equation}
whose value lies in $[0,\infty]$. Note that, although the expression (\ref{eq:step_function_expression}) is non-unique due to the possible choice of the
measurable sets used, 
the definition \eqref{def:lebesgue_integral_01} is well-defined since the outcome of the integral is independent of the choice.

\item
{\it Integration of Functions in $\mathcal{M}^{+}$. \hspace{15pt}}
Now that we have defined the Lebesgue integral of non-negative step functions, we next define the integral of non-negative measurable numerical functions. For $f \in \mathcal{M}^{+}$, the Lebesgue integral of $f$ is defined as
\begin{equation}\label{def:lebesgue_integral_02}
\int_{X} f\ d\mu := \sup \left\{ \int_{X} s\ d\mu : 0 \leq s \leq f, s \in \mathcal{T}^{+} \right\}.
\end{equation}
The above definition \eqref{def:lebesgue_integral_02} is consistent with that for step functions \eqref{def:lebesgue_integral_01} introduced earlier, for one readily checks that the integral coincides for $f \in \mathcal{T}^{+} \subset \mathcal{M}^{+}$.

\end{enumerate}

Before we move on to the final step, we introduce some useful notations. We let $\mathbb{K}$ denote either the real field $\mathbb{R}$ or the complex field $\mathbb{C}$, and we understand them to be respectively equipped with the Borel $\sigma$-algebra $\mathfrak{B}$ or $\mathfrak{B}^{2}$. Analogously, we let
\begin{equation*}
\hat{\mathbb{K}} := \overline{\mathbb{R}} \text{ or } \mathbb{C}, \text{ respectively equipped with the $\sigma$-algebra $\hat{\mathfrak{B}} := \overline{\mathfrak{B}}$ or $\mathfrak{B}^{2}$}
\end{equation*}
for later convenience. For a numerical function $f : X \to \overline{\mathbb{R}}$, we define its positive and negative parts as
\begin{equation}
f^{\pm}(x) := \max( \pm f(x), 0).
\end{equation}
One then sees that a function $f: X \to \hat{\mathbb{K}}$ is measurable if and only if all the positive and negative parts of both the real and imaginary parts $(\,\mathrm{Re} f)^{\pm}$, $(\,\mathrm{Im} f)^{\pm}$ of $f$ are measurable.
Given the necessary preparations, we finally obtain the following definition:
\begin{definition*}[Lebesgue Integral]
Under the assumptions above, a function $f: X \to \hat{\mathbb{K}}$ is called $\mu$-integrable (or simply integrable) over $X$ if $f$ is measurable, and all the four integrals
\begin{equation}
\int_{X} (\,\mathrm{Re} f)^{\pm}\ d\mu,\qquad \int_{X} (\,\mathrm{Im} f)^{\pm}\ d\mu
\end{equation}
are finite. The value
\begin{equation}
\int_{X} f\ d\mu := \int_{X} (\,\mathrm{Re} f)^{+}\ d\mu - \int_{X} (\,\mathrm{Re} f)^{-}\ d\mu + i \int_{X} (\,\mathrm{Im} f)^{+}\ d\mu - i \int_{X} (\,\mathrm{Im} f)^{-}\ d\mu
\end{equation}
is then called the ($\mu$-)integrable of $f$ (over $X$) or the Lebesgue integral of $f$ (over $X$ with respect to $\mu$).
\end{definition*}
\noindent
By definition, linearity
\begin{equation}
\int_{X} (af(x) + bg(x))\, d\mu(x) = a\int_{X} f(x)\, d\mu(x) + b\int_{X} g(x)\, d\mu(x),
\end{equation}
of the integration naturally follows as expected.
For a measurable set $A \in \mathfrak{A}$, the use of the shorthand
\begin{equation}
\int_{A}f\ d\mu := \int_{X} \chi_{A} \cdot f\ d\mu
\end{equation}
is common, where $\chi_{A}$ is the characteristic function of the measurable set.

\paragraph{Probability Spaces and Expectation Values}
A measure space $(X,\mathfrak{A},\mu)$ is called a \emph{probability space}, if the measure is normalised by unity $\mu(X) = 1$. Given a probability space $(X,\mathfrak{A},\mu)$ and a $\mu$-integrable function $f$, the total integration of $f$ is occasionally denoted by
\begin{equation}
\mathbb{E}[f;\mu] := \int_{X} f\ d\mu,
\end{equation}
and called the \emph{expectation value} of $f$ under $\mu$.

\paragraph{Dominated Convergence Theorem}
The advantage of the Lebesgue integration (over the familiar Riemann counterpart) especially manifests itself when dealing with convergence. For later use throughout this paper, we make a note of one of the most powerful and oft-used theorems regarding the interchange of limit and integration. To this end, we first furnish some terminologies. Let $(X, \mathfrak{A}, \mu)$ be a measure space, and let a statement $E$ be defined on each element $x \in X$. We say that the statement $E$ holds \emph{($\mu$-) almost everywhere} (abbreviation: ($\mu$)-a.e.), if there exists a measurable set $N \in \mathfrak{A}$ with $\mu(N) = 0$ such that the statement $E$ holds for $ x \in X \setminus N$.
\begin{theorem*}[Dominated Convergence Theorem]
Let $(X, \mathfrak{A}, \mu)$ be a measure space, and let $f, f_{n}: X \to \hat{\mathbb{K}}$ ($n \in \mathbb{N}^{\times}$) be measurable. If the sequence of the functions converge point-wise $\lim_{n \to \infty} f_{n} = f$ $\mu$-a.e., and if moreover there exists an $\mu$-integrable function $g \in \mathcal{M}^{+}$ such that $|f_{n}| \leq g$ holds $\mu$-a.e. for all $n \in \mathbb{N}^{\times}$, then
\begin{equation}
\lim_{n \to \infty} \int_{X} \left| f_{n} - f \right| \ d\mu = 0
\end{equation}
holds, which in particular implies
\begin{equation}
\lim_{n \to \infty} \int_{X} f_{n}\ d\mu = \int_{X} f\ d\mu.
\end{equation}
\end{theorem*}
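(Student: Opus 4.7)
The plan is to first reduce the claim to a single key estimate in the non-negative setting and then deploy Fatou's lemma with a well-chosen non-negative sequence. Specifically, I would observe that the second assertion follows immediately from the first via the triangle inequality $\bigl|\int_X f_n\, d\mu - \int_X f\, d\mu\bigr| \leq \int_X |f_n - f|\, d\mu$, so the entire content of the theorem lies in proving
\begin{equation}
\lim_{n\to\infty} \int_X |f_n - f|\, d\mu = 0.
\end{equation}
Since $|f_n - f|$ is a non-negative measurable numerical function, the decomposition of the Lebesgue integral into positive and negative parts of the real and imaginary components is no longer needed past this point, and I can work entirely within $\mathcal{M}^{+}$. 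The central technical ingredient would be Fatou's lemma, i.e.\ $\int_X \liminf_n h_n\, d\mu \leq \liminf_n \int_X h_n\, d\mu$ for any $(h_n) \subset \mathcal{M}^{+}$; this is a standard companion to the monotone convergence theorem, and I would cite or establish it immediately before applying it here.

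Next, to apply Fatou cleanly I would first absorb the two almost-everywhere hypotheses into genuine pointwise statements. Let $N \in \mathfrak{A}$ denote the union of the countably many $\mu$-null sets on which either $f_n \to f$ fails or $|f_n| \leq g$ fails for some $n$; then $\mu(N) = 0$, and redefining all functions to vanish on $N$ leaves every integral in the statement unchanged while enforcing both hypotheses at every point of $X$. In particular, the pointwise limit then gives $|f| \leq g$ everywhere, so that
\begin{equation}
0 \leq 2g - |f_n - f|, \qquad n \in \mathbb{N}^{\times},
\end{equation}
by the elementary estimate $|f_n - f| \leq |f_n| + |f| \leq 2g$.

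Applying Fatou's lemma to the non-negative sequence $h_n := 2g - |f_n - f|$, which converges pointwise to $2g$, together with linearity of the integral and the identity $\liminf_n (c - a_n) = c - \limsup_n a_n$ for a finite constant $c$, then yields
\begin{equation}
\int_X 2g\, d\mu \leq \int_X 2g\, d\mu - \limsup_{n\to\infty} \int_X |f_n - f|\, d\mu,
\end{equation}
and the $\mu$-integrability of $g$ (hence the finiteness of $\int_X 2g\, d\mu$) permits subtraction on both sides to conclude $\limsup_n \int_X |f_n - f|\, d\mu \leq 0$; combined with non-negativity, this forces the limit to vanish. The principal obstacle is not the algebraic manipulation above, which is essentially a one-line computation once one spots that $2g - |f_n - f|$ is the right dominated non-negative sequence to feed into Fatou, but rather two more structural points: the prior establishment of Fatou's lemma (and behind it the monotone convergence theorem, which must itself be extracted from the very definition \eqref{def:lebesgue_integral_02} of the integral as a supremum over step functions), and the subtraction step, which silently uses that $\int_X g\, d\mu < \infty$ and would fail without the integrable majorant.
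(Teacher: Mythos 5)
Your proof is correct: the reduction to $\lim_n \int_X |f_n - f|\,d\mu = 0$, the removal of the exceptional null set, the bound $|f_n - f| \leq 2g$, and the application of Fatou's lemma to $h_n := 2g - |f_n - f|$ followed by subtraction of the finite quantity $\int_X 2g\,d\mu$ constitute the classical argument. Note, however, that the paper states this theorem without proof in its Reference Materials as standard background from measure theory, so there is no in-paper argument to compare against; your proposal is the textbook proof and fills that gap correctly, including the two structural caveats you flag (Fatou/monotone convergence as prerequisites, and the essential role of $\int_X g\,d\mu < \infty$ in the final subtraction).
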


\paragraph{$L^{p}$ Spaces}

Having provided the definition of the Lebesgue integration, we close this subsection by introducing an important class of function spaces: $L^{p}$. Let $\mathcal{L}^{p}(\mu)$, $1 \leq p <\infty$, denote the space of all measurable functions $f: X \to \mathbb{K}$ for which its $L^{p}$-norm
\begin{equation}\label{def:Lp_norm}
\|f\|_{p} := \left( \int_{X} |f|^{p}\ d\mu \right)^{1/p}
\end{equation}
is finite. For $p = \infty$, we let $\mathcal{L}^{\infty}(\mu)$ denote the space of all $f$ for which its \emph{essential supremum}
\begin{equation}\label{def:ess_sup}
\|f\|_{\infty} := \inf\{\lambda \in [0,\infty] : |f| \leq \lambda \text{ $\mu$-a.e.} \}
\end{equation}
is finite (such a function is called \emph{essentially bounded}).  The term essential supremum is justified by the fact that the evaluation $|f| \leq \|f\|_{\infty}$ $\mu$-a.e. universally holds (to see this, observe that if $\|f\|_{\infty} < \infty$ is given, $\{ x : |f| > \|f\|_{\infty}\} = \bigcup_{n=1}^{\infty}\{|f| > \|f\|_{\infty} + 1/n\}$ is a set of measure zero). Now, by identifying two functions $f, g \in \mathcal{L}^{p}(\mu)$ by the equivalence relation $f \sim g \Leftrightarrow f = g\ \text{$\mu$-a.e.}$, we obtain a quotient space $L^{p}(\mu) := \mathcal{L}^{p}(\mu)/\sim$. For simplicity, it is customary to denote an element of $L^{p}(\mu)$ by its representative $f \in  \mathcal{L}^{p}(\mu)$ whenever there is no risk of confusion. For $f \in L^{p}(\mu)$, one finds that the quantity $\|f\|_{p}$, $1 \leq p \leq \infty$ is well-defined (irrespective of the choice of the representative), and that this in fact provides a norm on $L^{p}(\mu)$, called the $L^{p}$-norm. The norm $\|\cdot\|_{p}$ is also known to be complete and hence makes $L^{p}(\mu)$ into a Banach space. The case $p=2$ is of particular interest in the context of quantum mechanics, where the integration,
\begin{equation}\label{def:L2-inner_product}
\langle g, f \rangle := \int g^{*}f\ d\mu,
\end{equation}
defines an inner product that satisfies $\langle f, f \rangle = \|f\|_{2}^{2}$, making $L^{2}(\mu)$ into a Hilbert space.
As a special case, we are mostly interested in the choice $(\mathbb{R}^{n}, \mathfrak{B}^{n}, \beta^{n})$ of the measure space. Conforming to convention in physical literature, we prepare a special symbol for the $L^{p}$ spaces of it and denote $L^{p}(\mathbb{R}^{n}) := L^{p}(\beta^{n})$.

\paragraph{H{\"o}lder's Inequality}

Among the most important inequality regarding $L^{p}$-spaces is the H{\"o}lder's inequality.
\begin{theorem}[H{\"o}lder's Inequality]
Let $1 \leq p,q \leq \infty$, $\frac{1}{p} + \frac{1}{q} = 1$, where we understand $1/\infty :=0$, and let $f, g : X \to \hat{\mathbb{K}}$ be measurable.  Then,
\begin{equation}
\|fg\|_{1} \leq \|f\|_{p}\|g\|_{q}
\end{equation}
holds.
\end{theorem}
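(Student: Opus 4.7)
The plan is to dispose of the degenerate cases first, so that the bulk of the argument reduces to a pointwise Young inequality followed by integration. If either $\|f\|_{p} = 0$ or $\|g\|_{q} = 0$, the corresponding function vanishes $\mu$-almost everywhere, hence $|fg| = 0$ $\mu$-a.e.\ and both sides of the inequality are zero; if one norm is infinite while the other is positive, the right-hand side is $+\infty$ and nothing needs to be shown. In the remaining situation $0 < \|f\|_{p}, \|g\|_{q} < \infty$, the boundary case $\{p,q\} = \{1,\infty\}$ can be handled directly: taking, say, $p = 1$ and $q = \infty$, the characterisation~\eqref{def:ess_sup} of the essential supremum gives $|g| \leq \|g\|_{\infty}$ $\mu$-a.e., whence $|fg| \leq \|g\|_{\infty}\,|f|$ $\mu$-a.e., and the monotonicity and linearity of the Lebesgue integral immediately yield $\|fg\|_{1} \leq \|g\|_{\infty}\|f\|_{1}$.

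The core case is $1 < p, q < \infty$. Here the strategy is to establish first the scalar Young inequality
\begin{equation*}
ab \leq \frac{a^{p}}{p} + \frac{b^{q}}{q}, \qquad a, b \geq 0,
\end{equation*}
which I would deduce from concavity of the logarithm: for $a, b > 0$,
\begin{equation*}
\log\!\left( \frac{a^{p}}{p} + \frac{b^{q}}{q} \right) \geq \frac{1}{p}\log a^{p} + \frac{1}{q}\log b^{q} = \log(ab),
\end{equation*}
and exponentiating yields the claim, the case $a = 0$ or $b = 0$ being trivial. Exploiting homogeneity to normalise $\|f\|_{p} = \|g\|_{q} = 1$, I apply this pointwise with $a = |f(x)|$ and $b = |g(x)|$ (which are finite $\mu$-a.e.\ by the integrability of $|f|^{p}$ and $|g|^{q}$) and integrate against $\mu$ to obtain
\begin{equation*}
\int_{X} |fg|\, d\mu \leq \frac{1}{p}\int_{X} |f|^{p}\, d\mu + \frac{1}{q}\int_{X} |g|^{q}\, d\mu = \frac{1}{p} + \frac{1}{q} = 1,
\end{equation*}
which is precisely the desired bound in the normalised setting. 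The general case follows by applying this to $f/\|f\|_{p}$ and $g/\|g\|_{q}$.

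The one genuinely nontrivial ingredient, and therefore what I regard as the main obstacle, is the scalar Young inequality. Once it is in hand, the remainder of the proof consists purely of a pointwise estimate together with the basic monotonicity, linearity, and homogeneous rescaling of the Lebesgue integral recorded in the preceding crash-course, so no further analytical subtleties enter.
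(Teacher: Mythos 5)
The paper states H\"older's inequality without proof --- it appears in the Reference Materials crash-course, with readers directed to standard textbooks --- so there is no in-paper argument to compare against. Your proof is the standard and correct one: the degenerate and boundary cases (including $p=1$, $q=\infty$ via the a.e.\ bound $|g| \leq \|g\|_{\infty}$ established in the text) are handled properly, the scalar Young inequality follows correctly from concavity of the logarithm with weights $1/p$ and $1/q$, and the normalisation-plus-integration step is sound; note only that in the case where one norm vanishes and the other is infinite, the claim that ``both sides are zero'' implicitly invokes the paper's convention $0 \cdot \infty = 0$, which is indeed how the right-hand side must be read there.
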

\noindent
For the specific choice $p,q = 2$, the resulting inequality has its own name as the Cauchy-Schwarz Inequality.

\subsubsection{Rudimentary Techniques in handling the CCR}

While the contents of the following topics are widely known, we include this material mainly for reader's convenience, and also for self-consistency and reference.

\paragraph{Schr{\"o}dinger Representation of the CCR}

We start by recalling the definition of the Schwartz space.
A function $f :\mathbb{R}^{n} \to \mathbb{K}$ is called \emph{rapidly decreasing} when
\begin{equation}
\lim_{|x| \to \infty} x^{\gamma}f(x) = 0
\end{equation}
holds for any $\gamma := (\gamma_{1}, \dots, \gamma_{n})$ with $\gamma \in \mathbb{N}_{0}^{n}$.  Here, the multi-index symbol $\gamma\in \mathbb{N}^{n}_{0}$ is understood to be used as 
\begin{equation}\label{def:use_alpha}
x^{\gamma} := x_{1}^{\gamma_{1}} \cdots x_{n}^{\gamma_{n}}, \quad D^{\gamma} := (D_{1})^{\gamma_{1}} \cdots (D_{n})^{\gamma_{n}},
\end{equation} 
where $D_{i} := \partial/\partial x_{i}$ is the partial differentiation operator with respect to the variable $x_{i}$.
The space
\begin{equation}
\mathscr{S}(\mathbb{R}^{n}) := \{f \in C^{\infty}(\mathbb{R}^{n}): D^{\gamma}f\ \text{is rapidly decreasing},\ \gamma \in \mathbb{N}^{n}_{0} \},
\end{equation}
is then called the \emph{Schwartz space}, and its elements are in turn called \emph{Schwartz functions}. The Schwartz space is known to be a dense subspace $\mathscr{S}(\mathbb{R}^{n}) \subset L^{p}(\mathbb{R}^{n})$ for $1 \leq p < \infty$. A well-known example of Schwartz functions is provided by the form,
\begin{equation}
x^{\gamma}e^{-a|x|^{2}} \in \mathscr{S}(\mathbb{R}^{n}), \quad \gamma \in \mathbb{N}^{n}_{0},\ \, a>0.
\end{equation}
Specifically, the Gaussian wave-functions, which also appear later in our analysis, are among the most oft-used members of the Schwartz space belonging to this class.

Now that we have the necessary definitions, we return to the main topic of this subsection and, for simplicity, confine ourselves to the case $n=1$ without loss of generality.
We start by introducing a pair of important operators $\hat{x}$ and $\hat{p}$ on the Hilbert space $L^{2}(\mathbb{R})$.
Among these, $\hat{x}:  \mathrm{dom}(\hat{x}) \to L^{2}(\mathbb{R})$ is an operator on $L^{2}(\mathbb{R})$ defined by the multiplication of $x$ on a function $f$,
\begin{equation}\label{def:position_operator}
\hat{x} : f(x) \mapsto xf(x),
\end{equation}
with its domain,
\begin{equation}
\mathrm{dom}(\hat{x}) := \{ f \in L^{2}(\mathbb{R}) : xf \in L^{2}(\mathbb{R}) \}.
\end{equation}
The operator $\hat{x}$ is known to be self-adjoint and is called the (one-dimensional) \emph{position operator}.

Next, consider the operator $-i D$ defined on $\mathscr{S}(\mathbb{R})$ with $D := d/dx$ being the usual differential operator in our case $n=1$.
The operator $-iD :  \mathscr{S}(\mathbb{R}) \to L^{2}(\mathbb{R})$ is known to be essentially self-adjoint, 
which allows us to define the (one-dimensional) \emph{momentum operator} by its self-adjoint extension%
\footnote{While the explicit identification of the domain of the operator $\hat{p}$ is not quite straightforward, we mention that it is given by
\begin{equation}\label{eq:domain_p}
\mathrm{dom}(\hat{p}) = \left\{ f \in L^{2}(\mathbb{R}) : f|_{J} \in \mathrm{AC}(J) \text{ for all compact sub-intervals } J \subset \mathbb{R},\ \frac{df}{dx} \in L^{2}(\mathbb{R}) \right\},
\end{equation}
where $f|_{J}$ denotes the restriction of the function $f$ on the interval $J$, and $\mathrm{AC}(J)$ denotes the space of all absolutely continuous functions on $J$. Here, a function $f: [a,b] \to \mathbb{K}$ is called \emph{absolutely continuous}, if for every $\epsilon > 0$, there exists a $\delta > 0$ such that 
\begin{equation}
\sum_{k=1}^{n}(b_{k} - a_{k}) < \delta \quad \Rightarrow \quad \sum_{k=1}^{n}|f(b_{k}) - f(a_{k})| < \epsilon
\end{equation}
holds for arbitrary partitions $a \leq a_{1} < b_{1} \leq a_{2} < b_{2} \leq \dots \leq a_{n} < b_{n} \leq b$, $n \in \mathbb{N}^{\times}$ of the interval $[a,b]$. 
It is known that a function $f: [a,b] \to \mathbb{K}$ is absolutely continuous if and only if $f$ is differentiable almost everywhere (hence $df/dx$ in \eqref{eq:domain_p} is well-defined), its derivative is Lebesgue integrable $df/dx \in L^{1}(J)$, and that
\begin{equation}
f(t) - f(s) = \int_{s}^{t} \frac{df}{dx}\ dx, \quad s, t \in [a,b],\ s \leq t
\end{equation}
holds ({\it cf.} fundamental theorem of calculus).
},
\begin{equation}\label{def:momentum_operator}
\hat{p} := \overline{-i D}.
\end{equation}
Here, the overline on a closable operator denotes its closure, which in the case of an essentially self-adjoint operator is equivalent to its (unique) self-adjoint extension.
 
One then verifies that the pair $\{\hat{x}, \hat{p}\}$ satisfies the familiar (one-dimensional) \emph{canonical commutation relations (CCR)},
\begin{gather}
[\hat{x}, \hat{p}] = iI, 
\label{eq:CCR_01} \\
[\hat{x}, \hat{x}] = 0, \quad [\hat{p}, \hat{p}] = 0, 
\label{eq:CCR_02}
\end{gather}
on the subspace $\mathscr{S}(\mathbb{R}) \subset L^{2}(\mathbb{R})$, where $I$ denotes the identity operator and $[X, Y] := XY - YX$ denotes the commutator for operators $X, Y$, whose domain is understood to be $\mathrm{dom}([X,Y]) := \mathrm{dom}(XY) \cap \mathrm{dom}(YX)$.

In general, let $\{\mathcal{H}, \mathcal{D}, \{Q,P\}\}$ be a combination consisting of a Hilbert space $\mathcal{H}$, its dense subspace $\mathcal{D} \subset \mathcal{H}$, and a pair of self-adjoint operators $\{Q,P\}$ on $\mathcal{H}$. We say that $\{\mathcal{H}, \mathcal{D}, \{Q,P\}\}$ is a (one-dimensional) \emph{representation of the CCR}, if the CCR
\begin{gather}
\label{eq:ccr}
[Q, P] = iI, \\
\label{eq:ccr2}
[Q, Q] = 0, \quad [P, P] = 0
\end{gather}
hold on the domain $\mathcal{D}$ fulfilling
\begin{equation}
\label{eq:weyl_subspace}
\mathcal{D} \subset \mathrm{dom}(QQ) \cap \mathrm{dom}(QP) \cap \mathrm{dom}(PQ) \cap \mathrm{dom}(PP).
\end{equation}
One then concludes from the above argument that the combination,
\begin{equation}
\left\{ L^{2}(\mathbb{R}), \mathscr{S}(\mathbb{R}), \{\hat{x}, \hat{p}\}\right\},
\end{equation}
gives a concrete example for the representation of the CCR, called the (one-dimensional) \emph{Schr{\"o}dinger representation of the CCR}.

\paragraph{Weyl Representation of the CCR}
We call a combination $\{\mathcal{H}, \{Q, P\}\}$ consisting of a Hilbert space $\mathcal{H}$ and a pair of self-adjoint operators $\{Q, P\}$, a (one-dimensional) \emph{Weyl representation of the CCR}, if $\{Q, P\}$ satisfies the \emph{Weyl relations}:
\begin{equation}\label{eq:weyl01}
e^{isQ}e^{itP}=e^{-istI}e^{itP}e^{isQ}, 
\end{equation}
\begin{equation}\label{eq:weyl02}
e^{isQ}e^{itQ}=e^{itQ}e^{isQ}, \quad e^{isP}e^{itP}=e^{itP}e^{isP}, 
\end{equation}
for $s, t \in \mathbb{R}$.
One of the advantages of the Weyl relations, as compared to the CCR, is that they deal only with unitary operators, for which no particular consideration for the domain of the involved operators is necessary because of their boundedness.   Fortunately, in the present case one can actually prove that the pair $\{\hat{x}, \hat{p}\}$ of the position and momentum operators introduced earlier satisfy the Weyl relations \eqref{eq:weyl01} and \eqref{eq:weyl02} on $L^{2}(\mathbb{R})$.  This implies that 
the Schr{\"o}dinger representation of the CCR $\{L^{2}(\mathbb{R}), \mathscr{S}(\mathbb{R}), \hat{x}, \hat{p}\}$ furnishes an example of the Weyl representation of the CCR, at least in the case of the configuration space $\mathbb{R}$.  One also finds that this is true for the Euclidean configuration space $\mathbb{R}^n$.

One may naturally be interested in how the Weyl representation of the CCR relates to the standard representation of the CCR.  To this end, we first begin by collecting some of the necessary definitions and basic theorems.
Recall that a vector-valued map $F: U \to V$ from an open subset $U \subset \mathbb{R}$ to a normed space $V$ is called \emph{strongly continuous} at $t_{0} \in U$ if
\begin{equation}
\lim_{u \to 0} \|F(u + t_{0}) - F(t_{0})\| = 0
\end{equation}
with respect to the norm $\|\cdot \|$ on $V$, 
and in turn, strongly continuous on $U$ if it is strongly continuous at every point of $U$. 
The map $F$ is then called \emph{strongly differentiable} at $t_{0} \in U$ with strong derivative $F^{\prime}(t_{0}) \in V$ if
\begin{equation}
\lim_{u \to 0} \left\| \frac{F(u + t_{0}) - F(t_{0})}{u} - F^{\prime}(t_{0}) \right\| = 0
\end{equation}
holds, and accordingly strongly differentiable on $U$ if it is strongly differentiable at every point of $U$. 
We will occasionally write its strong derivative in either of the notations,
\begin{equation}
\frac{dF(t_{0})}{dt} = \frac{dF}{dt}(t_{0}) = \left. \frac{d}{dt} F(t) \right|_{t=t_{0}} := F^{\prime}(t_{0}).
\end{equation}

Now, let $A : \mathcal{H} \supset \mathrm{dom}(A) \to \mathcal{H}$ be a self-adjoint operator on a Hilbert space $\mathcal{H}$, and consider a one-parameter unitary group $\{e^{itA}\}_{t \in \mathbb{R}}$ (defined by means of functional calculus). 
Then, Stone's theorem on one-parameter unitary groups states that, on account of the boundedness of the unitary operator,
for a fixed $|\phi\rangle \in \mathcal{H}$
the unitary group yields a strongly continuous vector-valued map,
\begin{equation}
F : t \mapsto e^{itA}|\phi\rangle, \quad t \in \mathbb{R},
\end{equation}
for any self-adjoint operator $A$.
However, consideration of the domain $\mathrm{dom}(A)$ becomes necessary when 
differentiation of the map is considered.  In fact, the map is strongly differentiable on $\mathbb{R}$ if and only if $|\phi\rangle \in \mathrm{dom}(A)$, in which case the derivative reads
\begin{equation}
\frac{dF}{dt}(t) = ie^{itA}A|\phi\rangle = iAe^{itA}|\phi\rangle.
\end{equation}

Returning to our main topic, we rewrite the r.~h.~s. of \eqref{eq:weyl01} to obtain
\begin{equation}
e^{isQ}e^{itP} = e^{itP}e^{is(Q - tI)}, \quad s, t \in \mathbb{R}.
\end{equation}
Considering the vector-valued map,
\begin{equation}\label{eq:weyl_func}
s \mapsto e^{isQ}e^{itP}|\psi\rangle = e^{itP}e^{is(Q - tI)}|\psi\rangle,
\end{equation}
for a fixed $|\psi\rangle \in \mathrm{dom}(Q)= \mathrm{dom}(Q - tI)$, $t \in \mathbb{R}$, one concludes from the above argument that the r.~h.~s. of \eqref{eq:weyl_func} is strongly differentiable at all $s \in \mathbb{R}$ with the derivative
\begin{align}
\frac{d}{ds} \left( e^{itP}e^{is(Q - tI)}|\psi\rangle \right)
    &= e^{itP} \left( \frac{d}{ds}e^{is(Q - tI)}|\psi\rangle \right) \nonumber \\
    &= ie^{itP}e^{is(Q - tI)}(Q - tI)|\psi\rangle, \quad s, t \in \mathbb{R}.
\end{align}
Note here that the first equality follows from the linearity and boundedness (hence, continuity) of the unitary operator $e^{itP}$.
Turning to the l.~h.~s. of \eqref{eq:weyl_func}, differentiability implies that $e^{itP}|\psi\rangle \in \mathrm{dom}(Q)$, whereby 
one has
\begin{equation}
\frac{d}{ds} \left( e^{isQ}e^{itP}|\psi\rangle \right) = ie^{isQ}Qe^{itP}|\psi\rangle.
\end{equation}
Combining the two results, one duly obtains 
\begin{equation}\label{eq:weyl_diff}
e^{isQ}Qe^{itP}|\psi\rangle = e^{itP}e^{is(Q-tI)} (Q-tI) |\psi\rangle, \quad s, t \in \mathbb{R}.
\end{equation}
Taking $s=0$, one finds the validity of the operator identity $Q e^{itP} = e^{itP}(Q - tI)$, or equivalently
\begin{equation}\label{eq:weak_weyl}
e^{-itP}Q e^{itP} = Q - tI, \quad t \in \mathbb{R},
\end{equation}
on the subspace $\mathrm{dom}(Q)$. This shows how the unitary adjoint action generated by $P$ results in a parallel translation $Q \mapsto Q - tI$ on its conjugate operator $Q$%
\footnote{Note that what we are discussing here is something more than just proving the Campbell-Baker-Hausdorff formula.}.
Now, if one further considers the vector-valued map by rewriting \eqref{eq:weak_weyl}, 
\begin{equation}\label{eq:diff_2_start}
Q e^{itP}|\psi\rangle = e^{itP}Q|\psi\rangle - te^{itP}|\psi\rangle,\quad t \in \mathbb{R},
\end{equation}
one proves the differentiability of the r.~h.~s. for the choice of the initial state $|\psi\rangle \in \mathrm{dom}(PQ) \cap \mathrm{dom}(P)$, which yields
\begin{equation}
\frac{d}{dt} \left( e^{itP}Q|\psi\rangle - te^{itP}|\psi\rangle \right) = (ie^{itP}PQ - e^{itP} - ite^{itP}P ) |\psi\rangle, \quad t \in \mathbb{R}.
\end{equation}
Turning to the l.~h.~s., differentiability also leads to
\begin{align}
\frac{d}{dt} \left( Q e^{itP}|\psi\rangle \right)
    &= Q \left( \frac{d}{dt} e^{itP}|\psi\rangle \right) \nonumber \\
    &= iQ e^{itP}P|\psi\rangle, \quad t \in \mathbb{R},
\end{align}
where, in particular, $e^{itP}P|\psi\rangle \in \mathrm{dom}(Q)$ is implied, and the first equality is due to the closedness of the operator $Q$ (recall that a self-adjoint operator is necessarily closed). By combining the above two results, one has
\begin{equation}
\left(iQ e^{itP}P\right)|\psi\rangle = \left(ie^{itP}PQ - e^{itP} - ite^{itP}P\right)|\psi\rangle, \quad t \in \mathbb{R}.
\end{equation}
Taking $t=0$, we learn that this in particular leads to the operator identity,
\begin{equation}
QP = PQ + iI \quad \Leftrightarrow \quad [Q, P] = iI,
\end{equation}
on the subspace $\mathrm{dom}(PQ) \cap \mathrm{dom}(P)$. One also sees from this result that the choice $|\psi\rangle \in \mathrm{dom}(PQ) \cap \mathrm{dom}(P)$ automatically implies $|\psi\rangle \in \mathrm{dom}(QP)$.

Proceeding further from \eqref{eq:weyl02} by analogous reasoning, one eventually obtains the CCR
(\ref{eq:ccr}) and (\ref{eq:ccr2}) on the domain (\ref{eq:weyl_subspace}).
In the case where $\mathcal{D}$ is dense, one sees that a Weyl representation of the CCR $\{\mathcal{H}, \{Q,P\}\}$ together with the subspace $\mathcal{D}$ indeed gives a representation of the CCR. In fact, in the case where $\mathcal{H}$ is separable, $\mathcal{D}$ is known to be dense.

In passing, we mention that the importance of the Weyl relations becomes evident when one considers configuration spaces, 
other than the Euclidean space $\mathbb{R}^n$, where no reasonable counterpart of the CCR can be defined.  
For instance, when the configuration space is given by a coset space $G/H$ where $G$ is a Lie group and $H$ its subgroup (typical examples being the spheres $S^n \simeq O(n+1)/O(n)$), one can readily adopt the inherent group theoretic structure of the configuration space to define the Weyl relations extended to the space.  
Unlike the Euclidean case, such extended Weyl relations are known to admit a multiple of inequivalent representations.

\subsubsection{Tensor Product of Hilbert Spaces and Self-adjoint Operators}

We finally provide a brief review on tensor products of Hilbert spaces and those of self-adjoint operators. Although the topic is elementary, we find it beneficial to give a summary of its precise definition in consideration of its extensive use due to the nature of this paper focusing on indirect measurement schemes.

\paragraph{Algebraic Tensor Products}
Let $V, W$ be $\mathbb{K}$-vector spaces. We call an ordered pair
\begin{equation}
(V \otimes W,\, \otimes)
\end{equation}
consisting of a vector space $V \otimes W$ and a bilinear map $\otimes: V \times W \to V \otimes W$, an (algebraic) \emph{tensor product} of vector spaces $V$ and $W$, if for any $\mathbb{K}$-vector space $Z$ and a bilinear map $T: V \times W \to Z$, there exists a unique linear map $\widetilde{T}: V \otimes W \to Z$ for which the diagram
\begin{equation}
\xymatrix{
V \times W
	\ar[rd]_{T}
	\ar[r]^{\otimes}
& V \otimes W
	\ar[d]^{\widetilde{T}}\\
& Z
}
\end{equation}
commutes%
\footnote{
We say that a diagram is a \emph{commutative diagram}, or more casually, \emph{the diagram commutes}, if all directed paths in the diagram with the same start and endpoints lead to the same result by composition.
} (universal property of (algebraic) tensor products). 
Each element of $V \otimes W$ is called a \emph{tensor}, and the bilinear map $\otimes$ is called the \emph{tensor map}, the image of which 
is denoted by
\begin{equation}
v \otimes w := \otimes(v,w).
\end{equation}
The thus defined tensor products are in fact unique up to isomorphism. Indeed if $(V \otimes W,\, \otimes)$ and $(V^{\prime} \otimes^{\prime} W^{\prime},\, \otimes^{\prime})$ were two of such, then by first letting $Z = V^{\prime} \otimes W^{\prime}$ and $T = \otimes^{\prime}$ in the above diagram, and then subsequently by changing roles of $(V \otimes W,\, \otimes)$ and $(V^{\prime} \otimes^{\prime} W^{\prime},\, \otimes^{\prime})$, one concludes that $\widetilde{\otimes}$ and $\widetilde{\otimes^{\prime}}$ are linear bijections with $\widetilde{\otimes^{\prime}} \circ \widetilde{\otimes} = I$. In this sense, we may refer to $(V \otimes W,\, \otimes)$ as \emph{the} tensor product of $V$ and $W$, and forget about the way how it is constructed%
\footnote{One finds several concrete constructions of tensor products in various literatures. See, for example \cite{Roman_2008}.}. 
One of the basic facts worth of special note is that, given two bases $\{e_{i}\}_{i \in I}$ and $\{f_{j}\}_{j \in J}$ of $V$ and $W$, respectively, the tensors $\{e_{i} \otimes f_{j}\}_{i \in I, j \in J}$ form a basis of $V \otimes W$.

\paragraph{Tensor Product of Hilbert Spaces}
We are specifically interested in tensor products of Hilbert spaces. For a pair of Hilbert spaces $(\mathcal{H}_{1}, \langle \,\cdot\, , \,\cdot\, \rangle_{\mathcal{H}_{1}})$ and $(\mathcal{H}_{2}, \langle \,\cdot\, , \,\cdot\, \rangle_{\mathcal{H}_{2}})$, we denote by
\begin{equation}
(\mathcal{H}_{1} \,\widehat{\otimes}\, \mathcal{H}_{2}, \,\widehat{\otimes}\,)
\end{equation}
their algebraic tensor product defined from their purely algebraic structures described as above. We then introduce
\begin{equation}
\langle \phi_{1} \,\widehat{\otimes}\, \phi_{2},\, \psi_{1} \,\widehat{\otimes}\, \psi_{2} \rangle := \langle \phi_{1}, \psi_{1} \rangle_{\mathcal{H}_{1}} \langle \phi_{2}, \psi_{2} \rangle_{\mathcal{H}_{2}},\quad \phi_{i}, \psi_{i} \in \mathcal{H}_{i}
\end{equation}
defined for pairs of all tensors of the form $D := \{v \otimes w : v \in V, w \in W\}$,
and let it extend linearly on whole $\mathcal{H}_{1} \,\widehat{\otimes}\, \mathcal{H}_{2} = \mathrm{Span} (D)$. Here,
\begin{equation}\label{def:lin_span}
\mathrm{span} (S) := \{k_{1}v_{1} + \cdots + k_{n}v_{n} : k_{i} \in \mathbb{K}, v_{i} \in S, n \in \mathbb{N}\}
\end{equation}
denotes the subspace of a $\mathbb{K}$-vector space $V$ spanned by a nonempty set $S \subset V$, {\it i.e.} the set of all finite linear combinations of vectors belonging to $S$. It is routine to check that the thus defined extension $\langle \,\cdot\,, \,\cdot\, \rangle_{\mathcal{H}_{1} \,\widehat{\otimes}\, \mathcal{H}_{2}}$ is well-defined, and one moreover proves that the extension in fact makes itself an inner product on $\mathcal{H}_{1} \,\widehat{\otimes}\, \mathcal{H}_{2}$, making the pair $(\mathcal{H}_{1} \,\widehat{\otimes}\, \mathcal{H}_{2},\, \langle \,\cdot\,, \,\cdot\, \rangle_{\mathcal{H}_{1} \,\widehat{\otimes}\, \mathcal{H}_{2}})$ into a pre-Hilbert space ({\it i.e.}, an inner product space). The tensor map $\,\widehat{\otimes}\,$ can be also shown to be continuous with respect to the topology that the inner product generates. We then finally define the completion of the pre-Hilbert space, and denote it by
\begin{equation}
(\mathcal{H}_{1} \otimes \mathcal{H}_{2},\, \langle \,\cdot\,, \,\cdot\, \rangle_{\mathcal{H}_{1} \otimes \mathcal{H}_{2}}).
\end{equation}
The new space $(\mathcal{H}_{1} \otimes \mathcal{H}_{2},\, \langle \,\cdot\,, \,\cdot\, \rangle_{\mathcal{H}_{1} \otimes \mathcal{H}_{2}})$ is a Hilbert space by construction, and together with the continuous extension $\otimes$ of the bilinear map $\,\widehat{\otimes}\,$, is called the (topological) \emph{tensor product of the Hilbert spaces} $\mathcal{H}_{1}$ and $\mathcal{H}_{2}$.
The map $\otimes$ is called the \emph{tensor map} and the elements of $\mathcal{H}_{1} \otimes \mathcal{H}_{2}$ are called \emph{tensors}.

\paragraph{Tensor Product of Linear Operators}
A pair of linear operators $A_{i}: \mathcal{H}_{i} \supset \mathrm{dom}(A_{i}) \to \mathcal{H}_{i}$, $i= 1, 2$, defines
a natural bilinear map
\begin{equation}
A_{1} \times A_{2}: \mathrm{dom}(A_{1}) \times \mathrm{dom}(A_{1}) \to \mathcal{H}_{1} \times \mathcal{H}_{2}, \quad (|\phi_{1}\rangle, |\phi_{2}\rangle) \mapsto (A_{1}|\phi_{1}\rangle,\, A_{2}|\phi_{2}\rangle).
\end{equation}
From the universal property of the algebraic tensor product mentioned above, one readily sees the existence of a unique linear map
\begin{equation}\label{eq:pre_pre_tensor_product_of_operators}
A_{1} \,\widehat{\otimes}\, A_{2}: \mathrm{dom}(A_{1}) \,\widehat{\otimes}\, \mathrm{dom}(A_{1}) \to \mathcal{H}_{1} \,\widehat{\otimes}\, \mathcal{H}_{2}
\end{equation}
that makes the diagram
\begin{equation}
\xymatrix{
\mathrm{dom}(A_{1}) \times \mathrm{dom}(A_{2})
	\ar[d]_{A_{1} \times A_{2}}
	\ar[r]^{\,\widehat{\otimes}\,}
	\ar@{.>}[dr]|\circlearrowleft
& \mathrm{dom}(A_{1}) \,\widehat{\otimes}\, \mathrm{dom}(A_{2})
	\ar[d]^{A_{1} \,\widehat{\otimes}\, A_{2}}\\
\mathcal{H}_{1} \times \mathcal{H}_{2}
    \ar[r]^{\,\widehat{\otimes}\,}
& \mathcal{H}_{1} \,\widehat{\otimes}\, \mathcal{H}_{2}
}
\end{equation}
commute. Note in particular that the diagram implies
\begin{equation}
A_{1} \,\widehat{\otimes}\, A_{2}(|\phi_{1}\rangle \,\widehat{\otimes}\, |\phi_{2}\rangle) = |A_{1}\phi_{1}\rangle \,\widehat{\otimes}\, |A_{2}\phi_{2}\rangle, \quad |\phi_{i} \rangle \in \mathrm{dom}(A_{i}),\ i=1,2.
\end{equation}
Extending both the domain and the range of \eqref{eq:pre_pre_tensor_product_of_operators}, we can think of
\begin{equation}\label{eq:pre_tensor_product_of_operators}
A_{1} \,\widehat{\otimes}\, A_{2} : \mathcal{H}_{1} \otimes \mathcal{H}_{2} \supset \mathrm{dom}(A_{1}) \,\widehat{\otimes}\, \mathrm{dom}(A_{2}) \to \mathcal{H}_{1} \otimes \mathcal{H}_{2}
\end{equation}
as an operator on the Hilbert space $\mathcal{H}_{1} \otimes \mathcal{H}_{2}$.

\paragraph{Tensor Product of Self-Adjoint Operators}
Now, for a pair of densely defined closable operators $A_{i}: \mathcal{H}_{i} \supset \mathrm{dom}(A_{i}) \to \mathcal{H}_{i}$, $i= 1, 2$, the operator \eqref{eq:pre_tensor_product_of_operators} itself is known to be closable, whereby we define the \emph{tensor product}
\begin{equation}\label{def:tensor_operator}
A_{1} \otimes A_{2} := \overline{A_{1} \,\widehat{\otimes}\, A_{2}}
\end{equation}
of the pair by its closure. Specifically, since self-adjoint operators are densely defined and closed, the tensor product \eqref{def:tensor_operator} is always well-defined.
Although self-adjointness is not preserved in general by taking \eqref{eq:pre_tensor_product_of_operators}, its essential self-adjointness is at least known to be guaranteed.
As the closure of an essentially self-adjoint operator, this makes the tensor product \eqref{def:tensor_operator} itself self-adjoint, which is precisely the definition of the \emph{tensor product of self-adjoint operators}.

\subsection{Unconditioned Measurement}\label{sec:ups_I_ups}

Now that we have reviewed the necessary materials, we begin our study on the unconditioned measurement scheme.  Suppose that the experimenter wishes to extract information of the combination of a given but unknown observable $A$ and a state $|\phi\rangle \in \mathcal{H}$  of the target system, without direct access to it.  To accomplish this, 
one first arranges an auxiliary meter system $\mathcal{K}$ equipped with a pair of observables $\{Q, P\}$ for which $\{\mathcal{K}, \{Q, P\}\}$ gives a Weyl representation of the CCR. 
As we have seen above, the choice $\mathcal{K} = L^{2}(\mathbb{R})$, $Q = \hat{x}$ and $P = \hat{p}$ gives a concrete example. 
One then prepares the meter system in a certain initial state represented by the vector $|\psi\rangle \in \mathcal{K}$, and combines the two systems into the direct product state $|\phi \otimes \psi \rangle \in \mathcal{H} \otimes \mathcal{K}$. Choosing an observable $Y$ of the meter system $\mathcal{K}$ 
either by $Y = Q$ or $Y = P$, the composite system is subjected to a von Neumann type interaction, 
\begin{equation}\label{intro:von_Neumann_interaction}
|\Psi^{g}\rangle := e^{-igA \otimes Y} |\phi \otimes \psi\rangle, \quad g \in \mathbb{R},
\end{equation}
{\it i.e.}, a unitary evolution on the composite system parametrised by a real number $g$, which is often interpreted as the intensity, its time duration, or the combination thereof, of the interaction between the two systems.
Finally, the experimenter performs local measurement of an observable $X$ of the meter system $\mathcal{K}$ by choosing either by $X = Q$ or $X = P$ (chosen independently of $Y$), or equivalently $I \otimes X$ on the generally entangled composite state $|\Psi^{g}\rangle$ after the interaction (see figure~\ref{fig:ucm}).
\begin{figure}
\floatbox[{\capbeside\thisfloatsetup{capbesideposition={right,top},capbesidewidth=0.7\textwidth}}]{figure}[\FBwidth]
{\caption{A graphical illustration of the unconditioned measurement scheme.  The figure is to be read from top to bottom. The initial state preparation stage of both the target and the meter systems is depicted in the top part, and the manner in which the two quantum systems undergoes a von Neumann type interaction is illustrated in the middle part. The composite system after the interaction, which is depicted in the bottom part, generally becomes entangled. One finally performs a measurement of an observable $X$ on the meter system.}\label{fig:ucm}}
{\includegraphics[width=0.3\textwidth]{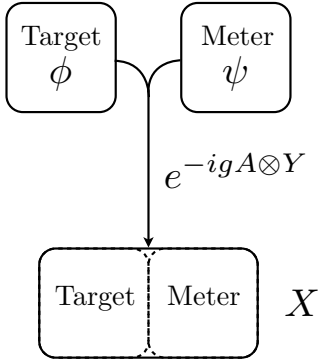}}
\end{figure}

As a preparation for further analysis, we first introduce the reduced density operator 
\begin{equation}\label{def:meterstate}
\psi^{g} := \mathrm{Tr}_\mathcal{H}[|\Psi^{g}\rangle\langle\Psi^{g}|]
\end{equation}
representing the state of the meter system $\mathcal{K}$ after the measurement%
\footnote{Here we are adopting, instead of the more common usage 
$\rho^{g}$, a slightly unusual notation $\psi^{g}$ to denote the generically mixed state of the meter.   This we do because
we wish to reserve the letter $\rho$ for the density of some absolutely continuous complex measures (see Section~\ref{sec:absolute_continuity}).  
However, our notation has an advantage on its own in that, if we also write the state as $|\psi^{g}\rangle$ when it is pure as we usually do, 
the correspondence between the two, $\psi^{g}$ and $|\psi^{g}\rangle$ (both represent the same state), becomes obvious.  
}
obtained by taking the partial trace of the composite state $|\Psi^{g}\rangle$ with respect to the target system $\mathcal{H}$. The quantity of interest for our measurement is thus the expectation value
\begin{align}\label{eq:exp_x_outcome}
\mathbb{E}[I \otimes X;\Psi^{g}]
    &:= \frac{\langle \Psi^{g}, (I \otimes X) \Psi^{g} \rangle}{\|\Psi^{g}\|^{2}} \nonumber \\
    &= \frac{\mathrm{Tr}_{\mathcal{K}}\left[ X \mathrm{Tr}_{\mathcal{H}}\left[ |\Psi^{g}\rangle\langle\Psi^{g}| \right] \right]}{\mathrm{Tr}_{\mathcal{K}}\left[ \mathrm{Tr}_{\mathcal{H}} \left[ |\Psi^{g}\rangle\langle\Psi^{g}| \right] \right]} \nonumber \\
    &= \frac{\mathrm{Tr}_{\mathcal{K}}\left[ X \psi^{g} \right] }{\mathrm{Tr}_{\mathcal{K}}\left[ \psi^{g} \right]} \nonumber \\
    &=: \mathbb{E}[X;\psi^{g}],
\end{align}
of the observable $I \otimes X$ on the composite state $|\Psi^{g}\rangle$ after the interaction, which can interchangeably be written in terms of that of the local observable $X$ on the density matrix $|\psi^{g}\rangle$ of the meter system.

\paragraph{Main Objective of this Subsection}
The main objective of this subsection is to demonstrate the following basic proposition, which provides the sufficient condition for its well-definedness and its explicit evaluations. For definiteness, we shall from now on fix $Y=P$ without loss of generality.
\begin{proposition}[Unconditioned Measurement I]\label{prop:UCM_I}
In the context of the UM scheme, let $Y=P$ for definiteness. Given the right choices
\begin{enumerate}
\item If $X = Q$: $|\phi\rangle \in \mathrm{dom}(A)$, $|\psi\rangle \in \mathrm{dom}(X)$,
\item If $X = P$: $|\phi\rangle \in \mathcal{H}$, $|\psi\rangle \in \mathrm{dom}(X)$,
\end{enumerate}
of the initial states of both the target and the meter systems,
depending on the choice of the observable $X$ on the meter system to be measured, the composite state after the interaction lies in $|\Psi^{g}\rangle \in \mathrm{dom}(I \otimes X)$, $g \in \mathbb{R}$. The expectation value \eqref{eq:exp_x_outcome} thus remains finite  for all range of the interaction parameter, which reads
\begin{equation}\label{prop:UCM_I_formula}
\mathbb{E}[X;\psi^{g}] = 
\begin{cases}
\mathbb{E}[Q; \psi] + g\, \mathbb{E}[A; \phi], &\quad (X = Q) \\
\mathbb{E}[P; \psi], &\quad (X = P)
\end{cases}
\end{equation}
for each of the choice of $X$.
\end{proposition}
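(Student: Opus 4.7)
The plan is to reduce both subcases to the Weyl translation identity \eqref{eq:weak_weyl} lifted to the composite Hilbert space $\mathcal{H}\otimes\mathcal{K}$. Writing $U^{g}:=e^{-ig A\otimes P}$, the unitarity of $U^{g}$ gives $\|\Psi^{g}\|^{2}=\|\phi\otimes\psi\|^{2}=\|\phi\|^{2}\|\psi\|^{2}$ independent of $g$, so the denominator in \eqref{eq:exp_x_outcome} is trivial and the entire analysis reduces to computing the numerator $\langle\Psi^{g},(I\otimes X)\Psi^{g}\rangle$, together with verifying that $|\Psi^{g}\rangle$ actually lies in $\mathrm{dom}(I\otimes X)$.

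The case $X=P$ is routine. The self-adjoint operators $A\otimes I$ and $I\otimes P$ arise as tensor products of trivially commuting factors, hence they commute strongly, and therefore so do their (bounded) exponentials; in particular $U^{g}$ commutes with $I\otimes P$ by functional calculus. Assuming $|\psi\rangle\in\mathrm{dom}(P)$ gives $|\phi\otimes\psi\rangle\in\mathrm{dom}(I\otimes P)$, whence $|\Psi^{g}\rangle=U^{g}|\phi\otimes\psi\rangle\in\mathrm{dom}(I\otimes P)$, and
\[
\langle\Psi^{g},(I\otimes P)\Psi^{g}\rangle=\langle\phi\otimes\psi,(I\otimes P)(\phi\otimes\psi)\rangle=\|\phi\|^{2}\langle\psi,P\psi\rangle,
\]
which upon normalisation is exactly $\mathbb{E}[P;\psi]$.

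For $X=Q$ the heart of the argument is the composite Weyl translation identity
\[
U^{-g}(I\otimes Q)U^{g}=I\otimes Q+g(A\otimes I),
\]
valid on a dense domain containing $\mathrm{dom}(A)\,\widehat{\otimes}\,\mathrm{dom}(Q)$, which I view as the natural lift of \eqref{eq:weak_weyl} with the scalar parameter $t$ replaced by the self-adjoint multiplier $-gA$. Granting this identity, the hypotheses $|\phi\rangle\in\mathrm{dom}(A)$, $|\psi\rangle\in\mathrm{dom}(Q)$ ensure that $|\phi\otimes\psi\rangle$ lies in the domains of both $I\otimes Q$ and $A\otimes I$, and one reads off
\[
(I\otimes Q)\Psi^{g}=U^{g}\bigl(|\phi\rangle\otimes Q|\psi\rangle+g\,A|\phi\rangle\otimes|\psi\rangle\bigr),
\]
which is a well-defined vector of $\mathcal{H}\otimes\mathcal{K}$ and shows $|\Psi^{g}\rangle\in\mathrm{dom}(I\otimes Q)$. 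Pairing with $|\Psi^{g}\rangle$, the unitarity of $U^{g}$ and the multiplicativity of the inner product on elementary tensors yield $\langle\Psi^{g},(I\otimes Q)\Psi^{g}\rangle=\|\phi\|^{2}\langle\psi,Q\psi\rangle+g\|\psi\|^{2}\langle\phi,A\phi\rangle$, whose ratio with $\|\Psi^{g}\|^{2}$ is the asserted formula.

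The main obstacle is establishing the composite Weyl identity rigorously, since every operator in sight is unbounded and tensor products of such require care with domains. I would handle this by spectral disintegration of the target factor: writing $A=\int a\,dE_{A}(a)$ and viewing $\mathcal{H}\otimes\mathcal{K}$ as a direct integral of copies of $\mathcal{K}$, the unitary $U^{g}$ acts fibrewise as $e^{-ig a P}$, so the identity reduces on each fibre to $e^{ig a P}Q\,e^{-ig a P}=Q+g a\,I$, which is precisely \eqref{eq:weak_weyl} with the substitution $t=-g a$; reassembling via functional calculus recovers the composite identity on the required invariant domain. A slightly more hands-on alternative is to differentiate the scalar function $g\mapsto\langle\Psi^{g},(I\otimes Q)\Psi^{g}\rangle$ using the CCR $[I\otimes Q,A\otimes P]=iA\otimes I$, note that $A\otimes I$ commutes strongly with $U^{g}$ so that the derivative collapses to the $g$-independent quantity $\langle\phi,A\phi\rangle\|\psi\|^{2}$, and integrate from $g=0$; this bypasses direct integrals but still rests on the same domain manipulations assembled in Section~\ref{sec:ups_I_pre}.
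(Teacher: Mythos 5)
Your proof is correct and follows essentially the same route as the paper: both cases rest on the composite translation identity $e^{igA\otimes P}(I\otimes Q)e^{-igA\otimes P}=\overline{I\otimes Q+gA\otimes I}$ (the paper's \eqref{eq:weak_weyl_analogue01}), justified ultimately by the fact that $e^{-igA\otimes P}$ acts fibrewise over the spectral decomposition of $A$ as $e^{-igaP}$, which is exactly how the paper proves its Lemma \eqref{eq:weyl_analogue01} before differentiating at $s=0$. The only cosmetic difference is that you lift \eqref{eq:weak_weyl} fibrewise directly, while the paper first lifts the Weyl relation and then differentiates; the domain bookkeeping and the final computation of the expectation values are the same.
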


\paragraph{Some Operator Identities}

Before we move on to the proof, we make notes on some important operator identities that will be extensively used throughout this paper.
Our analysis is based on the following operator identities on the composite Hilbert space $\mathcal{H} \otimes \mathcal{K}$, similar to those of \eqref{eq:weyl01} and \eqref{eq:weyl02}.
\begin{lemma}
Let $H$ and $K$ be Hilbert spaces, and let $A$ be a self-adjoint operator on $\mathcal{H}$, and $\{Q, P\}$ be a pair of self-adjoint operators on $\mathcal{K}$ for which $\{\mathcal{K}, \{Q, P\}\}$ defines a Weyl representation of the CCR. Then, the operator equalities
\begin{gather}
e^{isI \otimes Q}e^{itA \otimes P} = e^{-istA \otimes I}e^{itA \otimes P}e^{isI \otimes Q}, \quad s, t \in \mathbb{R}, \label{eq:weyl_analogue01} \\
e^{isI \otimes P}e^{itA \otimes P} = e^{itA \otimes P}e^{isI \otimes P}, \quad s, t \in \mathbb{R},\label{eq:weyl_analogue02}
\end{gather}
hold.
\end{lemma}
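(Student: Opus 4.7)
The overall strategy is to reduce both identities to the Weyl relations \eqref{eq:weyl01}--\eqref{eq:weyl02} already available on $\mathcal{K}$, by exploiting the fact that the self-adjoint functional calculus interacts well with the tensor product structure.

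First I would dispose of \eqref{eq:weyl_analogue02}, which is essentially a commutativity statement. The pair $A \otimes I$ and $I \otimes P$ strongly commutes, since their spectral projections are of the form $E_A(B_1) \otimes I$ and $I \otimes E_P(B_2)$, and projections acting on disjoint tensor factors commute tautologically. Both $I \otimes P$ and $A \otimes P$ are Borel functions of this strongly commuting pair: $I \otimes P$ corresponds to $g(a,p) = p$, while $A \otimes P$ corresponds to $f(a,p) = ap$ via the joint spectral calculus, an operator that one checks coincides with the tensor-product self-adjoint operator defined in the preliminaries. Since any two Borel functions of a strongly commuting pair themselves commute strongly, the one-parameter unitary groups $e^{isI\otimes P}$ and $e^{itA\otimes P}$ commute for all $s,t\in\mathbb{R}$, which is precisely \eqref{eq:weyl_analogue02}.

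For \eqref{eq:weyl_analogue01} I would diagonalise $A$ and reduce to a fibrewise version of \eqref{eq:weyl01}. Letting $E_A$ be the spectral measure of $A$, the spectral theorem furnishes a direct integral decomposition $\mathcal{H} \cong \int^{\oplus}_{\mathbb{R}} \mathcal{H}_a\, d\mu(a)$ on which $A$ acts as multiplication by $a$, and hence
\begin{equation}
\mathcal{H} \otimes \mathcal{K} \cong \int^{\oplus}_{\mathbb{R}} (\mathcal{H}_a \otimes \mathcal{K})\, d\mu(a).
\end{equation}
In this representation $e^{itA\otimes P}$ is decomposable with fibre $I_{\mathcal{H}_a}\otimes e^{itaP}$, $e^{isI\otimes Q}$ has fibre $I_{\mathcal{H}_a}\otimes e^{isQ}$, and $e^{-istA\otimes I}$ reduces to the scalar $e^{-ista}$ on the $a$-th fibre. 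The Weyl relation \eqref{eq:weyl01}, with $t$ replaced by $ta$, gives on each fibre
\begin{equation}
e^{isQ}e^{itaP} = e^{-ista}e^{itaP}e^{isQ},
\end{equation}
and reassembling these fibrewise equalities through the direct integral yields \eqref{eq:weyl_analogue01}.

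The main technical obstacle is the precise identification of $e^{itA\otimes P}$ with the fibrewise operator $I_{\mathcal{H}_a}\otimes e^{itaP}$, which needs care about domains when $A$ is unbounded. A clean way to circumvent the direct-integral apparatus is to establish the identity first for the bounded truncations $A_n := \int_{[-n,n]} a\, dE_A(a)$, where $e^{itA_n\otimes P}$ can be handled by a convergent Dyson-type expansion in the commuting operators $A_n\otimes I$ and $I\otimes P$, and then to pass to the strong-operator limit $n\to\infty$ using the strong continuity of the functional calculus together with the uniform boundedness of the unitaries involved. Either route ultimately reduces the lemma to the scalar identity above and to the Weyl relation \eqref{eq:weyl01} on $\mathcal{K}$.
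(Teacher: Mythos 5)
Your proof is correct, and its core idea coincides with the paper's: decompose $A$ spectrally so that $e^{itA\otimes P}$ splits into components of the form $I\otimes e^{itaP}$ indexed by the spectral values $a$ of $A$, apply the Weyl relation $e^{isQ}e^{itaP}=e^{-ista}e^{itaP}e^{isQ}$ componentwise, and reassemble. Where you differ is in how the general (non-finite-spectrum) case is handled. The paper works out only the case $\sigma(A)=\{a_1,\dots,a_N\}$ in detail, writing $e^{itA\otimes P}=\sum_n \Pi_{a_n}\otimes e^{ita_nP}$ and summing the fibrewise Weyl relations, and then simply asserts that the general case follows "through rudimentary techniques of functional calculus"; your direct-integral decomposition $\mathcal{H}\otimes\mathcal{K}\cong\int^{\oplus}(\mathcal{H}_a\otimes\mathcal{K})\,d\mu(a)$ (or, alternatively, the bounded truncations $A_n$ with a strong-resolvent limit) is precisely the machinery that fills that gap, at the cost of having to justify that $e^{itA\otimes P}$ is decomposable with fibre $I_{\mathcal{H}_a}\otimes e^{itaP}$. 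Likewise for \eqref{eq:weyl_analogue02}: the paper dismisses it as trivial, while your argument via strong commutativity of $A\otimes I$ and $I\otimes P$ and the joint Borel calculus is a legitimate, if more elaborate, way of saying the same thing. In short, the student's version buys a genuinely complete treatment of unbounded $A$ with arbitrary spectrum, while the paper's finite-sum computation buys brevity and concreteness.
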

\begin{proof}
Since \eqref{eq:weyl_analogue02} is trivial, we only need to prove \eqref{eq:weyl_analogue01}. To this end, we first consider the special case where the self-adjoint operator $A$ on the target system $\mathcal{H}$ has a spectrum $\sigma(A)$ of finite cardinality. Letting $\sigma(A) = \{a_{1}, \dots, a_{N}\}$, $N \in \mathbb{N}^{\times}$ be any enumeration of its eigenvalues,
the spectral decomposition of $A$ reads
\begin{equation}\label{eq:spect_decomp_fin}
A = \sum_{n = 1}^{N} a_{n} \Pi_{a_{n}},
\end{equation}
where $\Pi_{a_{n}}$ is the projection on the eigenspace associated with the eigenvalue $a_{n}$.  In the case where the eigenspace is one-dimensional (or non-degenerate), one may write $\Pi_{a_{n}} = | a_{n} \rangle\langle a_{n}  |$ with the eigenstate $| a_{n} \rangle$ for which $A |\phi_{n}\rangle = a_{n} |a_{n}\rangle$ holds (more on the topic of spectral decomposition in Section~\ref{sec:spectral_theorem}). 
Now, for an arbitrary self-adjoint operator $Z$ on the meter system $\mathcal{K}$, one may expect from the defining property $\Pi_{a_{n}}^2 = \Pi_{a_{n}}$ of projections that the formal computation
\begin{align}\label{eq:int_op_fin}
e^{itA \otimes Z}
    &= \sum_{k = 0}^{\infty} \frac{(it)^{k}}{k!} \left( A \otimes Z \right)^{k} \nonumber \\
    &= \sum_{k = 0}^{\infty} \frac{(it)^{k}}{k!} \left( \left( \sum_{n = 1}^{N} a_{n}^{k} \Pi_{a_{n}} \right) \otimes Z^{k} \right) \nonumber \\
	&= \sum_{k = 0}^{\infty} \sum_{n = 1}^{N} \left( \Pi_{a_{n}} \otimes \frac{(ita_{n}Z)^{k}}{k!} \right) \nonumber \\
	&= \sum_{n = 1}^{N} \left( \Pi_{a_{n}} \otimes e^{ita_{n}Z} \right), \quad t \in \mathbb{R},
\end{align}
is legitimate. This in fact turns out to be correct as an operator identity on $\mathcal{H} \otimes \mathcal{K}$ with full rigour, which can be proven in a fairly straightforward manner by means of rudimentary techniques of functional calculus.
One then has
\begin{align}
e^{isI \otimes Q}e^{itA \otimes P}
    &= \left( I \otimes e^{isQ} \right) \left( \sum_{n = 1}^{N} \left( \Pi_{a_{n}} \otimes e^{ita_{n}P} \right) \right) \nonumber \\
    &= \sum_{n = 1}^{N} \left( \Pi_{a_{n}} \otimes \left( e^{isQ}e^{ita_{n}P} \right) \right) \nonumber \\
    &= \sum_{n = 1}^{N} \left( \Pi_{a_{n}} \otimes \left( e^{ista_{n}}e^{ita_{n}P}e^{isQ} \right) \right) \nonumber \\
    &= \sum_{n = 1}^{N} \left( \Pi_{a_{n}} \otimes e^{ita_{n}(P -sI)} \right) \left( I \otimes e^{isQ} \right) \nonumber \\
    &= e^{itA \otimes (P-sI)}e^{is I \otimes Q} \nonumber \\
    &= e^{-istA \otimes I}e^{itA \otimes P}e^{isI \otimes Q}, \quad s, t \in \mathbb{R},
\end{align}
which proves (\ref{eq:weyl_analogue01}) for our special case, where we have used \eqref{eq:weyl01} in the third step. Returning to the general case in which $A$ is now an arbitrary self-adjoint operator, one observes that the well-definedness of both the left-most and right-most hand sides of the above equality remains valid.  From this, one may expect that the same result also holds for the general case, which indeed turns out to be true (as usual, one may prove this without much difficulty through rudimentary techniques of functional calculus).
\end{proof}

\paragraph{Measurement Outcomes}
We now return to the main problem of this subsection. We are interested in finding the condition for which \eqref{eq:exp_x_outcome} is well-defined, and subsequently in obtaining an explicit formula in terms of the components of both the target and the meter system. Since most of the techniques employed here is the same as those introduced in Section~\ref{sec:ups_I_pre}, we shall proceed by sketching the proofs.

\begin{proof}[Proof of Proposition~\ref{prop:UCM_I}]
Let us begin by choosing the operator $X=Q$ for the measurement of the meter system, and thereby rewrite the r.~h.~s. of \eqref{eq:weyl_analogue01} to obtain
\begin{equation}
e^{isI \otimes Q}e^{itA \otimes P} = e^{itA \otimes P}e^{is
(\overline{I \otimes Q - tA \otimes I})},\quad s, t \in \mathbb{R}
\end{equation}
for better usability%
\footnote{Note here that the sum of two (possibly unbounded) self-adjoint operators is not necessarily self-adjoint. Fortunately, essential self-adjointness is at least assured for the sum of $I \otimes Q$ and $A \otimes I$ for our case. We may thus take the self-adjoint extension of their sum in order to ensure its self-adjointness (more to this in Section~\ref{sec:sim_meas_obs}).}.
By differentiating both sides of the above equality and taking $s=0$, an analogous argument given earlier for obtaining \eqref{eq:weak_weyl} leads to the operator identity,
\begin{equation}\label{eq:weak_weyl_analogue01}
e^{-itA \otimes P} (I \otimes Q) e^{itA \otimes P} =  \overline{I \otimes Q - tA \otimes I}, \quad t \in \mathbb{R},
\end{equation}
on 
the subspace $\mathrm{dom}(\overline{I \otimes Q - tA \otimes I})$.  This ensures that, if $|\Phi\rangle \in \mathrm{dom}(\overline{I \otimes Q - tA \otimes I})$, then one has 
\begin{equation}\label{eq:eqeta}
e^{itA \otimes P}|\Phi\rangle \in \mathrm{dom}(I \otimes Q).
\end{equation}
Here, it may be worthwhile to note the analogy between \eqref{eq:weak_weyl} and \eqref{eq:weak_weyl_analogue01}.
To put this in our context, let $t = -g$ above. If one chooses $|\psi\rangle \in \mathrm{dom}(Q)$ as the meter state, and likewise assumes $|\phi\rangle \in \mathrm{dom}(A)$ as the system state prepared prior to the interaction, one has in particular $|\phi \otimes \psi \rangle \in \mathrm{dom}(\overline{I \otimes Q + gA \otimes I})$.  Then, equating $|\Phi\rangle = |\phi \otimes \psi \rangle$ in \eqref{eq:eqeta}, we find
\begin{equation}\label{eq:int_dom_Q}
|\Psi^{g}\rangle = e^{-igA \otimes P} |\phi \otimes \psi\rangle \in \mathrm{dom}(I \otimes Q), \quad g \in \mathbb{R}.
\end{equation}
This guarantees that the expectation value \eqref{eq:exp_x_outcome} of the observable $I \otimes Q$ on the composite state $|\Psi^{g}\rangle$ remains finite and is given by
\begin{align}\label{eq:ups_exp_Q}
\mathbb{E}[I \otimes Q; \Psi^{g}]
    &:= \frac{\langle \Psi^{g}, (I\otimes Q) \Psi^{g}\rangle}{\|\Psi^{g}\|^{2}} \nonumber \\
    &= \frac{\langle \phi \otimes \psi, (e^{igA \otimes P} (I \otimes Q) e^{-igA \otimes P}) \phi \otimes \psi\rangle}{\|\phi\|^{2}\|\psi\|^{2}} \nonumber \\
    &= \frac{\langle \phi \otimes \psi, (\overline{I \otimes Q + gA \otimes I}) \phi \otimes \psi\rangle}{\|\phi\|^{2}\|\psi\|^{2}} \nonumber \\
    &= \mathbb{E}[Q; \psi] + g  \mathbb{E}[A; \phi], \quad g \in \mathbb{R},
\end{align}
for any such combination of the initial states.

Evidently, for the choice $X=P$,  one finds the validity of the operator identity
\begin{equation}\label{eq:weak_weyl_analogue02}
e^{itA \otimes P} (I \otimes P) e^{-itA \otimes P} = I \otimes P, \quad t \in \mathbb{R},
\end{equation}
on the subspace $\mathrm{dom}(I \otimes P)$ by analogous reasoning.
From this, one readily concludes that the expectation value of $I \otimes P$ reads
\begin{equation}\label{eq:ups_exp_P}
\mathbb{E}[I \otimes P; \Psi^{g}] = \mathbb{E}[P; \psi], \quad g \in \mathbb{R},
\end{equation}
which is well-defined for any choice of the state $|\psi\rangle \in \mathrm{dom}(P)$ of the meter system and $g \in \mathbb{R}$, irrespective of the initial choice of the state $|\phi\rangle \in \mathcal{H}$ of the target system.
\end{proof}

\subsection{Recovery of the Target Profile}

Now that we have revealed the explicit behaviour of the measurement outcomes of the meter, we are thus interested in recovering the information of the target system from it. As one may expect from the statement in Proposition~\ref{prop:UCM_I}, the information of the target system (which should essentially consist of the specification of the pair of $A$ and $|\phi\rangle$) manifests itself in the form of the expectation value $\mathbb{E}[A;\phi]$. In recovering the desired information, one subsequently recognises from \eqref{prop:UCM_I_formula} that it fully suffices to examine only the outcomes of the measurement of the observable $X$ conjugate to $Y$, and there is no use for that of the choice $X=Y$ (this is to be contrasted with the conditional measurement we discuss later). Specifically, one finds below that there are two typical techniques in obtaining the desired information: one is to investigate the behaviour of the measurement outcome \eqref{prop:UCM_I_formula} in the strong region $g \to \pm \infty$ of the interaction parameter, and the other is to examine the local behaviour of it around $g = 0$, which shall be respectively called the \emph{strong unconditioned measurement} and the \emph{weak unconditioned measurement} in this paper.

\subsubsection{Strong Unconditioned Measurement}\label{sec:ups_I_sups}

Our result \eqref{prop:UCM_I_formula} shows that the expectation value of the measurement of $X = Q$ behaves linearly with respect to $g$, and that its growth is proportional to the expectation value $\mathbb{E}[A; \phi]$ of the target observable. The experimenter would thus divide the measurement outcomes of $Q$ by $g$ and then take the limit of the strong coupling $g \to \pm \infty$ (or equivalently ${g^{-1} \to 0}$):
\begin{align}\label{eq:ups_op_recovery}
\lim_{g^{-1} \to 0} \frac{\mathbb{E}[Q; \psi^{g}]}{g}
    &= \mathbb{E}[A; \phi] + \lim_{g^{-1} \to 0} \frac{\mathbb{E}[Q; \psi]}{g} \nonumber \\
    &= \mathbb{E}[A; \phi].
\end{align}
allowing the recovery of the desired information of the target system $\mathbb{E}[A; \phi]$ in the form of expectation values%
\footnote{Alternatively, one may consider the shift of the expectation value,
\begin{align}
\Delta_{X}(g)
    := \mathbb{E}[X;\psi^{g}] - \mathbb{E}[X;\psi^{0}] 
    =
\begin{cases}
     g \mathbb{E}[A; \phi],   & (X=Q), \\
     0,    & (X=P),
\end{cases}
\end{align}
for $g \in \mathbb{R}$ as a quantity directly related to the observable $A$ of the system. 
For the choice $X=Q$, one then simply has
\begin{equation}
\frac{\Delta_{Q}(g)}{g} = \mathbb{E}[A; \phi], \quad g \in \mathbb{R}^{\times},
\end{equation}
which might be a more straight-forward way to be employed practically.}.

\subsubsection{Weak Unconditioned Measurement}\label{sec:ups_I_wups}

The same information may be obtained by examining the weak region ($g \to 0$) of the interaction. Indeed, one trivially finds from \eqref{prop:UCM_I_formula} that
\begin{align}\label{eq:ups_weak_I}
\left. \frac{d^{n}}{dg^{n}}\mathbb{E}[Q; \psi^{g}] \right|_{g=0} =
 \begin{cases}
 \mathbb{E}[Q; \psi], & n = 0, \\
 \mathbb{E}[A; \phi], & n = 1, \\
 0, & n \geq 2,
 \end{cases}
\end{align}
which implies that the expectation value $\mathbb{E}[A; \phi]$ of our interest may also be obtained as the first differential coefficient ($n=1$) of the measured outcome at $g=0$.

\subsubsection{Discussion}\label{sec:ups_I_discussion}

While this whole section consisted of rather trivial results, the line of arguments presented here serves as the baseline of our analysis throughout this paper. Namely, we first examine the full behaviour of the target of our measurement (for this section, is was the expectation value \eqref{eq:exp_x_outcome} of the observable $X$ of the meter) and intend to obtain an explicit description of how the profile of the initial configuration of the the target system gets mixed into that of the meter system through the interaction (which, for the current case, is the result \eqref{prop:UCM_I_formula}). We then intend to extract the information of the target system (for this section, it is the expectation value $\mathbb{E}[A;\phi]$) by separating it from the measurement outcomes. Specifically, we find that examining either the strong or the weak region of the interaction parameter $g$ reveals itself useful for this purpose, and this should be the strategy that we take in the subsequent sections.

In the next section, the UM scheme is analysed in depth in terms of probabilities, following the same line as described above. Specifically, while the distinction between the strong and the weak measurements looked rather vague at the operator level, we shall see shortly that these two strategies are recognised to be qualitatively different from the viewpoint of probabilities.

\newpage
\section{Unconditioned Measurement II: In Terms of Probabilities}\label{sec:ups_II}

We have so far conducted an analysis of the UM scheme on the operator level, where the quantity of interest is the \emph{expectation value} of an observable. 
However, one may be interested in the raw information that the measurement provides, {\it i.e.}, the \emph{probability} describing the behaviour of each measurement outcomes, which is the target of our study in this section.

\subsection{Reference Materials}\label{sec:ups_II_pre}

To prepare for our discussion, we here provide a concise summary on the topic of complex measures and integration with respect to them. We next make a brief review on the spectral theorem for self-adjoint operators and recall the general framework for describing the ideal measurement of a quantum observable.  Subsequently, we expound on density functions and see how this relates to the description by measures.

\subsubsection{The Space of Complex Measures}

As a preparation in dealing with the spectral theorem for self-adjoint operators, we collect below the basic definitions and results regarding complex measures and integration with respect to them.

\paragraph{Signed Measures, Jordan Decomposition and Total Variation}

Let $(X, \mathfrak{A})$ be a measurable space. A map $\nu : \mathfrak{A} \to \overline{\mathbb{R}}$ is called a \emph{signed measure}, if it satisfies the following properties:
\begin{enumerate}
\item $\nu(\emptyset) = 0$.
\item $\nu(\mathfrak{A}) \subset ]-\infty, + \infty]$ or $\nu(\mathfrak{A}) \subset [-\infty, + \infty[$.
\item Countable additivity \eqref{def:count_add} holds for any sequence $(A_{n})_{n \geq 1}$ of pairwise disjoint subsets of $X$.
\end{enumerate}
They are, in a sense, generalisations of the concept of the standard measures by allowing negative numbers to be assigned to each measurable sets. A signed measure $\nu$ is called \emph{finite} if $\nu(\mathfrak{A}) \subset \mathbb{R}$. One of the most important properties of a signed measure is described by the \emph{Jordan decomposition theorem}, which states that every singed measure $\nu$ has the \emph{Jordan decomposition}, {\it i.e.}, a unique decomposition of $\nu$ into a difference 
\begin{equation}
\nu = \nu^{+} - \nu^{-}
\end{equation}
of two measures $\nu^{+}$ and $\nu^{-}$, respectively called the \emph{positive} and \emph{negative variation} of $\nu$, and at least one of which being finite. Here, the positive and negative variations are \emph{singular} to one another, denoted as $\nu^{+}\perp\nu^{-}$, in the sense there exists a decomposition of $X = P \cup N$ into two measurable sets such that $\nu^{+}(N) = 0$ and $\nu^{-}(P) = 0$ holds. The Jordan decomposition is minimal in the following sense: Given any decomposition $\nu = \rho - \sigma$ of $\nu$ into two measures $\rho$, $\sigma$, at least one of which being finite, then $\nu^{+} \leq \rho$, $\nu^{-} \leq \sigma$ holds.

Let $\mathbf{M}(\mathfrak{A})$ denote the collection of all finite signed measures. One readily sees that $\mathbf{M}(\mathfrak{A})$ becomes an $\mathbb{R}$-linear space, equipped with the natural addition $(\mu + \nu)(A) := \mu(A) + \nu(A)$ and scalar multiplication $(c\mu)(A) := c\mu(A)$ for $\mu, \nu \in \mathbf{M}(\mathfrak{A})$ and $c \in \mathbb{R}$.
Now, let $\nu = \nu^{+} - \nu^{-}$ be the Jordan decomposition of $\nu \in \mathbf{M}(\mathfrak{A})$, and define a new measure by their sum
\begin{equation}
|\nu| := \nu^{+} + \nu^{-},
\end{equation}
called the \emph{variation} of $\nu$. We then define its \emph{total variation} by $\|\nu\| := |\nu|(X)$, which is nothing but the evaluation of the whole space $X$ by the non-negative measure $|\nu|$.
One proves that the total variation defines a norm on $\mathbf{M}(\mathfrak{A})$, and in fact makes $( \mathbf{M}(\mathfrak{A}), \| \cdot \|)$ into a real Banach space.

\paragraph{Complex Measures}

Let $(X, \mathfrak{A})$ be a measurable space. A map $\nu : \mathfrak{A} \to \mathbb{C}$ is called a \emph{complex measure}, when it is countably additive \eqref{def:count_add}.
One sees that $\nu$ is a complex measure if and only if both its real and imaginary parts $\,\mathrm{Re}\,[\nu]$, $\,\mathrm{Im}\,[\nu]$ are finite signed measures. Analogous to the case of signed measures, the collection $\mathbf{M}_{\mathbb{C}}(\mathfrak{A})$ of all complex measures on $(X, \mathfrak{A})$ becomes a $\mathbb{C}$-linear space, equipped with the natural addition and scalar multiplication.
For a complex measure $\nu \in \mathbf{M}_{\mathbb{C}}(\mathfrak{A})$, we define the \emph{variation} of a measurable set $A \in \mathfrak{A}$ by
\begin{equation}
|\nu|(A) := \sup \left\{ \sum_{j=1}^{\infty} |\nu(A_{j})| : A_{j} \in \mathfrak{A} \text{ disjoint for} \,\, j \geq 1, \, A = \bigcup_{j=1}^{\infty} A_{j} \right\},
\end{equation}
and also its \emph{total variation},
\begin{equation}\label{def:totvar}
\|\nu\| := |\nu|(X).
\end{equation}
The definition coincides with the previous definition when $\nu$ happens to be a signed measure. The total variation $\|\nu\|$ of $\nu$ is known to be the smallest positive measure $\mu$ on $(X, \mathfrak{A})$ satisfying $|\nu(A)| \leq \mu(A)$, $A \in \mathfrak{A}$. In parallel to the case of signed measures, one finds that the total variation defines a norm on the linear space $\mathbf{M}_{\mathbb{C}}(\mathfrak{A})$ and makes $(\mathbf{M}_{\mathbb{C}}(\mathfrak{A}), \| \cdot \|)$ into a complex Banach space.

\paragraph{Integration over Complex Measures}
It is now tempting to define integration with respect to complex measures, as a natural extension to that defined for (standard) measures. For a complex measure $\nu \in \mathbf{M}_{\mathbb{C}}(\mathfrak{A})$, we let $\rho := \,\mathrm{Re}\,[\nu]$, $\sigma := \,\mathrm{Im}\,[\nu]$ and consider the intersection of the spaces
\begin{equation}
\mathcal{L}^{1}(\nu) := \mathcal{L}^{1}(\rho^{+}) \cap \mathcal{L}^{1}(\rho^{-}) \cap \mathcal{L}^{1}(\sigma^{+}) \cap \mathcal{L}^{1}(\sigma^{-}),
\end{equation}
where $\rho^{\pm}$ and $\sigma^{\pm}$ respectively being the positive and negative variations of $\rho$ and $\sigma$.
We then define the Lebesgue integral of $f \in \mathcal{L}^{1}(\nu)$ with respect to $\nu$ by
\begin{equation}
\int_{X} f\ d\nu := \int_{X} f\ d\rho^{+} - \int_{X} f\ d\rho^{-} + i \int_{X} f\ d\sigma^{+} - i \int_{X} f\ d\sigma^{-}.
\end{equation}
Linearity of the Lebesgue integral with respect to the complex measure follows naturally as expected.

\paragraph{New Measure from Old}

There are several ways to construct a new (complex) measure from a given measure. We mention below two of the most important manners that are frequently employed throughout this paper.

\begin{enumerate}
\renewcommand{\labelenumi}{(\Alph{enumi})}
\item {\it Measure with Density. \hspace{12pt}}
Let $(X, \mathfrak{A}, \mu)$ be a measure space. Given a $\mu$-integrable function $f : X \to \mathbb{C}$, one may define a complex measure by
\begin{equation}\label{def:Mass_mit_Dichte}
\nu(A) := \int_{A} f\ d\mu, \quad A \in \mathfrak{A}.
\end{equation}
The complex measure constructed in this manner is occasionally called the \emph{complex measure with the density $f$ with respect to $\mu$}, and we write it as $\nu = f \odot \mu$. A measurable function $g : X \to \hat{\mathbb{K}}$ is known to be $(f \odot \mu)$-integrable, if and only if the product $g \cdot f$ is $\mu$-integrable, in which case the equality
\begin{equation}\label{eq:Transformationsformel_Mass_mit_Dichte}
\int_{X} g \ d(f \odot \mu) = \int_{X} g \cdot f \ d\mu
\end{equation}
holds.
\item {\it Image Measure. \hspace{12pt}}
Let $(X, \mathfrak{A}, \mu)$ be a measure space. Given another measurable space $(Y, \mathfrak{B})$ and a measurable map $f : X \to Y$, one may construct a new measure on $(Y, \mathfrak{B})$ by
\begin{equation}\label{def:Bildmass}
f(\mu)(B) := \mu(f^{-1}(B)), \quad B \in \mathfrak{B},
\end{equation}
called the \emph{image measure (push-forward measure) of $\mu$ with respect to $f$}. A measurable function $g : Y \to \hat{\mathbb{K}}$ is known to be $f(\mu)$-integrable, if and only if the composition $g \circ f$ is $\mu$-integrable, in which case the the \emph{change of variables formula}
\begin{equation}\label{eq:Transformationsformel_Bildmass}
\int_{Y} g \ df(\mu) = \int_{X} g \circ f \ d\mu
\end{equation}
holds.
\end{enumerate}

\paragraph{Measure Algebra}

The space of complex measures has an additional well-known structure regarding convolutions.
The \emph{convolution} of the two complex measures $\mu, \nu \in \mathbf{M}_{\mathbb{C}}(\mathfrak{B}^{n})$ is defined by
\begin{align}\label{def:convolution_measure}
(\mu \ast \nu)(B) := \int_{\mathbb{R}^{n}} \mu(B - x)\ d\nu(x), \quad B \in \mathfrak{B}^{n}.
\end{align}
One can easily confirm that the convolution is a bilinear operation, and is moreover shown to be associative $\mu \ast (\nu \ast \rho) = (\mu \ast \nu) \ast \rho$ and commutative $\mu \ast \nu = \nu \ast \mu$. Together with the evaluation $\|\mu \ast \nu\| \leq \|\mu\| \|\nu\|$ based on the total variation norm \eqref{def:totvar}, one sees that the convolution makes the complex Banach space $\mathbf{M}_{\mathbb{C}}(\mathfrak{B}^{n})$ into a complex commutative Banach algebra, called the \emph{measure algebra} of $\mathfrak{B}^{n}$. The measure algebra $\mathbf{M}_{\mathbb{C}}(\mathfrak{B}^{n})$ has a multiplicative identity $e$ given by the delta measure $e = \delta_{0}$ centred at the origin, that is,
\begin{equation}
\mu \ast \delta_{0} = \delta_{0} \ast \mu = \mu
\end{equation}
holds for all $\mu \in \mathbf{M}_{\mathbb{C}}(\mathfrak{B}^{n})$. 
Here, the \emph{delta measure} (or the \emph{Dirac measure}) $\delta_{a}$ is a finite measure centred at $a \in \mathbb{R}^{n}$ defined by
\begin{equation}\label{def:delta_measure}
\delta_{a}(B) =
    \begin{cases}
        1, & a \in B, \\
        0, & a \notin B,
    \end{cases}
    \qquad B \in \mathfrak{B}^{n},
\end{equation}
characterised by the integral
\begin{equation}
\int_{\mathbb{R}^{n}} f(x)\ d\delta_{a}(x) = f(a),
\end{equation}
whenever the integration is well-defined. It is essentially the same object as the delta distribution that appears in the theory of generalised functions.

\subsubsection{The Space of Density Functions}\label{sec:absolute_continuity}

For later use, we are particularly interested in the special subspace of the space $\mathbf{M}_{\mathbb{C}}(\mathfrak{B}^{n})$ of complex measures, namely, the space of \emph{absolutely continuous} complex measures with respect to the Lebesgue-Borel measure $\beta^{n}$. We shall provide a concise review on its definition, make comments on its relation to the space of complex \emph{density functions}, and sees that the subspace reveals itself to be a sub-algebra of the measure algebra.

\paragraph{Absolute Continuity and Density Functions}

Let $\mu$ and $\nu$ be signed (or complex) measures on a measurable space $(X, \mathfrak{A})$. We say that $\nu$ is \emph{$\mu$-continuous} or \emph{absolutely continuous with respect to $\mu$}, written as $\nu \ll \mu$, if $\mu(A) = 0$ implies $\nu(A) = 0$ for all $A \in \mathfrak{A}$. A signed measure $\mu$ is called \emph{$\sigma$-finite} if there exists a sequence $(A_{n})_{n \geq 1}$ of disjoint measurable sets $A_{n} \in \mathfrak{A}$ satisfying $X = \bigcup_{n=1}^{\infty} A_{n}$ and $|\mu(A_{n})| < \infty$ ($n \in \mathbb{N}^{\times}$). By definition, finite measures are always $\sigma$-finite. The Lebesgue-Borel measure $\beta^{n}$ is among the most important examples of $\sigma$-finite measures. The following theorem is of great importance.
\begin{theorem*}[Radon-Nikod{\'y}m Theorem for Complex Measures]
Let $\mu$ be a $\sigma$-finite measure and $\nu \ll \mu$ be a complex measure. Then, $\nu$ has a density with respect to $\mu$, that is, there exists a $\mu$-integrable function $\rho : X \to \mathbb{C}$ such that $\nu = \rho \odot \mu$, and $\rho$ is unique $\mu$-a.e. If $\nu$ happens to be positive, then one may choose $\rho \geq 0$.
\end{theorem*}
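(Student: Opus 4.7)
The plan is to reduce the claim, by successive decomposition, to the case of two finite positive measures with the smaller one absolutely continuous with respect to the larger, and to treat that base case by von Neumann's Hilbert-space argument. First I would split $\nu = \mathrm{Re}[\nu] + i\, \mathrm{Im}[\nu]$ into its real and imaginary parts, both of which are finite signed measures; applying the Jordan decomposition to each yields four finite positive measures $\nu_{r}^{\pm}, \nu_{i}^{\pm}$. The absolute continuity $\nu \ll \mu$ descends to each of these variations: using the Hahn decomposition $X = P \cup N$ adapted to $\nu_{r}$, any $\mu$-null set $A$ satisfies $\mu(A \cap P) = \mu(A \cap N) = 0$, whence $\nu_{r}^{+}(A) = |\nu_{r}(A \cap P)| = 0$, and analogously for the three remaining variations. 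It therefore suffices to produce a density for each of the four positive pieces separately and assemble them linearly into $\rho$. The $\sigma$-finiteness of $\mu$ is in turn reduced to finiteness by writing $X = \bigcup_{n \geq 1} X_{n}$ with disjoint $X_{n}$ of finite $\mu$-measure, solving the problem on each $X_{n}$ with density $\rho_{n}$, and gluing into $\rho := \sum_{n} \rho_{n} \chi_{X_{n}}$, whose $\mu$-integrability follows from $\nu(X) < \infty$ via monotone convergence.

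The core case, a pair of finite positive measures $\mu, \nu$ with $\nu \ll \mu$, I would then handle as follows. Setting $\lambda := \mu + \nu$ and considering the linear functional $T : L^{2}(\lambda) \to \mathbb{C}$, $T(f) := \int_{X} f\, d\nu$, the Cauchy--Schwarz inequality yields $|T(f)| \leq \|f\|_{L^{2}(\lambda)} \sqrt{\nu(X)}$, so the Riesz representation theorem supplies a $g \in L^{2}(\lambda)$ with $\int_{X} f\, d\nu = \int_{X} fg\, d\lambda$ for all $f \in L^{2}(\lambda)$. Testing against the indicator functions of $\{g < 0\}$ and $\{g \geq 1\}$, and invoking absolute continuity to kill the $\nu$-contribution on the latter set after noting $\mu(\{g \geq 1\}) = 0$, forces $0 \leq g < 1$ to hold $\lambda$-almost everywhere. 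Rewriting the Riesz identity as $\int_{X} f(1 - g)\, d\nu = \int_{X} fg\, d\mu$ and substituting $f = \chi_{A}\, \chi_{\{g \leq 1 - 1/n\}}/(1 - g)$, which is bounded and hence in $L^{2}(\lambda)$, followed by the monotone-convergence limit $n \to \infty$, yields $\nu(A) = \int_{A} \rho\, d\mu$ with $\rho := g/(1-g)$. By construction $\rho \geq 0$, which also delivers the final clause of the statement.

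Uniqueness is then almost immediate: if $\rho_{1}, \rho_{2}$ are two $\mu$-integrable densities representing the same $\nu$, then $\int_{A}(\rho_{1} - \rho_{2})\, d\mu = 0$ for every $A \in \mathfrak{A}$, and testing $A$ against the level sets $\{\mathrm{Re}(\rho_{1}-\rho_{2}) > 0\}$, $\{\mathrm{Re}(\rho_{1}-\rho_{2}) < 0\}$ and their imaginary analogues forces $\rho_{1} = \rho_{2}$ $\mu$-a.e. The main obstacle is the bootstrap required in the Hilbert-space step: the heuristic substitution $f = \chi_{A}/(1-g)$ is not directly legitimate because $(1-g)^{-1}$ need not lie in $L^{2}(\lambda)$, and only the truncation $\chi_{\{g \leq 1 - 1/n\}}$ combined with monotone convergence puts the argument on rigorous footing. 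A secondary subtlety is confirming that absolute continuity transfers cleanly through the Jordan decomposition, which is why the Hahn decomposition must be invoked explicitly rather than being taken for granted.
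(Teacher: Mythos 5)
The paper does not prove this statement: it appears in the ``Reference Materials'' of Section~3.1 as standard background from measure theory, with the reader referred to textbooks, so there is no in-paper argument to compare against. Judged on its own, your proof is correct and is the classical one: the reduction of a complex measure to four finite positive variations via Jordan/Hahn decomposition, the patching over a $\sigma$-finite partition, and von Neumann's $L^{2}(\mu+\nu)$ argument with the Riesz representation theorem. You also correctly flag and handle the only delicate point, namely that $f=\chi_{A}/(1-g)$ cannot be fed to the Riesz identity directly and must be truncated on $\{g\le 1-1/n\}$ before passing to the monotone limit; the step where $\mu(\{g\ge 1\})=0$ is upgraded to $\lambda(\{g\ge 1\})=0$ via absolute continuity is likewise right. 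Two cosmetic points only: the integrability of the glued density is controlled by the total variation $\|\nu\|=|\nu|(X)<\infty$ (equivalently by the finiteness of each of the four positive pieces) rather than by the complex number $\nu(X)$; and in the core case nothing requires $\nu$ to be ``smaller'' than $\mu$ --- only $\nu\ll\mu$ with both finite is used, which is exactly what your argument in fact assumes.
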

\noindent
In the above situation of the Radon-Nikod{\'y}m theorem, the function $\rho$ satisfying $\nu = \rho \odot \mu$ is called the \emph{Radon-Nikod{\'y}m derivative} (or more casually, the \emph{density}), and is denoted by
\begin{equation}
\rho =: \frac{d\nu}{d\mu}.
\end{equation}
This is nothing but to say that
\begin{equation}\label{def:abs_cont}
\nu(A) = \int_{A} \frac{d\nu}{d\mu}(x)\ d\mu(x), \quad A \in \mathfrak{A},
\end{equation}
holds, if explicitly written out. For a $\nu$-integrable function $f$, a direct application of \eqref{eq:Transformationsformel_Mass_mit_Dichte} leads to
\begin{equation}\label{prop:Mass_mit_Dichte}
\int_{\mathbb{X}} f(x)\ d\nu(x) = \int_{X} f(x) \frac{d\nu}{d\mu}(x)\ d\mu(x),
\end{equation}
in which the notation for the Radon-Nikod{\'y}m derivative (which might at first seems strange) reveals its advantage.

\paragraph{Absolute Continuity with respect to the Lebesgue-Borel Measure}

We are particularly interested in the sub-family $L^{1}(\mathfrak{B}^{n}) \subset \mathbf{M}_{\mathbb{C}}(\mathfrak{B}^{n})$ consisting of complex measures that are absolutely continuous with respect to the Lebesgue-Borel measure $\beta^{n}$ on $(\mathbb{R}^{n},\mathfrak{B}^{n})$. Whenever there is no risk of confusion, members of $L^{1}(\mathfrak{B}^{n})$ shall occasionally be referred to as \emph{absolutely continuous measures}, simply without reference to the base measure $\beta^{n}$. One readily finds that the collection $L^{1}(\mathfrak{B}^{n})$ forms a linear subspace of $\mathbf{M}_{\mathbb{C}}(\mathfrak{B}^{n})$.
Now, uniqueness $\beta^{n}$-a.e. of the Radon-Nikod{\'y}m derivative allows us to define a linear map
\begin{equation}\label{def:abs_cont_ident}
L^{1}(\mathfrak{B}^{n}) \to L^{1}(\mathbb{R}^{n}),\ \mu \mapsto \frac{d\mu}{d\beta^{n}},
\end{equation}
which maps an absolutely continuous complex measure to its density.
Conversely, one may construct a new complex measure given an integrable function $f \in L^{1}(\mathbb{R}^{n})$ by $\nu := f \odot \beta^{n}$. From this, one obtains a bijective linear map between the space of absolutely continuous complex measures $L^{1}(\mathfrak{B}^{n})$ and the space of integrable functions $L^{1}(\mathbb{R}^{n})$, associating an absolutely continuous complex measure $\nu \in L^{1}(\mathfrak{B}^{n})$ to its density $d\nu/d\beta^{n} \in L^{1}(\mathbb{R}^{n})$. In this manner, one may identify a specific subspace of the space of complex measures with that of integrable functions as
\begin{equation}\label{def:abs_measure_identification}
 L^{1}(\mathfrak{B}^{n}) \cong L^{1}(\mathbb{R}^{n}),
\end{equation}
and may translate and interpret various properties of complex measures in terms of density functions.
To discuss how this works, let $d\nu/d\beta^{n} \in L^{1}(\mathbb{R}^{n})$ be the density of $\nu \in L^{1}(\mathfrak{B}^{n})$ with respect to the Lebesgue-Borel measure. One confirms from \eqref{prop:Mass_mit_Dichte} that, for any measurable function $g$, the equality
\begin{equation}\label{eq:Mass_mit_Dichte_AC}
\int_{\mathbb{R}^{n}} g(x)\ d\nu(x) = \int_{\mathbb{R}^{n}} g(x)\frac{d\nu}{d\beta^{n}}(x)\ d\beta^{n}(x)
\end{equation}
holds whenever the integration exists.  In this manner, one may replace the Lebesgue integration of $g$ with respect to the complex measure $\nu$ (the l.~h.~s.) by that with respect to the Lebesgue-Borel measure with the help of the (possibly more familiar notion of) density function $d\nu/d\beta^{n}$ (the r.~h.~s.).

\paragraph{Convolution Algebra}

The space $L^{1}(\mathfrak{B}^{n})$ of absolutely continuous complex measures is readily shown to be a topologically closed subset  (with respect to the topology induced by the total variation norm $\|\cdot \|$ in \eqref{def:totvar}) of the Banach space $\mathbf{M}_{\mathbb{C}}(\mathfrak{B}^{n})$. This implies that the subspace $L^{1}(\mathfrak{B}^{n})$ is itself a Banach space.
One then finds that the linear bijection \eqref{def:abs_cont_ident} between the two Banach spaces actually defines an \emph{isometric (linear) isomorphism}, which is to say that
\begin{equation}\label{eq:isometry}
\|\nu\| = \left\|\frac{d\nu}{d\beta^{n}}\right\|_{1}
\end{equation}
holds for all $\nu \in L^{1}(\mathfrak{B}^{n})$, where the l.~h.~s. is the total variation norm \eqref{def:totvar} of the complex measure $\nu$ and the r.~h.~s. is the $L^{1}$-norm \eqref{def:Lp_norm} of its density function. 

We next see how this bijection plays with convolution. To this end, we first recall that a linear subspace $\mathfrak{I}$ of a commutative algebra $\mathfrak{A}$ is called an \emph{ideal} if it `absorbs' multiplication by elements of $\mathfrak{A}$, {\it i.e.},
\begin{equation}
i \in \mathfrak{I},\ a \in \mathfrak{A} \quad \Rightarrow \quad i \cdot a = a \cdot i \in \mathfrak{I}.
\end{equation}
In fact, it is known that the subspace $L^{1}(\mathfrak{B}^{n})$ forms an ideal of the measure algebra $\mathbf{M}_{\mathbb{C}}(\mathfrak{B}^{n})$, which is to say that
\begin{equation}
\mu \in L^{1}(\mathfrak{B}^{n}),\ \nu \in \mathbf{M}_{\mathbb{C}}(\mathfrak{B}^{n}) \quad \Rightarrow \quad \mu \ast \nu = \nu \ast \mu \in L^{1}(\mathfrak{B}^{n}).
\end{equation}
In passing, the density of the convolution $\mu \ast \nu$ above is given by the convolution of the density of $\mu$ and the complex measure $\nu$ as
\begin{equation}\label{eq:density_ac_cmeas}
\frac{d(\mu \ast \nu)}{d\beta^{n}} = \frac{d\mu}{d\beta^{n}} \ast \nu,
\end{equation}
in which we understand the convolution of an integrable function $f \in L^{1}(\mathbb{R}^{n})$ and a complex measure $\mu \in \mathbf{M}_{\mathbb{C}}(\mathfrak{B}^{n})$ to be
\begin{equation}
(f \ast \mu)(x) := \int_{\mathbb{R}^{n}} f(x-y)\ d\mu(y),
\end{equation}
where the integral is well-defined $\beta^{n}$-a.e. for $x \in \mathbb{R}^{n}$.

In particular, being an ideal trivially implies that the space $L^{1}(\mathfrak{B}^{n})$ of absolutely continuous complex measures is closed under the operation of convolution, {\it i.e.}, it forms a \emph{sub-algebra} of the measure algebra $\mathbf{M}_{\mathbb{C}}(\mathfrak{B}^{n})$. Applying \eqref{prop:Mass_mit_Dichte} to \eqref{eq:density_ac_cmeas}, one concludes that the density of the convolution of two absolutely continuous complex measures $\mu, \nu \in L^{1}(\mathfrak{B}^{n})$ is given by the convolution of their densities as
\begin{equation}\label{eq:density_ac_ac}
\frac{d(\mu \ast \nu)}{d\beta^{n}} = \frac{d\mu}{d\beta^{n}} \ast \frac{d\nu}{d\beta^{n}},
\end{equation}
in which we understand the familiar convolution of two integrable functions $f, g \in L^{1}(\mathbb{R}^{n})$ to be
\begin{equation}\label{def:convol_func_func}
(f \ast g)(x) := \int_{\mathbb{R}^{n}} f(x-y)g(y)\ d\beta^{n}(y),
\end{equation}
where the integral is well-defined $\beta^{n}$-a.e. for $x \in \mathbb{R}^{n}$. Equality \eqref{eq:density_ac_ac} implies that, equipped with the convolution \eqref{def:convol_func_func}, the space $L^{1}(\mathbb{R}^{n})$ of integrable functions becomes a Banach algebra that is isomorphically mapped to the sub-algebra $L^{1}(\mathfrak{B}^{n}) \subset \mathbf{M}_{\mathbb{C}}(\mathfrak{B}^{n})$ by the \emph{isometric algebra isomorphism} \eqref{def:abs_cont_ident}. Incidentally, the sub-algebra $L^{1}(\mathfrak{B}^{n}) \cong L^{1}(\mathbb{R}^{n})$ of the measure algebra is given its own name, and is occasionally called the \emph{convolution algebra}.

At this point, we note that the convolution algebra $L^{1}(\mathfrak{B}^{n})$ is a proper sub-algebra of the measure algebra $\mathbf{M}_{\mathbb{C}}(\mathfrak{B}^{n})$ in general, {\it i.e.}, not every complex measure may be represented by integrable functions.  This can be readily seen by observing that the delta measure $\delta_{a}$  centred at $a \in \mathbb{R}^{n}$ \eqref{def:delta_measure} does not admit a description by density functions.  Intuitively, such a density function, if existed, would be given by the `delta function'  centred at $a$, but it is actually a distribution and not a member of $L^{1}(\mathbb{R}^{n})$ as required. This leads to the basic fact that the convolution algebra $L^{1}(\mathbb{R}^{n})$ is non-unital, {\it i.e.}, it 
lacks a multiplicative identity in the sense that there is no element $e \in L^{1}(\mathbb{R}^{n})$ for which
\begin{equation}
e \ast f = f
\end{equation}
holds for all $f \in L^{1}(\mathbb{R}^{n})$.  This should be contrasted to the measure algebra $\mathbf{M}_{\mathbb{C}}(\mathfrak{B}^{n})$, which always possesses a multiplicative identity.

\subsubsection{Product Measures}

Given two measure spaces $(X, \mathfrak{A}, \mu)$ and $(Y, \mathfrak{B}, \nu)$, we intend to construct a `product measure' on the product space $X \times Y$ so that $\rho(A \times B) = \mu(A) \nu(B)$ holds for all $A \in \mathfrak{A}$, $B \in \mathfrak{B}$. As its domain of definition, we let
\begin{align}\label{def:product_set_of_sigma_algebras}
\mathfrak{A} \ast \mathfrak{B} := \{A \times B : A \in \mathfrak{A}, B \in \mathfrak{B} \}
\end{align}
and define
\begin{equation}
\mathfrak{A} \otimes \mathfrak{B} := \sigma(\mathfrak{A} \ast \mathfrak{B})
\end{equation}
to be the \emph{product-$\sigma$-algebra} of $\mathfrak{A}$ and $\mathfrak{B}$. The following fact and definition is of importance.
\begin{definition*}[Product Measure]
Given two measure spaces $(X, \mathfrak{A}, \mu)$ and $(Y, \mathfrak{B}, \nu)$, let both $\mu$ and $\nu$ be $\sigma$-finite. Then there exists a unique measure $\mu \otimes \nu : \mathfrak{A} \otimes \mathfrak{B} \to \overline{\mathbb{R}}$ such that
\begin{equation}\label{def:product_measure}
\mu \otimes \nu (A \times B) = \mu(A) \nu(B), \quad A \in \mathfrak{A},\ B \in \mathfrak{B}
\end{equation}
holds. The measure $\mu \otimes \nu$ is $\sigma$-finite and is called the product measure of $\mu$ and $\nu$.
\end{definition*}
\noindent
The integration with respect to the product measure $\mu \otimes \nu$ of two $\sigma$-finite measures $\mu$ and $\nu$ can be performed by iterated integration of each of the respective variables. This is the essence of the following \emph{Fubini's Theorem}, which belongs to one of the most oft-used theorems of integration theory.
\begin{theorem*}[Fubini's Theorem]
Let $\mu$ and $\nu$ be $\sigma$-finite. Then, the following statements hold:
\begin{enumerate}
\item
If $f: X \otimes Y \to \hat{\mathbb{K}}$ is $\mu \otimes \nu$-integrable, then $f(x, \cdot)$ is $\nu$-integrable for almost all $x \in X$. Moreover
\begin{equation}
A := \{x \in X : f(x,\cdot) \text{ is not $\nu$-integrable }\} \in \mathfrak{A};
\end{equation}
and likewise
\begin{equation}
B := \{y \in Y : f(\cdot,y) \text{ is not $\mu$-integrable }\} \in \mathfrak{B}.
\end{equation}
The functions
\begin{equation}
x \mapsto \int_{Y} f(x,y)\ d\nu(y) \quad  x \mapsto \int_{X} f(x,y)\ d\mu(x)
\end{equation}
are respectively $\mu$-integrable on $A^{c}$ and $\nu$-integrable on $B^{c}$, and the equalities
\begin{align}
\int_{X\times Y} f\ d\mu\otimes\nu
    &= \int_{X} \left( \int_{Y} f(x,y)\ d\nu(y) \right)\ d\mu(x) \nonumber \\
    &= \int_{Y} \left( \int_{X} f(x,y)\ d\mu(x) \right)\ d\nu(y)
\end{align}
hold.
\item
If $f: X \otimes Y \to \hat{\mathbb{K}}$ is $\mu \otimes \nu$-integrable, and one of the integrals
\begin{equation}
\int_{X \times Y} |f|\ d\mu\otimes\nu,\ \int_{X} \left( \int_{Y} |f(x,y)|\ d\nu(y) \right)\ d\mu(x),\ \int_{Y} \left( \int_{X} |f(x,y)|\ d\mu(x) \right)\ d\nu(y)
\end{equation}
is finite, then all three of them are finite and agree, $f$ is $\mu \otimes \nu$-integrable, and the statements under \textup{(i)} hold.
\end{enumerate}
\end{theorem*}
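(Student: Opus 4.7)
The plan is to prove the theorem in the standard four-stage manner, progressively enlarging the class of functions for which the iterated-integration identity holds: characteristic functions of measurable rectangles, characteristic functions of arbitrary measurable sets in $\mathfrak{A} \otimes \mathfrak{B}$, non-negative step functions, non-negative measurable functions, and finally general $\hat{\mathbb{K}}$-valued functions. Throughout, the $\sigma$-finiteness hypothesis on both $\mu$ and $\nu$ is indispensable, as it enables us to write $X = \bigcup_n X_n$ and $Y = \bigcup_n Y_n$ with each piece of finite measure and then to reduce general statements to the finite-measure case via monotone convergence.

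The cornerstone to be established first is the following sectional claim: for every $E \in \mathfrak{A} \otimes \mathfrak{B}$, the $x$-section $E_x := \{y \in Y : (x,y) \in E\}$ belongs to $\mathfrak{B}$ for every $x \in X$; the map $x \mapsto \nu(E_x)$ is $\mathfrak{A}$-measurable; and
\begin{equation}
(\mu \otimes \nu)(E) = \int_{X} \nu(E_x)\ d\mu(x),
\end{equation}
with the symmetric identity in $y$ holding as well. This I would prove by the monotone class / Dynkin $\pi$-$\lambda$ argument: the collection $\mathfrak{M}$ of $E \in \mathfrak{A} \otimes \mathfrak{B}$ satisfying these three properties obviously contains the $\pi$-system $\mathfrak{A} \ast \mathfrak{B}$ of measurable rectangles, so it suffices to verify that $\mathfrak{M}$ is closed under complementation (where finite measure of slices is needed) and countable disjoint unions (by the dominated/monotone convergence theorem). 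This is precisely where the $\sigma$-finite reduction enters, so one first proves the claim when $\mu(X), \nu(Y) < \infty$ and then patches by the decomposition $X \times Y = \bigcup_{m,n}(X_m \times Y_n)$.

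Once the sectional claim is in hand, linearity of integration immediately extends the identity to non-negative step functions in $\mathcal{T}^{+}(\mathfrak{A} \otimes \mathfrak{B})$; approximating an arbitrary $f \in \mathcal{M}^{+}(\mathfrak{A} \otimes \mathfrak{B})$ from below by an increasing sequence $s_n \uparrow f$ of such step functions and applying the monotone convergence theorem (once with respect to $\nu$ inside, then with respect to $\mu$ outside, and symmetrically) yields Tonelli's identity
\begin{equation}
\int_{X \times Y} f\ d\mu \otimes \nu = \int_{X} \left( \int_{Y} f(x,y)\ d\nu(y) \right) d\mu(x) = \int_{Y} \left( \int_{X} f(x,y)\ d\mu(x) \right) d\nu(y)
\end{equation}
for all $f \in \mathcal{M}^{+}(\mathfrak{A} \otimes \mathfrak{B})$. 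This takes care of part (ii): applying Tonelli to $|f|$ shows that the three absolute integrals are either all finite or all infinite, and finiteness of any one of them certifies that $f \in \mathcal{L}^{1}(\mu \otimes \nu)$.

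For part (i), I would apply Tonelli to $|f|$ to conclude $\int_{Y} |f(x,y)|\ d\nu(y) < \infty$ for $\mu$-a.e. $x$, so the $\nu$-integrability of $f(x,\cdot)$ fails only on a $\mu$-null set $A$; the measurability of $A$ follows from the measurability of $x \mapsto \int_{Y} |f(x,y)|\ d\nu(y)$ established in Tonelli's step. Then I decompose $f$ into its real and imaginary parts and each of those into positive and negative parts, $f = (\operatorname{Re} f)^{+} - (\operatorname{Re} f)^{-} + i(\operatorname{Im} f)^{+} - i(\operatorname{Im} f)^{-}$, apply Tonelli to each of the four non-negative summands, and combine the results by linearity of the Lebesgue integral. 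The main technical obstacle, and the only non-routine part of the argument, is the first stage: establishing the sectional measurability and the formula $(\mu \otimes \nu)(E) = \int \nu(E_x)\ d\mu(x)$ for a general $E \in \mathfrak{A} \otimes \mathfrak{B}$ by the monotone class argument, as this requires one to carefully handle the passage from finite to $\sigma$-finite measures. Everything downstream is a standard succession of monotone and dominated convergence applications.
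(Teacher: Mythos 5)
Your proposal is the standard and correct proof of the Fubini--Tonelli theorem: first the sectional lemma (Cavalieri's principle) via a Dynkin $\pi$-$\lambda$ argument with the finite-measure case handled first and the $\sigma$-finite case obtained by exhaustion, then Tonelli for $\mathcal{M}^{+}$ by monotone convergence, then the decomposition $f = (\operatorname{Re} f)^{+} - (\operatorname{Re} f)^{-} + i(\operatorname{Im} f)^{+} - i(\operatorname{Im} f)^{-}$ for the general case. There is nothing to compare against in the paper itself: the statement appears only in a ``Reference Materials'' subsection, where it is quoted without proof as a standard result of integration theory, with the reader referred to textbooks such as those cited in the introduction. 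Two minor remarks on your write-up. First, the sub-collection of $E \in \mathfrak{A} \otimes \mathfrak{B}$ with $E_{x} \in \mathfrak{B}$ for all $x$ is already a $\sigma$-algebra containing the rectangles, so the $\pi$-$\lambda$ machinery is genuinely needed only for the measurability of $x \mapsto \nu(E_{x})$ and the integral identity; this does not affect correctness. Second, the hypothesis ``$f$ is $\mu \otimes \nu$-integrable'' in part (ii) as printed in the paper makes its conclusion partly vacuous and is surely intended to read ``measurable''; your treatment (Tonelli applied to $|f|$, with finiteness of any one of the three absolute integrals certifying integrability) proves the intended statement.
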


\subsubsection{Measure on Topological Spaces}

Let $X$ be a metric space (or a topological space). One may naturally be interested in how the topology relates to the complex measures defined on the Borel $\sigma$-algebra $\mathfrak{B} := \mathfrak{B}(X)$ generated by it. To this end, we briefly review one of the prominent results in the study of this realm, namely the famous \emph{Riesz-Markov-Kakutani Representation Theorem}.
In order to avoid complexity, we shall only deal with the case where the given measurable space is $(\mathbb{R}^{n}, \mathfrak{B}^{n})$. Observing now that a complex measure $\nu \in \mathbf{M}_{\mathbb{C}}(\mathfrak{B}^{n})$ generates an (algebraic) linear map $f \mapsto \int_{\mathbb{R}^{n}} f d\nu$ that maps a function to a complex number, the opposite question is then our interest, namely: what class of linear functionals admits representation by integration with respect to some complex measure?

\paragraph{Riesz-Markov-Kakutani Representation Theorem}

Let $C_{0}(\mathbb{R}^{n})$ be the space of all continuous functions $f: \mathbb{R}^{n} \to \mathbb{C}$ that vanish at infinity, in the sense for every $\epsilon > 0$ there exists a compact subset $K \subset \mathbb{R}^{n}$ for which $|f|K^{c}| < \epsilon$ holds. The space $C_{0}(\mathbb{R}^{n})$ equipped with the supremum norm $\|f\|_{\infty} := \sup \{|f(x)| : x \in \mathbb{R}^{n} \}$ is known to be a Banach space.
Now for each $\nu \in \mathbf{M}_{\mathbb{C}}(\mathfrak{B}^{n})$, the map
\begin{equation}
I_{\nu} : f \mapsto \int_{\mathbb{R}^{n}} f\ d\nu, \quad f \in C_{0}(\mathbb{R}^{n})
\end{equation}
gives rise to a continuous ({\it i.e.}, bounded) $\mathbb{C}$-linear functional from $C_{0}(\mathbb{R}^{n})$ to $\mathbb{C}$, for indeed the evaluation
\begin{equation}
|I_{\nu}(f)| \leq \|\nu\| \cdot \|f\|_{\infty}
\end{equation}
holds.
The Riesz-Markov-Kakutani representation theorem is a classical theorem in measure and integration theory stating that the converse is also true, which is to say that, for any continuous $\mathbb{C}$-linear functional $I \in C_{0}^{\prime}(\mathbb{R}^{n})$, there exists a unique complex measure $\nu \in \mathbf{M}_{\mathbb{C}}(\mathfrak{B}^{n})$ for which
\begin{equation}
I(f) = \int_{\mathbb{R}^{n}} f\ d\nu, \quad f \in C_{0}(\mathbb{R}^{n})
\end{equation}
holds. The precise statement is given as follows.
\begin{theorem*}[Riesz-Markov-Kakutani Representation Theorem for Euclidian Spaces]
The correspondence
\begin{align}\label{eq:RMK-thm_cor}
\begin{split}
&\Phi : \mathbf{M}_{\mathbb{C}}(\mathfrak{B}^{n}) \to C_{0}^{\prime}(\mathbb{R}^{n}), \\
&\Phi(\nu)(f) := \int_{\mathbb{R}^{n}} f\ d\nu, \quad \nu \in \mathbf{M}_{\mathbb{C}}(\mathfrak{B}^{n}),\, f \in C_{0}(\mathbb{R}^{n}),
\end{split}
\end{align}
that maps a complex measure to a continuous linear functional on $C_{0}$ is a bijection, which moreover satisfies
\begin{equation}\label{eq:Riesz_rep_isometry}
\|\Phi(\nu)\| = \|\nu\|.
\end{equation}
In other words, the space of complex measures $\mathbf{M}_{\mathbb{C}}(\mathfrak{B}^{n})$ is isomorphic to the topological dual of $C_{0}(\mathbb{R}^{n})$, and can be mapped to each other by an isometric isomorphism.
\end{theorem*}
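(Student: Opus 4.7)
The plan is to decompose the proof into four tasks: showing that $\Phi$ is well-defined and bounded (which also yields one direction of the isometry), showing it is linear and injective, establishing surjectivity, and finally proving the reverse inequality $\|\nu\| \leq \|\Phi(\nu)\|$ required for the full isometry \eqref{eq:Riesz_rep_isometry}.

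First I would verify that for any $\nu \in \mathbf{M}_{\mathbb{C}}(\mathfrak{B}^n)$, the assignment $\Phi(\nu)$ is indeed a bounded linear functional on $C_0(\mathbb{R}^n)$: linearity follows immediately from the linearity of the Lebesgue integral against complex measures, and the bound $|\Phi(\nu)(f)| \leq \|f\|_{\infty}\,\|\nu\|$ is obtained by splitting $\nu$ into its Jordan-decomposed real and imaginary parts and bounding each finite positive variation by $\|\nu\|$. This simultaneously gives $\|\Phi(\nu)\| \leq \|\nu\|$. Linearity of the map $\Phi$ itself is trivial. For injectivity, if $\Phi(\nu) = 0$ then by Urysohn's lemma on the locally compact Hausdorff space $\mathbb{R}^n$ I can approximate the characteristic function $\chi_K$ of any compact $K$ by a monotone decreasing sequence $f_k \in C_0(\mathbb{R}^n)$ with $f_k \to \chi_K$ pointwise and $0 \leq f_k \leq 1$; dominated convergence then yields $\nu(K) = 0$. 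Inner and outer regularity of complex Borel measures on $\mathbb{R}^n$ (valid because $\mathbb{R}^n$ is $\sigma$-compact) extends this to $\nu = 0$ on all of $\mathfrak{B}^n$.

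The principal labor lies in surjectivity. Given $I \in C_0'(\mathbb{R}^n)$, I would first reduce to the real case by writing $I = I_1 + iI_2$ with $I_j$ bounded real-linear on $C_0(\mathbb{R}^n,\mathbb{R})$, and then decompose each $I_j = I_j^+ - I_j^-$ into a difference of positive linear functionals via the standard lattice construction $I_j^+(f) := \sup\{I_j(g) : 0 \leq g \leq f\}$ for $f \geq 0$, extended linearly. Each $I_j^\pm$ is then a positive bounded linear functional on $C_0(\mathbb{R}^n,\mathbb{R})$, to which the classical Riesz representation theorem for positive linear functionals on continuous functions over a locally compact Hausdorff space applies, yielding unique finite Borel measures $\mu_j^{\pm}$ with $I_j^{\pm}(f) = \int f\, d\mu_j^{\pm}$. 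Assembling these, the complex measure $\nu := (\mu_1^+ - \mu_1^-) + i(\mu_2^+ - \mu_2^-) \in \mathbf{M}_{\mathbb{C}}(\mathfrak{B}^n)$ satisfies $\Phi(\nu) = I$.

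For the reverse isometry $\|\nu\| \leq \|\Phi(\nu)\|$, I would invoke the polar decomposition of a complex measure: there exists a Borel measurable function $h : \mathbb{R}^n \to \mathbb{C}$ with $|h|=1$ everywhere such that $\nu = h \odot |\nu|$, so that $\|\nu\| = |\nu|(\mathbb{R}^n) = \int_{\mathbb{R}^n} \overline{h}\, d\nu$ by \eqref{eq:Transformationsformel_Mass_mit_Dichte}. Since $\overline{h}$ is merely Borel measurable rather than continuous, one approximates: by Lusin's theorem applied to the finite positive measure $|\nu|$, for every $\varepsilon > 0$ there is $f \in C_0(\mathbb{R}^n)$ with $\|f\|_{\infty} \leq 1$ such that $|\nu|(\{f \neq \overline{h}\}) < \varepsilon$, whence $|\Phi(\nu)(f) - \|\nu\|| \leq 2\varepsilon$. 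Letting $\varepsilon \to 0$ yields $\|\Phi(\nu)\| \geq \|\nu\|$, completing the isometry. The main obstacles I anticipate are the correctness and norm-preservation in the decomposition $I_j = I_j^+ - I_j^-$ (which requires a careful verification that $I_j^+$ is additive on positive functions and extends uniquely to a bounded functional), and the Lusin-style approximation in the final step, where one must keep the approximant bounded by $1$ while controlling its support so that it actually lies in $C_0(\mathbb{R}^n)$.
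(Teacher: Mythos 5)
The paper does not actually prove this statement: it appears in the Reference Materials of Section~3 as a classical theorem quoted from the textbook literature (the paper only checks the easy estimate $|I_{\nu}(f)| \leq \|\nu\| \cdot \|f\|_{\infty}$ and then cites the converse as known). So there is no in-paper argument to compare against. Your proposal is essentially the standard textbook proof ({\it cf.}~\cite{Rudin_1986}): reduction of a bounded complex functional to four positive functionals via the lattice construction $I_{j}^{+}(f) = \sup\{I_{j}(g) : 0 \leq g \leq f\}$, the Riesz theorem for positive functionals on a locally compact Hausdorff space for surjectivity, Urysohn plus dominated convergence plus regularity for injectivity, and polar decomposition plus Lusin's theorem for the isometry. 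The overall architecture is sound.

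One step does not deliver what you claim from it. In the first paragraph you assert that bounding the four Jordan components of $\nu$ yields $\|\Phi(\nu)\| \leq \|\nu\|$. Writing $\nu = \rho^{+} - \rho^{-} + i\sigma^{+} - i\sigma^{-}$ only gives
\begin{equation*}
|\Phi(\nu)(f)| \leq \|f\|_{\infty}\left(\rho^{+}(\mathbb{R}^{n}) + \rho^{-}(\mathbb{R}^{n}) + \sigma^{+}(\mathbb{R}^{n}) + \sigma^{-}(\mathbb{R}^{n})\right) = \|f\|_{\infty}\left(\|\mathrm{Re}\,\nu\| + \|\mathrm{Im}\,\nu\|\right) \leq 2\,\|f\|_{\infty}\,\|\nu\|,
\end{equation*}
i.e.\ the constant $2$ rather than $1$ (take $\nu = (1+i)\delta_{0}$ to see the sum of the four parts genuinely exceeds $\|\nu\| = \sqrt{2}$). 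The sharp bound needed for \eqref{eq:Riesz_rep_isometry} is $|\int f\,d\nu| \leq \int |f|\,d|\nu| \leq \|f\|_{\infty}\,\|\nu\|$, which requires the variation measure $|\nu|$ rather than the sum of the Jordan parts; it follows, for instance, from the same polar decomposition $\nu = h \odot |\nu|$ that you already invoke for the reverse inequality, so the repair costs nothing. The remaining steps --- the Lusin approximation with $\|f\|_{\infty} \leq 1$ giving $|\Phi(\nu)(f) - \|\nu\|\,| \leq 2\varepsilon$, and the extension from $\nu(K) = 0$ on compacts to $\nu = 0$ --- are correct as sketched.
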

\noindent
Here, the norm on $\mathbf{M}_{\mathbb{C}}(\mathfrak{B}^{n})$ on the r.~h.~s. of \eqref{eq:Riesz_rep_isometry} is naturally the total variation norm, and the norm on the topological dual $C_{0}^{\prime}(\mathbb{R}^{n})$ (the l.~h.~s.) is the operator norm defined by
\begin{equation}
\|I\| := \sup_{\|f\|_{\infty}\leq 1} |I(f)|, \quad I \in C_{0}^{\prime}(\mathbb{R}^{n}).
\end{equation}
In this sense we identify
\begin{equation}\label{eq:RMK-thm}
\mathbf{M}_{\mathbb{C}}(\mathfrak{B}^{n}) \cong C_{0}^{\prime}(\mathbb{R}^{n}),
\end{equation}
and may interchangeably interpret a continuous $\mathbb{C}$-linear functional on the space $C_{0}(\mathbb{R}^{n})$ as a complex measure on the measurable space $(\mathbb{R}^{n}, \mathfrak{B}^{n})$, and vice versa.

\subsubsection{Spectral Theorem and its Consequences}\label{sec:spectral_theorem}

We next provide a concise review on some of the basic facts regarding the spectral theorem for self-adjoint operators, which is just the generalisation of the familiar eigendecomposition theorem for Hermitian matrices on finite-dimensional vector spaces to the arbitrary dimensional case. In order to avoid confusion with operators, Borel sets on $\mathbb{R}^{n}$ shall occasionally be denoted by $\Delta \in \mathfrak{B}^{n}$ in place of $B$, especially when we are working in the context of quantum mechanics.

\paragraph{Spectral Measures}
Closely associated to the notion of complex measures is that of spectral measures on a Hilbert space $\mathcal{H}$. Let $L(\mathcal{H})$ denote the space of all bounded operators on $\mathcal{H}$, and recall that a map
\begin{equation}
E : \mathfrak{B}^{n} \to L(\mathcal{H}), \quad \Delta \mapsto E(\Delta)
\end{equation}
is called an $n$-dimensional \emph{spectral measure} (or \emph{projection-valued measure}), if each $E(\Delta)$, $\Delta \in \mathfrak{B}^{n}$ is an orthogonal projection on $\mathcal{H}$ and satisfies 
\begin{enumerate}
\item $E(\emptyset) = 0$, $E(\mathbb{R}^{n}) = I$,
\item for pairwise disjoint $\Delta_{1}, \Delta_{2}, \dots \in \mathfrak{B}^{n}$,
\begin{equation}
\sum_{i=1}^{\infty}E(\Delta_{i}) |\phi\rangle = E\left(\bigcup_{i=1}^{\infty}\Delta_{i}\right) |\phi\rangle, \qquad \forall |\phi\rangle \in \mathcal{H}.
\end{equation}
\end{enumerate}
The support of a spectral measure $E$ on $\mathfrak{B}^{n}$ is defined as the smallest Borel set $\Delta \in \mathfrak{B}^{n}$ that satisfies $E(\Delta) = I$.
An important point is that 
a spectral measure $E$ and a pair of vectors $|\phi\rangle, |\phi\rangle \in \mathcal{H}$ induce a complex measure on $\mathfrak{B}^{n}$ given by
\begin{equation}\label{def:complex_measure_A}
\Delta \mapsto \langle \phi^{\prime}, E(\Delta) \phi \rangle, \quad \Delta \in \mathfrak{B}^{n}.
\end{equation}

\paragraph{Spectral Theorem of Self-adjoint Operators}

Having recalled the necessary definitions, we now state the \emph{spectral theorem for self-adjoint operators}, which constitutes one of the most important mathematical ingredients in quantum mechanics.
\begin{theorem*}[Spectral decomposition of self-adjoint operators]
Let $A : \mathcal{H} \supset \mathrm{dom}(A) \to \mathcal{H}$ be self-adjoint. Then there exists a unique one-dimensional spectral measure $E_{A}$ supported on the spectrum $\sigma(A) \subset \mathbb{R}$ of $A$ satisfying
\begin{equation}\label{thm:sd}
\langle \phi^{\prime}, A \phi \rangle = \int_{\sigma(A)} a\ d\langle \phi^{\prime}, E_{A}(a) \phi \rangle, \quad \forall |\phi\rangle \in \mathrm{dom}(A),\ \forall |\phi^{\prime}\rangle \in \mathcal{H},
\end{equation}
where the r.~h.~s. of the equality is understood as the Lebesgue integral with respect to the complex measure $\Delta \mapsto \langle \phi^{\prime}, E_{A}(\Delta) \phi \rangle$
induced from $E_{A}$ and the pair of vectors $|\phi\rangle$ and $|\phi^{\prime}\rangle$.
\end{theorem*}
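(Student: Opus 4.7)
My plan is to prove the theorem first for bounded self-adjoint $A$ via the Riesz-Markov-Kakutani representation theorem, and then to reduce the unbounded case to the bounded one through the Cayley transform. For bounded $A$, I first construct the \emph{continuous functional calculus}: starting from the polynomial calculus $p(A)=\sum c_{k}A^{k}$, the identity $\|p(A)\|=\sup_{a\in\sigma(A)}|p(a)|$ (a consequence of self-adjointness and the spectral radius formula) together with the Stone-Weierstrass theorem yields a unique isometric $\ast$-homomorphism $\Phi:C(\sigma(A))\to L(\mathcal{H})$ with $\Phi(\mathrm{id}_{\sigma(A)})=A$.

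Next, for each pair $|\phi\rangle,|\phi'\rangle\in\mathcal{H}$ the map $I_{\phi',\phi}:f\mapsto\langle\phi',\Phi(f)\phi\rangle$ is a bounded $\mathbb{C}$-linear functional on $C(\sigma(A))$ with norm at most $\|\phi'\|\|\phi\|$. By the Riesz-Markov-Kakutani theorem in its compact Hausdorff version (a standard corollary of the Euclidean one recalled above) there exists a unique complex measure $\mu_{\phi',\phi}\in\mathbf{M}_{\mathbb{C}}(\mathfrak{B})$ supported in $\sigma(A)$ with $\langle\phi',f(A)\phi\rangle=\int_{\sigma(A)}f\,d\mu_{\phi',\phi}$ for every $f\in C(\sigma(A))$. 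Sesquilinearity of $(\phi',\phi)\mapsto\mu_{\phi',\phi}(\Delta)$ together with the uniform bound $|\mu_{\phi',\phi}(\Delta)|\leq\|\phi'\|\|\phi\|$ then lets me define $E_{A}(\Delta)\in L(\mathcal{H})$ as the unique bounded operator with $\langle\phi',E_{A}(\Delta)\phi\rangle=\mu_{\phi',\phi}(\Delta)$. Applying the integral representation to $\Phi(\mathrm{id}_{\sigma(A)})=A$ yields precisely \eqref{thm:sd}.

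The principal technical work lies in verifying that $E_{A}$ is a genuine projection-valued measure. Countable additivity in the strong operator topology is immediate from the dominated convergence theorem applied to $\mu_{\phi',\phi}$. The multiplicativity $E_{A}(\Delta_{1})E_{A}(\Delta_{2})=E_{A}(\Delta_{1}\cap\Delta_{2})$ (which in particular delivers the projection property) is the delicate point, and is what I expect to be the main obstacle: one must first extend the functional calculus to bounded Borel functions by setting $\langle\phi',g(A)\phi\rangle:=\int g\,d\mu_{\phi',\phi}$, verify via a dominated convergence argument that this extension remains a $\ast$-homomorphism on the bounded Borel algebra, and then specialise to $g=\chi_{\Delta}$ to identify $E_{A}(\Delta)=\chi_{\Delta}(A)$. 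Uniqueness follows because any spectral measure satisfying \eqref{thm:sd} induces a complex measure that reproduces $\Phi$ on polynomials, hence on all of $C(\sigma(A))$ by Stone-Weierstrass, forcing agreement with $\mu_{\phi',\phi}$ by the uniqueness part of the Riesz-Markov-Kakutani theorem.

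Finally, for unbounded self-adjoint $A$ I would pass to the Cayley transform $U:=(A-iI)(A+iI)^{-1}$, which is unitary with $1$ absent from its point spectrum. Applying the bounded-case spectral theorem (extended to normal operators by treating $\,\mathrm{Re}\,U$ and $\,\mathrm{Im}\,U$ jointly) gives a spectral measure $E_{U}$ supported on the unit circle $\mathbb{T}$; pushing it forward through the Cayley homeomorphism $\mathbb{T}\setminus\{1\}\to\mathbb{R}$, $z\mapsto i(1+z)/(1-z)$, produces $E_{A}$. The operator identity $A=i(I+U)(I-U)^{-1}$ on $\mathrm{dom}(A)=\mathrm{ran}(I-U)$ together with the change-of-variables formula \eqref{eq:Transformationsformel_Bildmass} then delivers \eqref{thm:sd}, while uniqueness is inherited from the bounded case.
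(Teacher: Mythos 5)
The paper does not prove this statement: it appears in the Reference Materials of Section~3 as a standard background fact, with the reader referred to the functional-analysis literature (Reed--Simon, Rudin, Werner). Your proposal is the classical von Neumann--Riesz proof of that fact, and it is correct in outline: continuous functional calculus via the norm identity $\|p(A)\|=\sup_{\sigma(A)}|p|$ and Stone--Weierstrass, Riesz--Markov--Kakutani plus sesquilinearity to manufacture $E_{A}(\Delta)$, extension to the bounded Borel calculus to obtain multiplicativity and hence the projection property, and the Cayley transform to reduce the unbounded case to the unitary (normal) one. You have also correctly identified the genuinely delicate step, namely that multiplicativity of $\Delta\mapsto E_{A}(\Delta)$ is not automatic and must come from showing the Borel extension is still a $\ast$-homomorphism.

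Two points are glossed but are standard and easily repaired, so I do not count them as gaps. First, dominated convergence applied to $\mu_{\phi',\phi}$ only gives countable additivity in the \emph{weak} operator topology; to get the vector-wise (strong) additivity demanded by the paper's definition of a spectral measure you must invoke the projection property: for mutually orthogonal projections, $\bigl\|\sum_{i\leq n}E_{A}(\Delta_{i})\phi - E_{A}(\cup_{i}\Delta_{i})\phi\bigr\|^{2}$ can be expanded into inner products that converge by the weak statement. Second, in the uniqueness argument you need the (true, but worth stating) fact that \emph{any} spectral measure $E$ satisfying \eqref{thm:sd} automatically satisfies $\int a^{n}\,d\langle\phi',E(a)\phi\rangle=\langle\phi',A^{n}\phi\rangle$ for all $n$, because $g\mapsto\int g\,dE$ is a $\ast$-homomorphism; only then does agreement on polynomials, and hence on $C(\sigma(A))$, follow. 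With those two remarks inserted, the argument is complete and matches the treatment the paper implicitly defers to.
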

\noindent
Under the situation above, the self-adjoint operator $A$ is occasionally written symbolically as
\begin{equation}\label{thm:sd_op}
A = \int_{\sigma(A)} a\ dE_{A}(a),
\end{equation}
in terms of integration with respect to its spectral measure.

\paragraph{Finite-dimensional Case}
To see the meaning of the above formula, we make a brief note on how the familiar eigendecomposition theorem for Hermitian matrices appears as a special case of the general statement. Let $A$ be a Hermitian matrix on an $N$-dimensional complex Hilbert space $\mathcal{H} := \mathbb{C}^{N}$, $N \in \mathbb{N}^{\times}$. The eigendecomposition theorem states that, there exists an orthonormal basis $ \mathcal{B}_{A} := \{|a_{1}\rangle, \dots, |a_{N}\rangle\}$ of $\mathcal{H}$ with real numbers $a_{1}, \dots, a_{N} \in \mathbb{R}$ such that
\begin{equation}
A |a_{i}\rangle = a_{i} |a_{i}\rangle, \quad i = 1, \dots, N,
\end{equation}
hold.  
For each eigenvalue $a \in \sigma(A) = \{a_{1}, \dots, a_{N}\}$ of $A$,  we have the projection $\Pi_{a}$
onto the subspace, 
\begin{equation}
\mathcal{H}_{a} := \mathrm{span} \{ |a\rangle \in \mathcal{B}_{A} : A |a\rangle = a |a\rangle \}
\end{equation}
spanned by the collection of all eigenvectors associated with $a$. As we noted before, when the eigenstate $| a \rangle$ is non-degenerate for $a$, or the subspace $\mathcal{H}_{a}$ is one-dimensional, we may write $\Pi_a = | a \rangle \langle a |$.
With the projection $ \Pi_a $ in hand, the spectral measure of $A$ is defined by
\begin{equation}\label{def:spectrum_finite}
E_{A}(\Delta) := \sum_{a \in \sigma(A) \cap \Delta} \Pi_a, \quad \Delta \in \mathfrak{B},
\end{equation}
with the convention $\sum_{a \in \emptyset}\Pi_a := 0$. 
One readily verifies that $E_{A}$ is indeed a spectral measure supported on its spectrum $\sigma(A)$, and subsequently sees that the projection $\Pi_{a} = E_{A}(\{a\})$ is nothing but the image of the spectral measure $E_{A}$ on the Borel set $\{a\} \in \mathfrak{B}$ consisting of a single eigenvalue $a \in \sigma(A)$ of the observable $A$. One then finds
\begin{align}\label{fin_sd_op}
A = \sum_{a \in \sigma(A)} a\, \Pi_a  = \sum_{a \in \sigma(A)} a\, E_{A}(\{a\})
\end{align}
in accordance with \eqref{eq:spect_decomp_fin}, and subsequently proves
\begin{equation}\label{fin_sd}
\langle \phi^{\prime}, A \phi \rangle = \sum_{a \in \sigma(A)} a\, \langle \phi^{\prime}, E_{A}(\{a\}) \phi \rangle, \quad \forall |\phi^{\prime}\rangle, |\phi\rangle \in \mathcal{H}.
\end{equation}
The spectral decomposition formula \eqref{thm:sd} and the formal expression \eqref{thm:sd_op} are respectively just the generalisations of the finite dimensional versions \eqref{fin_sd} and \eqref{fin_sd_op}.

\paragraph{Functional Calculus}
By means of the spectral decomposition of a self-adjoint operator, one may create a new set of operators from it.  Let $A : \mathcal{H} \supset \mathrm{dom}(A) \to \mathcal{H}$ be a self-adjoint operator on a Hilbert space $\mathcal{H}$, and let $E_{A}$ the unique one-dimensional spectral measure associated with it.  Given a measurable complex function $f : \mathbb{R} \to \mathbb{C}$, the integral
\begin{equation}
\langle\phi^{\prime}, f(A) \phi\rangle = \int_{\sigma(A)} f(a)\ d\langle \phi^{\prime}, E_{A}(a)\phi\rangle
\end{equation}
defines a unique linear operator $f(A)$ on $\mathcal{H}$, where 
\begin{equation}
|\phi\rangle \in \mathrm{dom}(f(A)) := \left\{ |\phi\rangle \in \mathcal{H} : \int_{\sigma(A)} \left| f(a) \right|^{2}\ d\langle \phi, E_{A}(a)\phi\rangle < \infty \right\}
\end{equation}
is any vector belonging to its domain, and $|\phi^{\prime}\rangle \in \mathcal{H}$.  The operator $f(A)$ is occasionally written symbolically as
\begin{equation}\label{def:func_calc}
f(A) = \int_{\sigma(A)} f(a)\ dE_{A}(a),
\end{equation}
in terms of integration with respect to its spectral measure.

\paragraph{Born Rule and Quantum Measurement}

The axiom of quantum mechanics states that a quantum observable is represented by a self-adjoint operator $A : \mathcal{H} \supset \mathrm{dom}(A) \to \mathcal{H}$ on a Hilbert space $\mathcal{H}$, and that the probabilistic behaviour of the outcomes of an ideal measurement of $A$ on the state $|\phi\rangle \in \mathcal{H}$ is described by the probability measure,
\begin{equation}\label{def:prob_measrue_A}
\Delta \mapsto \mu_{A}^{\phi}(\Delta) := \frac{\langle\phi, E_{A}(\Delta) \phi\rangle}{\|\phi\|^{2}}, \quad \Delta \in \mathfrak{B}.
\end{equation}
Here, the spectral measure $E_{A}$ is induced from $A$ by the spectral theorem, 
and the Born rule proclaims that the measurement outcome be given by one of the elements in the spectrum $\sigma(A)$ and that 
$\mu_{A}^{\phi}(\Delta)$ provides the probability of finding the measurement in the measurable set $\Delta \in \mathfrak{B}$. 
Given $|\phi\rangle \in \mathrm{dom}(A)$, one then realises from the spectral theorem \eqref{thm:sd} that the statistical average of the measurement outcomes of $A$ gives the expectation value,
\begin{align}\label{def:expectation_A}
\int_{\mathbb{R}} a\ d\mu_{A}^{\phi}(a)
= \frac{\langle \phi, A \phi \rangle}{\|\phi\|^{2}} =: \mathbb{E}[A;\phi],
\end{align}
where the l.~h.~s of the first equality is understood to be the Lebesgue integral with respect to the probability measure \eqref{def:prob_measrue_A}.

\subsubsection{Observables admitting a Description by Density Functions}

While the analysis based on probability measures provides an adequately general framework to work with, we find it useful to prepare a terminology for a special class of observables for which probability density functions, not just probability measures, are available to fully describe the behaviour of the measurement outcomes.

\paragraph{Observable admitting a description by probability density functions}
In this paper, we simply say that 
an observable $A$ \emph{admits a description by probability density functions}, if the probability measure \eqref{def:prob_measrue_A} induced by the spectral measure of $A$ is absolutely continuous with respect to the Lebesgue-Borel measure for every choice of the quantum state $|\phi\rangle \in \mathcal{H}$, which is to say that, if for every $|\phi\rangle \in \mathcal{H}$, there exists an integrable function $\rho_{A}^{\phi} \in L^{1}(\mathbb{R})$ such that
\begin{equation}\label{def:absolutely_continuous}
\mu_{A}^{\phi}(\Delta)
= \int_{\Delta} \rho_{A}^{\phi}(a)\ d\beta(a), \quad \Delta \in \mathfrak{B},
\end{equation}
holds.  

A well-known example of it is provided by the one-dimensional position operator $\hat{x}$ on $L^{2}(\mathbb{R})$ defined in \eqref{def:position_operator}. Indeed, one proves that the spectral measure of $\hat{x}$ is given by the multiplication of the characteristic function \eqref{def:characteristic_function} as
\begin{equation}
E_{\hat{x}}(\Delta) : \psi(x) \mapsto \chi_{\Delta}(x)\psi(x), \quad \psi \in L^{2}(\mathbb{R}),
\end{equation}
for each $B \in \mathfrak{B}$, so that
\begin{align}
\langle \psi_{1}, E_{\hat{x}}(\Delta) \psi_{2} \rangle
    &= \int_{\mathbb{R}} \psi_{1}^{*}(x) \chi_{\Delta}(x) \psi_{2}(x)\ d\beta(x) \nonumber \\
    &= \int_{\Delta} \psi_{1}^{*}(x) \psi_{2}(x)\ d\beta(x), \quad \Delta \in \mathfrak{B},
\end{align}
holds. Specifically, this implies that
\begin{equation}
\mu_{\hat{x}}^{\psi}(\Delta) = \int_{\Delta} \frac{|\psi(x)|^{2}}{\| \psi \|^{2}_{2}}\ d\beta(x), \quad \Delta \in \mathfrak{B},
\end{equation}
where the denominator of the integrand of the r.~h.~s. denotes the square of the $L^{2}$-norm of $\psi \in L^{2}(\mathbb{R})$ (see \eqref{def:Lp_norm}). One thus concludes that the density of the probability measure $\mu_{\hat{x}}^{\psi}$ is provided by
\begin{equation}
\rho_{\hat{x}}^{\psi}(x) = \frac{|\psi(x)|^{2}}{\| \psi \|^{2}_{2}}.
\end{equation}
Incidentally, it is known that each member of the pair of observables $\{Q, P\}$ that satisfies the Weyl relations \eqref{eq:weyl01} and \eqref{eq:weyl02} admits descriptions in terms of density functions.

However, it should be noted that this is not always the case in general: an observable with the spectrum consisting of a finite number of discrete eigenvalues (such as spin) provides a simple counterexample. To see this, let $A$ be such an observable with $N \in \mathbb{N}^{\times}$ distinct eigenvalues, and let $\sigma(A) = \{a_{1}, \dots, a_{N} \}$ be any enumeration of its spectrum. A straightforward application of \eqref{def:spectrum_finite} leads to
\begin{equation}\label{eq:prob_meas_fin_spec_obs}
\mu_{A}^{\phi} = \sum^{N}_{n=1} \mu_{A}^{\phi}(\{a_{n}\}) \cdot \delta_{a_{n}},
\end{equation}
in which one sees that the probability measure $\mu_{A}^{\phi}$ is given by the weighted sum of delta measures centred at each eigenvalue.  Obviously, since each of the delta measures is not absolutely continuous, the resultant probability measure does not admit a description by density functions.

For later use, we also note that, once the observable $A$ admits a description in terms of probability density functions,
then the complex measure \eqref{def:complex_measure_A} is also absolutely continuous for an arbitrary pair of vectors $|\phi^{\prime}\rangle, |\phi\rangle \in \mathcal{H}$.  That this is the case can be seen by a straightforward application of the polarisation identity
\begin{align}\label{def:polarisation_identity}
\langle\phi^{\prime}, T\phi \rangle
    &= \frac{1}{4} \left\{ \langle\phi^{\prime} + \phi, T(\phi^{\prime} + \phi) \rangle
    - \langle\phi^{\prime} - \phi, T(\phi^{\prime} - \phi) \rangle \right. \nonumber \\
    & \qquad \left. + i\langle\phi^{\prime} + i\phi, T(\phi^{\prime} + i\phi) \rangle
    - i\langle\phi^{\prime} - i\phi, T(\phi^{\prime} - i\phi) \rangle \right\}
\end{align}
with respect to the operator $T$ valid for any pair of vectors $|\phi\rangle, |\phi^{\prime}\rangle \in \mathrm{dom}(T)$, where we simply replace $T=E_{A}(\Delta)$ for each $\Delta \in \mathfrak{B}$.

\subsubsection{Simultaneously measurable Observables}\label{sec:sim_meas_obs}

For reference, we briefly review the basic mathematical definitions and facts involved in describing measurements of simultaneously measurable observables, including the simultaneous measurement of local observables on the tensor product of Hilbert spaces.

\paragraph{Strong Commutativity of Self-adjoint Operators}

Let $A$ and $B$ be self-adjoint operators on a Hilbert space $\mathcal{H}$, and let $E_{A}$ and $E_{B}$ be their respective spectral measures. We say that the pair of operators $A$ and $B$ \emph{strongly commutes}, if
\begin{equation}
E_{A}(\Delta_{A})\, E_{B}(\Delta_{B}) = E_{B}(\Delta_{B})\,E_{A}(\Delta_{A}), \quad \Delta_{A}, \Delta_{B} \in \mathfrak{B}
\end{equation}
holds as an operator equality. Note that the strong commutativity of $A$ and $B$ implies its (familiar) commutativity $AB = BA$. On the other hand, it is known that the converse is in general not true in the case where either (or both) of the operators happens to be unbounded. The term strong commutativity is named after this fact, for it indicates a stronger condition than mere commutativity.

\paragraph{Product Spectral Measures}
It is a basic result of functional analysis that, given such a pair of $A$ and $B$ of strongly commuting self-adjoint operators, there exists a unique two-dimensional spectral measure
$E_{A, B}$ called the \emph{product spectral measure} of $A$ and $B$, for which
\begin{align}\label{eq:product_spectral_measure}
E_{A, B}(\Delta_{A} \times \Delta_{B}) = E_{A}(\Delta_{A})\, E_{B}(\Delta_{B}) = E_{B}(\Delta_{B})\, E_{A}(\Delta_{A}), \quad \Delta_{A}, \Delta_{B} \in \mathfrak{B}
\end{align}
holds.  This is a straightforward operator-valued analogue of product measures in measure theory.
With a pair of vectors $|\phi\rangle, |\phi^{\prime}\rangle \in \mathcal{H}$ being specified, this gives rise to a complex measure on $(\mathbb{R}^{2}, \mathfrak{B}^{2})$, defined by
\begin{align}
\Delta \mapsto \left\langle \phi^{\prime}, E_{A, B}(\Delta) \, \phi \right\rangle, \quad \Delta \in \mathfrak{B}^{2}.
\end{align}
In the context of quantum mechanics, for a given pair of simultaneously measurable quantum observables represented by strongly commuting self-adjoint operators $A$ and $B$, the probabilistic behaviour of the outcomes of an ideal simultaneous measurement of both the observables on the state $|\phi\rangle \in \mathcal{H}$ is described by the \emph{joint-probability distribution}
\begin{equation}\label{def:prob_measrue_A_simul}
\Delta \mapsto \mu_{A,B}^{\phi}(\Delta) :=  \frac{\langle\phi, E_{A, B}(\Delta)\,  \phi\rangle}{\|\phi\|^{2}}, \quad \Delta \in \mathfrak{B}^{2},
\end{equation}
of the pair of observables $A$ and $B$ on the state $|\phi\rangle$,
which is a two-dimensional probability measure on the measurable space $(\mathbb{R}^{2}, \mathfrak{B}^{2})$. Here, the r.~h.~s. of \eqref{def:prob_measrue_A_simul} is interpreted as the probability of finding the outcomes of a simultaneous measurement of both observables in the Borel set $\Delta \in \mathfrak{B}^{2}$.
Note that the measurement outcomes of $A$ and $B$ may not be independent, {\it i.e.}, the equality,
\begin{align}
\mu_{A,B}^{\phi}(\Delta_{A} \times \Delta_{B}) = \mu_{A}^{\phi}(\Delta_{A}) \cdot \mu_{B}^{\phi}(\Delta_{B}), \quad \Delta_{A}, \Delta_{B} \in \mathfrak{B},
\end{align}
may not necessarily hold, or in other words, the joint-probability distribution is not necessarily the product measure $\mu_{A,B}^{\phi} \neq \mu_{A}^{\phi} \otimes \mu_{B}^{\phi}$ of each of the respective measurements, in general.

\paragraph{Functional Calculus regarding simultaneously measurable Observables}
Given a pair of strongly commuting self-adjoint observables $A$ and $B$, one readily confirms
\begin{equation}
A = \int_{\mathbb{R}} a\ dE_{A,B}(a,b), \quad B = \int_{\mathbb{R}} b\ dE_{A,B}(a,b).
\end{equation}
As for the sum and product of the observables, we first note the following basic fact.
\begin{lemma}\label{lem:st_strong_commutative}
Let a pair of self-adjoint operators $A$ and $B$ strongly commute. Then,
\begin{enumerate}
\item The operators $A$ and $B$ commute with each other on $\mathrm{dom}(AB) \cap \mathrm{dom}(BA)$, and the anti-commutator%
\footnote{
Here, the domain of the anti-commutator $\{X,Y\} := XY + YX$ of the pair of operators $X$, $Y$ are understood to be $\mathrm{dom}(\{X,Y\}) := \mathrm{dom}(XY) \cap \mathrm{dom}(YX)$.
}
\begin{equation}
\{A,B\} := AB + BA
\end{equation}
is essentially self-adjoint.
\item $A+B$ is essentially self-adjoint.
\end{enumerate}
\end{lemma}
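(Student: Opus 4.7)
The plan is to leverage the product spectral measure $E_{A,B}$ guaranteed by the strong commutativity of $A$ and $B$, together with the two-dimensional functional calculus it induces. The key auxiliary object is the dense subspace
\[
\mathcal{D}_0 := \bigcup_{n \in \mathbb{N}^{\times}} E_{A,B}([-n,n]^2) \mathcal{H} \subset \mathcal{H}
\]
of vectors having compact joint spectral support, on which any polynomial expression in $A$ and $B$ acts boundedly. This provides a common invariant core on which the subsequent manipulations can be carried out unambiguously; its density follows from the strong convergence $E_{A,B}([-n,n]^2) \to I$.

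For the commutativity assertion in (i), I would approximate an arbitrary $|\phi\rangle \in \mathrm{dom}(AB) \cap \mathrm{dom}(BA)$ by the sequence $|\phi_n\rangle := E_{A,B}([-n,n]^2) |\phi\rangle \in \mathcal{D}_0$, noting that the spectral projection commutes with both $A$ and $B$. On $\mathcal{D}_0$, both $AB$ and $BA$ coincide with the bounded functional-calculus operator associated with the symbol $(a,b) \mapsto ab$, whence $AB |\phi_n\rangle = BA |\phi_n\rangle$. Passing to the limit $n \to \infty$, the dominated convergence theorem applied to the scalar spectral measure in the representation gives $A |\phi_n\rangle \to A |\phi\rangle$ and $B |\phi_n\rangle \to B |\phi\rangle$, and the closedness of $A$ and $B$ then yields $AB |\phi\rangle = BA |\phi\rangle$.

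For the essential self-adjointness claims, I introduce via functional calculus the operators
\[
S := \int_{\mathbb{R}^2} (a+b)\, dE_{A,B}(a,b), \qquad T := \int_{\mathbb{R}^2} 2ab \, dE_{A,B}(a,b),
\]
both self-adjoint since their symbols are real-valued Borel functions. A direct computation on $\mathcal{D}_0$ shows $S|_{\mathcal{D}_0} = (A+B)|_{\mathcal{D}_0}$ and $T|_{\mathcal{D}_0} = \{A,B\}|_{\mathcal{D}_0}$, and by the standard fact that vectors of compact spectral support constitute a core for any operator defined by the unbounded functional calculus, $S|_{\mathcal{D}_0}$ and $T|_{\mathcal{D}_0}$ are themselves essentially self-adjoint with closures $S$ and $T$. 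Since $A+B$ and $\{A,B\}$ are symmetric (the former as a sum of symmetric operators, the latter via (i)) and sit between the essentially self-adjoint restrictions and the self-adjoint extensions $S$ and $T$, their closures must equal $S$ and $T$ respectively, proving essential self-adjointness.

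The main obstacle I anticipate is the careful bookkeeping of domains; in particular, verifying that $\mathcal{D}_0$ is indeed a core for the unbounded functional-calculus operators $S$ and $T$ (a standard but technical application of the dominated convergence theorem to the scalar spectral measure $\Delta \mapsto \langle \phi, E_{A,B}(\Delta) \phi\rangle$), and ensuring that the closure of $A+B$ initially defined on $\mathrm{dom}(A) \cap \mathrm{dom}(B)$ coincides with that of its restriction to $\mathcal{D}_0$.
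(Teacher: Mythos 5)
Your proof is correct. Note that the paper itself offers no proof of this lemma — it is quoted as a ``basic fact'' of functional analysis, and the operator identities $\overline{A+B} = \int (a+b)\, dE_{A,B}$ and $\tfrac{1}{2}\overline{AB+BA} = \int ab\, dE_{A,B}$ are then recorded as direct consequences. Your argument via the product spectral measure, the dense invariant core $\mathcal{D}_0$ of vectors with compact joint spectral support, and the sandwiching $S|_{\mathcal{D}_0} \subset A+B \subset S$ (and likewise for $T$ and $\{A,B\}$) is exactly the standard route one would take within the framework the paper sets up, and it yields those two operator identities as a byproduct; the domain inclusions $A+B \subset S$ and $\{A,B\} \subset T$ that make the sandwich work do hold (e.g.\ $\|AB\phi\|^2 = \int |ab|^2\, d\mu_\phi$ since the spectral projections commute with $B$), so there is no gap.
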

\noindent
As a direct consequence, we thus have the operator equalities
\begin{align}
\overline{A + B} &= \int_{\mathbb{R}^{2}} (a + b)\ dE_{A,B}(a,b) \\
\frac{\overline{AB + BA}}{2} &= \int_{\mathbb{R}^{2}} ab\ dE_{A,B}(a,b)
\end{align}
worth of special notice.
As above, overlines on closable operators denote their closures, and specifically for essentially self-adjoint operators, their self-adjoint extensions.

\paragraph{Composite Systems}
We comment on the special case of the above situation in which the Hilbert space of our interest is the tensor product $\mathcal{H} \otimes \mathcal{K}$ of the target system $\mathcal{H}$ and the meter system $\mathcal{K}$, and the operators involved are (local) self-adjoint operators $A_{1}$ and $A_{2}$ on the respective Hilbert spaces.
Observing that the operators
\begin{equation}
\tilde{A}_{1} := A_{1} \otimes I, \quad \tilde{A}_{2} := I \otimes A_{2}
\end{equation}
strongly commute with each other on the composite Hilbert space $\mathcal{H} \otimes \mathcal{K}$, and that their spectral measures respectively read
\begin{equation}
E_{\tilde{A}_{1}} := E_{A_{1}} \otimes I, \quad E_{\tilde{A}_{2}} := I \otimes E_{A_{2}},
\end{equation}
the previous argument leads to the existence of a unique two-dimensional product spectral measure $E_{A_{1}} \otimes E_{A_{2}} := E_{\tilde{A}_{1},\tilde{A}_{2}}$
satisfying the operator equality
\begin{align}
\left( E_{A_{1}} \otimes E_{A_{2}} \right)(\Delta_{1} \times \Delta_{2})
    &= E_{\tilde{A}_{1}}(\Delta_{1}) E_{\tilde{A}_{2}}(\Delta_{2})\nonumber \\
    &= (E_{A_{1}}(\Delta_{1}) \otimes I) (I \otimes E_{A_{2}}(\Delta_{2})) \nonumber \\
    &= E_{A_{1}}(\Delta_{1}) \otimes E_{A_{2}}(\Delta_{2}), \quad \Delta_{1}, \Delta_{2} \in \mathfrak{B}.
\end{align}
Here, the left-most hand side denotes the two-dimensional spectral measure defined as in \eqref{eq:product_spectral_measure}, while the right-most hand side denotes the tensor product of the self-adjoint operators $E_{A_{1}}(\Delta_{1})$ and $E_{A_{2}}(\Delta_{2})$ for each $\Delta_{1}, \Delta_{2} \in \mathfrak{B}$.
As we have seen in the previous argument, this gives rise to a complex measure,
\begin{align}
\Delta \mapsto \langle \Phi^{\prime}, \left(E_{A_{1}} \otimes E_{A_{2}}\right)(\Delta)\, \Phi \rangle, \quad \Delta \in \mathfrak{B}^{2},
\end{align}
for a given selection of a pair $|\Psi\rangle, |\Phi \rangle \in \mathcal{H} \otimes \mathcal{K}$
of vectors of the composite system, and the map,
\begin{equation}\label{def:prob_measrue_A_entangled}
\Delta \mapsto \mu_{A_{1},A_{2}}^{\Phi}(\Delta) := \frac{\langle\Phi, \left(E_{A_{1}} \otimes E_{A_{2}}\right)(\Delta)\, \Phi\rangle}{\|\Phi\|^{2}}, \quad \Delta \in \mathfrak{B}^{2},
\end{equation}
(here,  we have slightly abused the notation on the l.~h.~s. by writing $A_{n}$ in place of $\tilde{A}_{n}$ for each $n = 1, 2$) provides a probability measure describing the probabilistic behaviour of the outcomes of the ideal local measurements simultaneously performed on each system in the state $|\Phi\rangle \in \mathcal{H} \otimes \mathcal{K}$.

In passing, we note that in the case where the state $|\Phi\rangle$ happens to be a direct product state $|\Phi\rangle = |\phi_{1} \otimes \phi_{2} \rangle$, 
the induced joint-probability distribution of the two local observables \eqref{def:prob_measrue_A_entangled} becomes the product measure of the two probability measures associated with $A_{1}$ and $A_{2}$, 
\begin{align}\label{eq:sim_prob_measrue_dps}
\mu_{A_{1},A_{2}}^{\phi_{1} \otimes \phi_{2}}(\Delta_{1} \times \Delta_{2}) = \mu_{A_{1}}^{\phi_{1}}(\Delta_{1}) \cdot \mu_{A_{2}}^{\phi_{2}}(\Delta_{2}), \quad \Delta_{1}, \Delta_{2} \in \mathfrak{B},
\end{align}
indicating that the measurement outcomes of each local measurement $A_{1}$ and $A_{2}$ 
are statistically independent ({\it i.e.}, $\mu_{A_{1},A_{2}}^{\phi_{1} \otimes \phi_{2}} = \mu_{A_{1}}^{\phi_{1}} \otimes \mu_{A_{2}}^{\phi_{2}}$).   
On the other hand, if one chooses the state $|\Phi\rangle$ to be an entangled state ({\it i.e.}, those states in $\mathcal{H} \otimes \mathcal{K}$ that are not direct product states), the joint-probability distribution \eqref{def:prob_measrue_A_entangled} is no more a product measure of those associated to the local observables in general. In the language of physics, this implies that the local measurements performed on each remote system may have some correlation if the state of the composite system happens to be entangled, and this is widely considered to be one of the most intriguing properties of quantum mechanics.
Of course, statistical independence between the target and the meter systems is useless for the purpose of our measurement, and we naturally need an entangled state $|\Phi\rangle$ in order to retrieve any meaningful information of the former system out of the measurement of the latter.

\paragraph{Sum and Product of Local Observables}
As for the sum and product of a pair of local observables, we note that a direct application of Lemma~\ref{lem:st_strong_commutative} leads to
\begin{equation}
A_{1} \otimes I = \int_{\mathbb{R}} a_{1}\ d\left(E_{A_{1}} \otimes E_{A_{2}}\right)(a_{1},a_{2}), \quad I \otimes A_{2} = \int_{\mathbb{R}} a_{2}\ d\left(E_{A_{1}} \otimes E_{A_{2}}\right)(a_{1},a_{2}),
\end{equation}
and subsequently
\begin{align}
\overline{A_{1} \otimes I + I \otimes A_{2}} &= \int_{\mathbb{R}^{2}} (a_{1} + a_{2})\ d\left(E_{A_{1}} \otimes E_{A_{2}}\right)(a_{1}, a_{2}), \\
A_{1} \otimes A_{2} &= \int_{\mathbb{R}^{2}} a_{1}a_{2}\ d\left(E_{A_{1}} \otimes E_{A_{2}}\right)(a_{1}, a_{2}),
\end{align}
as expected.

\subsection{Unconditioned Measurement}\label{sec:ups_II_ups}

Now that we have recalled the necessary mathematical concepts and results, we shall embark on our main analysis. 
The target of our analysis is the probability measure describing the behaviour of the outcome of the composite observable $I \otimes X$ on $|\Psi^{g}\rangle$, which may be rewritten in terms of that of the local observable $X$ on the mixed state $\psi^{g}$ as
\begin{align}\label{eq:pdf_x_outcome}
\mu_{I \otimes X}^{\Psi^{g}}(\Delta)
    = \mu_{X}^{\psi^{g}}(\Delta), \quad \Delta \in \mathfrak{B},
\end{align}
where the last definition $\mu_{X}^{\psi^{g}}(\Delta) := \mathrm{Tr}[E_{X}(\Delta)\psi^{g}] / \mathrm{Tr}[\psi^{g}]$ is merely a straightforward extension of probability measures \eqref{def:prob_measrue_A} for density operators (for the proof of the equality \eqref{eq:pdf_x_outcome}, just replace $X$ with $E_{X}(\Delta)$ in \eqref{eq:exp_x_outcome}).

\paragraph{Main Objective of this Subsection}
The primary interest of our study is now to investigate how the information of the target system is encoded into the profile of the outcome of the meter system \eqref{eq:pdf_x_outcome} through the interaction.
As in the previous subsection, we assume without loss of generality that the meter observable $Y$ coupled with the target observable $A$ to yield the von Neumann interaction \eqref{intro:von_Neumann_interaction} is given by $Y=P$. The main objective of the passage is to demonstrate the following proposition as an answer to this question. The results, which shall be shortly demonstrated, form the bases we rely on in conducting our further study.
\begin{proposition}[Unconditioned Measurement II.a]\label{prop:UCM_II}
In the context of the UM scheme, let $Y=P$ be fixed for definiteness, and let $|\phi\rangle \in \mathcal{H}$ and $|\psi\rangle \in \mathcal{K}$ respectively be the initial states of the target and the meter systems. Then, the probability measure \eqref{eq:pdf_x_outcome} for both the choice $X = Q, P$ reads
\begin{equation}\label{eq:outcome_prob_mod_01}
\begin{split}
\mu_{Q}^{\psi^{g}}
   &= \mu_{Q}^{\psi} \ast \mu_{(gA)}^{\phi}, \\
\mu_{P}^{\psi^{g}}
   &= \mu_{P}^{\psi},
\end{split}
\qquad g \in \mathbb{R},
\end{equation}
in which the resultant profile of the measurement outcomes of $X$ after the interaction can be exclusively written by the convolution of the initial profiles of both the target and the meter systems.
\end{proposition}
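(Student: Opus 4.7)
The approach is to work with characteristic functions (Fourier transforms) of the probability measures: since both $\mu^{\psi}_Q \ast \mu^{\phi}_{gA}$ and $\mu^{\psi^g}_Q$ are finite positive measures on $\mathbb{R}$, their equality will follow once their Fourier transforms are shown to coincide. The key technical input is the pair of Weyl-analogue operator identities \eqref{eq:weyl_analogue01} and \eqref{eq:weyl_analogue02} established in the preceding lemma, which govern the Heisenberg-picture conjugation by the interaction unitary $e^{-igA\otimes P}$.

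Concretely, I would begin by writing
\begin{equation*}
\hat{\mu}^{\psi^g}_X(s) := \int_{\mathbb{R}} e^{isx}\, d\mu^{\psi^g}_X(x) = \frac{\langle \phi \otimes \psi,\, e^{igA \otimes P}(I \otimes e^{isX})e^{-igA \otimes P}(\phi \otimes \psi)\rangle}{\|\phi\|^{2}\|\psi\|^{2}}, \qquad s \in \mathbb{R},
\end{equation*}
noting that since all exponentials here are bounded unitaries, no domain issues arise. For $X = P$, identity \eqref{eq:weyl_analogue02} leaves $I \otimes e^{isP}$ invariant under the conjugation, so $\hat{\mu}^{\psi^g}_P = \hat{\mu}^{\psi}_P$ and injectivity of the Fourier transform on finite complex measures yields $\mu^{\psi^g}_P = \mu^{\psi}_P$. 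For $X = Q$, identity \eqref{eq:weyl_analogue01} with $t = -g$, combined with the strong commutativity of $A \otimes P$ and $A \otimes I$ (so that their exponentials commute), gives
\begin{equation*}
e^{igA \otimes P}(I \otimes e^{isQ})e^{-igA \otimes P} = e^{isgA} \otimes e^{isQ},
\end{equation*}
whereupon evaluating on the product state $|\phi \otimes \psi\rangle$ factorises the inner product and produces
\begin{equation*}
\hat{\mu}^{\psi^g}_Q(s) = \frac{\langle \phi, e^{isgA}\phi\rangle}{\|\phi\|^{2}} \cdot \frac{\langle \psi, e^{isQ}\psi\rangle}{\|\psi\|^{2}} = \hat{\mu}^{\phi}_{gA}(s)\,\hat{\mu}^{\psi}_Q(s).
\end{equation*}
Here I have used the change-of-variables formula \eqref{eq:Transformationsformel_Bildmass} (applied to the spectral measure $E_A$ and the measurable map $a \mapsto ga$) to identify $\langle \phi, e^{isgA}\phi\rangle/\|\phi\|^{2}$ as the characteristic function of the image measure $\mu^{\phi}_{gA}$ of $\mu^{\phi}_A$ under rescaling by $g$. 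The standard identity $\widehat{\mu \ast \nu} = \hat\mu \cdot \hat\nu$ for finite measures, together once more with Fourier injectivity, then delivers $\mu^{\psi^g}_Q = \mu^{\psi}_Q \ast \mu^{\phi}_{gA}$.

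The genuinely substantive step is the Weyl-analogue operator identity, already proved in the preceding lemma; after that, the argument consists entirely of formal manipulations, namely functional calculus, change of variables, and the Fourier dictionary between convolution and product. I therefore expect no real obstacle, only some bookkeeping hazards: keeping straight which tensor factor each exponential acts on, and correctly placing the factor $g$ inside the image measure $\mu^{\phi}_{gA}$ rather than absorbing it as a rescaling of the Fourier variable $s$. The degenerate case $g = 0$ requires a separate but trivial check, since then $\mu^{\phi}_{gA} = \delta_{0}$ and convolution with $\delta_{0}$ is the identity, matching the trivial evolution $|\Psi^{0}\rangle = |\phi \otimes \psi\rangle$.
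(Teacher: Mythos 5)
Your proposal is correct, but it takes a genuinely different route from the paper's own demonstration. The pivotal identity $e^{igA\otimes P}(I\otimes e^{isQ})e^{-igA\otimes P}=e^{isgA}\otimes e^{isQ}$ does follow by rearranging \eqref{eq:weyl_analogue01} and using the strong commutativity of $A\otimes I$ with $I\otimes Q$, and everything after that --- factorisation on the product state, identifying $\langle\phi,e^{isgA}\phi\rangle/\|\phi\|^{2}$ with the characteristic function of $\mu^{\phi}_{(gA)}$ via functional calculus, the convolution--product dictionary $\widehat{\mu\ast\nu}=\hat\mu\,\hat\nu$, and injectivity of the Fourier transform on finite complex measures --- is standard and sound. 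The paper instead proceeds set-wise: it first establishes $\mu^{\psi^{g}}_{Q}(\Delta)=\int_{\mathbb{R}}\mu^{\psi}_{Q}(\Delta-ga)\,d\mu^{\phi}_{A}(a)$ by an explicit computation for $A$ with finite spectrum (expanding $|\Psi^{g}\rangle$ over the eigenprojections and using $e^{itP}E_{Q}(\Delta)e^{-itP}=E_{Q}(\Delta-t)$), asserts the extension to general $A$, and only then recasts the formula as a convolution through the scaling machinery $\mu^{\phi}_{(tA)}=(\mu^{\phi}_{A})_{t}$. Your Fourier-analytic argument handles general $A$ uniformly in one stroke and is essentially the "slicker" proof that the paper itself sketches and deliberately sets aside in the remark immediately following the proposition, in favour of a more elementary demonstration. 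Two minor caveats: the Fourier transform of complex measures and its injectivity are only introduced by the paper in Section~\ref{sec:qp_qo}, so you are borrowing tools from later in the text (they are standard, so this is a presentational rather than a logical issue); and the separate check at $g=0$ is not actually needed, since $\hat\mu^{\phi}_{(0\cdot A)}(s)=1=\hat\delta_{0}(s)$ already falls out of the same computation.
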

\noindent
Specifically, the interaction causes the change only in the profile of the outcome of the observable $X$ conjugate to $Y$, in which the initial profile of the target system acts upon that of the meter system through convolution of measures. On the other hand, the profile of $X$ for the same choice as $Y$ is left untouched.  The proposition can be readily demonstrated by observing that the change of the spectral measure of the measuring observables $(I \otimes X)$ with respect to the unitary operator $U(g) := e^{-igA \otimes P}$ is provided by
\begin{align}
U(-g)\, E_{I \otimes Q}(\Delta)\, U(g) &= E_{\overline{(I \otimes Q + gA \otimes I)}}(\Delta), \quad \Delta \in \mathfrak{B}, \\
U(-g)\, E_{I \otimes P}(\Delta)\, U(g) &= E_{(I \otimes P)}(\Delta), \quad \quad \Delta \in \mathfrak{B},
\end{align}
in the Heisenberg picture (they are respectively direct consequences of \eqref{eq:weyl_analogue01} and \eqref{eq:weyl_analogue02}), and that the probability distribution dictating the probabilistic behaviour of the sum of two simultaneously measurable observables is described by the convolution of both the individual profiles of the observables involved (which is in parallel to the well-known result for random variables in classical probability theory).  However, in the main passages that follow, we intend to provide a more elementary and straightforward demonstration.  As a corollary to this, one equivalently has:
\begin{corollary}[Unconditioned Measurement II.b]\label{cor:UCM_II}
Under the same condition as above, the result \eqref{eq:outcome_prob_mod_01} can also be rewritten as
\begin{equation}\label{eq:outcome_prob_mod_02}
\begin{split}
\mu_{(g^{-1}Q)}^{\psi^{g}}
   &= \mu_{(g^{-1}Q)}^{\psi} \ast \mu_{A}^{\phi}, \\
\mu_{(gP)}^{\psi^{g}}
   &= \mu_{(gP)}^{\psi},
\end{split}
\qquad g \in \mathbb{R}^{\times},
\end{equation}
by rescaling the outcome by the interaction parameter.
\end{corollary}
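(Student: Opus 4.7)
The plan is to obtain Corollary~\ref{cor:UCM_II} as a direct rescaling of Proposition~\ref{prop:UCM_II} by change of variables on the measurable space $(\mathbb{R}, \mathfrak{B})$. First, I recall the elementary observation from functional calculus that for any self-adjoint operator $X$ and nonzero real scalar $c$, one has $E_{cX}(\Delta) = E_{X}(c^{-1}\Delta)$ for all $\Delta \in \mathfrak{B}$, so that the associated probability measure satisfies
\begin{equation*}
\mu_{cX}^{\phi} = m_{c}(\mu_{X}^{\phi}),
\end{equation*}
where $m_c : \mathbb{R} \to \mathbb{R}$, $x \mapsto cx$ is the scaling map and $m_c(\mu)$ denotes its image measure in the sense of \eqref{def:Bildmass}.

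Next, I would verify the compatibility of the pushforward by a linear scaling with convolution: for any pair of finite measures $\mu, \nu$ on $(\mathbb{R},\mathfrak{B})$ and any $c \in \mathbb{R}^\times$,
\begin{equation*}
m_c(\mu \ast \nu) = m_c(\mu) \ast m_c(\nu).
\end{equation*}
This follows from a direct application of the defining formula \eqref{def:convolution_measure} combined with the change of variables identity \eqref{eq:Transformationsformel_Bildmass}, exploiting the fact that the multiplication by $c$ intertwines the addition on $\mathbb{R}$ that underlies the convolution.

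Equipped with these two ingredients, I would apply $m_{g^{-1}}$ to both sides of the first equality in \eqref{eq:outcome_prob_mod_01}. The left-hand side becomes $m_{g^{-1}}(\mu_{Q}^{\psi^{g}}) = \mu_{(g^{-1}Q)}^{\psi^{g}}$, while the right-hand side yields $m_{g^{-1}}(\mu_{Q}^{\psi}) \ast m_{g^{-1}}(\mu_{(gA)}^{\phi}) = \mu_{(g^{-1}Q)}^{\psi} \ast \mu_{A}^{\phi}$, where the final step uses $g^{-1} \cdot gA = A$. The second equality of \eqref{eq:outcome_prob_mod_02} is obtained analogously by pushing $\mu_{P}^{\psi^{g}} = \mu_{P}^{\psi}$ forward under $m_g$.

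The argument presents no genuine obstacle; the only point requiring verification, namely the scaling-convolution compatibility lemma, is a routine application of the change of variables formula, and the rest amounts to bookkeeping.
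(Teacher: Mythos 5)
Your proof is correct. The two ingredients you isolate are both sound: $E_{cX}(\Delta)=E_{X}(c^{-1}\Delta)$ gives $\mu_{cX}^{\phi}=m_{c}(\mu_{X}^{\phi})$ (this is exactly the identity $\mu_{(tA)}^{\phi}=(\mu_{A}^{\phi})_{t}$ the paper establishes via \eqref{eq:spectral_measure_scaled_obs}, and it extends verbatim to the mixed meter state $\psi^{g}$ since it is purely a statement about the spectral measure), and the compatibility $m_{c}(\mu\ast\nu)=m_{c}(\mu)\ast m_{c}(\nu)$ is a routine consequence of \eqref{def:convolution_measure} and \eqref{eq:Transformationsformel_Bildmass}. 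With these, applying $m_{g^{-1}}$ to the first line of \eqref{eq:outcome_prob_mod_01} and $m_{g}$ to the second yields \eqref{eq:outcome_prob_mod_02} immediately.

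The route differs slightly from the paper's in its logical organisation. The paper does not deduce the corollary from the proposition's convolution identity; instead it returns to the underlying integral formula $\mu_{Q}^{\psi^{g}}(\Delta)=\int_{\mathbb{R}}\mu_{Q}^{\psi}(\Delta-ga)\,d\mu_{A}^{\phi}(a)$ in \eqref{eq:ups_prob_dens_func}, substitutes $\Delta\to g\Delta$ there, and reads off \eqref{eq:outcome_prob02} directly, so that \eqref{eq:outcome_prob_mod_01} and \eqref{eq:outcome_prob_mod_02} are obtained in parallel from a common source. Your version makes the corollary a genuine corollary: it derives \eqref{eq:outcome_prob_mod_02} from \eqref{eq:outcome_prob_mod_01} alone by exhibiting the scaling pushforward $m_{c}$ as a homomorphism of the measure algebra $\mathbf{M}_{\mathbb{C}}(\mathfrak{B})$. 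That is marginally more structural and isolates a reusable lemma the paper leaves implicit; the paper's route avoids stating that lemma at the cost of re-deriving from the integral formula. Both are correct and essentially equivalent in content.
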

\noindent
The two different manners \eqref{eq:outcome_prob_mod_01} and \eqref{eq:outcome_prob_mod_02} of describing the effect of the interaction correspond to the two possible ways of combining the interaction parameter $g$ in the unitary group as
\begin{equation}\label{eq:combining_of_the_interaction}
U(g) := e^{-i(gA)\otimes P} = e^{-i A\otimes (gP)}.
\end{equation}
Combining the interaction parameter $g$ and the target observable $A$ (the former) corresponds to the scaling of the target observable $A \to gA$, whereas combining $g$ and the meter observable $P$ (the latter) corresponds to the scaling of the pair of the meter observables $\{Q, P\} \to \{g^{-1}Q, gP\}$.
Note that the pair of scaled observables $\{g^{-1}Q, gP\}$ for $g \in \mathbb{R}^{\times}$ still satisfies the Weyl relations \eqref{eq:weyl01} and \eqref{eq:weyl02}.

Later on, we shall be investigating how one could recover the information of the target system $\mu_{A}^{\phi}$ based on the results that we obtained here. Incidentally, one finds that probing either the strong or the weak region of the interaction parameter proves itself useful for this purpose, and the equalities \eqref{eq:outcome_prob_mod_01} and \eqref{eq:outcome_prob_mod_02} shall serve as the respective starting points for analysing the weak and the strong UM schemes.

\paragraph{Preliminary Observation}
For our purpose, we first consider the case where the target observable $A$ has a finite point spectrum $\sigma(A) = \{ a_{1}, \dots, a_{N} \}$, $N \in \mathbb{N}^{\times}$. Writing the spectral decomposition of $A$ as \eqref{eq:spect_decomp_fin} and applying \eqref{eq:int_op_fin}, one finds that the composite state after the interaction reads
\begin{align}\label{eq:interaction_finite}
|\Psi^{g}\rangle
    &= \sum_{n = 1}^{N} \left( \Pi_{a_{n}} \otimes e^{-iga_{n}P} \right) |\phi \otimes \psi \rangle \nonumber \\
    &= \sum_{n = 1}^{N} \left( |\Pi_{a_{n}}\phi\rangle \otimes |e^{-iga_{n}P} \psi\rangle \right), \quad g \in \mathbb{R}.
\end{align}
It then follows that
\begin{align}\label{eq:importantidentity}
\mu_{Q}^{\psi^{g}}(\Delta)
    &= \frac{\| (I \otimes E_{Q}(\Delta)) \Psi^{g}\|^{2}}{\|\Psi^{g}\|^{2}} \nonumber \\
    &= \sum_{m = 1}^{N}\sum_{n = 1}^{N} \frac{\langle \phi, \Pi_{a_{m}}\Pi_{a_{n}}\phi \rangle}{\|\phi\|^{2}} \cdot \frac{\langle e^{-iga_{m}P}\psi, E_{Q}(\Delta) e^{-iga_{n}P} \psi\rangle}{\|\psi\|^{2}} \nonumber \\
    &= \sum_{n = 1}^{N} \frac{\| \Pi_{a_{n}}\phi\|^{2}}{\|\phi\|^{2}} \cdot \frac{\| E_{Q}(\Delta)e^{-iga_{n}P}\psi\|^{2}}{{\|\psi\|^{2}}} \nonumber \\
    &= \sum_{n = 1}^{N} \mu_{A}^{\phi}(\{a_{n}\}) \cdot \mu_{Q}^{\psi}(\Delta - ga_{n}) \nonumber \\
    &= \sum_{n = 1}^{N} \mu_{A}^{\phi}(\{a_{n}\}) \cdot  \int_{\mathbb{R}} \mu_{Q}^{\psi}(\Delta - ga)\ d\delta_{a_{n}}(a) \nonumber \\
    &= \int_{\mathbb{R}} \mu_{Q}^{\psi}(\Delta - ga) \ d\mu_{A}^{\phi}(a), \quad  g \in \mathbb{R}, \quad \Delta \in \mathfrak{B},
\end{align}
where we have used the operator equality%
\footnote{
This is a direct result of \eqref{eq:weak_weyl}.
}
$e^{itP}E_{Q}(\Delta)e^{-itP} = E_{Q}(\Delta - t)$, $\Delta \in \mathfrak{B}$ in the third to last equality, and have applied \eqref{eq:prob_meas_fin_spec_obs} to obtain the last equality.

\paragraph{Description of the Measurement Outcome}
Returning to the general case, where the target observable $A$ is now arbitrary, we may conjecture from \eqref{eq:importantidentity} that
\begin{equation}\label{eq:ups_prob_dens_func}
\mu_{Q}^{\psi^{g}}(\Delta) = \int_{\mathbb{R}} \mu_{Q}^{\psi}(\Delta - ga) \ d\mu_{A}^{\phi}(a), \quad  g \in \mathbb{R}, \quad \Delta \in \mathfrak{B}
\end{equation}
generally holds, which indeed turns out to be true; it can be shown straightforwardly in the general framework of functional analysis and measure and integration theory.
From \eqref{eq:ups_prob_dens_func}, we see that the probability measure describing the behaviour of the measurement outcome of $Q$ on the (mixed) state $\psi^{g}$ after the interaction can be explicitly given by those of the initial states of both the meter and the system. Speaking in an intuitive way, each value $a \in \sigma(A)$ of the spectrum of $A$ causes a translation $\mu_{Q}^{\psi}(\Delta) \mapsto \mu_{Q}^{\psi}(\Delta-ga)$, $\Delta \in \mathfrak{B}$ to the probability measure of the initial meter state while keeping its `shape' of the profile intact, and each of these effects is all added over, weighted by the original probability $\mu_{A}^{\phi}$ of the target observable $A$.

Parallel to this, we remark that the ideal measurement of the observable $X=P$ after the von Neumann interaction would result in
\begin{equation}
\mu_{P}^{\psi^{g}} = \mu_{P}^{\psi}, \quad g \in \mathbb{R},
\end{equation}
which states that the interaction does not alter the profile of the measurement of $X=P$ at all. This can be readily shown by changing $Q$ to $P$ in \eqref{eq:importantidentity}, and by applying the operator equality $e^{itP}E_{P}(\Delta)e^{-itP} = E_{P}(\Delta)$, $\Delta \in \mathfrak{B}$.

\paragraph{Scaling of Measures and Density Functions}
For later arguments, it proves convenient to rewrite our previous result \eqref{eq:ups_prob_dens_func} in terms of convolution of measures after introducing 
some notations.  
Let $\mu \in \mathbf{M}_{\mathbb{C}}(\mathfrak{B}^{n})$ be a complex measure, and define a parametrised family $\{\mu_{t}\}_{t \in \mathbb{R}}$ of complex measures by
\begin{equation}\label{def:measure_scaling}
\mu_{t}(B) :=
    \begin{cases}
    \mu(t^{-1}\Delta), & \quad t \in \mathbb{R}^{\times}, \\
    \mu(\mathbb{R}^{n}) \cdot \delta_{0}(\Delta), & \quad t = 0,
    \end{cases}
\qquad \Delta \in \mathfrak{B}^{n}.
\end{equation}
Note that this definition is well-defined, for the continuity of the map $x \mapsto tx$ implies its Borel-measurability, hence $t^{-1}\Delta \in \mathfrak{B}^{n}$ for $\Delta \in \mathfrak{B}^{n}$. The coefficient $\mu(\mathbb{R}^{n})$ multiplied to the delta measure for $t=0$ is to keep the total evaluation $\mu_{t}(\mathbb{R}^{n}) = \mu(\mathbb{R}^{n})$ constant for all $t \in \mathbb{R}$.
Intuitively speaking, this parametrisation allows us to narrow down the profile of a given complex measure $\mu$ while keeping its total evaluation $\mu_{t}(\mathbb{R}^{n}) = \mu(\mathbb{R}^{n})$ intact, so that it `tends' in an intuitive way to the delta measure (weighted by its total evaluation $\mu(\mathbb{R}^{n})$) as $t \to 0$. To help visualise this, suppose that $\mu$ is absolutely continuous and write $\rho := d\mu/d\beta^{n}$ for simplicity. One then finds
\begin{align}
\mu(t^{-1}\Delta)
    &= \int_{(t^{-1}\Delta)} \rho(x)\ d\beta^{n}(x) \nonumber \\
    &= \int_{\mathbb{R}^{n}} \chi_{\Delta}(tx) \rho(x)\ d\beta^{n}(x) \nonumber \\
    &= \int_{\mathbb{R}^{n}} \chi_{\Delta}(x) \cdot |t|^{-n}\rho\left(\frac{x}{t}\right)\ d\beta^{n}(x), \nonumber \\
    &= \int_{\Delta} \rho_{t}(x)\ d\beta^{n}(x), \quad \Delta \in \mathfrak{B}^{n},\ t \in \mathbb{R}^{\times},
\end{align}
where we have introduced the scaling
\begin{equation}\label{def:function_scaling}
f_{t}(x) := {|t|^{-n}}\, f\left(\frac{x}{t}\right), \qquad t \in \mathbb{R}^{\times},
\end{equation}
of any given integrable function $f \in L^{1}(\mathbb{R}^{n})$ by $t \in \mathbb{R}^{\times}$. This implies that $\mu_{t}$ is also absolutely continuous for each $t \in \mathbb{R}^{\times}$ by definition, and that its density is given by $\rho_{t}$, {\it i.e.},
\begin{equation}\label{eq:density_scaling}
\frac{d\mu_{t}}{d\beta^{n}} = \left(\frac{d\mu}{d\beta^{n}}\right)_{t}, \quad t \in \mathbb{R}^{\times},
\end{equation}
where the l.~h.~s. is the density of the scaled probability measure $\mu_{t}$, and the r.~h.~s. is the density of the original probability measure $\mu$ scaled by $t$ as in \eqref{def:function_scaling}. In the special case where $\mu$ is a probability measure, one may intuitively see that the parametrisation \eqref{def:function_scaling} takes any non-negative integrable function with the total integral of unity ({\it i.e.}, a probability density function) to the `delta function' in the limit $t \to 0$.

\paragraph{Von Neumann Interaction and Convolution}
Now, note here that for each $t \in \mathbb{R}^{\times}$, the probability measure $\mu_{t}$ is nothing but the image measure \eqref{def:Bildmass} of $\mu$ with respect to the map $x \mapsto tx$ ({\it i.e.}, multiplication by $t$). With the help of the change of variables formula for image measures \eqref{eq:Transformationsformel_Bildmass}, one confirms that the equality
\begin{equation}\label{eq:Transformationsformel_Skalierung}
\int_{\mathbb{R}^{n}}f(x)\ d\mu_{t}(x) = \int_{\mathbb{R}^{n}}f(tx)\ d\mu(x), \quad t \in \mathbb{R}
\end{equation}
holds for all $f$ that is integrable with respect to $\mu$. This allows us to rewrite \eqref{eq:ups_prob_dens_func} in terms of convolution as
\begin{align}\label{eq:outcome_prob01}
\mu_{Q}^{\psi^{g}}
    &= \mu_{Q}^{\psi} \ast \left(\mu_{A}^{\phi}\right)_{g}, \quad  g \in \mathbb{R}.
\end{align}
Alternatively, by scaling $\Delta \to g\Delta$ in \eqref{eq:ups_prob_dens_func},
one finds from the definition that
\begin{equation}\label{eq:outcome_prob02}
\left(\mu_{Q}^{\psi^{g}}\right)_{g^{-1}} = \left(\mu_{Q}^{\psi}\right)_{g^{-1}} \ast \mu_{A}^{\phi}, \quad  g \in \mathbb{R}^{\times},
\end{equation}
which is another way to describe how the von Neumann type interaction causes a change in the profile of the meter observable $X=Q$.

\paragraph{Scaling of Observables}

We make a short digression at this point to seek for the physical meaning of the two findings \eqref{eq:outcome_prob01} and \eqref{eq:outcome_prob02}, which we have just acquired.
To prepare for our argument, we first introduce some notations regarding scaling of spectral measures, in parallel to that of complex measures as we have done before. Let $E : \mathfrak{B}^{n} \to L(\mathcal{H})$ be an $n$-dimensional spectral measure on the Hilbert space $\mathcal{H}$, and define a parametrised family $\{E_{t}\}_{t \in \mathbb{R}}$ of spectral measures by
\begin{equation}
E_{t}(\Delta) :=
    \begin{cases}
        E(t^{-1}\Delta), & \quad t \in \mathbb{R}^{\times}, \\
        E_{0}(\Delta), & \quad t = 0,
    \end{cases}
    \qquad \Delta \in \mathfrak{B}^{n}.
\end{equation}
Here, we have introduced the `delta spectral measure' $E_{0}$ centred at $0 \in \mathbb{R}^{n}$, defined by
\begin{equation}\label{def:delta_spectral_meas_0}
E_{0}(\Delta) :=
    \begin{cases}
        I, & \quad 0 \in \Delta, \\
        0, & \quad 0 \notin \Delta,
    \end{cases}
\qquad \Delta \in \mathfrak{B}^{n}.
\end{equation}
Incidentally, for the one-dimensional case ($n=1$), the delta spectral measure $E_{0}$ centred at the origin is nothing but the spectral measure accompanying the zero operator $0$ on $\mathcal{H}$.

We next confirm some basic facts regarding scaling of observables and their accompanying spectral measures. Let $E_{A}$ be the spectral measure of a self-adjoint operator $A : \mathcal{H} \supset \mathrm{dom}(A) \to \mathcal{H}$. The goal is to specify the spectral measure of the scaled self-adjoint operator $tA$, ($t \in \mathbb{R}$) and to show that
\begin{equation}\label{eq:spectral_measure_scaled_obs}
E_{(tA)} = \left(E_{A}\right)_{t}, \quad t \in \mathbb{R},
\end{equation}
where the l.~h.~s. is the desired spectral measure accompanying the scaled operator $tA$, whereas the r.~h.~s. is the spectral measure accompanying the operator $A$ scaled by $t$. To see this, first observe the following equality
\begin{align}
\langle \phi, (tA) \phi \rangle
    &= \int_{\mathbb{R}} ta\ d\mu_{A}^{\phi}(a) \nonumber \\
    &= \int_{\mathbb{R}} a\ d(\mu_{A}^{\phi})_{t}(a) \nonumber \\
    &= \int_{\mathbb{R}} a\ d\langle \phi, (E_{A})_{t}(a) \phi \rangle, \quad t \in \mathbb{R}
\end{align}
for the choice $|\phi\rangle \in \mathrm{dom}(A)$, where we have used \eqref{eq:Transformationsformel_Skalierung} to obtain the second to last equality. Applying the polarisation identity \eqref{def:polarisation_identity} for $T = tA$, one then has
\begin{equation}
\langle \phi^{\prime}, (tA) \phi \rangle = \int_{\mathbb{R}} a\ d\langle \phi^{\prime}, (E_{A})_{t}(a) \phi \rangle, \quad t \in \mathbb{R}
\end{equation}
for any $|\phi\rangle, |\phi^{\prime}\rangle \in \mathrm{dom}(A)$. Observing that the domain of a self-adjoint operator is dense in $\mathcal{H}$ by definition, one may continuously extend the above equality on $|\phi^{\prime}\rangle \in \mathcal{H}$, based on which the uniqueness of the spectral measure leads to the desired result \eqref{eq:spectral_measure_scaled_obs}.

Returning to our main line of arguments, we first observe that the equality \eqref{eq:spectral_measure_scaled_obs} leads to
\begin{equation}
\mu_{(tA)}^{\phi} = \left(\mu_{A}^{\phi}\right)_{t}, \quad t \in \mathbb{R},
\end{equation}
which states that the probability measure describing the ideal measurement outcome of the scaled observable $tA$ on the state $|\phi\rangle \in \mathcal{H}$ coincides with that of the original observable $A$ scaled by $t$. Armed with this result, one may reformulate our previous findings \eqref{eq:outcome_prob01} and \eqref{eq:outcome_prob02} respectively as
\begin{equation}\label{eq:outcome_prob_mod_01_proof}
\left\{
\begin{split}
\mu_{Q}^{\psi^{g}}
   &= \mu_{Q}^{\psi} \ast \mu_{(gA)}^{\phi}, \\
\mu_{P}^{\psi^{g}}
   &= \mu_{P}^{\psi},
\end{split}
\qquad g \in \mathbb{R},\right.
\end{equation}
and
\begin{equation}\label{eq:outcome_prob_mod_02_proof}
\left\{
\begin{split}
\mu_{(g^{-1}Q)}^{\psi^{g}}
   &= \mu_{(g^{-1}Q)}^{\psi} \ast \mu_{A}^{\phi}, \\
\mu_{(gP)}^{\psi^{g}}
   &= \mu_{(gP)}^{\psi},
\end{split}
\qquad g \in \mathbb{R}^{\times},\right.
\end{equation}
where we have also explicitly written down the profile of the outcome of the measurement of $X=P$. This completes our proof for Proposition~\ref{prop:UCM_II} and Corollary~\ref{cor:UCM_II}.

\subsection{Recovery of the Target Profile}

We now consider the inverse problem of what we have discussed so far, that is, we argue how one can recover the probability measure $\mu_{A}^{\phi}$ of the target observable $A$ from the probability measure $\mu_{Q}^{\psi^{g}}$ obtained through the measurement of $Q$ on the meter system.  Following the same line in the previous section, one finds it useful to probe either the strong or the weak region of the interaction for this purpose, which we shall see below one by one.

\subsubsection{Strong Unconditioned Measurement}\label{sec:Strong_Unconditioned_Measurement}

We first concentrate on \eqref{eq:outcome_prob_mod_02} (or equivalently \eqref{eq:outcome_prob02}), and observe that the problem of recovering the desired probability measure reduces to the problem of `deconvolution', where one wishes to find the solution $\mu := \mu_{A}^{\phi}$ of the equation of the form
\begin{equation}\label{eq:invprob}
\nu_{\mathrm{out}} = \nu_{\mathrm{in}} \ast \mu,
\end{equation}
having knowledge and control over both the `input' $\nu_{\mathrm{in}} := \mu_{(g^{-1}Q)}^{\psi}$ and `output' $\nu_{\mathrm{out}} := \mu_{(g^{-1}Q)}^{\psi^{g}}$ on their respective sides. 
Whilst there is rich literature on the topic of deconvolution, we take a specific approach to the solution in order to make our arguments simple. 

\paragraph{Main Objective of this Passage}
A quick observation leads us to a na{\"i}ve expectation that, if one could attune the input so that $\nu_{\mathrm{in}}$ may become a multiplicative identity (in our case, it is the delta measure $\delta_{0}$ centred at the origin), or in the case where this is impossible, if one gradually approximates the input close enough to it, then, one may obtain the desired solution $\mu$ directly as the measured output $\nu_{\mathrm{out}} \to \delta_{0} \ast \mu = \mu$. One of the typical manners in which we attain such gradual approximation would be to fix the initial state $\psi$ and taking the strong limit $g^{-1} \to 0$ ($g \to \pm \infty$) of the interaction parameter, so that $\nu_{\mathrm{in}} = (\mu_{Q}^{\psi})_{g^{-1}}$ `tends' towards the desired identity $\delta_{0}$ in an intuitive manner (recall \eqref{def:measure_scaling} and \eqref{def:function_scaling}). The main objective of this passage is to confirm that this idea is indeed valid, and thus to state it in a mathematically rigorous way.

As it becomes apparent through the line of discussions below, there are some certain mathematical hurdles that must be overcome to achieve this objective. In order to avoid much intricacies, we shall impose certain condition to the choice of the target observable, and present our main result in the following way:
\begin{proposition}[Strong Unconditioned Measurement]
In the context of the UM scheme, suppose that
\begin{enumerate}
\item the target observable $A$ admits description by density functions,
\item the initial profile $\mu_{Q}^{\psi}$ of the meter observable $Q$ on the state $|\psi\rangle$ is compactly supported%
\footnote{We say that a complex measure $\nu$ has a compact support if there exists a compact subset $K \subset \mathbb{R}$ for which the restriction of the variation $|\nu|$ on the complement $|\nu||_{K^{c}} = 0$ is a zero measure.}.
\end{enumerate}
Then, the scaled profile of $Q$ after the interaction converges to the desired target in the strong limit of interaction
\begin{equation}
\lim_{g \to \pm \infty} \left\| \mu_{(g^{-1}Q)}^{\psi^{g}} - \mu_{A}^{\phi} \right\| = 0
\end{equation}
with respect to the total variation norm (or, equivalently the $L^{1}$-norm) for any choice of the initial states $|\phi\rangle \in \mathcal{H}$.
\end{proposition}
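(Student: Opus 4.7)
The plan is to invoke Corollary~\ref{cor:UCM_II} to recast the difference of interest as
\begin{equation*}
\mu_{(g^{-1}Q)}^{\psi^{g}} - \mu_{A}^{\phi} = \mu_{(g^{-1}Q)}^{\psi} \ast \mu_{A}^{\phi} - \mu_{A}^{\phi},
\end{equation*}
and then to exploit the fact that the scaled probability measure $\mu_{(g^{-1}Q)}^{\psi}$ concentrates at the origin as $g\to\pm\infty$ while the density $\rho_{A}^{\phi}$ is continuous under translation in $L^{1}$. Since $A$ admits a description by density functions, $\mu_{A}^{\phi}$ is absolutely continuous with density $\rho_{A}^{\phi} \in L^{1}(\mathbb{R})$; because $L^{1}(\mathfrak{B})$ is an ideal of the measure algebra $\mathbf{M}_{\mathbb{C}}(\mathfrak{B})$, the convolution $\mu_{(g^{-1}Q)}^{\psi} \ast \mu_{A}^{\phi}$ is itself absolutely continuous with density $\rho_{A}^{\phi} \ast \mu_{(g^{-1}Q)}^{\psi}$ by \eqref{eq:density_ac_cmeas}. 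The isometric identification \eqref{eq:isometry} then reduces the problem to establishing $\|\rho_{A}^{\phi} \ast \mu_{(g^{-1}Q)}^{\psi} - \rho_{A}^{\phi}\|_{1} \to 0$.

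Using that $\mu_{Q}^{\psi}$ (and hence $\mu_{(g^{-1}Q)}^{\psi}$) is a probability measure of total mass unity, the pointwise value of the error can be written as
\begin{equation*}
(\rho_{A}^{\phi} \ast \mu_{(g^{-1}Q)}^{\psi})(x) - \rho_{A}^{\phi}(x) = \int_{\mathbb{R}} \bigl[\rho_{A}^{\phi}(x-y) - \rho_{A}^{\phi}(x)\bigr]\, d\mu_{(g^{-1}Q)}^{\psi}(y),
\end{equation*}
which by the scaling change-of-variables formula \eqref{eq:Transformationsformel_Skalierung} rewrites as $\int_{\mathbb{R}}[\rho_{A}^{\phi}(x - g^{-1}y) - \rho_{A}^{\phi}(x)]\, d\mu_{Q}^{\psi}(y)$. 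Taking the $L^{1}$-norm in $x$ and applying Minkowski's integral inequality (valid via Fubini's theorem) yields the estimate
\begin{equation*}
\|\rho_{A}^{\phi} \ast \mu_{(g^{-1}Q)}^{\psi} - \rho_{A}^{\phi}\|_{1} \;\leq\; \int_{\mathbb{R}} \|\rho_{A}^{\phi}(\,\cdot\, - g^{-1}y) - \rho_{A}^{\phi}\|_{1} \, d\mu_{Q}^{\psi}(y).
\end{equation*}

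The compact-support hypothesis on $\mu_{Q}^{\psi}$ enters decisively at the final step: choosing $M > 0$ with $\mathrm{supp}(\mu_{Q}^{\psi}) \subset [-M, M]$, the shift $g^{-1}y$ ranges over the vanishing interval $[-M/|g|, M/|g|]$ as $g \to \pm\infty$, uniformly for $y$ on the support. Combined with the classical fact that translation is strongly continuous in $L^{1}(\mathbb{R})$ --- namely $\|f(\cdot - h) - f\|_{1} \to 0$ as $h \to 0$ for every $f \in L^{1}(\mathbb{R})$, proved by approximation with compactly supported continuous functions --- the integrand tends to zero uniformly in $y \in \mathrm{supp}(\mu_{Q}^{\psi})$; since $\mu_{Q}^{\psi}$ is a probability measure, the integral itself tends to zero, and we conclude. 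The main obstacle is precisely this last step: the compact-support hypothesis is what promotes the pointwise-in-$y$ vanishing of the shift to a uniform statement, and without it one would have to impose extra tail-decay or uniform-integrability control on $\mu_{Q}^{\psi}$ to pass to the limit.
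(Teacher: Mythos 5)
Your proof is correct and follows essentially the same route as the paper: the paper packages the argument in the language of ``approximate identities,'' but its key estimate is exactly your Minkowski/Fubini bound $\|\eta_t \ast f - f\|_1 \leq \int |\eta(y)|\,\|\tau_{(-ty)}f - f\|_1\, d\beta(y)$, concluded via strong continuity of translation in $L^1$ together with the compact support of the meter profile making the shift uniformly small. (As a minor aside, dominated convergence applied to the $y$-integral would let you dispense with uniformity, so the compact-support hypothesis is less ``decisive'' than your closing remark suggests.)
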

\noindent
The remainder of this passage is devoted to its demonstration.

\paragraph{Preliminary Observations}
Let us make a preliminary observation following the above idea. The first thing we realise is that, in general, we cannot prepare the input $\nu_{\mathrm{in}}$ so that its profile may exactly coincide with the multiplicative identity $\delta_{0}$. To see this quickly, first recall
that the realisable input probability measures $\nu_{\mathrm{in}}$ are exactly those that are absolutely continuous with respect to the Lebesgue-Borel measure. Since the delta measure $\delta_{0}$ does not belong to the space $L^{1}(\mathfrak{B})$, one concludes that it is impossible to prepare the input in such a way that $\nu_{\mathrm{in}} = \delta_{0}$ holds.
An alternative approach to this problem may be to consider a sequence of inputs $(\nu_{\mathrm{in}})_{n}$ that tends to the delta measure $\delta_{0}$ in hope that the resultant sequence of multiplicative products $(\nu_{\mathrm{out}})_{n} := (\nu_{\mathrm{in}})_{n} \ast \mu$ also converges towards the desired solution $\mu$ in the limit. Indeed, if one could only construct a sequence $(\nu_{\mathrm{in}})_{n}$ so that
\begin{equation}\label{eq:input_convergence}
\lim_{n \to \infty} \|(\nu_{\mathrm{in}})_{n} - \delta_{0}\| = 0,
\end{equation}
under the total variation norm, one concludes from the evaluation
\begin{align}\label{eq:continuity_convolution}
\|(\nu_{\mathrm{in}})_{n} \ast \mu - \mu\| = \|((\nu_{\mathrm{in}})_{n} - \delta_{0}) \ast \mu \| \leq \|(\nu_{\mathrm{in}})_{n} - \delta_{0}\| \cdot \| \mu \|
\end{align}
that the outcome tends to the desired solution
\begin{equation}\label{eq:output_convergence}
\lim_{n \to \infty} \|(\nu_{\mathrm{out}})_{n} - \mu\| = 0
\end{equation}
in the limit. Unfortunately, however, one immediately realises that this idea also fails, since in general there is no such sequence $(\nu_{\mathrm{in}})_{n}$ that meets the condition \eqref{eq:input_convergence} in the first place, for indeed, since the space $L^{1}(\mathfrak{B})$ of absolutely continuous complex measures is a topologically closed subset of the measure algebra $\mathbf{M}_{\mathbb{C}}(\mathfrak{B})$, a sequence in $L^{1}(\mathfrak{B})$ never converges to an element outside of $L^{1}(\mathfrak{B})$ with respect to the total variation norm.

\paragraph{Discussion on the possible Approaches}

From the quick overview of our current situation, we learn that the problem at hand is to do with the \emph{topology} we have given to the measure algebra $\mathbf{M}_{\mathbb{C}}(\mathfrak{B})$. 
Namely, the topology induced from the total variation norm is too strong (fine) for our convenience. 
A fundamental cure for this would thus be to equip the space with a weaker (coarser) topology on $\mathbf{M}_{\mathbb{C}}(\mathfrak{B})$ such that, at least, it may allow us to construct sufficiently abundant sequences (or nets, in general) of the `inputs' in $L^{1}(\mathfrak{B})$ that converges towards $\delta_{0}$, and that the sequence of the resulting `outputs' ({\it i.e.}, the multiplicative product \eqref{eq:invprob}) would subsequently converge towards the desired solution in the limit%
\footnote{
A straightforward candidate for such a topology would be the weak-$\ast$ topology based on the identification \eqref{eq:RMK-thm} by the Riesz-Markov-Kakutani representation theorem, namely, the initial topology with respect to the family of all algebraic linear functionals of the form $\mu \mapsto \int_{\mathbb{R}} f d\mu$, where $f \in C_{0}(\mathbb{R})$. 
One eventually finds that the norm topology of the total variation is nothing but the strong topology with respect to the identification, which implies that the weak-$\ast$ topology is strictly weaker than the topology we currently have at hand. Moreover, direct application of the dominated convergence theorem and Fubini's theorem reveals that the convergence of a sequence of probability measures $\nu_{n} \to \delta_{0}$ implies $\nu_{n} \ast \mu \to \mu$ (both the convergence is meant in weak-$\ast$), which is a much cleaner result than what we have seen in the main paragraphs. As an example of such a sequence (net) of probability measures converging towards $\delta_{0}$, one finds that the scaling $\nu_{t}$ \eqref{def:measure_scaling} of a given probability measure $\nu$ is typical. In fact, the scaling becomes a continuous parametrisation from $\mathbb{R}$ to $\mathbf{M}_{\mathbb{C}}(\mathfrak{B})$ under the topology, which is also a welcome property.
}.

However, since this strategy, while being desirable, presupposes moderate familiarity with the mathematical branch of general topology, which the authors have deemed to be beyond the scope of this paper, an alternative approach to the problem without explicit exposure to it would be favourable (possibly at the cost of generality, while hopefully having the merit of being mathematically less demanding). In this paper, this would be accomplished by introducing an auxiliary concept of `approximate identities', whose definition would be shortly presented.
In essence, we focus only on the convergence of the output in the total variation norm, based on the observation that, even though there is \emph{no} sequence of the input that converges to the delta measure \eqref{eq:input_convergence}, there are certain conditions in which the sequence of the output \emph{do} converge towards the desired solution \eqref{eq:output_convergence}. As a preliminary observation to this approach, observe that the output $\nu_{\mathrm{out}}$ also necessarily lies in $L^{1}(\mathfrak{B})$%
\footnote{To see this, recall that the output can be written as a multiplicative product of two probability measures with one of which being absolutely continuous, and that the space $L^{1}(\mathfrak{B})$ of absolutely continuous complex measures is an ideal in $\mathbf{M}_{\mathbb{C}}(\mathfrak{B})$.},
and by recalling that $L^{1}(\mathfrak{B})$ is closed under the topology induced by the total variation norm, one finds that the candidates of the solution $\mu$ towards which the sequence of outputs could ever converge are only those that also lie in $L^{1}(\mathfrak{B})$. Based on this inspection, in what follows, we shall only treat the case in which \emph{the target observable $A$ admits a description by density functions},
which is to say that the solutions $\mu = \mu_{A}^{\phi}$ are always guaranteed to lie in $L^{1}(\mathfrak{B})$, is assumed. 

\paragraph{Approximate Identities}

The convolution algebra $L^{1}(\mathfrak{B}) \cong L^{1}(\mathbb{R}^{n})$, contrasted to the measure algebra $\mathbf{M}_{\mathbb{C}}(\mathfrak{B})$,  is non-unital.
In order to compensate the inconvenience arising from the lack of a multiplicative identity, a weaker concept is often used in analysing problems involving algebras.
In this paper, we call a family $\{e_{t}\}_{t>0}$ of elements of $L^{1}(\mathbb{R}^{n})$ an \emph{approximate identity}, if for every element $f\in L^{1}(\mathbb{R}^{n})$, the convolution $e_{t} \ast f$ converges to $f$ in the topology induced by the $L^{1}$-norm, {\it i.e.},
\begin{equation}\label{eq:appident}
\lim_{t \to 0} \|e_{t} \ast f - f\|_{1} = 0, \quad f \in L^{1}(\mathbb{R}^{n}).
\end{equation}

Before we move on to the construction of an example, we collect some necessary terminologies. Recall that the support of a function $f: \mathbb{R}^{n} \to \mathbb{K}$ is a subset of $\mathbb{R}^{n}$ defined by
\begin{equation}
\mathrm{supp}(f) := \overline{\{x \in \mathbb{R}^{n} : f(x) \ne 0\}},
\end{equation}
where the overline on a set denotes its topological closure. A support of a function $f: \mathbb{R}^{n} \to \mathbb{K}$ is said to be compact if $\mathrm{supp}(f)$ is bounded.
Now, let $\eta \in L^{1}(\mathbb{R}^{n})$ be any integrable function possessing a compact support with the total integration of unity,
\begin{equation}
\int_{\mathbb{R}^{n}} \eta(x)\ dx = 1.
\end{equation} 
With this, consider a family $\{\eta_t\}_{t \in \mathbb{R}^{\times}}$ of scaled functions defined as in \eqref{def:function_scaling},
which preserve the total integration of unity for all $t \in \mathbb{R}^{\times}$.
One may then intuitively expect that $\eta_t$ tends to the `delta function' in the limit $t \to 0$ and can be used for an approximate identity,
\begin{equation}\label{ineq:approx_ident}
\lim_{t \to 0} \|\eta_t \ast f - f \|_{1} = 0,
\end{equation}
for all $f \in L^{1}(\mathbb{R}^{n})$. 
To confirm that this is indeed the case, observe the inequality
\begin{align}
\|\eta_t \ast f - f \|_{1} 
    &:= \int_{\mathbb{R}^{n}} \left| \left( \int_{\mathbb{R}^{n}} \eta_t(y) f(x-y) \ d\beta^{n}(y) \right) - f(x) \right|\ d\beta^{n}(x) \nonumber \\
    &= \int_{\mathbb{R}^{n}} \left| \int_{\mathbb{R}^{n}} \eta_t(y) (f(x-y) - f(x))\ d\beta^{n}(y) \right|\ d\beta^{n}(x) \nonumber \\
    &\leq \int_{\mathbb{R}^{n}} |\eta(y)| \left( \int_{\mathbb{R}^{n}}  |f(x-ty) - f(x)|\ d\beta^{n}(x) \right)\ d\beta^{n}(y) \nonumber \\
    &= \int_{\mathbb{R}^{n}} |\eta(y)| \cdot \|\tau_{(-ty)}f- f\|_{1}\ d\beta^{n}(y),
\end{align}
where $\tau_{a}$ is the translation operator defined by
\begin{equation}\label{eq:translation}
\tau_{a}f(x) := f(x + a).
\end{equation}
Recalling that $\lim_{a \to 0} \|\tau_{a}f- f\|_{1} = 0$ for any $f \in L^{1}(\mathbb{R}^{n})$, we see that for any $\epsilon > 0$, there exists a $\delta >0$ for which $a \in K_{\delta}(0) := \{x \in \mathbb{R}^{n} : |x| < \delta \}$ leads to $\|\tau_{a}f- f\|_{1} < \epsilon$. By taking $|t|$ small enough so that $\mathrm{supp}(\eta) \subset K_{t^{-1}\delta}(0)$, we find that the r.~h.~s of the above inequality is less than $\epsilon$. This shows that the family defined by
\begin{equation}
e_{t} := \eta_{(\pm t)}, \quad t > 0,
\end{equation}
makes a simple example of approximate identities (here, the meaning of the subscript on both sides of the equation is not to be confused, where the subscript on the l.~h.~s. indicates an index of the elements of the convolution algebra $L^{1}(\mathbb{R}^{n})$, whereas that on the r.~h.~s. indicates the scaling parameter of an integrable function $\eta$ defined in \eqref{def:function_scaling}).
Obviously, the construction of such approximate identities is highly non-unique, and one may attain it in various different ways.

\paragraph{Realisation of Approximate Identities}
Our observation so far revealed that, as long as the target profile $\mu \in L^{1}(\mathfrak{B})$ is absolutely continuous, by considering the family of inputs $\{(\nu_{\mathrm{in}})_{t}\}_{t > 0}$
in such a way that it makes an approximate identity in $L^{1}(\mathfrak{B})$, the resulting family of outputs $(\nu_{\mathrm{out}})_{t} := (\nu_{\mathrm{in}})_{t} \ast \mu$ would successfully converge to the desired solution
\begin{equation}
\lim_{t \to 0} \|(\nu_{\mathrm{out}})_{t} - \mu \| = 0
\end{equation}
in the $L^{1}$-norm (or equivalently, in the total variation norm)%
\footnote{We note again that the subscripts $t$ used here is meant to be an index, and not to be confused with that denoting scaling of complex measures.}.
We are now interested in the construction of such approximate identities for our current situation.
To this, we first observe that, since the profile of the input $\nu_{\mathrm{in}} = \mu_{(g^{-1}Q)}^{\psi}$ in our case is exclusively determined by the choice of the interaction parameter $g$ and the initial state $|\psi\rangle \in \mathcal{K}$ of the meter system, the problem reduces to finding a sequence of the pair
$(g, |\psi\rangle)_{t}$, $t>0$ that makes the input an approximate identity. As an example of such a construction, we first fix the initial state $|\psi\rangle$ and observe that the density of the input is given by
\begin{equation}
\frac{d\mu_{(g^{-1}Q)}^{\psi}}{d\beta} = \left(\frac{d\mu_{Q}^{\psi}}{d\beta}\right)_{g^{-1}}, \quad g \in \mathbb{R}^{\times},
\end{equation}
where we have used our previous result \eqref{eq:density_scaling}. Then, choosing $|\psi\rangle$ so that the density of $\mu_{Q}^{\psi}$ may be compactly supported, one realises that taking the strong limit of the interaction $g^{-1} \to 0$ (or equivalently $g \to \pm \infty$) yields the desired result. In turn, we fix the interaction parameter $g \in \mathbb{R}^{\times}$ and choose a sequence of initial states that makes the corresponding probability measures an approximate identity. Since the scaling of an approximate identity by $g^{-1}$ is still an approximate identity, one achieves another example of such a construction.

One thus finds a general guiding principle for the construction of an approximate identity to be the combination of the two manoeuvres, namely, either
\begin{itemize}
\item by taking the strong limit of the interaction $g^{-1} \to 0$,
\item by narrowing down the profile of the probability measures to the delta measure (symbolically $\mu_{Q}^{\psi} \to \delta_{0}$) by changing the meter state $|\psi\rangle \in \mathcal{K}$.
\end{itemize}
In order to explicitly see how these work together, choose a sequence of initial states $|\psi\rangle$, $|\psi_{h}\rangle \in \mathcal{K}$, $h >0$, such that the density of the initial profile $\mu_{Q}^{\psi}$ is compactly supported and that the parametrisation corresponds to its scaling
\begin{equation}\label{eq:approx_ident_state}
\frac{d\mu_{Q}^{\psi_{h}}}{d\beta} = \left( \frac{d\mu_{Q}^{\psi}}{d\beta} \right)_{h},
\end{equation}
which makes itself an approximate identity as $h \to 0$
(one may easily construct such a sequence in the special case in which the meter system is described in the Schr{\"o}dinger representation of the CCR%
\footnote{One may choose any wave-function $\psi \in L^{2}(\mathbb{R})$ with compact support,
and define
\begin{equation}\label{eq:approx_ident_state_Sch}
\psi_{(h)}(x) := |h|^{-1/2}\psi\left( \frac{x}{h} \right).
\end{equation}
Here, the braces among the subscript $h$ to denote the index is merely employed in order to avoid confusion with that denoting scaling of a function \eqref{def:function_scaling}. One then readily finds that this qualifies as an example of the desired family \eqref{eq:approx_ident_state}.}).
 Then, observing that the scaling of it by $g^{-1}$ is
\begin{equation}\label{eq:ups_g_and_h}
\left(\frac{d\mu_{Q}^{\psi_{h}}}{d\beta}\right)_{g^{-1}} = \left(\frac{d\mu_{Q}^{\psi}}{d\beta}\right)_{hg^{-1}},
\end{equation}
one finds that it is indeed an approximate identity that tends to the delta in the limit as $hg^{-1} \to 0$ together.

\paragraph{Concluding Remarks}
In conclusion, we see that the UM scheme allows us to recover the information of the target system and its observable $A$, not only in the form of expectation values described earlier, but also in the form of probability measures $\mu_{A}^{\phi}$. This is accomplished by taking the limit of either narrowing the profile of the probability measure $\mu_{Q}^{\psi}$ of the meter system, or intensifying the interaction parameter $g \to \pm \infty$, or otherwise by appropriately balancing both contributions and having $hg^{-1} \to 0$ as a whole. In this sense, we may say that intensifying the interaction parameter has an equivalent role to narrowing the profile of the probability measure of the meter.
It may thus appear reasonable that, also in this respect, the von Neumann measurement scheme is sometimes referred to as the `strong measurement' or the `sharp measurement'.

\subsubsection{Weak Unconditioned Measurement}\label{sec:ups_II_wups}

We shall see next how the measurement outcome of the UM scheme behaves locally around $g=0$ in terms of probability measures. Specifically, we are interested in the (higher-order) derivatives of the map
\begin{equation}\label{eq:param_to_outcome}
\mathbb{R} \to L^{1}(\mathfrak{B}), \ g \mapsto \mu_{Q}^{\psi^{g}},
\end{equation}
which is now a map from the real line $\mathbb{R}$ to the space of complex measures $L^{1}(\mathfrak{B}) \subset \mathbf{M}_{\mathbb{C}}(\mathfrak{B})$.

\paragraph{Main Objective of this Passage}

The main objective of this passage is to first compute the derivatives of the map \eqref{eq:param_to_outcome} at the origin $g=0$, and subsequently argue how one may reconstruct the profile of the probability measure $\mu_{A}^{\phi}$ of our interest from the information obtained. However, as one realises in the line of discussion that follows, this involves certain mathematical intricacies. In order to avoid any difficulties and complication that may arise, we impose some restrictions to the configuration of the target and meter systems, and thus obtain the following two propositions, the first of which shall be demonstrated in the main passages below.
\begin{proposition}[Outcome of the Weak Unconditioned Measurement]\label{prop:WUCM_II}
In the context of the UM scheme, suppose that
\begin{enumerate}
\item the target profile $\mu_{A}^{\phi}$ is compactly supported,
\item the density of $\mu_{Q}^{\psi}$ belongs to the Schwartz space $d\mu_{Q}^{\psi}/d\beta \in \mathscr{S}(\mathbb{R})$.
\end{enumerate}
Then, the map \eqref{eq:param_to_outcome} is arbitrarily many times strongly differentiable in the $L^{1}$-norm (or, equivalently, in the total variation norm), and its derivatives at $g=0$ reads
\begin{equation}\label{eq:main_result_wupsm}
\left. \frac{d^{n}}{dg^{n}} \mu_{Q}^{\psi^{g}} \right|_{g=0} = \mathbb{E}[A^{n};\phi] \cdot (-D)^{n} \mu_{Q}^{\psi}, \quad n \in \mathbb{N}_{0},
\end{equation}
where $D$ denotes the operation uniquely specified through the relation
\begin{equation}\label{def:diff_measure}
d(D \nu) / d\beta := D (d\nu / d\beta),
\end{equation}
by differentiating the density of absolutely continuous complex measures $\nu \in L^{1}(\mathfrak{B})$ whose density $d\nu/d\beta \in \mathscr{S}(\mathbb{R})$ lies in the Schwartz space.
\end{proposition}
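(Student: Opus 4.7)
The plan is to exploit the convolution structure established in Proposition~\ref{prop:UCM_II}, namely $\mu_{Q}^{\psi^{g}} = \mu_{Q}^{\psi} \ast \mu_{(gA)}^{\phi}$, together with the Schwartz regularity of the meter density. Writing $\rho := d\mu_{Q}^{\psi}/d\beta \in \mathscr{S}(\mathbb{R})$ and invoking the change-of-variables formula for image measures, this convolution admits an absolutely continuous density
\begin{equation*}
\rho^{g}(x) := \frac{d\mu_{Q}^{\psi^{g}}}{d\beta}(x) = \int_{\mathbb{R}} \rho(x-ga)\, d\mu_{A}^{\phi}(a),
\end{equation*}
where the integration is effectively supported on the compact set $K \subset \mathbb{R}$ carrying $\mu_{A}^{\phi}$. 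Compactness of $K$ immediately guarantees finiteness of every moment $\mathbb{E}[A^{n};\phi] = \int_{K} a^{n}\, d\mu_{A}^{\phi}(a)$, so that the right-hand side of \eqref{eq:main_result_wupsm} is well-defined; by the isometric isomorphism \eqref{def:abs_cont_ident} with \eqref{eq:isometry}, strong differentiability of $g \mapsto \mu_{Q}^{\psi^{g}}$ in the total variation norm is equivalent to strong differentiability of $g \mapsto \rho^{g}$ in the $L^{1}(\mathbb{R})$-norm, which is what I shall establish.

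The formal candidate for the $n$-th derivative is obtained by differentiating under the integral. For fixed $x$ one has $\partial_{g}^{n}\rho(x-ga) = (-a)^{n}\rho^{(n)}(x-ga)$, and since $|a|^{n}\leq M^{n}$ on $K$ (with $M := \sup_{a \in K}|a|$) while $\rho^{(n)} \in \mathscr{S}(\mathbb{R})$ is bounded, the Leibniz rule for parameter-dependent integrals yields
\begin{equation*}
\frac{d^{n}\rho^{g}}{dg^{n}}(x) = \int_{K}(-a)^{n}\rho^{(n)}(x-ga)\, d\mu_{A}^{\phi}(a).
\end{equation*}
Evaluating at $g=0$ decouples the integrand, producing $(-1)^{n}\mathbb{E}[A^{n};\phi]\,\rho^{(n)}(x)$, which under the identification \eqref{def:abs_measure_identification} is precisely the density of $\mathbb{E}[A^{n};\phi]\cdot(-D)^{n}\mu_{Q}^{\psi}$ as claimed.

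To promote the pointwise statement to strong differentiability in $L^{1}$, I would argue as follows. Taylor's theorem applied to $\rho$ yields, for any fixed $g$ and sufficiently small $|h|$,
\begin{equation*}
\left|\frac{\rho^{g+h}(x)-\rho^{g}(x)}{h} - \int_{K}(-a)\rho'(x-ga)\, d\mu_{A}^{\phi}(a)\right| \leq |h|\cdot M^{2}\sup_{|\xi|\leq M|h|}|\rho''(x-ga-\xi)|,
\end{equation*}
and the right-hand side is bounded, uniformly in $|h|\leq h_{0}$, by an $L^{1}(\mathbb{R})$ function of $x$ thanks to the Schwartz decay of $\rho''$. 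Dominated convergence in $x$ then delivers $L^{1}$-convergence of the difference quotient, establishing strong differentiability at $g$. Iterating the very same estimate with $\rho^{(n+1)}$ in place of $\rho''$ (using the uniform bound $|a|^{n}\leq M^{n}$ on $K$) yields differentiability of all orders in $L^{1}$, and the isometric isomorphism transports the conclusion back to the total variation norm.

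The main obstacle I anticipate is producing a uniform $L^{1}$ dominator that survives passage to each higher derivative without subtle loss of integrability. Here the Schwartz property of $\rho$ is the crucial ingredient: it closes the space of densities under differentiation and under multiplication by polynomials, so any combination of the form $|a|^{k}\rho^{(n)}(x-ga)$ bounded on $K$ automatically admits an $L^{1}$ majorant in $x$ uniformly in $g$. A more streamlined alternative would be to pass to the Fourier side, using $\widehat{\mu_{Q}^{\psi^{g}}}(\xi) = \widehat{\rho}(\xi)\,\widehat{\mu_{A}^{\phi}}(g\xi)$ where $\widehat{\mu_{A}^{\phi}}$ is entire by the Paley--Wiener theorem, making smoothness in $g$ transparent; but inverting back to $L^{1}$ brings one back to essentially the same estimates, so the direct approach via densities is cleaner.
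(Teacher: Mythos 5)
Your proposal is correct and follows essentially the same route as the paper: both reduce the problem via the convolution identity to the density representation $\rho^{g}(x)=\int\rho(x-ga)\,d\mu_{A}^{\phi}(a)$, differentiate under the integral sign, and upgrade pointwise to strong $L^{1}$-differentiability by exploiting the compact support of $\mu_{A}^{\phi}$ together with the Schwartz decay of $\rho$ and its derivatives. The only (immaterial) difference is the final estimate — you control the difference quotient by a second-order Taylor remainder with a Schwartz $L^{1}$-majorant, whereas the paper invokes the mean-value theorem and the $L^{1}$-continuity of translations $\|\tau_{a}f-f\|_{1}\to 0$; both devices close the argument, and your iteration through all orders $n$ mirrors the paper's induction with $(-D)^{n}\nu_{\mathrm{in}}$ and $x^{n}\odot\mu$.
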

\noindent
Note that compactness of the support of $\mu_{A}^{\phi}$ implies the existence of all the higher-order moments $|\, \mathbb{E}[A^{n};\phi]\,| < \infty$ of the observable $A$, and that the Schwartz space is closed under the operation of differentiation ({\it i.e.}, $D^{n} (d\nu / d\beta) \in \mathscr{S}(\mathbb{R})$), hence both sides of \eqref{eq:main_result_wupsm} is well-defined.
Operationally, the above proposition implies that one may obtain not only the expectation value ($n=1$) of $\mu_{A}^{\phi}$, as we have found by the operator level analysis \eqref{eq:ups_weak_I} conducted in the previous section, but also its higher-order moments
\begin{equation}\label{eq:higher-order_moments_A}
\mathbb{E}[A^{n};\phi] = \int_{\mathbb{R}} a^{n}\ d\mu_{A}^{\phi}(a), \quad n \in \mathbb{N}_{0},
\end{equation}
by probing the local behaviour of the interaction around $g=0$. Incidentally, one might expect that one could recover the full profile of the original probability measure $\mu_{A}^{\phi}$ by knowing enough numbers of its higher-order moments, which in fact turns out to be positive under our assumption.
\begin{proposition}[Weak Unconditioned Measurement]
Let $A$ be self-adjoint and $|\phi\rangle \in \mathcal{H}$ for which the probability measure $\mu_{A}^{\phi}$ is compactly supported. Given another compactly supported probability measure $\mu$ on $(\mathbb{R}, \mathfrak{B})$ such that all their higher moments
\begin{equation}
\mathbb{E}[A^{n};\phi] = \int_{\mathbb{R}} a^{n}\ d\mu(a), \quad n \in \mathbb{N}_{0},
\end{equation}
coincide with those of $\mu_{A}^{\phi}$, then the two probability measures agree $\mu = \mu_{A}^{\phi}$. In other words, one may uniquely reconstruct the probability measure $\mu_{A}^{\phi}$ of the target system by knowing all the higher moments of $A$ by means of the weak UM.
\end{proposition}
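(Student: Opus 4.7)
The plan is to reduce the claim to the classical Hausdorff moment problem, which in turn follows from the Weierstrass approximation theorem combined with the uniqueness part of the Riesz--Markov--Kakutani representation theorem reviewed earlier in the paper.

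First, since both $\mu_{A}^{\phi}$ and $\mu$ are compactly supported, I would pick a closed bounded interval $K = [-R, R] \subset \mathbb{R}$ large enough that $\mathrm{supp}(\mu_{A}^{\phi}) \cup \mathrm{supp}(\mu) \subset K$. Both measures can then be regarded as finite positive Borel measures on the compact metric space $K$, and each $f \in C(K)$ extended by zero outside $K$ is $\mu_{A}^{\phi}$- and $\mu$-integrable, so the hypothesis that all higher moments coincide translates directly into
\begin{equation}
\int_{K} p(a)\ d\mu_{A}^{\phi}(a) = \int_{K} p(a)\ d\mu(a)
\end{equation}
for every polynomial $p \in \mathbb{R}[a]$ (and thus, by taking real and imaginary parts, for every polynomial in $\mathbb{C}[a]$).

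Next, I would invoke the Weierstrass approximation theorem: the algebra of polynomials is uniformly dense in the Banach space $(C(K), \| \cdot \|_{\infty})$ of continuous $\mathbb{C}$-valued functions on $K$. Given any $f \in C(K)$ and any $\epsilon > 0$, choose a polynomial $p$ with $\|f - p\|_{\infty} < \epsilon$; since $\mu_{A}^{\phi}(K), \mu(K) < \infty$, the triangle inequality gives
\begin{equation}
\left| \int_{K} f\ d\mu_{A}^{\phi} - \int_{K} f\ d\mu \right| \leq \epsilon \bigl( \mu_{A}^{\phi}(K) + \mu(K) \bigr),
\end{equation}
and letting $\epsilon \to 0$ yields $\int_{K} f\ d\mu_{A}^{\phi} = \int_{K} f\ d\mu$ for every $f \in C(K)$. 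Equivalently, extending functions in $C_{0}(\mathbb{R})$ by restriction, $\int_{\mathbb{R}} f\ d\mu_{A}^{\phi} = \int_{\mathbb{R}} f\ d\mu$ for all $f \in C_{0}(\mathbb{R})$.

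Finally, by the uniqueness part of the Riesz--Markov--Kakutani representation theorem recalled in Section~\ref{sec:ups_II_pre} (specifically, the injectivity of the correspondence $\Phi$ in \eqref{eq:RMK-thm_cor}), the finite signed measure $\mu_{A}^{\phi} - \mu \in \mathbf{M}_{\mathbb{C}}(\mathfrak{B})$ is determined by its action on $C_{0}(\mathbb{R})$, so $\mu_{A}^{\phi} - \mu = 0$, {\it i.e.}, $\mu = \mu_{A}^{\phi}$. The only subtle step is the last one, since Riesz--Markov as stated gives a bijection with $C_{0}^{\prime}(\mathbb{R})$ rather than $C(K)^{\prime}$; however, compact support of both $\mu$ and $\mu_{A}^{\phi}$ ensures that integrating against any $f \in C_{0}(\mathbb{R})$ is the same as integrating its restriction against the corresponding $f|_{K} \in C(K)$, so no additional work beyond invoking the already-reviewed theorem is required. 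Compact support thus simultaneously makes the polynomials integrable (so the moment hypothesis even makes sense) and lets us pass from polynomial-integrals to $C_{0}$-integrals via uniform approximation on a single compact set, circumventing the well-known counterexamples of the general (unbounded-support) moment problem.
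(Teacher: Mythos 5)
Your proof is correct and follows essentially the same route as the paper's: restrict to a compact set containing both supports, pass from coinciding moments to coinciding integrals of all continuous functions via the Stone--Weierstra{\ss} approximation theorem, and conclude $\mu = \mu_{A}^{\phi}$ from the injectivity in the Riesz--Markov--Kakutani representation theorem. The only difference is cosmetic: the paper works with $C(K) = C_{0}(K)$ and $\mathbf{M}_{\mathbb{C}}(\mathfrak{B}|_{K})$, while you phrase the last step on $C_{0}(\mathbb{R})$ and handle the restriction explicitly.
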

\begin{proof}
In fact, this is one instance of the famous problems collectively called the classical \emph{moment problem} \cite{Hausdorff_1921_a,Hausdorff_1921_b}. We provide a sketch of the proof for our specific case at hand, and to this, we first observe that knowing all the higher-order moments \eqref{eq:higher-order_moments_A} is equivalent to knowing the integral $\int p(a)\, d\mu_{A}^{\phi}(a)$ of all polynomials $p \in P(K)$ on some compact subset $K \subset \mathbb{R}$ on which $\mu_{A}^{\phi}$ is supported. Now, choose a compact subset $K \subset \mathbb{R}$ that contains the support of both $\mu_{A}^{\phi}$ and $\mu$, {\it i.e.}, $\mu|_{K^{c}} = \mu_{A}^{\phi}|_{K^{c}} = 0$, and observe that the space of continuous functions on $K$ trivially coincide with that of continuous functions on $K$ that vanishes at infinity $C(K) = C_{0}(K)$. We thus have $C(K)^{\prime} = C_{0}(K)^{\prime} \cong \mathbf{M}_{\mathbb{C}}(\mathfrak{B}|_{K})$ by the Riesz-Markov-Kakutani representation theorem. Since the space of polynomials $P(K)$ is dense in $C(K)$ with respect to the supremum norm ({\it cf.} Stone-Weierstra{\ss} approximation theorem), one concludes that $\int_{\mathbb{R}} p(a)\, d\mu(a) = \int p(a)\, d\mu_{A}^{\phi}(a)$, $p \in P(K)$ implies $\mu = \mu_{A}^{\phi}$.
\end{proof}

\paragraph{Preliminary Observation}
We now begin our analysis. To provide some preliminary observation to this problem, we start by observing that the target of our study would be the following formal expression
\begin{equation}\label{eq:wups_prob_formal}
\left. \frac{d}{dg} \mu_{Q}^{\psi^{g}} \right|_{g=0} := \lim_{g \to 0} \frac{\mu_{Q}^{\psi^{g}} - \mu_{Q}^{\psi}}{g},
\end{equation}
in which we leave aside, just for now, all the inherent subtleties that will shortly become apparent regarding the operation of taking the limit.
Now, since the numerator of the r.~h.~s. of the above formula can be written as
\begin{equation}
\mu_{Q}^{\psi^{g}} - \mu_{Q}^{\psi} = \mu_{Q}^{\psi} \ast \left( \left(\mu_{A}^{\phi}\right)_{g} - \delta_{0} \right),
\end{equation}
one finds that the analysis of \eqref{eq:wups_prob_formal} reduces to the study of the formal expression of the form
\begin{equation}\label{eq:wups_form}
\nu_{\mathrm{out}}^{\prime}(0) := \left. \frac{d}{dt}\nu_{\mathrm{out}}(t) \right|_{t=0} = \lim_{t \to 0}\ \nu_{\mathrm{in}} \ast \frac{\mu_{t} - \delta_{0}}{t},
\end{equation}
where $\mu, \nu_{\mathrm{in}} \in \mathbf{M}_{\mathbb{C}}(\mathfrak{B})$ are probability measures (the latter being absolutely continuous), $\nu_{\mathrm{out}}(t) := \nu_{\mathrm{in}} \ast \mu_{t}$, and the subscript on $\mu_{t}$ denotes the scaling defined in \eqref{def:measure_scaling}.
In studying \eqref{eq:wups_form}, one might find it a decent starting point to focus on the formal expression (the right component of the above convolution)
\begin{equation}\label{eq:diff_measure_scaling}
\lim_{t \to 0}\ \frac{\mu_{t} - \delta_{0}}{t} =: \mu_{0}^{\prime}.
\end{equation}
From this, one realises that our problem is nothing but the differentiability of the map $t \mapsto \mu_{t}$ at the origin $t=0$ (recall that we have defined $\mu_{0} := \delta_{0}$ for any probability measure $\mu$),
and thus have symbolically written the limit of the above expression by $\mu_{0}^{\prime}$, temporarily leaving aside the question of its existence and well-definedness just as before. It would then be tempting to expect
\begin{equation}\label{eq:diff_conv}
\nu_{\mathrm{out}}^{\prime}(0) = \nu_{\mathrm{in}} \ast \mu_{0}^{\prime},
\end{equation}
which should resolve our main problem fairly nicely.

\paragraph{A Formal Computation of the Derivative}
Guided by the above na{\"i}ve observation, we are naturally led to consider what the derivative of the map $t \to \mu_{t}$ at $t=0$ for a given probability measure $\mu \in \mathbf{M}_{\mathbb{C}}(\mathfrak{B})$ would look like. 
As a first step, suppose for simplicity that $\mu$ is absolutely continuous, and denote its density by $\eta := d\mu/d\beta$. Armed with our previous findings $\eta_{t} = d\mu_{t}/d\beta$, $t \in \mathbb{R}^{\times}$ regarding scaling of measures and that of its densities (see \eqref{eq:density_scaling}), we then intend to formally obtain
\begin{equation}
\mu_{0}^{\prime} = \lim_{t \to 0} \mu_{t}^{\prime}
\end{equation}
in view of density functions, by first computing its derivative at $t>0$ and then taking the limit $t \to 0$.
Now, assuming suitable differentiability and integrability conditions for the density $\eta$, one computes the derivative of the map $t \mapsto \eta_{t}$ at $t > 0$ as
\begin{align}
\lim_{h \to 0} \frac{\eta_{t + h} - \eta_{t}}{h}
    &= -\frac{1}{t^{2}} \eta\left(\frac{x}{t}\right) - \frac{x}{t^{3}}(D\eta)\left(\frac{x}{t}\right) \nonumber \\
    &= - D \left( \frac{1}{t}\frac{x}{t}\eta\left(\frac{x}{t}\right) \right) \nonumber \\
    &= - D\left(x\eta\right)_{t}, \quad t > 0,
\end{align}
where $D := d/dx$ was the usual operation of differentiation. Then, one might be tempted to formally proceed as
\begin{align}
\lim_{t \to 0} D\left(x\eta\right)_{t}
    &= D \left[ \lim_{t \to 0} \left(x\eta\right)_{t} \right] \nonumber \\
    &= D \left[ \left( \int_{\mathbb{R}} x\eta\ d\beta \right) \cdot \delta_{0} \right] \nonumber \\
    &= \mathbb{E}[x;\mu] \cdot D \delta_{0},
\end{align}
where we have used \eqref{def:measure_scaling} in the second equality.
The above argument implies that the derivative of the map $t \to \mu_{t}$ at the origin would appear as
\begin{equation}\label{eq:derivative_scaled_measure}
\mu_{0}^{\prime} = \mathbb{E}[x;\mu] \cdot (- D) \delta_{0},
\end{equation}
which is the `derivative of the delta measure' weighted by the expectation value of the original probability measure $\mu$. 
As for the general case in which the original probability measure $\mu$ is now not necessarily absolutely continuous, 
we may conjecture that, since the r.~h.~s. of \eqref{eq:derivative_scaled_measure} does not depend on the absolute continuity of the original probability measure $\mu$, the same result should hold even in the general case as well.

\paragraph{Discussion on the possible Approaches}
While we have conducted a very formal discussion above, the result in fact turns out to be true and can be made mathematically fully rigorous in the framework of the \emph{theory of generalised functions (distributions)}. In fact, it turns out that the derivative $\mu_{0}^{\prime}$ that appears in \eqref{eq:derivative_scaled_measure} is no longer a member of the space $\mathbf{M}_{\mathbb{C}}(\mathfrak{B})$ of complex measures%
\footnote{Incidentally, one may recall that the (higher-order) derivatives of the delta distribution appears in several branches of physics, one of the most familiar of which being presumably the theory of electromagnetism. The derivative of the delta distribution $D \delta_{0}$ is among the most well-known example of a distribution that cannot be expressed by a complex measure. In order to provide an intuitive reasoning with the tools at hand, let $\varphi$ be a smooth function with compact support ({\it i.e}, a test function) satisfying $(D\varphi)(0) = 1$. As a concrete example, one may take $\varphi(x) := x\varphi_{0}(x)$ with
\begin{equation}
\varphi_{0}(x) :=
    \begin{cases}
    e^{-\frac{1}{1-x^{2}}}  & (|x| < 1) \\
    0     &(|x| \geq 1).
    \end{cases}
\end{equation}
Defining a sequence of test functions by $\varphi_{n}(x) := n^{-1}\varphi(nx)$, $n \in \mathbb{N}^{\times}$, observe that the dominated convergence theorem necessarily implies $\lim_{n \to \infty} \int_{\mathbb{R}} \varphi_{n} d\mu = 0$ for any complex measure $\mu \in \mathbf{M}_{\mathbb{C}}(\mathfrak{B})$. On the other hand, with the help of an auxiliary smooth density function $\rho$ to symbolically express the delta distribution by the limit of its scaling $\delta_{0} = \lim_{t \to 0} \rho_{t}$, one may formally compute the integral of $\varphi_{n}$ weighted by the `density' $D \delta_{0}$ as
\begin{align}\label{eq:distributional_delta_not_a_measure}
\int_{\mathbb{R}} \varphi_{n}(x)\ (D\delta_{0})(x) d\beta(x)
    &= \lim_{t \to 0} \left( \int_{\mathbb{R}} \varphi_{n}(x)\ (D\rho_{t})(x) d\beta(x) \right) \nonumber \\
    &= \lim_{t \to 0} \left( -\int_{\mathbb{R}} (D\varphi_{n})(x)\ \rho_{t}(x) d\beta(x) \right) \nonumber \\
    &= -\int_{\mathbb{R}} (D\varphi_{n})(x)\ \delta_{0}(x) d\beta(x),
\end{align}
where we have used integration by parts to obtain the second equality. This implies $\lim_{n \to \infty} \int_{\mathbb{R}} \varphi_{n} (D\delta_{0})d\beta = \lim_{n \to \infty} -(D\varphi_{n})(0) = -1$, which would lead to a contradiction if $(D\delta_{0})$ were to be expressed by a complex measure.
},
and accordingly the framework in which we have been working so far ({\it i.e.}, the space of complex measures) is insufficient for our analysis.
For further study of the weak UM scheme, a preferable approach would thus be to expand our framework by introducing the space of distributions. While this method has a great merit in being able to conduct our analysis with decent generality (and in fact, distributions have their role, not just in this subsection, but also later in studying the quasi-joint-probability distributions in Section~\ref{sec:ps_II} and \ref{sec:qp_qo}), at the same time, it has a drawback in that it would be rather mathematically demanding, especially since the theory of distributions is build up on the results of general topology.

In view of this, an alternative approach to the problem without direct exposure to the theory of distributions would be favourable. To this end, recalling the idea employed in the previous subsection, we concentrate only on the differentiability of the multiplicative product \eqref{eq:wups_form}, setting aside the intricacies involving that of the map $t \mapsto \mu_{t}$ we have seen above. To see what we mean, we first expect, by combining \eqref{eq:diff_conv} and \eqref{eq:derivative_scaled_measure}, that the derivative of the map $t \mapsto \nu_{\mathrm{out}}(t)$ at the origin be written as
\begin{equation}
\nu_{\mathrm{out}}^{\prime}(0) = \mathbb{E}[x;\mu] \cdot \left( \nu_{\mathrm{in}} \ast (- D) \delta_{0} \right).
\end{equation}
Now, assuming suitable differentiability condition of the density $\rho_{\mathrm{in}} := d\nu_{\mathrm{in}}/d\beta$ of the imput $\nu_{\mathrm{in}}$ as a starting point, we employ an auxiliary smooth density function $\eta$ to symbolically express the delta distribution by the limit of its scaling $\delta_{0} = \lim_{t \to 0} \eta_{t}$ (a similar technique is used in \eqref{eq:distributional_delta_not_a_measure}) and formally obtain the `density' of the convolution $\nu_{\mathrm{in}} \ast D \delta_{0}$ as
\begin{align}
\left( \rho_{\mathrm{in}} \ast D\delta_{0}\right)(x)
    &= \left( \rho_{\mathrm{in}} \ast D\left( \lim_{t \to 0} \eta_{t} \right) \right)(x) \nonumber \\
    &= \lim_{t \to 0} \int_{\mathbb{R}} \rho_{\mathrm{in}}(x-y) \left(D\eta_{t}\right)(y)\ d\beta(y) \nonumber \\
    &= \lim_{t \to 0} \int_{\mathbb{R}} (D\rho_{\mathrm{in}})(x-y) \eta_{t}(y)\ d\beta(y) \nonumber \\
    &= (D\rho_{\mathrm{in}})(x).
\end{align}
Introducing the notation $D\nu_{\mathrm{in}}$ as defined in \eqref{def:diff_measure}, we thus obtain
\begin{align}\label{eq:diff_ups_output}
\nu_{\mathrm{out}}^{\prime}(0) = \mathbb{E}[x;\mu] \cdot (-D) \nu_{\mathrm{in}}.
\end{align}
The basic idea is that, while we have seen that the distributional derivative of the delta $D\delta_{0}$ does not allow itself to be  expressed by a complex measure, the distributional derivative $D\nu_{\mathrm{in}}$ of some probability measure $\nu_{\mathrm{in}}$ might belong to the space $\mathbf{M}_{\mathbb{C}}(\mathfrak{B})$ of complex measures%
\footnote{
As one may expect, the distributional derivative $D \nu$ of an arbitrary complex measure $\nu$ can be made well-defined by extending our framework into the theory of generalised functions. In general, the derivative derivative $D \nu$ is a distribution itself (as we have seen for the special case $\nu = \delta_{0}$), but not necessarily a complex measure anymore.
}.
If we could moreover find a condition for which the differentiability \eqref{eq:diff_ups_output} is valid with respect to the norm topology of the total variation ({\it i.e.}, strongly differentiable), we could develop a line of argument that is totally confined in the space $\mathbf{M}_{\mathbb{C}}(\mathfrak{B})$, without referring to the theory of distributions at all.

\paragraph{On the Main Results}
One finds below that the the above idea is indeed valid. To this end, we assume
\begin{itemize}
\item The probability measure $\mu$ has compact support.
\item The density of $\nu_{\mathrm{in}}$ belongs to the Schwartz space $d\nu_{\mathrm{in}}/d\beta \in \mathscr{S}(\mathbb{R})$.
\end{itemize}
Under the above two conditions, we demonstrate below that the map $t \mapsto \nu_{\mathrm{out}}(t)$ is in fact arbitrarily many times strongly differentiable, and that its higher-order derivatives read
\begin{equation}\label{eq:upsm_prob_diff_gen}
\nu_{\mathrm{out}}^{(n)}(t) = ((-D)^{n}\nu_{\mathrm{in}}) \ast (x^{n} \odot \mu)_{t}, \quad t \in \mathbb{R},\ n \in \mathbb{N}_{0},
\end{equation}
which in particular implies
\begin{equation}\label{eq:upsm_prob_diff_0}
\nu_{\mathrm{out}}^{(n)}(0) = \mathbb{E}[x^{n};\mu] \cdot (-D)^{n} \nu_{\mathrm{in}}, \quad n \in \mathbb{N}_{0}
\end{equation}
at the origin $t=0$. Here, $D^{n} \nu_{\mathrm{in}}$ denotes the signed measure defined in \eqref{def:diff_measure}, and the signed measure $x^{n} \odot \mu$ is defined in \eqref{def:Mass_mit_Dichte}. Note that our two conditions above, namely, the compactness of the support of $\mu = x^{0} \odot \mu$ and the density of $\nu_{\mathrm{in}} = (-D)^{0}\nu_{\mathrm{in}}$ belonging to the Schwartz space, are true not only for $n=0$, but for all $n \in N_{0}$. Note also that compactness of the support of $\mu$ guarantees the finiteness of all its higher-order moments $|\, \mathbb{E}[x^{n};\mu] \,| < \infty$, $n \in \mathbb{N}_{0}$. Applying \eqref{eq:upsm_prob_diff_0} to our physical situation by letting $\mu = \mu_{A}^{\phi}$ and $\nu_{\mathrm{in}} = \mu_{Q}^{\psi}$ would prove Proposition~\ref{prop:WUCM_II}.

\begin{proof}[Proof of our Main Result]
For demonstration, we provide a sketch of the proof by mathematical induction. One may readily confirm by definition that the above statement is trivially true for $n=0$. Now, assuming that the statement is true for $n \in N_{0}$,  we rewrite $\tilde{\nu}_{\mathrm{in}} := (-D)^{n}\nu_{\mathrm{in}}$, $\tilde{\mu} := x^{n} \odot \mu$ and $\tilde{\nu}_{\mathrm{out}}(t) := \tilde{\nu}_{\mathrm{in}} \ast \tilde{\mu}_{t}$ for better readability.
Now, recalling that the convolution algebra $L^{1}(\mathfrak{B})$ is an ideal in the measure algebra $\mathbf{M}_{\mathbb{C}}(\mathfrak{B})$, one finds that $\tilde{\nu}_{\mathrm{out}}(t)$ is absolutely continuous for all $t \in \mathbb{R}$ (in passing, one moreover finds that the density of $\tilde{\nu}_{\mathrm{out}}(t)$ is also a Schwartz function), and that its density $\tilde{\rho}_{\mathrm{out}}(t) := d\tilde{\nu}_{\mathrm{out}}(t)/d\beta$ is given by
\begin{equation}
\tilde{\rho}_{\mathrm{out}}(t)(x) = \int_{\mathbb{R}} \tilde{\rho}_{\mathrm{in}}(x - ty)\ d\tilde{\mu}(y), \quad t \in \mathbb{R},
\end{equation}
where $\tilde{\rho}_{\mathrm{in}}$ denotes the density of $\tilde{\nu}_{\mathrm{in}}$ (see \eqref{eq:density_ac_cmeas} for this result). 

In order to prove the strong differentiability of the map $t \mapsto \tilde{\nu}_{\mathrm{out}}(t)$, we work in the space of density functions. We start by demonstrating the point-wise differentiability of the map $t \mapsto \tilde{\rho}_{\mathrm{out}}(t)$, and to this end, we fix $t_{0}, x \in \mathbb{R}$ and observe
\begin{align}\label{eq:main_diff_out_prob_point-wise}
\left( \tilde{\rho}_{\mathrm{out}}^{\prime}(t_{0})\right) (x)
    &:= \lim_{t \to t_{0}} \frac{\tilde{\rho}_{\mathrm{out}}(t)(x) - \tilde{\rho}_{\mathrm{out}}(t_{0})(x)}{t - t_{0}} \nonumber \\
    &= \lim_{t \to 0} \int_{\mathbb{R}} \frac{\tilde{\rho}_{\mathrm{in}}(x - ty) - \tilde{\rho}_{\mathrm{in}}(x - t_{0}y)}{t - t_{0}} \ d\tilde{\mu}(y) \nonumber \\
    &= \int_{\mathbb{R}} y(-D\tilde{\rho}_{\mathrm{in}})(x - t_{0}y) \ d\tilde{\mu}(y),
\end{align}
where the exchange of the limit and integration in the second equality, while we shall omit any details of its proof, is essentially a consequence of the dominated convergence theorem.  Next, we return to its strong differentiability ({\it i.e.}, differentiability with respect to the $L^{1}$-norm). To this end, we assume $t_{0} < t$ without loss of generality and recall the mean-value theorem, which state that there exists a $t_{1} \in ]t_{0}, t[$ such that
\begin{equation}
\frac{\tilde{\rho}_{\mathrm{out}}(t)(x) - \tilde{\rho}_{\mathrm{out}}(t_{0})(x)}{t - t_{0}} = \left( \tilde{\rho}_{\mathrm{out}}^{\prime}(t_{1})\right) (x)
\end{equation}
holds. Then, one has
\begin{align}
&\left\|\frac{\tilde{\rho}_{\mathrm{out}}(t)(x) - \tilde{\rho}_{\mathrm{out}}(t_{0})(x)}{t - t_{0}} - \tilde{\rho}_{\mathrm{out}}^{\prime}(t_{0}) \right\|_{1} \nonumber \\
  &\qquad  = \int_{\mathbb{R}} \left| \int_{\mathbb{R}} y(-D\tilde{\rho}_{\mathrm{in}})(x - t_{1}y) - y(-D\tilde{\rho}_{\mathrm{in}})(x - t_{0}y) \ d\tilde{\mu}(y) \right|\ d\beta(x) \nonumber \\
    &\qquad  \leq \int_{\mathbb{R}} |y| \cdot \left\|\tau_{(-t_{1}y)}(D\tilde{\rho}_{\mathrm{in}})- \tau_{(-t_{0}y)}(D\tilde{\rho}_{\mathrm{in}})\right\|_{1}\ d\tilde{\mu}(y),
\end{align}
where the exchange of the order of integration in the last inequality is guaranteed to hold (Fubini's theorem), and the translation operator $\tau_{a}$ is defined in \eqref{eq:translation}. Compactness of the support of $\tilde{\mu}$ together with an analogous argument made in \eqref{ineq:approx_ident} implies that the r.~h.~s. of the above inequality tends to $0$ as $t \to 0$, which completes our proof for strong differentiability. We thus have by \eqref{eq:main_diff_out_prob_point-wise}
\begin{align}
\rho_{\mathrm{out}}^{(n+1)}(t)
    &= \tilde{\rho}_{\mathrm{out}}^{\prime}(0) \nonumber \\
    &= (-D\tilde{\rho}_{\mathrm{in}}) \ast (x \odot \tilde{\mu})_{t} \nonumber \\
    &= ((-D)^{n+1}\rho_{\mathrm{in}}) \ast (x^{n+1} \odot \mu)_{t}, \quad t \in \mathbb{R}
\end{align}
and
\begin{align}
\rho_{\mathrm{out}}^{(n+1)}(0)
    &= ((-D)^{n+1}\rho_{\mathrm{in}}) \ast (x^{n+1} \odot \mu)_{0} \nonumber \\
    &= \mathbb{E}[x^{n+1};\mu] \cdot (-D)^{n+1} \nu_{\mathrm{in}},
\end{align}
where we have used \eqref{def:measure_scaling} and $(x^{n+1} \odot \mu)(\mathbb{R}) = \mathbb{E}[x^{n+1};\mu]$ in the last equality. This completes our whole proof.
\end{proof}

\newpage

\newpage
\section{Conditioned Measurement I: In Terms of Conditional Expectations}\label{sec:ps_I}

We shall next embark on our study of the measurement scheme that we call the \emph{conditioned measurement} (CM) scheme.
As the name indicates, the CM scheme involves \emph{conditioning}, where one employs the measurement of another observable on the target system on top of the UM scheme studied earlier. The CM scheme can be understood as a natural generalisation of the \emph{post-selected measurement scheme}, which has recently been attracting much attention of several groups among the physics community. While the post-selected measurement scheme
itself has been practiced for quite a while, it has caught a renewed interest since Aharonov {\it et al.}~reintroduced it with the term \emph{weak measurement} which in particular applies to the post-selected measurement in the weak limit, along with the complex quantity termed \emph{weak value} purported to be measured by it. Two sections starting from here is devoted to the analysis on the CM scheme, and by following the same line as that of the former unconditioned counterpart, we start by examining the measurement scheme in terms of conditional expectations (Section~\ref{sec:ps_I}), and subsequently in terms of conditional probabilities (Section~\ref{sec:ps_II}).

\paragraph{Organisation of this Section}
The contents of this section is organised as follows. We first provide a concise summary of some of the necessary mathematical concepts that provides us the tools for conducting the analysis. We then make a brief review on the CM scheme from a relatively general framework, and make some comments on the technique of employing conditioning (or post-selection, as a special case) in precision measurements, whose alleged advantages has recently become the topic of intensive debate. We shall then investigate how one could reclaim the information of the configuration of the target system from the the measured outcomes, and to this end, we concentrate on the behaviour of the conditional expectation of the meter observable around the weak limit $g=0$ of the interaction parameter.  In parallel to the unconditional case, we call this procedure the \emph{weak conditioned measurement scheme} in this paper. We finally close this section by introducing the concept of \emph{conditional quasi-expectations} of a quantum observable given another (not necessarily simultaneously measurable) observable, as a generalisation to that of the standard conditional expectations, and examine some of their notable properties. 
\subsection{Reference Materials}

In this subsection, we shall briefly recall the necessary mathematical definitions and results regarding the formal mathematical description of conditioning.

\subsubsection{Conditioning}

The essence of the CM scheme lies in the conditioning of the outcomes of a measurement of an observable $X$ of the meter system $\mathcal{K}$ by that of an additional observable $B$ of the target system $\mathcal{H}$. The quantity of interest is then the \emph{conditional expectation} of $X$ given $B$, in contrast to the UM scheme described in Section~\ref{sec:ups_I}, where the quantity of interest was the mere (unconditional) expectation value of $X$.

\paragraph{Conditional Expectation given a Sub-$\sigma$-algebra}

Since one may find the general definition of conditional expectations to be rather involved, we start by some preliminary discussion in order to ease the introduction.
Let $(\mathbb{R}^{n}, \mathfrak{B}^{n}, \mu)$ be a probability space, and let $f: \mathbb{R}^{n} \to \mathbb{R}$ be $\mu$-integrable. Given a Borel set $B \in \mathfrak{B}^{n}$ with non-vanishing probability $\mu(B) \neq 0$, one defines the \emph{conditional expectation of $f$ given the measurable set $B \in \mathfrak{B}^{n}$} by the real number
\begin{equation}\label{def:cond_exp_set_elementary}
\mathbb{E}[f|B] := \frac{\int_{B}f(x)\ d\mu(x)}{\mu(B)}.
\end{equation}
Now, let $\mathbb{R}^{n} = \cup_{i=1}^{N} B_{i}$, $B_{i} \in \mathfrak{B}^{n}$ be a decomposition of $\mathbb{R}^{n}$ into finite numbers of mutually disjoint Borel sets, and let $\mathfrak{E} := \{B_{i}\}_{i=1,\dots,N}$ denote their collection. We then define
\begin{align}\label{def:sub_algebra_elementary}
\mathfrak{A}
    &:= \sigma(\mathfrak{E})
    = \left\{\bigcup_{i \in I} B_{i} : I \subset \{1, \dots, N\}\right\}
\end{align}
to be the sub-$\sigma$-algebra of $\mathfrak{B}^{n}$ generated by $\mathfrak{E}$. Assuming $\mu(B_{i}) \neq 0$ for all $i=1, \dots, N$, this gives rise to an $\mathfrak{A}$-$\mathfrak{B}$ measurable function
\begin{equation}\label{def:cond_exp_elementary}
\mathbb{E}[f|\mathfrak{A}](x) := \sum_{i = 1}^{N} \mathbb{E}[f|B_{i}] \cdot \chi_{B_{i}}(x),
\end{equation}
where each $\chi_{B_{i}}$ is the characteristic function of the subset $B_{i}$. Observing that each element $A \in \mathfrak{A}$ can be expressed by a union of elements of $\mathfrak{E}$, one has
\begin{align}
\int_{A}f(x)\ d\mu(x)
    &= \sum_{B_{i} \subset A} \mathbb{E}[f|B_{i}] \cdot \mu(B_{i}) \nonumber \\
    &= \int_{A} \mathbb{E}[f|\mathfrak{A}](x)\ d\mu|_{\mathfrak{A}}(x), \quad \forall A \in \mathfrak{A},
\end{align}
where $\mu|_{\mathfrak{A}}$ denotes the restriction of the probability measure $\mu$ on the sub-$\sigma$-algebra $\mathfrak{A}$.
Guided by this observation, the conditional expectation of an integrable function $f$ given a sub-$\sigma$-algebra $\mathfrak{A}$ is defined in the following manner:

\begin{definition*}[Conditional expectation given a sub-$\sigma$-algebra]
Let $(\mathbb{R}^{n}, \mathfrak{B}^{n}, \mu)$ be a probability space. For a sub-$\sigma$-algebra $\mathfrak{A} \subset \mathfrak{B}^{n}$ and a $\mu$-integrable function $f: \mathbb{R}^{n} \to \mathbb{R}$, the conditional expectation of $f$ given $\mathfrak{A}$, denoted as $\mathbb{E}[f|\mathfrak{A}]$, is defined as a $\mu|_{\mathfrak{A}}$-integrable function satisfying
\begin{equation}\label{def:cond_exp}
\int_{A} f(x)\ d\mu(x) = \int_{A} \mathbb{E}[f|\mathfrak{A}](x)\ d\mu|_{\mathfrak{A}}(x), \quad \forall A \in \mathfrak{A}.
\end{equation}
The conditional expectation $\mathbb{E}[f|\mathfrak{A}]$ exists, and is unique $\mu|_{\mathfrak{A}}$-a.e.
\end{definition*}
\noindent
To see the validity of the definition, first observe that the l.~h.~s. of \eqref{def:cond_exp} defines a complex measure $A \mapsto (f \odot \mu)(A)$, $A \in \mathfrak{A}$. Since $(f \odot \mu)|_{\mathfrak{A}} \ll \mu|_{\mathfrak{A}}$, the Radon-Nikod{\'y}m theorem leads to the existence and uniqueness $\mu|_{\mathfrak{A}}$-a.e. of the conditional expectation
\begin{equation}
\mathbb{E}[f|\mathfrak{A}] := \frac{d(f \odot \mu)|_{\mathfrak{A}}}{d\mu|_{\mathfrak{A}}},
\end{equation}
which is nothing but the Radon-Nikod{\'y}m derivative (density) of the restriction $(f \odot \mu)|_{\mathfrak{A}}$ with respect to the restriction $\mu|_{\mathfrak{A}}$. Note that the conditional expectation is defined as a \emph{function} (or more precisely, an equivalent class of functions) rather than a mere number. The elementary definition \eqref{def:cond_exp_elementary} 
mentioned earlier is in fact a special case of the above general definition, in which the sub-$\sigma$-algebra concerned is given by \eqref{def:sub_algebra_elementary}. The conditional expectation $\mathbb{E}[f|\mathfrak{A}]$ serves as the, so to speak, best approximation of the original function $f$ by measurable functions defined on the coarser%
\footnote{
Given two $\sigma$-algebras $\mathfrak{A} \subset \mathfrak{B}$, $\mathfrak{A}$ is said to be smaller or \emph{coarser} than $\mathfrak{B}$, and on the other hand, $\mathfrak{B}$ is said to be larger or \emph{finer} than $\mathfrak{A}$.
}
$\sigma$-algebra $\mathfrak{A} \subset \mathfrak{B}^{n}$.

\paragraph{Conditional Expectation given another Function}
We next recall the definition of the conditional expectation given another real measurable function. As above, we first provide an introductory argument. Let $(\mathbb{R}^{n}, \mathfrak{B}^{n}, \mu)$ be a probability space, and let $f: \mathbb{R}^{n} \to \mathbb{R}$ be $\mu$-integrable.  Given another measurable function $g: \mathbb{R}^{n} \to \mathbb{R}$, suppose that the probability of obtaining the outcome $y \in \mathbb{R}$ of $g$ is non-vanishing $\mu(g^{-1}(y)) \neq 0$. In a similar manner as before, one may define the conditional expectation of $f$ given the outcome $y$ of $g$ as
\begin{equation}\label{def:cond_exp_outcome}
\mathbb{E}[f|g = y] := \mathbb{E}[f|g^{-1}(y)] = \frac{\int_{g^{-1}(y)}f(x)\ d\mu(x)}{\mu(g^{-1}(y))},
\end{equation}
where we have just replaced $B = g^{-1}(y)$ in \eqref{def:cond_exp_set_elementary}.
It is now tempting to construct a function $y \mapsto \mathbb{E}[f|g = y]$ that maps each of the possible outcomes of $g$ to the corresponding conditional expectation. Assuming that the function $g$ only takes a finite number of distinct outcomes $\{y_{i}\}_{i=1, \dots, N}$, $y_{i} \in \mathbb{R}$, one accordingly obtains a decomposition $\mathbb{R}^{n} = \cup_{i=1}^{N} g^{-1}(y_{i})$ of $\mathbb{R}^{n}$ into a finite number of mutually disjoint Borel sets. 
Assuming moreover that $\mu(g^{-1}(y_{i})) \neq 0$ for all $i$, one obtains a well-defined measurable function
\begin{equation}
\mathbb{E}[f|g] : \mathbb{R} \to \mathbb{R},\quad y \mapsto \mathbb{E}[f|g = y],
\end{equation}
called the \emph{conditional expectation of $f$ given $g$}.

To see how this relates to the previous definition of the conditional expectation given a sub-$\sigma$-algebra, consider a general situation in which one is given a set $X$ (without a $\sigma$-algebra), a measurable space $(Y,\mathfrak{A})$ and a function $g: X \to Y$. The collection
\begin{equation}\label{def:initial_sigma_algebra}
\mathcal{I}(g) := g^{-1}(\mathfrak{A}) := \{g^{-1}(A) : A \in \mathfrak{A}\}
\end{equation}
makes itself into a $\sigma$-algebra, called the \emph{initial $\sigma$-algebra} on $X$ with respect to $g$, and it is the coarsest $\sigma$-algebra on $\mathbb{R}^{n}$ for which the map $g$ is measurable.
In the above situation, we take $(Y,\mathfrak{A}) = (\mathbb{R},\mathfrak{B}^{1})$ and define
\begin{equation}
\mathcal{I}(g) := g^{-1}(\mathfrak{B}^{1}) = \sigma\left( \mathfrak{E}\right),
\end{equation}
where we have let $\mathfrak{E} := \{g^{-1}(y_{i})\}_{i=1,\dots,N}$.
Now, since we have assumed that $\mu(g^{-1}(y_{i})) \neq 0$ for all $i$, the conditional expectation of $f$ given $\mathcal{I}(g)$ can be expressed as
\begin{equation}
\mathbb{E}[f|\mathcal{I}(g)] = \mathbb{E}[f|\sigma(\mathfrak{E})] = \sum_{i = 1}^{N} \mathbb{E}[f|g^{-1}(y_{i})] \cdot \chi_{g^{-1}(y_{i})},
\end{equation}
where the last equality is due to \eqref{def:cond_exp_elementary} by replacing $B_{i} = g^{-1}(y_{i})$. It is then fairly straightforward to see that the conditional expectations $\mathbb{E}[f|\mathcal{I}(g)]$, $\mathbb{E}[f|g]$ and the conditioning function $g$ are related to one another through the commutative diagram,
\begin{equation}\label{diagram:cond_exp}
\xymatrix{
(\mathbb{R}^{n}, \mathcal{I}(g))
	\ar[rd]_{\mathbb{E}[f|\mathcal{I}(g)]}
	\ar[r]^{g}
& (\mathbb{R}, \mathfrak{B}^{1})
	\ar[d]^{\mathbb{E}[f|g]}\\
& (\mathbb{R}, \mathfrak{B}^{1})
}
\end{equation}
where each of the functions is measurable.  In this sense, the function $\mathbb{E}[f|g]$ is understood to be nothing but the factorisation of $\mathbb{E}[f|\mathcal{I}(g)]$ by $g$.  The validity of such observation for the general case is guaranteed by the following Factorisation Theorem.
\begin{theorem*}[Factorisation Theorem]
Let $X$ be a non-empty set, and let $\mathcal{I}(g) := g^{-1}(\mathfrak{A})$ be the initial $\sigma$-algebra of a map $g:X \to (Y,\mathfrak{B})$. A function $h: (X,\mathcal{I}(g)) \to (\mathbb{R},\mathfrak{B}^{1})$ is measurable if and only if there exists a measurable function $\tilde{h}: (Y,\mathfrak{B}) \to (\mathbb{R},\mathfrak{B}^{1})$ that makes the diagram
\begin{equation}
\xymatrix{
(X, \mathcal{I}(g))
	\ar[rd]_{h}
	\ar[r]^{g}
& (Y, \mathfrak{B})
	\ar[d]^{\tilde{h}}\\
& (\mathbb{R}, \mathfrak{B}^{1})
}
\end{equation}
commute.
\end{theorem*}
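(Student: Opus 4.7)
The plan is to prove the two implications separately. The reverse direction is a one-line consequence of the fact that $g$ is $\mathcal{I}(g)$-$\mathfrak{B}$-measurable by the very definition of the initial $\sigma$-algebra (every $B \in \mathfrak{B}$ satisfies $g^{-1}(B) \in \mathcal{I}(g)$), so that if $\tilde h$ exists and is measurable, then $h = \tilde h \circ g$ is a composition of measurable maps, hence measurable. The substantive content is therefore in the forward direction, where one must construct $\tilde h$ from $h$. I would proceed by the standard algebraic hierarchy of measurable functions: characteristic functions, then simple functions, then non-negative measurable functions, and finally arbitrary real-valued measurable functions.

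For the base case, suppose $h = \chi_E$ with $E \in \mathcal{I}(g)$. By definition of $\mathcal{I}(g) = g^{-1}(\mathfrak{B})$, there exists some $F \in \mathfrak{B}$ with $E = g^{-1}(F)$, and one verifies immediately that $\tilde h := \chi_F$ is $\mathfrak{B}$-$\mathfrak{B}^{1}$ measurable and satisfies $\tilde h \circ g = \chi_{g^{-1}(F)} = \chi_E = h$. By linearity, the statement then extends to every $\mathcal{I}(g)$-measurable simple function $h = \sum_{k=1}^{m} c_{k} \chi_{E_{k}}$ by taking $\tilde h := \sum_{k=1}^{m} c_{k} \chi_{F_{k}}$ for any choice of representatives $F_{k} \in \mathfrak{B}$ with $g^{-1}(F_{k}) = E_{k}$.

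For the non-negative case, recall that any non-negative $\mathcal{I}(g)$-measurable $h : X \to [0, \infty]$ can be written as a pointwise increasing limit $h = \lim_{n \to \infty} h_{n}$ of non-negative $\mathcal{I}(g)$-measurable simple functions (a standard construction via dyadic truncation). By the previous step, choose factorisations $\tilde h_{n}: Y \to \mathbb{R}$ with $h_{n} = \tilde h_{n} \circ g$, and define
\begin{equation}
\tilde h(y) := \limsup_{n \to \infty} \tilde h_{n}(y), \quad y \in Y,
\end{equation}
which is $\mathfrak{B}$-measurable as the limsup of a sequence of measurable functions. For any $x \in X$, the sequence $\tilde h_{n}(g(x)) = h_{n}(x)$ converges to $h(x)$, so on the image $g(X)$ the limsup coincides with an actual limit and one obtains $\tilde h \circ g = h$. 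Finally, an arbitrary real-valued measurable $h$ is handled by decomposing $h = h^{+} - h^{-}$, factorising each part, and setting $\tilde h := \tilde{h^{+}} - \tilde{h^{-}}$.

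The one genuine obstacle is that $\tilde h$ produced in this way is a priori only $\overline{\mathbb{R}}$-valued on $Y \setminus g(X)$, whereas we want $\tilde h : Y \to \mathbb{R}$. The fix is cosmetic but must be stated: introduce the measurable set $N := \{ y \in Y : \tilde h(y) \in \{ -\infty, +\infty \}\} \in \mathfrak{B}$, which necessarily satisfies $N \cap g(X) = \emptyset$, and redefine $\tilde h$ to be $0$ on $N$. The resulting function is genuinely $\mathbb{R}$-valued, still $\mathfrak{B}$-$\mathfrak{B}^{1}$ measurable, and still satisfies $\tilde h \circ g = h$ on all of $X$, completing the proof.
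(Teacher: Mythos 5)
Your proof is correct. Note that the paper itself states this Factorisation Theorem (the Doob--Dynkin lemma) without proof in its Reference Materials, so there is no in-paper argument to compare against; what you have written is the standard textbook proof via the algebraic hierarchy (characteristic functions $\to$ simple functions $\to$ monotone limits $\to$ positive/negative parts), and it is complete. The two places where such proofs usually go wrong are both handled: the reverse implication correctly uses only that $g$ is $\mathcal{I}(g)$-$\mathfrak{B}$ measurable by construction of the initial $\sigma$-algebra, and the forward direction correctly acknowledges that the $\limsup$ construction controls $\tilde h$ only on $g(X)$, so that the extended-real values that may appear on $Y \setminus g(X)$ must be (and are) excised on a measurable set disjoint from the image. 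One cosmetic remark: when you form $\tilde h := \widetilde{h^{+}} - \widetilde{h^{-}}$ before the final truncation, the expression $\infty - \infty$ can occur off $g(X)$; it is marginally cleaner to apply your ``redefine to $0$ on the infinite set'' fix to $\widetilde{h^{+}}$ and $\widetilde{h^{-}}$ separately before subtracting, though under the paper's stated convention $\infty - \infty := 0$ your order of operations also yields a well-defined real-valued measurable function.
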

\noindent
By letting $(Y,\mathfrak{B}) = (\mathbb{R},\mathfrak{B}^{1})$ and $h = \mathbb{E}[f|\mathcal{I}(g)]$, this guarantees the existence of the function $\mathbb{E}[f|g] := \tilde{h}$ that makes the desired diagram commute, even for the general case.

As for the integrability of the conditional expectation $\mathbb{E}[f|g]$, we first observe that the probability of obtaining the outcome of $g$ in a Borel set $B \in \mathfrak{B}^{1}$ is dictated by the probability measure 
\begin{equation}
g(\mu)(B) := \mu(g^{-1}(B)), \quad B \in \mathfrak{B}^{1},
\end{equation}
which is nothing but the image measure of $\mu$ with respect to $g$ (see \eqref{def:Bildmass} for its definition and properties). One thus sees by the formula
\begin{align}
\int_{\mathbb{R}} \mathbb{E}[f|g]\ dg(\mu)
    &= \sum_{i=1}^{N} \mathbb{E}[f|g](y_{i}) \cdot g(\mu)(\{y_{i}\}) \nonumber \\
    &= \sum_{i=1}^{N} \mathbb{E}[f|g = y_{i}] \cdot \mu(g^{-1}(y_{i})) \nonumber \\
    &= \sum_{i=1}^{N} \int_{g^{-1}(y_{i})}f(x)\ d\mu(x) \nonumber \\
    &= \int_{\mathbb{R}^{n}}f(x)\ d\mu(x),
\end{align}
that the function $\mathbb{E}[f|g]$ is $g(\mu)$-integrable, and its expectation value coincides with the expectation value of $f$ under $\mu$, which is what one naturally expects.

Guided by the above observation, the conditional expectation of an integrable function $f$ given another measurable function $g$ is defined in the following manner:
\begin{definition*}[Conditional expectation given a measurable function]
Let $(\mathbb{R}^{n}, \mathfrak{B}^{n}, \mu)$ be a probability space, and let $f : \mathbb{R}^{n} \to \mathbb{R}$ be $\mu$-integrable. The conditional expectation of $f$ given a measurable function $g : \mathbb{R}^{n} \to \mathbb{R}$, denoted as $\mathbb{E}[f|g]$, is defined as a $g(\mu)$-integrable function that makes the diagram
\begin{equation}\label{def:conditional_expectation}
\xymatrix{
\left(\mathbb{R}^{n}, \mathcal{I}(g), \mu|_{\mathcal{I}(g)}\right)
	\ar[rd]_{\mathbb{E}[f|\mathcal{I}(g)]}
	\ar[r]^{g}
& \left(\mathbb{R}, \mathfrak{B}^{1}, g(\mu)\right)
	\ar[d]^{\mathbb{E}[f|g]}\\
& \left(\mathbb{R}, \mathfrak{B}^{1}\right)
}
\end{equation}
commute. Its existence and uniqueness $g(\mu)$-a.e. is known to be guaranteed.
\end{definition*}
\noindent
Note that integrability of $\mathbb{E}[f|\mathcal{I}(g)]$ is due to the change of variables formula \eqref{eq:Transformationsformel_Bildmass} for image measures, and its uniqueness $g(\mu)$-a.e. is immediate by definition.
Based on the above definition, let $\mathbb{E}[f|g]$ be (a representative of) the conditional expectation of $f$ given $g$. We write
\begin{equation}
\mathbb{E}[f|g = y] := \mathbb{E}[f|g](y)
\end{equation}
to denote the \emph{conditional expectation of $f$ given the outcome $y$ of $g$}. Note that this definition is dependent on the choice of the representative and 
may admit ambiguity. Indeed, for the choice $y \in \mathbb{R}$ for which the probability of obtaining the outcome of $g$ in $\{y\}$ is vanishing: $g(\mu)(\{y\}) = \mu(g^{-1}(\{y\})) = 0$, one sees that $\mathbb{E}[f|g = y]$ is \emph{indefinite} and may take \emph{any} real number. 
As exemplified in here, the conditional expectation $\mathbb{E}[f|g]$ of $f$ given $g$ is appropriate to be viewed as an equivalent class of integrable functions, rather than a function alone.

\paragraph{Conditioning by Simultaneously Measurable Observables}

As in the previous section, we occasionally denote the Borel sets on $\mathbb{R}^{n}$ by $\Delta \in \mathfrak{B}^{n}$ in place of $B$ for better understanding and readability, especially in the context of quantum theory, where the confusion of the notation of $B$ with that of an operator may become a concern.
Let $A$ and $B$ be a pair of simultaneously measurable observables on a quantum system $\mathcal{H}$. We have seen that this yields a probability measure $\mu_{A,B}^{\phi}$  on $(\mathbb{R}^{2}, \mathfrak{B}^{2})$ ({\it cf.} \eqref{def:prob_measrue_A_simul}), which is interpreted as the joint-probability distribution describing the outcomes of a simultaneous measurement of $A$ and $B$ performed on the quantum system in the state $|\phi\rangle \in \mathcal{H}$.  Letting $f(a,b)=\pi_{A}(a,b) := a$ and $g(a,b) = \pi_{B}(a,b) := b$ describe the measurement outcomes of each of the observables $A$ and $B$,
we shall briefly see below how the previous discussions on conditioning fits in the context of quantum mechanics. 
For our purpose, assume $|\phi\rangle \in \mathrm{dom}(A)$ so that the projection $\pi_{A}(a,b) = a$ may be integrable
\begin{align}
\int_{\mathbb{R}^{2}} \pi_{A}(a,b)\ d\mu_{A,B}^{\phi}(a,b)
    &= \int_{\mathbb{R}^{2}} a\ d\mu_{A}^{\phi}(a) \nonumber \\
    &= \mathbb{E}[A;\phi],
\end{align}
with respect to the probability measure $\mu_{A,B}^{\phi}$. Observing that the image measure of $\mu_{A,B}^{\phi}$ with respect to the second projection
\begin{equation}
\pi_{B}\left(\mu_{A,B}^{\phi}\right)(\Delta_{B}) := \mu_{A,B}^{\phi}(\mathbb{R} \times \Delta_{B}) = \mu_{B}^{\phi}(\Delta), \quad \Delta_{B} \in \mathfrak{B}
\end{equation}
is nothing but the probability measure describing the outcome of $B$, we define the \emph{conditional expectation} $\mathbb{E}[A|B ; \phi]$ of an observable $A$ given $B$ on the state $|\phi\rangle $ as the (equivalence class of) $\mu_{B}^{\phi}$-integrable function(s)
\begin{align}\label{def:conditional_expectation_of_observables}
\mathbb{E}[A|B ; \phi] := \mathbb{E}[\pi_{A}|\pi_{B}],
\end{align}
where the r.~h.~s. is the conditional expectation of $\pi_{A}$ given $\pi_{B}$ under the probability measure $\mu_{A,B}^{\phi}$. Under the same assumption, we analogously define the conditional expectation of an observable $A$ given the outcome $b$ of an observable $B$ on the state $|\phi\rangle \in \mathrm{dom}(A)$ by
\begin{equation}
\mathbb{E}[A|B = b ; \phi] := \mathbb{E}[A|B ; \phi](b).
\end{equation}
We note again that the last definition incorporates some ambiguity, in which the number $\mathbb{E}[A|B = b; \phi]$ is not well-defined in the case where the probability that the measurement of $B$ yields the outcome $b$ is vanishing.

\subsection{Conditioned Measurement}

The CM scheme incorporates the measurements of \emph{two} observables, where the experimenter measures one local observable on the meter system and the other on the target system. In this paper, we generally define the CM scheme as the act of measuring the conditional expectation
\begin{equation}\label{def:post-selected_measurement}
\mathbb{E}[X|B ; \Psi^{g}] := \mathbb{E}[I \otimes X|B \otimes I ; \Psi^{g}]
\end{equation}
of an observable for the choice of either $X = Q$ or $X = P$ of the meter system given another observable $B$ of the target system.
Here, for better readability, we have made a little abuse of notation by writing $X$ instead of $I \otimes X$ and $B$ for $B \otimes I$. We emphasise again that the conditional expectation \eqref{def:post-selected_measurement} is defined as an \emph{equivalence class of functions} that are integrable with respect to the probability measure
\begin{equation}
\mu_{B}^{\phi^{g}} := \mu_{B \otimes I}^{\Psi^{g}},
\end{equation}
which describes the behaviour of the outcome of the measurement of the local observable $B$ on the target system.  Here, we have introduced the density matrix
\begin{equation}\label{def:targetstate}
\phi^{g} := \mathrm{Tr}_\mathcal{K}[|\Psi^{g}\rangle\langle\Psi^{g}|]
\end{equation}
on the target system defined in a parallel manner as in \eqref{def:meterstate}.
For its well-definedness, we note the following statement for reference.

\begin{proposition}[Well-definedness of the Conditional Expectation]\label{prop:def_CM_I}
In the context of the CM scheme, let
\begin{enumerate}
\item If $X = Y$: $|\phi\rangle \in \mathcal{H}$, $|\psi\rangle \in \mathrm{dom}(X)$
\item If $X \neq Y$: $|\phi\rangle \in \mathrm{dom}(A)$, $|\psi\rangle \in \mathrm{dom}(X)$
\end{enumerate}
be the choice of the initial states of the target and meter systems. Then, the conditional expectation $\mathbb{E}[X|B ; \Psi^{g}]$ is well-defined for all range of the interaction parameter $g \in \mathbb{R}$.
\end{proposition}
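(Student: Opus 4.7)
The plan is to reduce the well-definedness of $\mathbb{E}[X|B;\Psi^{g}]$ to two checks: first, that $B \otimes I$ and $I \otimes X$ form a simultaneously measurable pair, so that the joint-probability distribution of \eqref{def:prob_measrue_A_simul} is defined and the conditional expectation \eqref{def:conditional_expectation_of_observables} is well-posed; second, that the coordinate projection $\pi_{X}$ is integrable against that joint distribution, for which it is sufficient to establish $|\Psi^{g}\rangle \in \mathrm{dom}(I \otimes X)$ for every $g \in \mathbb{R}$. The first check is immediate from Section~\ref{sec:sim_meas_obs}: since $B \otimes I$ and $I \otimes X$ act on disjoint tensor factors, they strongly commute and admit the product spectral measure $E_{B} \otimes E_{X}$, so that the required joint distribution $\mu_{B \otimes I, I \otimes X}^{\Psi^{g}}$ is defined via \eqref{def:prob_measrue_A_entangled}. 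All of the substantive work therefore resides in the second check, and will be handled by case analysis on whether $X$ coincides with $Y$.

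For the case $X = Y$, I intend to show that $I \otimes X$ strongly commutes with $A \otimes Y$; this is readily verified for target observables with finite spectrum by direct computation from the decomposition \eqref{eq:int_op_fin}, and the general case follows by the functional-calculus extension outlined there. Strong commutativity then implies that the evolution unitary $U(g) := e^{-igA \otimes Y}$ commutes with $I \otimes X$ and so preserves $\mathrm{dom}(I \otimes X)$. Consequently $|\Psi^{g}\rangle = U(g)|\phi \otimes \psi\rangle$ lies in $\mathrm{dom}(I \otimes X)$ if and only if $|\phi \otimes \psi\rangle$ does, which reduces to $|\psi\rangle \in \mathrm{dom}(X)$ with no restriction whatsoever on $|\phi\rangle \in \mathcal{H}$; this explains the weaker hypothesis in case (i) of the proposition.

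For the case $X \neq Y$, the argument will closely mirror the proof of Proposition~\ref{prop:UCM_I}. The operator identity \eqref{eq:weak_weyl_analogue01}, together with its symmetric counterpart obtained by interchanging the roles of $Q$ and $P$, yields
\begin{equation*}
e^{igA \otimes Y}(I \otimes X)e^{-igA \otimes Y} = \overline{I \otimes X \pm g A \otimes I}
\end{equation*}
on $\mathrm{dom}(\overline{I \otimes X \pm g A \otimes I})$, with the sign determined by the specific coupling pattern; essential self-adjointness of the sum inside the closure is ensured by Lemma~\ref{lem:st_strong_commutative} applied to the strongly commuting pair $\{I \otimes X, A \otimes I\}$. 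It will then follow that $|\Psi^{g}\rangle \in \mathrm{dom}(I \otimes X)$ precisely when $|\phi \otimes \psi\rangle \in \mathrm{dom}(\overline{I \otimes X \pm g A \otimes I})$, which is guaranteed by the joint assumption $|\phi\rangle \in \mathrm{dom}(A)$ and $|\psi\rangle \in \mathrm{dom}(X)$ of case (ii). The hard part is precisely this unbounded-operator bookkeeping for $X \neq Y$: a naive commutator manipulation is insufficient because $I \otimes X + g A \otimes I$ is not a priori self-adjoint, and the closure indicated by the overline must be handled with care. Once the second check is in place, finiteness of the second moment of $I \otimes X$ on $\Psi^{g}$ forces finiteness of the first, so $\pi_{X}$ is integrable against $\mu_{B \otimes I, I \otimes X}^{\Psi^{g}}$, and the Radon-Nikod\'ym construction underlying \eqref{def:conditional_expectation_of_observables} furnishes $\mathbb{E}[X|B;\Psi^{g}]$ as a well-defined $\mu_{B}^{\phi^{g}}$-a.e. equivalence class of integrable functions, completing the proof.
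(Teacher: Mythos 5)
Your proposal is correct and follows essentially the same route as the paper: the paper's own proof simply cites Proposition~\ref{prop:UCM_I}, which already contains exactly the domain bookkeeping you redo ($|\Psi^{g}\rangle \in \mathrm{dom}(I\otimes X)$ under the stated hypotheses, via strong commutativity when $X=Y$ and via the identity \eqref{eq:weak_weyl_analogue01} when $X\neq Y$), and then lets the integrability of $\pi_{X}$ plus the Radon--Nikod\'ym construction do the rest. Your version just makes explicit the steps the paper leaves implicit.
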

\begin{proof}
For demonstration, we shall only refer to Proposition~\ref{prop:UCM_I} that guarantees the integrability of the outcomes of the measurement of $X$ ({\it i.e.}, $|\,\mathbb{E}[I \otimes X; \Psi^{g}]\,| < \infty$) for all range of $g \in \mathbb{R}$, given the conditions assumed.
\end{proof}

\paragraph{Post-selected Measurement}
As a special subclass of this measurement scheme, we prepare the term \emph{post-selected measurement scheme} to refer to the case where the conditioning observable $B = |\phi^{\prime}\rangle\langle\phi^{\prime}|$ happens to be a projection on some one-dimensional subspace of $\mathcal{H}$ spanned by some normalised vector $|\phi^{\prime}\rangle \in \mathcal{H}$, and in such a case, the act of conditioning will be occasionally referred to as the \emph{post-selection}.  It is also a common practice found in various literatures to call the state $|\phi\rangle$ prepared prior to the measurement the \emph{initial} or the \emph{pre-selected} state, and the normalised vector $|\phi^{\prime}\rangle$ spanning the image of the one-dimensional projection $B = |\phi^{\prime}\rangle\langle\phi^{\prime}|$ the \emph{final} or the \emph{post-selected} state.

\subsubsection{Topic: `Amplification Technique' by Conditioning}\label{sec:psI_amplification}
It is widely known that, in general, the range of conditional expectation may exceed the (unconditional) expectation value, {\it i.e.}, for some clever choice of the conditioning observable $B$ and its outcome $b \in \mathbb{R}$, one has
\begin{equation}\label{eq:amplification_by_conditioning}
\big\vert \, \mathbb{E}\left[X; \Psi^{g} \right] \big\vert \leq \big\vert\, \mathbb{E}\left[X | B=b ; \Psi^{g} \right] \big\vert
\end{equation}
with non-vanishing probability.  Clearly, this property should prove itself useful in some certain situations.

While this property has occasionally been utilised in experiments, it has recently caught wide attention due to the reports on the success of application in precision measurements, including the experimental detection of the spin-Hall effect of light (SHEL) in 2008 \cite{Hosten_2008}, and the detection of an ultra-sensitive beam deflection in a Sagnac interferometer in 2009 \cite{Dixon_2009}. The experiments have effectively utilised the technique of conditioning (or post-selection) to yield an enhancement (or `amplification') of an extremely small beam displacement to the extent that it is large enough to overcome various technical imperfections (noise level), and eventually realising significant detection of such tiny effects.  In this context, 
this technique has often been referred to as the `weak value amplification' or as `Aharonov-Albert-Vaidman effect' of amplification \cite{Aharonov_1988}.

\paragraph{Review of the Recent theoretical Analyses}
Extensive theoretical analyses have been conducted in recent years  from various viewpoints
on the technical advantages of the technique of post-selection over the conventional unconditioned counterpart. 
Some of them addressed the question of signal amplification and its limit, where 
one asks the question as to what extent one can amplify the signal \cite{Koike_2011} and how one could achieve the optimisation \cite{Susa_2012}; the question of the existence of the limit of amplification will be addressed shortly in a more general framework.
As far as the authors are aware of, the first sound analytic result appeared around 2012 \cite{Nakamura_2012}, in which the limit to the amplification rate, as well as the signal-to-noise ratio has been explicitly presented. The computation was conducted for a special case where the observable $A$ fulfils the condition $A^2 = I$ and the meter wave functions were assumed to be of Gaussian states, which we shall also address in a relatively more general setting later in this section, and also in Appendix~\ref{sec:PSM}.

Others focused on the statistical loss which occurs due to the post-selection and examine the feasibility of improving the parameter estimation of the coupling constant $g$ by post-selection based on estimation theory (for a concise review on the topic form this point of view, see \cite{Knee_2014}).  The result is that the post-selection statistically deteriorates the quality of estimation, both in the case where ideal noiseless experiments can be performed \cite{Tanaka_2013}, and also in some case where certain types of fully-known or controllable noise are present \cite{Knee_2013_1,Knee_2013_2,Ferrie_2013}. 
In an attempt to address the question of how the post-selection technique, while being statistically inferior to the unconditioned case, could be advantageous in realistic experiments, the authors have conducted a theoretical analysis on post-selected measurement in the presence of some intractable `measurement uncertainty', a relatively modern concept in metrology to express \emph{unknown} or \emph{uncontrollable} source of technical imperfections \cite{Lee_2014}. It was then found that, while post-selection suffers from statistical deterioration, in certain cases the amplification effect becomes favourable in overcoming the unknown/uncontrollable  source of technical imperfections one could not completely eliminate through `noise hunting', which accordingly cannot be reduced from statistical reiteration.  This suggests that the post-selection technique should be understood as the \emph{practice of taking advantage of the trade-off relation} between the reduced contribution from intractable source of measurement uncertainty due to its signal amplification effect, and the statistical deterioration caused by the decrease in success probability.

\subsubsection{Topic: `Limit of Amplification' in Terms of Essential Suprema}
In what follows, we provide a somewhat general result regarding the question of `limit of amplification' by conditioning, which has been one of the hottest topics among the study of the technical advantages in employing conditioning in experiments. A typical way to address this problem is to ask oneself, to what extent one could enlarge the conditional expectation $\mathbb{E}[X|B ; \Psi^{g}]$  by choosing an appropriate conditioning observable $B$ and its outcome $b \in \sigma(B)$ with non-vanishing probability. By recalling the definition of essential supremum of a function \eqref{def:ess_sup}, one realises that the question is equivalent to asking to what extent one could make the essential supremum of the conditional expectation
\begin{equation}
\big\|\,\mathbb{E}[X|B ; \Psi^{g}]\,\big\|_{\infty}
\end{equation}
large by the choice of the conditioning observable $B$.

\paragraph{Preliminaries}
To prepare for our arguments, we first observe some basic facts regarding absolute continuity and essential suprema.
\begin{lemma}
Let $(X,\mathfrak{A},\mu)$ be a probability space, and let $\nu: \mathfrak{A} \to \mathbb{C}$ be a complex measure. Then, the following conditions are equivalent:
\begin{enumerate}
\item $\nu \ll \mu$.
\item $|\nu| \ll \mu$.
\item There exists a non-negative number $M \in [0,\infty]$ such that
\begin{equation}\label{eq:abs_cont_cond}
|\nu(A)| \leq |\nu|(A) \leq M \cdot \mu(A)
\end{equation}
holds for all $A \in \mathfrak{A}$.
\end{enumerate}
In such a cases, the Radon-Nikod{\'y}m derivative $d\nu/d\mu$ exists by the Radon-Nikod{\'y}m theorem, and its essential supremum $\|d\nu/d\mu\|_{\infty}$
gives the smallest of such $M$ that satisfies \eqref{eq:abs_cont_cond}.
\end{lemma}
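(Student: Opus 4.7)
My plan is to establish the chain (i) $\Leftrightarrow$ (ii) $\Leftrightarrow$ (iii) and subsequently identify the minimal admissible $M$ with $\|d\nu/d\mu\|_\infty$. The pointwise bound $|\nu(A)| \leq |\nu|(A)$, valid by the very definition of the variation, trivialises both (ii) $\Rightarrow$ (i) and (iii) $\Rightarrow$ (ii) once one invokes the convention $\infty \cdot 0 = 0$ adopted in the excerpt to interpret the right-hand side of (iii) when $\mu(A) = 0$ and $M = \infty$. The substantive content thus lies in the implication (i) $\Rightarrow$ (ii), the Radon-Nikod{\'y}m construction producing (iii), and the essential-supremum characterisation of $M_{\min}$.

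For (i) $\Rightarrow$ (ii), I would fix $A \in \mathfrak{A}$ with $\mu(A) = 0$ and expand $|\nu|(A)$ via its defining supremum over countable disjoint measurable decompositions $A = \bigcup_{j} A_{j}$. Monotonicity of $\mu$ forces $\mu(A_{j}) = 0$ for every $j$, absolute continuity of $\nu$ then yields $\nu(A_{j}) = 0$, and hence every term inside the supremum vanishes, giving $|\nu|(A) = 0$.

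For (ii) $\Rightarrow$ (iii), I would apply the Radon-Nikod{\'y}m theorem to the finite positive measure $|\nu|$ (finite because a complex measure always has finite total variation $\|\nu\| < \infty$, and $\mu$, being a probability measure, is $\sigma$-finite), obtaining a non-negative density $\rho := d|\nu|/d\mu \in L^{1}(\mu)$. Setting $M := \|\rho\|_{\infty} \in [0,\infty]$ and invoking the defining property of the essential supremum yields $|\nu|(A) = \int_{A} \rho \, d\mu \leq M \mu(A)$ for every $A \in \mathfrak{A}$, establishing (iii) with a concrete candidate for the minimal $M$.

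Finally, to complete the minimality claim, I would observe that admissibility of $\|\rho\|_\infty$ has just been shown, and argue the reverse inequality by contradiction: if some $M'$ satisfies (iii) yet $\mu(\{\rho > M' + \epsilon\}) > 0$ for some $\epsilon > 0$, then integrating $\rho$ over $A := \{\rho > M' + \epsilon\}$ gives $(M' + \epsilon)\mu(A) \leq |\nu|(A) \leq M' \mu(A)$, an impossibility. Hence $\rho \leq M'$ $\mu$-almost everywhere, so $\|\rho\|_{\infty} \leq M'$. To translate this back to $d\nu/d\mu$, I would invoke the standard identity $d|\nu|/d\mu = |d\nu/d\mu|$, valid for any complex measure absolutely continuous with respect to $\mu$, which equates $\|\rho\|_\infty$ with $\|d\nu/d\mu\|_\infty$. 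The step I expect to require the most care is this essential-supremum comparison together with the proper handling of the boundary case $M = \infty$, but both reduce to routine measure-theoretic arguments and the conventions already fixed in the paper.
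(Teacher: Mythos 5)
Your proof is correct and follows essentially the same route as the paper: the trivial implications are dispatched via $|\nu(A)|\leq|\nu|(A)$, condition (iii) is produced from the Radon--Nikod\'ym density, and the minimality of $\|d\nu/d\mu\|_{\infty}$ is obtained by the same contradiction argument on the positive-measure level set $\{\rho > M'+\epsilon\}$. The only differences are cosmetic: you supply the standard proof of (i)\,$\Rightarrow$\,(ii) that the paper delegates to textbooks, and you route the final identification through $d|\nu|/d\mu = |d\nu/d\mu|$ where the paper instead invokes the minimality of the variation $|\nu|$ to reach the same bound.
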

\begin{proof}
For the equivalence of the condition $(i) \Leftrightarrow (ii)$, the reader is referred to any textbooks on measure and integration theory.  We already know from the Reference Material in Section~\ref{sec:ups_II_pre} that $|\nu(A)| \leq |\nu|(A)$, $A \in \mathfrak{A}$. The implication $(ii) \Rightarrow (iii)$ is then trivial by simply taking $M=\infty$. The converse $(iii) \Rightarrow (ii)$ is also immediate by the definition of absolute continuity.  Now that we have proved the equivalence of the three conditions, we move on to the demonstration of the final statement.  To this end, first observe the evaluation
\begin{align}
|\nu(A)|
    &= \left| \int_{A} \frac{d\nu}{d\mu}\ d\mu \right| \nonumber \\
    &\leq \int_{A} \left| \frac{d\nu}{d\mu} \right| d\mu \leq \left\|\frac{d\nu}{d\mu}\right\|_{\infty} \cdot \mu(A).
\end{align}
Combining this with the minimality of the variation $|\nu|$, one sees that the choice $M = \|d\nu/d\mu\|_{\infty}$ of the upper bound satisfies \eqref{eq:abs_cont_cond}. Now, suppose that there exists a non-negative number $0 \leq M < \|d\nu/d\mu\|_{\infty}$ satisfying \eqref{eq:abs_cont_cond}. Then, by definition of the essential supremum, there exists a measurable set $A$ satisfying $0 < \mu(A)$ and $M < |d\nu/d\mu||_{A}$ (just take $A := \{x \in X : M < |d\nu/d\mu|(x) \}$), hence
\begin{equation}
|\nu|(A) = \int_{A} \left| \frac{d\nu}{d\mu} \right|\ d\mu > M \cdot \mu(A),
\end{equation}
which contradicts the minimality of $|\nu|$.
\end{proof}
\noindent
As a corollary to this, the following observation is of special interest.
\begin{corollary}[Conditional Expectations and Essential Suprema]\label{cor:cond_exp_and_ess_sup}
Let $(X, \mathfrak{A}, \mu)$ be a probability space, $f: X \to \mathbb{R}$ be $\mu$-integrable, and $\mathfrak{B} \subset \mathfrak{A}$ be a sub-$\sigma$-algebra. Then the evaluation
\begin{equation}
\|\, \mathbb{E}[f|\mathfrak{B}]\,\|_{\infty} \leq \|f\|_{\infty}
\end{equation}
holds. As a direct consequence, if moreover a measurable function $g: X \to \mathbb{R}$ is given, the evaluation
\begin{equation}
\|\, \mathbb{E}[f|g]\,\|_{\infty} \leq \|f\|_{\infty}
\end{equation}
naturally holds.
\end{corollary}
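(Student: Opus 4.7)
The plan is to reduce the first inequality to the preceding Lemma (the one characterising the essential supremum of a Radon-Nikod{\'y}m derivative as the smallest constant witnessing absolute continuity), and then to obtain the second inequality as a direct corollary by means of the factorisation built into the definition of $\mathbb{E}[f|g]$. The key observation is that, by the very definition of the conditional expectation given a sub-$\sigma$-algebra, one has $\mathbb{E}[f|\mathfrak{B}] = d(f \odot \mu)|_{\mathfrak{B}} / d\mu|_{\mathfrak{B}}$, so the preceding Lemma applies verbatim to this Radon-Nikod{\'y}m derivative on the restricted measurable space $(X,\mathfrak{B})$.

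First I would identify the variation of the numerator. The signed measure $f \odot \mu$ has variation $|f \odot \mu| = |f| \odot \mu$, so the restriction to $\mathfrak{B}$ satisfies $|(f \odot \mu)|_{\mathfrak{B}}|(B) = \int_{B}|f|\, d\mu$ for every $B \in \mathfrak{B}$. By the very definition of essential supremum, $|f| \leq \|f\|_{\infty}$ holds $\mu$-a.e., which immediately yields the estimate $\int_{B} |f|\, d\mu \leq \|f\|_{\infty} \cdot \mu(B)$ for all $B \in \mathfrak{B} \subset \mathfrak{A}$. This shows that $M = \|f\|_{\infty}$ is an admissible upper bound in condition (iii) of the preceding Lemma, applied to the complex measure $(f \odot \mu)|_{\mathfrak{B}}$ with base $\mu|_{\mathfrak{B}}$. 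Since the Lemma identifies the essential supremum $\|\mathbb{E}[f|\mathfrak{B}]\|_{\infty}$ of the corresponding Radon-Nikod{\'y}m derivative as the \emph{smallest} such $M$, the bound $\|\mathbb{E}[f|\mathfrak{B}]\|_{\infty} \leq \|f\|_{\infty}$ follows at once.

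For the second inequality, I would exploit the factorisation $\mathbb{E}[f|\mathcal{I}(g)] = \mathbb{E}[f|g] \circ g$ furnished by the commutative diagram \eqref{def:conditional_expectation}. Using the defining property $g(\mu)(C) = \mu(g^{-1}(C))$ of the image measure on each level set $C = \{y : |\mathbb{E}[f|g](y)| > \lambda\}$, one directly verifies the identity $\|\mathbb{E}[f|g]\|_{\infty, g(\mu)} = \|\mathbb{E}[f|g]\circ g\|_{\infty, \mu} = \|\mathbb{E}[f|\mathcal{I}(g)]\|_{\infty, \mu}$. Applying the first inequality to the sub-$\sigma$-algebra $\mathcal{I}(g) \subset \mathfrak{A}$ then closes the argument. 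The only real obstacle is bookkeeping: one must keep track of which reference measure is used for each essential supremum, and explicitly acknowledge the identification $|f \odot \mu| = |f| \odot \mu$; neither is deep, but sloppy handling risks circularity with the Lemma being invoked.
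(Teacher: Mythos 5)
Your proof is correct and takes essentially the same approach as the paper: identify $\mathbb{E}[f|\mathfrak{B}]$ as the Radon--Nikod{\'y}m derivative of $(f\odot\mu)|_{\mathfrak{B}}$ with respect to $\mu|_{\mathfrak{B}}$, check that $M=\|f\|_{\infty}$ is admissible in condition (iii) of the preceding Lemma using $|f|\leq\|f\|_{\infty}$ $\mu$-a.e., and conclude from the minimality clause of that Lemma. Two small remarks: your asserted equality $|(f\odot\mu)|_{\mathfrak{B}}|(B)=\int_{B}|f|\,d\mu$ should only be ``$\leq$'' (the variation computed over the coarser $\sigma$-algebra $\mathfrak{B}$ is in general smaller than the restriction of the variation over $\mathfrak{A}$), though the inequality points the right way so nothing breaks; and your explicit derivation of the second inequality via the factorisation $\mathbb{E}[f|\mathcal{I}(g)]=\mathbb{E}[f|g]\circ g$ and the level-set identity for the image measure $g(\mu)$ is actually more detailed than the paper, which merely asserts that it naturally holds.
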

\begin{proof}
First recall that the conditional expectation $\mathbb{E}[f|\mathfrak{B}]$ is nothing but the Radon-Nikod{\'y}m derivative of the complex measure $f \odot \mu$ with respect to the restriction $\mu|_{\mathfrak{B}}$. Letting $\nu := f \odot \mu$ and replacing $\mu$ by $\mu|_{\mathfrak{B}}$ in the above Lemma, one finds
\begin{equation}
|\nu(A)| = \left| \int_{A} f\ d\mu \right| \leq \|f\|_{\infty} \cdot \mu(A),
\end{equation}
hence
\begin{equation}
\|d\nu/d\mu\|_{\infty} = \|\, \mathbb{E}[f|\mathfrak{B}]\,\|_{\infty} \leq \|f\|_{\infty},
\end{equation}
which was to be demonstrated.
\end{proof}
\noindent
In casual language, this is to say that each value of the conditional expectation of $f$ never exceeds the maximum number that $f$ takes under a given probability measure, which is a result that should be intuitively clear. As a direct application of the result in the context of quantum measurement of a pair of simultaneously measurable observables $A$ and $B$, this reduces to the following.
\begin{corollary}\label{cor:cond_exp_obs_ess_sup}
Given a pair of strongly commuting self-adjoint operators $A$ and $B$ and a fixed state $|\phi\rangle \in \mathrm{dom}(A)$, the essential supremum of the conditional expectation of $A$ given $B$ is never greater than
\begin{equation}
\|\, \mathbb{E}[A|B;\phi] \,\|_{\infty} \leq \|A\|_{\infty}^{\phi},
\end{equation}
where $\|A \|_{\infty}^{\phi} := \|a\|_{\infty}$ denotes the essential supremum of the measurable function $a \mapsto a$ under the probability measure $\mu_{A}^{\phi}$ describing the behaviour of the outcome of the measurement of $A$ on the state $|\phi\rangle$. If $A$ happens to be bounded, its operator norm%
\footnote{For a bounded operator $X$, recall that the \emph{operator norm} of $X$ is defined by
\begin{equation}
\|X\| := \sup\{ \|X\phi\| : \|\phi\| = 1 \}.
\end{equation}
}
$\|A\|$ becomes the universal ({\it i.e.}, state independent) upper bound of $\|A \|_{\infty}^{\phi}$, hence
\begin{equation}
\|\, \mathbb{E}[A|B;\phi] \,\|_{\infty} \leq \|A \|_{\infty}^{\phi} \leq \| A\| < \infty
\end{equation}
holds for all $|\phi\rangle \in \mathcal{H}$.
\end{corollary}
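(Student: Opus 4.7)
The plan is to recognise this statement as an essentially immediate specialisation of the preceding Corollary~\ref{cor:cond_exp_and_ess_sup} to the situation where the underlying probability space is built from the joint spectral measure of a strongly commuting pair. Concretely, I would take as the probability space $(\mathbb{R}^{2},\mathfrak{B}^{2},\mu_{A,B}^{\phi})$ together with the two coordinate projections $\pi_{A}(a,b) := a$ and $\pi_{B}(a,b) := b$, and simply invoke the definition \eqref{def:conditional_expectation_of_observables} of $\mathbb{E}[A|B;\phi] = \mathbb{E}[\pi_{A}|\pi_{B}]$.

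First I would verify the integrability hypothesis required by Corollary~\ref{cor:cond_exp_and_ess_sup}: the assumption $|\phi\rangle \in \mathrm{dom}(A)$ together with the spectral theorem (see \eqref{def:expectation_A}) ensures that $\int_{\mathbb{R}^{2}}|\pi_{A}|\,d\mu_{A,B}^{\phi} = \int_{\mathbb{R}}|a|\,d\mu_{A}^{\phi}(a) < \infty$, so $\pi_{A}$ is $\mu_{A,B}^{\phi}$-integrable. A direct application of Corollary~\ref{cor:cond_exp_and_ess_sup} with $f = \pi_{A}$ and $g = \pi_{B}$ then yields $\|\mathbb{E}[\pi_{A}|\pi_{B}]\|_{\infty} \leq \|\pi_{A}\|_{\infty}$, where the right-hand essential supremum is taken with respect to $\mu_{A,B}^{\phi}$. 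The next step is to identify this quantity with $\|A\|_{\infty}^{\phi}$ as defined in the statement; this is just the observation that $\mu_{A}^{\phi}$ is the image measure (marginal) of $\mu_{A,B}^{\phi}$ under $\pi_{A}$, since $\mu_{A,B}^{\phi}(\Delta_{A}\times\mathbb{R}) = \mu_{A}^{\phi}(\Delta_{A})$, which immediately forces the essential supremum of $\pi_{A}$ under the joint measure to coincide with that of the identity map $a\mapsto a$ under $\mu_{A}^{\phi}$.

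For the second inequality I would appeal to the spectral theorem once more: the spectral measure $E_{A}$ is supported on $\sigma(A)$, so $\mu_{A}^{\phi}$ is itself supported on $\sigma(A)$ for every $|\phi\rangle \in \mathcal{H}$. When $A$ is bounded one has $\sigma(A)\subset[-\|A\|,\|A\|]$, hence $|a|\leq \|A\|$ holds $\mu_{A}^{\phi}$-a.e., giving $\|A\|_{\infty}^{\phi}\leq \|A\| < \infty$ uniformly in the state. Chaining the two estimates produces the claimed inequality.

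The only even mildly delicate point is the passage from the essential supremum of $\pi_{A}$ under $\mu_{A,B}^{\phi}$ to that of $a\mapsto a$ under $\mu_{A}^{\phi}$, i.e. the invariance of the essential supremum under push-forward by a coordinate projection; but this is a one-line consequence of the defining equality $\mu_{A,B}^{\phi}(\pi_{A}^{-1}(\Delta)) = \mu_{A}^{\phi}(\Delta)$ together with the definition \eqref{def:ess_sup} of $\|\cdot\|_{\infty}$, so no real obstacle arises. The proof is therefore genuinely a corollary with no further content beyond carefully unwinding the definitions.
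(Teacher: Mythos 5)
Your proof is correct and follows essentially the same route as the paper: the first inequality is obtained by applying Corollary~\ref{cor:cond_exp_and_ess_sup} to $f=\pi_{A}$, $g=\pi_{B}$ on $(\mathbb{R}^{2},\mathfrak{B}^{2},\mu_{A,B}^{\phi})$ (the paper simply declares this "immediate", whereas you usefully spell out the marginal identification $\|\pi_{A}\|_{\infty}=\|A\|_{\infty}^{\phi}$), and the second from the support of $\mu_{A}^{\phi}$ in $\sigma(A)$. The only cosmetic difference is in the last step: you invoke the standard inclusion $\sigma(A)\subset[-\|A\|,\|A\|]$ directly, while the paper reaches the same bound via the numerical range $\sigma(A)\subset\overline{W(A)}$ and Cauchy--Schwarz.
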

\begin{proof}
The former part of the statement is immediate by Corollary~\ref{cor:cond_exp_and_ess_sup}. For the latter part, we first recall that the \emph{numerical range} of a self-adjoint operator $X$ is defined as
\begin{equation}
W(X) := \{ \langle \psi, X \psi \rangle : |\psi\rangle \in \mathrm{dom}(X), \|\psi\|^{2} = 1 \},
\end{equation}
which is nothing but the collection of all possible expectation values of $X$. Now, a direct application of the Cauchy-Schwarz inequality leads to
\begin{equation}
\left|\, \mathbb{E}[X;\phi] \,\right| \leq \|X\|, \quad \mathbb{E}[X;\phi] \in W(X),
\end{equation}
for bounded $X$, and by recalling the basic relation $\sigma(X) \subset \overline{W(X)}$, where the overline on $W(X)$ denotes its topological closure, one concludes
\begin{align}
\| X \|_{\infty}^{\phi}
    &\leq \sup\{|x| : x \in \sigma(X)\} \nonumber \\
    &\leq \sup\{|x| : x \in \overline{W(X)}\} \leq \|X\|,
\end{align}
which was to be demonstrated.
\end{proof}
\noindent
The latter part of the statement is to say that conditional expectations of a bounded observable has a universal upper bound given by its operator norm, which is also a result that should be intuitively clear.

\paragraph{On the `Limit of Amplification' by Conditional Measurement}
As a direct application of the above corollary to our problem, we obtain the main result of this passage.
\begin{proposition}[Amplification by Conditioning]\label{prop:limit_of_amplification}
Under the framework of the CM scheme, the essential supremum of the conditional expectation of $X$ given $B$ is never greater than that of the UM scheme of $X$
\begin{equation}\label{ineq:limit_of_amplification_01}
|\, \mathbb{E}[X|B = b;\Psi^{g}]\,| \leq \|\, \mathbb{E}[X|B;\Psi^{g}]\,\|_{\infty} \leq \|X\|_{\infty}^{\psi^{g}},
\end{equation}
where $\|X\|_{\infty}^{\psi^{g}}:= \|x\|_{\infty}$ denotes the essential supremum of $x$ under the probability measure $\mu_{X}^{\psi^{g}}$ describing the behaviour of the outcome of the local measurement $X$ on the meter system. In other words, $\|X\|_{\infty}^{\psi^{g}}$ gives the (conditioning-observable-independent) upper bound to the extent the conditional expectation can be `amplified' by means of conditioning%
\footnote{Recall the inherent subtlety when we use the expression $\mathbb{E}[X|B = b;\Psi^{g}]$. The left most inequality in \eqref{ineq:limit_of_amplification_01} should thus be understood to hold $\mu_{B}^{\phi^{g}}$-a.e.}.
\end{proposition}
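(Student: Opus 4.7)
The plan is to recognise this proposition as a direct corollary of Corollary~\ref{cor:cond_exp_obs_ess_sup}, applied to the composite Hilbert space $\mathcal{H}\otimes\mathcal{K}$ with the state $|\Psi^{g}\rangle$, together with the identification of the measurement statistics of the local meter observable with those of the composite observable. First, I would set up the observables of interest as $\tilde{X} := I\otimes X$ and $\tilde{B} := B\otimes I$ on $\mathcal{H}\otimes\mathcal{K}$, and note that they strongly commute, since their spectral measures are $I\otimes E_{X}$ and $E_{B}\otimes I$ respectively, which commute trivially on tensor factors. Under the integrability conditions guaranteed by Proposition~\ref{prop:def_CM_I}, the conditional expectation $\mathbb{E}[\tilde{X}|\tilde{B};\Psi^{g}] = \mathbb{E}[X|B;\Psi^{g}]$ is thus well-defined in the sense of \eqref{def:conditional_expectation_of_observables}.

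Next, I would apply Corollary~\ref{cor:cond_exp_obs_ess_sup} to the strongly commuting pair $(\tilde{X},\tilde{B})$ on the state $|\Psi^{g}\rangle$ to obtain the bound
\begin{equation}
\big\|\,\mathbb{E}[\tilde{X}|\tilde{B};\Psi^{g}]\,\big\|_{\infty} \leq \|\tilde{X}\|_{\infty}^{\Psi^{g}},
\end{equation}
where the right-hand side is the essential supremum of the identity map $x\mapsto x$ with respect to the probability measure $\mu_{\tilde{X}}^{\Psi^{g}} = \mu_{I\otimes X}^{\Psi^{g}}$. The key translation step is then to invoke the identification \eqref{eq:pdf_x_outcome}, namely $\mu_{I\otimes X}^{\Psi^{g}} = \mu_{X}^{\psi^{g}}$, which immediately gives $\|\tilde{X}\|_{\infty}^{\Psi^{g}} = \|X\|_{\infty}^{\psi^{g}}$, yielding the right-hand inequality of \eqref{ineq:limit_of_amplification_01}.

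Finally, for the left-hand inequality, I would simply invoke the universal fact that any measurable function is bounded in absolute value by its essential supremum almost everywhere. Applying this to the representative $\mathbb{E}[X|B;\Psi^{g}]$ evaluated at $b\in\mathbb{R}$, one has
\begin{equation}
|\mathbb{E}[X|B=b;\Psi^{g}]| \leq \big\|\mathbb{E}[X|B;\Psi^{g}]\big\|_{\infty}, \quad \mu_{B}^{\phi^{g}}\text{-a.e. } b,
\end{equation}
which is exactly the claimed bound, with the a.e. caveat noted in the footnote of the proposition.

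I do not anticipate any serious obstacle, as the entire statement essentially assembles pieces already established in the Reference Materials and the preliminary Corollaries. The only delicate point is the careful bookkeeping of which measure underlies each essential supremum (one on $\mathcal{H}\otimes\mathcal{K}$, the other on $\mathcal{K}$ via the reduced density matrix $\psi^{g}$), which is resolved cleanly by the agreement \eqref{eq:pdf_x_outcome} of the two probability measures on Borel sets of $\mathbb{R}$.
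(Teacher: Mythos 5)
Your proposal is correct and follows exactly the paper's intended route: the paper likewise obtains this proposition as "a direct application" of Corollary~\ref{cor:cond_exp_obs_ess_sup} to the strongly commuting pair $I\otimes X$ and $B\otimes I$ on $|\Psi^{g}\rangle$, combined with the identification $\mu_{I\otimes X}^{\Psi^{g}}=\mu_{X}^{\psi^{g}}$ from \eqref{eq:pdf_x_outcome} and the universal bound $|f|\leq\|f\|_{\infty}$ a.e. Nothing to add.
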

\noindent
In physical terms, this is to say that the extent one may `amplify' the conditional expectation $\mathbb{E}[X|B ; \Psi^{g}]$ by means of changing the conditioning observable $B$ is predetermined by $\|X\|_{\infty}^{\psi^{g}}$. This is one general form to answer the question of the existence of the limit of `amplification' by conditioning.

As the next step, one might eventually be interested in seeking for the condition under which $\|X\|_{\infty}^{\psi^{g}}$ is bounded from above, even if we could freely choose the initial state $|\phi\rangle$ of the target system. This would create a universal upper bound of $\|\, \mathbb{E}[X|B;\Psi^{g}]\,\|_{\infty}$ that is indifferent to both the initial and final configurations of the target system ({\it i.e.}, the choice of the initial target state $|\phi\rangle$ and the conditioning observable $B$). As we have learned from the discussions above, this would typically be the case when there exists a subspace $U(g,\psi) \subset \mathcal{H} \otimes \mathcal{K}$, for fixed $g \in \mathbb{R}$ and $|\psi\rangle \in \mathrm{dom}(X)$, such that $|\Psi^{g}\rangle \in U(g,\psi)$ for all $|\phi\rangle \in \mathrm{dom}(A)$, and that the restriction of $I \otimes X$ on $U(g,\psi)$ is bounded.

\begin{proposition}[Limit of Amplification by Conditioning]\label{prop:limit_of_amplification_typ}\label{prop:lim_amp_example}
Under the framework of the CM scheme, let both the interaction parameter $g \in \mathbb{R}$ and the initial meter state $|\psi\rangle \in \mathrm{dom}(X)$ be fixed, and suppose that the target observable $A$ has a spectrum $\sigma(A) = \{a_{1}, \dots, a_{N}\}$, $N \in \mathbb{N}^{\times}$ of finite cardinality. Then, the following facts hold:
\begin{enumerate}
\item The density operator $\psi^{g}$ of the meter system \eqref{def:meterstate} can be written as a probabilistic mixture of a finite number of projection operators (pure states) supported on the finite-dimensional (at most $N$-dimensional) subspace
\begin{equation}\label{eq:fin_dim}
\mathcal{K}(g,\psi) := \mathrm{span}(\{ |e^{-iga_{1}Y}  \psi \rangle, \dots, |e^{-iga_{N}Y}  \psi \rangle\}),
\end{equation}
which is independent of the initial choice $|\phi\rangle \in \mathcal{H}$ of the target state.
\item The restriction $X|_{\mathcal{K}(g,\psi)}$ of the meter observable $X$ on the subspace \eqref{eq:fin_dim} is bounded, and thus its operator norm
\begin{equation}
\|\, \mathbb{E}[X|B;\Psi^{g}]\,\|_{\infty} \leq \left\|X|_{\mathcal{K}(g,\psi)}\right\| < \infty
\end{equation}
provides a finite universal upper bound to the conditional expectation that is independent of the configuration of the target system ({\it i.e.}, the choice of the initial state $|\phi\rangle \in \mathcal{H}$ and that of the conditioning observable $B$).
\end{enumerate}

\end{proposition}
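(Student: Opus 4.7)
The plan is to exploit the finite cardinality of $\sigma(A)$ to obtain an explicit tensor-product expansion of $|\Psi^{g}\rangle$, from which both parts of the proposition follow in short order. First I would write the spectral decomposition $A = \sum_{n=1}^{N} a_{n} \Pi_{a_{n}}$ as in \eqref{eq:spect_decomp_fin} and invoke the operator identity \eqref{eq:int_op_fin} (applied with $Z = Y$) to obtain
\begin{equation*}
|\Psi^{g}\rangle \;=\; e^{-ig A \otimes Y} |\phi \otimes \psi\rangle \;=\; \sum_{n=1}^{N} |\Pi_{a_{n}} \phi\rangle \otimes |e^{-iga_{n} Y}\psi\rangle,
\end{equation*}
which manifestly confines the meter-side component of $|\Psi^{g}\rangle$ to the subspace $\mathcal{K}(g,\psi)$ of \eqref{eq:fin_dim}.

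For part (i), I would then compute the partial trace $\psi^{g} = \mathrm{Tr}_{\mathcal{H}}[|\Psi^{g}\rangle\langle\Psi^{g}|]$ directly and exploit the mutual orthogonality $\Pi_{a_{m}} \Pi_{a_{n}} = \delta_{mn} \Pi_{a_{n}}$ of the spectral projections to make all cross terms vanish. This yields
\begin{equation*}
\psi^{g} \;=\; \sum_{n=1}^{N} \|\Pi_{a_{n}} \phi\|^{2} \cdot |e^{-iga_{n} Y}\psi\rangle \langle e^{-iga_{n} Y}\psi|,
\end{equation*}
which, upon normalising by $\mathrm{Tr}[\psi^{g}] = \|\phi\|^{2}\|\psi\|^{2}$, displays $\psi^{g}$ as the probabilistic mixture of the (at most $N$) pure states $|e^{-iga_{n} Y}\psi\rangle$ weighted by $\mu_{A}^{\phi}(\{a_{n}\})$, each supported in the at-most-$N$-dimensional subspace $\mathcal{K}(g,\psi)$, as claimed.

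For part (ii), the well-definedness conditions of Proposition~\ref{prop:def_CM_I} place $|\psi\rangle \in \mathrm{dom}(X)$, while the Weyl-relations-based domain argument used in deriving \eqref{eq:int_dom_Q} in Section~\ref{sec:ups_I_ups} guarantees that each $e^{-iga_{n} Y}|\psi\rangle$ likewise lies in $\mathrm{dom}(X)$, so that $\mathcal{K}(g,\psi) \subset \mathrm{dom}(X)$. Since any linear map out of a finite-dimensional normed space is automatically continuous, the restriction $X|_{\mathcal{K}(g,\psi)}$ is bounded, with operator norm equal to that of the everywhere-defined bounded operator $X P_{\mathcal{K}(g,\psi)}$, where $P_{\mathcal{K}(g,\psi)}$ denotes the orthogonal projection onto $\mathcal{K}(g,\psi)$. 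For the universal upper bound on the conditional expectation, I would exploit the strong commutativity of $B \otimes I$ with $I \otimes X$ together with the idempotency $E_{B}(\Delta)^{2} = E_{B}(\Delta)$ to rewrite, for each Borel $\Delta$ of nonvanishing probability,
\begin{equation*}
\mathbb{E}[X | B \in \Delta;\, \Psi^{g}] \;=\; \frac{\langle \tilde{\Psi}^{g}_{\Delta},\, (I \otimes X)\, \tilde{\Psi}^{g}_{\Delta} \rangle}{\|\tilde{\Psi}^{g}_{\Delta}\|^{2}}, \qquad \tilde{\Psi}^{g}_{\Delta} := (E_{B}(\Delta) \otimes I) |\Psi^{g}\rangle,
\end{equation*}
and observe that $\tilde{\Psi}^{g}_{\Delta}$ also has its meter component confined to $\mathcal{K}(g,\psi)$. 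A single application of the Cauchy--Schwarz inequality, together with $(I \otimes X)\tilde{\Psi}^{g}_{\Delta} = (I \otimes X P_{\mathcal{K}(g,\psi)})\tilde{\Psi}^{g}_{\Delta}$, then produces the uniform estimate $|\mathbb{E}[X | B \in \Delta;\, \Psi^{g}]| \leq \|X|_{\mathcal{K}(g,\psi)}\|$; passing to the essential supremum and combining with Proposition~\ref{prop:limit_of_amplification} yields the desired bound, which is independent of the choice of both $|\phi\rangle$ and $B$.

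The main obstacle I anticipate is the domain question: ensuring $e^{-iga_{n} Y}|\psi\rangle \in \mathrm{dom}(X)$ requires explicit invocation of the domain-preservation properties of the unitaries generated by the Weyl partner of $X$. Once this is secured, the remainder reduces to a finite tensor-product expansion and an elementary finite-dimensional boundedness fact.
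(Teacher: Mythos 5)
Your proof is correct and follows essentially the same route as the paper: the finite spectral decomposition of $A$ gives the tensor-product expansion $|\Psi^{g}\rangle = \sum_{n}|\Pi_{a_{n}}\phi\rangle\otimes|e^{-iga_{n}Y}\psi\rangle$, the partial trace then exhibits $\psi^{g}$ as a mixture supported on $\mathcal{K}(g,\psi)$, and boundedness of operators on finite-dimensional subspaces yields the universal bound via the Cauchy--Schwarz mechanism underlying Corollary~\ref{cor:cond_exp_obs_ess_sup}. You merely supply details the paper's sketch leaves implicit (the vanishing of cross terms in the partial trace, the domain preservation $e^{-iga_{n}Y}|\psi\rangle\in\mathrm{dom}(X)$, and the explicit Cauchy--Schwarz estimate in place of the bare reference to Corollary~\ref{cor:cond_exp_obs_ess_sup}), which if anything makes the argument more watertight since $\mathcal{K}(g,\psi)$ need not be $X$-invariant.
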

\begin{proof}
Under the above condition, first observe that
\begin{align}
|\Psi^{g}\rangle
    &= \sum_{n = 1}^{N} \left( \Pi_{a_{n}} \otimes e^{-iga_{n}Y} \right) | \phi \otimes \psi \rangle \nonumber \\
    &= \sum_{n = 1}^{N} \left( \Pi_{a_{n}} |\phi\rangle \otimes |e^{-iga_{n}Y}\psi\rangle \right),
\end{align}
where we have used \eqref{eq:int_op_fin}. One readily finds from the above formula that the density operator
\begin{equation}
\psi^{g} = \mathrm{Tr}_{\mathcal{H}}\left[ |\Psi^{g}\rangle \langle\Psi^{g}| \right],
\end{equation}
defined as in \eqref{def:meterstate}, can indeed be written as a probabilistic mixture of a finite number of projection operators (pure states) supported on the subspace \eqref{eq:fin_dim}.
We then recall that any operator $X$ defined on a finite-dimensional Hilbert space are necessarily bounded, and thus observe that the current problem at hand reduces to the situation of Corollary~\ref{cor:cond_exp_obs_ess_sup}.
\end{proof}
\noindent
In physical terms, this is to say that there exists a finite limit $\left\|X|_{\mathcal{K}(g,\psi)}\right\| < \infty$ to the extent one may `amplify' the conditional expectation $\mathbb{E}[X|B ; \Psi^{g}]$ by means of only changing the configuration of the target system (namely, by changing either or both the conditioning observable $B$ and the initial state $|\phi\rangle \in \mathcal{H}$ of the target system). Specifically, the evaluation
\begin{equation}\label{eq:limit_of_amp_explicit}
\big|\, \mathbb{E}[X|B = b; \Psi^{g}] \,\big| \leq \left\|X|_{\mathcal{K}(g,\psi)}\right\| < \infty
\end{equation}
holds for all $b \in \sigma(B)$ up to a set of probability zero, and the upper bound $\left\|X|_{\mathcal{K}(g,\psi)}\right\|$ does not depend on the choice of $B$ nor $|\phi\rangle$. 
Naturally, if one could change either the interaction parameter $g$ or the initial state $|\psi\rangle$ of the meter system alongside, the above result is no more valid.

\subsection{Recovery of the Target Profile}

Parallel to the study of the UM scheme, we are now interested in the information of the target system which is to be  extracted from the CM scheme. 
Following the line of arguments for the UM scheme, we are specifically interested in investigating the local behaviour of the outcome of the CM scheme around $g=0$, {\it i.e.}, the \emph{weak conditioned measurement}, in which the target of our analysis is the map
\begin{equation}
g \mapsto \mathbb{E}[X | B; \Psi^{g}]
\end{equation}
from the interaction parameter $g$ to the conditional expectation of $X$ given $B$, which was in general defined as a map from the real line to an equivalent class of functions. To this end, we first conduct a preliminary observation.

\subsubsection{Preliminary Observation}\label{sec:ps_I_wps}

Since the definition of the conditional expectation is given in a rather abstract way, the conditional expectation \eqref{def:post-selected_measurement} in general does not admit an explicit expression by vectors and operators (in contrast to the UM case \eqref{prop:UCM_I_formula}, which always admits such an explicit expression). In view of this, it would be sometimes helpful if one could find a condition for which the conditional expectation \eqref{def:post-selected_measurement} of our interest may be explicitly written down. We first point out that this will be indeed the case given that the spectrum of the conditioning observable $B$ has finite cardinality.
Now, let
\begin{equation}\label{eq:spectr_decomp_B_fin}
B = \sum_{n=1}^{N} b_{n} \,\Pi_{b_{n}}
\end{equation}
be the spectral decomposition of $B$, where $\sigma(B) = \{b_{1}, \dots, b_{N}\}$ is any enumeration of its eigenvalues, and $\Pi_{b} := E_{B}(\{b\})$, $b \in \sigma(B)$ denotes the unique projection on the eigenspace associated to it. It is then fairly straightforward to see by definition that the conditional expectation of $X$ given $B$ is explicitly given by
\begin{align}\label{eq:cond_exp_B_fin}
&\mathbb{E}[X|B = b ; \Psi^{g}] \nonumber \\
   & \quad = \begin{cases}
    {\mathbb{E}\left[\Pi_{b} \otimes X; \Psi^{g} \right] / \left\|(\Pi_{b} \otimes I)\Psi^{g}\right\|^{2}}, & \quad (b \in \sigma(B),\ \left\|(\Pi_{b} \otimes I)\Psi^{g}\right\|^{2} \neq 0), \\
    \text{indefinite}, & \quad (\text{else}).
    \end{cases}
\end{align}
Here, recall that conditional expectations are defined as an equivalence class of functions, and hence its value for the outcome $b$ of the measurement of the observable $B$ such that the probability of observing it is vanishing, is indefinite by definition. 
The study of the weak CM scheme then reduces to the analysis of the map
\begin{equation}\label{def:wpsm_special}
g \mapsto \mathbb{E}[X | B = b; \Psi^{g}]
\end{equation}
for each $b \in \sigma(B)$ such that the probability of observing it is non-vanishing. Since this is a map from the real line to itself ({\it i.e.}, a function), it should be a much more familiar and straightforward object to deal with.

\paragraph{Objective of this Passage}
In what follows, we will be discussing the differentiability of the function \eqref{def:wpsm_special} at the point $g=0$.
To this end, first observe that the choice of $b \in \sigma(B)$ for which the probability of observing it is non-vanishing is dependent on $g$. Hence, for each $b \in \sigma(B)$, we must first guarantee its well-definedness, at least on some neighbourhood of $g=0$.
Fortunately, this is indeed the case for the choice $b \in \sigma(B)$ such that the probability of finding it on the initial state $|\phi\rangle$ of the target system $\mathbb{E}\left[ \Pi_{b} \otimes I; \Psi^{0} \right] = \|\Pi_{b} \phi \|^{2} \neq 0$ is non-vanishing, due to continuity of the function $g \mapsto \mathbb{E}\left[ \Pi_{b} \otimes I; \Psi^{g} \right]$. The main objective of this passage is to demonstrate the following statement.

\begin{proposition}[Differentiability of the Conditional Expectation: Preliminary]\label{prop:diff_cond_exp}
Suppose that the conditioning observable $B$ has spectrum of finite cardinality, and moreover let $|\phi\rangle \in \mathrm{dom}(A)$, $|\psi\rangle \in \mathcal{D} \subset \mathrm{dom}(X)$ (the subspace $\mathcal{D}$ is defined as in \eqref{eq:weyl_subspace}) be assumed, so that the conditional expectation $\mathbb{E}[X|B;\Psi^{g}]$ is well-defined for all range of $g \in \mathbb{R}$. Then for $b \in \sigma(B)$ such that $\|\Pi_{b} \phi \|^{2} \neq 0$, the conditional expectation $\mathbb{E}[X|B = b;\Psi^{g}]$ is well-defined on some neighbourhood of $g=0$. It is moreover differentiable with respect to $g$ at the origin, for which the differential coefficient reads
\begin{align}\label{eq:wpsm_diff}
&\left. \frac{d}{dg} \mathbb{E}[X|B = b; \Psi^{g}] \right|_{g=0} \nonumber \\
    & \qquad = 2\,\mathrm{Re} \left[\frac{\langle \phi, \Pi_{b}A\phi \rangle}{\|\Pi_{b}\phi\|^{2}}\right] \cdot \mathbb{CV}_{\mathrm{A}}[X,Y; \psi] +
   2\,\mathrm{Im}\left[\frac{\langle \phi, \Pi_{b}A\phi \rangle}{\|\Pi_{b}\phi\|^{2}}\right] \cdot \mathbb{CV}_{\mathrm{S}}[X,Y; \psi].
\end{align}
\end{proposition}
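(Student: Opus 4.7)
The plan is to exploit the finite cardinality of $\sigma(B)$ to reduce the abstract conditional expectation to an explicit scalar ratio via \eqref{eq:cond_exp_B_fin}, differentiate that ratio using Stone's theorem and the product rule on $\mathrm{dom}(A\otimes Y)$, and then repackage the resulting commutator/anti-commutator terms into the symmetric and anti-symmetric covariances appearing in the claim.

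For the well-definedness part, I would write $\Pi_b:=E_B(\{b\})$ and use \eqref{eq:cond_exp_B_fin} to identify $\mathbb{E}[X|B=b;\Psi^g]$ with $u(g)/v(g)$, where $u(g):=\langle\Psi^g,(\Pi_b\otimes X)\Psi^g\rangle$ and $v(g):=\|(\Pi_b\otimes I)\Psi^g\|^2$, whenever $v(g)\neq 0$. Under the hypotheses, $\Phi:=|\phi\otimes\psi\rangle\in\mathrm{dom}(A\otimes Y)$, so by Stone's theorem $g\mapsto\Psi^g=U(g)\Phi$ is strongly continuous, hence $v$ is continuous with $v(0)=\|\Pi_b\phi\|^2\|\psi\|^2>0$; non-vanishing then persists on an open neighbourhood of $0$.

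For the derivative, the key input is strong differentiability $\frac{d}{dg}\Psi^g\big|_{g=0}=-i(A\otimes Y)\Phi$ on $\mathrm{dom}(A\otimes Y)$, which together with $X\psi,Y\psi\in\mathrm{dom}(X)\cap\mathrm{dom}(Y)$ (from $\psi\in\mathcal{D}\subset\mathrm{dom}(XY)\cap\mathrm{dom}(YX)$) lets me invoke the product rule to get, setting $\alpha:=\langle\phi,\Pi_b A\phi\rangle$ (and $\langle\phi,A\Pi_b\phi\rangle=\alpha^{*}$ by self-adjointness of $A$ and $\Pi_b$),
\begin{align*}
u'(0)&=-i\langle\Phi,[\Pi_b\otimes X,\,A\otimes Y]\Phi\rangle=-i\bigl(\alpha\,\langle\psi,XY\psi\rangle-\alpha^{*}\langle\psi,YX\psi\rangle\bigr),\\
v'(0)&=-i\langle\Phi,[\Pi_b\otimes I,\,A\otimes Y]\Phi\rangle=2\,\mathrm{Im}(\alpha)\,\langle\psi,Y\psi\rangle.
\end{align*}
Then I apply the quotient rule $F'(0)=(u'(0)v(0)-u(0)v'(0))/v(0)^{2}$ to $F:=u/v$, substitute $u(0)=\|\Pi_b\phi\|^2\langle\psi,X\psi\rangle$ and $v(0)=\|\Pi_b\phi\|^2\|\psi\|^2$, introduce $\tilde\alpha:=\alpha/\|\Pi_b\phi\|^2$ to collapse the target dependence to a single complex parameter, and decompose $XY=\tfrac{1}{2}\{X,Y\}+\tfrac{1}{2}[X,Y]$ (and symmetrically for $YX$). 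This splits $-i\bigl(\tilde\alpha\,\mathbb{E}[XY;\psi]-\tilde\alpha^{*}\mathbb{E}[YX;\psi]\bigr)$ into $2\,\mathrm{Re}(\tilde\alpha)\cdot\tfrac{1}{2i}\mathbb{E}[[X,Y];\psi]=2\,\mathrm{Re}(\tilde\alpha)\cdot\mathbb{CV}_{\mathrm{A}}[X,Y;\psi]$ plus $\mathrm{Im}(\tilde\alpha)\cdot\mathbb{E}[\{X,Y\};\psi]$; the latter then absorbs the cross-contribution $-2\,\mathrm{Im}(\tilde\alpha)\,\mathbb{E}[X;\psi]\mathbb{E}[Y;\psi]$ coming from the $u(0)v'(0)/v(0)^2$ piece to form $2\,\mathrm{Im}(\tilde\alpha)\cdot\mathbb{CV}_{\mathrm{S}}[X,Y;\psi]$, yielding the stated identity.

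The main obstacle will be the rigorous justification of the commutator forms for $u'(0)$ and $v'(0)$: turning $\frac{d}{dg}\langle\Psi^g,(\Pi_b\otimes X)\Psi^g\rangle|_{g=0}$ into $-i\langle\Phi,[\Pi_b\otimes X,\,A\otimes Y]\Phi\rangle$ requires the inclusions $(A\otimes Y)\Phi\in\mathrm{dom}(\Pi_b\otimes X)$ and $(\Pi_b\otimes X)\Phi\in\mathrm{dom}(A\otimes Y)$, neither of which is automatic when $X$ is unbounded, because the spectral projection $\Pi_b$ of $B$ need not preserve $\mathrm{dom}(A)$. I expect to handle this by exploiting the finiteness of $\sigma(B)$ — decomposing $\phi=\sum_{c\in\sigma(B)}\Pi_c\phi$ and verifying the required domain memberships componentwise, possibly combined with polarisation to peel off the bounded factor $\Pi_b$ — so that all operator products actually encountered in the computation remain well-defined.
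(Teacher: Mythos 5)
Your proposal is correct and follows essentially the same route as the paper: reduce the conditional expectation to the explicit ratio $u(g)/v(g)$ via \eqref{eq:cond_exp_B_fin}, differentiate numerator and denominator at $g=0$ using Stone's theorem and the Leibniz rule, apply the quotient rule, and split $XY$ into $\tfrac{1}{2}\{X,Y\}+\tfrac{1}{2}[X,Y]$ to produce $\mathbb{CV}_{\mathrm{S}}$ and $\mathbb{CV}_{\mathrm{A}}$. Your commutator expression for $u'(0)$ is algebraically identical to the paper's $2\,\mathrm{Re}\left[\langle F(0),(\Pi_{b}\otimes X)F'(0)\rangle\right]$, and the final identity comes out the same.

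One correction on the obstacle you flag at the end. The inclusion $(\Pi_{b}\otimes X)\Phi\in\mathrm{dom}(A\otimes Y)$ is indeed generally unavailable, and your proposed fix does not repair it: decomposing $\phi=\sum_{c}\Pi_{c}\phi$ leaves you with exactly the same problem, since $\Pi_{c}\phi$ need not lie in $\mathrm{dom}(A)$ when $B$ fails to commute with $A$. But the inclusion is also not needed. Keep the first Leibniz term in sesquilinear form, $\langle -i(A\otimes Y)\Phi,(\Pi_{b}\otimes X)\Phi\rangle = i\,\langle A\phi,\Pi_{b}\phi\rangle\,\langle Y\psi,X\psi\rangle$, which factorises over the tensor product and requires only $|\phi\rangle\in\mathrm{dom}(A)$ and $|\psi\rangle\in\mathcal{D}$; there is never any need to move $A\otimes Y$ to the right of the inner product. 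The genuine domain work, which your sketch glosses over, is establishing that $g\mapsto(\Pi_{b}\otimes X)\Psi^{g}$ is strongly differentiable in the first place so that the Leibniz rule applies; the paper does this by writing it as $(\Pi_{b}\otimes I)(I\otimes X)\Psi^{g}$, invoking boundedness of $\Pi_{b}\otimes I$, deriving differentiability of $(I\otimes X)\Psi^{g}$ from the Weyl-relation identity \eqref{eq:weak_weyl_analogue01} under $|\psi\rangle\in\mathcal{D}$, and then using closedness of $\Pi_{b}\otimes X$ to pass the operator through the derivative.
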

\noindent
Here, we have introduced the quantities,
\begin{align}\label{def:q_covariance}
\mathbb{CV}_{\mathrm{S}}[X,Y; \psi] &:= \mathbb{E}[\{X,Y\}/2; \psi] - \mathbb{E}[X; \psi]\mathbb{E}[Y; \psi], \\
\mathbb{CV}_{\mathrm{A}}[X,Y; \psi] &:= \mathbb{E}[[X,Y]/(2i); \psi],
\end{align}
occasionally called the symmetric and anti-symmetric (quantum) covariance%
\footnote{
Note that in the case where the two observables coincide $X=Y$, the symmetric quantum covariance reduces to the familiar variance,
\begin{equation}\label{def:variance}
\mathbb{CV}_{\mathrm{S}}[X,X; \psi] = \mathbb{V}[X;\psi] := \mathbb{E}[X^{2};\phi] - \mathbb{E}[X;\phi]^{2},
\end{equation}
which is reminiscent of the familiar result in classical probability theory, whereas the anti-symmetric covariance reduces to null $\mathbb{CV}_{\mathrm{A}}[X,X; \psi] = 0$.
}
of $X$ and $Y$ on the state $|\psi\rangle \in \mathcal{D}$, respectively, where $\{X,Y\} := XY + YX$ denotes the anti-commutator (not to be confused with the braces denoting sets).

\begin{proof}
Throughout the proof, we choose $b \in \sigma(B)$ such that $\|\Pi_{b} \phi \|^{2} \neq 0$. Then, it is fairly straightforward to see that the map
\begin{equation}\label{eq:cond_exp_val}
\mathbb{E}\left[X | B=b ; \Psi^{g} \right] = \frac{\mathbb{E}\left[ \Pi_{b} \otimes X; \Psi^{g} \right]}{\mathbb{E}\left[ \Pi_{b} \otimes I; \Psi^{g} \right]}, \quad g \in U_{0},
\end{equation}
is well-defined on some neighbourhood $U_{0}$ around the origin $g=0$. It then follows directly from the expression \eqref{eq:cond_exp_val} that the differentiability of both the numerator and the denominator of the r.~h.~s. gives a sufficient condition for the conditional expectation $\mathbb{E}[X|B = b; \Psi^{g}]$ to be differentiable. In order to simplify our notations, we assume in the following that all the vectors $|\phi\rangle$ and $|\psi\rangle$, respectively representing the initial quantum states of the target and the meter system, are normalised. Since the proof is rather lengthy, we divide it into several parts. 
\paragraph{Leibniz Rule}

To prepare for our arguments, we first recall some basic facts. Let $F, G: U \to \mathcal{H}$ be a map from an open subset $U \subset \mathbb{R}$ of the real line to a Hilbert space $\mathcal{H}$. If both maps $F$ and $G$ are strongly differentiable at $t_{0} \in U$, the inner product $t \mapsto \langle F(t), G(t) \rangle$ is differentiable at $t_{0} \in U$, and the derivative satisfies the Leibniz rule,
\begin{align}\label{thm:leibniz_rule}
\left. \frac{d}{dt} \langle F(t), G(t) \rangle \right|_{t=t_{0}}
    &:= \lim_{u \to 0} \frac{\langle F(u + t_{0}), G(u + t_{0}) \rangle - \langle F(t_{0}), G(t_{0}) \rangle}{u} \nonumber \\
    &= \lim_{u \to 0} \frac{\langle F(u + t_{0}) - F(t_{0}), G(u + t_{0}) \rangle + \langle F(t_{0}), G(u + t_{0}) - G(t_{0}) \rangle}{u} \nonumber \\
    &= \left\langle \frac{dF(t_{0})}{dt} , G(t_{0}) \right\rangle + \left\langle F(t_{0}), \frac{dG(t_{0})}{dt} \right\rangle.
\end{align}

\paragraph{Differentiability of the Numerator}
To prove the differentiability of the numerator of \eqref{eq:cond_exp_val} and obtain its derivative,
we first introduce two auxiliary maps $F(g) := |\Psi^{g}\rangle$ and $G_{X}(g) := (\Pi_{b} \otimes X) F(g)$, by which we rewrite the numerator
\begin{equation}\label{def:psm_numerator}
g \mapsto \mathbb{E}\left[ \Pi_{b} \otimes X; \Psi^{g} \right] = \langle F(g), G_{X}(g) \rangle
\end{equation}
in terms of their inner products. From the Leibniz rule, one sees that the desired result can be immediately obtained once the differentiability of both the maps $F(g)$ and $G_{X}(g)$ are proven and their derivatives are given.

As for the strong differentiability of the map $g \mapsto F(g)$, one readily finds by Stone's theorem on one-parameter unitary groups that the condition
\begin{equation}\label{eq:strong_diff_F}
|\phi\rangle \in \mathrm{dom}(A), \quad |\psi\rangle \in \mathcal{D} \subset \mathrm{dom}(Y)
\end{equation}
would suffice, in which case the derivative is given by
\begin{equation}\label{eq:diff_num_1}
\frac{dF(0)}{dg} = -i(A\otimes Y) |\phi\otimes\psi\rangle.
\end{equation}
As for the map $g \mapsto G_{X}(g)$, we first observe that it is written as
\begin{equation}
G_{X}(g) = (\Pi_{b} \otimes I)(I \otimes X) F(g).
\end{equation}
Due to the boundedness (continuity) of the operator $(\Pi_{b} \otimes I)$, strong differentiability of the vector-valued map
$
g \mapsto (I \otimes X) F(g)
$
would give a sufficient condition for $G_{X}(g)$ to be strongly differentiable, which one readily proves under the condition
\begin{equation}
|\phi\rangle \in \mathrm{dom}(A), \quad |\psi\rangle \in \mathcal{D} \subset  \mathrm{dom}(XY) \cap \mathrm{dom}(Y)
\end{equation}
by imitating the arguments we have made starting from \eqref{eq:diff_2_start} with the help of the relation \eqref{eq:weak_weyl_analogue01}. Now that the strong differentiability of both the maps $g \mapsto F(g)$, $G_{X}(g)$ are proven, one finds from the closedness of the self-adjoint operator $(\Pi_{b} \otimes X)$ that
\begin{align}\label{eq:diff_num_2}
\frac{dG_{X}(0)}{dg}
    &= (\Pi_{b} \otimes X) \frac{dF(0)}{dg} \nonumber \\
    &= -i(\Pi_{b} \otimes X)(A\otimes Y)|\phi\otimes\psi\rangle.
\end{align}

Given the results \eqref{eq:diff_num_1} and \eqref{eq:diff_num_2}, the Leibniz rule leads to the desired differentiability of the numerator \eqref{def:psm_numerator}, in which one computes its derivative as
\begin{align}\label{eq:diff_numerator}
 \left. \frac{d}{dg} \mathbb{E}\left[ \Pi_{b} \otimes X; \Psi^{g} \right] \right|_{g=0} 
    &= \left\langle \frac{dF(0)}{dg} , (\Pi_{b} \otimes X) F(0) \right\rangle + \left\langle F(0), (\Pi_{b} \otimes X) \frac{dF(0)}{dg} \right\rangle \nonumber \\
    &= 2\,\mathrm{Re}\left[ \left\langle F(0), (\Pi_{b} \otimes X) \frac{dF(0)}{dg} \right\rangle \right] \nonumber \\
    &= 2\,\mathrm{Re}\left[ -i\left\langle F(0) , (\Pi_{b} \otimes X)(A \otimes Y) F(0) \right\rangle \right] \nonumber \\
    &= 2\,\mathrm{Im}\left[ \langle \phi , \Pi_{b}A \phi \rangle \langle \psi, XY \psi \rangle \right] \nonumber \\
    &= 2\,\mathrm{Re}\left[ \langle \phi , \Pi_{b}A \phi \rangle \right] \cdot \mathbb{E}[[X,Y]/(2i);\psi] \nonumber \\
    &\qquad + 2\,\mathrm{Im}\left[ \langle \phi , \Pi_{b}A \phi \rangle \right] \cdot \mathbb{E}[\{X,Y\}/2; \psi], 
\end{align}
where we have used the operator equality
\begin{equation}
XY = \frac{\{X,Y\}}{2} + i\frac{[X,Y]}{2i}
\end{equation}
valid on the subspace $\mathcal{D}$.

\paragraph{Differentiability of the Denominator}
The proof for the differentiability of the denominator $\mathbb{E}\left[ \Pi_{b} \otimes I; \Psi^{g} \right]$ goes essentially the same as that for the numerator, where one readily proves its differentiability at $g=0$ under the condition $|\phi\rangle \in \mathrm{dom}(A)$, $|\psi\rangle \in \mathrm{dom}(Y)$, in which case the derivative reads
\begin{equation}\label{eq:diff_denominator}
\left. \frac{d}{dg} \mathbb{E}[ \Pi_{b} \otimes I; \Psi^{g} ] \right|_{g=0} =  2\,\mathrm{Im}\left[ \langle \phi , \Pi_{b}A \phi \rangle \right] \cdot \mathbb{E}[Y; \psi],
\end{equation}
by formally replacing $X$ with $I$ in \eqref{eq:diff_numerator}.

\paragraph{Final Result}
Combining the above two results \eqref{eq:diff_numerator} and \eqref{eq:diff_denominator}, one concludes that, given the choice $|\phi\rangle \in \mathrm{dom}(A)$ and $b \in \sigma(B)$ with $\|\Pi_{b}\phi\|^{2} \neq 0$ of the target configuration, and $|\psi\rangle \in \mathcal{D}$ for the meter system, the conditional expectation $\mathbb{E}[X|B = b; \Psi^{g}]$ is indeed differentiable at $g=0$.  Its derivative can then be evaluated based on the classical result of calculus (the quotient rule for derivative) as
\begin{align}
&\left. \frac{d}{dg} \mathbb{E}[X|B = b; \Psi^{g}] \right|_{g=0} \nonumber \\
    &\quad = \frac{\left. \frac{d}{dg} \mathbb{E}\left[ \Pi_{b} \otimes X; \Psi^{g} \right]\right|_{g=0} \cdot \mathbb{E}\left[ \Pi_{b} \otimes I; \Psi^{0} \right] - \mathbb{E}\left[ \Pi_{b} \otimes X; \Psi^{0} \right] \cdot \left. \frac{d}{dg} \mathbb{E}\left[\Pi_{b} \otimes I; \Psi^{g} \right]\right|_{g=0} }{\mathbb{E}\left[\Pi_{b} \otimes I; \Psi^{0} \right]^{2}} \nonumber \\
    &\quad = 2\,\mathrm{Re} \left[\frac{\langle \phi, \Pi_{b}A\phi \rangle}{\|\Pi_{b}\phi\|^{2}}\right] \cdot \mathbb{E}[[X,Y]/(2i); \psi] \nonumber \\
    & \qquad + 2\,\mathrm{Im}\left[\frac{\langle \phi, \Pi_{b}A\phi \rangle}{\|\Pi_{b}\phi\|^{2}}\right] \cdot \left( \mathbb{E}[\{X,Y\}/2; \psi] - \mathbb{E}[X; \psi]\mathbb{E}[Y; \psi] \right) \nonumber \\
    &\quad = 2\,\mathrm{Re} \left[\frac{\langle \phi, \Pi_{b}A\phi \rangle}{\|\Pi_{b}\phi\|^{2}}\right] \cdot \mathbb{CV}_{\mathrm{A}}[X,Y; \psi] + 2\,\mathrm{Im}\left[\frac{\langle \phi, \Pi_{b}A\phi \rangle}{\|\Pi_{b}\phi\|^{2}}\right] \cdot \mathbb{CV}_{\mathrm{S}}[X,Y; \psi].
\end{align}
We have thus verified our desired statement \eqref{eq:wpsm_diff}.
\end{proof}

\subsubsection{Conditional Quasi-expectations of Quantum Observables}\label{sec:psI_cond_quasi_exp}

Now that we have computed the derivative of the map \eqref{def:wpsm_special} for the special case, we are now interested in the case in which the conditioning observable $B$ is general, and wish to specify the limit of the formal expression
\begin{equation}
\lim_{g \to 0} \frac{\mathbb{E}[X|B; \Psi^{g}] - \mathbb{E}[X|B; \Psi^{0}]}{g}
\end{equation}
and the topology in which the convergence is meant. From the result of Proposition~\ref{prop:diff_cond_exp}, one might naturally conjecture that the limit is given by
\begin{equation}
2\,\mathrm{Re} f \cdot \mathbb{CV}_{\mathrm{A}}[X,Y; \psi] + 2\,\mathrm{Im}f \cdot \mathbb{CV}_{\mathrm{S}}[X,Y; \psi]
\end{equation}
with a `function' $f$ defined formally as
\begin{equation}\label{def:cond_quasi_exp_prelim}
f(b) :=
    \frac{\langle \phi, \Pi_{b}A\phi \rangle}{\|\Pi_{b}\phi\|^{2}}.
\end{equation}
In order to make this observation a precise mathematical statement, we first introduce a convenient concept.

\paragraph{Conditional Quasi-expectations}
Observing that in the case where $A$ and $B$ are simultaneously measurable, the function \eqref{def:cond_quasi_exp_prelim} is nothing but the conditional expectation of $A$ given $B$. In general, however, the target observable and the conditioning observable $B$ need not be simultaneously observable. We thus wish to define a quantum analogue of conditional expectations of an observable $A$ given another observable $B$, well-defined even for the pair that are not necessarily simultaneously measurable. To this end, we first fix a non-zero vector $|\phi\rangle \in \mathrm{dom}(A)$ and consider a complex measure
\begin{equation}\label{def:nu_temp}
\nu(\Delta) := \langle \phi, E_{B}(\Delta)A\phi \rangle/\|\phi\|^{2},
\quad \Delta \in \mathfrak{B},
\end{equation}
where $E_{B}$ is the unique spectral measure accompanying $B$.
Now, a direct application of the Cauchy-Schwarz inequality leads to
\begin{equation}
| \langle \phi, E_{B}(\Delta)A\phi \rangle | \leq \| E_{B}(\Delta)\phi\| \cdot \|A\phi\|,
\end{equation}
by which one finds the absolute continuity $\nu \ll \mu_{B}^{\phi}$, where $\mu_{B}^{\phi}(\Delta) := \| E_{B}(\Delta) \phi\|^{2}/\|\phi\|^{2}$ as usual. This allows us to define the Radon-Nikod{\'y}m derivative
\begin{equation}\label{def:cond_quas-exp}
\mathbb{E}[A|B;\phi] := d\nu/d\mu_{B}^{\phi}.
\end{equation}
By definition, it is the unique $\mu_{B}^{\phi}$-integrable (equivalence class of) function(s) that satisfies
\begin{equation}
\langle \phi, E_{B}(\Delta)A\phi \rangle / \|\phi\|^{2} = \int_{\Delta} \mathbb{E}[A|B = b;\phi]\ d\mu_{B}^{\phi}(b), \quad \Delta \in \mathfrak{B},
\end{equation}
and as such,
\begin{equation}
\mathbb{E}[A;\phi] = \int_{\mathbb{R}} \mathbb{E}[A|B = b;\phi]\ d\mu_{B}^{\phi}(b)
\end{equation}
holds in particular. Incidentally, when the state $|\phi_{a}\rangle \in \mathrm{dom}(A)$ happens to be an eigenvector of $A$ with the eigenvalue $a$, the map
\begin{equation}
\mathbb{E}[A|B;\phi_{a}] = a
\end{equation}
becomes a constant function independent of the choice of the conditioning observable $B$. The map $\mathbb{E}[A|B ;\phi]$ thus shares properties similar to the conditional expectations, and in the special case in which $A$ and $B$ happens to be simultaneously measurable, it actually reduces to the standard conditional expectation.  However, as one finds shortly below, it can be shown by reductio ad absurdum that the map $\mathbb{E}[A|B ;\phi]$ may not admit itself to be understood as a standard conditional expectation in the case where the pair of observables concerned does not admit coexistence.  These preliminary observations may tempt one to call the map \eqref{def:cond_quas-exp} a \emph{conditional quasi-expectation} of $A$ given $B$.

\paragraph{Arbitrariness to Conditional Quasi-expectations}

As one may immediately notice, there exists an arbitrariness to the way one may define conditional quasi-expectations. For example, one may just define the complex conjugate of the complex measure \eqref{def:nu_temp} as
\begin{equation}
\nu^{*}(\Delta) = \langle \phi, AE_{B}(\Delta)\phi \rangle/\|\phi\|^{2}
\end{equation}
and introduce the Radon-Nikod{\'y}m derivative as
\begin{equation}
\mathbb{E}^{*}[A|B;\phi] := d\nu^{*}/d\mu_{B}^{\phi} = \mathbb{E}[A|B;\phi]^{*}.
\end{equation}
One may conduct analogous reasoning to verify that the function $\mathbb{E}^{*}[A|B;\phi]$ also satisfies properties similar to the usual conditional expectations, and that both definitions coincide when the pair of $A$ and $B$ happens to be simultaneously measurable. One may even consider a complex linear combination of $\mathbb{E}[A|B;\phi]$ and its complex conjugate to define
\begin{align}\label{def:cond_quasi-exp_alpha}
\mathbb{E}^{\alpha}[A|B;\phi]
    &:= \frac{1+\alpha}{2} \cdot \mathbb{E}[A|B;\phi] + \frac{1-\alpha}{2} \cdot \mathbb{E}^{*}[A|B;\phi] \nonumber \\
    &= \mathrm{Re}\left[ \mathbb{E}[A|B;\phi] \right] + \alpha i\, \mathrm{Im}\left[ \mathbb{E}[A|B;\phi] \right] , \quad \alpha \in \mathbb{C},
\end{align}
for example, so that $\mathbb{E}^{1}[A|B;\phi] = \mathbb{E}[A|B;\phi]$ and $\mathbb{E}^{-1}[A|B;\phi] = \mathbb{E}^{*}[A|B;\phi]$. In fact, it reveals that there exists a multitude of potential candidates for possible definitions of such `conditional quasi-expectations', all sharing desirable properties mentioned earlier. We shall be returning to this problem in a more general framework of quasi-joint-probabilities of quantum observables in Section~\ref{sec:qp_qo}, but for our purpose and the scope of this paper, it suffices to concentrate only on the family \eqref{def:cond_quasi-exp_alpha} for definiteness, and we thus introduce:
\begin{definition*}[Conditional Quasi-expectation of $A$ given $B$]
Let $A$ and $B$ be self-adjoint operators on a Hilbert space $\mathcal{H}$, and let $E_{B}$ be the spectral measure of $B$. For a given state $|\phi\rangle \in \mathrm{dom}(A)$, we call the family of complex linear combinations of the Radon-Nikod{\'y}m derivatives \eqref{def:cond_quasi-exp_alpha}
the complex-parametrised family of conditional quasi-expectations of $A$ given $B$.  They are, by definition, a (family of) complex function(s) defined on the spectrum $\sigma(B)$.
\end{definition*}
\noindent
Note, by definition, that each member $\mathbb{E}^{\alpha}[A|B;\phi]$, $\alpha \in \mathbb{C}$, of the family of conditional quasi-expectations is integrable with respect to the probability measure $\mu_{B}^{\phi}$, and its total integration coincides with the expectation value $\mathbb{E}[A;\phi]$ of $A$.
If the conditioning observable $B$ happens to possess spectrum with finite cardinality, so that its spectral decomposition reads \eqref{eq:spectr_decomp_B_fin}, the conditional quasi-expectation admits an expression by operators and vectors as
\begin{equation}\label{eq:quasi-cond_exp_fin}
\mathbb{E}[A|B = b;\phi] =
    \begin{cases}
    \langle \phi, \Pi_{b}A\phi \rangle / \|\Pi_{b}\phi\|^{2}, \quad &(b \in \sigma(B),\ \|\Pi_{b}\phi\|^{2} \neq 0), \\
    \text{indefinite}, \quad &(\text{else})
    \end{cases}
\end{equation}
and
\begin{equation}
\mathbb{E}^{\alpha}[A|B;\phi] = \frac{1+\alpha}{2} \cdot \mathbb{E}[A|B;\phi] + \frac{1-\alpha}{2} \cdot \mathbb{E}[A|B;\phi]^{*}
\end{equation}
if explicitly written out.

\paragraph{Conditional Quasi-expectations, Two-state Values and the Weak Value}
Incidentally, if the conditioning observable happens to be a projection $B=|\phi^{\prime}\rangle\langle\phi^{\prime}|$ on a one-dimensional subspace of $\mathcal{H}$ spanned by a unit vector $|\phi^{\prime}\rangle$ ({\it i.e.}, a post-selection), the conditional quasi-expectation of $A$ given the outcome $B=1$ reads
\begin{align}\label{eq:qce_and_tsv}
\mathbb{E}^{\alpha}[A|B = 1;\phi]
    &= \frac{1+\alpha}{2} \cdot \frac{\langle \phi^{\prime}, A\phi \rangle}{\langle \phi^{\prime}, \phi \rangle} + \frac{1-\alpha}{2}  \cdot \frac{\langle \phi, A\phi^{\prime} \rangle}{\langle \phi, \phi^{\prime} \rangle},
\end{align}
given that the probability of finding the outcome $1$ of $B$ is non-vanishing $\mu_{B}^{\phi}(\{1\}) = |\langle \phi^{\prime}, \phi \rangle|^{2} \neq 0$.  Specifically for the choice $\alpha = 1$, this reduces to
\begin{equation}\label{def:weak_value}
\mathbb{E}[A|B = 1;\phi] = \frac{\langle \phi^{\prime}, A\phi \rangle}{\langle \phi^{\prime}, \phi \rangle} =: A_{w},
\end{equation}
The value $A_{w}$ is widely referred to as Aharonov's \emph{weak value} \cite{Aharonov_1964,Aharonov_1988} of $A$ for the pair of the pre-selected state $|\phi \rangle \in \mathrm{dom}(A)$ and the post-selected state $|\phi^{\prime} \rangle \in \mathcal{H}$. Historically, the weak value is said to have been originally introduced as a hypothetical value of an observable $A$ assigned to a quantum \emph{process} from the pre-selected to the post-selected state, generalising the common practice of solely assigning values to a single static \emph{state} in the standard framework of quantum mechanics. Following this philosophy, the value \eqref{eq:qce_and_tsv} termed the \emph{two-state value} \cite{Morita_2013} of $A$ under the respective selections of states was recently introduced in an attempt to generalise the idea of the weak value and to find out the possible form of a quantity of an observable specified by two quantum states. An application of the generalised Gleason's theorem revealed that, under certain desirable conditions, the most general form of the values of an observable $A$ that can be assigned to the two specification of the quantum states $|\phi\rangle \in \mathrm{dom}(A)$, $|\phi^{\prime}\rangle \in \mathcal{H}$ satisfying $\langle \phi^{\prime}, \phi\rangle \neq 0$ is given by \eqref{eq:qce_and_tsv} with a parameter $\alpha \in \mathbb{C}$ representing the ambiguity inherent to it.

\paragraph{Essential Supremum of Conditional Quasi-expectations}

While conditional quasi-expectations and the standard conditional expectations share various properties in common, the non-commutative nature of quantum observables results in some interesting distinctions between the two concepts.  In this paper, as an example, we shall focus on the remarkable difference in the behaviour of their essential suprema. Now, as one recalls from Corollary~\ref{cor:cond_exp_obs_ess_sup}, for a pair of \emph{simultaneously measurable} observables $A$ and $B$ and a fixed state $|\phi\rangle \in \mathrm{dom}(A)$, the essential supremum of the conditional expectation $\left\|\, \mathbb{E}[A|B;\phi] \,\right\|_{\infty}$ is never greater than the essential supremum $\|A\|_{\infty}^{\phi}$ of the measurable function $a \mapsto a$ under the probability measure $\mu_{A}^{\phi}$. If $A$ happens to be bounded, the operator norm $\|A\|$ gives the state independent universal upper bound to $\|A\|_{\infty}^{\phi}$, which in turn also naturally becomes an upper bound to the conditional expectation $\mathbb{E}[A|B;\phi]$.
However, in general, this property is no longer preserved when $A$ and $B$ fail to be simultaneously measurable. There are several possible ways to express this discrepancy, but for brevity, we formulate it in the following manner.

To this end, we first prepare a terminology. In this paper, we say that an observable $A$ on $\mathcal{H}$ is \emph{non-trivial} if $A$ is not a scalar multiple of the identity operator $tI$, $t \in \mathbb{R}$, or equivalently, if $A$ has a spectrum $\sigma(A)$ of cardinality not less than $2$. Note that the non-triviality of $A$ automatically implies $\mathrm{dim}(\mathcal{H}) \geq 2$, where $\mathrm{dim}(\mathcal{H})$ denotes the dimension of the Hilbert space $\mathcal{H}$. Since trivial operators strongly commute with any other self-adjoint operators, the function $\mathbb{E}^{\alpha}[A|B;\phi]$ always become an authentic conditional expectation, revealing itself to be a constant function always taking its unique eigenvalue $\mathbb{E}^{\alpha}[A|B;\phi] = t$, whose case is not interesting for our purpose. Hence, we shall from now on confine ourselves to the case where $A$ is non-trivial.

\begin{proposition}[Essential Supremum of Conditional Quasi-expectations]\label{prop:lim_amp_quasi_cond}
Let $A$ be a non-trivial observable, $|\phi\rangle \in \mathrm{dom}(A)$ a vector that is not an eigenvector of $A$, and let $\alpha \in \mathbb{C}$ be any choice of the ambiguity parameter of the conditional quasi-expectation.
Then, for any non-negative number $0 \leq M < \infty$, there exists a self-adjoint operator $B$ (not-necessarily simultaneously measurable with $A$) such that the essential supremum of the conditional quasi-expectation of $A$ given $B$ is not less than
\begin{equation}
M \leq \left\|\, \mathbb{E}^{\alpha}[A|B;\phi] \,\right\|_{\infty}.
\end{equation}
Specifically, one may always choose such conditioning observable $B = |\phi^{\prime}\rangle\langle\phi^{\prime}|$ to be a projection onto a one-dimensional subspace of $\mathcal{H}$ spanned by some unit vector $|\phi^{\prime}\rangle \in \mathcal{H}$.
\end{proposition}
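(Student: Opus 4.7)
The plan is to exhibit a one-parameter family of rank-one projections $B_\epsilon = |\phi'_\epsilon\rangle\langle\phi'_\epsilon|$, parametrised by $\epsilon \in (0,1)$, for which the underlying weak value $\langle \phi'_\epsilon, A\phi\rangle / \langle \phi'_\epsilon, \phi\rangle$ is both \emph{real} and \emph{divergent} as $\epsilon \to 0^+$. The essential supremum of the conditional quasi-expectation will then be seen to diverge independently of the parameter $\alpha$. Recall from \eqref{eq:qce_and_tsv} that, for a unit vector $|\phi'\rangle$ with $\langle \phi', \phi\rangle \neq 0$ and $B = |\phi'\rangle\langle\phi'|$,
\[
\mathbb{E}^\alpha[A|B = 1;\phi]
= \tfrac{1+\alpha}{2}\cdot\tfrac{\langle \phi', A\phi \rangle}{\langle \phi', \phi \rangle}
+ \tfrac{1-\alpha}{2}\cdot\tfrac{\langle \phi, A\phi' \rangle}{\langle \phi, \phi' \rangle}.
\]
Since the singleton $\{1\} \subset \sigma(B)$ carries positive probability $\mu_B^\phi(\{1\}) = |\langle \phi', \phi\rangle|^2 > 0$, its value contributes to the essential supremum: $\|\mathbb{E}^\alpha[A|B;\phi]\|_\infty \geq |\mathbb{E}^\alpha[A|B = 1;\phi]|$.

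To carry this out, first normalise $\|\phi\| = 1$, and use the hypothesis that $|\phi\rangle$ is not an eigenvector of $A$ to perform the orthogonal decomposition
\[
A|\phi\rangle = c|\phi\rangle + \|\eta\|\,|\hat{\eta}\rangle,
\]
where $c := \langle \phi, A\phi\rangle \in \mathbb{R}$ (by self-adjointness), the unit vector $|\hat{\eta}\rangle$ is orthogonal to $|\phi\rangle$, and $\|\eta\| > 0$ (since $A|\phi\rangle \notin \mathbb{C}|\phi\rangle$). In particular $\dim\mathcal{H} \geq 2$, so the construction lives inside $\mathcal{H}$. For each $\epsilon \in (0,1)$, define the unit vector
\[
|\phi'_\epsilon\rangle := \epsilon|\phi\rangle + \sqrt{1-\epsilon^{2}}\,|\hat{\eta}\rangle
\]
with real, positive coefficients, and let $B_\epsilon := |\phi'_\epsilon\rangle\langle\phi'_\epsilon|$. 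A direct inner-product computation then yields the weak value
\[
\frac{\langle \phi'_\epsilon, A\phi\rangle}{\langle \phi'_\epsilon, \phi\rangle}
= \frac{\epsilon\,c + \sqrt{1-\epsilon^{2}}\,\|\eta\|}{\epsilon}
= c + \frac{\sqrt{1-\epsilon^{2}}}{\epsilon}\|\eta\| \;\in\; \mathbb{R},
\]
which is manifestly real.

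Since the weak value is real, its complex conjugate (which is the second summand in \eqref{eq:qce_and_tsv}) coincides with it, and the $\alpha$-dependent convex-type combination collapses to the weak value itself:
\[
\mathbb{E}^\alpha[A|B_\epsilon = 1;\phi] = c + \frac{\sqrt{1-\epsilon^{2}}}{\epsilon}\|\eta\|,
\qquad \text{independently of } \alpha \in \mathbb{C}.
\]
Combining with the essential-supremum lower bound above,
\[
\|\mathbb{E}^\alpha[A|B_\epsilon;\phi]\|_{\infty} \geq \left| c + \frac{\sqrt{1-\epsilon^{2}}}{\epsilon}\|\eta\| \right|,
\]
which diverges to $+\infty$ as $\epsilon \to 0^+$ because $\|\eta\| > 0$. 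Given any prescribed $M \geq 0$, we may therefore select $\epsilon > 0$ small enough to ensure $M \leq \|\mathbb{E}^\alpha[A|B_\epsilon;\phi]\|_\infty$, yielding the desired rank-one projection $B := B_\epsilon$ and establishing the claim.

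The only delicate point is accommodating the arbitrariness of the complex parameter $\alpha$: for a generic real-affine combination of a complex quantity and its conjugate, one might worry about destructive cancellation as the quantity grows. This worry is bypassed entirely by arranging the weak value to be real at the outset via the choice of real, positive expansion coefficients of $|\phi'_\epsilon\rangle$ in $\{|\phi\rangle, |\hat\eta\rangle\}$; this single structural observation reduces the proof to an elementary calculation and sidesteps any case analysis on $\alpha$.
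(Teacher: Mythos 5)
Your proof is correct and follows essentially the same route as the paper: the paper reduces the claim to Proposition~\ref{prop:tsv}, whose proof uses exactly your orthogonal decomposition $A|\phi\rangle = \langle\phi,A\phi\rangle\,|\phi\rangle + \left\|(A-\langle\phi,A\phi\rangle)\phi\right\|\,|\chi\rangle$ together with the post-selected state $|\phi^{\prime}\rangle = \frac{1}{c^{*}}|\phi\rangle + |\chi\rangle$, your $|\phi^{\prime}_{\epsilon}\rangle$ being its normalisation for the real choice $c = \sqrt{1-\epsilon^{2}}/\epsilon$. Your one refinement --- taking the coefficient real and positive so that the weak value is real and the $\alpha$-dependence of $\mathbb{E}^{\alpha}[A|B=1;\phi] = \mathrm{Re}\,A_{w} + \alpha i\,\mathrm{Im}\,A_{w}$ collapses --- tidily avoids the paper's case split between $\alpha \neq 0$ and $\alpha = 0$, but the underlying construction is identical.
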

\begin{proof}
It suffices to prove that, one may always adjust the choice of the conditioning observable $B = |\phi^{\prime}\rangle\langle\phi^{\prime}|$ so that the conditional quasi-expectation
\begin{equation}
\mathbb{E}^{\alpha}[A|B = 1;\phi] =
    \begin{cases}
         c, \quad (c \in \mathbb{C}), &(\alpha \neq 0) \\
         r, \quad (r \in \mathbb{R}), &(\alpha = 0)
    \end{cases}
\end{equation}
may take any complex number for the choice $\alpha \neq 0$, and any real number for the choice $\alpha = 0$, while maintaining the probability of observing it to be non-vanishing $\mu_{B}^{\phi}(\{1\}) > 0$.
The proof is a direct corollary of Proposition~\ref{prop:tsv} that follows immediately.
\end{proof}
\noindent
In particular, this result is to say that one may always choose a conditioning observable $B$ such that the essential supremum $\left\|\, \mathbb{E}[A|B;\phi] \,\right\|_{\infty}$ of the conditional quasi-expectation exceeds $\|A\|_{\infty}^{\phi}$, which is never possible for standard conditional expectations defined for a pair of simultaneously measurable observables. 
This `amplification of conditional quasi-expectations' is a noteworthy property of quantum mechanics, and the oft-discussed `amplification of weak values' could be understood as its special case.

\begin{proposition}[Range of the Two-state Value]\label{prop:tsv}
Let $A$ be a non-trivial observable on $\mathcal{H}$, and let $|\phi\rangle \in \mathrm{dom}(A)$ be a pre-selected state that is not an eigenvector of $A$. Then, the two-state value of $A$ under the pre-selected state $|\phi\rangle$ may take any complex number in the case $\alpha \neq 0$, and in turn any real number in the case $\alpha = 0$, given an appropriate choice of the post-selected state $|\phi^{\prime}\rangle \in \mathcal{H}$.
\end{proposition}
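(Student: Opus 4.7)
The plan is to recognise that the two-state value admits the closed form
\begin{equation*}
\mathbb{E}^{\alpha}[A|B=1;\phi] \;=\; \mathrm{Re}[w] + i\alpha\, \mathrm{Im}[w],
\end{equation*}
where $w := \langle \phi^{\prime}, A\phi\rangle/\langle \phi^{\prime}, \phi\rangle$ is the weak value for the pre-/post-selected states $|\phi\rangle,\,|\phi^{\prime}\rangle$ (this is just \eqref{eq:qce_and_tsv}). Consequently the entire statement reduces to a single surjectivity claim: as $|\phi^{\prime}\rangle$ varies over unit vectors with $\langle \phi^{\prime},\phi\rangle\neq 0$, the weak value $w$ attains every complex number. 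From this, the case $\alpha = 0$ immediately yields the full real axis $\mathrm{Re}[w]$, while for $\alpha \neq 0$ (understood within the author's ambient range $\alpha \in [-1,1]\setminus\{0\}$) the $\mathbb{R}$-linear map $w \mapsto \mathrm{Re}[w] + i\alpha\,\mathrm{Im}[w]$ is a bijection of $\mathbb{C}$ onto itself, so the range is all of $\mathbb{C}$.

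For the surjectivity of $w$, I would restrict $|\phi^{\prime}\rangle$ to the two-dimensional subspace $V := \mathrm{span}\{|\phi\rangle,|A\phi\rangle\}$, which is genuinely two-dimensional since the hypothesis that $|\phi\rangle$ is not an eigenvector of $A$ forces linear independence of $|\phi\rangle$ and $|A\phi\rangle$. Writing $|\phi^{\prime}\rangle = a|\phi\rangle + b|A\phi\rangle$ with $(a,b)\in\mathbb{C}^{2}\setminus\{0\}$, and using the conjugate-linearity of the inner product in its first slot together with the reality of $\kappa := \langle \phi, A\phi\rangle$ (a consequence of the self-adjointness of $A$), a direct computation gives
\begin{equation*}
w \;=\; \frac{a^{*}\kappa + b^{*}\nu}{a^{*}\mu + b^{*}\kappa}, \qquad \mu := \|\phi\|^{2},\quad \nu := \|A\phi\|^{2}.
\end{equation*}
Substituting $z := b^{*}/a^{*}$ (with $z=\infty$ corresponding to $a=0$) exhibits this as the M{\"o}bius transformation $z\mapsto (\nu z + \kappa)/(\kappa z + \mu)$, whose associated matrix has determinant $\mu\nu - \kappa^{2}$.

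The crucial non-degeneracy comes from the strict Cauchy--Schwarz inequality: equality $|\langle\phi, A\phi\rangle|^{2} = \|\phi\|^{2}\|A\phi\|^{2}$ would force $|A\phi\rangle$ to be a scalar multiple of $|\phi\rangle$, which is precisely ruled out by the non-eigenvector hypothesis. Hence $\mu\nu - \kappa^{2}>0$, and the M{\"o}bius transformation is a bijection of the Riemann sphere $\hat{\mathbb{C}}$ onto itself. Every $w\in\mathbb{C}$ is therefore realised by some $z\in\hat{\mathbb{C}}$, and hence by some $|\phi^{\prime}\rangle \in V\setminus\{0\}$ (normalised afterwards if desired). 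The admissibility of the post-selection, namely $\langle\phi^{\prime},\phi\rangle\neq 0$ so that the outcome $B=1$ has positive probability $|\langle\phi^{\prime},\phi\rangle|^{2}>0$, is then automatic, because the denominator $a^{*}\mu + b^{*}\kappa$ of the above expression is (up to conjugation) exactly $\langle\phi^{\prime},\phi\rangle$ and is non-zero whenever the M{\"o}bius map assumes a finite value.

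The only potential obstacle is a small amount of bookkeeping around the boundary point $z=\infty$ (which parametrises $|\phi^{\prime}\rangle \propto |A\phi\rangle$) and the special case $\kappa = 0$, in which the M{\"o}bius map collapses to the linear isomorphism $w = (\nu/\mu)z$; both are handled in one line and leave the surjectivity conclusion intact. No genuinely difficult step arises, and the overall structure of the argument--reducing to the weak value, parametrising within the two-dimensional subspace $V$, and invoking strict Cauchy--Schwarz to certify the M{\"o}bius non-degeneracy--is the content of the proof.
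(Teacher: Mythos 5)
Your proof is correct, but it follows a genuinely different route from the paper's. The paper works inside the same two-dimensional subspace $\mathrm{span}\{|\phi\rangle, |A\phi\rangle\}$ but chooses an orthonormal-style frame first: it writes $A|\phi\rangle = \mathbb{E}[A;\phi]\,|\phi\rangle + \left\| \left( A - \mathbb{E}[A;\phi] \right)\phi \right\| \, |\chi\rangle$ with $\langle\chi,\phi\rangle = 0$, then post-selects on the one-complex-parameter family $|\phi^{\prime}\rangle = c^{-*}|\phi\rangle + |\chi\rangle$, which makes the weak value \emph{affine} in $c$, namely $A_{w} = \mathbb{E}[A;\phi] + c\,\|(A-\mathbb{E}[A;\phi])\phi\|$; surjectivity onto $\mathbb{C}$ is then immediate because the non-eigenvector hypothesis makes the coefficient nonzero. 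You instead take the full projective parametrisation $|\phi^{\prime}\rangle = a|\phi\rangle + b|A\phi\rangle$, exhibit $w$ as a M{\"o}bius transformation of $z = b^{*}/a^{*}$, and certify its non-degeneracy via strict Cauchy--Schwarz, which is exactly equivalent to the paper's nonvanishing of $\|(A-\mathbb{E}[A;\phi])\phi\|$. The paper's linearisation is slightly slicker and avoids the $z=\infty$ and $\kappa=0$ bookkeeping; your version buys a complete description of the range as the image of an explicit M{\"o}bius bijection of $\hat{\mathbb{C}}$ and makes the admissibility condition $\langle\phi^{\prime},\phi\rangle \neq 0$ fall out automatically from finiteness of the M{\"o}bius value. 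One small remark: your reduction of the case $\alpha \neq 0$ to surjectivity of $w$ via the map $w \mapsto \mathrm{Re}[w] + i\alpha\,\mathrm{Im}[w]$ requires that map to be a real-linear bijection of $\mathbb{C}$, which holds for $\alpha \in [-1,1]\setminus\{0\}$ as you restrict to (and more generally whenever $\mathrm{Re}\,\alpha \neq 0$), but would fail for purely imaginary $\alpha$; the paper's own proof is no more careful on this point, as it only treats $\alpha = 1$ "without loss of generality," so your explicit hedge is if anything an improvement.
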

\begin{proof}
For simplicity, we only provide the proof of the statement for the specific choice $\alpha = 1$ of the ambiguity parameter without loss of generality.

Now, before we go into the main part of the proof, we first observe that, for a non-trivial self-adjoint operator $A$ and a normalised vector $|\phi\rangle \in \mathrm{dom}(A)$, there exists a normalised vector $|\chi\rangle \in \mathcal{H}$ orthogonal to $|\phi\rangle$ such that
\begin{equation}\label{eq:tsv_prop_lem}
A|\phi\rangle = \mathbb{E}[A;\phi] \cdot |\phi\rangle + \left\| \left( A - \mathbb{E}[A;\phi] \right) \phi \right\| \cdot |\chi\rangle
\end{equation}
holds%
\footnote{Note that in the case $|\phi\rangle \in \mathrm{dom}(A^{2})$, the equality \eqref{eq:tsv_prop_lem} is equivalent to
\begin{equation}
A|\phi\rangle = \mathbb{E}[A;\phi] \cdot |\phi\rangle + \sqrt{\mathbb{V}[A;\phi]} \cdot |\chi\rangle,
\end{equation}
since one has $\left\| \left( A - \mathbb{E}[A;\phi] \right) \phi \right\|^{2} = \langle \phi, \left( A - \mathbb{E}[A;\phi] \right)^{2} \phi \rangle = \mathbb{V}[A;\phi]$ with the variance defined as in \eqref{def:variance}.}.
To see this, we first consider the case
\begin{equation}
A|\phi\rangle = \mathbb{E}[A;\phi] \cdot |\phi\rangle,
\end{equation}
that is, when $|\phi\rangle$ is an eigenvector of $A$. Then, by choosing any normalised state $|\chi\rangle$ satisfying $\langle \chi, \phi \rangle = 0$ (the existence of such  $|\chi\rangle$ is guaranteed by the fact $\mathrm{dim}(\mathcal{H}) \geq 2$), one finds that the above equality is fulfilled. Next, suppose that $A|\phi\rangle \neq \mathbb{E}[A;\phi] \cdot |\phi\rangle$. Then, by defining
\begin{equation}
|\chi\rangle := \frac{\left( A - \mathbb{E}[A;\phi] \right) |\phi\rangle}{\left\| \left( A - \mathbb{E}[A;\phi] \right) \phi \right\|},
\end{equation}
one indeed learns that $\|\chi\| = 1$ and $\langle \chi, \phi \rangle = 0$ as stated.

Armed with this fact and by fixing such $|\chi\rangle$, we choose the post-selected state as
\begin{equation}
|\phi^{\prime}\rangle = \frac{1}{c^{*}} |\phi\rangle + |\chi\rangle
\end{equation}
with a free parameter $c \in \mathbb{C}^{\times}$.
One then finds
\begin{align}
\mathbb{E}^{1}[A|B = 1;\phi]
    &= \frac{\langle \phi^{\prime}, A\phi \rangle}{\langle \phi^{\prime}, \phi \rangle} \nonumber \\
    &= \mathbb{E}[A;\phi] \cdot \frac{\langle\phi^{\prime}, \phi\rangle}{\langle\phi^{\prime}, \phi\rangle} + \left\| \left( A - \mathbb{E}[A;\phi] \right) \phi \right\| \cdot \frac{\langle\phi^{\prime}, \chi\rangle}{\langle\phi^{\prime}, \phi\rangle} \nonumber \\
    &= \mathbb{E}[A;\phi] + c \left\| \left( A - \mathbb{E}[A;\phi] \right) \phi \right\|.
\end{align}
This shows that, for the choice of an initial state $|\phi\rangle \in \mathrm{dom}(A)$ that is not an eigenvector of $A$ (which is always possible due to the non-triviality of $A$), the weak value (hence, also the two-state value) may indeed take any complex number by adjusting the free parameter $c$ appropriately.
\end{proof}
\noindent
The difference between (standard) conditional expectations and conditional quasi-expectations in the behaviour of their essential suprema makes it clear that, conditional quasi-expectations are not conditional expectations in the classical sense. This provides an indirect proof for the fact that, in general, \emph{the `joint behaviour' of the outcomes of the pair of (generally non-commuting) quantum observables $A$ and $B$ does not allow itself to be described by probability spaces}.  This would be accounted for in depth in Section~\ref{sec:ps_II} and \ref{sec:qp_qo} shortly.

\subsubsection{Weak Conditioned Measurement}

Armed with our newly introduced concept of conditional quasi-expectations \eqref{def:cond_quasi-exp_alpha} of a quantum observable given another (not necessarily simultaneously measurable) quantum observable, we shall summarise our findings regarding the first-order local behaviour of the conditional expectation at the origin. Combining Proposition~\ref{prop:diff_cond_exp} and \eqref{eq:quasi-cond_exp_fin}, one is naturally tempted to conjecture that:
\begin{proposition}[Weak Conditioned Measurement]
Let $A$ and $B$ be self-adjoint operators defined on the target system $\mathcal{H}$, and let the respective initial states $|\phi\rangle \in \mathrm{dom}(A)$, $|\psi\rangle \in \mathcal{D}$ be fixed. Then, the conditional expectation $\mathbb{E}[X|B;\Psi^{g}]$ is well-defined for all range of $g \in \mathbb{R}$, and the limit converges to
\begin{align}\label{eq:wpsm_diff_02}
\left. \frac{d}{dg} \mathbb{E}[X|B; \Psi^{g}] \right|_{g=0} 
    &:= \lim_{g \to 0} \frac{\mathbb{E}[X|B; \Psi^{g}] - \mathbb{E}[X|B; \Psi^{0}]}{g} \nonumber \\
    &= 2\,\mathrm{Re} \left[ \mathbb{E}[A|B; \phi] \right] \cdot \mathbb{CV}_{\mathrm{A}}[X,Y; \psi] \nonumber \\
    & \qquad + 2\,\mathrm{Im}\left[ \mathbb{E}[A|B; \phi] \right] \cdot \mathbb{CV}_{\mathrm{S}}[X,Y; \psi]
\end{align}
point-wise $\mu_{B}^{\phi}$-almost everywhere.
\end{proposition}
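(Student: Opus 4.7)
The plan is to lift the Leibniz-rule computation used in the proof of Proposition~\ref{prop:diff_cond_exp} from rank-one projections $\Pi_{b}$ to the full spectral resolution $E_{B}$, and then to transfer the resulting measure-level differentiability into pointwise $\mu_{B}^{\phi}$-a.e.\ differentiability of Radon-Nikod\'ym densities, using the calculus that already defines the conditional quasi-expectation.

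For each $g\in\mathbb{R}$, I would first introduce the two complex measures
\begin{equation*}
\nu_{X}^{g}(\Delta) := \langle\Psi^{g},(E_{B}(\Delta)\otimes X)\Psi^{g}\rangle/\|\Psi^{g}\|^{2}, \qquad \nu_{I}^{g}(\Delta) := \mu_{B}^{\phi^{g}}(\Delta),
\end{equation*}
on $(\mathbb{R},\mathfrak{B})$, so that by construction $\mathbb{E}[X|B;\Psi^{g}] = d\nu_{X}^{g}/d\nu_{I}^{g}$. At $g=0$ these factorise as $\nu_{X}^{0} = \mathbb{E}[X;\psi]\cdot\mu_{B}^{\phi}$ and $\nu_{I}^{0} = \mu_{B}^{\phi}$, consistently with $\mathbb{E}[X|B;\Psi^{0}] = \mathbb{E}[X;\psi]$. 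The proof of Proposition~\ref{prop:diff_cond_exp} nowhere used the rank-one character of $\Pi_{b}$, so replacing $\Pi_{b}$ by any bounded self-adjoint operator $C$ on $\mathcal{H}$ and repeating the Leibniz-rule argument verbatim would yield
\begin{equation*}
\left.\frac{d}{dg}\,\langle\Psi^{g},(C\otimes X)\Psi^{g}\rangle\right|_{g=0} = 2\,\mathrm{Im}\!\left[\langle\phi, CA\phi\rangle\,\langle\psi, XY\psi\rangle\right]
\end{equation*}
under the hypotheses $|\phi\rangle\in\mathrm{dom}(A)$, $|\psi\rangle\in\mathcal{D}$. Specialising to $C = E_{B}(\Delta)$ gives the set-wise derivatives $\dot\nu_{X}$ and $\dot\nu_{I}$ at $g=0$; the Cauchy-Schwarz bound $|\langle\phi, E_{B}(\Delta)A\phi\rangle|\leq\|E_{B}(\Delta)\phi\|\cdot\|A\phi\|$ shows that both are absolutely continuous with respect to $\mu_{B}^{\phi}$, and by \eqref{def:cond_quas-exp} their densities are $2\,\mathrm{Im}[\mathbb{E}[A|B;\phi](b)\cdot\langle\psi,XY\psi\rangle/\|\psi\|^{2}]$ and $2\,\mathrm{Im}[\mathbb{E}[A|B;\phi](b)]\cdot\mathbb{E}[Y;\psi]$, respectively.

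The final step is an application of the quotient rule. Writing $\rho_{X}^{g} := d\nu_{X}^{g}/d\mu_{B}^{\phi}$ and $\rho_{I}^{g} := d\nu_{I}^{g}/d\mu_{B}^{\phi}$ (the required absolute continuity $\nu_{I}^{g}\ll\mu_{B}^{\phi}$ for small $g$ following from the estimate $|\nu_{I}^{g}(\Delta)-\mu_{B}^{\phi}(\Delta)|\leq M|g|\cdot\mu_{B}^{\phi}(\Delta)^{1/2}$ obtainable from the extended Leibniz formula above), the ratio $\rho_{X}^{g}/\rho_{I}^{g}$ is a $\mu_{B}^{\phi}$-a.e.\ representative of $\mathbb{E}[X|B;\Psi^{g}]$. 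Since $\rho_{I}^{0}\equiv 1$ and $\rho_{X}^{0}\equiv\mathbb{E}[X;\psi]$, its pointwise derivative at $g=0$ equals $\dot\rho_{X} - \mathbb{E}[X;\psi]\,\dot\rho_{I}$; expanding $2\,\mathrm{Im}[z(u+iv)] = 2(u\,\mathrm{Im}\,z + v\,\mathrm{Re}\,z)$ with $u = \mathbb{E}[\{X,Y\}/2;\psi]$, $v = \mathbb{CV}_{\mathrm{A}}[X,Y;\psi]$, and absorbing the cancellation $u - \mathbb{E}[X;\psi]\mathbb{E}[Y;\psi] = \mathbb{CV}_{\mathrm{S}}[X,Y;\psi]$, the expression reassembles into precisely the right-hand side of \eqref{eq:wpsm_diff_02}.

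The main technical obstacle is the promotion of set-wise differentiability of $g\mapsto\nu_{X}^{g},\nu_{I}^{g}$ to pointwise $\mu_{B}^{\phi}$-a.e.\ differentiability of their densities, since in the finite-spectrum case of Proposition~\ref{prop:diff_cond_exp} both $\rho_{X}^{g}$ and $\rho_{I}^{g}$ were explicit finite sums of characteristic functions whereas here they are merely $L^{1}$-classes. I would handle this either by verifying uniform (in $g$) integrability of the difference quotients and invoking a Lebesgue-differentiation/dominated-convergence argument adapted to $\mu_{B}^{\phi}$, or, more robustly, by a martingale approximation: exhausting $\sigma(B)$ by a refining sequence of finite sub-$\sigma$-algebras $\mathfrak{A}_{n}$, applying Proposition~\ref{prop:diff_cond_exp} on each, and then interchanging $n\to\infty$ and $g\to 0$ via the martingale convergence theorem together with the operator-norm control of Corollary~\ref{cor:cond_exp_and_ess_sup}.
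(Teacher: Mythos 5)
First, a point of comparison: the paper does not actually prove this proposition. Immediately after stating it, the authors remark that they have verified it only for $B$ with spectrum of finite cardinality (Proposition~\ref{prop:diff_cond_exp}) and ``do not go into the technical details'' of the general case. Your proposal therefore attempts something the paper omits. Its computational core is sound and consistent with the finite-spectrum argument: the identification $\mathbb{E}[X|B;\Psi^{g}]=d\nu_{X}^{g}/d\nu_{I}^{g}$, the observation that the Leibniz-rule computation in the proof of Proposition~\ref{prop:diff_cond_exp} uses only the boundedness (not the rank-one character) of $\Pi_{b}$ and hence survives the replacement $\Pi_{b}\to E_{B}(\Delta)$, the Cauchy--Schwarz argument giving absolute continuity of the set-wise derivatives $\dot\nu_{X},\dot\nu_{I}$ with respect to $\mu_{B}^{\phi}$, the identification of their densities through \eqref{def:cond_quas-exp}, and the final algebraic reassembly into \eqref{eq:wpsm_diff_02} are all correct.

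The genuine gap is in the quotient-rule step. The estimate $|\nu_{I}^{g}(\Delta)-\mu_{B}^{\phi}(\Delta)|\leq M|g|\,\mu_{B}^{\phi}(\Delta)^{1/2}$ is false, and with it the claim $\nu_{I}^{g}\ll\mu_{B}^{\phi}$ for $g\neq 0$. Writing $P=E_{B}(\Delta)\otimes I$, $\Phi=|\phi\otimes\psi\rangle$ and $U_{g}=e^{-igA\otimes Y}$, one has
\begin{equation*}
\nu_{I}^{g}(\Delta)\,\|\Phi\|^{2}-\mu_{B}^{\phi}(\Delta)\,\|\Phi\|^{2}
=2\,\mathrm{Re}\,\langle P\Phi,\,P(U_{g}-I)\Phi\rangle+\|P(U_{g}-I)\Phi\|^{2};
\end{equation*}
the first term is indeed $O\bigl(|g|\,\mu_{B}^{\phi}(\Delta)^{1/2}\bigr)$, but the second is only $O(g^{2})$ and carries no factor of $\mu_{B}^{\phi}(\Delta)$. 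Concretely, for $\mathcal{H}=\mathbb{C}^{2}$, $B=\sigma_{z}$, $|\phi\rangle=|0\rangle$, $A=\sigma_{x}$, the eigenvalue $b=-1$ satisfies $\mu_{B}^{\phi}(\{-1\})=0$ while $\mu_{B}^{\phi^{g}}(\{-1\})=O(g^{2})>0$: the interaction pushes probability onto $\mu_{B}^{\phi}$-null sets. Consequently the density $\rho_{I}^{g}=d\nu_{I}^{g}/d\mu_{B}^{\phi}$ does not exist for $g\neq 0$, the ratio $\rho_{X}^{g}/\rho_{I}^{g}$ is not a representative of $\mathbb{E}[X|B;\Psi^{g}]$ as written, and indeed one must first fix what ``a representative defined $\mu_{B}^{\phi}$-a.e.'' even means when $\mu_{B}^{\phi}\not\ll\mu_{B}^{\phi^{g}}$. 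To repair this you would have to take densities relative to $\mu_{B}^{\phi^{g}}$ (or $\mu_{B}^{\phi}+\mu_{B}^{\phi^{g}}$) and control the part of $\mu_{B}^{\phi^{g}}$ singular with respect to $\mu_{B}^{\phi}$, or else follow your own fallback of approximating by finite partitions of the spectrum of $B$, applying Proposition~\ref{prop:diff_cond_exp} on each, and justifying the interchange of the refinement limit with $g\to 0$ --- which is precisely where the real work lies and which your sketch does not carry out.
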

\noindent
While we have explicitly proved the above statement only in the special case where $B$ has spectrum of finite cardinality, the same statement indeed holds for general $B$, although we do not go into the technical details for its demonstration. One may thus understand the process of the weak CM scheme as the practice of measuring (the real and imaginary parts of) the conditional quasi-expectation $\mathbb{E}[A|B; \phi]$ of the target system. This result is to be compared with the unconditioned counterpart, in which one may extract the standard (unconditional) expectation $\mathbb{E}[A; \phi]$ by means of the weak UM scheme from the first-order differential coefficient of the measurement outcomes.

\paragraph{Topic: Conditional Quasi-expectation as the Merkmal for Amplification}
Under the above conditions, Taylor's theorem states that one has the following first-order expansion of the conditional expectation
\begin{align}\label{eq:cexp_expansion}
\mathbb{E}[X|B; \Psi^{g}]
    &= \mathbb{E}[X;\psi] \nonumber \\
    &\quad + g \cdot \big( 2\,\mathrm{Re} \left[ \mathbb{E}[A|B; \phi] \right] \cdot \mathbb{CV}_{\mathrm{A}}[X,Y; \psi] + 2\,\mathrm{Im}\left[ \mathbb{E}[A|B; \phi] \right] \cdot \mathbb{CV}_{\mathrm{S}}[X,Y; \psi] \big) \nonumber \\
    & \qquad + o(g), 
\end{align}
where $o(g)$ (Landau symbol) denotes a member of the class of functions satisfying the asymptotic property
\begin{equation}
\lim_{g \to 0} \frac{o(g)}{|g|} = 0,
\end{equation}
and the equality \eqref{eq:cexp_expansion} is understood to hold $\mu_{B}^{\phi}$-almost everywhere.
The above fact purports that the conditional quasi-expectation $\mathbb{E}[A|B; \phi]$ gives the (best first-order) indicator on the degree of `amplification' of the conditional expectation $\mathbb{E}[X|B; \Psi^{g}]$ one may attain by means of choosing the conditioning observable $B$ on the target system. Colloquially speaking, if one hopes to gain large amplification effect by conditioning, the first place one should look for is its conditional quasi-expectation, and one may hopefully achieve it by adjusting the conditioning observable $B$ so that the conditional quasi-expectation $\mathbb{E}[A|B; \phi]$ becomes large enough. However, note here that while the conditional quasi-expectation (for non-trivial $A$, and in addition, for the choice of the initial state $|\phi\rangle \in \mathrm{dom}(A)$ that is not an eigenvector of $A$) admits arbitrary large amplification by a suitable choice of the conditioning observable $B$ (Proposition~\ref{prop:lim_amp_quasi_cond}), the classical conditional expectation $\mathbb{E}[X|B; \Psi^{g}]$ may have an upper bound depending on its configuration (Proposition~\ref{prop:limit_of_amplification}). This generally suggests that the discrepancies between the full-order behaviour of $\mathbb{E}[X|B; \Psi^{g}]$ and its first-order approximation becomes larger (in other words, the higher-order terms $o(g)$ becomes more significant) as one adjusts the choice of the conditioning observable $B$ so that the conditional quasi-expectation may become larger. As for the higher-order terms, although we shall omit details, we note that one may also prove higher-order differentiability of the conditional expectation by placing stricter conditions for the choice of both the initial states of the target system and the meter system, and subsequently compute higher-order derivatives through analogous procedure as demonstrated above. 

In order to confirm this observation with a concrete model, we have included in Appendix~\ref{sec:PSM} an analytic example where we compute the conditional expectation $\mathbb{E}[X|B; \Psi^{g}]$ for the special case in which the conditioning observable $B = |\phi^{\prime}\rangle\langle\phi^{\prime}|$ is a projection onto a one-dimensional subspace spanned by a unit vector $|\phi^{\prime}\rangle \in \mathcal{H}$ ({\it i.e.}, the post-selected measurement scheme), and moreover the target observable $A$ is dichotomic.  One shall indeed find the existence of the limit of `amplification' of the conditional expectation by the `weak value amplification', and various other general properties alongside that we found in the discussions throughout this section.

\newpage
\section{Conditioned Measurement II: In Terms of Conditional Probabilities}\label{sec:ps_II}

In Section~\ref{sec:ups_II}, we have elaborated the study of the UM scheme conducted in the preceding Section~\ref{sec:ups_I} in terms of probabilities. In this section, we follow the same line and intend to refine our analysis for the conditioned counterpart.

\paragraph{Preliminary Observations}
As one may recall, we have seen in Section~\ref{sec:ups_I} and Section~\ref{sec:ups_II} that, by means of the UM scheme, one could extract the information of the target system in both the form of the expectation value $\mathbb{E}[A;\phi]$ and the probability measure $\mu_{A}^{\phi}$, the former by looking at the expectation value of the meter observable $X$ conjugate to $Y$, whereas the latter by focusing at the probability measure of it, and they were obtained by either inspecting the strong region $g \to \pm \infty$ of the interaction or by probing its local behaviour at $g=0$, both in parallel manners.

Now, as for the conditioned case, while we have not looked into the strong region $g \to \pm \infty$ of the interaction parameter, our analysis on the local behaviour conducted in Section~\ref{sec:ps_I} revealed that the first-order derivative of the expectation value for the choice $X = Q, P$ both contain potions (real and imaginary parts) of the conditional quasi-expectation $\mathbb{E}[A|B;\phi]$.
By comparing this result to the unconditioned case, one may come to a na{\"i}ve conjecture that both the expectation value and the conditional quasi-expectation of an observable $A$ has some quality in common. Namely, since the CM scheme incorporate conditioning, one may speculate that the conditional quasi-expectation $\mathbb{E}[A|B;\phi]$ may be interpreted as some form of a `conditional average' with respect to an underlying `probability distribution' of some kind.

\paragraph{Quasi-joint-probability Distributions in Quantum Mechanics}
A quick observation on our previous result \eqref{eq:wpsm_diff} reveals that, the full description of the CM scheme must incorporate the information of the measurement outcomes of \emph{both} the choice $X=Q, P$ of the meter observables, which is in contrast to the unconditioned case where we may concentrate only on the analysis of the probability distribution describing the outcome of a \emph{single} observable $X$ that is conjugate to $Y$. 
In view of this, it would thus be natural to consider some form of a `joint-distribution' describing the measurement outcome of both the observables $Q$ and $P$.  However, as we have seen in Section~\ref{sec:sim_meas_obs}, and also from an indirect proof by observing the difference of conditional (quasi)-expectations in their behaviour regarding essential suprema that, by definition, only a pair of observables that are simultaneously measurable admits a description by joint-probability distributions in the classical sense and, unfortunately, the pair $\{Q, P\}$ of observables of our present interest does not fall into this category. 

On account of this, there have been various attempts to construct some alternative form of `joint-distributions' for pairs of (generally non-commuting) quantum observables that possess convenient or desirable properties in describing the behaviour of both their outcomes. 
The Wigner-Ville distribution (WD distribution) \cite{Wigner_1932,Ville_1948}, which purports to describe the `joint behaviour' of the otherwise incompatible pair of observables $\hat{x}$ and $\hat{p}$ on the normalised wave-function $\psi \in L^{2}(\mathbb{R})$, symbolically defined by
\begin{equation}\label{def:Winger_Distr}
W^{\psi}(x,p) := \frac{1}{\pi} \int_{\mathbb{R}} \psi^{*}(x + y) \psi(x - y)e^{2ipy}\ d\beta(y),
\end{equation}
and the Kirkwood-Dirac distribution (KD distribution) \cite{Kirkwood_1933, Dirac_1945}, which on the other hand allows itself to be defined for arbitrary pair of observables $A$ and $B$, symbolically defined by
\begin{equation}\label{def:Kirkwood-Dirac_Distr}
K_{A,B}^{\phi}(a,b) := \frac{\langle \phi, b\rangle\langle b, a \rangle\langle a, \phi \rangle}{\|\phi\|^{2}}
\end{equation}
with the symbolical decomposition $A = \int_{\mathbb{R}} a\ d|a\rangle\langle a|$, $B = \int_{\mathbb{R}} b\ d|b\rangle\langle b|$, are among the most well-known classical proposals. The former allows negative numbers to be assigned, whereas the latter even admits complex numbers. Despite their queerness, they both retain some properties that one finds common in the standard ({\it i.e.,} real and non-negative) joint-probability distributions, {\it e.g.}, that they both have total integration of unity, and that the marginals coincide with the probability distribution describing the behaviour of the remaining observable, and in this sense, they are occasionally referred to as \emph{quasi-joint-probability} (QJP) distributions of the specific pairs of observables.

\paragraph{Quasi-joint-probability Distributions and Conditional Quasi-expectations}

Now, as some may expect, conditional quasi-expectations are closely related to the notion of quasi-joint-probabilities in quantum mechanics. Indeed, a quick observation reveals that, given a symbolical spectral decomposition $B = \int_{\mathbb{R}} b\ d|b\rangle\langle b|$ of the conditioning observable, the complex-parametrised conditional quasi-expectation \eqref{def:cond_quasi-exp_alpha} for the choice $\alpha =1$ coincides with the, so to speak, `conditional average' of $A$ given the outcome $b$ of $B$ under the Kirkwood-Dirac distribution, as one finds under the formal computation
\begin{equation}
\mathbb{E}^{1}[A | B=b;\phi] = \frac{\langle \phi, b\rangle\langle b, A\phi \rangle}{|\langle b, \phi \rangle|^{2}} = \frac{\int_{\mathbb{R}} a\ K_{A,B}^{\phi}(a,b)d\beta(a)}{\int_{\mathbb{R}} K_{A,B}^{\phi}(a,b)d\beta(a)}.
\end{equation}
As for the Wigner-Ville distribution, pure realness of its values might lead one to think that this is in some form related to the parametrised conditional quasi-expectation for the choice $\alpha = 0$. Indeed, one confirms under the formal computation
\begin{align}
\mathbb{E}^{0}[\hat{p} | \hat{x}=x;\phi]
    &:= \frac{1}{2} \left[ \frac{\langle x, \hat{p} \psi \rangle}{\langle x, \psi \rangle} + \left(\frac{\langle x, \hat{p} \psi \rangle}{\langle x, \psi \rangle}\right)^{*} \right] \nonumber \\
    &= |\langle x, \psi \rangle|^{-2} \cdot \frac{\left[ \langle \psi, x\rangle\langle x, \hat{p} \psi \rangle + (\langle \psi, x\rangle\langle x, \hat{p} \psi \rangle)^{*}\right]}{2} \nonumber \\
    &= |\langle x, \psi \rangle|^{-2} \cdot \left. (-i)^{-1}\frac{d}{dy} \left( \frac{\langle \psi, e^{-iy\hat{p}} x\rangle\langle x, e^{-iy\hat{p}} \psi \rangle}{2} \right)\right|_{y=0} \nonumber \\
    &= |\langle x, \psi \rangle|^{-2} \cdot \left. (-i)^{-1} \frac{d}{dy}\left( \frac{\psi^{*}(x+y)\psi(x-y)}{2} \right) \right|_{y=0} \nonumber \\
    &= |\langle x, \psi \rangle|^{-2} \cdot \frac{1}{2\pi} \int_{\mathbb{R}} p \psi^{*}(x + y) \psi(x - y)e^{2ipy}\ d\beta(y) \nonumber \\
    &= \frac{\int_{\mathbb{R}} p\ W^{\psi}(x,p)d\beta(p)}{\int_{\mathbb{R}} W^{\psi}(x,p) d\beta(p)},
\end{align}
that the conditional quasi-expectation of $\hat{p}$  given $\hat{x}=x$ for the choice $\alpha = 0$ coincides with the, again so to speak, `conditional average' of the momentum $
\hat{p}$ given the outcome $x$ of the position $\hat{x}$ under the Wigner-Ville distribution.

\paragraph{Conditioned Measurement}
The above observation is instructive in guiding the direction of our analysis. Indeed, it would be natural to expect that the measurement of the meter system in view of QJP distributions of the pair of observables $\{Q,P\}$ would allow us to extract the information of the target system in the form that is `akin' to it, {\it i.e.}, one might hope to obtain a QJP distributions of the target system, of which `conditional average' coincides with the conditional quasi-expectations $\mathbb{E}^{\alpha}[A|B = b]$ of our interest.
Guided by this formal argument and heuristic observation, in this section, we shall be analysing the CM scheme in terms of quasi-probabilities, or more specifically, in terms of `conditional' quasi-probabilities.
Now, as our previous arguments (in particular, those developed in Section~\ref{sec:ups_II_wups}) indicate, analysis directly on the level of probabilities is better suited to be performed in the space of generalised functions, rather than density functions or measures, if one is to conduct it with decent mathematical rigour and generality. This becomes especially crucial when introducing `quasi-joint-probabilities' of a pair of (generally not necessarily simultaneously measurable) quantum observables, which is one of the main themes of this paper, and thus examined in depth in the next Section~\ref{sec:qp_qo}. However, since the present authors have judged the theory of generalised functions to be beyond the scope of this paper as a tool for analysis, we shall be working exclusively in the space of complex measures and density functions as usual. While this treatment comes with some unavoidable compromise on generality of the results and loss of transparency of the line of arguments, we hope that we may still convey the essence of the contents.

\paragraph{Conditioned Measurement in View of the WV Distributions}
In this section, the target of our interest for our measurement is the QJP distribution of the pair of observables $Q$ and $P$ on the meter, and we shall study how one may extract information of the configuration of the target system from this viewpoint. Now, as one may realise from the two concrete classical proposals given above (namely, WV distribution and KD distribution), there exist an indefiniteness/arbitrariness to the choice of such distributions, and by its very nature, one may equally conduct the analysis in view of any of one's own selection.
In this section, we shall be analysing the CM scheme exclusively in terms of the Wigner-Ville distribution.
The primary reason for our choice is merely based on its degree of familiarity in the physics community, and as mentioned above, the choice is essentially arbitrary. One may naturally conduct the same type of analysis in view of another type of quasi-probability distribution ({\it e.g.,} the Kirkwood-Dirac type) in a similar manner and obtain analogous results, or may treat them collectively from a more general viewpoint (more to this in Section~\ref{sec:qp_qo}).

\subsection{Reference Materials}

As usual, we first make a brief review on the basic concepts and facts that are used in our later discussion.

\subsubsection{Conditional Probabilities}

We first introduce some basic definitions and results on the topic of conditioning of probability measures and some intricacies inherent to it.
Let $(X, \mathfrak{A}, \mu)$ be a probability space, and let $B \in \mathfrak{A}$ such that $\mu(B) \neq 0$. For $A \in \mathfrak{A}$, we define the \emph{conditional probability of $A$ given $B$} by the number
\begin{equation}
\mu(A|B) := \frac{\mu(A \cap B)}{\mu(B)}, \quad A \in \mathfrak{A}.
\end{equation}
It is immediate that the map $\mu(\,\cdot\,|B)$ is itself a probability measure satisfying the relation
\begin{equation}\label{def:cond_prob_elementary}
\mu(A \cap B) = \mu(A|B) \cdot \mu(B), \quad A \in \mathfrak{A}.
\end{equation}

\paragraph{Conditional Probability given a Sub-$\sigma$-Algebra}
We now intend to generalise the elementary definition above to suit our further needs. In parallel to the manner we have done for conditional expectations in the previous section, let $\mathfrak{B} \subset \mathfrak{A}$ be a sub-$\sigma$-algebra, and for each measurable set $A \in \mathfrak{A}$, we define the \emph{conditional probability of $A$ given $\mathfrak{B}$} by
\begin{equation}\label{def:cond_prob}
\mu(A|\mathfrak{B}) := \mathbb{E}[\chi_{A} | \mathfrak{B}], \quad A \in \mathfrak{A},
\end{equation}
where $\chi_{A}$ is the characteristic function \eqref{def:characteristic_function} of $A$. For fixed $A \in \mathfrak{A}$, note that by definition, the conditional probability \eqref{def:cond_prob} is understood as an equivalence class of a \emph{family} of $\mu|_{\mathfrak{B}}$-integrable functions by identifying those that are indistinguishable under the given probability measure $\mu|_{\mathfrak{B}}$.
In the simplest case where $\mathfrak{B} = \sigma(B) = \{\emptyset, B, B^{c}, X\}$ given some $B \in \mathfrak{A}$, the conditional probability $\mu(A|\sigma(B))$ satisfies
\begin{align}
\mu(A \cap B)
    &= \int_{B} \chi_{A}\ d\mu \nonumber \\
    &= \int_{B} \mathbb{E}[\chi_{A} | \sigma(B)]\ d\mu|_{\sigma(B)}
    = \mu(A|\sigma(B)) \cdot \mu(B), \quad A \in \mathfrak{A}.
\end{align}
This clarifies the relation between the general definition \eqref{def:cond_prob} and the elementary definition \eqref{def:cond_prob_elementary}.

\paragraph{Conditional Probability given a Function}
Now, under the condition above, instead of being given a sub-$\sigma$-algebra, suppose that one is given a measurable function $g: X \to Y$ for conditioning. We thus define
\begin{equation}
\mu(A|g) := \mu(A|\mathcal{I}(g)), \quad A \in \mathfrak{A},
\end{equation}
to be the conditional probability of $A$ given $g$, where $\mathcal{I}(g)$ is the initial $\sigma$-algebra of $g$ (see \eqref{def:initial_sigma_algebra} for its definition), and also introduce 
\begin{equation}
\mu(A| g = y) := \mu(A|g)(y), \quad y \in \mathbb{R},
\end{equation}
of which notation involves subtlety regarding the choice of the representative, in parallel to the situation of conditional expectations we have seen earlier.

\paragraph{Conditional Probabilities as Equivalent Classes of Functions}

Given a probability space $(X, \mathfrak{A}, \mu)$ and a sub-$\sigma$-algebra $\mathfrak{B} \subset \mathfrak{A}$, the conditional probability $\mu(\,\cdot\, |\mathfrak{B})$ satisfies properties analogous to those of probability measures, namely
\begin{enumerate}
\item $\mu(\emptyset |\mathfrak{B}) = 0$, $\mu(X |\mathfrak{B}) = 1$,
\item $\mu( A |\mathfrak{B}) \geq 0, \quad A \in \mathfrak{A}$,
\item for any sequence $(A_{n})_{n \geq 1}$ of pairwise disjoint subsets of $X$, the equality
\begin{equation}
\mu\left(\left. \bigcup_{n=1}^{\infty} A_{n} \right| \mathfrak{B} \right) = \sum_{n=1}^{\infty} \mu(A_{n} | \mathfrak{B})
\end{equation}
holds.
\end{enumerate}
However, the key distinction to be noted between the usual probability measures is that, the above (in)equalities are guaranteed to hold \emph{almost everywhere}, since by definition, conditional probabilities are equivalent classes of functions. It is thus of natural interest whether we could raise the limitation by dropping `validity almost everywhere', which one may occasionally find troublesome.

\paragraph{Transition Kernels}
To this end, we first recall the definition of transition kernels. Let $(X, \mathfrak{A})$ and $(Y, \mathfrak{B})$ be measurable spaces. We say that a map $K : X \times \mathfrak{B} \to [0,\infty]$ that satisfies the conditions
\begin{enumerate}
\item the map $x \mapsto K(x, B)$ is $\mathfrak{A}$-measurable for every $B \in \mathfrak{B}$,
\item the map $B \mapsto K(x, B)$ is a measure on $(Y, \mathfrak{B})$ for every $x \in X$,
\end{enumerate}
a \emph{transition kernel} from $(X, \mathfrak{A})$ into $(Y, \mathfrak{B})$. A transition kernel is said to be \mbox{($\sigma$-)finite} if the map $B \mapsto K(x, B)$ is \mbox{($\sigma$-)finite} for all $x \in X$. If $K$ is normalised to unity $K(x, Y) = 1$ for all $x \in X$, we say that $K$ is a \emph{transition probability kernel}.
Given a $\sigma$-finite transition kernel $K : X \times \mathfrak{B} \to [0, \infty]$ from $(X, \mathfrak{A})$ into $(Y, \mathfrak{B})$ and a function $f \in \mathcal{M}^{+}(\mathfrak{B})$, the integral
\begin{equation}
(Kf)(x) := \int_{Y}  f(y) K(x,dy), \quad x \in X
\end{equation}
defines a function $Kf \in \mathcal{M}^{+}(\mathfrak{A})$. On the other hand, given a measure $\mu$ on $(X, \mathfrak{A})$, the integral
\begin{equation}
(\mu K)(B) := \int_{X} K(x,B)\ d\mu(x), \quad B \in \mathfrak{B}
\end{equation}
defines a measure $\mu K$ on $(Y,\mathfrak{B})$. Associative law is valid, which is to say that
\begin{align}
\mu(Kf)
    &:= \int_{X} \left( \int_{Y}  f(y) K(x,dy) \right) d\mu(x) \nonumber \\
    &= \int_{Y} f(y) \left( \int_{Y}  K(x,dy)\ d\mu(x) \right)
    =: (\mu K)f
\end{align}
holds. The following theorem is of much use.
\begin{theorem*}[Transition Kernels into Measures on Product Spaces]
Let $K : X \times \mathfrak{B} \to [0,\infty]$ be a $\sigma$-finite transition kernel from $(X, \mathfrak{A})$ into $(Y, \mathfrak{B})$, and let $\mu$ be a measure on $(X, \mathfrak{A})$. Then, there exists a measure $\pi$ on the product space $(X \times Y,\ \mathfrak{A} \otimes \mathfrak{B})$ that satisfies
\begin{align}
\int_{X \times Y} f(x,y)\ d\pi(x,y) :=  \int_{X} \int_{Y} f(x,y)  K(x,dy)\ d\mu(x)
\end{align}
for all $f \in \mathcal{M}^{+}(\mathfrak{A} \otimes \mathfrak{B})$. If, moreover, both $\mu$ and $K$ happens to be finite, then $\pi$ is the unique finite measure on the product space satisfying
\begin{equation}
\pi(A \times B) = \int_{A} K(x,B)\ d\mu(x), \quad A \in \mathfrak{A},\ B \in \mathfrak{B}.
\end{equation}
\end{theorem*}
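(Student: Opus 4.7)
The plan is to build $\pi$ in three stages and then verify the integration formula by the standard machinery of ``indicator function $\to$ simple function $\to$ non-negative measurable function''. First, I would define $\pi$ on the semi-ring of measurable rectangles $\mathfrak{A} \ast \mathfrak{B}$ (see \eqref{def:product_set_of_sigma_algebras}) by the prescribed formula
\begin{equation*}
\pi(A \times B) := \int_{A} K(x,B)\ d\mu(x), \quad A \in \mathfrak{A},\ B \in \mathfrak{B}.
\end{equation*}
Well-definedness of the integrand is immediate from the first defining property of a transition kernel (measurability of $x \mapsto K(x,B)$), and the resulting number lies in $[0,\infty]$ because $K(x,\cdot) \geq 0$ and $\mu \geq 0$.

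Next, I would verify that $\pi$, defined on the collection of rectangles, satisfies countable additivity: given a decomposition $A \times B = \bigcup_{n=1}^{\infty} (A_{n} \times B_{n})$ into pairwise disjoint rectangles, this reduces to showing
\begin{equation*}
\int_{A} K(x,B)\ d\mu(x) = \sum_{n=1}^{\infty} \int_{A_{n}} K(x,B_{n})\ d\mu(x),
\end{equation*}
which follows by first using countable additivity of the measure $B \mapsto K(x,B)$ pointwise in $x$, then interchanging sum and integral via the monotone convergence theorem applied to the non-negative partial sums. The $\sigma$-finiteness hypothesis on $K$, together with $\sigma$-finiteness of $\mu$ on the rectangles covering $X$, guarantees that $\pi$ is itself $\sigma$-finite on the algebra generated by $\mathfrak{A} \ast \mathfrak{B}$. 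The existence of a measure on $\mathfrak{A} \otimes \mathfrak{B} = \sigma(\mathfrak{A} \ast \mathfrak{B})$ extending this pre-measure then follows from Carath\'eodory's extension theorem.

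Having obtained $\pi$, I would establish the integration formula by the usual layered approach. For $f = \chi_{A \times B}$, both sides reduce to $\pi(A \times B)$ directly by the defining formula and the fact that $\int_{Y} \chi_{B}(y)\, K(x,dy) = K(x,B)$. By linearity, the identity extends to all non-negative simple functions on $(X \times Y, \mathfrak{A} \otimes \mathfrak{B})$. For a general $f \in \mathcal{M}^{+}(\mathfrak{A} \otimes \mathfrak{B})$, choose a point-wise non-decreasing sequence $s_{n} \nearrow f$ of non-negative simple functions, and apply monotone convergence three times: once on $(X \times Y, \pi)$, once for the inner integral $y \mapsto s_{n}(x,y)$ under $K(x,\cdot)$ for each fixed $x$, and once for the resulting non-decreasing sequence in $x$ under $\mu$. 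The measurability of the intermediate map $x \mapsto \int_{Y} s_{n}(x,y)\, K(x,dy)$ (needed to apply Fubini-type reasoning) is exactly the statement one proves while setting up $(Kf)$ and is inherited in the limit.

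Finally, for uniqueness in the finite case, observe that the rectangles $\mathfrak{A} \ast \mathfrak{B}$ form a $\pi$-system generating $\mathfrak{A} \otimes \mathfrak{B}$, so any two finite measures on the product $\sigma$-algebra agreeing on $\mathfrak{A} \ast \mathfrak{B}$ coincide by Dynkin's $\pi$-$\lambda$ theorem; finiteness of both $\mu$ and $K$ ensures that $\pi(X \times Y) = \int_{X} K(x,Y)\, d\mu(x) < \infty$, giving us the finite total mass needed to invoke the theorem. The principal obstacle, in my view, is not any single calculation but rather the careful bookkeeping of measurability across the two marginals: specifically, verifying that $x \mapsto \int_{Y} f(x,y)\, K(x,dy)$ is $\mathfrak{A}$-measurable for every $f \in \mathcal{M}^{+}(\mathfrak{A} \otimes \mathfrak{B})$ must itself be established by the same indicator-to-simple-to-limit layering, and one has to be careful to complete this step \emph{before} appealing to the integration formula rather than as a corollary of it.
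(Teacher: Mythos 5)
The paper itself gives no proof of this statement --- it appears in a Reference Materials subsection as a quoted classical fact --- so your attempt can only be measured against the standard argument. Your overall architecture (pre-measure on the semi-ring of rectangles, Carath\'eodory extension, then indicator $\to$ simple $\to$ monotone limit, with uniqueness via the $\pi$-$\lambda$ theorem) has the right shape, and your verification of countable additivity on the rectangles is sound: for fixed $x$ the sets $\{B_{n} : x \in A_{n}\}$ are pairwise disjoint with union $B$, so countable additivity of $K(x,\cdot)$ followed by monotone convergence in $x$ does the job.

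The genuine gap is the sentence ``By linearity, the identity extends to all non-negative simple functions on $(X \times Y, \mathfrak{A} \otimes \mathfrak{B})$.'' Linearity only carries the identity from indicators of \emph{rectangles} to finite linear combinations of rectangle indicators. A general non-negative simple function is $\sum_{k} a_{k}\chi_{E_{k}}$ with $E_{k} \in \mathfrak{A} \otimes \mathfrak{B}$ arbitrary, and a general product-measurable set is not a countable disjoint union of rectangles; moreover the canonical approximating sequence $s_{n} \nearrow f$ is built from preimages $f^{-1}\bigl([k2^{-n},(k+1)2^{-n})\bigr)$, which are again arbitrary sets in $\mathfrak{A} \otimes \mathfrak{B}$. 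So neither linearity nor monotone convergence bridges the passage from rectangles to general measurable sets. What is missing is exactly the step you flag in your closing paragraph but never carry out: for every $E \in \mathfrak{A} \otimes \mathfrak{B}$ one must show that $x \mapsto K(x,E_{x})$ is $\mathfrak{A}$-measurable (where $E_{x} := \{y : (x,y) \in E\}$) and that $\pi(E) = \int_{X} K(x,E_{x})\,d\mu(x)$. This requires a monotone class (Dynkin system) argument: the family of sets for which both assertions hold contains the rectangles and is closed under increasing unions, but the complementation step needs finiteness, so one must first localise to an exhausting sequence of finite-measure rectangles --- which is where the $\sigma$-finiteness of $K$ actually enters. (The cleaner standard route is to \emph{define} $\pi(E) := \int_{X} K(x,E_{x})\,d\mu(x)$ outright, establish measurability of the integrand by the monotone class argument, get countable additivity from monotone convergence, and then obtain the integration formula by the usual layering, bypassing Carath\'eodory entirely.) Two smaller slips: the theorem does not assume $\mu$ is $\sigma$-finite, so your appeal to ``$\sigma$-finiteness of $\mu$'' is not available; and since the paper's notion of a finite kernel only requires $K(x,Y) < \infty$ pointwise, the total mass $\int_{X} K(x,Y)\,d\mu(x)$ need not be finite even for finite $\mu$, so the finiteness of $\pi$ invoked in your uniqueness step deserves a caveat (though this is arguably a defect of the statement itself).
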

\noindent
This provides us a convenient way to construct a measure on the product spaces given a transition kernel and a measure.

\paragraph{Conditional Probability Distributions}

We now return to our main line of arguments, and first introduce the definition of conditional probability measures.
\begin{definition*}[Conditional Probability Measure]
Let $(X, \mathfrak{A}, \mu)$ be a probability space, and let $\mathfrak{B} \subset \mathfrak{A}$ be a sub-$\sigma$-algebra. We call a transition probability kernel $K : X \times \mathfrak{B} \to [0, 1]$ a conditional probability measure (or a regular version) of the conditional probability $\mu(\,\cdot\, |\mathfrak{B})$ given $\mathfrak{B}$, if the map $x \mapsto K(x, A)$ happens to be a representative of $\mu( A |\mathfrak{B})$ for all $A \in \mathfrak{A}$, namely
\begin{equation}
K(\,\cdot\,, A) \in \big[\, \mu(A |\mathfrak{B}) \,\big], \quad A \in \mathfrak{A}
\end{equation}
holds, where the brackets around an element denote its equivalence class. If such a transition probability kernel exists, we customarily denote it with the same notation $\mu(\,\cdot\, |\mathfrak{B})$, and its images are in turn denoted as 
\begin{equation}
K(x,A) = \mu(A |\mathfrak{B})(x) = \mu_{x}(A), \quad x \in X, A \in \mathfrak{A}
\end{equation}
interchangeably, depending on the aesthetics of the formula in which it should appear.
\end{definition*}
\noindent
The presence of conditional probability measures allows us to readily make a connection between conditional expectations (defined previously in \eqref{def:cond_exp}) and averages with respect to conditional probabilities under consideration.
\begin{proposition*}[Conditional Expectations as Averages over Conditional Probability Measures]
Let $(X, \mathfrak{A}, \mu)$ be a probability space, $\mathfrak{B} \subset \mathfrak{A}$ be a sub-$\sigma$-algebra, and suppose that the conditional probability $\mu(\,\cdot\, |\mathfrak{B})$ has a conditional probability measure. Then, for every $\mu$-integrable function $f$, the map
\begin{equation}
x \mapsto \int_{X} f(x^{\prime})\ d\mu_{x}(x^{\prime}) \in \big[\, \mathbb{E}[f|\mathfrak{B}] \,\big]
\end{equation}
is a representative of the conditional expectation of $f$ given $\mathfrak{B}$.
\end{proposition*}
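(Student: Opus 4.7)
The plan is to prove the statement by the standard machine of measure and integration theory: first establish the claim for indicator functions, extend by linearity to nonnegative simple functions, promote to nonnegative measurable functions by monotone convergence, and finally decompose a general $\mu$-integrable $f$ into its real/imaginary and positive/negative parts. Recall that to qualify as a representative of $\mathbb{E}[f|\mathfrak{B}]$, the map $h(x) := \int_{X} f(x^{\prime})\, d\mu_{x}(x^{\prime})$ must be (i) $\mathfrak{B}$-measurable and (ii) satisfy $\int_{B} h\, d\mu|_{\mathfrak{B}} = \int_{B} f\, d\mu$ for every $B \in \mathfrak{B}$.

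First I would verify the base case $f = \chi_{A}$ for $A \in \mathfrak{A}$. By definition of the Lebesgue integral against the probability measure $\mu_{x}$, one has $\int_{X} \chi_{A}\, d\mu_{x} = \mu_{x}(A)$, and by the assumption that $\mu_{x}(\,\cdot\,)$ is a conditional probability measure, the map $x \mapsto \mu_{x}(A)$ is by construction a representative of $\mu(A|\mathfrak{B}) = \mathbb{E}[\chi_{A}|\mathfrak{B}]$. This yields both the $\mathfrak{B}$-measurability and the defining integral relation for indicator functions. Linearity of integration then immediately extends the result to every nonnegative simple function $s = \sum_{k=1}^{m} a_{k}\chi_{A_{k}}$, since $x \mapsto \int_{X} s\, d\mu_{x} = \sum_{k} a_{k}\mu_{x}(A_{k})$ remains $\mathfrak{B}$-measurable and represents $\mathbb{E}[s|\mathfrak{B}]$ by linearity of the conditional expectation.

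Next, for a nonnegative measurable function $f$, I would approximate it from below by an increasing sequence $(s_{n})_{n\geq 1} \subset \mathcal{T}^{+}$ with $s_{n} \uparrow f$ pointwise. Applying the monotone convergence theorem pointwise in $x$ to the integration against $\mu_{x}$ shows that $x \mapsto \int_{X} s_{n}\, d\mu_{x}$ increases to $x \mapsto \int_{X} f\, d\mu_{x}$, giving $\mathfrak{B}$-measurability of the limit. Applying monotone convergence a second time, against $\mu|_{\mathfrak{B}}$ over a fixed $B \in \mathfrak{B}$, together with the already-established simple-function case, yields
\begin{equation}
\int_{B} \left( \int_{X} f\, d\mu_{x} \right) d\mu|_{\mathfrak{B}}(x)
= \lim_{n\to\infty} \int_{B} s_{n}\, d\mu
= \int_{B} f\, d\mu,
\end{equation}
which is the defining property. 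Finally, for a general $\mu$-integrable $f : X \to \hat{\mathbb{K}}$, applying the nonnegative case to $|f|$ shows that $x \mapsto \int_{X} |f|\, d\mu_{x}$ is $\mu|_{\mathfrak{B}}$-integrable; in particular $\int_{X} |f|\, d\mu_{x} < \infty$ for $\mu|_{\mathfrak{B}}$-almost every $x$, so the decomposition $f = (\mathrm{Re}\,f)^{+} - (\mathrm{Re}\,f)^{-} + i(\mathrm{Im}\,f)^{+} - i(\mathrm{Im}\,f)^{-}$ may be integrated termwise against $\mu_{x}$, and linearity of both sides of the defining relation concludes the argument.

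The step requiring the most care will be the passage to general integrable $f$: one must ensure that the $\mu|_{\mathfrak{B}}$-null set on which $\int_{X} |f|\, d\mu_{x} = \infty$ does not obstruct the definition, and one must justify that the four nonnegative pieces can be recombined while preserving both measurability and the integral identity. This is handled by redefining the representative to be $0$ on the exceptional null set, which is permissible since conditional expectations are defined only up to $\mu|_{\mathfrak{B}}$-a.e.\ equivalence. As a more streamlined alternative, one could invoke the theorem on transition kernels into measures on product spaces cited above to build the joint measure $\pi$ on $(X \times X, \mathfrak{B} \otimes \mathfrak{A})$ with $\pi(B \times A) = \int_{B} \mu_{x}(A)\, d\mu|_{\mathfrak{B}}(x) = \mu(A \cap B)$, and then apply Fubini's theorem to $(x,x^{\prime}) \mapsto f(x^{\prime})$, bypassing the explicit three-stage approximation.
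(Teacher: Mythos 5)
Your proof is correct. There is nothing in the paper to compare it against: the proposition is stated in the Reference Materials of Section~5 as a quoted standard fact of measure theory, with no proof supplied (the paper only remarks afterwards that conditional probability measures exist in the case $(X,\mathfrak{A})=(\mathbb{R}^{n},\mathfrak{B}^{n})$ of interest). Your three-stage argument is the standard textbook route: for indicators the claim is literally the defining property $K(\,\cdot\,,A)\in\big[\,\mu(A|\mathfrak{B})\,\big]$ of a conditional probability measure together with $\mu(A|\mathfrak{B})=\mathbb{E}[\chi_{A}|\mathfrak{B}]$; linearity handles simple functions; two applications of monotone convergence (one in $x'$ against $\mu_{x}$, one in $x$ against $\mu|_{\mathfrak{B}}$) handle nonnegative $f$; and the four-part decomposition handles general integrable $f$. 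You also correctly identify and dispose of the only delicate point, namely the $\mu|_{\mathfrak{B}}$-null set on which $\int_{X}|f|\,d\mu_{x}=\infty$, by redefining the representative there, which is legitimate since conditional expectations are equivalence classes. Your sketched alternative via the product measure $\pi(B\times A)=\int_{B}\mu_{x}(A)\,d\mu|_{\mathfrak{B}}(x)=\mu(A\cap B)$ and Fubini is also valid and is arguably the more natural route here, since the paper states the transition-kernel-to-product-measure theorem immediately before this proposition; it buys a shorter argument at the cost of extending that theorem from nonnegative to integrable integrands, which is the same decomposition in disguise.
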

\noindent
We note that conditional probability measures do not necessarily exist for general measure spaces. However, fortunately for us, the case $(X, \mathfrak{A}) = (\mathbb{R}^{n}, \mathfrak{B}^{n})$ that we are interested in is known to always admit it.

\paragraph{Conditional Probability Distributions}
Given a probability space $(X, \mathfrak{A}, \mu)$ and a sub-$\sigma$-algebra $\mathfrak{B} \subset \mathfrak{A}$, suppose that a measurable map $f: (X,\mathfrak{A}) \to (X^{\prime},\mathfrak{A}^{\prime})$ is moreover given. In parallel to what we have seen for conditional expectations, this allows us to define an equivalence class of functions
\begin{equation}
\mu(f \in A^{\prime} | \mathfrak{B}) := \mu(f^{-1}(A^{\prime}) | \mathfrak{B})
\end{equation}
for all $A^{\prime} \in \mathfrak{A}^{\prime}$. Then, a transition probability kernel $K : X \times \mathfrak{A}^{\prime} \to [0,1]$ from $(X, \mathfrak{B})$ into $(X^{\prime},\mathfrak{A}^{\prime})$ satisfying
\begin{equation}
K(\,\cdot\, , A^{\prime}) \in \big[\, \mu(f \in A^{\prime} | \mathfrak{B}) \,\big], \quad A^{\prime} \in \mathfrak{A}^{\prime}
\end{equation}
is called a \emph{conditional probability distribution of $f$ given $\mathfrak{B}$}. Likewise, given another measurable map $g: (X,\mathfrak{A}) \to (Y^{\prime},\mathfrak{B}^{\prime})$, a transition probability kernel $K : X \times \mathfrak{A}^{\prime} \to [0,1]$ from $(X, \mathcal{I}(g))$ into $(X^{\prime},\mathfrak{A}^{\prime})$ satisfying
\begin{equation}
K(\,\cdot\, , A^{\prime}) \in \big[\, \mu(f \in A^{\prime} | \mathcal{I}(g)) \,\big], \quad A^{\prime} \in \mathfrak{A}^{\prime}
\end{equation}
is called a \emph{conditional probability distribution of $f$ given $g$}. Such transition probability kernels do not necessarily exist in general, but as above, the case $(X^{\prime},\mathfrak{A}^{\prime}) = (\mathbb{R}^{n}, \mathfrak{B}^{n})$ that we are interested in is known to always admit it.

\paragraph{Conditioning in Quantum Measurements}

Under the context of quantum measurements, let $A$ and $B$ be a pair of simultaneously measurable observables. Given a joint-probability distribution $\mu_{A,B}^{\phi}$ of $A$ and $B$ on some quantum state $|\phi\rangle \in \mathcal{H}$, we introduce
\begin{equation}
\mu_{A}^{\phi}(\Delta_{A} | B) := \mu_{A,B}^{\phi}(\pi_{A} \in \Delta_{A} | \pi_{B}), \quad \Delta_{A} \in \mathfrak{B}^{1},
\end{equation}
where $\pi_{A}(a,b) = a$ and $\pi_{B}(a,b) = b$ are measurable functions (projections) respectively representing the behaviour of the measurement outcomes of $A$ and $B$.
Accordingly, we define the \emph{conditional probability distribution of $A$ given $B$} to be a transition probability kernel $K : \mathbb{R} \times \mathfrak{B}^{1} \to [0,1]$  that satisfies
\begin{equation}
K(\,\cdot\,,\Delta_{A}) \in \big[\, \mu_{A}^{\phi}(\Delta_{A} | B) \,\big], \quad \Delta_{A} \in \mathfrak{B}^{1},
\end{equation}
which, as guaranteed above, is known to always exist. The values of the conditional probability distribution of $A$ given $B$ are in turn denoted interchangeably by
\begin{equation}
\mu_{A}^{\phi}(\Delta_{A} | B = b) = \mu_{B=b}^{\phi}(A \in \Delta_{A}) = \mu_{A}^{\phi}(\Delta_{A} | B)(b) := K(b, \Delta_{A}),
\end{equation}
depending on the context.

\subsubsection{Fourier Transformation}
We next recall the basic definitions and properties of the Fourier transformation.
For convenience, we first introduce the renormalised $n$-dimensional Lebesgue-Borel measure on $\mathbb{R}^{n}$ by
\begin{equation}\label{def:renormalised_LB_measure}
dm_{n} := (2\pi)^{-n/2} d\beta^{n}.
\end{equation}
Accordingly, in this section we employ the renormalised $L^{p}$-norm and the convolution defined by the renormalised Lebesgue-Borel measure,
\begin{gather}
\|f\|_{p} := \left( \int_{\mathbb{R}^{n}} |f(x)|^{p}\ dm_{n}(x)\right)^{1/p}, \quad f \in L^{p}(\mathbb{R}^{n}), \\
(f \ast g)(x) := \int_{\mathbb{R}^{n}} f(x-y)g(y)\ dm_{n}(y), \quad f, g \in L^{1}(\mathbb{R}^{n}).
\label{eq:convol}
\end{gather}
For brevity, we occasionally write $dm_{1} = dm$ whenever there is no risk for confusion.

Now, for a function $f \in L^{1}(\mathbb{R}^{n})$, recall that the functions $\hat{f}, \check{f} : \mathbb{R}^{n} \to \mathbb{C}$ defined by
\begin{align}
\hat{f}(q) &:= \int_{\mathbb{R}^{n}} e^{-i\langle q, x \rangle}f(x)\ dm_{n}(x), \\
\check{f}(q) &:= \int_{\mathbb{R}^{n}} e^{i\langle q, x \rangle}f(x)\ dm_{n}(x),
\end{align}
with the scalar product $\langle q, x \rangle := \sum_{k=1}^{n} q_{k}x_{k}$ of two real vectors in $\mathbb{R}^{n}$, are respectively called the \emph{Fourier transform} and the \emph{inverse Fourier transform} of $f$.
The $\mathbb{C}$-linear map $\mathscr{F}$ that maps $f$ to its Fourier transform $\hat{f}$ is called the \emph{Fourier transformation}.  It is known that the Fourier transformation is injective, {\it i.e.} $\hat{f} = \hat{g}$ implies $f = g$. For $f, g \in L^{1}(\mathbb{R}^{n})$, 
the following properties under the convolution \eqref{eq:convol}, scaling \eqref{def:function_scaling}, and translation \eqref{eq:translation},
\begin{align}
\widehat{(f \ast g)} &= \hat{f} \cdot \hat{g}, \\
\widehat{(f_{t})}(q) &= \hat{f}(tq), \quad t \neq 0,  \label{eq:Scaling_Fourier_translation} \\
\widehat{(\tau_{a}f)}(t) &= e^{i\langle a,x\rangle}\hat{f}(t),\quad a \in \mathbb{R}, \label{eq:Fourier_translation}
\end{align}
respectively, are basic. 
The Fourier transformation $\mathscr{F}$ plays particularly well on the subspace $\mathscr{S}(\mathbb{R}^{n}) \subset L^{1}(\mathbb{R}^{n})$, where it becomes a linear bijection of $\mathscr{S}(\mathbb{R}^{n})$ onto $\mathscr{S}(\mathbb{R}^{n})$, whose inverse is given by the inverse Fourier transformation (recall, on the other hand, that one does not necessarily have $\hat{f} \in L^{1}(\mathbb{R}^{n})$ for $f \in L^{1}(\mathbb{R}^{n})$ in general). One then has
\begin{align}
\partial^{\gamma}(\mathscr{F}f) &= (-i)^{|\gamma|}\mathscr{F}(x^{\gamma}f), \quad \gamma \in \mathbb{N}^{n}_{0}, \label{eq:fourier_diff_01} \\
\mathscr{F}(\partial^{\gamma}f) &= i^{|\gamma|}q^{\gamma}\mathscr{F}f, \quad \gamma \in \mathbb{N}^{n}_{0},\label{eq:fourier_diff_02}
\end{align}
for $f \in \mathscr{S}(\mathbb{R}^{n})$, where we have used the multi-index $\gamma := (\gamma_{1}, \dots, \gamma_{n}) \in \mathbb{N}^{n}_{0}$ as in \eqref{def:use_alpha} and
introduced the shorthand $|\gamma| := \gamma_{1} + \cdots + \gamma_{n}$.

\subsubsection{Wigner-Ville Distribution}

In order to make our line of arguments self-contained in the framework of density functions $L^{1}(\mathfrak{B})$, we assume $\psi \in L^{1}(\mathbb{R}) \cap L^{2}(\mathbb{R})$ throughout this passage.
Given such $\psi$, we define a complex function $\omega^{\psi} \in L^{1}(\mathbb{R}^{2}) \cap L^{2}(\mathbb{R}^{2})$ by
\begin{align}\label{def:func_V}
\tilde{\omega}^{\psi}(x,y) &:= \psi^{*}(x - y/2) \psi(x + y/2),
\end{align}
and evaluate its total integration as
\begin{align}\label{eq:V_tot_int}
\int_{\mathbb{R}^{2}} \tilde{\omega}^{\psi}(x,y)\ dm_{2}(x,y)
    &= \int_{\mathbb{R}^{2}} \psi^{*}(x - y/2) \psi(x + y/2)\ dm_{2}(x,y) \nonumber \\
    &= \left( \int_{\mathbb{R}} \psi^{*}(x) \ dm(x) \right) \left( \int_{\mathbb{R}} \psi(y) \ dm(y) \right) \nonumber \\
    &= \left| \int_{\mathbb{R}} \psi(x) \ dm(x) \right|^{2}.
\end{align}
Whenever the total integration \eqref{eq:V_tot_int} is non-vanishing, we introduce
\begin{equation}\label{eq:omega_norm}
\omega^{\psi}(x,y) := \frac{\tilde{\omega}^{\psi}(x,y)}{\int_{\mathbb{R}^{2}} \tilde{\omega}^{\psi}(x,y)\ dm_{2}(x,y)},
\end{equation}
to denote its normalisation.

On the other hand, if we consider the Fourier transform of $\tilde{\omega}^{\psi}(x,y)$ with respect to its second parameter $y$,
\begin{align}
\tilde{W}^{\psi}(x,p)
    &:= \int_{\mathbb{R}} e^{-ipy} \tilde\omega^{\psi}(x,y)\ dm(y) \nonumber \\
    &= \int_{\mathbb{R}} \psi^{*}(x + y/2) \psi(x - y/2)e^{ipy}\ dm(y),
\end{align}
we readily find that it is a real function,
\begin{align}
\left(\tilde{W}^{\psi}(x,p)\right)^{*}
    &= \tilde{W}^{\psi}(x,p),
\end{align}
whose marginals are given by
\begin{equation}
\begin{split}
\label{eq:wigner_merginals}
\int_{\mathbb{R}} \tilde{W}^{\psi}(x,p)\ dm(x) &= |\hat{\psi}(p)|^{2}, \\
\int_{\mathbb{R}} \tilde{W}^{\psi}(x,p)\ dm(p) &= |\psi(x)|^{2}.
\end{split}
\end{equation}
Applying Plancherel's theorem, one finds that $\tilde{W}^{\psi} \in L^{1}(m_{2})$, and thus its total integration reads
\begin{align}\label{eq:wigner_tot_int}
\int_{\mathbb{R}^{2}} \tilde{W}^{\psi}(x,p)\ dm_{2}(x,p) &= \|\psi\|_{2}^{2}.
\end{align}
If the total integration \eqref{eq:wigner_tot_int} is non-vanishing, which is equivalent to the condition $\psi \neq 0$, the real quasi-probability density function denoted by
\begin{equation}\label{def:Wigner-Ville_QPD}
W^{\psi}(x,p) := \frac{\tilde{W}^{\psi}(x,p)}{\int_{\mathbb{R}^{2}} \tilde{W}^{\psi}(x,p)\ dm_{2}(x, p)}
\end{equation}
is called the \emph{Wigner-Ville distribution} on $\psi$.  As we have seen in \eqref{eq:wigner_merginals},
the WV distribution
possesses useful properties for our analysis, namely, that its marginals yield the probability density function describing the behaviour of the measurement outcomes of the respective observables $\hat{x}$ and $\hat{p}$ on the state $\psi$, which is to say that
\begin{equation}
\begin{split}
\int_{\mathbb{R}^{2}} W^{\psi}(x,p)\, dm(x)
    &= \rho_{\hat{p}}^{\psi}(p), \\
\int_{\mathbb{R}^{2}} W^{\psi}(x,p)\, dm(p)
    &= \rho_{\hat{x}}^{\psi}(x),
\end{split}
\end{equation}
if explicitly written down. Thus, the choice $\psi \in L^{1}(\mathbb{R}) \cap L^{2}(\mathbb{R})$ defines a complex measure
\begin{equation}\label{def:qpm_for_WV}
\mu_{Q,P}^{\psi} := W^{\psi} \odot \beta^{2}
\end{equation}
on the measurable space $(\mathbb{R}, \mathfrak{B}^{2})$ that satisfies
\begin{equation}
\begin{split}\label{eq:WV_qualifies_as_QJP}
\mu_{Q,P}^{\psi}(\mathbb{R} \times \Delta)
    &= \mu_{P}^{\psi}(\Delta), \\
\mu_{Q,P}^{\psi}(\Delta \times \mathbb{R})
    &= \mu_{Q}^{\psi}(\Delta),
\end{split}
\quad \Delta \in \mathfrak{B}.
\end{equation}
For our later argument we note that, since the functions $\omega^{\psi}(x,y)$ and $W^{\psi}(x,y)$ are mapped to one another by Fourier transformation, they just represent the same contents seen from different viewpoints, and are thus essentially the same object.

\subsection{Conditioned Measurement}\label{sec:ps_II_ps}

We are now interested in simultaneously measuring the probability measure of $B$ on the target system and a QJP distribution of $Q$ and $P$ on the meter system. This should be possible since every local measurements can be simultaneously performed on separate systems, and this leads to an existence of a joint distribution of the probability measure of $B$ on one side, and a QJP distribution of $Q$ and $P$ on the other. Throughout this section, for definiteness, we exclusively treat the special case in which the meter state is described by the one-dimensional Schr{\"o}dinger representation of the CCR $\{ L^{2}(\mathbb{R}), \mathscr{S}(\mathbb{R}), \{\hat{x}, \hat{p}\}\}$ and choose $Y=\hat{p}$ without loss of generality.

\subsubsection{Conditioning over Quasi-probabilities}\label{sec:CM_II_cond_over_QP}
Since we are now dealing with complex measures, the definitions for conditioning must be suitably expanded accordingly. To this end, we first prepare a terminology:
\begin{definition*}[Quasi-probabilities]
Let $(X,\mathfrak{A})$ be a measurable space. We call a complex measure $\nu$ on $(X,\mathfrak{A})$ satisfying the normalisation condition $\nu(X) = 1$ a quasi-probability measure, and accordingly the triplet $(X,\mathfrak{A}, \nu)$, a quasi-probability space.
\end{definition*}
\noindent
If the underlying space is given by $(X,\mathfrak{A}) = (\mathbb{R}^{n}, \mathfrak{B}^{n})$, and the quasi-probability measure $\nu$ happens to be absolutely continuous, we call its density $d\nu/d\beta^{n} \in L^{1}(\mathbb{R}^{n})$, which is in general a complex function that has the total integration of unity, a \emph{quasi-probability density function}. According to the definition, note that the usual ({\it i.e.,} real and non-negative) probability measures and density functions are special members of the respective families of quasi-probability measures and density functions. In analogy to the standard probability spaces, given a quasi-probability space $(X,\mathfrak{A},\nu)$ and a $\nu$-integrable function $f$, we occasionally denote the total integration by
\begin{equation}
\mathbb{E}[f;\nu] := \int_{X} f\ d\nu,
\end{equation}
and call it the \emph{quasi-expectation value of $f$ under $\nu$}.

\paragraph{Quasi-joint-probabilities}
As a special subclass of quasi-probability measures, we say that a quasi-probability measure $\nu \in \mathbf{M}_{\mathbb{C}}(\mathfrak{B}^{n})$ \emph{qualifies as a QJP distribution} of the observables $A_{1}, \dots, A_{n}$ on the state $|\phi\rangle \in \mathcal{H}$, if it satisfies
\begin{equation}\label{def:qjpm}
\nu(\underbrace{\mathbb{K} \times \cdots \times \mathbb{K} \times \stackrel{k\text{-th}}{\Delta} \times \mathbb{K} \times \cdots \times \mathbb{K}}_{n}) = \mu_{A_{k}}^{\phi}(B), \quad \Delta \in \mathfrak{B}(\mathbb{K})
\end{equation}
for all $1 \leq k \leq n$. In parallel to it, we prepare the term \emph{QJP density function} for those $\nu$ that are absolutely continuous%
\footnote{Here, we occasionally admit complex parameters to describe outcomes of each observable $A_{k}$ for formal completeness. Accordingly, the r.~h.~s. of the above formula is understood as the probability measure induced by the two-dimensional spectral measure of $A_{k}$ seen as a normal operator ({\it cf.} spectral theorem for normal operators).}.
One confirms from \eqref{eq:WV_qualifies_as_QJP} that, for the choice of the quantum state $\psi \in L^{1}(\mathbb{R}) \cap L^{2}(\mathbb{R})$, the quasi-probability measure \eqref{def:qpm_for_WV} qualifies as a QJP distribution for the pair of observables $Q$ and $P$.  It should be intuitively straightforward to see by the formal arguments made in the introduction that the Kirkwood-Dirac distribution also qualifies as a QJP distribution of the pair of observables under consideration.

\paragraph{Conditional Quasi-expectations}
We next intend to introduce analogous definitions regarding conditioning on quasi-probability measure spaces $(X,\mathfrak{A},\nu)$. To this end, we make some very important remarks on the different properties between standard probability measures and quasi-probability measures. Recall that we have made extensive use of the Radon-Nikod{\'y}m theory for defining conditional expectations and conditional probabilities. In applying the theory, first note that positiveness of the measure $\mu$ is necessary in order for the Radon-Nikod{\'y}m derivative $d\nu/d\mu$ of some complex measure $\nu \ll \mu$ to be well-defined. Hence, conditioning by a sub-$\sigma$-algebra $\mathfrak{B} \subset \mathfrak{A}$ must be such that the restriction $\nu|_{\mathfrak{B}}$ becomes a measure. The second fact to notice is that, for a $\nu$-integrable function $f$, the complex measure on the sub-$\sigma$-algebra defined by
\begin{equation}
B \mapsto (f\odot\nu)(B) := \int_{B} f\ d\nu, \quad B \in \mathfrak{B}
\end{equation}
is not necessarily absolutely continuous with respect to the restriction $\nu|_{\mathfrak{B}}$, in contrast to that of positive measures. With these in mind, we hereby define:
\begin{definition*}[Conditional Quasi-expectation]
Let $(X,\mathfrak{A}, \nu)$ be a quasi-probability space, and $\mathfrak{B} \subset \mathfrak{A}$ a sub-$\sigma$-algebra such that the restriction $\nu|_{\mathfrak{B}}$ becomes a probability measure ({\it i.e.}, real and non-negative). For a $\nu$-integrable function $f$ such that $f \odot \nu \ll \nu|_{\mathfrak{B}}$, we define the conditional quasi-expectation of $f$ given $\mathfrak{B}$
\begin{equation}
\mathbb{E}[f|\mathfrak{B}] := \frac{d (f \odot \nu)}{d(\nu|_{\mathfrak{B}})}
\end{equation}
by the Radon-Nikod{\'y}m derivative of the complex measure $f \odot \nu$ with respect to the measure $\nu|_{\mathfrak{B}}$.
\end{definition*}
\noindent
Given another measurable function $g: X \to \mathbb{R}$ such that the above conditions are fulfilled for the initial $\sigma$-algebra $\mathfrak{B} = \mathcal{I}(g)$, we define $\mathbb{E}[f|g]$ and any other relevant notations such as $\mathbb{E}[f|g = y]$ {\it etc.} in an analogous manner to those defined for standard probability measures.

\paragraph{Conditional Quasi-probabilities}
We then intend introduce a complex analogue of conditional probabilities defined for quasi-probability measures.
\begin{definition*}[Quasi-Conditional Probabilities]
Let $(X,\mathfrak{A}, \nu)$ be a quasi-probability space, and $\mathfrak{B} \subset \mathfrak{A}$ a sub-$\sigma$-algebra such that the restriction $\nu|_{\mathfrak{B}}$ becomes a probability measure. For a measurable set $A \in \mathfrak{A}$, we define
\begin{equation}\label{def:cond_quasi-prob}
\nu(A|\mathfrak{B}) := \mathbb{E}[\chi_{A} | \mathfrak{B}], \quad A \in \mathfrak{A},
\end{equation}
to be the quasi-conditional probability of $A$ given $\mathfrak{B}$, whenever $\chi_{A} \odot \nu \ll \nu|_{\mathfrak{B}}$, where $\chi_{A}$ is the characteristic function of $A$.
Likewise, given a measurable function $f: (X,\mathfrak{A}) \to (X^{\prime},\mathfrak{A}^{\prime})$, we introduce
\begin{equation}
\nu(f \in A^{\prime} | \mathfrak{B}) := \nu(f^{-1}(A^{\prime}) | \mathfrak{B}), \quad A^{\prime} \in \mathfrak{A}^{\prime},
\end{equation}
whenever the r.~h.~s. is well-defined.
If, instead of being given a sub-$\sigma$-algebra, one is given a measurable function $g: X \to Y$ for conditioning, we define
\begin{equation}
\nu(A|g) := \nu(A|\mathcal{I}(g)), \quad A \in \mathfrak{A},
\end{equation}
where $\mathcal{I}(g)$ is the initial $\sigma$-algebra of $g$, whenever, as usual, the r.~h.~s. is well-defined.
\end{definition*}
\noindent
The conditional quasi-probability $\nu(\,\cdot\, | \mathfrak{B})$ satisfies properties analogous to those of quasi-probability measures, namely
\begin{enumerate}
\item $\nu(\emptyset |\mathfrak{B}) = 0$, $\nu(X |\mathfrak{B}) = 1$,
\item $\nu( A |\mathfrak{B}) \in \mathbb{C}, \quad A \in \mathfrak{A}$,
\item for any sequence $(A_{n})_{n \geq 1}$ of pairwise disjoint subsets of $X$, the equality
\begin{equation}
\nu\left(\left. \bigcup_{n=1}^{\infty} A_{n} \right| \mathfrak{B} \right) = \sum_{n=1}^{\infty} \nu(A_{n} | \mathfrak{B})
\end{equation}
holds,
\end{enumerate}
whenever every component above is well-defined.
In parallel to conditional probabilities, the validity of the (in)equalities above are significant only in the sense of $\nu|_{\mathfrak{B}}$-a.e.

\paragraph{Conditional Quasi-probability Measures}
We now expand the definition of transition kernels to fit into the theory of complex measures. Let $(X, \mathfrak{A})$ and $(Y, \mathfrak{B})$ be measurable spaces. We say that a map $K : X \times \mathfrak{B} \to \mathbb{C}$ that satisfies the conditions
\begin{enumerate}
\item the map $x \mapsto K(x, B)$ is $\mathfrak{A}$-measurable for every $B \in \mathfrak{B}$,
\item the map $B \mapsto K(x, B)$ is a complex measure on $(Y, \mathfrak{B})$ for every $x \in X$,
\end{enumerate}
a \emph{complex transition kernel} from $(X, \mathfrak{A})$ into $(Y, \mathfrak{B})$. If a complex transition kernel $K$ satisfies $K(x,Y) = 1$ for all $x \in X$, we call such $K$ a \emph{transition quasi-probability kernel}. The following analogous result is of use.
\begin{proposition}[Complex Transition Kernels into Complex Measures on Product Spaces]\label{prop:ctk_to_cm}
Let $K : X \times \mathfrak{B} \to \mathbb{C}$ be a complex transition kernel from $(X, \mathfrak{A})$ into $(Y, \mathfrak{B})$, and let $\mu$ be a measure on $(X, \mathfrak{A})$. Then, there exists a complex measure $\pi$ on the product space $(X \times Y,\ \mathfrak{A} \otimes \mathfrak{B})$ that satisfies
\begin{align}
\int_{X \times Y} f(x,y)\ d\pi(x,y) :=  \int_{X} \int_{Y} f(x,y)  K(x,dy)\ d\mu(x)
\end{align}
for all $f$, whenever the integration on the r.~h.~s. is well-defined. In particular, the complex measure $\pi$ satisfies
\begin{equation}
\pi(A \times B) = \int_{A} K(x,B)\ d\mu(x), \quad A \in \mathfrak{A},\ B \in \mathfrak{B}.
\end{equation}
\end{proposition}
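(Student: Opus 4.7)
My plan is to reduce the complex case to the already-stated theorem for positive (finite) transition kernels by building a suitable dominating positive transition kernel and then recovering $\pi$ via a Radon--Nikod\'ym density. Throughout, the implicit standing assumption will be that $\mu$ and the variations of the fibres $K(x,\cdot)$ are sufficiently finite so that all integrals under consideration exist; if necessary one first truncates to sets on which $\int_X |K|(x,Y)\,d\mu(x) < \infty$ and glues the pieces together at the end.

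\emph{Step 1: the variation is a positive transition kernel.} For every $x \in X$, define $|K|(x,\cdot)$ to be the total variation of the complex measure $B \mapsto K(x,B)$, which is automatically a positive measure on $(Y,\mathfrak{B})$. I would next show that $x \mapsto |K|(x,B)$ is $\mathfrak{A}$-measurable for each $B \in \mathfrak{B}$. Since
\[
|K|(x,B) = \sup \sum_{j=1}^{n} |K(x, B_{j})|
\]
with the supremum taken over finite measurable partitions $\{B_j\}$ of $B$, and since (under the countable generation of $\mathfrak{B}$ in our setting, e.g.\ Borel $\sigma$-algebras on $\mathbb{R}^{n}$) this supremum is attained along a countable cofinal family of partitions, measurability in $x$ follows from the measurability of each finite sum of $|K(\cdot,B_j)|$, together with the fact that a countable supremum of measurable functions is measurable. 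Hence $|K|$ is a positive transition kernel.

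\emph{Step 2: apply the positive case and use Radon--Nikod\'ym.} The classical Reference-Material theorem then yields a unique positive measure $|\pi|_{0}$ on $(X\times Y,\mathfrak{A}\otimes\mathfrak{B})$ with
\[
|\pi|_{0}(A\times B) = \int_{A} |K|(x,B)\,d\mu(x)
\]
and the usual iterated-integral representation. For each fixed $x$, the complex measure $K(x,\cdot)$ is absolutely continuous with respect to $|K|(x,\cdot)$ with a Radon--Nikod\'ym density $h(x,\cdot)$ satisfying $|h(x,y)| = 1$ pointwise. I would then construct a jointly $\mathfrak{A}\otimes\mathfrak{B}$-measurable version of $h$ by splitting into real and imaginary parts and applying the measurable-section trick: since $\mathrm{Re}\, h$, $\mathrm{Im}\, h$ each satisfy $|\mathrm{Re}\, h(x,\cdot)|,|\mathrm{Im}\, h(x,\cdot)|\le 1$ and integrate against $|K|(x,\cdot)$ to give $\mathrm{Re}\, K(x,\cdot)$, $\mathrm{Im}\, K(x,\cdot)$, one obtains joint measurability by running Radon--Nikod\'ym uniformly in $x$ on the product $\sigma$-algebra using Fubini-type arguments applied to the already-constructed positive measure $|\pi|_{0}$.

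\emph{Step 3: define $\pi$ and verify the conclusions.} Set
\[
\pi(E) := \int_{E} h(x,y)\,d|\pi|_{0}(x,y), \qquad E \in \mathfrak{A}\otimes\mathfrak{B}.
\]
This is a complex measure because $h$ is bounded and $|\pi|_{0}$-integrable. For a rectangle $E = A\times B$, Fubini on $|\pi|_{0}$ together with $\int_{B} h(x,y)\,d|K|(x,\cdot)(y) = K(x,B)$ gives the desired identity $\pi(A\times B) = \int_{A} K(x,B)\,d\mu(x)$. The integral formula for general $f$ then follows by the standard passage from characteristic functions of rectangles to step functions on $\mathfrak{A}\otimes\mathfrak{B}$ and finally to arbitrary $f$ by dominated convergence, applied separately to the real-positive, real-negative, imaginary-positive, and imaginary-negative parts of $hf$ under the positive measure $|\pi|_{0}$.

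\emph{Main obstacle.} The routine bookkeeping of Step 3 is harmless; the serious point is the measurability work in Step 1 and the production of a jointly measurable density $h$ in Step 2. Without these, the decomposition into four positive transition kernels (real/imaginary, positive/negative variations) cannot be carried out in a $\mu$-integrable fashion, and the $\sigma$-additivity of $\pi$ cannot be extracted from countable additivity of $|\pi|_{0}$. The cleanest way around this hurdle is exactly to avoid the pointwise Jordan decomposition of $K$ and instead work globally through the dominating positive kernel $|K|$, which is what the plan above does.
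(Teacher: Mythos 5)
The paper states this proposition without proof — it is offered only as the ``analogous result'' to the positive-kernel theorem quoted in the Reference Materials — so there is no in-paper argument to compare against and your proposal must stand on its own. Its architecture (dominate $K$ by the positive transition kernel $|K|$, invoke the positive-kernel theorem to obtain $|\pi|_{0}$, recover $\pi$ as a density against $|\pi|_{0}$) is sound, and Step 1 is fine: for countably generated $\mathfrak{B}$, which covers every case the paper uses, $|K|(x,B)$ is a countable supremum of measurable functions of $x$ and hence measurable. You are also right to flag that an integrability hypothesis $\int_{X}|K|(x,Y)\,d\mu(x)<\infty$ must be imposed for $\pi$ to be a (necessarily finite) complex measure; the paper omits it. (Your remark about ``gluing'' truncated pieces does not help here, since a countable union of finite pieces need not yield a finite complex measure, but this is minor.)

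The genuine gap is Step 2. Producing a \emph{jointly} measurable polar density $h$ ``by running Radon--Nikod\'ym uniformly in $x$ on the product $\sigma$-algebra using Fubini-type arguments applied to $|\pi|_{0}$'' is circular: to apply Radon--Nikod\'ym on $(X\times Y,\mathfrak{A}\otimes\mathfrak{B})$ against $|\pi|_{0}$ you must first exhibit a complex measure on the product space absolutely continuous with respect to $|\pi|_{0}$ --- and that complex measure is precisely the $\pi$ you are trying to construct. Fibrewise Radon--Nikod\'ym only determines $h(x,\cdot)$ up to a $|K|(x,\cdot)$-null set, with no control on measurability in $x$. Two repairs are available. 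First, build $h$ as a martingale limit: take an increasing sequence of finite algebras $\mathfrak{B}_{n}$ generating $\mathfrak{B}$, set $h_{n}(x,y):=\sum_{B}K(x,B)\,|K|(x,B)^{-1}\chi_{B}(y)$ over the atoms $B$ of $\mathfrak{B}_{n}$; each $h_{n}$ is jointly measurable, $h_{n}(x,\cdot)\to h(x,\cdot)$ $|K|(x,\cdot)$-a.e.\ for each $x$, and $h:=\limsup_{n}h_{n}$ serves. Second --- and contrary to your closing paragraph --- the fourfold Jordan decomposition route does go through and is simpler: measurability of $x\mapsto(\mathrm{Re}\,K)^{\pm}(x,B)$ follows from the very same countable-supremum device you use for $|K|$, namely $(\mathrm{Re}\,K)^{+}(x,B)=\sup_{A\in\mathcal{A}_{0}}\mathrm{Re}\,K(x,A\cap B)$ over a countable generating algebra $\mathcal{A}_{0}$, and the $\mu$-integrability requirement is the same $\int|K|(x,Y)\,d\mu<\infty$ in either route. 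One then applies the paper's positive-kernel theorem four times and takes the linear combination, with no jointly measurable density needed at all; the rectangle identity and the iterated-integral formula follow by linearity and the usual passage from indicators of rectangles through simple functions to general $f$ by dominated convergence, as in your Step 3.
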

\noindent
Armed with the above concepts, we thus introduce:
\begin{definition*}[Conditional Quasi-Probability Measure]
Let $(X, \mathfrak{A}, \nu)$ be a quasi-probability space, and let $\mathfrak{B} \subset \mathfrak{A}$ be a sub-$\sigma$-algebra such that the restriction $\nu|_{\mathfrak{B}}$ becomes a probability measure, and that $\nu(A|\mathfrak{B})$ is well-defined for all $A \in \mathfrak{A}$. We call a transition quasi-probability kernel $K : X \times \mathfrak{B} \to \mathbb{C}$ a conditional quasi-probability measure of the conditional quasi-probability $\nu(\,\cdot\, |\mathfrak{B})$, if the map $x \mapsto K(x, A)$ happens to be a representative of $\nu( A |\mathfrak{B})$ for all $A \in \mathfrak{A}$, namely
\begin{equation}
K(\,\cdot\,, A) \in \big[\, \nu(A |\mathfrak{B}) \,\big], \quad A \in \mathfrak{A}
\end{equation}
holds, where the brackets around an element denote its equivalence class. If such a transition quasi-probability kernel exists, we customarily denote it with the same notation $\nu(\,\cdot\, |\mathfrak{B})$, and its images are in turn interchangeably denoted by 
\begin{equation}
K(x,A) = \nu(A |\mathfrak{B})(x) = \nu_{x}(A), \quad x \in X, A \in \mathfrak{A},
\end{equation}
depending on the aesthetics of the formula in which it should appear.
\end{definition*}
\noindent
As above, such transition quasi-probability kernels do not exist in general, while the case $(X, \mathfrak{A}) = (\mathbb{R}^{n}, \mathfrak{B}^{n})$ is known to always admit it.
We then have:
\begin{proposition*}[Conditional Quasi-expectations as Averages over Conditional Quasi-probability Measures]
Let $(X, \mathfrak{A}, \nu)$ be a quasi-probability space, $\mathfrak{B} \subset \mathfrak{A}$ be a sub-$\sigma$-algebra such that the restriction $\nu|_{\mathfrak{B}}$ becomes a probability measure, and suppose that the conditional quasi-probability $\nu(\,\cdot\, |\mathfrak{B})$ has a conditional quasi-probability measure. Then, for every $\nu$-integrable function $f$, the map
\begin{equation}
x \mapsto \int_{X} f(x^{\prime})\ d\nu_{x}(x^{\prime}) \in \big[\, \mathbb{E}[f|\mathfrak{B}] \,\big]
\end{equation}
is a representative of the conditional quasi-expectation of $f$ given $\mathfrak{B}$.
\end{proposition*}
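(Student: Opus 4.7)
My plan is to proceed via the standard measure-theoretic ``layer-cake'' argument, establishing the claim first for characteristic functions, extending by linearity to simple functions, and then passing to general $\nu$-integrable $f$ via a Fubini-type identity supplied by Proposition~\ref{prop:ctk_to_cm}. The goal is to verify, for every $\nu$-integrable $f$, the Radon-Nikod\'ym relation
\[
\int_{B} f\, d\nu \;=\; \int_{B} \left( \int_{X} f(x^{\prime})\, d\nu_{x}(x^{\prime}) \right) d\nu|_{\mathfrak{B}}(x), \qquad B \in \mathfrak{B},
\]
since this precisely characterises $x \mapsto \int f\, d\nu_{x}$ as a representative of $\mathbb{E}[f|\mathfrak{B}] = d(f \odot \nu)/d\nu|_{\mathfrak{B}}$. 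Along the way, I must also verify that this map is $\mathfrak{B}$-measurable and $\nu|_{\mathfrak{B}}$-integrable, and that $f$ is $\nu_{x}$-integrable for $\nu|_{\mathfrak{B}}$-almost every $x$.

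For the base case $f = \chi_{A}$ with $A \in \mathfrak{A}$, the map $x \mapsto \int \chi_{A}\, d\nu_{x} = \nu_{x}(A) = K(x,A)$ is $\mathfrak{B}$-measurable by the very definition of a complex transition kernel, and by the hypothesis that $K$ is a conditional quasi-probability measure, it is a representative of $\nu(A|\mathfrak{B}) = \mathbb{E}[\chi_{A}|\mathfrak{B}]$. The defining Radon-Nikod\'ym identity for $\mathbb{E}[\chi_{A}|\mathfrak{B}]$ then yields the displayed relation for $f = \chi_{A}$. Extension to arbitrary complex-valued simple functions is immediate by linearity of both sides.

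For the passage to general $\nu$-integrable $f$, the natural vehicle is Proposition~\ref{prop:ctk_to_cm}. Applying it to the probability measure $\nu|_{\mathfrak{B}}$ on $(X,\mathfrak{B})$ together with the complex transition kernel $K(x,A) := \nu_{x}(A)$ from $(X,\mathfrak{B})$ into $(X,\mathfrak{A})$, I obtain a complex measure $\pi$ on the product space $(X \times X,\, \mathfrak{B} \otimes \mathfrak{A})$ whose $\mathfrak{A}$-marginal, by the base case, coincides with $\nu$ itself. For $\nu$-integrable $f$, the lifted function $\tilde{f}(x,x^{\prime}) := f(x^{\prime})$ should therefore be $\pi$-integrable on each strip $B \times X$. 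Integrating $\tilde{f}$ against $\pi$ in two ways via Proposition~\ref{prop:ctk_to_cm} --- once as $\int_{B \times X} \tilde{f}\, d\pi = \int_{B} f\, d\nu$ (obtained by first confirming the identity for $\tilde{f} = \chi_{B} \chi_{A}$ and passing to the limit through simple-function approximation in the $\mathfrak{A}$-coordinate), and once through the iterated integral $\int_{B}\left( \int f\, dK(x,\cdot) \right) d\nu|_{\mathfrak{B}}(x)$ --- yields the desired identity for every $B \in \mathfrak{B}$.

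The principal technical obstacle lies in this last step, since Proposition~\ref{prop:ctk_to_cm} involves \emph{complex} transition kernels for which monotone convergence is unavailable, and the $\pi$-integrability of $\tilde{f}$ together with the validity of the iterated-integral identity must be established from scratch. My strategy to handle this is a Jordan-type decomposition at the kernel level: write $K(x,\cdot) = \sum_{j=1}^{4} c_{j}\, K^{(j)}(x,\cdot)$ with $c_{j} \in \{+1,-1,+i,-i\}$ and fibre-wise finite positive kernels $K^{(j)}$ obtained from the variation of the complex measure $\nu_{x}$, after verifying that the $\mathfrak{B}$-measurability of $x \mapsto K^{(j)}(x,A)$ is preserved (this hinges on the usual fact that the positive and negative variations of a measurable family of signed measures depend measurably on the parameter). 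Each component, together with $\nu|_{\mathfrak{B}}$, then assembles a finite positive product measure $\pi^{(j)}$ to which classical Fubini--Tonelli applies; the $\nu$-integrability of $f$ forces $\pi^{(j)}$-integrability of $\tilde{f}$ through the base-case identity applied to $|f|$, and reassembly by $\mathbb{C}$-linearity delivers the desired identity and thereby the proposition.
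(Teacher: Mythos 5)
The paper itself states this proposition without proof (it is presented as the quasi-probabilistic analogue of the classical fact, which is likewise only quoted in the Reference Materials of Section~\ref{sec:app_ref} and \ref{sec:ps_II}), so your argument stands or falls on its own. Your architecture is the right one: characteristic functions handled by the kernel hypothesis and the defining Radon--Nikod\'ym relation, extension to simple functions by linearity, and then a Fubini-type passage to general $f$ via Proposition~\ref{prop:ctk_to_cm} after a Jordan decomposition of the kernel. You have also correctly located the difficulty in that last passage. However, the step you offer to close it is a non sequitur, and it is the crux of the entire proof.

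Concretely, ``the $\nu$-integrability of $f$ forces $\pi^{(j)}$-integrability of $\tilde f$ through the base-case identity applied to $|f|$'' does not follow. The base-case identity relates the complex measures $\nu$ and $\nu_x$; what the integrability transfer requires is a relation between their \emph{variations}, namely that the positive measure $A\mapsto\int_X|\nu_x|(A)\,d\nu|_{\mathfrak{B}}(x)$ is dominated by $|\nu|$ --- equivalently, that the $\mathfrak{A}$-marginal of $|\pi|$ is controlled by $|\nu|$, which is strictly stronger than your observation that the marginal of $\pi$ is $\nu$. The triangle inequality only yields the useless direction $|\nu|(A\cap B)\le\int_B|\nu_x|(A)\,d\nu|_{\mathfrak{B}}(x)$, because the variation is a supremum over partitions and the supremum does not commute with the integral over $x$ the way you need. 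Without the reverse inequality, $f\in\mathcal{L}^1(\nu)$ does not give $f\in\mathcal{L}^1(|\nu_x|)$ for $\nu|_{\mathfrak{B}}$-a.e.\ $x$, so the map $x\mapsto\int_X f\,d\nu_x$ in the statement is not even known to be defined, let alone a representative of $\mathbb{E}[f|\mathfrak{B}]$. The repair is available when $\mathfrak{A}$ is countably generated (which covers every case the paper uses): taking an increasing sequence $Q_n$ of common refinements of finite partitions from a countable generating algebra, one has $|\nu_x|(A)=\sup_n\sum_{A_j\in Q_n}|\nu_x(A_j\cap A)|$ as an increasing limit, and since $\int_B|\nu_x(A_j\cap A)|\,d\nu|_{\mathfrak{B}}(x)=\big|(\chi_{A_j\cap A}\odot\nu)|_{\mathfrak{B}}\big|(B)\le|\nu|(A_j\cap A\cap B)$, monotone convergence gives $\int_B|\nu_x|(A)\,d\nu|_{\mathfrak{B}}(x)\le|\nu|(A\cap B)$ (in fact equality); the same device delivers the measurability of $x\mapsto K^{(j)}(x,A)$ that you defer to a ``usual fact''. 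With this lemma supplied, the rest of your plan goes through.
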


\paragraph{Conditional Probability Distributions}
On a quasi-probability space $(X, \mathfrak{A}, \nu)$, suppose that a measurable map $f: (X,\mathfrak{A}) \to (X^{\prime},\mathfrak{A}^{\prime})$ is moreover given. Choosing a sub-$\sigma$-algebra $\mathfrak{B} \subset \mathfrak{A}$ such that the restriction $\nu|_{\mathfrak{B}}$ is a measure, this allows us to define an equivalence class of functions
\begin{equation}
\nu(f \in A^{\prime} | \mathfrak{B}) := \nu(f^{-1}(A^{\prime}) | \mathfrak{B})
\end{equation}
for all $A^{\prime} \in \mathfrak{A}^{\prime}$, whenever they are well-defined. Then, a transition quasi-probability kernel $K : X \times \mathfrak{A}^{\prime} \to [0,1]$ from $(X, \mathfrak{B})$ into $(X^{\prime},\mathfrak{A}^{\prime})$ satisfying
\begin{equation}
K(\,\cdot\, , A^{\prime}) \in \big[\, \mu(f \in A^{\prime} | \mathfrak{B}) \,\big], \quad A^{\prime} \in \mathfrak{A}^{\prime}
\end{equation}
is called a \emph{conditional quasi-probability distribution of $f$ given $\mathfrak{B}$}. Likewise, given another measurable map $g: (X,\mathfrak{A}) \to (Y^{\prime},\mathfrak{B}^{\prime})$ such that the restriction of $\nu$ over its initial $\sigma$-algebra $\mathcal{I}(g)$ is a measure, a transition probability kernel $K : X \times \mathfrak{A}^{\prime} \to [0,1]$ from $(X, \mathcal{I}(g))$ into $(X^{\prime},\mathfrak{A}^{\prime})$ satisfying
\begin{equation}
K(\,\cdot\, , A^{\prime}) \in \big[\, \mu(f \in A^{\prime} | \mathcal{I}(g)) \,\big], \quad A^{\prime} \in \mathfrak{A}^{\prime}
\end{equation}
is called a \emph{conditional quasi-probability distribution of $f$ given $g$}.

\subsubsection{Conditioned Measurement via the WV Distributions}

Now that we have prepared the necessary concepts and results, we may embark on our analysis. By measuring $B$ locally on the target system on one side, and a specific QJP distribution of $Q$ and $P$ locally on the meter system on the other, we obtain a quasi-probability distribution that describes the joint behaviour of the target system and the meter system. If, by haps ({\it e.g.} by choosing the right initial state $|\psi\rangle \in \mathcal{K}$) the QJP distribution of $Q$ and $P$ on the meter admits representation by a complex measure, the total quasi-probability distribution of both the target and the meter system also admits representation by a complex measure.
We thus generally define the CM scheme as an act of measuring the conditional quasi-probability distribution of the `joint outcome' of $Q$ and $P$ of the meter system given the outcome of the conditioning observable $B$ on the target system.

\paragraph{WV Distribution}

To demonstrate our point with an example, we shall from now on exclusively concentrate on the \emph{Wigner-Ville distribution} for our choice of the QJP distribution of $Q$ and $P$ for definiteness.  Since the choice $\psi \in L^{1}(\mathbb{R}) \cap L^{2}(\mathbb{R})$ of the initial meter state allows the WV distribution to be described by quasi-probability measures on $(\mathbb{R}^{2}, \mathfrak{B}^{2})$, we assume such special choice throughout this passage in order to remain contained in the framework of measure and integration theory (so that we may not have to deal with the theory of generalised functions).
In this subsection, the CM scheme is studied in view of the WV distribution. We first start by transcribing the CM scheme, which was initially introduced in terms of vectors and operators on Hilbert spaces, into the description by quasi-probability density functions on $\mathbb{R}^{2}$.  It is then found that the transcription allows a much simpler expression in view of its Fourier transform (rather than the WV distribution itself), in which the description of the meter system after the interaction is given precisely by the convolution of the configuration of both the meter and the target system, quite analogous to the case of the UM scheme that we have previously seen. This allows us to extract the information of the target system either by means of deconvolution discussed earlier (specifically by constructing an approximate identity on the meter system), or by probing the behaviour of the distribution around the origin $g=0$. We shall then investigate the properties of the information of the target system we have just obtained, and find that this qualifies as a `conditional quasi-probability distribution of $A$ given $B$', of which the average has a connection to the conditional quasi-expectation of $A$ given $B$ introduced earlier.

\paragraph{Preliminary Observation}
As a preliminary observation, we start by assuming that the target observable $A$ has a spectrum consisting of a finite number of eigenvalues $\sigma(A) = \{a_{1}, \dots, a_{N}\}$ so that its spectral decomposition reads \eqref{eq:spect_decomp_fin}. For the ease of arguments, we further assume that the conditioning observable $B$ also has a spectrum consisting of a finite number of eigenvalues $\sigma(B) = \{b_{1}, \dots, b_{M}\}$, that every eigenvalue of $B$ is degenerate, {\it i.e.}, $\Pi_{b_{m}} = |b_{m}\rangle\langle b_{m}|$ for some normalised vectors $|b_{m}\rangle \in \mathcal{H}$ for all $1 \leq m \leq M$, and moreover that $\|\Pi_{b}\phi\|^{2} \neq 0$ for all $b \in \sigma(B)$. As for the state preparation, let $\psi \in L^{1}(\mathbb{R}) \cap L^{2}(\mathbb{R})$ be a wave-function of the meter system with normalisation $\|\psi\|_{2} = 1$ so that the WV distribution can be represented by a quasi-probability density function, and we also let the initial selection $|\phi\rangle \in \mathcal{H}$ of the target system be normalised $\|\phi\| = 1$. 

\paragraph{Computing the WV Distribution}
We are now interested in measuring the WV distribution of the meter system given the outcome of $B$ on the target system.
Since both the measurements are local measurements performed on the respective systems, this should be statistically equivalent to measuring the WV distribution for the meter state
\begin{equation}\label{def:mixed_state_given_b}
\psi_{B=b}^{g} := \mathrm{Tr}_{\mathcal{H}}\left[ \left|\Psi_{B=b}^{g}\right\rangle\left\langle\Psi_{B=b}^{g} \right| \right]
\end{equation}
for all $b \in \sigma(B)$, where
\begin{equation}\label{def:state_given_b}
\left|\Psi_{B=b}^{g}\right\rangle := \frac{(\Pi_{b} \otimes I) |\Psi^{g}\rangle}{\| (\Pi_{b} \otimes I) \Psi^{g} \|^{2}}
\end{equation}
is the, so-to-speak, `conditional' meter state%
\footnote{
Naturally, \eqref{def:state_given_b} and \eqref{def:mixed_state_given_b} are nothing but the state one would expect when the ideal measurement of $B$ yielded the outcome $b \in \sigma(B)$, if one adopted the standard von Neumann projection postulate.
} given the outcome $b$ of $B$.
Our analysis thus reduces to computing the WV distribution of the density operator $\psi_{B=b}^{g}$ for each of the outcomes $b \in \sigma(B)$.
In our case, in which we assume that the eigenvalues of $B$ are all degenerate, the density operator \eqref{def:mixed_state_given_b} in fact becomes a pure state, of which representation by wave-functions reads
\begin{align}\label{def:pure_state_given_b}
\psi_{B=b}^{g}(x)
    &= \sum_{n = 1}^{N} \frac{\langle b, \Pi_{a_{n}}\phi\rangle}{\langle b, \phi\rangle}  \left( e^{-iga_{n}\hat{p}} \psi \right)(x) \nonumber \\
    &= \sum_{n=1}^{N} \frac{\langle b, \Pi_{a_{n}}\phi\rangle}{\langle b, \phi\rangle} \psi(x - ga_{n}) \nonumber \\
    &= \sum_{n=1}^{N} \frac{\langle b, E_{A}(\{a_{n}\})\phi\rangle}{\langle b, \phi\rangle} \int_{\mathbb{R}} \psi(x - ga)\ d\delta_{a_{n}}(a) \nonumber \\
    &= \int_{\mathbb{R}} \psi(x - ga)\ d\nu_{b}(a), \quad g \in \mathbb{R},
\end{align}
where we have used \eqref{eq:interaction_finite} to obtain the first equality. Here, we have introduced an auxiliary quasi-probability measure
\begin{equation}\label{def:aux_qpm_b_a}
\nu_{b}^{\phantom{*}}(\Delta) := \frac{\langle b, E_{A}(\Delta) \phi\rangle}{\langle b, \phi\rangle}, \quad b \in \sigma(B),\ \Delta \in \mathfrak{B},
\end{equation}
defined by means of the spectral measure $E_{A}$ of $A$, the initial state $|\phi\rangle \in \mathcal{H}$ of the target system, and the outcome $b \in \sigma(B)$ of the conditioning observable, and have used a result analogous to \eqref{eq:prob_meas_fin_spec_obs} in the last equality. One then finds that the WV distribution of the meter wave-function \eqref{def:pure_state_given_b} reads
\begin{align}\label{eq:wigner_post}
&W^{\psi^{g}_{B=b}}(x,p) \nonumber \\
    &:= \int_{\mathbb{R}} \left(\psi_{B=b}^{g}(x + y/2)\right)^{*} \psi_{B=b}^{g}(x - y/2) e^{ipy}\ dm(y) \nonumber \\
    &= \int_{\mathbb{R}} \left( \int_{\mathbb{R}} \psi^{*}(x - ga_{1}^{\prime} + y/2)\ d\nu_{b}^{*}(a_{1}^{\prime}) \right) \left( \int_{\mathbb{R}} \psi(x - ga_{2}^{\prime} - y/2)\ d\nu_{b}^{\phantom{*}}(a_{2}^{\prime}) \right) e^{ipy}\ dm(y) \nonumber \\
    &= \int_{\mathbb{R}} \left( \int_{\mathbb{R}^{2}} \psi^{*}(x - ga_{1}^{\prime} + y/2) \psi(x - ga_{2}^{\prime} - y/2)\ d\left(\nu_{b}^{*} \otimes \nu_{b}^{\phantom{*}} \right)(a_{1}^{\prime},a_{2}^{\prime}) \right) e^{ipy}\ dm(y),
\end{align}
where we have introduced the product measure $\nu_{b}^{*} \otimes \nu_{b}^{\phantom{*}}$ of $\nu_{b}^{\phantom{*}}$ and its complex conjugate%
\footnote{
For a pair of complex measures $\mu$ and $\nu$, by observing that $\mu \ll |\mu|$ and $\nu \ll |\nu|$, we define the \emph{product complex measure} of $\mu$ and $\nu$ by
\begin{equation}
\mu \otimes \nu := \left( \frac{d\mu}{d|\mu|} \cdot \frac{d\nu}{d|\nu|} \right) \odot \left( |\mu| \otimes |\nu| \right).
\end{equation}
By definition, product complex measures share properties similar to those of product measures \eqref{def:product_measure}, and an analogue of Fubini's theorem holds. Product complex measures reduce to the usual product measures when both of the components happen to be finite  measures.
}
in the last equality.
In order to gain a better view of our findings, let us now change variables according to the linear transformation,
\begin{align}\label{eq:lin_trans_T_2i}
\left(
 \begin{array}{l}
a_{1} \\
a_{2}
 \end{array}
\right)
= T
\left(
 \begin{array}{l}
 a_{1}^{\prime} \\
 a_{2}^{\prime}
 \end{array}
\right),
\qquad 
T := \left(
    \begin{array}{cc}
    1/2 & 1/2 \\
    -1/2 & 1/2
    \end{array}
    \right).
\end{align}
Since $T \in \mathrm{GL}(2,\mathbb{R})$ belongs to the general linear group, for indeed $\det T = 1/2$, note that this transformation is invertible, {\it i.e.}, it is a linear automorphism. We then introduce the quasi-probability measure
\begin{align}\label{def:quasi_prob_A}
\mu_{A}^{\phi}(\Delta|B = b)
    &:= T\left(\nu_{b}^{*} \otimes \nu_{b}^{\phantom{*}} \right)(\Delta) \nonumber \\
    &:= \left( \nu_{b}^{*} \otimes \nu_{b}^{\phantom{*}} \right)(T^{-1}\Delta), \quad \Delta \in \mathfrak{B}^{2},\ b \in \sigma(B)
\end{align}
defined on the measurable space $(\mathbb{R}^{2}, \mathfrak{B}^{2})$ as the image measure ({\it cf}.~see \eqref{def:Bildmass} for the definition of image measures) of the product complex measure with respect to the automorphism $T$ (we shall be shortly returning to the properties of the quasi-probability measure \eqref{def:quasi_prob_A} and the righteousness of its notation). Then, due to the change of variables formula \eqref{eq:Transformationsformel_Bildmass}, one may rewrite our previous findings \eqref{eq:wigner_post} by letting $a_{1}^{\prime} = a_{1} - a_{2}$ and $a_{2}^{\prime} = a_{1} + a_{2}$ as
\begin{align}
&W^{\psi^{g}_{B=b}}(x,p) \nonumber \\
    &= \int_{\mathbb{R}} \left( \int_{\mathbb{R}^{2}} \psi^{*}(x - g(a_{1} - a_{2}) + y/2) \right. \nonumber \\
        &\qquad \qquad \left. \phantom{\int_{\mathbb{R}}} \times \psi(x - g(a_{1} + a_{2}) - y/2)\ d\mu_{A}^{\phi}(a_{1}, a_{2}|B= b) \right) e^{ipy}\ dm(y) \nonumber \\
    &= \int_{\mathbb{R}^{2}} \left( \int_{\mathbb{R}} \psi^{*}((x - ga_{1}) + (y + 2ga_{2})/2)  \right. \nonumber \\
        &\qquad \qquad \left. \phantom{\int_{\mathbb{R}}} \times \psi((x - ga_{1}) - (y + 2ga_{2})/2) e^{ipy}\ dm(y) \right) d\mu_{A}^{\phi}(a_{1},a_{2}|B= b) \nonumber \\
    &= \int_{\mathbb{R}^{2}} e^{-i2ga_{2}p} \left( \int_{\mathbb{R}} \psi^{*}((x - ga_{1}) + y/2) \right. \nonumber \\
        & \phantom{e^{-i2ga_{2}p}} \qquad \qquad \left. \phantom{\int_{\mathbb{R}}} \times \psi((x - ga_{1}) - y/2) e^{ipy}\ dm(y) \right) d\mu_{A}^{\phi}(a_{1},a_{2}|B= b)\nonumber \\
    &= \int_{\mathbb{R}^{2}} e^{-i2ga_{2}p} W^{\psi}(x - ga_{1},p)\ d\mu_{A}^{\phi}(a_{1},a_{2}|B= b),
\end{align}
where the change of the order of the integration in the second equality is guaranteed by the Fubini's theorem.
For later convenience, we introduce the complex number $a \in \mathbb{C}$ defined by $a := a_{1} + i a_{2}$ by identifying $\mathbb{C} \cong \mathbb{R}^{2}$ in a usual manner, and write
\begin{equation}\label{eq:psm_in_W}
W^{\psi^{g}_{B=b}}(x,p) = \int_{\mathbb{C}} e^{-i2ga_{2}p} W^{\psi}(x - ga_{1},p)\ d\mu_{A}^{\phi}(a|B= b), \quad g \in \mathbb{R}.
\end{equation}
To sum up, here we have learned how the CM scheme may be rewritten in terms of quasi-probability measures, in which the  WV distribution of the initial meter wave-function $\psi$ is acted upon by the quasi-probability measure \eqref{def:quasi_prob_A} of the target system to yield the final WV distribution of the meter wave-function $\psi_{B=b}^{g}$.

\paragraph{Changing the Viewpoint through Fourier Transformation}

One finds below that the transcription \eqref{eq:psm_in_W} of the CM scheme admits a much simpler expression when described in terms of the inverse Fourier transform \eqref{eq:omega_norm} of the WV distribution,
rather than the WV distribution itself. Introducing the (yet to be normalised) function $\tilde{\omega}^{\psi^{g}_{B=b}}(x,y)$ uniquely specified through the relation\begin{equation}\label{def:W_to_w_g}
W^{\psi^{g}_{B=b}}(x,p) = \int_{\mathbb{R}} e^{-ipy} \tilde{\omega}^{\psi^{g}_{B=b}}(x,y)\ dm(y), \quad g \in \mathbb{R}
\end{equation}
({\it cf.}, injectivity of the Fourier transformation), the goal of this small paragraph is to show that our finding \eqref{eq:psm_in_W} is equivalent to
\begin{align}\label{eq:psm_in_V_pre}
\tilde{\omega}^{\psi^{g}_{B=b}}(x,y)
    &= \int_{\mathbb{C}} \tilde{\omega}^{\psi}(x - ga_{1},y - 2ga_{2})\ d\mu_{A}^{\phi}(a|B = b), \quad g \in \mathbb{R},
\end{align}
which is essentially nothing but the convolution of the initial profile $\tilde{\omega}^{\psi}$ of the meter state by that of the two-dimensional quasi-probability measure $\Delta \mapsto \mu_{A}^{\phi}(\Delta|B = b)$ scaled by $g$. If, moreover, the total integration of $\tilde{\omega}^{\psi}$ happens to be non-vanishing, we may renormalise both sides of the above equality to obtain
\begin{equation}\label{eq:psm_in_V}
\omega^{\psi^{g}_{B=b}}(x,y)
    = \int_{\mathbb{C}} \omega^{\psi}(x - ga_{1},y - 2ga_{2})\ d\mu_{A}^{\phi}(a|B = b), \quad g \in \mathbb{R},
\end{equation}
for later use%
\footnote{
Here, note that we have used the general property of convolutions
\begin{equation}
\int_{\mathbb{R}^{n}} (f \ast g)\ d\beta^{n} = \int_{\mathbb{R}^{n}} f\ d\beta^{n} \cdot \int_{\mathbb{R}^{n}} g\ d\beta^{n},\quad f, g \in L^{1}(\mathbb{R}^{n})
\end{equation}
regarding integration.
}.
Observe here the analogy between the unconditioned case \eqref{eq:outcome_prob01}: in both cases, the profile of the `output' of the meter is given by the convolution of the profile of the `input' of the meter and that of the target system scaled by $g$.

To verify our statement, one may simply repeat the previous argument to obtain the result directly, but it is actually easier to demonstrate that the Fourier transforms of the two sides of the above equality coincide. Indeed, the Fourier transform of the l.~h.~s. is nothing but $W^{\psi^{g}_{B=b}}$, which is just the definition \eqref{def:W_to_w_g}. As for the r.~h.~s., one has
\begin{align}
&\int_{\mathbb{R}} e^{-ipy} \left( \int_{\mathbb{C}} \tilde{\omega}^{\psi}(x - ga_{1},y - 2ga_{2})\ d\mu_{A}^{\phi}(a|B = b) \right) dm(y) \nonumber \\
     &\qquad = \int_{\mathbb{C}} \left( \int_{\mathbb{R}} e^{-ipy}\,  \tilde{\omega}^{\psi}(x - ga_{1},y - 2ga_{2})\ dm(y)\right) d\mu_{A}^{\phi}(a|B = b) \nonumber \\
    &\qquad = \int_{\mathbb{C}} e^{-i2ga_{2}p}\, W^{\psi}(x - ga_{1},p)\ d\mu_{A}^{\phi}(a|B = b), 
\end{align}
where the exchange of the order of the integration (the first equality) is guaranteed by Fubini's theorem, and the last equality is due to \eqref{eq:Fourier_translation}. Combining the above two results and by observing \eqref{eq:psm_in_W}, the injectivity of the Fourier transformation leads to the desired statement.
We emphasise again that both \eqref{eq:psm_in_W} and \eqref{eq:psm_in_V} represent the same contents seen from different viewpoints.

\subsection{Recovery of the Target Profile}

We are now interested in how one may recover the profile $\Delta \mapsto \mu_{A}^{\phi}(\Delta|B = b)$ of the target system for each $b \in \sigma(B)$ through CM scheme. 
As one may expect, the procedure essentially goes analogously to that of the recovery of the probability measure $\mu_{A}^{\phi}$ in the case of the UM scheme demonstrated in Section~\ref{sec:ups_II_ups}. 
Recalling the techniques employed there, and by introducing the rescaling
\begin{equation}
\upsilon^{\psi}(x,y) := 2^{-1} \omega^{\psi}(x,2y),
\end{equation}
for the ease of discussion,
one may readily rewrite \eqref{eq:psm_in_V} into
\begin{equation}\label{eq:psm_outcome_prob01}
\upsilon^{\psi^{g}_{B=b}}
    = \upsilon^{\psi} \ast \left( \mu_{A}^{\phi}(\, \cdot \, | B=b) \right)_{g}, \quad g \in \mathbb{R},
\end{equation}
or equivalently
\begin{equation}\label{eq:psm_outcome_prob02}
\upsilon^{\psi^{g}_{B=b}}_{g^{-1}}
    = \upsilon^{\psi}_{g^{-1}} \ast \mu_{A}^{\phi}(\, \cdot \, | B=b), \quad g \in \mathbb{R}^{\times}
\end{equation}
where the subscript on the respective quasi-probability measures/density functions denotes the scaling \eqref{def:measure_scaling} and \eqref{def:function_scaling}, just as we have done for the case of the UM scheme (see \eqref{eq:outcome_prob01} and \eqref{eq:outcome_prob02}). In parallel to the case of the UM case, these two expressions \eqref{eq:psm_outcome_prob01} and \eqref{eq:psm_outcome_prob02} correspond to the manner in which one combines the interaction parameter \eqref{eq:combining_of_the_interaction}, where the former corresponds to the scaling of the target observable $A \to gA$, whereas the latter corresponds to the scaling of the pair of the meter observables $\{Q, P\} \to \{g^{-1}Q, gP\}$ ({\it cf.}  \eqref{eq:outcome_prob_mod_01} and \eqref{eq:outcome_prob_mod_02}).

\subsubsection{Strong Conditioned Measurement}

We now intend to recover the quasi-probability measure $\Delta \mapsto \mu_{A}^{\phi}(\Delta|B = b)$ by making use of the latter expression \eqref{eq:psm_outcome_prob02}. The idea and the procedure are essentially the same as those we have employed in the unconditional case, namely, we manipulate both the interaction parameter $g \in \mathbb{R}^{\times}$ and the initial meter state $\psi \in L^{1}(\mathbb{R}) \cap L^{2}(\mathbb{R})$ so that the scaling $\upsilon^{\psi}_{g^{-1}}$ of the inverse Fourier transform of the WV distribution tends towards the delta measure $\delta_{0}$ centred at the origin $0 \in \mathbb{R}^{2}$.

\paragraph{Recovery of the Conditional Quasi-joint-probability}
For the same reason discussed in Section~\ref{sec:Strong_Unconditioned_Measurement}, we assume throughout this passage:
\begin{itemize}
\item The target observable $A$ admits description by density functions.
\item The total integration of $\tilde{\omega}^{\psi}$ is non-vanishing.
\end{itemize}
The first condition guarantees that the quasi-probability measure $\Delta \mapsto \mu_{A}^{\phi}(\Delta|B = b)$, $\Delta \in \mathfrak{B}^{2}$ is absolutely continuous for all $b \in \sigma(B)$, of which density we shall write
\begin{equation}
\rho_{A}^{\phi}(\,\cdot\, |B = b) := \frac{d\mu_{A}^{\phi}(\,\cdot\,|B = b)}{d\beta^{2}}.
\end{equation}
The last condition is necessary in order to assure the well-definedness of $\omega^{\psi}$.
Then, one sees from an analogous argument that we have previously made in Section~\ref{sec:Strong_Unconditioned_Measurement} that, if one adjusts the pair of $g \in \mathbb{R}^{\times}$ and $\psi \in L^{1}(\mathbb{R}) \cap L^{2}(\mathbb{R})$ so that $\upsilon^{\psi}_{g^{-1}}$ makes itself an approximate identity in $L^{1}(\mathbb{R}^{2})$, one may let the product of the convolution ({\it i.e.}, the `outcome') converge towards the desired target
\begin{align}
\upsilon^{\psi^{g}_{B=b}}_{g^{-1}} \ \to \ \rho_{A}^{\phi}(\,\cdot\, |B = b)
\end{align}
with respect to the $L^{1}$-norm. A typical way to construct such an approximate identity is to start by preparing a compactly supported wave-function $\psi$, which automatically guarantees $\psi \in L^{1}(\mathbb{R}) \cap L^{2}(\mathbb{R})$, and to consider a family $\{\psi_{(h)}\}_{h > 0}$ of the initial meter state defined as in \eqref{eq:approx_ident_state_Sch}. One then finds
\begin{align}
\tilde{\omega}^{\psi_{(h)}}(x,y)
    &:= \psi^{*}_{(h)}(x - y/2) \psi_{(h)}(x + y/2) \nonumber \\
    &= |h|^{-1}\psi^{*}\left(\frac{x - y/2}{h}\right) \psi\left(\frac{x + y/2}{h}\right) \nonumber \\
    &= |h| \cdot \tilde{\omega}^{\psi}_{h}(x,y),
\end{align}
and hence by observing that the above equality has total integration of $|h| \int_{\mathbb{R}^{2}} \tilde{\omega}^{\psi} dm_{2}$, its normalisation becomes
\begin{equation}
\omega^{\psi_{(h)}}(x,y) = \omega^{\psi}_{h}(x,y).
\end{equation}
This should further lead to
\begin{equation}
\upsilon^{\psi_{(h)}}_{g^{-1}} = \upsilon^{\psi}_{hg^{-1}},
\end{equation}
when scaled by $g^{-1}$. With the initial $\omega^{\psi}$ (or equivalently $\upsilon^{\psi}$) being compactly supported, one then sees that this indeed makes an example of an approximate identity, and we may thus achieve our objective by either narrowing the wave-function $h \to 0$, by intensifying the interaction $g^{-1} \to 0$ ($g \to \pm \infty$) or by appropriately balancing both manoeuvres and letting $hg^{-1} \to 0$ altogether.

\subsubsection{Weak Conditioned Measurement}

We shall next investigate how the map
\begin{equation}\label{eq:parameter_to_omega}
g \mapsto  \upsilon^{\psi^{g}_{B=b}}(x,y)
\end{equation}
behaves locally around $g=0$, and discuss what information of the target configuration one might reveal through it. In parallel to the case of the UM scheme discussed in Section~\ref{sec:ups_II_wups}, one finds below that the information of the configuration of the target system is encoded into the differential coefficients of the above map at $g=0$, and that by knowing all the higher-order derivatives, one may fully recover the quasi-probability measure $\Delta \mapsto \mu_{A}^{\phi}(\Delta|B = b)$ of our interest.

\paragraph{Main Objective}

Throughout the following passage, we assume the following.
\begin{itemize}
\item The quasi-probability measure $\Delta \mapsto \mu_{A}^{\phi}(\Delta|B = b)$ has a compact support.
\item The total integration of $\tilde{\omega}^{\psi}$ is non-vanishing, and its normalisation belongs to the Schwartz space $\omega^{\psi} \in \mathscr{S}(\mathbb{R}^{2})$.
\end{itemize}
These requirements are imposed primarily for the same reason as we have previously discussed in analysing the weak UM scheme in Section~\ref{sec:ups_II_wups} (which, in short, is to say that we do not wish to get involved in the theory of generalised functions). A sufficient condition for the first and second assumptions would be to respectively require that the spectral measure $E_{A}$ be compactly supported,  and that $\psi \in \mathscr{S}(\mathbb{R})$.
Under such conditions, the main objective of this passage is to demonstrate the following Proposition:
\begin{proposition}[Weak Conditioned Measurement]\label{prop:WPSM}
Under the above conditions, the map \eqref{eq:parameter_to_omega} is arbitrarily many times strongly differentiable on all the real line $\mathbb{R}$, and its $n$th derivatives at the origin $g=0$ reads
\begin{align}
\left. \frac{d^{n}}{dg^{n}} \upsilon^{\psi^{g}_{B=b}} \right|_{g=0} =  \sum_{|\gamma| = n} \mathbb{E} \left[a^{\gamma} ; \mu_{A}^{\phi}(\,\cdot\,|B = b) \right] \cdot (- D)^{\gamma}\upsilon^{\psi}, \quad n \in \mathbb{N}_{0}.
\end{align}
Here, $\gamma = (\gamma_{1}, \gamma_{2}) \in \mathbb{N}_{0}^{2}$ is a multi-index introduced in \eqref{def:use_alpha}, and the `quasi-moments' under the quasi-probability measure $\mu_{A}^{\phi}(\,\cdot\,|B = b)$ is defined by
\begin{equation}\label{def:moments_qjpm}
\mathbb{E} \left[a^{\gamma} ; \mu_{A}^{\phi}(\,\cdot\,|B = b) \right] := \int_{\mathbb{C}} a_{1}^{\gamma_{1}}a_{2}^{\gamma_{2}}\ d\mu_{A}^{\phi}(a|B = b),
\end{equation}
in its explicit form, where we understand $a = a_{1} + i a_{2} \in \mathbb{C}$.
\end{proposition}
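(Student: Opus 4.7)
The natural starting point is the identity \eqref{eq:psm_in_V} for $\omega^{\psi^g_{B=b}}$. Applying the rescaling $\upsilon^{\psi}(x,y)=\tfrac{1}{2}\omega^{\psi}(x,2y)$ to both sides turns it into
\begin{equation*}
\upsilon^{\psi^{g}_{B=b}}(x,y)=\int_{\mathbb{R}^{2}}\upsilon^{\psi}\bigl(x-ga_{1},\,y-ga_{2}\bigr)\,d\mu_{A}^{\phi}(a|B=b),
\end{equation*}
where we identify $a=a_{1}+ia_{2}\in\mathbb{C}$ with $(a_{1},a_{2})\in\mathbb{R}^{2}$. This places us in exactly the same structural situation as the weak UM analysis of Proposition~\ref{prop:WUCM_II}: the output is the integral of a translated Schwartz function against a compactly supported (now complex) measure. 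My plan is therefore to mimic the induction of that proof in two variables, with the single caveat that $\mu_{A}^{\phi}(\cdot|B=b)$ is only a quasi-probability measure, so all integrability estimates are carried out against its total variation $|\mu_{A}^{\phi}(\cdot|B=b)|$, which is a finite compactly supported (positive) measure and thus possesses all moments.

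The inductive step proceeds as follows. Assume the statement is established through order $n$, and set $\tilde{\upsilon}:=(-D)^{\gamma}\upsilon^{\psi}\in\mathscr{S}(\mathbb{R}^{2})$ for any $\gamma$ with $|\gamma|=n$; then the integrand at the $n$th stage has the same form as above with $\upsilon^{\psi}$ replaced by $\tilde{\upsilon}$ and $d\mu_{A}^{\phi}$ replaced by $a^{\gamma}\odot d\mu_{A}^{\phi}$, both of which satisfy the standing hypotheses of compact support and Schwartz regularity. Pointwise differentiability of $g\mapsto\tilde{\upsilon}(x-ga_{1},y-ga_{2})$ together with the dominated convergence theorem (the dominating function being supplied by compactness of $\mathrm{supp}\,|\mu_{A}^{\phi}(\cdot|B=b)|$ and Schwartz decay of the partials of $\tilde{\upsilon}$) gives a candidate derivative
\begin{equation*}
\frac{d}{dg}\int_{\mathbb{R}^{2}}\tilde{\upsilon}(x-ga_{1},y-ga_{2})\,d\mu_{A}^{\phi}(a|B=b)=-\int_{\mathbb{R}^{2}}\bigl(a_{1}\partial_{1}+a_{2}\partial_{2}\bigr)\tilde{\upsilon}(x-ga_{1},y-ga_{2})\,d\mu_{A}^{\phi}(a|B=b),
\end{equation*}
to which we then upgrade pointwise differentiability to strong $L^{1}$-differentiability by invoking the mean-value theorem together with the classical $L^{1}$-continuity of translation $\lim_{a\to 0}\|\tau_{a}f-f\|_{1}=0$ for $f\in\mathscr{S}(\mathbb{R}^{2})\subset L^{1}(\mathbb{R}^{2})$, exactly as in the one-dimensional argument around \eqref{ineq:approx_ident} and in the proof of Proposition~\ref{prop:WUCM_II}.

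Setting $g=0$ in the identity obtained after $n$ iterations and expanding $(a_{1}\partial_{1}+a_{2}\partial_{2})^{n}$ by the multinomial theorem collects the derivatives into the claimed sum over $|\gamma|=n$ with weights $\mathbb{E}[a^{\gamma};\mu_{A}^{\phi}(\cdot|B=b)]$, the finiteness of which is guaranteed by compactness of $\mathrm{supp}\,|\mu_{A}^{\phi}(\cdot|B=b)|$. The main technical obstacle, in my view, is strictly bookkeeping: one has to verify that passing from a probability measure to a complex quasi-probability measure breaks nothing in the Fubini/dominated-convergence manipulations, which is ensured precisely because the total variation of a finite complex measure is a finite positive measure. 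Once this is in hand, the proof of Proposition~\ref{prop:WUCM_II} transfers almost verbatim to the two-dimensional setting at hand.
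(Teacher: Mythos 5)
Your proposal follows essentially the same route as the paper's own proof: rescale \eqref{eq:psm_in_V} into the convolution identity $\upsilon^{\psi^{g}_{B=b}} = \upsilon^{\psi} \ast \bigl( \mu_{A}^{\phi}(\,\cdot\,|B=b) \bigr)_{g}$, then run the two-variable analogue of the induction from Proposition~\ref{prop:WUCM_II}, upgrading pointwise to strong $L^{1}$-differentiability via dominated convergence and the $L^{1}$-continuity of translations, and finally set $g=0$. Your explicit remark that all dominating estimates should be taken against the total variation of the complex measure is a sensible clarification of a point the paper leaves implicit, but it does not change the argument.
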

\begin{proof}
Since the assumptions and reasonings are essentially the same as those provided for the unconditioned counterpart, we shall avoid reiteration and provide a rough sketch of the proof. In order to avoid clumsiness of notation, we write $\upsilon := \upsilon^{\psi}$, $\upsilon[g] := \upsilon^{\psi^{g}_{B=b}}$ and $\mu := \mu_{A}^{\phi}(\,\cdot\,|B = b)$ for simplicity, and denote by $\upsilon^{(n)}[g]$ the $n$th derivative of the map $g \mapsto \upsilon[g]$.

We first prove that the $n$th derivative of $\upsilon[g]$ reads
\begin{equation}\label{eq:n_wcd}
\upsilon^{(n)}[g](x) = \sum_{|\gamma| = n} \int_{\mathbb{R}^{2}} (- D)^{\gamma}\upsilon(x - ga) a^{\gamma}\  d\mu(a).
\end{equation}
As above, we argue by mathematical induction. The case $n=0$ is trivial. Suppose that the statement is true for $n \in \mathbb{N}_{0}$. Then, one may compute its point-wise derivative as
\begin{align}
\frac{d}{dg} \upsilon^{(n)}[g](x)
    &= \sum_{|\gamma| = n} \int_{\mathbb{R}^{2}} \left( \frac{d}{dg} (- D)^{\gamma}\upsilon(x - ga) \right) a^{\gamma}\  d\mu(a) \nonumber \\
    &= \sum_{|\gamma| = n} \int_{\mathbb{R}^{2}} \left(  \sum_{i=1}^{2} a_{i} (-D)_{i}(- D)^{\gamma}\upsilon(x - ga) \right) a^{\gamma}\  d\mu(a) \nonumber \\
    &= \sum_{|\gamma| = n + 1} \int_{\mathbb{R}^{2}} (- D)^{\gamma}\upsilon(x - ga) a^{\gamma}\  d\mu(a),
\end{align}
and subsequently prove its strong differentiability by employing the same technique as above. This completes our first step of the proof.

Now, by taking $g=0$ of \eqref{eq:n_wcd}, we observe
\begin{align}
\upsilon^{(n)}[0](x)
    &= \sum_{|\gamma| = n} \int_{\mathbb{R}^{2}} (- D)^{\gamma}\upsilon(x - 0a) a^{\gamma}\  d\mu(a) \nonumber \\
    &= \sum_{|\gamma| = n} \int_{\mathbb{R}^{2}} a^{\gamma}\  d\mu(a) \cdot (- D)^{\gamma}\upsilon(x) \nonumber \\
    &= \sum_{|\gamma| = n} \mathbb{E}[a^{\gamma}; \nu] \cdot (- D)^{\gamma}\upsilon(x),
\end{align}
which completes our proof.
\end{proof}
\noindent
One then immediately obtains the following corollary by applying the Stone-Weierstra{\ss} approximation theorem and the Riesz-Markov-Kakutani representation theorem.
\begin{corollary}[Recovery of the Target Profile by Weak Conditioned Measurement]
The weak CM scheme ({\it i.e.}, the knowledge of all the `quasi-moments' \eqref{def:moments_qjpm}) allows us to uniquely specify the quasi-probability measure $\mu_{A}^{\phi}(\,\cdot\,|B = b)$ of our interest.
\end{corollary}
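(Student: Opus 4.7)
The plan is to adapt the argument already given for the moment problem in the unconditioned weak measurement to the complex (quasi-measure) setting, replacing positivity by the boundedness properties of complex measures on compact spaces. The assumption that $\mu := \mu_A^{\phi}(\,\cdot\,|B=b)$ is compactly supported (inherited from the hypothesis of Proposition~\ref{prop:WPSM}) is crucial, and the class within which uniqueness is claimed should be that of compactly supported quasi-probability measures on $(\mathbb{R}^2,\mathfrak{B}^2) \cong (\mathbb{C},\mathfrak{B}^2)$ (the analogue of what was done for the genuine probability-measure case earlier).

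First, I would fix a compact set $K \subset \mathbb{R}^{2}$ containing the supports of both $\mu$ and of any competing compactly supported quasi-probability measure $\mu'$ with identical quasi-moments $\mathbb{E}[a^{\gamma};\mu'] = \mathbb{E}[a^{\gamma};\mu]$ for all $\gamma \in \mathbb{N}_0^2$. The essential observation is then that knowledge of the quasi-moments $\{\mathbb{E}[a^{\gamma};\mu]\}_{\gamma \in \mathbb{N}_0^2}$ is equivalent to the knowledge of the integral $\int_{K} p\,d\mu$ for every polynomial $p$ in the real variables $(a_1,a_2)$, and moreover equivalent to $\int_{K} p\,d\mu = \int_{K} p\,d\mu'$ for all such $p$.

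Next, since $K$ is compact we have $C(K) = C_0(K)$, and by the Stone--Weierstra\ss{} approximation theorem the algebra $P(K)$ of polynomials is dense in $C(K)$ with respect to the supremum norm $\|\cdot\|_\infty$. The integration functionals
\begin{equation}
I_\mu, I_{\mu'} : C(K) \to \mathbb{C}, \qquad I_\mu(f) := \int_K f\,d\mu, \qquad I_{\mu'}(f) := \int_K f\,d\mu',
\end{equation}
are continuous $\mathbb{C}$-linear functionals, with operator norms bounded by the total variations $\|\mu\|$ and $\|\mu'\|$ respectively. Agreement of $I_\mu$ and $I_{\mu'}$ on the dense subspace $P(K)$ therefore extends by continuity to the whole of $C(K) = C_0(K)$.

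Finally, I would invoke the Riesz--Markov--Kakutani representation theorem (in the form reviewed in the Reference Materials, with $\mathbb{R}^n$ replaced by the locally compact space $K$) to identify $C_0(K)' \cong \mathbf{M}_{\mathbb{C}}(\mathfrak{B}|_K)$ isometrically. Under this correspondence $I_\mu \leftrightarrow \mu$ and $I_{\mu'} \leftrightarrow \mu'$, whence $I_\mu = I_{\mu'}$ forces $\mu = \mu'$ as complex measures on $(K,\mathfrak{B}|_K)$, and hence as complex measures on $(\mathbb{R}^2,\mathfrak{B}^2)$ by trivial extension outside $K$. The main technical point to pay attention to is the extension step in passing from polynomials to all of $C(K)$: it rests decisively on the compactness of $K$ (which forces polynomials to be bounded and the functionals $I_\mu,I_{\mu'}$ to be continuous), so without the compact-support hypothesis on the candidate measure the argument breaks down -- a familiar subtlety from the classical moment problem already alluded to in the unconditioned case.
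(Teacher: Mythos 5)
Your argument is correct and is essentially the paper's own: the corollary is obtained exactly as in the unconditioned case, by noting that the quasi-moments determine the integrals of all polynomials on a compact set $K$ containing the supports, invoking the Stone--Weierstra{\ss} theorem to pass from $P(K)$ to $C(K)=C_{0}(K)$ by continuity of the integration functionals, and then using the Riesz--Markov--Kakutani identification $C_{0}(K)'\cong \mathbf{M}_{\mathbb{C}}(\mathfrak{B}|_{K})$ to conclude $\mu=\mu'$. Your added care about where compactness enters (boundedness of polynomials and continuity of the functionals) is exactly the subtlety the paper flags in the classical moment problem, so nothing is missing.
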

\noindent
Compare these results to those obtained in the case of the weak UM scheme described in Section~\ref{sec:ups_II_wups}.

\subsection{Profile of the Target System}

We have so far investigated how the CM scheme can be transcribed into the language of conditional quasi-probabilities, rather than in terms of mere conditional expectations. As a result, we found that the measurement outcome after the interaction incorporates two components: one being the profile of the meter system in the form of the WV distribution and the other being the that of the target system in the form of the quasi-probability measure $\mu_{A}^{\phi}(\, \cdot \, | B=b)$ defined in \eqref{def:quasi_prob_A}. Specifically, in view of the (scaled) inverse Fourier transform of the WV distribution, we found that the manner in which the two components interact with each other admits a simple description by convolution \eqref{eq:psm_in_V}, which is quite analogous to the unconditioned case. Based on our findings, we have thus analysed how one may recover the profile $\mu_{A}^{\phi}(\, \cdot \, | B=b)$ by means of both the strong and weak CM schemes, whose procedures are also quite analogous to the unconditioned counterpart. We are now interested in the properties of the quasi-probability measure $\mu_{A}^{\phi}(\, \cdot \, | B=b)$ we have obtained, which should be expected to convey some information of the target system.

\paragraph{Quasi-joint-probability Distribution of a Pair of Observables}

By means of either the strong or weak CM scheme, we have so far obtained the family of quasi-probability measures $\mu_{A}^{\phi}(\, \cdot \, | B=b)$ for all $b \in \sigma(B)$.
Allowing it to extend on the whole real line, one may construct a complex transition kernel by
\begin{equation}\label{def:quasi_trans_kern}
\mu_{A}^{\phi}(\Delta_{A} | B=b) :=
    \begin{cases}
        \mu_{A}^{\phi}(\Delta_{A} | B=b), & (b \in \sigma(B)) \\
        \text{indefinite}, & (b \notin \sigma(B))
    \end{cases}
    , \quad \Delta_{A} \in \mathfrak{B}(\mathbb{C})
\end{equation}
from the space $(\mathbb{R}, \mathfrak{B}^{1})$ of the measurement outcomes of $B$ into $(\mathbb{C}, \mathfrak{B}(\mathbb{C}))$. For definiteness, we assign to each $b \notin \sigma(B)$ any quasi-probability measure, so that \eqref{def:quasi_trans_kern} defines a \emph{transitional quasi-probability kernel} as a whole.
This allows us to construct a quasi-probability measure $\mu_{A,B}^{\phi}$ on the product space $(\mathbb{C} \times \mathbb{R}, \mathfrak{B}(\mathbb{C}) \otimes \mathfrak{B}^{1})$, by combining the transition quasi-probability kernel \eqref{def:quasi_trans_kern} and the probability measure $\mu_{B}^{\phi}$, that satisfies
\begin{equation}\label{def:quasi_joint_prob_meas}
\mu_{A,B}^{\phi}(\Delta_{A} \times \Delta_{B}) = \int_{\Delta_{B}} \mu_{A}^{\phi}(\Delta_{A} | B=b)\ d\mu_{B}^{\phi}(b), \quad \Delta_{A} \in \mathfrak{B}(\mathbb{C}),\ \Delta_{B} \in \mathfrak{B}^{1},
\end{equation}
whose existence is guaranteed by Proposition~\ref{prop:ctk_to_cm}.
The target of our analysis in this passage is the quasi-probability measure \eqref{def:quasi_joint_prob_meas}. As one may expect from the notation employed, we shall shortly see that this qualifies as a QJP distribution of the target observable $A$ and the conditioning observable $B$.
\begin{proposition}[Quasi-joint-probability Distribution]
Under the definitions above, the quasi-probability measure $\mu_{A,B}^{\phi}$ qualifies as a QJP distribution of $A$ and $B$ in the sense of \eqref{def:qjpm}, namely
\begin{align}
\begin{split}
\mu_{A,B}^{\phi}(\Delta_{A} \times \mathbb{R}) &= \mu_{A}^{\phi}(\Delta_{A}), \quad \Delta_{A} \in \mathfrak{B}(\mathbb{C}), \\
\mu_{A,B}^{\phi}(\mathbb{C} \times \Delta_{B}) &= \mu_{B}^{\phi}(\Delta_{B}), \quad \Delta_{B} \in \mathfrak{B}(\mathbb{R})
\end{split}
\end{align}
holds. Here, $\mu_{A}^{\phi}$ denotes the probability measure on $(\mathbb{C}, \mathfrak{B}(\mathbb{C}))$ generated by the two-dimensional spectral measure associated to $A$ understood as a normal operator, whereas $\mu_{B}^{\phi}$ denotes the probability measure on $(\mathbb{R}, \mathfrak{B})$ generated by the one-dimensional spectral measure associated to the self-adjoint operator $B$.
\end{proposition}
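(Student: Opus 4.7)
The plan is to verify the two marginal identities separately, handling the $B$-marginal first as it is essentially immediate and then reducing the $A$-marginal to a computation on rectangles $\Delta_{A} = J \times K$ with $J,K \in \mathfrak{B}$ under the identification $\mathbb{C} \cong \mathbb{R}^{2}$ through $a = a_{1} + i a_{2}$. For the $B$-marginal, observe that from the defining formula $\mu_{A}^{\phi}(\Delta|B=b) = (\nu_{b}^{*} \otimes \nu_{b})(T^{-1}\Delta)$ one has in particular
\begin{equation*}
\mu_{A}^{\phi}(\mathbb{C}|B=b) = (\nu_{b}^{*} \otimes \nu_{b})(\mathbb{R}^{2}) = \nu_{b}^{*}(\mathbb{R})\,\nu_{b}(\mathbb{R}) = |\nu_{b}(\mathbb{R})|^{2} = 1,
\end{equation*}
since $\nu_{b}(\mathbb{R}) = \langle b, E_{A}(\mathbb{R})\phi \rangle / \langle b, \phi \rangle = 1$. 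Substituting into the defining integral \eqref{def:quasi_joint_prob_meas}, one immediately gets $\mu_{A,B}^{\phi}(\mathbb{C} \times \Delta_{B}) = \int_{\Delta_{B}} 1\, d\mu_{B}^{\phi}(b) = \mu_{B}^{\phi}(\Delta_{B})$, as desired.

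For the $A$-marginal, the key step is to unfold the definition on a rectangle $\Delta_{A} = J \times K$ and then use that $T^{-1}$ sends $(a_{1}',a_{2}')$ to $(a_{1},a_{2}) = ((a_{1}'+a_{2}')/2,(-a_{1}'+a_{2}')/2)$. The preimage $T^{-1}(J \times K)$ is then the set of pairs $(a_{1}',a_{2}') \in \mathbb{R}^{2}$ with $(a_{1}'+a_{2}')/2 \in J$ and $(-a_{1}'+a_{2}')/2 \in K$. Carrying out the integration over $b$ with respect to $\mu_{B}^{\phi}$, the denominators $\langle \phi, b\rangle \langle b, \phi\rangle = |\langle b, \phi\rangle|^{2}$ appearing in $\nu_{b}^{*}$ and $\nu_{b}$ cancel precisely with the weight $|\langle b,\phi\rangle|^{2}/\|\phi\|^{2}$ coming from $d\mu_{B}^{\phi}(b)$, leaving
\begin{equation*}
\mu_{A,B}^{\phi}(\Delta_{A} \times \mathbb{R}) = \tfrac{1}{\|\phi\|^{2}}\!\!\int_{T^{-1}\Delta_{A}}\!\!\! d\!\!\left(\sum_{b \in \sigma(B)} \langle \phi, E_{A}(\,\cdot\,) b\rangle \,\langle b, E_{A}(\,\cdot\,) \phi\rangle \right)\!(a_{1}',a_{2}').
\end{equation*}

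The heart of the argument is then to apply the completeness relation $\sum_{b} |b\rangle\langle b| = I$ (or, in the general case, $\int dE_{B}(b) = I$) together with the multiplicative property $E_{A}(\Delta_{1})E_{A}(\Delta_{2}) = E_{A}(\Delta_{1} \cap \Delta_{2})$ of the spectral projections. Summing out $b$ collapses the two-variable measure into the one-variable spectral measure of $A$ concentrated on the diagonal $\{a_{1}' = a_{2}'\} \subset \mathbb{R}^{2}$. On this diagonal, $T^{-1}$ sends $a_{1}' = a_{2}' = a$ to $(a_{1},a_{2}) = (a,0)$, so the indicator $\mathbb{1}_{\{(-a_{1}'+a_{2}')/2 \in K\}}$ becomes $\mathbb{1}_{0 \in K}$ while $\mathbb{1}_{\{(a_{1}'+a_{2}')/2 \in J\}}$ becomes $\mathbb{1}_{a \in J}$. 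The right-hand side therefore reduces to $\mathbb{1}_{0 \in K} \cdot \mu_{A}^{\phi}(J)$, which is exactly the value that $\mu_{A}^{\phi}$ (viewed as a probability measure on $\mathbb{C}$ supported on the real axis via the normal-operator spectral calculus) assigns to the rectangle $J \times K$. Extension from rectangles to arbitrary Borel sets in $\mathfrak{B}(\mathbb{C})$ follows from the standard uniqueness of measures coinciding on a $\pi$-system that generates the $\sigma$-algebra.

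The main obstacle anticipated is the passage from the preliminary finite-cardinality, non-degenerate setting in which the explicit expressions for $\nu_{b}$ and $\mu_{A}^{\phi}(\,\cdot\,|B=b)$ were derived, to a general self-adjoint $B$ with arbitrary spectral type: in particular one must interpret $\sum_{b}|b\rangle\langle b|$ as the full spectral resolution $\int dE_{B}(b) = I$, and justify the interchange of $b$-integration with the $T^{-1}$-pullback and with the computation of the spectral product measure. Once this is handled by the standard approximation of bounded Borel functions by simple functions under the spectral theorem, the symbolic manipulations above become rigorous and the identification of the $A$-marginal proceeds verbatim.
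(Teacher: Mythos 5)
Your proof is correct and follows essentially the same route as the paper's: the $B$-marginal via the normalisation $\mu_{A}^{\phi}(\mathbb{C}\,|\,B=b)=1$, and the $A$-marginal by cancelling the $|\langle b,\phi\rangle|^{2}$ weights against $d\mu_{B}^{\phi}$, collapsing the sum over $b$ with the completeness relation and the multiplicativity $E_{A}(\Delta_{1})E_{A}(\Delta_{2})=E_{A}(\Delta_{1}\cap\Delta_{2})$ onto the diagonal measure, pushing forward by $T$ so that $(a,a)\mapsto(a,0)$ lands on the real axis, and extending from rectangles by uniqueness on a generating $\pi$-system — the paper merely packages the diagonal collapse in Riesz--Markov--Kakutani language via the functional $I_{\mathrm{diag}}$. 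One cosmetic slip: since the paper sets $(a_{1},a_{2})=T(a_{1}',a_{2}')$, it is $T$ (not $T^{-1}$) that sends $(a_{1}',a_{2}')$ to $\bigl((a_{1}'+a_{2}')/2,\,(-a_{1}'+a_{2}')/2\bigr)$, though your preimage description $T^{-1}(J\times K)$ and the subsequent computation are consistent with the correct convention.
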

\begin{proof}
We start by demonstrating that the marginal of the quasi-probability measure $\mu_{A,B}^{\phi}$ of the first term coincides with the probability measure $\mu_{A}^{\phi}$ on $(\mathbb{C}, \mathfrak{B}(\mathbb{C}))$ generated by the spectral measure of $A$ (seen as a normal operator). To this end, we first observe
\begin{align}\label{eq:marginal_A}
\mu_{A,B}^{\phi}(\Delta_{A} \times \mathbb{R})
    &= \int_{\mathbb{R}} \mu_{A}^{\phi}(\Delta_{A} | B=b)\ d\mu_{B}^{\phi}(b) \nonumber \\
    &= \sum_{b \in \sigma(B)} \left( \nu_{b}^{*} \otimes \nu_{b}^{\phantom{*}} \right)(T^{-1}\Delta_{A}) \cdot |\langle b , \phi \rangle|^{2},
\end{align}
where we have used \eqref{def:quasi_prob_A} in the last equality.

Now, in order to proceed further, we then maintain that the measure
\begin{equation}
\Delta \mapsto \mu(\Delta) := \sum_{b \in \sigma(B)} \left( \nu_{b}^{*} \otimes \nu_{b}^{\phantom{*}} \right)(\Delta) \cdot |\langle b , \phi \rangle|^{2}, \quad \Delta \in \mathfrak{B}(\mathbb{C})
\end{equation}
is essentially the same object as the continuous $\mathbb{C}$-linear map defined by
\begin{equation}
I_{\mathrm{diag}}(f) := \int_{\mathbb{R}} f(a,a)\ d\mu_{A}^{\phi}(a), \quad f \in C_{0}(\mathbb{C})
\end{equation}
in the sense of the Riesz-Markov-Kakutani representation theorem.
The proof can be carried out in several ways, but for the sake of simplicity, we rather take an elementary approach. 
Observing that any two measures on a product space $(X \times Y,\, \mathfrak{A} \otimes \mathfrak{B})$ coincides with each other if they coincide on the subset $\mathfrak{A} \ast \mathfrak{B} \subset \mathfrak{A} \otimes \mathfrak{B}$ (see \eqref{def:product_set_of_sigma_algebras} for the definition), one proceeds as
\begin{align}
\int_{\mathbb{C}}\chi_{\Delta_{1}}(a_{1})\chi_{\Delta_{2}}(a_{2})\ d\mu(a)
    &= \mu(\Delta_{1} \times \Delta_{2})\nonumber \\
    &= \sum_{b \in \sigma(B)} \left( \nu_{b}^{*} \otimes \nu_{b}^{\phantom{*}} \right)(\Delta_{1} \times \Delta_{2}) \cdot |\langle b , \phi \rangle|^{2} \nonumber \\
    &= \sum_{b \in \sigma(B)} \langle \phi, E_{A}(\Delta_{1}) b\rangle \langle b, E_{A}(\Delta_{2}) \phi \rangle \nonumber \\
    &= \langle \phi, E_{A}(\Delta_{1})E_{A}(\Delta_{2}) \phi \rangle \nonumber \\
    &= \int_{\mathbb{R}} \chi_{\Delta_{1}}(a)\chi_{\Delta_{2}}(a)\ d\mu_{A}^{\phi}(a) \nonumber \\
    &= I_{\mathrm{diag}}(\chi_{\Delta_{1}}\chi_{\Delta_{2}}),
\end{align}
which proves
\begin{equation}
\mu \cong I_{\mathrm{diag}}.
\end{equation}

Armed with the findings, we return to our original problem \eqref{eq:marginal_A} and finally obtain
\begin{align}
\sum_{b \in \sigma(B)} \left( \nu_{b}^{*} \otimes \nu_{b}^{\phantom{*}} \right)(T^{-1}\Delta_{A}) \cdot |\langle b , \phi \rangle|^{2}
    &= I_{\mathrm{diag}}(\chi_{(T^{-1}\Delta_{A})}) \nonumber \\
    &= \int_{\mathbb{R}} \chi_{(T^{-1}\Delta_{A})}(a,a)\ d\mu_{A}^{\phi}(a) \nonumber \\
    &= \int_{\mathbb{R}} \chi_{\Delta_{A}}(a,0)\ d\mu_{A}^{\phi}(a) \nonumber \\
    &= \langle \phi, E_{A, 0}(\Delta_{A}) \phi \rangle \nonumber \\
    &= \mu_{A}^{\phi}(\Delta_{A}), \quad \Delta_{A} \in \mathfrak{B}(\mathbb{C}),
\end{align}
where $E_{A, 0}$ denotes the product spectral measure of the one-dimensional spectral measure $E_{A}$ of $A$ (as a self-adjoint operator) and that of the $0$ operator $E_{0}$ ({\it i.e.,} the `delta spectral measure' \eqref{def:delta_spectral_meas_0} centred at the origin), and the last equality is due to the observation that the two-dimensional spectral measure $\tilde{E}_{A}$ of $A$ as a normal operator coincides with the product spectral measure $\tilde{E}_{A} = E_{A, 0}$ introduced above. This completes our proof for the marginal of the first term.

It now remains to compute the marginal of $\mu_{A,B}^{\phi}$ of the second term, which one carries out as
\begin{align}
\mu_{A,B}^{\phi}(\mathbb{C} \times \Delta_{B})
    &= \int_{\mathbb{C} \times \mathbb{R}} \chi_{\Delta_{B}}(b)\ d\mu_{A,B}^{\phi}(a,b) \nonumber \\
    &= \int_{\mathbb{C} \times \mathbb{R}} \chi_{\Delta_{B}}(b) \mu_{A}^{\phi}(\mathbb{C} | B=b)\ d\mu_{B}^{\phi}(b) \nonumber \\
    &= \int_{\mathbb{C} \times \mathbb{R}} \chi_{\Delta_{B}}(b)\ d\mu_{B}^{\phi}(b) \nonumber \\
    &= \mu_{B}^{\phi}(\Delta_{B}), \quad \Delta_{B} \in \mathfrak{B},
\end{align}
where the second equality is due to the definition of $\mu_{A,B}^{\phi}$, and the third equality is due to the fact that $\mu_{A}^{\phi}(\mathbb{C} | B=b) = 1$ is normalised to unity ({\it i.e.}, a quasi-probability measure) for all $b \in \sigma(B)$.
\end{proof}
\noindent
As for the relation between the QJP distribution $\mu_{A,B}^{\phi}$ and the transition quasi-probability kernel $(b,\Delta_{A}) \mapsto \mu_{A}(\Delta_{A}|B=b)$, one immediately has the following corollary by construction.
\begin{corollary}
The transition quasi-probability kernel \eqref{def:quasi_trans_kern} is a conditional quasi-probability distribution of $A$ given $B$ under the QJP distribution $\mu_{A,B}^{\phi}$.
\end{corollary}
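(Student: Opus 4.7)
The plan is to unpack the definition of a conditional quasi-probability distribution given in Section~\ref{sec:CM_II_cond_over_QP} and verify the two defining clauses for the kernel $(b, \Delta_A) \mapsto \mu_A^\phi(\Delta_A | B = b)$, viewed---through composition with the projection $\pi_B : \mathbb{C} \times \mathbb{R} \to \mathbb{R}$---as a map $\tilde{K}((a,b), \Delta_A) := \mu_A^\phi(\Delta_A | B = b)$ on $(\mathbb{C} \times \mathbb{R}) \times \mathfrak{B}(\mathbb{C})$. The first clause, that $\tilde{K}$ is a transition quasi-probability kernel from $(\mathbb{C} \times \mathbb{R}, \mathcal{I}(\pi_B))$ into $(\mathbb{C}, \mathfrak{B}(\mathbb{C}))$, is immediate: $\mathcal{I}(\pi_B)$-measurability in $(a,b)$ follows from the factorisation through $\pi_B$, while for each fixed $b$ the set function $\Delta_A \mapsto \mu_A^\phi(\Delta_A | B = b)$ is a quasi-probability measure by its construction in \eqref{def:quasi_prob_A} (its normalisation to unity being built into \eqref{def:aux_qpm_b_a}).

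It remains to check the second clause, namely that for every $\Delta_A \in \mathfrak{B}(\mathbb{C})$ the function $\tilde{K}(\,\cdot\,, \Delta_A)$ represents the conditional quasi-probability $\mu_{A,B}^\phi(\pi_A \in \Delta_A | \mathcal{I}(\pi_B))$. By the Radon-Nikod\'ym characterisation \eqref{def:cond_quasi-prob}, this reduces to the integral identity
\begin{equation}
\int_C \chi_{\pi_A^{-1}(\Delta_A)}\, d\mu_{A,B}^\phi = \int_C \mu_A^\phi(\Delta_A | B = b)\, d\mu_{A,B}^\phi\big|_{\mathcal{I}(\pi_B)}(a,b), \quad C \in \mathcal{I}(\pi_B).
\end{equation}
Since $\mathcal{I}(\pi_B) = \{\mathbb{C} \times \Delta_B : \Delta_B \in \mathfrak{B}^1\}$, and since the marginal property established in the preceding proposition yields $\mu_{A,B}^\phi|_{\mathcal{I}(\pi_B)}(\mathbb{C} \times \Delta_B) = \mu_B^\phi(\Delta_B)$, both sides of the displayed equation collapse to $\int_{\Delta_B} \mu_A^\phi(\Delta_A | B = b)\, d\mu_B^\phi(b)$ by the very construction \eqref{def:quasi_joint_prob_meas} of $\mu_{A,B}^\phi$ from the transition kernel.

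The only conceptual subtlety---and the main point to guard against when working with complex rather than positive measures---is the absolute continuity $\chi_{\pi_A^{-1}(\Delta_A)} \odot \mu_{A,B}^\phi \ll \mu_{A,B}^\phi|_{\mathcal{I}(\pi_B)}$ on $\mathcal{I}(\pi_B)$, a hypothesis tacitly required for the relevant Radon-Nikod\'ym derivative to exist. Luckily this too is immediate from \eqref{def:quasi_joint_prob_meas}: if $\mu_B^\phi(\Delta_B) = 0$ then $\mu_{A,B}^\phi(\Delta_A \times \Delta_B) = 0$ for every $\Delta_A$, so the restriction of $\chi_{\pi_A^{-1}(\Delta_A)} \odot \mu_{A,B}^\phi$ to $\mathcal{I}(\pi_B)$ vanishes on $\mu_B^\phi$-null sets. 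The demonstration thus reduces entirely to the definitional bookkeeping outlined above.
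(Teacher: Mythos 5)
Your proof is correct and follows the same route the paper intends: the paper dismisses this corollary with ``one immediately has the following corollary by construction,'' and your argument is precisely the careful unpacking of that remark --- verifying the kernel property, the positivity of $\mu_{A,B}^{\phi}|_{\mathcal{I}(\pi_B)}$ via the marginal identity, the absolute continuity needed for the Radon--Nikod\'ym derivative, and the defining integral identity on $\mathcal{I}(\pi_B)$, all of which collapse to the construction \eqref{def:quasi_joint_prob_meas}. Your explicit attention to the absolute-continuity hypothesis (which the definition of conditional quasi-probability tacitly requires for complex measures) is a welcome addition the paper leaves unstated.
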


\paragraph{Conditional Quasi-expectation of $A$ given $B$}

It is now tempting to investigate how the `conditional average' of the QJP distribution $\mu_{A,B}^{\phi}$ relates to the conditional quasi-expectation $\mathbb{E}^{\alpha}[A|B;\phi]$ we have introduced earlier in \eqref{def:cond_quasi-exp_alpha}.
\begin{proposition}[Conditional Average of the Quasi-joint-probability Distribution]
Under the definitions above, the conditional average of $A$ given $B$ under the QJP distribution $\mu_{A,B}^{\phi}$ reads
\begin{align}\label{eq:cond_quas-exp_as_conditional_average}
\int_{\mathbb{C}} a\ d\mu_{A}^{\phi}(a | B=b)
    &= \mathbb{E}^{i}[A | B = b ;\phi],
\end{align}
where the r.~h.~s. is the member of the complex-parametrised sub-family of conditional quasi-expectations of $A$ given $B$ introduced in \eqref{def:cond_quasi-exp_alpha} for the purely imaginary choice $\alpha = i$ of the parameter.
\end{proposition}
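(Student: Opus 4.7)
The plan is to compute the left-hand side of \eqref{eq:cond_quas-exp_as_conditional_average} by direct unpacking of the definition of $\mu_{A}^{\phi}(\,\cdot\,|B = b)$ as an image measure, followed by Fubini's theorem for product complex measures. First, I would invoke the defining relation \eqref{def:quasi_prob_A} that exhibits $\mu_{A}^{\phi}(\,\cdot\,|B = b)$ as the push-forward of $\nu_{b}^{*} \otimes \nu_{b}^{\phantom{*}}$ under the linear automorphism $T$, and apply the change-of-variables formula \eqref{eq:Transformationsformel_Bildmass} to obtain
\begin{equation*}
\int_{\mathbb{C}} a\, d\mu_{A}^{\phi}(a|B=b) = \int_{\mathbb{R}^{2}} \frac{(1-i)a_{1}' + (1+i)a_{2}'}{2}\, d(\nu_{b}^{*} \otimes \nu_{b}^{\phantom{*}})(a_{1}', a_{2}'),
\end{equation*}
where the integrand is read off from the explicit form of $T$ in \eqref{eq:lin_trans_T_2i} by identifying $a = a_{1} + i a_{2} \in \mathbb{C}$ and computing $T(a_{1}', a_{2}')_{1} + i T(a_{1}', a_{2}')_{2}$.

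Next, I would apply Fubini's theorem for product complex measures to separate the integral into marginal contributions. Observing that the completeness relation $E_{A}(\mathbb{R}) = I$ forces $\nu_{b}^{\phantom{*}}(\mathbb{R}) = \langle b, \phi\rangle/\langle b, \phi\rangle = 1$, and likewise $\nu_{b}^{*}(\mathbb{R}) = 1$, the expression reduces to
\begin{equation*}
\frac{1-i}{2} \int_{\mathbb{R}} a_{1}'\, d\nu_{b}^{*}(a_{1}') + \frac{1+i}{2} \int_{\mathbb{R}} a_{2}'\, d\nu_{b}^{\phantom{*}}(a_{2}').
\end{equation*}
In the finite non-degenerate setup, the second marginal integral evaluates to $\sum_{n} a_{n} \langle b, \Pi_{a_{n}}\phi\rangle/\langle b, \phi\rangle = \langle b, A\phi\rangle/\langle b, \phi\rangle$, which by \eqref{eq:quasi-cond_exp_fin} is precisely $\mathbb{E}[A|B=b;\phi]$; by complex conjugation the first marginal integral evaluates to $\mathbb{E}[A|B=b;\phi]^{*}$.

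Combining the two pieces yields
\begin{equation*}
\int_{\mathbb{C}} a\, d\mu_{A}^{\phi}(a|B=b) = \frac{1+i}{2}\mathbb{E}[A|B=b;\phi] + \frac{1-i}{2}\mathbb{E}[A|B=b;\phi]^{*},
\end{equation*}
which upon comparison with \eqref{def:cond_quasi-exp_alpha} is precisely $\mathbb{E}^{i}[A|B=b;\phi]$, as claimed. There is no substantive obstacle in this argument since the calculation is essentially a bookkeeping exercise on the image measure; the one point deserving care is correctly tracking the complex-conjugate marginal $\nu_{b}^{*}$ and recognising that the coefficients $(1 \pm i)/2$ produced by the transformation $T$ are exactly those that identify the $\alpha = i$ member of the family \eqref{def:cond_quasi-exp_alpha} — a pleasant reflection of the fact that the factor of $i$ appearing in the definition $a := a_{1} + i a_{2}$ of the complex coordinate is precisely what selects this particular member of the complex-parametrised family of conditional quasi-expectations.
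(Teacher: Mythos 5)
Your proof is correct and follows essentially the same route as the paper's: both unpack $\mu_{A}^{\phi}(\,\cdot\,|B=b)$ as the image measure of $\nu_{b}^{*}\otimes\nu_{b}^{\phantom{*}}$ under $T$, apply the change-of-variables formula to rewrite the integrand as $\tfrac{(1-i)a_{1}'+(1+i)a_{2}'}{2}$, and then evaluate the two marginal integrals (using $\nu_{b}(\mathbb{R})=\nu_{b}^{*}(\mathbb{R})=1$) to land on $\tfrac{1+i}{2}\,\mathbb{E}[A|B=b;\phi]+\tfrac{1-i}{2}\,\mathbb{E}[A|B=b;\phi]^{*}=\mathbb{E}^{i}[A|B=b;\phi]$. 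The only cosmetic difference is that the paper compresses the Fubini step into a relabelling of variables, whereas you spell out the marginalisation explicitly.
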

\begin{proof}
For the demonstration, let $b \in \sigma(B)$.
One then has
\begin{align}
\int_{\mathbb{C}} a\ d\mu_{A}^{\phi}(a | B=b)
    &= \int_{\mathbb{R}^{2}} (a_{1} + i a_{2})\ dT\left( \nu_{b}^{*} \otimes \nu_{b}^{\phantom{*}} \right)(a_{1}, a_{2}) \nonumber \\
    &= \int_{\mathbb{C}} \frac{a + a^{\prime}}{2} + i\frac{a - a^{\prime}}{2} \ d\left( \nu_{b}^{*} \otimes \nu_{b}^{\phantom{*}} \right)(a^{\prime}, a) \nonumber \\
    &= \int_{\mathbb{C}} \frac{1 + i}{2} a + \frac{1 - i}{2} a^{\prime} \ d\left( \nu_{b}^{*} \otimes \nu_{b}^{\phantom{*}} \right)(a, a^{\prime}) \nonumber \\
    &=  \frac{1 + i}{2} \cdot \frac{\langle b, A \phi \rangle}{\langle b, \phi \rangle} + \frac{1 - i}{2} \cdot \left( \frac{\langle b, A \phi \rangle}{\langle b, \phi \rangle} \right)^{*} \nonumber \\
    &= \mathbb{E}^{i}[A | B = b ;\phi],
\end{align}
where the second equality is due to the change of variables formula \eqref{eq:Transformationsformel_Bildmass} for image measures. 
\end{proof}

\paragraph{Obtaining Conditional Quasi-probability Distribution by Conditioned Measurement}
We now realise that the CM scheme, in view of conditional quasi-probabilities, can be regarded as a \emph{method of obtaining conditional quasi-probability distributions of the target observable $A$ given the conditioning observable $B$}, and that it implies the existence of QJP distributions of a pair of (generally not necessarily simultaneously measurable) quantum observables lying underneath. Moreover, we have seen a connection between the concept of conditional quasi-expectations and the `conditional average' of the QJP distributions, which is reminiscent of the familiar relation between classical conditional expectations and conditional average of probability measures.
While we have conducted an analysis for the special case in which both $A$ and $B$ happen to possess spectra of finite cardinalities (and that $B$ is degenerate), we note that one may suitably generalise the results obtained here by introducing appropriate mathematical tools and some little more advanced mathematical languages.

\newpage
\section{Quasi-probabilities of Quantum Observables}\label{sec:qp_qo}

By studying the both the UM and CM schemes in depth throughout the preceding four sections, we have so far naturally arrived,  by a purely bottom-up construction, at the concept of \emph{quasi-joint-probability} (QJP) of an arbitrary pair of quantum observables. While such an operational way of demonstration has its own merit of being solid and down to earth, it has an apparent downside in that the line of argument lacks transparency and that the whole structure may become obscure on occasions. In this section, we will be conducting a top-down study on the topic 
as a complement to the analyses made in the preceding sections.

\paragraph{Organisation of this Section}
In this section, we first devote several pages to introducing some mathematical tools for our analysis as usual. We then propose a general prescription for the construction of QJP distributions of a given pair of quantum observables, and observe their basic properties. Since it is difficult to perform a general analysis on the whole class of all possible candidates of QJP distributions with full mathematical rigour due to the limited framework and tools available, for our demonstration we shall mostly concentrate on a special sub-family of such distributions parametrised by a single complex number, hopefully without loss of too much essence. We finally close this section by observing where the bottom-up line of discussion performed in Section~\ref{sec:ps_II} fits in this more general framework.

\subsection{Reference Materials}

As usual, we first prepare some necessary mathematical tools for reference. As a generalisation to those defined on integrable functions, we now introduce Fourier transforms of complex measures.

\subsubsection{Fourier Transform of Complex Measures}

Analogous to the manner in which we have defined Fourier transforms of elements of $L^{1}(\mathbb{R}^{n})$ (namely, the density functions), one may define Fourier transforms of complex measures. Given a complex measure $\mu \in \mathbf{M}_{\mathbb{C}}(\mathfrak{A})$ on a measurable space $(X, \mathfrak{A})$, we define the Fourier transform and the inverse Fourier transform of $\mu$, respectively by the functions
\begin{align}
\hat{\mu}(q) &:= \int_{X} e^{-i\langle q, x \rangle}\ d\mu(x), \label{def:FT_cm} \\
\check{\mu}(q) &:= \int_{X} e^{i\langle q, x \rangle}\ d\mu(x), \label{def:IFT_cm}
\end{align}
where $\langle q, x \rangle := \sum_{k=1}^{n} q_{k}x_{k}$ denotes the scalar product on $\mathbb{R}^{n}$ as usual. Note that the functions $\hat{\mu}$, $\check{\mu}$ are well-defined, for indeed $|\hat{\mu}(q)| \leq \int_{X} |e^{-i\langle q, x \rangle}|\, d|\mu|(x) = \|\mu\| < \infty$ for all $q \in \mathbb{R}^{n}$, where $|\mu|$ and $\|\mu\|$ are respectively the variation and the total variation of $\mu$ (a similar evaluation holds for $\check{\mu}$).

\paragraph{Basic Properties}
To see how this newly introduced definition of Fourier transforms relates to that of integrable functions introduced earlier, let $L^{1}(\mathfrak{B}^{n}) \subset \mathbf{M}_{\mathbb{C}}(\mathfrak{B}^{n})$ be the sub-algebra of absolutely continuous complex measures with respect to $m_{n}$, where $m_{n}$ denotes the renormalised $n$-dimensional Lebesgue-Borel measure on $(\mathbb{R}^{n},\mathfrak{B}^{n})$ defined in \eqref{def:renormalised_LB_measure}. Choosing $\mu \in L^{1}(\mathfrak{B}^{n})$ and letting $\rho := d\mu/dm_{n}$ be the Radon-Nikod{\'y}m derivative of $\mu$, one finds by a direct application of \eqref{eq:Mass_mit_Dichte_AC} that
\begin{align}\label{eq:FT_measure_vs_density}
\hat{\mu}(q)
    &:= \int_{\mathbb{R}^{n}} e^{-i\langle q, x \rangle}\ d\mu(x) \nonumber \\
    &= \int_{\mathbb{R}^{n}} e^{-i\langle q, x \rangle} \rho(x)\ dm_{n}(x)
    =: \hat{\rho}(q),
\end{align}
holds as expected. An analogous relation holds for the inverse Fourier transform as well. The $\mathbb{C}$-linear map $\mathscr{F}$ that maps a complex measure into its Fourier transform is called the \emph{Fourier transformation}. In parallel to that defined for integrable functions, the Fourier transformation on the measure algebra is injective, {\it i.e.}, $\hat{\mu} = \hat{\nu}$ implies $\mu = \nu$. For $\mu, \nu \in \mathbf{M}_{\mathbb{C}}(\mathfrak{B}^{n})$, the
properties
\begin{align}
\widehat{(\mu \ast \nu)} &= \hat{\mu} \cdot \hat{\nu}, \\
\widehat{(\mu_{t})}(q) &= \hat{\mu}(tq), \quad t \neq 0, \\
\widehat{(\tau_{a}\mu)}(t) &= e^{i\langle a,x\rangle}\hat{\mu}(t),\quad a \in \mathbb{R},
\end{align}
are basic, in which one sees how the Fourier transformation behaves under the convolution \eqref{def:convolution_measure}, scaling \eqref{def:measure_scaling}, and translation 
\begin{equation}\label{def:translation_measure}
(\tau_{a}\mu)(B) := \mu(B + a), \quad a \in \mathbb{R}^{n},
\end{equation}
respectively.

\paragraph{Linear Transformation}
Let $T$ be a linear operator on $\mathbb{R}^{n}$ ({\it i.e.}, an $n \times n$ real matrix), and let $\mu \in \mathbf{M}_{\mathbb{C}}(\mathfrak{B}^{n})$ be a complex measure. We then define the \emph{linear transform}
\begin{equation}\label{def:lin_trans_measure}
\mu_{T}(B) := \mu(T^{-1}B), \quad B \in \mathfrak{B}^{n},
\end{equation}
of $\mu$ with respect to $T$ by its image measure. By definition, one readily finds the validity of the product rule
\begin{equation}\label{eq:product_rule_image_measure_lin}
(\mu_{T})_{S} = \mu_{(ST)}
\end{equation}
for a pair of linear operators $S$ and $T$ on $\mathbb{R}^{n}$, and that
\begin{equation}
\int_{X} f(x)\ d\mu_{T}(x)
    = \int_{X} f(Tx)\ d\mu(x)
\end{equation}
by the change of variables formula \eqref{eq:Transformationsformel_Bildmass}, whenever the integration exists. Note that the familiar scaling $\mu_{t}$, $t \neq \mathbb{R}$ defined in \eqref{def:measure_scaling}, and the translation $\tau_{a}\mu$, $a \in \mathbb{R}^{n}$ defined in \eqref{def:translation_measure} are respectively special cases of the linear transform of $\mu$ with respect to $T = tI$ and $T = I - a$, where $I$ denotes the identity operator. In such a cases, note also that the linear operators involved are automorphisms, hence members of the general linear group $\mathrm{GL}(n;\mathbb{R})$.
In relation to the Fourier transformation, one finds
\begin{align}\label{eq:FT_and_Lin_Trans}
(\mathscr{F}\mu_{T})(q)
    &:= \int_{\mathbb{R}^{n}} e^{-i\langle q,x\rangle}\ d\mu_{T}(x) \nonumber \\
    &= \int_{\mathbb{R}^{n}} e^{-i\langle q,Tx\rangle}\ d\mu(x) \nonumber \\
    &= \int_{\mathbb{R}^{n}} e^{-i\langle T^{*}q,x\rangle}\ d\mu(x) \nonumber \\
    &= (\mathscr{F}\mu)(T^{*}q),
\end{align}
where $T^{*}$ denotes the adjoint (in this case, the transpose $T^{*} = T^{t}$) of the Matrix $T$.

\paragraph{Complex Conjugate}

We finally review how the Fourier transform behaves regarding the operation of taking the complex conjugate of a complex measure.  To this end, let $\mu$ be a complex measure on $(\mathbb{R}^{n}, \mathfrak{B}^{n})$, and define the complex conjugate of $\mu$ by $\mu^{*}(\Delta) := \mu(\Delta)^{*}$, $\Delta \in \mathfrak{B}^{n}$ in a natural manner. One then readily finds
\begin{align}\label{eq:FT_cc_Measure}
(\mathscr{F}\mu^{*})(q)
    &:= \int_{\mathbb{R}^{n}} e^{-i\langle q,x\rangle}\ d\mu^{*}(x) \nonumber \\
    &= \left( \int_{\mathbb{R}^{n}} e^{-i\langle -q,x\rangle}\ d\mu(x) \right)^{*} \nonumber \\
    &= (\mathscr{F}\mu)^{*}(-q) \nonumber \\
    &= (\mathscr{F}\mu)^{\dagger}(q),
\end{align}
where $f^{\dagger}(x) := f^{*}(-x)$ denotes the \emph{involution} of a function $f$.

\paragraph{Differentiation}

We finally make a brief note on the basic results regarding differentiability and derivatives of a Fourier transform of a complex measure at the origin.
\begin{lemma}\label{lem:FT_CM_diff_exp}
Let $\mu$ be a complex measure on $(\mathbb{R}^{n},\mathfrak{B}^{n})$, and let $\gamma \in \mathbb{N}_{0}^{n}$ be a multi-index. If the integration
\begin{equation}
\int_{\mathbb{R}^{n}} x^{\gamma^{\prime}}\ d\mu(x), 
\end{equation}
exists for all $0 \leq \gamma^{\prime} \leq \gamma$, then the derivative $D^{\gamma}\hat{\mu}$ of the Fourier transform of $\mu$ exists at the origin, in which case the derivative reads
\begin{equation}
\left(D^{\gamma}\hat{\mu}\right)(0) = (-i)^{|\gamma|}\int_{\mathbb{R}^{n}} x^{\gamma}\ d\mu(x).
\end{equation}
\end{lemma}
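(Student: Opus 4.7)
The plan is to proceed by induction on the order $|\gamma|$, after first strengthening the claim to assert existence of $D^{\gamma'}\hat{\mu}(q)$ at every $q \in \mathbb{R}^{n}$, not merely at the origin, for all multi-indices $\gamma' \leq \gamma$, with the value
\begin{equation*}
(D^{\gamma'}\hat{\mu})(q) = (-i)^{|\gamma'|}\int_{\mathbb{R}^{n}} x^{\gamma'} e^{-i\langle q, x\rangle}\, d\mu(x).
\end{equation*}
This strengthening is what permits the induction to close, since at each step one needs to differentiate a previously established integral representation rather than wrestle directly with a high-order mixed derivative of $\hat{\mu}$. The base case $|\gamma'|=0$ is immediate from the definition \eqref{def:FT_cm}.

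For the inductive step, write $\gamma' = \gamma'' + e_{j}$, where $e_{j}$ denotes the multi-index with $1$ in the $j$-th slot and $0$ elsewhere, so that $|\gamma''| = |\gamma'| - 1$. By the inductive hypothesis, for each $q \in \mathbb{R}^{n}$,
\begin{equation*}
G(q) := (D^{\gamma''}\hat{\mu})(q) = (-i)^{|\gamma''|}\int_{\mathbb{R}^{n}} x^{\gamma''} e^{-i\langle q, x\rangle}\, d\mu(x).
\end{equation*}
I would form the difference quotient along the $j$-th coordinate direction,
\begin{equation*}
\frac{G(q + h e_{j}) - G(q)}{h} = (-i)^{|\gamma''|}\int_{\mathbb{R}^{n}} x^{\gamma''}\, e^{-i\langle q,x\rangle} \cdot \frac{e^{-ih x_{j}} - 1}{h}\, d\mu(x),
\end{equation*}
and then pass to the limit $h \to 0$ inside the integral. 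The integrand converges pointwise to $-i\, x^{\gamma''+e_{j}} e^{-i\langle q,x\rangle} = -i\, x^{\gamma'} e^{-i\langle q,x\rangle}$, which yields the correct value $(-i)^{|\gamma'|}\int x^{\gamma'} e^{-i\langle q,x\rangle}\, d\mu(x)$.

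The only substantive step is justifying the interchange of limit and integral, which is precisely where the hypothesis enters. I would invoke the elementary estimate $|e^{it} - 1| \leq |t|$ (valid for all real $t$, easily derived from $e^{it}-1 = i\int_{0}^{t} e^{is}\, ds$), whence
\begin{equation*}
\left| x^{\gamma''}\, e^{-i\langle q,x\rangle} \cdot \frac{e^{-ih x_{j}} - 1}{h} \right| \leq \left| x^{\gamma''} \right| \cdot \left| x_{j} \right| = \left| x^{\gamma'} \right|.
\end{equation*}
Since $\int x^{\gamma'}\, d\mu$ exists by assumption, $|x^{\gamma'}|$ is integrable with respect to the total variation $|\mu|$, providing the dominating function. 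The dominated convergence theorem then applies uniformly in $q$, closing the induction and yielding in particular the stated formula at $q = 0$ for $\gamma' = \gamma$. The principal delicacy to watch for will be ensuring that all lower-order moments $\int x^{\gamma'}\, d\mu$, $\gamma' \leq \gamma$, are separately integrable (not merely that the top-order one is), which is precisely what the hypothesis grants; this is essential because at each inductive step the dominating function has order one less than the one used in the next step.
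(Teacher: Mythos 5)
Your proof is correct and takes essentially the same route as the paper, which simply differentiates under the integral sign and cites the dominated convergence theorem; your induction on $|\gamma'|$ with the estimate $|e^{it}-1|\leq|t|$ and the dominating function $|x^{\gamma'}|$ is precisely the detail that one-line justification suppresses. In particular you correctly identify why the hypothesis must supply all intermediate moments $\gamma'\leq\gamma$ and not just the top one, which is exactly the role that hypothesis plays in the paper's statement.
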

\begin{proof}
One readily computes
\begin{align}
\left(D^{\gamma}\hat{\mu}\right)(0)
    &:= \left. D^{\gamma} \left( \int_{\mathbb{R}^{n}} e^{-i\langle q, x\rangle}\ d\mu(x) \right) \right|_{q=0} \nonumber \\
    &= \int_{\mathbb{R}^{n}} \left( \left. D^{\gamma} e^{-i\langle q, x\rangle} \right|_{q=0} \right) d\mu(x) \nonumber \\
    &= (-i)^{|\gamma|}\int_{\mathbb{R}^{n}} x^{\gamma}\ d\mu(x),
\end{align}
where the second equality (exchange of the differentiation and integration) is a consequence of the dominated convergence theorem.
\end{proof}
\noindent
Compare this result to that for Schwartz functions \eqref{eq:fourier_diff_01}.

\subsection{Quasi-joint-probabilities of a Combination of Quantum Observables}

We now intend to provide a general prescription for defining a QJP distribution of a combination of generally not necessarily simultaneously measurable quantum observables.

\subsubsection{Preliminary Observations}
In this passage, we conduct some formal discussions on the topic of QJP distributions of a combination of quantum observables.  Since rigorous treatment requires advanced mathematical tools that is beyond the scope of this paper, we first conduct a formal and intuitive argument to obtain the essence of the idea.  
Now, before we embark on our main objective, we first recall a basic theorem regarding strong commutativity of $A$ and $B$ and that of their unitary operators.
\begin{theorem*}
Let $A$ and $B$ be self-adjoint. Then, the following conditions are equivalent.
\begin{enumerate}
\item The operators $A$ and $B$ strongly commute with each other.
\item The operators $e^{isA}$ and $e^{itB}$ commute with each other for all $s,t \in \mathbb{R}$, namely
\begin{equation}\label{eq:strong_commutativity_vs_FT}
e^{itA}e^{isB} = e^{isB}e^{itA}, \quad s,t \in \mathbb{R}
\end{equation}
holds.
\end{enumerate}
\end{theorem*}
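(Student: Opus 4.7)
The plan is to prove the two directions separately, with the forward direction being essentially a consequence of the construction of functional calculus and the reverse direction relying on the injectivity of the Fourier transformation on the measure algebra $\mathbf{M}_{\mathbb{C}}(\mathfrak{B})$ established earlier in the excerpt.

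For (i) $\Rightarrow$ (ii), I would invoke the functional calculus for self-adjoint operators, observing that strong commutativity $E_A(\Delta_A) E_B(\Delta_B) = E_B(\Delta_B) E_A(\Delta_A)$ for all $\Delta_A, \Delta_B \in \mathfrak{B}$ implies, by approximation of simple functions followed by passage to general bounded Borel functions, the commutativity of $f(A)$ and $g(B)$ for any bounded measurable $f,g : \mathbb{R} \to \mathbb{C}$. Specialising to $f(x) = e^{isx}$ and $g(y) = e^{ity}$ yields \eqref{eq:strong_commutativity_vs_FT}. This step is routine, so I would not belabor it.

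For (ii) $\Rightarrow$ (i), the essential tool is the following intermediate lemma: if a bounded operator $T \in L(\mathcal{H})$ commutes with the entire one-parameter unitary group $\{e^{itB}\}_{t \in \mathbb{R}}$, then $T$ commutes with every spectral projection $E_B(\Delta_B)$, $\Delta_B \in \mathfrak{B}$. To prove this, I would fix $\phi, \psi \in \mathcal{H}$ and consider the pair of complex measures on $(\mathbb{R}, \mathfrak{B})$ defined by $\mu(\Delta_B) := \langle T^{*}\psi, E_B(\Delta_B) \phi \rangle$ and $\nu(\Delta_B) := \langle \psi, E_B(\Delta_B) T \phi \rangle$. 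The commutativity hypothesis together with the spectral theorem yields
\begin{equation*}
\hat{\mu}(t) = \langle T^{*}\psi, e^{-itB} \phi \rangle = \langle \psi, T e^{-itB} \phi \rangle = \langle \psi, e^{-itB} T \phi \rangle = \hat{\nu}(t)
\end{equation*}
for every $t \in \mathbb{R}$, and the injectivity of the Fourier transformation on $\mathbf{M}_{\mathbb{C}}(\mathfrak{B})$ forces $\mu = \nu$; since $\phi, \psi$ were arbitrary, this gives $TE_B(\Delta_B) = E_B(\Delta_B)T$ for all $\Delta_B \in \mathfrak{B}$.

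Given this lemma, the conclusion follows by a double application. First, for any fixed $s \in \mathbb{R}$, the bounded operator $T = e^{isA}$ commutes with $\{e^{itB}\}_{t \in \mathbb{R}}$ by hypothesis, so by the lemma $e^{isA}$ commutes with every $E_B(\Delta_B)$. Second, fixing $\Delta_B$, the bounded operator $T = E_B(\Delta_B)$ now commutes with the whole unitary group $\{e^{isA}\}_{s \in \mathbb{R}}$ by the previous step; applying the lemma again, this time with the roles of $A$ and $B$ interchanged, yields $E_B(\Delta_B) E_A(\Delta_A) = E_A(\Delta_A) E_B(\Delta_B)$ for all $\Delta_A, \Delta_B \in \mathfrak{B}$, which is precisely the strong commutativity of $A$ and $B$. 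The main obstacle in the whole argument is of a conceptual nature: one must resist the temptation to pass from operator identities involving the unbounded generators $A, B$ to the unitary identities via a naive power series manipulation, which is invalid in the unbounded setting. The cleanest way to avoid this pitfall, as outlined above, is to work throughout with bounded operators and let the injectivity of the Fourier transform on the measure algebra carry the passage from unitary groups back to spectral projections.
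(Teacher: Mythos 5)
Your proof is correct, but note that the paper itself does not prove this statement at all: it is introduced in Section~\ref{sec:qp_qo} merely as a ``basic theorem'' being \emph{recalled}, with no argument supplied. So there is no in-paper proof to compare against; what you have written is the standard textbook demonstration, and it holds up. The forward direction via approximation of bounded Borel functions by simple functions in the functional calculus is routine, as you say. The reverse direction is the substantive part, and your intermediate lemma --- that a bounded $T$ commuting with the whole group $\{e^{itB}\}_{t\in\mathbb{R}}$ commutes with every $E_B(\Delta)$, proved by comparing the Fourier transforms of the two complex measures $\Delta \mapsto \langle T^{*}\psi, E_B(\Delta)\phi\rangle$ and $\Delta \mapsto \langle \psi, E_B(\Delta)T\phi\rangle$ and invoking injectivity of the Fourier transformation on $\mathbf{M}_{\mathbb{C}}(\mathfrak{B})$ --- is exactly the right device, and it is pleasing that it uses only machinery the paper has already set up (the induced complex measures of \eqref{def:complex_measure_A} and the injectivity statement in the Reference Materials of Section~\ref{sec:qp_qo}). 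The double application, first with $T = e^{isA}$ and then with $T = E_B(\Delta_B)$ against the group generated by $A$, closes the argument cleanly, and your closing caution about not manipulating power series of the unbounded generators is well placed.
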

\noindent
This familiar theorem builds the starting point of our discussion that follows.

\paragraph{Fourier Transform of Product Spectral Measures}

Recall that the joint behaviour of the outcomes of an ideal measurement of a pair of simultaneously measurable observables $A$ and $B$ is governed by the product spectral measure $E_{A,B}$ of their respective spectral measures $E_{A}$, $E_{B}$ introduced earlier in \eqref{eq:product_spectral_measure}.  An important observation here is to see that the `Fourier transform' of the product spectral measure $E_{A,B}$ is nothing but the product \eqref{eq:strong_commutativity_vs_FT} of the parametrised unitary operators
\begin{align}
(\mathscr{F}E_{A,B})(s,t) 
    &:= \int_{\mathbb{R}^{2}} e^{-i(\langle s, a\rangle + \langle t, b\rangle)}\ dE_{A,B}(a,b) \nonumber \\
    &= e^{-i\overline{(sA + tB)}} \nonumber \\
    &= \lim_{N \to \infty} ( e^{-isA/N} e^{-itB/N} )^{N} \nonumber \\
    &= e^{-isA}e^{-itB},
\end{align}
where the overline on the essentially self-adjoint operator $sA + tB$ denotes its unique self-adjoint extension as usual, and the second equality is due to the familiar Trotter formula.

\paragraph{Hashed Operators}
We now consider a pair of \emph{arbitrary} (not necessarily strongly-commuting) self-adjoint operators $A$ and $B$. Guided by the above observation, we formally introduce
\begin{align}\label{def:decent_mixture}
\#(s,t) :=\, \text{a `decent' mixture of the disintegrated components of } e^{-isA} \text{ and } e^{-itB}
\end{align}
for the pair of $A$ and $B$. Example of such mixtures of the disintegrated components of the unitary operators are given by:
\begin{align}\label{eq:mixture_examples}
\#(s,t) =
    \begin{cases}
        e^{-isA}e^{-itB}, \\
        e^{-itB}e^{-isA}, \\
        \Pi_{k = 1}^{N} e^{-isA/L_{k}}e^{-itB/M_{k}}, \qquad \left(\sum_{k=1}^{N}L_{k}^{-1} = 1,\ \sum_{k=1}^{N}M_{k}^{-1} = 1\right), \\
        \left( e^{-isA/N} e^{-itB/N} \right)^{N}, \\
        e^{-i\overline{(sA + tB)}} = \lim_{N \to \infty} \left( e^{-isA/N} e^{-itB/N} \right)^{N}, \\
        \textit{etc.},
    \end{cases}
\end{align}
or even any linear combinations of them.
The term `decent' is intended to express a mathematical condition as to what qualifies as a reasonable `mixture' to meet our purpose. However, we do not intend to discuss its precise mathematical definition here, for it is beyond the scope of this paper.   In this paper, the `parametrised family of operators' $\#(s,t)$ shall occasionally be referred to as \emph{hashed operators} of the unitary operators, in a rather casual manner.  Due to the commutativity of the unitary operators for a simultaneously measurable pair, the hashed operator $\#(s,t) = e^{-isA}e^{-itB}$ is always unique, while on the other hand, hashed operators admit variety for non-commutative pairs.
Now, given a hashed operator $\#$ of $A$ and $B$, we then introduce the collection of all parametrised operators of the form
\begin{equation}\label{def:QSM_FT}
\hat{\mathfrak{M}}_{A,B} := \left\{ \# : \# \text{ is a hashed operator of $A$ and $B$ defined as in \eqref{def:decent_mixture} } \right\}.
\end{equation}
As we have seen above, in the case in which $A$ and $B$ are simultaneously measurable, the above collection in fact consists of only one trivial element
\begin{equation}
\hat{\mathfrak{M}}_{A,B} = \left\{ e^{-isA}e^{-itB} \right\},
\end{equation}
due to the strong commutativity of the two operators. 
On the other hand, one readily observes that the cardinality of $\hat{\mathfrak{M}}_{A,B}$ is always greater than unity if the pair of observables $A$ and $B$ fails to strongly commute.
\begin{lemma}\label{lem:card_hash_op}
The cardinality of the collection $\hat{\mathfrak{M}}_{A,B}$ is equal to unity if and only if the observables $A$ and $B$ strongly commute with each other. Otherwise, the cardinality is always greater than unity.
\end{lemma}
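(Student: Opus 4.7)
The plan is to prove the two directions separately, exploiting the explicit examples in \eqref{eq:mixture_examples} which are manifestly always members of $\hat{\mathfrak{M}}_{A,B}$, regardless of the (admittedly informal) notion of a `decent mixture'. In particular, the two prototypical orderings
\begin{equation}
\#_{1}(s,t) := e^{-isA}e^{-itB}, \qquad \#_{2}(s,t) := e^{-itB}e^{-isA}
\end{equation}
always qualify as elements of $\hat{\mathfrak{M}}_{A,B}$, so the question of cardinality essentially reduces to whether these two (and, in the non-commutative case, the other listed orderings such as the symmetrised Trotter products) can be distinguished.

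First I would settle the easy forward direction. Assuming that $A$ and $B$ strongly commute, the theorem cited immediately before the statement of the lemma gives $e^{-isA}e^{-itB} = e^{-itB}e^{-isA}$ for every $s, t \in \mathbb{R}$. A straightforward induction on the number of factors then shows that every finite rearrangement of products $\prod_{k} e^{-isA/L_{k}} e^{-itB/M_{k}}$ with $\sum_{k} L_{k}^{-1} = \sum_{k} M_{k}^{-1} = 1$ collapses to the single operator $e^{-isA}e^{-itB}$, and the same holds for the Trotter limit $e^{-i\overline{(sA+tB)}}$ by strong commutativity. Since any convex/linear combination of copies of the same operator is again that operator, the informal specification of `decent mixture' must force $\#(s,t) = e^{-isA}e^{-itB}$, hence $|\hat{\mathfrak{M}}_{A,B}| = 1$.

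For the reverse direction I would argue by contraposition. Suppose $|\hat{\mathfrak{M}}_{A,B}| = 1$. Since both $\#_{1}$ and $\#_{2}$ belong to $\hat{\mathfrak{M}}_{A,B}$, this forces $\#_{1}(s,t) = \#_{2}(s,t)$ for all $s, t \in \mathbb{R}$, which is exactly the Weyl-type commutation relation $e^{-isA}e^{-itB} = e^{-itB}e^{-isA}$ on all of $\mathbb{R}^{2}$. By the theorem on strong commutativity quoted earlier in this subsection, this is equivalent to the strong commutativity of $A$ and $B$. Conversely, if $A$ and $B$ fail to strongly commute, then $\#_{1}$ and $\#_{2}$ are two distinct elements of $\hat{\mathfrak{M}}_{A,B}$, so its cardinality is strictly greater than one.

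The main obstacle is conceptual rather than technical: the set $\hat{\mathfrak{M}}_{A,B}$ is defined in terms of the informal notion of a `decent mixture', so the proof must be carried out at a level of mathematical informality consistent with the statement. The argument above sidesteps this by relying only on the two canonical orderings \#_{1} and \#_{2}, which are unambiguously members of $\hat{\mathfrak{M}}_{A,B}$, together with the cited equivalence between strong commutativity and commutativity of the associated one-parameter unitary groups; a more rigorous formulation would presumably require first pinning down what `decent mixture' means, but fortunately the lemma itself is insensitive to this since its content is already captured by the behaviour of $\#_{1}$ and $\#_{2}$ alone.
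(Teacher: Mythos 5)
Your proof is correct and follows essentially the same route as the paper, which simply observes that under strong commutativity every mixture collapses to the single element $e^{-isA}e^{-itB}$ via the quoted equivalence between strong commutativity and commutation of the unitary groups, while in the non-commuting case the two orderings $e^{-isA}e^{-itB}$ and $e^{-itB}e^{-isA}$ already furnish two distinct elements. Your explicit handling of the contrapositive and the acknowledgement that the argument only needs the two canonical orderings (and is thus insensitive to the informality of ``decent mixture'') matches the paper's intent.
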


\paragraph{Distributions generated by Hashed Operators}
We now consider the inverse Fourier transform of all the elements of the hashed operators $\hat{\mathfrak{M}}_{A,B}$, and thus formally introduce
\begin{equation}\label{def:QJSD}
\mathfrak{M}_{A,B} := \left\{\mathscr{F}^{-1} \# : \# \in \hat{\mathfrak{M}}_{A,B} \right\},
\end{equation}
without any consideration of the mathematical intricacies involved in its well-definedness.  By the injectivity of the Fourier transformation, one intuitively expects that the collection reduces to the single element
\begin{equation}
\mathfrak{M}_{A,B} = \left\{ E_{A,B} \right\}
\end{equation}
when the operators $A$ and $B$ strongly commute with each other, which should be nothing but the original product spectral measure of $A$ and $B$. On the other hand, Lemma~\ref{lem:card_hash_op} implies that the cardinality of the  collection $\mathfrak{M}_{A,B}$ is always greater than unity in the case where $A$ and $B$ are not simultaneously measurable.  Although being possibly highly non-unique, each element of the collection $\mathfrak{M}_{A,B}$ defined for non-commuting pairs retains similar properties to those of the standard product spectral measures.  Incidentally, choosing any element $\Pi \in \mathfrak{M}_{A,B}$ of the collection, the `total integration' reduces to the unit $I$, as one finds under the formal computation
\begin{align}\label{eq:tot_int_joint_spect_dist}
\int_{\mathbb{K}^{2}} \Pi(a,b)\ dm_{2}(a,b)
    &= \int_{\mathbb{K}^{2}} e^{-i (\langle 0,a\rangle + \langle 0,b\rangle)} \Pi(a,b)\ dm_{2}(a,b) \nonumber \\
    &= \left( \mathscr{F}\Pi \right)(0,0) \nonumber \\
    &= \#(0,0) = I,
\end{align}
where $\#$ is the hashed operator whose inverse Fourier transform is the element $\Pi = \mathscr{F}^{-1}\#$ of our choice.  As for the marginals, by formally introducing
\begin{equation}
\Pi_{B}(b) := \int_{\mathbb{K}} \Pi(a,b)\ dm(a),
\end{equation}
one observes under a formal computation that
\begin{align}
\left(\mathscr{F}\Pi_{B}\right) (t)
    &= \int_{\mathbb{K}} e^{-i \langle t,b\rangle} \left( \int_{\mathbb{K}} \Pi(a,b)\ dm(a) \right) dm(b) \nonumber \\
    &=  \int_{\mathbb{K}^{2}} e^{-i (\langle 0,a\rangle + \langle t,b\rangle)} \Pi(a,b)\ dm_{2}(a,b) \nonumber \\
    &= \#(0,t) \nonumber \\
    &= e^{-itB} \nonumber \\
    &= \left(\mathscr{F}E_{B}\right) (t).
\end{align}
The injectivity of the Fourier transformation $\mathscr{F}$ leads us to conclude that the marginal $\Pi_{B} =  E_{B}$ is essentially the same object as the original spectral measure governing the probabilistic behaviour of the outcomes of $B$.  By a parallel argument, one also finds that the marginal
\begin{equation}
\Pi_{A}(a) := \int_{\mathbb{K}} \Pi(a,b)\ dm(b)
\end{equation}
is nothing but $\Pi_{A} = E_{A}$.  These properties are naturally found common in product spectral measures defined for strongly commuting pairs of self-adjoint operators, although each $\Pi(a,b)$ is not necessarily a projection, or may not be even positive.  This  tempts us to introduce the term \emph{quasi-joint-spectral distributions} of a pair of observables, which can be understood as a generalisation of the concept of spectral measures or POVMs.
\begin{definition*}[Quasi-joint-spectral Distribution of a Pair of Quantum Observables]
Let $A$ and $B$ be self-adjoint operators on $\mathcal{H}$.  We call an element of $\mathfrak{M}_{A,B}$ a quasi-joint-spectral distribution of the pair of observables $A$ and $B$.  
The cardinality of the collection $\mathfrak{M}_{A,B}$ is equal to unity if and only if $A$ and $B$ strongly commute with each other.  Otherwise, it is always greater than unity.
\end{definition*}
\noindent
In the case where the observables $A$ and $B$ happen to strongly commute with each other, we specifically call the unique element of $\mathfrak{M}_{A,B}$ the \emph{joint-spectral distribution} of $A$ and $B$, which is nothing but the product spectral measure $E_{A,B}$ of the pair in standard terminology.
We note that the terminologies introduced above are non-standard, and are to be used only in this paper.

\subsubsection{Quasi-joint-probability Distributions}

Although the study on the precise definitions and properties of the family of quasi-joint-spectral distributions would be of mathematical interest in its own right, we shall refrain from going further due to the limited mathematical tools available.  Instead, we turn to a more elementary object to ease our discussion.

Now, given a quasi-joint-spectral distribution $\Pi \in \mathfrak{M}_{A,B}$ of $A$ and $B$, we fix a specific quantum state $|\phi\rangle \in \mathcal{H}$, and consider a distribution of the form formally defined by
\begin{align}\label{def:QJP_explicit}
p(a,b)
    &:= \frac{\langle \phi, \Pi(a,b) \phi \rangle}{\|\phi\|^{2}} \nonumber \\
    &= \frac{\langle \phi, (\mathscr{F}^{-1}\#)(a,b) \phi \rangle}{\|\phi\|^{2}} \nonumber \\
    &= \left( \mathscr{F}^{-1} \frac{\langle \phi, \#(\,\cdot\, , \,\cdot\,) \phi \rangle}{\|\phi\|^{2}} \right)(a,b),
\end{align}
where $\#$ is the hashed operator of which inverse Fourier transform $\Pi = \mathscr{F}^{-1}\#$ is the quasi-joint-spectral distribution under consideration.
Since the distribution $p$ is `scalar valued', it should be a much more feasible object to deal with than the `operator valued' distribution $\Pi$ introduced earlier.
We thus introduce the collection
\begin{equation}\label{def:QJP_FT}
\hat{\mathfrak{M}}_{A,B}^{\phi} := \left\{ \frac{\langle \phi, \#(s,t) \phi \rangle}{\|\phi\|^{2}} : \# \in \hat{\mathfrak{M}}_{A,B}, |\phi\rangle \in \mathcal{H} \right\}
\end{equation}
of all distributions generated by the hashed operators of the parametrised unitary operators give a fixed state, and in turn formally define
\begin{equation}\label{def:QJP}
\mathfrak{M}_{A,B}^{\phi} := \left\{\mathscr{F}^{-1} u : u \in \hat{\mathfrak{M}}_{A,B}^{\phi} \right\},
\end{equation}
by their inverse Fourier transforms. We thus summarise as:
\begin{definition*}[Quasi-joint-probability Distribution of a Pair of Quantum Observables]
Let $A$ and $B$ be self-adjoint operators on $\mathcal{H}$, and let $|\phi\rangle \in \mathcal{H}$.  We call an element of $\mathfrak{M}_{A,B}^{\phi}$ a quasi-joint-probability (QJP) distribution of the pair of observables $A$ and $B$ on $|\phi\rangle$.  The cardinality of the collection $\mathfrak{M}_{A,B}^{\phi}$ is equal to unity for every choice of the vector $|\phi\rangle$ if and only if the observables $A$ and $B$ strongly commute with each other. Otherwise, there exists a vector $|\phi\rangle$ such that the cardinality is greater than unity.
\end{definition*}
\noindent
In the case where the observables $A$ and $B$ happen to strongly commute with each other, we specifically call the unique element of $\mathfrak{M}_{A,B}^{\phi}$ the joint-probability distribution of $A$ and $B$  on $|\phi\rangle$, which is nothing but the probability measure $\mu_{A,B}^{\phi}$ of the pair introduced in \eqref{def:prob_measrue_A_simul}.
Given a hashed operator $\#$ of the parametrised unitary groups and a quantum state $|\phi\rangle \in \mathcal{H}$, we call an element $p \in \mathfrak{M}_{A,B}^{\phi}$ specified by
\begin{equation}
(\mathscr{F}p)(s,t) = \frac{\langle \phi, \#(s,t) \phi \rangle}{\|\phi\|^{2}},
\end{equation}
the QJP distribution generated by $\#$ and $|\phi\rangle$.
Our choice of the denomination of the elements of $p \in \mathfrak{M}_{A,B}^{\phi}$ is due to the fact that they retain similar properties to those of classical joint-probability distributions.  
Indeed, the `total integration' reduces to
\begin{align}
\int_{\mathbb{K}^{2}} p(a,b)\ dm_{2}(a,b)
    &= \int_{\mathbb{K}^{2}} e^{-i (\langle 0,a\rangle + \langle 0,b\rangle)} p(a,b)\ dm_{2}(a,b) \nonumber \\
    &= \left( \mathscr{F}p \right)(0,0) \nonumber \\
    &= \frac{\langle \phi, \#(0,0) \phi\rangle}{\|\phi\|^{2}} = 1,
\end{align}
where $\#$ is the hashed operator that, together with $|\phi\rangle$, generates $p$.  As for the marginals, by introducing the marginal distribution formally defined by
\begin{equation}
p_{B}(b) := \int_{\mathbb{K}} p(a,b)\ dm(a),
\end{equation}
one observes through a formal computation that
\begin{align}
\left(\mathscr{F}p_{B}\right) (t)
    &= \int_{\mathbb{K}} e^{-i \langle t,b\rangle} \left( \int_{\mathbb{K}} p(a,b)\ dm(a) \right) dm(b) \nonumber \\
    &=  \int_{\mathbb{K}^{2}} e^{-i (\langle 0,a\rangle + \langle t,b\rangle)} p(a,b)\ dm_{2}(a,b) \nonumber \\
    &= \frac{\langle \phi, \#(0,t) \phi\rangle}{\|\phi\|^{2}} \nonumber \\
    &= \frac{\langle \phi, e^{-itB} \phi\rangle}{\|\phi\|^{2}} \nonumber \\
    &= \left(\mathscr{F}\mu_{B}^{\phi}\right) (t).
\end{align}
The injectivity of the Fourier transformation $\mathscr{F}$ leads us to conclude that the distribution $p_{B}(b)$ is essentially the same object as the probability measure $\mu_{B}^{\phi}$ describing the probabilistic behaviour of the outcomes of $B$.  By a parallel argument, one also finds that the marginal
\begin{equation}
p_{A}(b) := \int_{\mathbb{K}} p(a,b)\ dm(b)
\end{equation}
is nothing but $p_{A} = \mu_{A}^{\phi}$.
Before we proceed further, we make notes on some mathematical intricacies involved in their definitions for the interested.

\paragraph{Mathematical Remarks}

One may notice some subtleties inherent to the definition of $\mathfrak{M}_{A,B}^{\phi}$. The first problem might be the domain of the definition of the inverse Fourier transformation: while the Fourier transform of a complex measure $\mu$ is a function, in regard that it does not necessarily lie in $\hat{\mu} \notin L^{1}(\mathbb{R}^{n})$, its inverse Fourier transform may not be well-defined, even in the case where $A$ and $B$ strongly commute with each other. This can be temporarily remedied by understanding the inverse Fourier transform of an element $u \in \hat{\mathfrak{M}}_{A,B}^{\phi}$ to be the unique complex measure $\mu$ such that $u = \hat{\mu}$ holds, which should be a reasonable treatment due to the injectivity of the Fourier transformation. This provides a sufficient cure in the case where the pair of observables strongly commutes.

On the other hand, another problem arises in the case in which the pair of self-adjoint operators fails to strongly commute: it might happen that, for some element $u \in \hat{\mathfrak{M}}_{A,B}^{\phi}$, there is no complex measure $\mu$ such that its Fourier transform coincides with $u = \hat{\mu}$.
A straightforward and more fundamental cure for this would be to expand our framework into that of generalised functions, specifically, by embedding the space of complex measures into that of tempered distributions.
Indeed, since the Fourier transformation is a bijection on the space of tempered distributions, by understanding that each of the elements of $\hat{\mathfrak{M}}_{A,B}^{\phi}$ to be a tempered distribution, its inverse Fourier transform itself always exists as a tempered distribution.
In consideration of this, since we do not wish to get involved with the theory of generalised functions, we shall be exclusively dealing with those elements $u \in \hat{\mathfrak{M}}_{A,B}^{\phi}$ for which there exists a complex measure $\mu$ satisfying $u = \hat{\mu}$, and understand the element $\mu := \mathscr{F}^{-1}u \in \mathfrak{M}_{A,B}^{\phi}$ to be the complex measure. To this end, we introduce:
\begin{definition*}[Representation by Quasi-probability Measures]
Under the above situation, let $p \in \mathfrak{M}_{A,B}^{\phi}$ be a QJP distribution of $A$ and $B$, and let $u \in \hat{\mathfrak{M}}_{A,B}^{\phi}$ be an element such that $p = \mathscr{F}^{-1}u$. We say that the QJP distribution $p$ admits representation by a quasi-probability measure, if there exists a quasi-probability measure $\mu$ on $\mathbb{K}^{2}$ such that $u = \hat{\mu}$ holds, and understand the QJP distribution $p = \mu$ to be the quasi-probability measure.
\end{definition*}
\noindent
A similar concern arises for the definition of quasi-joint-spectral distributions $\Pi = \mathscr{F}^{-1}\#$ defined as inverse Fourier transforms of hashed operators $\#$ of the unitary operators $e^{-isA}$ and $e^{-itB}$.  Parallel to the `scalar valued' case seen above, quasi-joint-spectral distributions $\Pi$ are better understood as an object generalising the concept of spectral measures (or POVMs), in the sense that, while spectral measures $E$ (or POVMs) yield probability measures $\langle \phi, E(\,\cdot\,)\phi\rangle/\|\phi\|^{2}$ when combined with a vector $|\phi\rangle$, quasi-joint-spectral distributions $\Pi$ yield generalised functions, symbolically denoted by $\langle \phi, \Pi(a,b)\phi\rangle/\|\phi\|^{2}$.  In this respect, quasi-joint-spectral distributions are to be understood as elements of the space of \emph{operator valued (tempered) distributions} (OVDs), which should serve as a generalisation to that of POVMs.

We also note that the methods introduced above in defining QJSDs admit a straightforward generalisation in defining them, not only for a \emph{pair} ($N=2$) of quantum observables as presented above, but also for \emph{arbitrary combinations} $(N \geq 2)$ of quantum observables, or even for \emph{arbitrary combinations of POVMs}.  Also, one may readily generalise the discussion for defining QJP distributions, not just for pure states as presented above by sandwiching the QJSPs by kets and bras, but also for mixed states by taking the trace of the product of QJSPs and density operators.

\subsection{Complex-parametrised Sub-families}

Since we have decided to confine ourselves in the framework of complex measures rather than that of generalised functions due to our restricted mathematical tools available, we would mostly refrain from treating the general cases, and shall concentrate on a special sub-families of QJP distributions of a pair of quantum observables $A$ and $B$.

\subsubsection{Additive Sub-family}\label{sec:additive_subfamily}

As a simple example of QJP distributions admitting representation by quasi-probability measures, we observe:
\begin{lemma}
Let $A$ and $B$ be self-adjoint operators on $\mathcal{H}$, and consider the hashed operator of either of the form
\begin{equation}
\#(s,t) =
    \begin{cases}
    e^{-itA}e^{-isB} \\
    e^{-isB}e^{-itA}
    \end{cases}, \quad s,t \in \mathbb{R}.
\end{equation}
Then, the QJP distributions generated by $\#$ and any choice of the vector $|\phi\rangle \in \mathcal{H}$ admit representation by quasi-probability measures.
\end{lemma}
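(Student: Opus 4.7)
The plan is, for each of the two orderings in the lemma, to exhibit an explicit complex measure on $(\mathbb{R}^{2}, \mathfrak{B}^{2})$ of total mass unity whose Fourier transform, in the sense of \eqref{def:FT_cm}, coincides with the generating function $u(s,t) := \langle \phi, \#(s,t)\phi\rangle/\|\phi\|^{2}$. By injectivity of the Fourier transformation, the QJP distribution $p = \mathscr{F}^{-1}u \in \mathfrak{M}^{\phi}_{A,B}$ will then be identified with this complex measure, which is exactly what the definition of \emph{admitting representation by a quasi-probability measure} demands. I shall treat the ordering $\#(s,t) = e^{-isA}e^{-itB}$ in detail; the reversed ordering is handled by interchanging the roles of $A$ and $B$, and the resulting measure is the complex conjugate of the first.

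The natural candidate is the Kirkwood-Dirac-type set function
\[
\mu(\Delta_{A} \times \Delta_{B}) \;:=\; \frac{1}{\|\phi\|^{2}}\, \langle E_{A}(\Delta_{A})\phi,\; E_{B}(\Delta_{B})\phi \rangle, \qquad \Delta_{A}, \Delta_{B} \in \mathfrak{B},
\]
prescribed on measurable rectangles. For fixed $\Delta_{A}$ the map $\Delta_{B} \mapsto \mu(\Delta_{A} \times \Delta_{B})$ is, up to a constant, the matrix coefficient $\langle E_{A}(\Delta_{A})\phi,\, E_{B}(\cdot)\phi\rangle$ of the spectral measure $E_{B}$ and hence a finite complex measure in $\Delta_{B}$; symmetrically in the other variable. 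The Cauchy-Schwarz estimate $|\mu(\Delta_{A} \times \Delta_{B})| \leq \mu_{A}^{\phi}(\Delta_{A})^{1/2}\, \mu_{B}^{\phi}(\Delta_{B})^{1/2}$ controls the Fr{\'e}chet variation of the resulting bimeasure, and allows $\mu$ to be extended to a bona fide finite complex measure on the product $\sigma$-algebra $\mathfrak{B}^{2}$, with total mass $\mu(\mathbb{R}^{2}) = \langle \phi, \phi\rangle/\|\phi\|^{2} = 1$. A double application of the spectral theorem together with Fubini's theorem then gives
\begin{align*}
\hat{\mu}(s,t)
    &= \int_{\mathbb{R}^{2}} e^{-i(sa + tb)}\, d\mu(a,b) \\
    &= \frac{1}{\|\phi\|^{2}} \int_{\mathbb{R}} e^{-isa}\, d_{a}\!\left( \int_{\mathbb{R}} e^{-itb}\, d_{b}\, \langle E_{A}(a)\phi,\, E_{B}(b)\phi\rangle \right) \\
    &= \frac{1}{\|\phi\|^{2}} \int_{\mathbb{R}} e^{-isa}\, d\langle E_{A}(a)\phi,\, e^{-itB}\phi\rangle \;=\; \frac{\langle\phi,\, e^{-isA}\, e^{-itB}\phi\rangle}{\|\phi\|^{2}} \;=\; u(s,t),
\end{align*}
which is the required identification $p = \mu$. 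For the reversed ordering one works instead with $\mu^{\prime}(\Delta_{A} \times \Delta_{B}) := \langle E_{B}(\Delta_{B})\phi,\, E_{A}(\Delta_{A})\phi\rangle/\|\phi\|^{2} = \overline{\mu(\Delta_{A} \times \Delta_{B})}$ and an identical calculation with the two spectral integrations performed in the opposite order.

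The crux of the argument is the bimeasure-to-measure extension invoked in the middle paragraph. When $A$ and $B$ strongly commute it is immediate, since $E_{A}(\Delta_{A})E_{B}(\Delta_{B})$ is then a projection and $\mu$ collapses to the standard product spectral measure $\mu_{A,B}^{\phi}$ of Section~\ref{sec:sim_meas_obs}. In the general non-commuting setting one must instead lean on the Cauchy-Schwarz bound to keep the Fr{\'e}chet variation finite, and the cleanest route that bypasses any reliance on the classical bimeasure extension theorem is to define $\mu$ directly as a scalar integral against the two Hilbert-space-valued vector measures $\Delta \mapsto E_{A}(\Delta)\phi$ and $\Delta \mapsto E_{B}(\Delta)\phi$, paired through the inner product of $\mathcal{H}$; this is the avenue I would pursue to finalise the argument within the framework of complex measures maintained in this paper.
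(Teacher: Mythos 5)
Your construction takes a genuinely different route from the paper's, and it breaks down exactly at the step you yourself flag as the crux. The paper does not attempt to define the measure on rectangles and extend; it builds the product measure by \emph{disintegration}: for each fixed $\Delta_{A}$ the complex measure $\Delta \mapsto \langle \phi, E_{B}(\Delta)E_{A}(\Delta_{A})\phi\rangle/\|\phi\|^{2}$ is absolutely continuous with respect to $\mu_{B}^{\phi}$ (Cauchy--Schwarz), so its Radon--Nikod{\'y}m derivative yields a transition quasi-probability kernel $K(b,\Delta_{A})$, and Proposition~\ref{prop:ctk_to_cm} applied with $f(a,b)=e^{-i(as+bt)}$ then manufactures the complex measure on the product space and identifies its Fourier transform with $\langle\phi,\#(s,t)\phi\rangle/\|\phi\|^{2}$ in one stroke. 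The existence of the product measure is thus an output of the kernel construction, never a rectangle-extension problem.

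The gap in your version is that the extension you invoke is not a theorem. The set function $\mu(\Delta_{A}\times\Delta_{B}) = \langle E_{A}(\Delta_{A})\phi, E_{B}(\Delta_{B})\phi\rangle/\|\phi\|^{2}$ is a bimeasure (separately countably additive in each slot), and the Cauchy--Schwarz bound controls only its Fr{\'e}chet variation. Finite Fr{\'e}chet variation does \emph{not} imply that a bimeasure extends to a countably additive complex measure of bounded variation on the product $\sigma$-algebra; that requires finiteness of the Vitali (total) variation, which is strictly stronger, and classical counterexamples in bimeasure theory show the implication fails in general. Your proposed remedy --- pairing the two vector measures $\Delta\mapsto E_{A}(\Delta)\phi$ and $\Delta\mapsto E_{B}(\Delta)\phi$ through the inner product --- merely restates the bimeasure and does not produce the extension. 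The subsequent Fourier/Fubini computation presupposes the measure already exists on $\mathfrak{B}^{2}$, so it cannot be used to close the circle. To repair the argument within the paper's framework you would need to replace the extension step by the Radon--Nikod{\'y}m/transition-kernel construction above (or an equivalent disintegration), at which point your proof collapses into the paper's.
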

\begin{proof}
We provide the proof for the first case without loss of generality. Observe that the complex measure
\begin{equation}\label{def:c_qpm_1}
\Delta \mapsto  \nu(\Delta,\Delta_{A}) := \frac{\langle \phi, E_{B}(\Delta)E_{A}(\Delta_{A})\phi \rangle}{\|\phi\|^{2}}, \quad \Delta \in \mathfrak{B}
\end{equation}
is absolutely continuous with respect to $\mu_{B}^{\phi}$ for all fixed $\Delta_{A} \in \mathfrak{B}$. This allows us to construct a transition quasi-probability kernel by taking the Radon-Nikod{\'y}m derivative of the above complex measure with respect to $\mu_{B}^{\phi}$. A direct application of Proposition~\ref{prop:ctk_to_cm} with $f(a,b) := e^{-i(as + bt)}$ then leads to the desired statement.
\end{proof}
\noindent
This inspires us to introduce the complex linear combinations of the above two distributions.
We hereby consider the hashed operators of the form
\begin{equation}\label{def:hash_add_cparam}
\#_{\mathrm{add}}^{\alpha}(s,t) := \frac{1 + \alpha}{2} e^{-itB}e^{-isA} + \frac{1 - \alpha}{2}e^{-isA}e^{-itB}, \quad s, t \in \mathbb{R},\ \alpha \in \mathbb{C},
\end{equation}
and observe that the QJP distributions induced by them naturally admit representation by quasi-probability measures.
\begin{corollary}
The QJP distributions generated by the hashed operators of the form \eqref{def:hash_add_cparam} and $|\phi\rangle \in \mathcal{H}$ admits representation by quasi-probability measures.
\end{corollary}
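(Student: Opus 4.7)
The plan is to reduce the corollary directly to the preceding lemma by exploiting the $\mathbb{C}$-linearity of all the operations involved (sandwiching by $\langle \phi, \cdot\, \phi\rangle/\|\phi\|^{2}$, Fourier inversion, and forming complex linear combinations of complex measures), and then to check the sole nontrivial bookkeeping item, namely the normalisation of the resulting complex measure to unity so that it qualifies as a quasi-probability measure in the sense defined earlier.

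More concretely, I would first invoke the preceding lemma to obtain complex measures $\mu_{1}, \mu_{2} \in \mathbf{M}_{\mathbb{C}}(\mathfrak{B}^{2})$ representing the QJP distributions generated by $\#_{1}(s,t) := e^{-itB}e^{-isA}$ and $\#_{2}(s,t) := e^{-isA}e^{-itB}$ together with the fixed vector $|\phi\rangle \in \mathcal{H}$, so that
\begin{equation}
(\mathscr{F}\mu_{k})(s,t) = \frac{\langle \phi, \#_{k}(s,t)\phi\rangle}{\|\phi\|^{2}}, \quad k=1,2,
\end{equation}
holds. Next, I would form the complex linear combination
\begin{equation}
\mu_{\mathrm{add}}^{\alpha} := \frac{1+\alpha}{2} \mu_{1} + \frac{1-\alpha}{2} \mu_{2},
\end{equation}
which belongs to $\mathbf{M}_{\mathbb{C}}(\mathfrak{B}^{2})$ since the space of complex measures is a complex linear space. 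By the $\mathbb{C}$-linearity of the Fourier transformation \eqref{def:FT_cm}, its Fourier transform equals $(1+\alpha)/2 \cdot \mathscr{F}\mu_{1} + (1-\alpha)/2 \cdot \mathscr{F}\mu_{2}$, which by construction coincides precisely with $\langle \phi, \#_{\mathrm{add}}^{\alpha}(s,t) \phi\rangle/\|\phi\|^{2}$. Injectivity of the Fourier transformation on $\mathbf{M}_{\mathbb{C}}(\mathfrak{B}^{2})$ then identifies $\mu_{\mathrm{add}}^{\alpha}$ with the QJP distribution generated by $\#_{\mathrm{add}}^{\alpha}$ and $|\phi\rangle$.

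It only remains to verify that $\mu_{\mathrm{add}}^{\alpha}$ is a quasi-probability measure, i.e.\ that $\mu_{\mathrm{add}}^{\alpha}(\mathbb{R}^{2}) = 1$. Since $\mu_{1}$ and $\mu_{2}$ are already quasi-probability measures with $\mu_{k}(\mathbb{R}^{2}) = 1$ (this follows either from the lemma itself or equivalently by evaluating the Fourier transform at $(s,t)=(0,0)$, where both $\#_{1}(0,0)$ and $\#_{2}(0,0)$ reduce to the identity $I$), one has
\begin{equation}
\mu_{\mathrm{add}}^{\alpha}(\mathbb{R}^{2}) = \frac{1+\alpha}{2} \cdot 1 + \frac{1-\alpha}{2} \cdot 1 = 1,
\end{equation}
so the convex-affine weights sum to one and normalisation is preserved.

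There is no serious analytic obstacle here; the entire argument is essentially bookkeeping of $\mathbb{C}$-linearity together with the elementary observation that $(1+\alpha)/2 + (1-\alpha)/2 = 1$. The only mild subtlety worth flagging is that, although $\mu_{\mathrm{add}}^{\alpha}$ is a legitimate complex measure, for non-real $\alpha$ it is genuinely complex-valued and need not be a signed measure, let alone a probability measure; this is fully consistent with the definition of quasi-probability measure given in Section~\ref{sec:CM_II_cond_over_QP}, and is indeed the very reason one works at the broader level of $\mathbf{M}_{\mathbb{C}}(\mathfrak{B}^{2})$ rather than within the cone of positive measures.
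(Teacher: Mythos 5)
Your proposal is correct and matches the argument the paper intends: the corollary is left without an explicit proof precisely because it follows from the lemma by the $\mathbb{C}$-linearity of the space of complex measures and of the Fourier transformation, together with the affine normalisation $\tfrac{1+\alpha}{2}+\tfrac{1-\alpha}{2}=1$, which is exactly what you verify. Nothing is missing.
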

\noindent
In this paper, we call the above sub-family of QJP distributions the \emph{additive complex-parametrised sub-family} of QJP distributions of $A$ and $B$ on $|\phi\rangle$ (or simply, the additive sub-family, for short).

\subsubsection{Convolutive Sub-family}

One realises below that another class of QJP distributions parametrised by a complex number can be introduced. 
We hereby consider the hashed operators of the form
\begin{equation}\label{def:hash_conv_cparam}
\#_{\mathrm{cnv}}^{\alpha}(s,t) := e^{-i\langle s, (1 -\alpha)/2 \rangle A}e^{-itB}e^{-i\langle s, (1 + \alpha)/2 \rangle A}, \quad s \in \mathbb{C}, t \in \mathbb{R},\ \alpha \in \mathbb{C}
\end{equation}
where $\langle s, \alpha \rangle := s_{1} \alpha_{1} + s_{2} \alpha_{2}$ denotes the inner product of
\begin{equation}\label{eq:complex_as_vector}
s = s_{1} + i s_{2}, \quad \alpha = \alpha_{1} + i\alpha_{2},
\end{equation}
each of them understood as real vectors of $\mathbb{R}^{2} \cong \mathbb{C}$, and introduce the \emph{convolutive complex-parametrised sub-family} of QJP distributions of $A$ and $B$ on $|\phi\rangle$ (or simply, the convolutive sub-family, for short) by those elements of $\mathfrak{M}_{A,B}^{\phi}$ that are generated by the hashed operators of the form \eqref{def:hash_conv_cparam} and $|\phi\rangle$.

\paragraph{Linear Transformation}
It is of natural interest to find out the condition as to when an element of the convolutive sub-family admits representation by quasi-probability measures. Obviously, the choice $\alpha = \pm 1$ admits it, since they are also members of the additive sub-family introduced earlier.  As for the other choices of the complex parameter $\alpha \in \mathbb{C}$, we first introduce an auxiliary distribution defined by
\begin{equation}\label{def:FT_gen_dist}
\tilde{u}(s,t) := \frac{\langle \phi, e^{-is_{1} A}e^{-itB}e^{-i s_{2} A} \phi \rangle}{\|\phi\|^{2}}, \quad s \in \mathbb{C}, t \in \mathbb{R},
\end{equation}
where $s = s_{1} + is_{2} \in \mathbb{C}$, $s_{1}, s_{2} \in \mathbb{R}$ is defined as \eqref{eq:complex_as_vector}.  Once there exists a quasi-probability measure $\tilde{\mu}$ such that its Fourier transform coincides with $\mathscr{F}\tilde{\mu} = \tilde{u}$, one finds below that every member of the convolutive sub-family is a linear transform of the quasi-probability measure $\tilde{\mu}$, hence themselves admit representation by quasi-probability measures. To see this, we first introduce the matrix
\begin{align}\label{eq:lin_trans_T}
T_{\alpha} := \left(
    \begin{array}{cc}
     (1 - \alpha_{1})/2 & (1 + \alpha_{1})/2 \\
    -\alpha_{2}/2 & \alpha_{2}/2
    \end{array}
    \right),
\end{align}
defined for each complex number $\alpha = \alpha_{1} + i \alpha_{2} \in \mathbb{C}$, $\alpha_{1}, \alpha_{2} \in \mathbb{R}$.  The Fourier transform $\mathscr{F} \tilde{\mu}_{(T_{\alpha} \times I)}$ of the linear transform of the quasi-probability measure $\tilde{\mu}$ with respect to the operator
\begin{equation}\label{def:two_lin_trans}
(T_{\alpha} \times I)(a,b) := (T_{\alpha}a, b), \quad a \in \mathbb{C}, b \in \mathbb{R},
\end{equation}
reads
\begin{align}
\left( \mathscr{F} \tilde{\mu}_{(T_{\alpha} \times I)} \right)(s,t)
    &= \tilde{u}(T_{\alpha}^{*}s, t) \nonumber \\
    &= \frac{\langle \phi, e^{-i\langle s, (1 -\alpha)/2 \rangle A}e^{-itB}e^{-i\langle s, (1+\alpha)/2 \rangle A} \phi \rangle}{\|\phi\|^{2}},
\end{align}
where we have used \eqref{eq:FT_and_Lin_Trans} in the first equality.
We thus have:
\begin{lemma}[Transformation between Parameters]\label{lem:Trans_b_Param}
Let $A$ and $B$ be self-adjoint operators on $\mathcal{H}$, and let $|\phi\rangle \in \mathcal{H}$.
\begin{enumerate}
\item If the inverse Fourier transform of the auxiliary distribution \eqref{def:FT_gen_dist} admits representation by a quasi-probability measure, then all the members of the convolutive sub-family admit representation by quasi-probability measures.
\item Under the above situation, let $T_{\alpha}$ be the linear transformation defined for every choice of the complex parameter $\alpha \in \mathbb{C}$ as in \eqref{eq:lin_trans_T}, and let $\tilde{u}$ be the auxiliary distribution \eqref{def:FT_gen_dist} and $\tilde{\mu}$ be the quasi-probability measure such that $\mathscr{F}\tilde{\mu} = \tilde{u}$.  Then, every member of the convolutive sub-family can be described as the linear transform of $\tilde{\mu}$ as
\begin{equation}\label{eq:alpha_generation}
\mu^{\phi,\alpha}_{\mathrm{cnv}} = \tilde{\mu}_{(T_{\alpha} \times I)},
\end{equation}
where $\mu^{\phi,\alpha}_{\mathrm{cnv}}$ denotes the quasi-probability measure generated by the hashed operator of the form \eqref{def:hash_conv_cparam}, and $T_{\alpha} \times I$ is the operator defined as in \eqref{def:two_lin_trans}.
\end{enumerate}
\end{lemma}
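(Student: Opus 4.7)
The strategy is to treat the auxiliary distribution $\tilde u$ as a ``master'' Fourier transform from which all members of the convolutive sub-family arise by a change of Fourier variables, and then to recognise that change of variables on the Fourier side as the pullback of a linear transformation on the measure side via the identity \eqref{eq:FT_and_Lin_Trans}. The main computational step is purely a bookkeeping check that the matrix $T_{\alpha}$ in \eqref{eq:lin_trans_T} is precisely the one implementing the reparametrisation, with no analytic difficulty; the only conceptual point is to invoke injectivity of the Fourier transformation to pass from equality of Fourier transforms to equality of the underlying quasi-probability measures.

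First I would compute the adjoint $T_{\alpha}^{*}$ of the real $2\times 2$ matrix in \eqref{eq:lin_trans_T} and evaluate its action on a vector $s = s_{1}+is_{2} \in \mathbb{R}^{2}\cong \mathbb{C}$. Direct transposition gives
\begin{equation*}
T_{\alpha}^{*}s \;=\; \bigl( \tfrac{1-\alpha_{1}}{2}s_{1} - \tfrac{\alpha_{2}}{2}s_{2},\; \tfrac{1+\alpha_{1}}{2}s_{1} + \tfrac{\alpha_{2}}{2}s_{2} \bigr) \;=\; \bigl( \langle s, (1-\alpha)/2 \rangle,\; \langle s, (1+\alpha)/2 \rangle \bigr),
\end{equation*}
where in the second equality I use the identification $(1\pm\alpha)/2 = ((1\pm\alpha_{1})/2, \pm\alpha_{2}/2) \in \mathbb{R}^{2}$ and the real scalar product on $\mathbb{R}^{2}$ employed throughout Section~\ref{sec:qp_qo}. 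So the two entries of $T_{\alpha}^{*}s$ are exactly the two real numbers appearing as the coefficients of $A$ in the hashed operator \eqref{def:hash_conv_cparam}.

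Next, I would apply \eqref{eq:FT_and_Lin_Trans} to the operator $T_{\alpha}\times I$ of \eqref{def:two_lin_trans} acting on $\mathbb{R}^{2}\times \mathbb{R}$. By the product structure of $T_{\alpha}\times I$, its adjoint is $T_{\alpha}^{*}\times I$, so that for any quasi-probability measure $\tilde{\mu}$ on $\mathbb{C}\times \mathbb{R}$ one has
\begin{equation*}
\bigl(\mathscr{F}\tilde{\mu}_{(T_{\alpha}\times I)}\bigr)(s,t) \;=\; (\mathscr{F}\tilde{\mu})(T_{\alpha}^{*}s,\, t).
\end{equation*}
Assuming the hypothesis of (i), namely that there is a quasi-probability measure $\tilde{\mu}$ with $\mathscr{F}\tilde{\mu}=\tilde u$, the computation of the previous paragraph combined with the definition \eqref{def:FT_gen_dist} of $\tilde u$ gives
\begin{equation*}
\bigl(\mathscr{F}\tilde{\mu}_{(T_{\alpha}\times I)}\bigr)(s,t) \;=\; \frac{\langle \phi,\, e^{-i\langle s,(1-\alpha)/2\rangle A} e^{-itB} e^{-i\langle s,(1+\alpha)/2\rangle A}\phi\rangle}{\|\phi\|^{2}} \;=\; \#_{\mathrm{cnv}}^{\alpha}(s,t)/\|\phi\|^{2},
\end{equation*}
which is precisely the Fourier transform of the member $\mu^{\phi,\alpha}_{\mathrm{cnv}}$ of the convolutive sub-family, as defined through \eqref{def:QJP} and \eqref{def:QJP_explicit}.

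Finally, injectivity of the Fourier transformation (recalled in the reference material on $\mathscr{F}$ acting on complex measures) forces $\mu^{\phi,\alpha}_{\mathrm{cnv}} = \tilde{\mu}_{(T_{\alpha}\times I)}$, which is the identity \eqref{eq:alpha_generation} of part (ii). Part (i) follows as an immediate byproduct: since $\tilde{\mu}$ is a quasi-probability measure and the linear transform $\tilde{\mu}_{(T_{\alpha}\times I)}$ is again a complex measure (it is the image measure of $\tilde{\mu}$ under the Borel map $T_{\alpha}\times I$, hence complex and normalised to unity by $\tilde{\mu}(\mathbb{C}\times\mathbb{R})=1$), every member of the convolutive sub-family is represented by a quasi-probability measure. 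No genuine obstacle arises: the only subtlety is to keep careful track of the identification $\mathbb{C}\cong \mathbb{R}^{2}$ and of which scalar product governs the exponents in \eqref{def:hash_conv_cparam}, but once this is fixed the argument reduces to a direct application of \eqref{eq:FT_and_Lin_Trans} and injectivity of $\mathscr{F}$.
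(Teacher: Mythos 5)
Your proposal is correct and follows essentially the same route as the paper: the paper likewise derives the lemma by computing $\bigl(\mathscr{F}\tilde{\mu}_{(T_{\alpha}\times I)}\bigr)(s,t) = \tilde{u}(T_{\alpha}^{*}s,t)$ via \eqref{eq:FT_and_Lin_Trans}, identifying the components of $T_{\alpha}^{*}s$ with the exponents $\langle s,(1\mp\alpha)/2\rangle$ in $\#_{\mathrm{cnv}}^{\alpha}$, and concluding by injectivity of $\mathscr{F}$. Your additional remark that the image measure remains a normalised complex measure (so part (i) follows even when $T_{\alpha}$ is singular) is a correct and welcome bit of explicitness.
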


\paragraph{Representation by Quasi-probability Measures}
Now, observing that the determinant of the linear transform $T_{\alpha}$ reads
\begin{equation}
\det T_{\alpha} = \mathrm{Im}\, \alpha/2,
\end{equation}
one finds that the transformation $\mu \mapsto \mu_{(T_{\alpha} \times I)}$ is invertible if and only if $\alpha \in \mathbb{C} \setminus \mathbb{R}$, for indeed $T_{\alpha} \in \mathrm{GL}(2;\mathbb{R}) \Leftrightarrow \alpha \in \mathbb{C}\setminus\mathbb{R}$.  The product rule \eqref{eq:product_rule_image_measure_lin} then reveals that, one may move from one member of the convolutive sub-family to another by a sequential application of the transformations as
\begin{equation}\label{scheme:trans_param}
\mu^{\alpha} \xrightarrow{\quad T_{\alpha}^{-1} \times I \quad} \tilde{\mu} \xrightarrow{\quad T_{\alpha^{\prime}} \times I \quad} \mu^{\alpha^{\prime}}
\end{equation}
for the choice $\alpha \in \mathbb{C}\setminus\mathbb{R}$ and $\alpha^{\prime} \in \mathbb{C}$.  Combining Lemma~\ref{lem:Trans_b_Param} with the above observation, one concludes:
\begin{corollary}[Representation by Quasi-probability Measures]\label{cor:rep_by_qpm_condition}
The following conditions are equivalent.
\begin{enumerate}
\item The inverse Fourier transform of the distribution $\tilde{u}$ defined in \eqref{def:FT_gen_dist} admits representation by quasi-probability measures.
\item A member of the convolutive sub-family for the choice of the parameter $\alpha \in \mathbb{C}\setminus\mathbb{R}$ admits representation by quasi-probability measures.
\item Every member of the convolutive sub-family admits representation by quasi-probability measures.
\end{enumerate}
\end{corollary}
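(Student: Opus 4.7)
The plan is to establish the cycle $(i) \Rightarrow (iii) \Rightarrow (ii) \Rightarrow (i)$, using Lemma~\ref{lem:Trans_b_Param} as the main engine together with the determinant computation $\det T_{\alpha} = \mathrm{Im}\,\alpha/2$ that precedes the statement. The underlying principle is that the image measure $\mu \mapsto \mu_{(T \times I)}$ of a quasi-probability measure under an invertible linear transformation is again a quasi-probability measure (it is manifestly a complex measure by the change-of-variables formula, and its total mass is preserved since $\mu_{(T\times I)}(\mathbb{K}^{2}) = \mu(\mathbb{K}^{2}) = 1$), and that this assignment is inverted by $(T \times I)^{-1} = T^{-1} \times I$ whenever $T \in \mathrm{GL}(2;\mathbb{R})$.

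First, the implication $(i) \Rightarrow (iii)$ is immediate from Lemma~\ref{lem:Trans_b_Param}(ii): if $\tilde{\mu}$ is a quasi-probability measure with $\mathscr{F}\tilde{\mu} = \tilde{u}$, then for every $\alpha \in \mathbb{C}$ the equality \eqref{eq:alpha_generation} exhibits $\mu^{\phi,\alpha}_{\mathrm{cnv}}$ as the image measure $\tilde{\mu}_{(T_{\alpha}\times I)}$, which is a quasi-probability measure by the remark above. The implication $(iii) \Rightarrow (ii)$ is trivial, since $\mathbb{C}\setminus\mathbb{R}$ is a nonempty subset of $\mathbb{C}$.

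The remaining implication $(ii) \Rightarrow (i)$ is where the restriction $\alpha \in \mathbb{C}\setminus\mathbb{R}$ is needed. Fix such an $\alpha$ and suppose $\mu^{\phi,\alpha}_{\mathrm{cnv}}$ is a quasi-probability measure representing the corresponding element of the convolutive sub-family. Since $\det T_{\alpha} = \mathrm{Im}\,\alpha/2 \neq 0$, the matrix $T_{\alpha}$ lies in $\mathrm{GL}(2;\mathbb{R})$, so $T_{\alpha}^{-1} \times I$ is a well-defined linear automorphism of $\mathbb{K}^{2}$. Define the candidate
\begin{equation*}
\tilde{\mu} := (\mu^{\phi,\alpha}_{\mathrm{cnv}})_{(T_{\alpha}^{-1} \times I)},
\end{equation*}
which is a quasi-probability measure by the preservation property above. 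Using \eqref{eq:FT_and_Lin_Trans} together with \eqref{eq:alpha_generation} and the product rule \eqref{eq:product_rule_image_measure_lin}, I would verify that
\begin{equation*}
\mathscr{F}\tilde{\mu}(s,t) = \bigl(\mathscr{F}\mu^{\phi,\alpha}_{\mathrm{cnv}}\bigr)\bigl((T_{\alpha}^{-1})^{*}s, t\bigr) = \tilde{u}\bigl(T_{\alpha}^{*}(T_{\alpha}^{-1})^{*}s, t\bigr) = \tilde{u}(s,t),
\end{equation*}
so $\tilde{u} = \mathscr{F}\tilde{\mu}$ is the Fourier transform of a quasi-probability measure, establishing $(i)$. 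The sequence of transformations is precisely the scheme \eqref{scheme:trans_param} read from left to right.

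There is no serious obstacle here: everything reduces to the elementary fact that image measures under linear automorphisms behave well under the Fourier transformation and that invertibility of $T_{\alpha}$ is exactly the condition $\alpha \notin \mathbb{R}$ needed to run the argument backwards. The only point worth being careful about is bookkeeping of the adjoint in \eqref{eq:FT_and_Lin_Trans} so as not to confuse $T_{\alpha}$ with $T_{\alpha}^{*}$ in the composition, but this is routine matrix algebra and does not require new ideas.
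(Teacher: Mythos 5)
Your proof is correct and follows essentially the same route the paper intends: Lemma~\ref{lem:Trans_b_Param} gives $(i) \Rightarrow (iii)$, and the invertibility of $T_{\alpha}$ for $\alpha \in \mathbb{C}\setminus\mathbb{R}$ (via $\det T_{\alpha} = \mathrm{Im}\,\alpha/2 \neq 0$) together with the scheme \eqref{scheme:trans_param} and the Fourier identity \eqref{eq:FT_and_Lin_Trans} closes the cycle through $(ii) \Rightarrow (i)$. Your explicit verification that $T_{\alpha}^{*}(T_{\alpha}^{-1})^{*} = I$ and that image measures under automorphisms preserve total mass is exactly the bookkeeping the paper leaves implicit.
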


\paragraph{Explicit Computation of the Members of the convolutive Sub-family}
We shall provide an explicit example of the case in which every member of the convolutive complex-parametrised sub-family admits representation by quasi-probability measures.   
\begin{proposition}\label{prop:construction_of_qjp}
Let $A$ and $B$ self-adjoint, and suppose that $B$ has spectrum $\sigma(B)$ of finite cardinality and that it is non-degenerate
\begin{equation}
B = \sum_{b \in \sigma(B)} b \cdot |b \rangle\langle b|.
\end{equation}
For a quantum state $|\phi\rangle \in \mathcal{H}$ such that the probability of finding the outcomes of $B$ is non-vanishing $|\langle b, \phi \rangle|^{2} \neq 0$  for all its eigenvalues $b \in \sigma(B)$, every member of the convolutive sub-family of QJP distributions admits representation by quasi-probability measures.
\end{proposition}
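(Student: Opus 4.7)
The plan is to invoke Corollary~\ref{cor:rep_by_qpm_condition}, which reduces the claim to showing that the inverse Fourier transform of the auxiliary distribution
\begin{equation*}
\tilde{u}(s_1, s_2, t) := \frac{\langle \phi, e^{-is_1 A} e^{-itB} e^{-is_2 A} \phi\rangle}{\|\phi\|^2}
\end{equation*}
admits representation by a quasi-probability measure on $(\mathbb{R}^3, \mathfrak{B}^3)$. The finite, non-degenerate spectrum assumption on $B$ permits the spectral expansion $e^{-itB} = \sum_{b \in \sigma(B)} e^{-itb}\, |b\rangle\langle b|$, which decomposes $\tilde{u}$ into a finite linear combination of terms. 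My strategy is to exhibit each of these terms explicitly as the Fourier transform of a triple product complex measure built out of spectral data of $A$ and a Dirac measure at the eigenvalue $b$.

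First, I would introduce for each eigenvalue $b \in \sigma(B)$ the pair of finite complex measures on $(\mathbb{R}, \mathfrak{B})$ defined by
\begin{equation*}
\nu_b^+(\Delta) := \langle \phi, E_A(\Delta) b\rangle, \qquad \nu_b^-(\Delta) := \langle b, E_A(\Delta) \phi\rangle,
\end{equation*}
which are related by complex conjugation $\nu_b^- = (\nu_b^+)^*$. By the functional calculus for the self-adjoint operator $A$, their Fourier transforms evaluate to $(\mathscr{F}\nu_b^+)(s_1) = \langle \phi, e^{-is_1 A} b\rangle$ and $(\mathscr{F}\nu_b^-)(s_2) = \langle b, e^{-is_2 A}\phi\rangle$ respectively. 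I would then assemble the candidate complex measure on the product space $(\mathbb{R}^3, \mathfrak{B}^3)$ as
\begin{equation*}
\tilde{\mu} := \|\phi\|^{-2} \sum_{b \in \sigma(B)} (\nu_b^+ \otimes \nu_b^-) \otimes \delta_b,
\end{equation*}
where $\otimes$ denotes the product of complex measures and $\delta_b$ the Dirac measure at $b \in \mathbb{R}$. Since $\sigma(B)$ is finite and each constituent is a finite complex measure, $\tilde{\mu}$ is itself a well-defined finite complex measure.

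Next, using Fubini's theorem for complex measures together with the multiplicativity of the Fourier transform under product measures, a direct computation gives
\begin{equation*}
(\mathscr{F}\tilde{\mu})(s_1,s_2,t) = \|\phi\|^{-2} \sum_{b \in \sigma(B)} \langle \phi, e^{-is_1 A} b\rangle\, \langle b, e^{-is_2 A}\phi\rangle\, e^{-itb} = \tilde{u}(s_1,s_2,t),
\end{equation*}
the last equality being precisely the spectral expansion of $\tilde{u}$. Finally, the normalisation $\tilde{\mu}(\mathbb{R}^3) = 1$ would follow from Parseval's identity $\sum_{b \in \sigma(B)} |\langle b, \phi\rangle|^2 = \|\phi\|^2$ applied to the orthonormal eigenbasis of $B$ (which spans $\mathcal{H}$ by non-degeneracy and finiteness of the spectrum), confirming that $\tilde{\mu}$ qualifies as a quasi-probability measure as required.

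The only real obstacle here is notational bookkeeping: one must carefully manage the identification $\mathbb{C} \cong \mathbb{R}^2$ underlying the parametrisation of the convolutive sub-family and ensure that the Fourier-transform conventions align with those of Section~\ref{sec:qp_qo}. I expect no analytic difficulty whatsoever, since the finiteness of $\sigma(B)$ collapses all potential issues about interchange of summation and integration into a trivial finite sum. The non-degeneracy hypothesis $|\langle b, \phi\rangle|^2 \neq 0$ is in fact not needed for this existence statement; it becomes relevant only when one subsequently extracts conditional quasi-expectations by dividing through by the marginal $\mu_B^{\phi}(\{b\})$.
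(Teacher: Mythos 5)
Your proposal is correct and follows essentially the same route as the paper's proof: both reduce the claim via Corollary~\ref{cor:rep_by_qpm_condition} to constructing a quasi-probability measure whose Fourier transform equals the auxiliary distribution $\tilde{u}$ of \eqref{def:FT_gen_dist}, and both assemble that measure from the finite spectral decomposition of $B$ as a sum over eigenvalues of product measures built from $\langle \phi, E_{A}(\cdot)\, b\rangle$ and $\langle b, E_{A}(\cdot)\, \phi\rangle$ — the paper merely packages the same object as a transition quasi-probability kernel $\nu_{b}^{*}\otimes\nu_{b}^{\phantom{*}}$ (with $\nu_{b}$ normalised by $\langle b,\phi\rangle$) integrated against $\mu_{B}^{\phi}$, and the normalisation factors cancel to give exactly your $\|\phi\|^{-2}\sum_{b}(\nu_{b}^{+}\otimes\nu_{b}^{-})\otimes\delta_{b}$. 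Your closing observation is also correct: since you work with the unnormalised measures $\nu_{b}^{\pm}$ directly, the hypothesis $|\langle b,\phi\rangle|^{2}\neq 0$ is genuinely dispensable for this existence statement, whereas the paper's normalised kernels $\langle b, E_{A}(\cdot)\phi\rangle/\langle b,\phi\rangle$ require it.
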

\begin{proof}
Corollary~\ref{cor:rep_by_qpm_condition} purports that it suffices to construct the quasi-probability measure $\tilde{\mu}$ that satisfies $\mathscr{F}\tilde{\mu} = \tilde{u}$, where $\tilde{u}$ is the auxiliary distribution \eqref{def:FT_gen_dist}.  
Now, under the above conditions, let $b \in \sigma(B)$ and $\Delta_{A} \in \mathfrak{B}^{1}$ be fixed, and introduce the Radon-Nikod{\'y}m derivative
\begin{align}
\nu_{b}^{\phantom{*}}(\Delta_{A})
    &:= (d\nu(\,\cdot\,,\Delta_{A}) / d\mu_{B}^{\phi})(b) \nonumber \\
    &= \frac{\langle b, E_{A}(\Delta_{A})\phi\rangle}{\langle b, \phi\rangle},
\end{align}
where the complex measure $\nu(\,\cdot\,,\Delta_{A})$ was defined in \eqref{def:c_qpm_1}.  For every fixed $b \in \sigma(B)$, this defines a quasi-probability measure $\Delta_{A} \mapsto \nu_{b}^{\phantom{*}}(\Delta_{A})$, which is in fact nothing but a slight generalisation of the quasi-probability measure \eqref{def:aux_qpm_b_a} previously introduced in Section~\ref{sec:ps_II}. Defining the product complex measure
\begin{equation}
K(b,\,\cdot\,) := \nu(b,\,\cdot\,)^{*} \otimes \nu(b,\,\cdot\, ),
\end{equation}
on the product space $\mathbb{R}^{2} \cong \mathbb{C}$ for each $b \in \sigma(B)$, we intend to extend the domain of the variable $b$ to the whole real line to make a transition quasi-probability kernel from $(\mathbb{R},\mathfrak{B})$ into $(\mathbb{C},\mathfrak{B}(\mathbb{C}))$ by defining, for example,
\begin{equation}
\tilde{K}(b,\Delta_{A}) :=
    \begin{cases}
    K(b,\Delta_{A}), & (b \in \sigma(B)) \\
    \delta_{0}(\Delta_{A}) , & (b \notin \sigma(B)),
    \end{cases}
\end{equation}
where $\delta_{0}$ is the delta measure centred at the origin (for the extension into $\mathbb{R} \setminus \sigma(B)$, we could have assigned any quasi-probability measure so that the extension makes a transition quasi-probability kernel as a whole).
Letting $\tilde{\mu}$ denote the quasi-probability measure on the product space $\mathbb{C}\times\mathbb{R}$ defined by $\tilde{K}$ and $\mu_{B}^{\phi}$ by means of
\begin{equation}
\int_{\mathbb{C} \times \mathbb{R}} f(a,b)\ d\tilde{\mu}(a,b) =  \int_{\mathbb{R}} \int_{\mathbb{C}} f(a,b)  \tilde{K}(b,da)\ d\mu_{B}^{\phi}(b),
\end{equation}
(see \eqref{prop:ctk_to_cm}), we maintain that $(\mathscr{F}\tilde{\mu})(s,t) = \langle \phi, e^{-is_{1} A}e^{-itB}e^{-i s_{2} A} \phi \rangle / \|\phi\|^{2}$.
To see this, just let $f(a,b) := e^{-i\langle s,a\rangle}e^{-itb}$ above and compute
\begin{align}
\int_{\mathbb{R}} \int_{\mathbb{C}} e^{-i\langle s,a\rangle}e^{-itb}  \tilde{K}(b,da)\ d\mu_{B}^{\phi}(b)
    &= \sum_{b \in \sigma(B)} \int_{\mathbb{C}} e^{-i\langle s,a\rangle}e^{-itb} K(b,da)\cdot \frac{|\langle b, \phi \rangle|^{2}}{\|\phi\|^{2}} \nonumber \\
    &= \sum_{b \in \sigma(B)} e^{-itb} \frac{\langle \phi, e^{-is_{1}A} b\rangle \langle b, e^{-is_{2}A}\phi\rangle}{\|\phi\|^{2}} \nonumber \\
    &= \frac{\langle \phi, e^{-is_{1} A}e^{-itB}e^{-i s_{2} A} \phi \rangle}{\|\phi\|^{2}},
\end{align}
which was to be demonstrated.  We have thus achieved a concrete construction of the quasi-probability measure, whose Fourier transform is the distribution $\tilde{u}$ defined in \eqref{def:FT_gen_dist}.
\end{proof}
\noindent
In passing, we note that, by comparing the transformation matrices \eqref{eq:lin_trans_T} and \eqref{eq:lin_trans_T_2i}, one finds that the quasi-probability measure $\mu_{A,B}^{\phi}$ obtained in the preceding Section~\ref{sec:ps_II} defined as in \eqref{def:quasi_joint_prob_meas} is nothing but the member of the convolutive complex-parametrised sub-family
\begin{equation}\label{eq:qjpm_2i_revealed}
\mu_{A,B}^{\phi} = \mu^{\phi,i}_{\mathrm{cnv}}
\end{equation}
for the purely imaginary choice $\alpha = i$ of the complex parameter.

\subsubsection{Qualification as Quasi-joint-probability Distributions}
Although we have provided a formal discussion to the problem, it yet remains to be confirmed by a rigorous treatment that every member of either the additive or the convolutive complex-parametrised sub-family of QJP distributions of a pair of quantum observables indeed qualifies as what its name indicates itself to be.  Without loss of generality, we only provide the demonstration for the convolutive sub-family, since the proof for the additive subfamily is essentially the same.
\begin{proposition}[Qualification as Quasi-joint-probability Distributions]
Let $A$ and $B$ be self-adjoint, $|\phi\rangle \in \mathcal{H}$, and suppose that there exists a quasi-probability measure $\mu$ on the product space $\mathbb{C} \times \mathbb{R}$ such that
\begin{equation}
\left( \mathscr{F} \mu \right)(s,t) = \frac{\langle \phi, \#_{\mathrm{cnv}}^{\alpha}(s,t) \phi \rangle}{\|\phi\|^{2}}
\end{equation}
holds for some $\alpha \in \mathbb{C}$. Then, $\mu$ qualifies as a QJP distribution of $A$ and $B$ on $|\phi\rangle$, in the sense that \eqref{def:qjpm} holds.
\end{proposition}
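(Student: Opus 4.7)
The plan is to exploit the injectivity of the Fourier transformation on the space of (quasi-)probability measures and thereby reduce each of the marginal identities to a pointwise identity of Fourier transforms. By Fubini's theorem, if $\mu$ is a quasi-probability measure on $\mathbb{C}\times\mathbb{R}$, its $B$-marginal $\mu_{B}(\Delta_{B}) := \mu(\mathbb{C}\times\Delta_{B})$ is a complex measure on $(\mathbb{R},\mathfrak{B})$ whose Fourier transform at $t\in\mathbb{R}$ is $(\mathscr{F}\mu)(0,t)$, and its $A$-marginal $\mu_{A}(\Delta_{A}) := \mu(\Delta_{A}\times\mathbb{R})$ is a complex measure on $(\mathbb{C},\mathfrak{B}(\mathbb{C}))$ whose Fourier transform at $s\in\mathbb{C}$ is $(\mathscr{F}\mu)(s,0)$. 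Granted these, the task is to check that the boundary slices of $(\mathscr{F}\mu)(s,t)=\langle\phi,\#_{\mathrm{cnv}}^{\alpha}(s,t)\phi\rangle/\|\phi\|^{2}$ coincide with the Fourier transforms of $\mu_{B}^{\phi}$ and $\mu_{A}^{\phi}$, respectively.

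For the $B$-marginal, I would simply substitute $s=0$ into the hashed operator \eqref{def:hash_conv_cparam}; all three factors involving $A$ collapse to the identity, so $\#_{\mathrm{cnv}}^{\alpha}(0,t)=e^{-itB}$ regardless of $\alpha$. Hence $(\mathscr{F}\mu)(0,t)=\langle\phi,e^{-itB}\phi\rangle/\|\phi\|^{2}=(\mathscr{F}\mu_{B}^{\phi})(t)$ by functional calculus, and injectivity of the Fourier transformation on $\mathbf{M}_{\mathbb{C}}(\mathfrak{B})$ yields $\mu_{B}=\mu_{B}^{\phi}$.

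For the $A$-marginal, I substitute $t=0$ and note that the two remaining factors $e^{-i\langle s,(1-\alpha)/2\rangle A}$ and $e^{-i\langle s,(1+\alpha)/2\rangle A}$ are bounded Borel functions of the single self-adjoint operator $A$ and therefore strongly commute. Their product telescopes to $e^{-i(\langle s,(1-\alpha)/2\rangle+\langle s,(1+\alpha)/2\rangle)A}=e^{-is_{1}A}$, writing $s=s_{1}+is_{2}\in\mathbb{C}\cong\mathbb{R}^{2}$; the $\alpha$- and $s_{2}$-dependence drops out entirely. Thus $(\mathscr{F}\mu)(s,0)=\langle\phi,e^{-is_{1}A}\phi\rangle/\|\phi\|^{2}$ for every $s=(s_{1},s_{2})\in\mathbb{R}^{2}$.

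It remains to identify this with $(\mathscr{F}\mu_{A}^{\phi})(s)$. Here I follow the footnote to \eqref{def:qjpm} and the proof in Section~\ref{sec:ps_II}: $\mu_{A}^{\phi}$ is to be read as the probability measure on $(\mathbb{C},\mathfrak{B}(\mathbb{C}))$ induced by the two-dimensional product spectral measure $E_{A}\otimes E_{0}$ of $A$ viewed as a normal operator, where $E_{0}$ is the delta spectral measure centred at the origin defined in \eqref{def:delta_spectral_meas_0}. This measure is supported on $\mathbb{R}\times\{0\}\subset\mathbb{C}$, so its Fourier transform at $s=(s_{1},s_{2})$ reduces via the change of variables formula to $\int_{\mathbb{R}}e^{-is_{1}a}\,d\mu_{A}^{\phi,\mathrm{std}}(a)=\langle\phi,e^{-is_{1}A}\phi\rangle/\|\phi\|^{2}$, matching the slice just computed. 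Injectivity of the Fourier transformation on $\mathbf{M}_{\mathbb{C}}(\mathfrak{B}(\mathbb{C}))$ now gives $\mu_{A}=\mu_{A}^{\phi}$, completing the proof. The main obstacle is not computational but conceptual: one must carefully carry the identification $\mathbb{C}\cong\mathbb{R}^{2}$ through the Fourier transform and correctly interpret $\mu_{A}^{\phi}$ as the two-dimensional normal-operator spectral distribution, after which the apparent $\alpha$-dependence of $\#_{\mathrm{cnv}}^{\alpha}$ visibly evaporates on both coordinate axes.
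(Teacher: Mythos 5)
Your proof is correct and follows essentially the same route as the paper's: express the Fourier transform of each marginal as the slice $(\mathscr{F}\mu)(0,t)$ or $(\mathscr{F}\mu)(s,0)$ of the full transform, evaluate the hashed operator on that slice, and conclude by injectivity of the Fourier transformation. The only difference is that the paper dismisses the $A$-marginal as "analogous," whereas you correctly spell out the telescoping $e^{-i\langle s,(1-\alpha)/2\rangle A}e^{-i\langle s,(1+\alpha)/2\rangle A}=e^{-is_{1}A}$ and the identification of $\mu_{A}^{\phi}$ with the normal-operator spectral measure supported on $\mathbb{R}\times\{0\}\subset\mathbb{C}$, exactly as done in the corresponding computation of Section~\ref{sec:ps_II}.
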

\begin{proof}
We first observe a general result regarding marginals of complex measures and Fourier transformations. Let $\mu$ be a complex measure on the product space $\mathbb{R}^{m} \times \mathbb{R}^{n}$, and define the marginal of $\mu$ by
\begin{align}
\mu_{2} : \Delta \mapsto \mu_{2}(\Delta) := \mu(\mathbb{R}^{m} \times \Delta), \quad \Delta \in \mathfrak{B}^{n},
\end{align}
which is itself a complex measure on $(\mathbb{R}^{n},\mathfrak{B}^{n})$. One then observes
\begin{align}
\hat{\mu}_{2}(p)
    &:= \int_{\mathbb{R}^{n}} e^{-i\langle p,y \rangle}\ d\mu_{2}(y) \nonumber \\
    &= \int_{\mathbb{R}^{(m+n)}} e^{-i\langle 0,x \rangle} e^{-i\langle p,y \rangle}\ d\mu(x,y) \nonumber \\
    &= \hat{\mu}(0,p),
\end{align}
where the second equality is due to the change of variables formula \eqref{eq:Transformationsformel_Bildmass} for the image measure $\mu_{2} = \pi_{2}(\mu)$, where $\pi_{2}(x,y) = y$, $x \in \mathbb{R}^{m}$, $y \in \mathbb{R}^{n}$ is the projection on the second variable.
Applying this fact to our situation as
\begin{equation}
\left( \mathscr{F} \mu \right)(0,t)
    := \frac{\langle \phi, e^{-itB} \phi \rangle}{\|\phi\|^{2}} = \left( \mathscr{F} \mu_{B}^{\phi} \right)(t),
\end{equation}
one readily finds
\begin{equation}
\mu(\mathbb{C} \times \Delta) = \mu_{B}^{\phi}(\Delta), \quad \Delta \in \mathfrak{B}^{1},
\end{equation}
by the injectivity of the Fourier transformation.
One may also demonstrate $\mu(\Delta \times \mathbb{R}) = \mu_{A}^{\phi}(\Delta)$, $\Delta \in \mathfrak{B}(\mathbb{K})$ by an analogous reasoning, which completes our proof.
\end{proof}

\subsubsection{Relation to other known Proposals}

We demonstrate below, in passing, that the complex-parametrised sub-families of the QJP distributions of a pair of quantum observables serve as generalisations to the other well known proposals of quasi-probability distributions.

\paragraph{Kirkwood-Dirac Distribution}

We first note that the Kirkwood-Dirac distribution, introduced in \eqref{def:Kirkwood-Dirac_Distr} in a formal manner, can be given a mathematically rigorous definition within our framework, and that it belongs to both the additive and convolutive sub-families of the QJP distributions for the choice $\alpha = 1$.

\begin{definition*}[Kirkwook-Dirac Quasi-joint-probability Distribution]
Let $A$ and $B$ be self-adjoint on $\mathcal{H}$, and let $|\phi\rangle \in \mathcal{H}$. We call the member of the additive/convolutive sub-family of the QJP distributions of the pair of observables $A$ and $B$ for the choice $\alpha = 1$, the Kirkwook-Dirac QJP distribution of the pair.
\end{definition*}
\noindent
To see how this definition can be justified, observe the following formal chain of expressions
\begin{align}
\int_{\mathbb{R}^{2}} e^{-i(as + bt)} K_{A,B}^{\phi}(a,b)\ dm_{2}(a,b)
    &= \int_{\mathbb{R}^{2}} e^{-i(sa + tb)} \frac{\langle \phi, b\rangle\langle b, a \rangle\langle a, \phi \rangle}{\|\phi\|^{2}}\ dm_{2}(a,b) \nonumber \\
    &= \frac{\langle \phi, e^{-itB}e^{-isA} \phi \rangle}{\|\phi\|^{2}},
\end{align}
where $K_{A,B}^{\phi}$ is the formal definition of the Kirkwood-Dirac distribution introduced in \eqref{def:Kirkwood-Dirac_Distr}. The injectivity of the Fourier transformation leads to the desired statement.

\paragraph{Wigner-Ville Distribution}

We next note that the Wigner-Ville distribution, introduced in \eqref{def:Wigner-Ville_QPD}, is also a special member of the convolutive sub-family of QJP distributions.
\begin{proposition}[Wigner-Ville Distribution]
Let $\{ L^{2}(\mathbb{R}), \mathscr{S}(\mathbb{R}), \{\hat{x}, \hat{p}\}\}$ denote the one-dimensional Schr{\"o}dinger representation of the CCR. Then, for the choice $\psi \in L^{1}(\mathbb{R}) \cap L^{2}(\mathbb{R})$ of the wave-function, the member of the convolutive sub-family of the QJP distributions of the canonically conjugate pair $\hat{p}$ and $\hat{x}$ admits representation by quasi-probability measures for the choice $\alpha = 0$, which we denote by $\mu_{\mathrm{cnv}}^{\psi,0}$. The quasi-probability measure $\mu_{\mathrm{cnv}}^{\psi,0}$ is absolutely continuous, and its Radon-Nikod{\'y}m derivative with respect to the renormalised two-dimensional Lebesgue-Borel measure reads
\begin{equation}
\left( d\mu_{\mathrm{cnv}}^{\psi,0} / dm_{2} \right)(p,x) := W^{\psi}(x,p),
\end{equation}
where the r.~h.~s. is the Wigner-Ville distribution introduced in \eqref{def:Wigner-Ville_QPD}.
\end{proposition}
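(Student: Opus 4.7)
The plan is to verify directly that the Fourier transform of the complex measure $\mu := W^{\psi} \odot m_{2}$ coincides with the generating distribution $u(s,t) := \langle \psi, \#_{\mathrm{cnv}}^{0}(s,t)\psi\rangle/\|\psi\|^{2}$ of the convolutive sub-family at $\alpha = 0$, and then to invoke the injectivity of $\mathscr{F}$ on $\mathbf{M}_{\mathbb{C}}(\mathfrak{B}^{2})$ to identify $\mu$ with $\mu_{\mathrm{cnv}}^{\psi,0}$; the qualification proposition for the convolutive sub-family already established in this section will then automatically upgrade $\mu$ to a bona fide QJP distribution of the pair $\hat{p}$ and $\hat{x}$. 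As a preliminary simplification, observe that for $\alpha = 0$ both coefficients $(1\pm\alpha)/2$ reduce to the real number $1/2$, so $\langle s,(1\pm\alpha)/2\rangle = s_{1}/2$ and the hashed operator specialises to
\[
\#_{\mathrm{cnv}}^{0}(s,t) = e^{-is_{1}\hat{p}/2}\,e^{-it\hat{x}}\,e^{-is_{1}\hat{p}/2},
\]
which depends only on the real part $s_{1}$ of $s$. This immediately suggests that the would-be measure $\mu$ must be supported on the real axis of $\mathbb{C}$ and hence admits a density in the variables $(p,x)\in\mathbb{R}^{2}$, in accordance with the shape of the claim.

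First, I would evaluate $u(s,t)$ explicitly by iterating the Schr\"odinger action of the two unitaries: $e^{-is_{1}\hat{p}/2}$ acts as translation by $s_{1}/2$ while $e^{-it\hat{x}}$ acts as pointwise multiplication by $e^{-itx}$. Successive applications give $(\#_{\mathrm{cnv}}^{0}(s,t)\psi)(x) = e^{-it(x-s_{1}/2)}\,\psi(x-s_{1})$, and after the change of variables $x \mapsto x + s_{1}/2$ centring around the midpoint, the inner product telescopes to
\[
u(s,t) = \frac{1}{\|\psi\|^{2}}\int_{\mathbb{R}}\psi^{*}(x+s_{1}/2)\,\psi(x-s_{1}/2)\,e^{-itx}\,dm(x),
\]
which is precisely the partial Fourier transform, in the second variable, of the kernel $(x,y)\mapsto \psi^{*}(x-y/2)\psi(x+y/2)$ appearing in the definition of the Wigner-Ville distribution.

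Second, I would compute $(\mathscr{F}\mu)(s,t) = \int e^{-i(s_{1}p+tx)}W^{\psi}(x,p)\,dm_{2}(x,p)$ by substituting the defining integral representation of $W^{\psi}$ and invoking Fubini to interchange the $p$- and $y$-integrations. The $p$-integration then produces a delta distribution concentrated at $y = s_{1}$ (the $(2\pi)$-normalisation factors being absorbed by the renormalised measure $dm$), and the remaining double integral collapses to the very expression obtained for $u(s,t)$ in the previous step. By the injectivity of $\mathscr{F}$ on the measure algebra, $\mu$ must then coincide with $\mu_{\mathrm{cnv}}^{\psi,0}$, and the qualification proposition for the convolutive sub-family delivers the marginal conditions required to call $\mu$ a QJP distribution.

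The main obstacle I anticipate is the a priori check that $W^{\psi}$ is sufficiently regular for $\mu := W^{\psi}\odot m_{2}$ to be genuinely a finite complex measure (and not merely a tempered distribution), which is also what legitimises the Fubini exchange above. Here the joint hypothesis $\psi \in L^{1}(\mathbb{R})\cap L^{2}(\mathbb{R})$ is essential: the $L^{1}$ part, combined with the change of variables $u = x-y/2$, $v = x+y/2$ of unit Jacobian, yields $\tilde{\omega}^{\psi} \in L^{1}(\mathbb{R}^{2})$ with total integral $\|\psi\|_{1}^{2}$ up to an overall $(2\pi)$-factor, so that $W^{\psi}$ is at least a bounded continuous function on $\mathbb{R}^{2}$; the $L^{2}$ part, via Plancherel applied in the partial-Fourier direction, simultaneously controls $W^{\psi}$ in $L^{2}(m_{2})$. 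A careful splitting argument (or, equivalently, the systematic use of tempered distributions referred to in the mathematical remarks above) then supplies the interpretation of $W^{\psi}\odot m_{2}$ as a well-defined object to which the Fourier calculus of this section applies, completing the proof.
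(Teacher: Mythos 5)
Your proposal is correct and follows essentially the same route as the paper's own proof: compute the Fourier transform of $W^{\psi}\odot m_{2}$ via Fubini and Fourier inversion in the $p$-variable, recognise the result as $\langle \psi, e^{-is\hat{p}/2}e^{-it\hat{x}}e^{-is\hat{p}/2}\psi\rangle/\|\psi\|^{2}$, and conclude by injectivity of $\mathscr{F}$ on the measure algebra. The only difference is presentational — you evaluate the hashed operator's action on $\psi$ and the transform of $W^{\psi}$ as two separate halves and match them, whereas the paper runs the same identity as a single chain of equalities — and your integrability remarks reproduce the $L^{1}\cap L^{2}$/Plancherel argument already given in the paper's reference material on the Wigner-Ville distribution.
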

\begin{proof}
Observe that the condition $\psi \in L^{1}(\mathbb{R}) \cap L^{2}(\mathbb{R})$ guarantees the integrability $W^{\psi} \in L^{1}(\mathbb{R}^{2})$ of the WV distribution, based on which we compute
\begin{align}
&\int_{\mathbb{R}^{2}} e^{-i(sp + tx)} W^{\psi}(x,p)\ dm_{2}(x,p) \nonumber \\
    &\qquad := \int_{\mathbb{R}^{2}} e^{-i(sp + tx)} \left( \int_{\mathbb{R}} \psi^{*}(x + y/2) \psi(x - y/2)e^{ipy}\ dm(y) \right)dm_{2}(x,p) \nonumber \\
    &\qquad = \int_{\mathbb{R}} e^{-itx} \psi^{*}(x + s/2) \psi(x - s/2)\ dm(x) \nonumber \\
    &\qquad = \int_{\mathbb{R}} e^{-itx} (e^{is\hat{p}/2}\psi)^{*}(x) (e^{-is\hat{p}/2}\psi)(x)\ dm(x) \nonumber \\
    &\qquad = \langle e^{is\hat{p}/2}\psi, e^{-it\hat{x}} e^{-is\hat{p}/2}\psi\rangle \nonumber \\
    &\qquad = \mathscr{F}\left( \mu_{\mathrm{cnv}}^{\psi,0} \right).
\end{align}
Combining \eqref{eq:FT_measure_vs_density} and the injectivity of the Fourier transformation, one arrives at the desired statement.
\end{proof}

\subsection{Some General Properties}

We next observe some general properties of QJP distributions.  We first provide some discussion regarding the operation of taking the complex conjugate, and subsequently seek for the condition for their realness.

\subsubsection{Complex Conjugate}

We are interested in the complex conjugate of QJP distributions of a pair of observables $A$ and $B$ on $|\phi\rangle$.  To this, let $\#$ be a hashed operator of the unitary operators $e^{-isA}$, $e^{-itB}$, and let $|\phi\rangle \in \mathcal{H}$ be such that the QJP distribution generated by them admits representation by a quasi-probability measure $\mu$. By applying \eqref{eq:FT_cc_Measure}, one readily finds that the Fourier transform of the complex conjugate $\mu^{*}$ reads
\begin{align}\label{eq:FT_cc}
(\mathscr{F}\mu^{*})(s,t)
    &= (\mathscr{F}\mu)^{*}(-s,-t) \nonumber \\
    &= \frac{\langle \#(-s,-t) \phi,\phi\rangle}{\|\phi\|^{2}} \nonumber \\
    &= \frac{\langle \phi, \#(-s,-t)^{*} \phi\rangle}{\|\phi\|^{2}}, \quad s, t \in \mathbb{K}
\end{align}
where $\#(s,t)^{*}$ denotes the `adjoint' of the hashed operator.  Since the `involution' $\#(-s,-t)^{*}$ is itself a hashed operator of $e^{-isA}$ and $e^{-itB}$, one concludes that the complex conjugate $\mu^{*}$ is again a QJP distribution of the pair of observables $A$ and $B$, and that it is precisely the distribution generated by the `involution' of the original hashed operator.
One also specifically finds that the sub-family of the QJP distributions $\mathfrak{M}_{A,B}^{\phi}$ that admit representations by quasi-probability measures is closed under the operation of taking the complex conjugate.

Parallel to this, by observing that the left most hand side of \eqref{eq:FT_cc} can be written as
\begin{align}
(\mathscr{F}\mu^{*})(s,t) = \frac{\langle \phi,(\mathscr{F}\Pi^{*})(s,t) \phi \rangle}{\|\phi\|^{2}}, \quad s, t \in \mathbb{K}
\end{align}
where $\Pi = \mathscr{F}^{-1}\#$ is the quasi-joint-spectral distribution, one concludes the validity of the equality
\begin{align}\label{eq:FT_cc_SpectDist}
(\mathscr{F}\Pi^{*})(s,t)
    &=  \#(-s,-t)^{*} \nonumber \\
    &= (\mathscr{F}\Pi)(-s,-t)^{*} \nonumber \\
    &=: (\mathscr{F}\Pi)^{\dagger}(s,t), \quad s, t \in \mathbb{K},
\end{align}
where $(\mathscr{F}\Pi)^{\dagger}$ denotes the `involution' (observe the analogy between \eqref{eq:FT_cc_Measure}).
This shows that the `adjoint' $\Pi^{*}$ of the quasi-joint-spectral distribution of $A$ and $B$ is again a quasi-joint-spectral distribution of the pair, and that it is precisely the inverse Fourier transform of the `involution' of the original hashed operator.

\paragraph{Complex-parametrised Sub-families}

Armed with our findings, one may explicit compute the complex conjugate of the elements of both the additive and the convolutive sub-families, and see that the sub-families are also closed under the operation of taking the complex conjugate.  Indeed, if we respectively introduce
\begin{equation}
\Pi_{\mathrm{add}}^{\alpha} := \mathscr{F}^{-1} \#_{\mathrm{add}}^{\alpha}, \qquad \Pi_{\mathrm{cnv}}^{\alpha} := \mathscr{F}^{-1} \#_{\mathrm{cnv}}^{\alpha}, 
\end{equation}
for the members of the additive and convolutive sub-families, by observing that the `involution' of the hashed operators read
\begin{align}
\#_{\mathrm{add}}^{\alpha}(-s,-t)^{*} &= \#_{\mathrm{add}}^{-\alpha^{*}}(s,t), \\
\#_{\mathrm{cnv}}^{\alpha}(-s,-t)^{*} &= \#_{\mathrm{cnv}}^{-\alpha}(s,t),
\end{align}
one finds
\begin{align}
(\Pi_{\mathrm{add}}^{\alpha})^{*} &= \Pi_{\mathrm{add}}^{-\alpha^{*}}, \\
(\Pi_{\mathrm{cnv}}^{\alpha})^{*} &= \Pi_{\mathrm{cnv}}^{-\alpha}.
\end{align}
This provide explicit formulae for the computation of the complex conjugate of the members of the sub-families, and one specifically finds from it that both the sub-families are closed under the operation of taking the complex conjugate as promised.
Now, fixing $|\phi\rangle \in \mathcal{H}$ of one's choice, one observes:
\begin{lemma}[Complex Conjugate: Additive Sub-family]
Let $\mu_{\mathrm{add}}^{\alpha}$ denote the QJP distribution of $A$ and $B$ generated by $\#_{\mathrm{add}}^{\alpha}$ and $|\phi\rangle$.  Then, its complex conjugate reads
\begin{equation}
\left(\mu_{\mathrm{add}}^{\phi,\alpha}\right)^{*} = \mu_{\mathrm{add}}^{\phi,-\alpha^{*}}.
\end{equation}
\end{lemma}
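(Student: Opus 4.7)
The plan is to apply the general formula \eqref{eq:FT_cc} derived just above the statement of the lemma, which relates the Fourier transform of the complex conjugate of a QJP distribution to the ``involution'' of its generating hashed operator, namely
\begin{equation*}
\left( \mathscr{F} \left(\mu_{\mathrm{add}}^{\phi,\alpha}\right)^{*} \right)(s,t) = \frac{\langle \phi, \#_{\mathrm{add}}^{\alpha}(-s,-t)^{*}\, \phi \rangle}{\|\phi\|^{2}}.
\end{equation*}
By the injectivity of the Fourier transformation, it therefore suffices to verify the operator identity $\#_{\mathrm{add}}^{\alpha}(-s,-t)^{*} = \#_{\mathrm{add}}^{-\alpha^{*}}(s,t)$ for all $s,t \in \mathbb{R}$.

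First I would substitute $(s,t) \mapsto (-s,-t)$ directly in the definition \eqref{def:hash_add_cparam} of $\#_{\mathrm{add}}^{\alpha}$, which since $A$ and $B$ are self-adjoint (so that $(e^{-isA})^{*} = e^{isA}$ and likewise for $B$) yields
\begin{equation*}
\#_{\mathrm{add}}^{\alpha}(-s,-t) = \frac{1+\alpha}{2}\, e^{itB}e^{isA} + \frac{1-\alpha}{2}\, e^{isA}e^{itB}.
\end{equation*}
Next I would take the adjoint of this bounded operator, using the fact that adjunction reverses the order of operator products, conjugates scalar coefficients, and replaces each unitary by its inverse. This gives
\begin{equation*}
\#_{\mathrm{add}}^{\alpha}(-s,-t)^{*} = \frac{1+\alpha^{*}}{2}\, e^{-isA}e^{-itB} + \frac{1-\alpha^{*}}{2}\, e^{-itB}e^{-isA}.
\end{equation*}

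Now I would match this with the definition of $\#_{\mathrm{add}}^{\beta}(s,t)$ for a suitable $\beta \in \mathbb{C}$, reading off the coefficients of the two orderings $e^{-itB}e^{-isA}$ and $e^{-isA}e^{-itB}$: the first coefficient requires $(1+\beta)/2 = (1-\alpha^{*})/2$, and the second requires $(1-\beta)/2 = (1+\alpha^{*})/2$, both of which are satisfied precisely by $\beta = -\alpha^{*}$. This establishes the operator identity $\#_{\mathrm{add}}^{\alpha}(-s,-t)^{*} = \#_{\mathrm{add}}^{-\alpha^{*}}(s,t)$, whereupon sandwiching with $\phi$ gives
\begin{equation*}
\left( \mathscr{F} \left(\mu_{\mathrm{add}}^{\phi,\alpha}\right)^{*} \right)(s,t) = \frac{\langle \phi, \#_{\mathrm{add}}^{-\alpha^{*}}(s,t)\, \phi \rangle}{\|\phi\|^{2}} = \left( \mathscr{F} \mu_{\mathrm{add}}^{\phi,-\alpha^{*}} \right)(s,t),
\end{equation*}
and injectivity of $\mathscr{F}$ delivers the desired equality of quasi-probability measures. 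There is no real obstacle here, as the computation is essentially an exercise in tracking how adjunction and sign reversal interact with the convex combination \eqref{def:hash_add_cparam}; the only subtlety to flag explicitly is that the coefficients $(1\pm\alpha)/2$ are complex scalars, so adjunction must complex-conjugate them, which is exactly what turns $\alpha$ into $-\alpha^{*}$ (rather than $-\alpha$, contrast with the convolutive case).
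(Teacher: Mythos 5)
Your proposal is correct and follows essentially the same route as the paper: the paper establishes the general identity \eqref{eq:FT_cc} relating $\mathscr{F}\mu^{*}$ to the involution $\#(-s,-t)^{*}$ of the generating hashed operator, records the operator identity $\#_{\mathrm{add}}^{\alpha}(-s,-t)^{*} = \#_{\mathrm{add}}^{-\alpha^{*}}(s,t)$, and concludes by injectivity of the Fourier transformation. Your explicit coefficient-matching computation, including the observation that adjunction conjugates the complex scalars $(1\pm\alpha)/2$ (which is exactly what distinguishes $-\alpha^{*}$ here from $-\alpha$ in the convolutive case), is precisely the calculation the paper leaves implicit.
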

\begin{lemma}[Complex Conjugate: Convolutive Sub-family]
Suppose that the member of the convolutive sub-family admits representation by the quasi-probability measure $\mu_{\mathrm{cnv}}^{\phi,\alpha}$ for the choice $\alpha \in \mathbb{C}$.   Then, the member for the choice $-\alpha \in \mathbb{C}$ also admits representation by quasi-probability measures, and the equality
\begin{equation}
\left(\mu_{\mathrm{cnv}}^{\phi,\alpha}\right)^{*} = \mu_{\mathrm{cnv}}^{\phi,-\alpha}.
\end{equation}
holds.
\end{lemma}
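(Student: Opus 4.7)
The plan is to apply the general identity
\[
(\mathscr{F}\mu^{\ast})(s,t) = \frac{\langle \phi,\#(-s,-t)^{\ast}\phi\rangle}{\|\phi\|^{2}}
\]
derived earlier in this subsection from the basic relation $(\mathscr{F}\mu^{\ast})(q)=(\mathscr{F}\mu)^{\ast}(-q)$, with the specialisation $\mu=\mu_{\mathrm{cnv}}^{\phi,\alpha}$ and $\#=\#_{\mathrm{cnv}}^{\alpha}$, and then to identify the right-hand side as the Fourier transform of the convolutive QJP distribution for parameter $-\alpha$ by means of a direct computation of $\#_{\mathrm{cnv}}^{\alpha}(-s,-t)^{\ast}$. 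The conclusion will then follow by the injectivity of the Fourier transformation on the measure algebra, in complete parallel with the computation just carried out for the additive sub-family.

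First I observe that $\mu^{\ast}:=(\mu_{\mathrm{cnv}}^{\phi,\alpha})^{\ast}$ is itself a complex measure on $\mathbb{C}\times\mathbb{R}$, normalised by $\mu^{\ast}(\mathbb{C}\times\mathbb{R})=\mu_{\mathrm{cnv}}^{\phi,\alpha}(\mathbb{C}\times\mathbb{R})^{\ast}=1$, hence a quasi-probability measure. The decisive step will be the explicit computation of the adjoint of the hashed operator. Writing $\alpha=\alpha_{1}+i\alpha_{2}$ with $\alpha_{1},\alpha_{2}\in\mathbb{R}$, the coefficients $\langle s,(1\pm\alpha)/2\rangle$ are real, so each of the three factors constituting $\#_{\mathrm{cnv}}^{\alpha}(s,t)$ is unitary; reversing the order and flipping the sign in each exponent therefore gives
\[
\#_{\mathrm{cnv}}^{\alpha}(s,t)^{\ast}= e^{+i\langle s,(1+\alpha)/2\rangle A}\,e^{+itB}\,e^{+i\langle s,(1-\alpha)/2\rangle A},
\]
and substituting $(s,t)\to(-s,-t)$ restores the original signs in the exponents, yielding
\[
\#_{\mathrm{cnv}}^{\alpha}(-s,-t)^{\ast}= e^{-i\langle s,(1+\alpha)/2\rangle A}\,e^{-itB}\,e^{-i\langle s,(1-\alpha)/2\rangle A}=\#_{\mathrm{cnv}}^{-\alpha}(s,t),
\]
where in the last equality I have used the trivial identity $(1\mp(-\alpha))/2=(1\pm\alpha)/2$.

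Combining these two facts one arrives at
\[
(\mathscr{F}\mu^{\ast})(s,t)=\frac{\langle\phi,\#_{\mathrm{cnv}}^{-\alpha}(s,t)\,\phi\rangle}{\|\phi\|^{2}},
\]
which is precisely the distribution in $\hat{\mathfrak{M}}_{A,B}^{\phi}$ that generates the convolutive member at parameter $-\alpha$. This shows at once that the convolutive member for $-\alpha$ admits representation by a quasi-probability measure (namely by $\mu^{\ast}$ itself), and the injectivity of $\mathscr{F}$ on $\mathbf{M}_{\mathbb{C}}(\mathfrak{B}(\mathbb{C}\times\mathbb{R}))$ then yields the claimed equality $(\mu_{\mathrm{cnv}}^{\phi,\alpha})^{\ast}=\mu_{\mathrm{cnv}}^{\phi,-\alpha}$. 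I do not anticipate any genuine obstacle: the entire argument reduces to verifying the adjoint formula for a product of three unitaries and performing the substitution $(s,t)\to(-s,-t)$, both of which are elementary sign-bookkeeping exercises.
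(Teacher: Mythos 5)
Your proof is correct and follows essentially the same route as the paper: it invokes the general identity $(\mathscr{F}\mu^{*})(s,t)=\langle\phi,\#(-s,-t)^{*}\phi\rangle/\|\phi\|^{2}$ established earlier in the subsection, computes the involution $\#_{\mathrm{cnv}}^{\alpha}(-s,-t)^{*}=\#_{\mathrm{cnv}}^{-\alpha}(s,t)$ exactly as the paper does, and concludes by injectivity of the Fourier transformation. The explicit verification that the exponents are real (hence each factor unitary) and that $\mu^{*}$ is normalised is a welcome, if minor, addition to the paper's more compressed presentation.
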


\subsubsection{Realness of the QJP Distributions}
One may naturally be interested in the condition as to when the quasi-joint-spectral distribution $\Pi = \mathscr{F}^{-1}\#$ becomes `self-adjoint' so that the resulting QJP distribution, symbolically denoted by $p(a,b) = \langle \phi, \Pi(a,b) \phi \rangle/\|\phi\|^{2}$, is also `real' for any choice of the vector $|\phi\rangle \in \mathcal{H}$.  While the task of finding the explicit condition for which $\Pi(a,b)$ becomes `self-adjoint' seems at first non-trivial, the problem becomes significantly tractable if one considers its Fourier transform.  Indeed, combining \eqref{eq:FT_cc_SpectDist} with the injectivity of the Fourier transform, one concludes that $\Pi = \Pi^{*}$ is `self-adjoint' if and only if its Fourier transform (namely, the hashed operator) $\# =  \#^{\dagger}$ is a `self-involution'.  Examples of such `self-involutive' hashed operators are provided by
\begin{align}
\#(s,t) =
    \begin{cases}
        e^{-itB/2}e^{-isA}e^{-itB/2}, \\
        e^{-isA/2}e^{-itB}e^{-isA/2}, \\
        \frac{1}{2}\left( e^{-itB}e^{-isA} + e^{-isA}e^{-itB} \right), \\
        e^{-itB/L_{N}}e^{-isA/M_{N}} \cdots e^{-itB/L_{1}}e^{-isA/M_{1}}e^{-itB/L_{1}} \cdots e^{-isA/M_{N}}e^{-itB/L_{N}}, \\
        e^{-i\overline{(sA + tB)}} = \lim_{N \to \infty} \left( e^{-isA/N} e^{-itB/N} \right)^{N}, \\
        \textit{etc.},
    \end{cases}
\end{align}
where $\sum_{k=1}^{N}M_{k}^{-1} = 1$, $\sum_{k=1}^{N}L_{k}^{-1} = 1$ in the third example.  Colloquially speaking, hashed operators in which the disintegrated components of the unitary operators appear `symmetrically' provide straightforward examples.
As for our concrete examples, one finds:
\begin{corollary}[Condition for Realness]
A member of either the additive or convolutive sub-families of QJP distributions of $A$ and $B$ for the choice $\alpha = 0$ is always real.
\end{corollary}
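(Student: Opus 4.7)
The plan is to deduce realness directly from the two preceding lemmas on the complex conjugates of members of each sub-family, specialised to $\alpha = 0$. Both lemmas express the complex conjugate of a member of the family as another member of the same family, with a transformed parameter; the key observation is that $\alpha = 0$ is a fixed point of both parameter transformations.

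More concretely, I would first note that the lemma on the additive sub-family states $(\mu_{\mathrm{add}}^{\phi,\alpha})^{*} = \mu_{\mathrm{add}}^{\phi,-\alpha^{*}}$, and the lemma on the convolutive sub-family states $(\mu_{\mathrm{cnv}}^{\phi,\alpha})^{*} = \mu_{\mathrm{cnv}}^{\phi,-\alpha}$ (the latter under the assumption that the member for the chosen $\alpha$ admits representation by a quasi-probability measure). Substituting $\alpha = 0$ into either formula yields $(\mu_{\mathrm{add}}^{\phi,0})^{*} = \mu_{\mathrm{add}}^{\phi,0}$ and $(\mu_{\mathrm{cnv}}^{\phi,0})^{*} = \mu_{\mathrm{cnv}}^{\phi,0}$. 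Since a complex measure that coincides with its own complex conjugate is necessarily a signed (real-valued) measure, this gives the desired realness of both QJP distributions at $\alpha = 0$, for any choice of $|\phi\rangle \in \mathcal{H}$.

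As an alternative (and more structural) route, one can appeal to the general observation, made just before the corollary, that $\Pi = \mathscr{F}^{-1}\#$ is self-adjoint precisely when the hashed operator satisfies $\#(s,t) = \#(-s,-t)^{*}$, and one then verifies that the two hashed operators
\begin{equation*}
\#_{\mathrm{add}}^{0}(s,t) = \tfrac{1}{2}\bigl( e^{-itB}e^{-isA} + e^{-isA}e^{-itB} \bigr), \qquad \#_{\mathrm{cnv}}^{0}(s,t) = e^{-isA/2}\,e^{-itB}\,e^{-isA/2}
\end{equation*}
are manifestly self-involutive. The corresponding quasi-joint-spectral distributions are therefore self-adjoint, and hence the scalar distribution $\langle \phi, \Pi(a,b) \phi \rangle / \|\phi\|^{2}$ is real for every state $|\phi\rangle$.

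There is no substantial obstacle in this proof: once the two lemmas on complex conjugates (or equivalently, the Fourier-transform identity \eqref{eq:FT_cc_SpectDist}) are in place, the corollary is a direct specialisation. The only subtle point worth mentioning explicitly is that, for the convolutive family, one should either invoke Corollary~\ref{cor:rep_by_qpm_condition} to reduce representability at $\alpha = 0$ to representability at a single non-real parameter, or assume the relevant representability hypothesis as part of the statement; under that proviso, the argument is complete.
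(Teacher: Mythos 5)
Your proposal is correct and follows essentially the same route as the paper: the corollary is presented there as an immediate consequence of the self-involution criterion $\#(s,t)=\#(-s,-t)^{*}$ for realness (with $\#_{\mathrm{add}}^{0}$ and $\#_{\mathrm{cnv}}^{0}$ appearing explicitly in the paper's list of self-involutive hashed operators), which is equivalent to specialising the two complex-conjugate lemmas at the fixed point $\alpha=0$ as you do. Your remark on the representability proviso for the convolutive family is a sensible and accurate addition.
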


\subsection{Conditioned Measurement Revisited}

We finally investigate how the CM scheme described in Section~\ref{sec:ps_II} fits into our general framework of quasi-joint-probabilities of quantum observables. What we see below is that the CM scheme is essentially \emph{a measurement scheme for measuring QJP distributions of an arbitrary pair of quantum observables}. As before, since the tools for the analysis of the most general cases are beyond the scope of this paper, we shall exclusively concentrate on the subfamily of quasi-joint-probabilities parametrised by a single complex number.
Without loss of generality, we only provide below a demonstration for the convolutive sub-family for simplicity.

\subsubsection{Short Introduction}

We now intend to construct a measurement scheme for obtaining the member of the convolutive sub-family of the QJP distributions for arbitrary choices of the parameter $\alpha \in \mathbb{C}$. 
As for the problem, let us first recall that the quasi-probability measure \eqref{def:quasi_joint_prob_meas} obtained in Section~\ref{sec:ps_II} was nothing but the member for the choice of the parameter $\alpha = i$ (see \eqref{eq:qjpm_2i_revealed} for the discussion). In fact, as we have seen before, once we know the member of the subfamily for the parameter $\alpha \in \mathbb{C}\setminus\mathbb{R}$, we may compute all other members of the complex parameters by sequentially applying linear transformations as depicted in \eqref{scheme:trans_param}.
Hence, the knowledge of the distribution for the choice $\alpha = i$, obtained by means of the CM scheme in view of the WV distribution, actually suffices for our purpose.
Even so, one might be interested in how one could measure the QJP distribution for some specific parameter in a more direct manner. This should also provide a much more transparent view of the measurement scheme described in Section~\ref{sec:ps_II} from a more general viewpoint, which may be beneficial in its own right.

\paragraph{Model and Assumption}
Throughout this subsection, we let $A$ denote an observable on the target system $\mathcal{H}$, and assume that the meter system $\mathcal{K}$ is described  by the one-dimensional Schr{\"o}dinger representation of the CCR $\{ L^{2}(\mathbb{R}), \mathscr{S}(\mathbb{R}), \{\hat{x}, \hat{p}\}\}$ for simplicity.
As usual, we prepare the two systems into their respective initial states $|\phi\rangle \in \mathcal{H}$, $|\psi\rangle \in \mathcal{K}$, and let them interact under the unitary operator $e^{-igA \otimes Y}$, $g \in \mathbb{R}$, for which we choose $Y=\hat{p}$ for definiteness, and let $|\Psi^{g}\rangle$ denote the state of the composite system after the interaction.
Since we intend to confine ourselves within the framework of complex measures, we place several conditions throughout this passage, so that, given a conditioning observable $B$ on the target system $\mathcal{H}$, all the members of the convolutive sub-family of the QJP distributions of $A$ and $B$ on $|\phi\rangle$ admits representation by quasi-probability measures.

\subsubsection{Conditioned Measurement Revisited}

In the previous section, the choice of the QJP distribution we intend to measure on the meter system was the WV distribution, which we found to be nothing but the member of our convolutive sub-family of the QJP distributions of the canonically conjugate pair of observables $A=\hat{p}$, $B=\hat{x}$ for the choice of the parameter $\alpha = 0$.  The result was that, one could obtain the member of the convolutive sub-family of the QJP distributions for arbitrary pairs of quantum observables for the choice $\alpha = i$. 
Motivated by this finding, it is then natural to conjecture that a different choice of the meter QJP distribution results in different choice of the target QJP distribution.

\paragraph{Meter QJP}
The starting point would be to find the equivalent object to $\omega^{\psi}$ for the other choices of the parameter $\alpha \in \mathbb{C}$. To this end, we first  assume $\psi \in L^{1}(\mathbb{R}) \cap L^{2}(\mathbb{R})$, and introduce the function
\begin{equation}
\tilde{\omega}^{\psi,\alpha_{1}}(x,y) := \psi^{*}\left(x - y (1 - \alpha_{1})/2\right) \psi\left(x + y (1 + \alpha_{1})/2 \right)
\end{equation}
and also its Fourier transform
\begin{equation}
\tilde{W}^{\psi,\alpha_{1}}(x,p) := \int_{\mathbb{R}} e^{-ipy} \tilde{\omega}^{\psi,\alpha_{1}}(x,y)\ dm(y),
\end{equation}
where we let $\alpha = \alpha_{1} + i\alpha_{2}$, $\alpha_{1}, \alpha_{2} \in \mathbb{R}$.
Needless to say, the function $\tilde{\omega}^{\psi,0} = \tilde{\omega}^{\psi}$ for the choice $\alpha_{1} = 0$ reduces to the original function introduced in \eqref{def:func_V}, and thus $\tilde{W}^{\psi,0} = \tilde{W}^{\psi}$ is nothing but the (yet-to-be-normalised) WV distribution.  By computing the Fourier transform
\begin{align}\label{eq:FT_W_w_alpha}
\left(\mathscr{F}\tilde{W}^{\psi,\alpha_{1}}\right)(q,y)
    &= \int_{\mathbb{R}^{2}} e^{-i(qx + yp)}\left( \int_{\mathbb{R}} e^{-ipy} \tilde{\omega}^{\psi,\alpha_{1}}(x,y)\ dm(y) \right)dm_{2}(x,p) \nonumber \\
    &= \int_{\mathbb{R}} e^{-iqx} \tilde{\omega}^{\psi,\alpha_{1}}(x,-y)\ dm(x) \nonumber \\
    &= \int_{\mathbb{R}} e^{-iqx} \psi^{*}\left(x + y (1 - \alpha_{1})/2\right) \psi\left(x - y (1 + \alpha_{1})/2 \right)\ dm(x) \nonumber \\
    &= \left\langle \psi, e^{-i\langle y, (1 -\alpha_{1})/2 \rangle \hat{p}} e^{-iq\hat{x}}e^{- i\langle y, (1 + \alpha_{1})/2 \rangle \hat{p}} \psi \right\rangle,
\end{align}
one concludes from the injectivity of the Fourier transformation that the normalisation
\begin{align}
W^{\psi,\alpha_{1}}
    &:= \tilde{W}^{\psi,\alpha_{1}} / \|\psi\|^{2} \nonumber \\
    &= \left( d\mu_{\mathrm{cnv}}^{\psi,\alpha_{1}} / dm_{2} \right)
\end{align}
is nothing but the Radon-Nikod{\'y}m derivative of the member of the convolutive sub-family of the QJP distributions of the canonically conjugate pair $A = \hat{p}$, $B = \hat{x}$ for the choice of the real parameter $\alpha_{1} \in \mathbb{R}$.

\paragraph{Rescaling}
For simplicity of the argument, we only treat the case for the choice $\alpha \in \mathbb{C} \setminus \mathbb{R}$, and for later convenience, we introduce the function
\begin{equation}
\tilde{\upsilon}^{\psi,\alpha}\left(x,y \right) := 
|2/\alpha_{2}|^{-1}\tilde{\omega}^{\psi,-\alpha_{1}}(x,(2/\alpha_{2})y), \quad (\alpha_{2} \neq 0)
\end{equation}
for a given choice of the parameter $\alpha \in \mathbb{C}\setminus\mathbb{R}$ (note the minus sign for the real part $\alpha_{1} := \mathrm{Re}\, \alpha$ in the definition).
Its Fourier transform then reads
\begin{align}\label{eq:FT_upsilon_alpha}
\int_{\mathbb{R}} e^{-iqx} \tilde{\upsilon}^{\psi,\alpha}(x,-y)\ dm(x)
    &= \left\langle \psi, e^{-i\langle (2/\alpha_{2})y, (1 +\alpha_{1})/2 \rangle \hat{p}} e^{-iq\hat{x}}e^{- i\langle (2/\alpha_{2})y, (1 - \alpha_{1})/2 \rangle \hat{p}} \psi \right\rangle \nonumber \\
    &= \left\langle \psi, e^{-i\langle y, (1 +\alpha_{1})/\alpha_{2} \rangle \hat{p}} e^{-iq\hat{x}}e^{- i\langle y, (1 - \alpha_{1})/\alpha_{2} \rangle \hat{p}} \psi \right\rangle,
\end{align}
where we have combined the second and the last equality of \eqref{eq:FT_W_w_alpha}, and applied the result \eqref{eq:Scaling_Fourier_translation}.

\paragraph{QJP of the `conditional' Meter State}
The next step is to compute the function $\tilde{\upsilon}^{\psi^{g}_{b},\alpha}$ for the `conditional' meter state $\psi^{g}_{b} := \psi^{g}_{B=b}$ introduced in \eqref{def:pure_state_given_b}.  What we find below is that, parallel to the findings in Section~\ref{sec:ps_II}, the resulting function $\tilde{\upsilon}^{\psi^{g}_{b},\alpha}$ is provided by the convolution of the initial profiles of both the meter and the target configurations. 
As above, we assume, for the ease of demonstration, that both the target and the conditioning observables $A$ and $B$ have spectra of finite cardinality, that $B$ is degenerate, and the probability of finding the outcomes of $B$ is non-vanishing $|\langle b, \phi \rangle|^{2} \neq 0$  for all its eigenvalues $b \in \sigma(B)$.  Since the essence of the demonstration is substantially the same as those provided in Section~\ref{sec:ps_II}, we proceed by sketching the proofs.

In computing the function of our interest, we first compute its Fourier transform to observe
\begin{align}
&\int_{\mathbb{R}} e^{-iqx} \tilde{\upsilon}^{\psi^{g}_{b},\alpha}(x,-y)\ dm(x) \nonumber \\
    &\quad = \left\langle \psi^{g}_{b}, e^{-i\langle y, (1 +\alpha_{1})/\alpha_{2} \rangle \hat{p}} e^{-iq\hat{x}}e^{- i\langle y, (1 - \alpha_{1})/\alpha_{2} \rangle \hat{p}} \psi^{g}_{b} \right\rangle \nonumber \\
    &\quad = \int_{\mathbb{R}^{2}} \left\langle e^{-igs_{1}\hat{p}} \psi, e^{-i\langle y, (1 +\alpha_{1})/\alpha_{2} \rangle \hat{p}} e^{-iq\hat{x}}e^{- i\langle y, (1 - \alpha_{1})/\alpha_{2} \rangle \hat{p}} e^{-igs_{2}\hat{p}}\psi \right\rangle d(\nu_{b}^{*} \otimes \nu_{b}^{\phantom{*}})(s_{1},s_{2}),
\end{align}
where we have used \eqref{eq:FT_upsilon_alpha} in the first equality, and where $\nu_{b}$ is the quasi-probability measure introduced in \eqref{def:aux_qpm_b_a}.
We next change variables of the above equality according to the linear transformation
\begin{align}
\left(
 \begin{array}{l}
a_{1} \\
a_{2}
 \end{array}
\right)
= T_{\alpha}
\left(
 \begin{array}{l}
 s_{1} \\
 s_{2}
 \end{array}
\right),
\qquad 
T_{\alpha} := \left(
    \begin{array}{cc}
     (1 - \alpha_{1})/2 & (1 + \alpha_{1})/2 \\
    -\alpha_{2}/2 & \alpha_{2}/2
    \end{array}
    \right)
\end{align}
by substituting
\begin{align}
 s_{1} = a_{1} - \frac{1+\alpha_{1}}{\alpha_{2}}a_{2}, \quad s_{2} = a_{1} + \frac{1-\alpha_{1}}{\alpha_{2}}a_{2},
\end{align}
to find
\begin{align}
&\int_{\mathbb{R}} e^{-iqx} \tilde{\upsilon}^{\psi^{g}_{b},\alpha}(x,-y)\ dm(x) \nonumber \\
    & \quad = \int_{\mathbb{R}^{2}} \left\langle \psi, e^{-i\langle y - ga_{2}, (1 +\alpha_{1})/\alpha_{2} \rangle \hat{p}} e^{-i(q\hat{x} - ga_{1}I)}e^{- i\langle y - ga_{2}, (1 - \alpha_{1})/\alpha_{2} \rangle \hat{p}} e^{-igs\hat{p}}\psi \right\rangle\ d\mu_{A}^{\phi,\alpha}(a|B=b) \nonumber \\
    & \quad = \int_{\mathbb{R}^{2}} \left( \int_{\mathbb{R}} e^{-iqx} \tilde{\upsilon}^{\psi,\alpha}(x - ga_{1},-(y - ga_{2}))\ dm(x) \right) d\mu_{A}^{\phi,\alpha}(a|B=b) \nonumber \\
    & \quad =  \int_{\mathbb{R}}  e^{-iqx} \left(\int_{\mathbb{R}^{2}} \tilde{\upsilon}^{\psi,\alpha}(x - ga_{1},-(y - ga_{2}))\  d\mu_{A}^{\phi,\alpha}(a|B=b) \right)dm(x),
\end{align}
where $\mu_{A}^{\phi,\alpha}(\Delta|B=b) := (\nu_{b}^{*} \otimes \nu_{b}^{\phantom{*}})(T_{\alpha}^{-1}\Delta)$, $\Delta \in \mathfrak{B}^{2}$ is the image measure, and we have combined \eqref{eq:FT_upsilon_alpha} with \eqref{eq:Fourier_translation} to obtain the second equality.
One thus concludes from the injectivity of the Fourier transformation that
\begin{equation}
\tilde{\upsilon}^{\psi^{g}_{b},\alpha}(x,y) = \int_{\mathbb{R}^{2}}  \tilde{\upsilon}^{\psi,\alpha}(x - ga_{1},y - ga_{2})\  d\mu_{A}^{\phi,\alpha}(a|B=b),
\end{equation}
as promised.

\paragraph{Recovery of the Target QJP}
As for the recovery of the target information $\Delta \mapsto \mu_{A}^{\phi,\alpha}(\Delta|B=b)$, $\Delta \in \mathfrak{B}(\mathbb{C})$, one may resort to the familiar techniques we have discussed so far in depth, namely, one may recover the full profile by either probing the strong or the weak region of the interaction parameter.  Once we obtained $\mu_{A}^{\phi,\alpha}(\,\cdot\,|B=b)$ for all $b \in \sigma(B)$, one may extend the domain of $b \in \sigma(B)$ to the whole real line $\mathbb{R}$ in a consistent manner, making it a transition quasi-probability kernel.  This allows us to construct the QJP $\mu_{A,B}^{\phi,\alpha}$ of the pair of $A$ and $B$ in a manner described in Proposition~\ref{prop:ctk_to_cm} that satisfies
\begin{equation}
\mu_{A,B}^{\phi,\alpha}(\Delta_{A} \times \Delta_{B}) = \int_{\Delta_{B}} \mu_{A}^{\phi,\alpha}(\Delta_{A} | B=b)\ d\mu_{B}^{\phi}(b), \quad \Delta_{A} \in \mathfrak{B}(\mathbb{C}),\ \Delta_{B} \in \mathfrak{B}^{1}.
\end{equation}
A close look on the proof of Proposition~\ref{prop:construction_of_qjp} leads one to conclude that the QJP obtained here
\begin{equation}
\mu_{A,B}^{\phi,\alpha} = \mu_{\mathrm{cnv}}^{\phi,\alpha},
\end{equation}
is in fact nothing but the member of the convolutive sub-family for the choice $\alpha \in \mathbb{C} \setminus \mathbb{R}$, and that $\mu_{A}^{\phi,\alpha}(\,\cdot\,|B=b)$ is the conditional quasi-probability distribution of $A$ given $B=b$.

\newpage
\section{Application: Interpretation of Aharonov's Weak Value}\label{sec:app}

As an application of the findings on the QJP distributions of quantum observables, we now focus on the geometric structure that the QJP distributions induce in the space of quantum observables.  Specifically, by drawing an analogy between the result of classical probability theory, we provide a geometric and statistical interpretation of Aharonov's weak value as `orthogonal projection' and `conditional average', respectively.

\subsection{Reference Materials}\label{sec:app_ref}

As usual, we start by preparing some necessary materials that become useful for our analysis.  The main objective of this subsection is to obtain a geometric understanding of conditional expectations in classical probability theory.

\subsubsection{$L^{2}$-Theory of Conditional Expectations}

\paragraph{$L^{p}$-spaces for finite Measures}
Let $\mu$ be a finite measure on a measurable space $(X,\mathfrak{A})$, {\it i.e.}, $\mu(X) < \infty$, and let $1 \leq p < q \leq \infty$. By defining $r > 0$ satisfying  $\frac{1}{r} = \frac{1}{p} - \frac{1}{q} $, a direct application of H{\"o}lder's inequality yields
\begin{equation}
\|f\|_{p} \leq \|f\|_{q} \cdot \|1\|_{r} = \|f\|_{q} \cdot |\mu(X)|^{1/r} < \infty
\end{equation}
for $f \in L^{q}(\mu)$.  The following Lemma is worth of special notice.
\begin{lemma}
Let $\mu$ be a finite measure on a measurable space $(X,\mathfrak{A})$. Then, for any $1 \leq p \leq q \leq \infty$, the relation
\begin{equation}
L^{q}(\mu) \subset L^{p}(\mu)
\end{equation}
holds.
\end{lemma}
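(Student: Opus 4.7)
The proof is essentially indicated by the computation presented immediately before the lemma statement, so my plan is to formalise it into a complete demonstration by separating the two cases $q=\infty$ and $q<\infty$, and in the latter case making the application of H\"older's inequality (stated at the end of Section~\ref{sec:usp_I_MI}) fully explicit.

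First I would dispense with the case $q=\infty$ directly: for any $f\in L^{\infty}(\mu)$ one has the universal $\mu$-a.e. bound $|f|\leq \|f\|_{\infty}$ established in the Reference Material, so integrating the $p$-th power and using the finiteness of $\mu$ gives
\begin{equation}
\|f\|_{p}^{p} = \int_{X} |f|^{p}\,d\mu \;\leq\; \|f\|_{\infty}^{p}\cdot\mu(X) \;<\;\infty,
\end{equation}
hence $f\in L^{p}(\mu)$ with $\|f\|_{p}\leq \|f\|_{\infty}\cdot\mu(X)^{1/p}$.

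Next, for the case $1\leq p<q<\infty$, the plan is to apply H\"older's inequality to the pair of functions $|f|^{p}$ and the constant function $\mathbf{1}$ with conjugate exponents $\alpha := q/p > 1$ and $\alpha' := q/(q-p)$, so that $1/\alpha + 1/\alpha' = 1$. Then
\begin{equation}
\|f\|_{p}^{p} \;=\; \big\| |f|^{p}\cdot\mathbf{1}\big\|_{1} \;\leq\; \big\| |f|^{p}\big\|_{\alpha}\cdot\|\mathbf{1}\|_{\alpha'} \;=\; \|f\|_{q}^{p}\cdot\mu(X)^{(q-p)/q},
\end{equation}
which, upon taking the $p$-th root and noting that $(q-p)/(pq) = 1/p-1/q = 1/r$, yields the bound $\|f\|_{p}\leq \|f\|_{q}\cdot\mu(X)^{1/r}$ quoted in the preceding paragraph. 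In particular $\|f\|_{p}<\infty$, so $f\in L^{p}(\mu)$ as required.

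I expect no real obstacle, since both ingredients (H\"older's inequality and the essential bound $|f|\leq \|f\|_{\infty}$ $\mu$-a.e.) are already at our disposal, and the finiteness of $\mu$ is precisely what is needed to make $\|\mathbf{1}\|_{\alpha'}=\mu(X)^{1/\alpha'}$ a finite constant. The only minor subtlety worth remarking on in the write-up is that the argument implicitly identifies equivalence classes with their representatives when applying H\"older (as is standard in the theory of $L^{p}$-spaces), and that the inclusion $L^{q}(\mu)\subset L^{p}(\mu)$ may fail for $\sigma$-finite but non-finite measures, the Lebesgue-Borel measure $\beta$ on $\mathbb{R}$ being the prototypical counterexample.
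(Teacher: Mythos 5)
Your proof is correct and follows essentially the same route as the paper, which simply invokes H\"older's inequality in the form $\|f\|_{p}\leq\|f\|_{q}\cdot\|1\|_{r}$ with $\tfrac{1}{r}=\tfrac{1}{p}-\tfrac{1}{q}$ and the finiteness of $\mu(X)$. Your write-up is in fact slightly more careful than the paper's, since you derive this generalised H\"older estimate from the stated two-exponent version by applying it to the pair $|f|^{p}$ and $\mathbf{1}$, and you treat the case $q=\infty$ separately via the essential bound $|f|\leq\|f\|_{\infty}$ $\mu$-a.e.
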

\noindent
Specifically, for probability spaces, note that one has the evaluation $\|f\|_{p} \leq \|f\|_{q}$ for the choice of the parameters $1 \leq p \leq q \leq \infty$.

\paragraph{Conditional Expectations for square-integrable Functions}
Now, consider a probability space $(\mathbb{R}^{n}, \mathfrak{B}^{n}, \mu)$, and let $\mathfrak{A} \subset \mathfrak{B}^{n}$ be a sub-$\sigma$-algebra.  Since every square-integrable function $f \in L^{2}(\mu) \subset L^{1}(\mu)$ is integrable due to the above Lemma, its conditional expectation $\mathbb{E}[f|\mathfrak{A}] := d (f \odot \mu)|_{\mathfrak{A}} / d \mu|_{\mathfrak{A}} \in L^{1}(\mu|_\mathfrak{A})$ is well-defined, where $(f \odot \mu)|_{\mathfrak{A}}$ and $\mu|_\mathfrak{A}$ denotes the restriction of the respective (complex) measures on the sub-$\sigma$-algebra.  Now, observe that, for any square-integrable function $g \in L^{2}(\mu|_{\mathfrak{A}})$, the equality
\begin{align}\label{eq:cond_exp_orth_proj_adj}
\langle g, f \rangle 
    &:= \int_{\mathbb{R}^{n}} g^{*}(x) f(x)\ d\mu(x) \nonumber \\
    &= \int_{\mathbb{R}^{n}} g^{*}(x)\ d(f \odot \mu)(x) \nonumber \\
    &= \int_{\mathbb{R}^{n}} g^{*}(x) \frac{d (f \odot \mu)|_{\mathfrak{A}}}{d \mu|_{\mathfrak{A}}}(x)\ d \mu|_{\mathfrak{A}}(x) \nonumber \\
    &=: \langle g, \mathbb{E}[f|\mathfrak{A}] \rangle
\end{align}
holds by the definition of the Radon-Nikod{\'y}m derivative.  Specifically, note that this leads to the fact that the conditional expectation $\mathbb{E}[f|\mathfrak{A}] \in L^{2}(\mu|_\mathfrak{A})$ of a square-integrable function $f \in L^{2}(\mu)$ is again square-integrable.

\paragraph{Conditioning as Projection}
Another important observation to make from the above equality is that, the act of conditioning
\begin{equation}
\mathbb{E}[\,\cdot\,|\mathfrak{A}] : L^{2}(\mu) \to L^{2}(\mu|_{\mathfrak{A}}), \quad f \mapsto \mathbb{E}[f|\mathfrak{A}]
\end{equation}
that takes a $\mu$-square-integrable function to its conditional expectation, is an \emph{orthogonal projection}.  To see this, first observe that linearity $\mathbb{E}[af + bg| \mathfrak{A}] = a\mathbb{E}[f| \mathfrak{A}] + b\mathbb{E}[g| \mathfrak{A}]$, $f, g \in L^{2}(\mu)$, $a, b \in \mathbb{C}$ follows immediately by definition (naturally, equality is only valid $\mu|_{\mathfrak{A}}$-almost everywhere).  Now, since $L^{2}(\mu|_{\mathfrak{A}})$ is itself a complex Hilbert space, it is a topologically closed subspace of the larger complex Hilbert space $L^{2}(\mu)$.  By recalling that there is a one-to-one correspondence between closed subspaces and orthogonal projections in Hilbert spaces, let $P(\,\cdot\,|\mathfrak{A}) : L^{2}(\mu) \to L^{2}(\mu|_\mathfrak{A})$ denote the unique orthogonal projection associated with it.  By observing that
\begin{equation}
\langle g, f \rangle = \langle g, P(f |\mathfrak{A}) \rangle
\end{equation}
holds for all $g \in L^{2}(\mu|_\mathfrak{A})$ and $f \in L^{2}(\mu)$ by definition of orthogonal projections, one realises that the equality \eqref{eq:cond_exp_orth_proj_adj} combined with the non-degenerateness of inner products leads to
\begin{equation}\label{eq:cond_exp_orth_proj}
P(f |\mathfrak{A}) = \mathbb{E}[f|\mathfrak{A}], \quad f \in L^{2}(\mu).
\end{equation}
We summarise the results as follows.

\begin{proposition}(Orthogonal Projection and Conditional Expectation)
Let $\mu$ be a probability measure on $(\mathbb{R}^{n},\mathfrak{B}^{n})$, and let $\mathfrak{A} \subset \mathfrak{B}^{n}$ be a sub-$\sigma$-algebra.  Then, the unique orthogonal projection $P(\,\cdot\,|\mathfrak{A}) : L^{2}(\mu) \to L^{2}(\mu|_\mathfrak{A})$ associated with the subspace $L^{2}(\mu|_\mathfrak{A})$ is provided by the conditional expectation
\begin{equation}
P(f|\mathfrak{A}) = \mathbb{E}[f|\mathfrak{A}],
\end{equation}
where $f \in L^{2}(\mu)$.
\end{proposition}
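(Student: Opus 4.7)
The plan is to assemble the proof from the pieces already in place in the Reference Materials, treating the statement essentially as a corollary of the identity \eqref{eq:cond_exp_orth_proj_adj} together with the uniqueness of the orthogonal projection in a Hilbert space. First I would recall that, since $\mu$ is a probability measure and hence finite, the preceding Lemma gives $L^{2}(\mu) \subset L^{1}(\mu)$, so that the Radon--Nikod\'ym derivative $\mathbb{E}[f|\mathfrak{A}] = d(f \odot \mu)|_{\mathfrak{A}}/d\mu|_{\mathfrak{A}}$ makes sense for every $f \in L^{2}(\mu)$. I would also note that $L^{2}(\mu|_{\mathfrak{A}})$ embeds isometrically as a closed linear subspace of $L^{2}(\mu)$ via the inclusion of $\mathfrak{A}$-measurable representatives, so that the orthogonal projection $P(\,\cdot\,|\mathfrak{A})$ onto it exists and is unique.

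Next I would verify the two conditions that characterise this orthogonal projection, namely (a) $\mathbb{E}[f|\mathfrak{A}] \in L^{2}(\mu|_{\mathfrak{A}})$, and (b) $f - \mathbb{E}[f|\mathfrak{A}] \perp L^{2}(\mu|_{\mathfrak{A}})$ with respect to the $L^{2}(\mu)$ inner product. For (b) I would invoke directly the computation \eqref{eq:cond_exp_orth_proj_adj}, which establishes
\begin{equation*}
\langle g, f \rangle = \langle g, \mathbb{E}[f|\mathfrak{A}] \rangle, \quad g \in L^{2}(\mu|_{\mathfrak{A}}),
\end{equation*}
and thus $\langle g,\, f - \mathbb{E}[f|\mathfrak{A}] \rangle = 0$ for every $g \in L^{2}(\mu|_{\mathfrak{A}})$. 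The uniqueness of the orthogonal projection then forces $P(f|\mathfrak{A}) = \mathbb{E}[f|\mathfrak{A}]$ in $L^{2}(\mu|_{\mathfrak{A}})$.

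The main obstacle is the verification of (a), the square-integrability of $\mathbb{E}[f|\mathfrak{A}]$ itself, which the Reference Materials only asserts tacitly at the end of the paragraph surrounding \eqref{eq:cond_exp_orth_proj_adj}. My plan here would be to establish the conditional Jensen inequality $|\mathbb{E}[f|\mathfrak{A}]|^{2} \le \mathbb{E}[|f|^{2}|\mathfrak{A}]$ $\mu|_{\mathfrak{A}}$-a.e. (which reduces, via the definition \eqref{def:cond_exp}, to the Cauchy--Schwarz inequality applied to the conditional measure), and then integrate to obtain $\|\mathbb{E}[f|\mathfrak{A}]\|_{2}^{2} \le \|f\|_{2}^{2} < \infty$. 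Alternatively, I would derive (a) in one stroke by a variational argument: test \eqref{eq:cond_exp_orth_proj_adj} against $g = \chi_{A}\,\mathbb{E}[f|\mathfrak{A}]^{*}$ for suitable $\mathfrak{A}$-measurable truncations $A$ of $\{|\mathbb{E}[f|\mathfrak{A}]| \le n\}$, and apply the Cauchy--Schwarz inequality on the left-hand side followed by monotone convergence as $n \to \infty$.

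Once (a) and (b) are in hand, the conclusion is immediate: the vector $\mathbb{E}[f|\mathfrak{A}] \in L^{2}(\mu|_{\mathfrak{A}})$ satisfies the defining property of the orthogonal projection of $f$ onto $L^{2}(\mu|_{\mathfrak{A}})$, and by uniqueness it coincides with $P(f|\mathfrak{A})$ as elements of $L^{2}(\mu|_{\mathfrak{A}})$. Linearity and idempotence of $\mathbb{E}[\,\cdot\,|\mathfrak{A}]$ on $L^{2}(\mu)$, which the excerpt has already mentioned, then simply reflect the corresponding properties of the orthogonal projection.
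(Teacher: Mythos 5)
Your proof is correct and follows essentially the same route as the paper: derive the adjoint identity \eqref{eq:cond_exp_orth_proj_adj} from the definition of the Radon--Nikod\'ym derivative, and combine it with the defining property and uniqueness of the orthogonal projection onto the closed subspace $L^{2}(\mu|_{\mathfrak{A}})$. The only difference is that you explicitly supply the square-integrability of $\mathbb{E}[f|\mathfrak{A}]$ (via conditional Jensen or the truncation argument, both of which work), a point the paper merely asserts as a consequence of \eqref{eq:cond_exp_orth_proj_adj} without detail.
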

\noindent
We here see how the geometric concept of orthogonality relates to the statistical concept of conditioning in $L^{2}$-spaces.

\subsubsection{Conditioning as Optimal Approximation}

The geometric property mentioned above leads to several important interpretation of conditional expectations.  One of the prominent characteristics of orthogonal projections is the validity of the \emph{Pythagorean identity}
\begin{equation}\label{eq:pythagorean_identity_classical}
\|f - g\|_{2}^{2} = \|f - \mathbb{E}[f|\mathfrak{A}]\|_{2}^{2} + \|\mathbb{E}[f|\mathfrak{A}] - g\|_{2}^{2},    \quad f \in L^{2}(\mu),\ g \in L^{2}(\mu|_{\mathfrak{A}}),
\end{equation}
where $\|\cdot\|_{2}$ denotes the standard $L^{2}$-norm introduced earlier in \eqref{def:Lp_norm}.  
An immediate consequence of the above Pythagorean identity is the following equality
\begin{equation}
\|f - \mathbb{E}[f|\mathfrak{A}]\|_{2} = \min_{g \in L^{2}(\mu|_{\mathfrak{A}})} \|f - g\|_{2},\quad f \in L^{2}(\mu),
\end{equation}
which states that the optimal $\mu|_{\mathfrak{A}}$-square-integrable function one can find in approximating a function $f$ is explicitly provided by the conditional expectation of $f$ given $\mathfrak{A}$, and the positive-definiteness of the $L^{2}$-norm shows that the optimum is unique $\mu|_{\mathfrak{A}}$-a.e.

\subsection{Statistical Interpretation of Geometric Structures}

Now that we have reviewed the geometric interpretation of conditional expectations in classical probability theory, we shall begin our main analysis.

\subsubsection{Preliminary Observation}

In classical probability theory, probability measures equip the space of square-integrable functions with a geometry, {\it i.e.,} an inner product defined by \eqref{def:L2-inner_product}, which we reiterate for the readers' convenience as
\begin{equation}\label{def:classical_geometry}
\langle g, f \rangle_{\mu} := \int g^{*}f\ d\mu
\end{equation}
(here, we have also explicitly written the probability measure $\mu$ under consideration for clarity).
In the context of classical physics in which observables are represented by functions, a probability measure defines quantities on a given pair of square-integrable classical observables interpreted as \emph{correlations} or \emph{covariances} between them.

\paragraph{`Correlations' in Quantum Theory}
In quantum mechanics, observables are represented by self-adjoint operators on Hilbert spaces, and the statistics of the system are in turn represented by vectors of Hilbert spaces.   In order to see how the two distinct frameworks of classical and quantum theory on correlations play together, first let $A$ and $B$ be a pair of simultaneously measurable bounded quantum observables on a Hilbert space $\mathcal{H}$, $|\psi\rangle \in \mathcal{H}$ a vector, and introduce
\begin{equation}\label{def:quantum_cor_sim}
\langle B, A \rangle_{\psi} := \frac{\langle B\psi, A\psi \rangle}{\|\psi\|^{2}}.
\end{equation}
One readily sees that, for the present case, the above geometry induced by the vector $|\psi\rangle$ is in accordance with the classical theory.  Indeed, the unique product spectral measure $E_{A,B}$ of $A$ and $B$ admits a unique representation of the observables given by
\begin{equation}
A = \int_{\mathbb{R}^{2}} A(a,b)\ dE_{A,B}(a,b), \quad B = \int_{\mathbb{R}^{2}} B(a,b)\ dE_{A,B}(a,b)
\end{equation}
with $A(a,b) = a$, $B(a,b) = b$, and moreover defines a joint-probability measure $\mu_{A,B}^{\psi}$ of the pair (see \eqref{def:prob_measrue_A_simul}) on the state $|\psi\rangle \in \mathcal{H}$.  It is then straightforward to see the validity of the equality
\begin{align}\label{eq:quantum_cor_sim_stat}
\langle B, A \rangle_{\psi}
    &= \int_{\mathbb{R}^{2}} B^{*}(a,b)A(a,b)\ d\mu_{A,B}^{\psi}(a,b) \nonumber \\
    &= \langle B, A \rangle_{\mu_{A,B}^{\psi}},
\end{align}
based on which one obtains a statistical interpretation of the geometry \eqref{def:quantum_cor_sim} as the correlation or covariance between the observables in the classical sense.

\paragraph{Non-commutative Case}
On the other hand, the problem is not so straightforward for the case where the pair $A$ and $B$ does not admit simultaneous measurability.  This is essentially to do with the lack of the unique product spectral measure of the pair.  As we have seen in the previous Section~\ref{sec:qp_qo}, the non-commutative analogues of product spectral measures are the quasi-joint-spectral distributions (QJSDs) defined as the inverse Fourier transforms of the hashed unitary groups \eqref{def:QJSD}.  The non-uniqueness of the QJSDs  for the non-commuting case generally leads to the non-uniqueness of the representation of operators and vectors by functions and quasi-probability distributions.  Specifically, given a choice of a QJSD $\Pi_{A,B}$ for a pair of generally non-commuting observables $A$ and $B$, one obtains a functional representation of operators as
\begin{equation}
A = \int_{\mathbb{R}^{2}} A(a,b)\ d\Pi_{A,B}(a,b), \quad B = \int_{\mathbb{R}^{2}} B(a,b)\ d\Pi_{A,B}(a,b)
\end{equation}
with $A(a,b) = a$, $B(a,b) = b$ (fortunately, as for this specific case, all representations coincide irrespective of the choice of the QJSD), and also a representation of quantum states $|\psi\rangle \in \mathcal{H}$ by QJP distributions $p_{A,B}^{\psi}(a,b)$ defined as in \eqref{def:QJP_explicit}.
Guided by a straightforward analogy, one realises that the quantity formally defined by
\begin{equation}\label{def:quasi_covariance_general}
\llangle B, A \rrangle_{\psi} := \int_{\mathbb{K}^{2}} B^{*}(a,b)A(a,b)\ p_{A,B}^{\psi}(a,b)\ dm_{2}(a,b)
\end{equation}
defines various different geometries between quantum observables dependent on the choice of the QJSDs.  This implies that, parallel to the classical case, QJSDs serves as a bridge that offers a `statistical' interpretation of the (non-unique) geometric structures that can be introduced in the space of quantum observables.

\subsubsection{Geometry induced by a specific Sub-family of QJSD}

The general treatment involving the entire class of QJSDs makes extensive use of the theory of generalised functions and its operator valued analogue (operator valued distributions), which may be far beyond the scope of this paper.  In this paper, mainly in order to confine our argument in the theory of complex measures and its operator valued analogue, we concentrate on the specific choice of the QJSD of a pair of quantum observables, namely, to the additive sub-family (introduced in Section~\ref{sec:additive_subfamily}) for the choice $-1 \leq \alpha \leq 1$ of the complex parameter, hopefully without essential loss of generality.  We shall moreover confine ourselves to bounded operators for simplicity, but the general treatment including unbounded operators is also possible without any essential alteration.

\paragraph{Sesquilinear Forms}
We are interested in introducing geometries in the space $L(\mathcal{H})$ of all bounded linear operators on a Hilbert space $\mathcal{H}$ given a fixed state $|\psi\rangle \in \mathcal{H}$. 
To this end, let $\alpha \in \mathbb{C}$ be a complex number, and define
\begin{equation}\label{def:alpha_sesquilinear_form}
\llangle Y, X \rrangle_{\psi,\alpha} :=  \frac{1 + \alpha}{2} \cdot \frac{\langle Y \psi, X \psi \rangle}{\|\psi\|^{2}} + \frac{1 - \alpha}{2} \cdot \frac{\langle X^{*} \psi, Y^{*} \psi \rangle}{\|\psi\|^{2}}, \quad X, Y \in L(H).
\end{equation}
As described earlier, this is just one possible straightforward way to extend the geometry \eqref{def:quantum_cor_sim} so that it can be defined even for non-commuting observables.  One  readily sees that this satisfies
\begin{enumerate}
\item $\llangle Y  + Y^{\prime}, X + X^{\prime} \rrangle_{\psi,\alpha} = \llangle Y, X \rrangle_{\psi,\alpha} + \llangle Y, X^{\prime} \rrangle_{\psi,\alpha} + \llangle Y^{\prime}, X \rrangle_{\psi,\alpha} + \llangle Y^{\prime}, X^{\prime} \rrangle_{\psi,\alpha}$,
\item $\llangle bY, aX \rrangle_{\psi,\alpha} = \overline{b}a\llangle Y, X \rrangle_{\psi,\alpha}$
\end{enumerate}
for any $X, X^{\prime}, Y, Y^{\prime} \in L(\mathcal{H})$ and $a, b \in \mathbb{C}$, hence it defines a \emph{sesquilinear form} on $L(\mathcal{H})$. By definition, one has
\begin{equation}
\llangle Y^{*}, X^{*} \rrangle_{\psi,\alpha} = \llangle X, Y \rrangle_{\psi,-\alpha}.
\end{equation}
By observing moreover that
\begin{equation}
\llangle Y, X \rrangle_{\psi,\alpha}^{*} = \llangle X, Y \rrangle_{\psi,\alpha^{*}}, \quad X, Y \in L(\mathcal{H})
\end{equation}
holds, the sesquilinear form \eqref{def:alpha_sesquilinear_form} is \emph{symmetric} (Hermitian) if the given parameter $\alpha \in \mathbb{R}$ is real. If one takes $X =Y$, this reads
\begin{equation}\label{def:eval}
\llangle X, X \rrangle_{\psi, \alpha} = \frac{1 + \alpha}{2} \cdot \frac{\|X\psi\|^{2}}{\|\psi\|^{2}} + \frac{1 - \alpha}{2} \cdot  \frac{\|X^{*}\psi\|^{2}}{\|\psi\|^{2}}, \quad X \in L(\mathcal{H}).
\end{equation}
Specifically for the choice $-1 \leq \alpha \leq 1$ of the parameter, note that the above quantity happens to be always \emph{positive}, hence becomes a \emph{semi-norm}, for which we introduce the notation
\begin{equation}
\|X\|_{\psi,\alpha} := \llangle X, X \rrangle_{\psi, \alpha}^{\frac{1}{2}}, \quad X \in L(\mathcal{H}),\ -1 \leq \alpha \leq 1.
\end{equation}
Note that
\begin{equation}
\|X\|_{\psi,\alpha}^{2} \leq \frac{1 + \alpha}{2} \cdot \|X\|^{2} + \frac{1 - \alpha}{2} \cdot \|X^{*}\|^{2} = \|X\|^{2},
\end{equation}
where $\|X\|$ denotes the usual operator norm of $X \in L(\mathcal{H})$, which shows that the semi-norm $\|\cdot\|_{\psi,\alpha}$ induces a topology on $L(\mathcal{H})$ coarser than that induced by the usual operator norm $\|\cdot\|$.

\paragraph{Statistical Interpretation}

In classical theory, the natural geometry \eqref{def:classical_geometry} introduced on the space of observables admits statistical interpretation as correlations by means of probability measures.  Parallel to this, we next intend to provide a statistical representation of the sesquilinear form \eqref{def:alpha_sesquilinear_form} for the quantum case, and this is achieved by means of QJP distributions.  As mentioned earlier, we let $\mu_{\mathrm{add}}^{\psi,\alpha}$ denote a quasi-probability measure satisfying
\begin{equation}\label{def:sec_app_additive}
(\mathscr{F}\mu_{\mathrm{add}}^{\psi,\alpha})(s,t) =  \frac{1 + \alpha}{2} \cdot \frac{\langle \psi, e^{-itB}e^{-isA}\psi\rangle}{\|\psi\|^{2}} + \frac{1 - \alpha}{2} \cdot \frac{\langle \psi, e^{-isA}e^{-itB}\psi\rangle}{\|\psi\|^{2}}, \quad \alpha \in \mathbb{C},
\end{equation}
which are namely members of the additive complex-parametrised sub-family of QJP distributions of $A$ and $B$ introduced in Section~\ref{sec:additive_subfamily}. 
One readily sees from a direct application of Lemma~\ref{lem:FT_CM_diff_exp} that the integration of polynomial functions reduces to
\begin{align}
\int_{\mathbb{R}^{2}} b^{m}a^{n}\ d\mu_{\mathrm{add}}^{\psi,\alpha}(a,b)
    &= \left. (i\partial_{t})^{m} (i\partial_{s})^{n} (\mathscr{F}\mu_{\mathrm{add}}^{\psi,\alpha})(s,t) \right|_{(s,t) = (0,0)} \nonumber \\
    &= \llangle B^{m},A^{n} \rrangle_{\psi,\alpha}.
\end{align}
One may also readily obtain a generalisation of this observation to continuous functions.
Indeed, according to the Stone-Weierstra{\ss} approximation theorem, since continuous functions $f, g$ defined on a compact space admit uniform approximations by polynomial functions as
\begin{align}
f(a) = \sum_{n=0}^{\infty} f_{n}a^{n}, \quad g(b) = \sum_{n=0}^{\infty} g_{n}b^{n}, 
\end{align}
one has
\begin{align}
\int_{\mathbb{R}^{2}} g^{*}(b)f(a)\ d\mu_{\mathrm{add}}^{\psi,\alpha}(a,b)
    &= \sum_{n,m=0}^{\infty} g_{m}^{*}f_{n}\int_{\mathbb{R}^{2}} b^{m}a^{n}\ d\mu_{\mathrm{add}}^{\psi,\alpha}(a,b) \nonumber \\
    &= \sum_{n,m=0}^{\infty} g_{m}^{*}f_{n} \llangle B^{m},A^{n} \rrangle_{\psi,\alpha} \nonumber \\
    &= \left\llangle \sum_{m}^{\infty}g_{m}B^{m}, \sum_{n}^{\infty}f_{n}A^{n} \right\rrangle_{\psi,\alpha} \nonumber \\
    &= \llangle g(B),f(A) \rrangle_{\psi,\alpha}.
\end{align}
This observation can be summarised as:
\begin{lemma}[Statistical Representation of Sesquilinear Forms]\label{lem:stat_rep_sesq_form}
Let $\mu_{\mathrm{add}}^{\psi,\alpha}$ be a member of the additive complex-parametrised sub-family of QJP distributions of the pair of observables $A, B \in L(\mathcal{H})$ defined as in \eqref{def:sec_app_additive}. Then, for any continuous functions $f$ and $g$ defined on the real line $\mathbb{R}$, the equality
\begin{equation}
\llangle g(B), f(A) \rrangle_{\psi,\alpha} = \int_{\mathbb{R}^{2}} g^{*}(b)f(a)\ d\mu_{\mathrm{add}}^{\psi,\alpha}(a,b)
\end{equation}
holds.
\end{lemma}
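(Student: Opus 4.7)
The plan is to first verify the identity on monomials $f(a) = a^n$, $g(b) = b^m$ via the derivative-at-origin formula for Fourier transforms of complex measures, and then extend to arbitrary continuous $f, g$ by a Weierstra{\ss}-type uniform approximation argument that exploits the compactness of the joint spectrum.

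As a preparatory observation, I would show that $\mu_{\mathrm{add}}^{\psi,\alpha}$ is compactly supported. Since $A$ and $B$ are bounded self-adjoint operators, their spectra $\sigma(A), \sigma(B) \subset \mathbb{R}$ are compact. From the explicit realisation of $\mu_{\mathrm{add}}^{\psi,\alpha}$ as a complex linear combination of the two transition-kernel measures associated with the orderings $e^{-itB}e^{-isA}$ and $e^{-isA}e^{-itB}$ (see Section~\ref{sec:additive_subfamily} together with Proposition~\ref{prop:ctk_to_cm}), one reads off that both constituent measures, and hence $\mu_{\mathrm{add}}^{\psi,\alpha}$ itself, are supported in the compact set $K := \sigma(A) \times \sigma(B)$. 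In particular, all moments are finite and the total variation $\|\mu_{\mathrm{add}}^{\psi,\alpha}\|$ is finite.

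For the monomial case, I would iterate Lemma~\ref{lem:FT_CM_diff_exp}. Since $A$ and $B$ are bounded, the map $(s,t) \mapsto e^{-isA}e^{-itB}$ is norm-analytic, and mixed differentiation at the origin gives
\begin{equation*}
\left. \partial_t^m \partial_s^n \frac{\langle \psi, e^{-itB}e^{-isA}\psi\rangle}{\|\psi\|^2} \right|_{(0,0)} = (-i)^{m+n} \frac{\langle \psi, B^m A^n \psi\rangle}{\|\psi\|^2},
\end{equation*}
with an analogous formula for the reversed ordering. Combining this with the defining relation \eqref{def:sec_app_additive} and Lemma~\ref{lem:FT_CM_diff_exp} yields
\begin{equation*}
\int_{\mathbb{R}^2} a^n b^m \, d\mu_{\mathrm{add}}^{\psi,\alpha}(a,b) = \frac{1+\alpha}{2} \frac{\langle \psi, B^m A^n \psi\rangle}{\|\psi\|^2} + \frac{1-\alpha}{2} \frac{\langle \psi, A^n B^m \psi\rangle}{\|\psi\|^2}.
\end{equation*}
The self-adjointness of $A$ and $B$ identifies the right-hand side with $\llangle B^m, A^n\rrangle_{\psi,\alpha}$ by \eqref{def:alpha_sesquilinear_form}, establishing the claim on monomials and, by bilinearity, on all bivariate polynomials of the product form $Q(b)P(a)$. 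Then, by the Weierstra{\ss} approximation theorem, the restrictions of the continuous $f, g$ to the compact spectra admit uniform polynomial approximants $P_N \to f|_{\sigma(A)}$ and $Q_N \to g|_{\sigma(B)}$. On the measure side, uniform convergence on $K$ combined with $\|\mu_{\mathrm{add}}^{\psi,\alpha}\| < \infty$ yields $\int Q_N^*(b) P_N(a) \, d\mu_{\mathrm{add}}^{\psi,\alpha} \longrightarrow \int g^*(b) f(a) \, d\mu_{\mathrm{add}}^{\psi,\alpha}$. On the operator side, continuous functional calculus gives $P_N(A) \to f(A)$ and $Q_N(B) \to g(B)$ in operator norm, and since $\llangle \cdot, \cdot\rrangle_{\psi,\alpha}$ is a jointly continuous (bounded) sesquilinear form in the operator norm — by Cauchy--Schwarz, $|\llangle Y, X\rrangle_{\psi,\alpha}| \leq \|X\|\|Y\|$ — the polynomial-evaluated form converges to $\llangle g(B), f(A)\rrangle_{\psi,\alpha}$. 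Matching the two limits completes the proof.

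The main technical obstacle is the support statement for $\mu_{\mathrm{add}}^{\psi,\alpha}$: being complex, one cannot automatically conclude compact support of the joint distribution from compactness of the marginal supports. This is resolved by returning to the explicit transition-kernel construction, in which the support in the $B$-variable is inherited from the probability measure $\mu_B^{\psi}$ and that in the $A$-variable from the spectral projections of $A$ appearing in the kernel, so that each constituent measure is supported inside $\sigma(A) \times \sigma(B)$. A minor further subtlety is that Lemma~\ref{lem:FT_CM_diff_exp} is phrased as a statement about a single multi-index derivative at the origin; the legitimacy of iterating it, and of interchanging the orders of differentiation in $s$ and $t$, follows routinely from the global norm-analyticity of the operator-valued kernel for bounded $A, B$.
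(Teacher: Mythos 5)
Your proposal follows essentially the same route as the paper's own proof: compute the moments $\int b^m a^n\,d\mu_{\mathrm{add}}^{\psi,\alpha}$ as mixed derivatives of the Fourier transform at the origin via Lemma~\ref{lem:FT_CM_diff_exp}, identify them with $\llangle B^m, A^n\rrangle_{\psi,\alpha}$, and then extend to continuous $f,g$ by Stone--Weierstra{\ss}. If anything, your version is the more careful one: the paper writes $f(a)=\sum_n f_n a^n$ as though a single power series converged uniformly to an arbitrary continuous function and then interchanges the infinite sums with the integral and the sesquilinear form without comment, whereas your two-sided limit argument --- uniform polynomial approximants $P_N, Q_N$ on the compact joint support, dominated convergence on the measure side via the finite total variation, and norm-continuity of $\llangle\,\cdot\,,\,\cdot\,\rrangle_{\psi,\alpha}$ together with continuous functional calculus on the operator side --- is exactly what is needed to make that step rigorous, and your preliminary observation that $\mu_{\mathrm{add}}^{\psi,\alpha}$ is supported in $\sigma(A)\times\sigma(B)$ (read off from the transition-kernel construction) is the missing ingredient the paper tacitly assumes.
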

\noindent
We have thus obtained a convenient representation of the sesquilinear form by integration with respect to QJP distributions, which offers a `statistical' interpretation to the geometry as `correlations' of a pair of generally non-commuting quantum observables.

\paragraph{Topic: Quasi-covariances (Quantum Covariances)}

As a natural extension to the classical notion of covariances, we may introduce the term `quasi-covariance' (or `quantum-covariance') of a pair of quantum observables under a given QJP distribution $p_{A,B}^{\psi}(a,b)$ for the quantity formally defined by
\begin{align}\label{def:quasi_cov}
    \int_{\mathbb{K}^{2}} (a - \mathbb{E}[A;\psi])(b - \mathbb{E}[B;\psi]) p_{A,B}^{\psi}(a,b)\ dm_{2}(a,b),
\end{align}
whenever the integration exists.
Specifically, from an immediate application of the above Lemma, one may readily compute the quantum-covariances with respect to the additive subgroup $\mu_{\mathrm{add}}^{\psi,\alpha}(a,b)$ of the QJP distributions as
\begin{align}\label{eq:quantum_cov_alpha}
\mathbb{CV}^{\alpha}[A,B;\psi]
    &:= \int_{\mathbb{R}^{2}} (a - \mathbb{E}[A;\psi])(b - \mathbb{E}[B;\psi])\ d\mu_{\mathrm{add}}^{\psi,\alpha}(a,b) \nonumber \\
    &= \llangle A - \mathbb{E}[A;\psi], B - \mathbb{E}[B;\psi] \rrangle_{\psi,\alpha} \nonumber \\
    &= \mathbb{CV}_{\mathrm{S}}[A,B;\psi] + \alpha  i \,\mathbb{CV}_{\mathrm{A}}[A,B;\psi],
\end{align}
where $\mathbb{CV}_{\mathrm{S}}[A,B;\psi]$ and $\mathbb{CV}_{\mathrm{A}}[A,B;\psi]$ are respectively the symmetric and anti-symmetric quantum covariances introduced in \eqref{def:q_covariance}.

\subsubsection{The Hilbert Space of Bounded Operators given a fixed State}

In classical theory, observables were described by functions, whereas in quantum theory, observables become self-adjoint operators.  In order to conduct an analogue of the $L^{2}$-theory for quantum observables, it reveals for our purpose that it is convenient not just to deal with self-adjoint operators, but rather to consider the collection $L(\mathcal{H})$ of all bounded linear operators defined on the Hilbert space $\mathcal{H}$.

\paragraph{Identification}

In classical theory, recall that we made identification of observables that cannot be distinguished in view of the probability measure $\mu$ by introducing the equivalence relation $f \sim g \Leftrightarrow f = g \text{ $\mu$-a.e.}$, and slimmed down the space of functions into quotient spaces (see Section~\ref{sec:usp_I_MI}).
We intend to follow the same path for the quantum case, and
to this, we introduce the subspace
\begin{equation}
Z_{\psi}(H) := \{ X \in L(H) : X|\psi\rangle = X^{*}|\psi\rangle = 0 \}
\end{equation}
and define the $\mathbb{C}$-linear quotient space
\begin{equation}
L_{\psi}(\mathcal{H}) := L(H) / Z_{\psi}(H)
\end{equation}
by identifying those operators for which the action of both themselves and their adjoints are indistinguishable on the state $|\psi\rangle$.  In other words, this is to say that we identify two operators $X, Y \in L(\mathcal{H})$ by the equivalence relation
\begin{equation}
X \sim Y \Leftrightarrow
    X|\psi\rangle = Y|\psi\rangle \text{ and } X^{*}|\psi\rangle = Y^{*}|\psi\rangle.
\end{equation}
One readily sees that the sesquilinear form \eqref{def:alpha_sesquilinear_form} passes to the quotient, and we thus obtain a sesquilinear form
\begin{equation}\label{def:alpha_sesquilinear_form_quotient}
\llangle [Y]_{\psi}, [X]_{\psi} \rrangle_{\psi,\alpha} := \llangle Y, X \rrangle_{\psi,\alpha}, \quad [X]_{\psi}, [Y]_{\psi} \in L_{\psi}(\mathcal{H})
\end{equation}
on the quotient space $L_{\psi}(\mathcal{H})$. Whenever there is no risk of confusion, we shall mostly denote equivalence classes by their representatives for simplicity of notation.
Note also that the \emph{involution} $\ast : L(\mathcal{H}) \to L(\mathcal{H}),\  X \mapsto X^{*}$ that takes a bounded linear operator to its adjoint is also well-defined on the quotient space.

\paragraph{Hilbert Space of Operators}

We have already seen that the original sesquilinear form \eqref{def:alpha_sesquilinear_form} becomes positive and symmetric for the choice $-1 \leq \alpha \leq 1$ of the parameter.  Based on the identification above, the sesquilinear form \eqref{def:alpha_sesquilinear_form_quotient} on the quotient space $L_{\psi}(\mathcal{H})$ becomes \emph{positive definite}, which is to say that
\begin{enumerate}
\item $\llangle X, X \rrangle_{\psi,\alpha} \geq 0, \quad X \in L_{\psi}(\mathcal{H})$,
\item $\llangle X, X \rrangle_{\psi,\alpha} = 0 \quad  \Leftrightarrow \quad  X = 0$
\end{enumerate}
for the choice $-1 \leq \alpha \leq 1$. This makes \eqref{def:alpha_sesquilinear_form_quotient} an \emph{inner product} on $L_{\psi}(\mathcal{H})$ for $-1 \leq \alpha \leq 1$, allowing us to define the norm
\begin{equation}\label{def:norm_quotient}
\|X\|_{\psi,\alpha} := \llangle X, X \rrangle_{\psi, \alpha}^{\frac{1}{2}}, \quad X \in L_{\psi}(\mathcal{H}),\ -1 \leq \alpha \leq 1.
\end{equation}
One moreover proves by rudimentary technique that the space is in fact complete with respect to the norm. We thus have the following result.
\begin{proposition}[Hilbert Space of Operators]
For a fixed $|\psi\rangle \in \mathcal{H}$ and the choice $-1 \leq \alpha \leq 1$ of the parameter, the ordered pair $(L_{\psi}(\mathcal{H}), \llangle\, \cdot\, , \, \cdot\, \rrangle_{\psi, \alpha})$ defines a complex Hilbert Space.
\end{proposition}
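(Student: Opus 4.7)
The plan is to verify completeness, the positive-definiteness and inner-product structure of $\llangle\,\cdot\,,\,\cdot\,\rrangle_{\psi,\alpha}$ on $L_\psi(\mathcal{H})$ having been established in the preceding discussion. Fix $-1 < \alpha < 1$ for the moment (the boundary cases $\alpha = \pm 1$ are addressed at the end), and let $\{[X_n]_\psi\}_{n \geq 1}$ be a Cauchy sequence in $(L_\psi(\mathcal{H}), \|\cdot\|_{\psi,\alpha})$. Since both coefficients $(1 \pm \alpha)/2$ are strictly positive, the identity
\begin{equation}
\|X_n - X_m\|_{\psi,\alpha}^{2} = \frac{1+\alpha}{2}\cdot\frac{\|(X_n - X_m)\psi\|^{2}}{\|\psi\|^{2}} + \frac{1-\alpha}{2}\cdot\frac{\|(X_n - X_m)^{*}\psi\|^{2}}{\|\psi\|^{2}}
\end{equation}
forces both sequences $\{X_n\psi\}$ and $\{X_n^{*}\psi\}$ to be Cauchy in $\mathcal{H}$. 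By the completeness of $\mathcal{H}$ they converge to vectors $\xi, \eta \in \mathcal{H}$ respectively, and the adjoint identity $\langle X_n\psi, \psi\rangle = \langle\psi, X_n^{*}\psi\rangle$ passes to the limit by the continuity of the inner product, yielding the compatibility relation $\langle\xi, \psi\rangle = \langle\psi, \eta\rangle$.

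The decisive step is to exhibit a bounded operator $X \in L(\mathcal{H})$ whose equivalence class realises the limit, namely satisfying $X\psi = \xi$ and $X^{*}\psi = \eta$. Setting $\eta' := \eta - (\langle\xi,\psi\rangle/\|\psi\|^{2})\psi$, the compatibility relation gives $\langle\psi, \eta'\rangle = 0$, and the rank-at-most-two operator
\begin{equation}
X := \frac{|\xi\rangle\langle\psi|}{\|\psi\|^{2}} + \frac{|\psi\rangle\langle\eta'|}{\|\psi\|^{2}}
\end{equation}
is bounded by construction. A direct computation exploiting $\langle\eta',\psi\rangle = 0$ verifies $X\psi = \xi$ and $X^{*}\psi = (\langle\xi,\psi\rangle/\|\psi\|^{2})\psi + \eta' = \eta$. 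Substituting into the defining expression of the norm then yields
\begin{equation}
\|X - X_n\|_{\psi,\alpha}^{2} = \frac{1+\alpha}{2}\cdot\frac{\|\xi - X_n\psi\|^{2}}{\|\psi\|^{2}} + \frac{1-\alpha}{2}\cdot\frac{\|\eta - X_n^{*}\psi\|^{2}}{\|\psi\|^{2}} \longrightarrow 0,
\end{equation}
thereby establishing $[X_n]_\psi \to [X]_\psi$ and hence the completeness of the inner product space.

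The primary obstacle is precisely this construction of $X$ from the pair of limit vectors $(\xi, \eta)$: without the compatibility condition $\langle\xi,\psi\rangle = \langle\psi,\eta\rangle$, no bounded operator could simultaneously realise the two prescribed values, so it is essential that this relation be forced on the limit by the adjoint identity, as above. For the boundary cases $\alpha = \pm 1$, only one of the two vectors $\xi$ or $\eta$ is directly controlled by the Cauchy condition; one then takes $X := |\xi\rangle\langle\psi|/\|\psi\|^{2}$ for $\alpha = 1$ (and the adjoint analogue for $\alpha = -1$), noting that at these extremal values the sesquilinear form descends to a genuine inner product only after passing to a further quotient identifying operators that agree on $X\psi$ alone (respectively $X^{*}\psi$ alone), a reduction under which the same argument applies verbatim.
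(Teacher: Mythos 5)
Your proof is correct, and it supplies a detail the paper simply waves away: the text only asserts that "one moreover proves by rudimentary technique that the space is in fact complete," so there is no argument in the paper to compare against. Your argument is the natural one, and its genuinely nontrivial content is exactly where you locate it: a Cauchy sequence only hands you the pair of limit vectors $(\xi,\eta)$ of $X_n\psi$ and $X_n^*\psi$, and one must realise that pair as $(X\psi, X^*\psi)$ for a single bounded $X$. The compatibility relation $\langle\xi,\psi\rangle=\langle\psi,\eta\rangle$ inherited from $\langle X_n\psi,\psi\rangle=\langle\psi,X_n^*\psi\rangle$ is both necessary and, via your rank-two operator $X=\|\psi\|^{-2}\bigl(|\xi\rangle\langle\psi|+|\psi\rangle\langle\eta'|\bigr)$, sufficient; I checked that $X\psi=\xi$ and $X^*\psi=\eta$ do follow from $\langle\psi,\eta'\rangle=0$. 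Your remark about the endpoints is also a genuine (if minor) correction to the paper rather than a mere technicality: for $\alpha=1$ the operator $X=|\psi\rangle\langle\chi|$ with $\chi\perp\psi$, $\chi\neq 0$, satisfies $X\psi=0$ but $X^*\psi\neq 0$, so $[X]_\psi\neq 0$ in $L_\psi(\mathcal{H})$ while $\|X\|_{\psi,1}=0$; hence $\llangle\,\cdot\,,\,\cdot\,\rrangle_{\psi,\pm 1}$ is only positive semi-definite on the quotient as the paper defines it, and the further quotient you describe is indeed required there. The proposition as stated is therefore strictly correct only for $-1<\alpha<1$, which is all that the paper's subsequent applications actually need once one notes that the restriction to $\mathfrak{E}_\psi(B)$ is $\alpha$-independent.
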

\noindent
This convenient property greatly facilitates our further argument.  Hence, in what follows, we will be treating only those geometries associated with the specific choice $-1 \leq \alpha \leq 1$ of the complex parameter.

\paragraph{Sub-algebra generated by an Observable}

We next introduce an important subspace of $L(\mathcal{H})$.
Given a bounded self-adjoint operator $A \in L(\mathcal{H})$, we prepare a special symbol
\begin{equation}\label{def:subspace_generated_by_normal_operators}
\mathfrak{E}(A) := \left\{ f(A) : f \text{ is a continuous function on $\sigma(A)$} \right\}
\end{equation}
for the $\mathbb{C}$-linear subspace of $L(\mathcal{H})$ consisting of all operators defined by means of the functional calculus \eqref{def:func_calc}.  By definition, one proves that
\begin{equation}
\left\| f(A) \right\| = \| f \|_{\infty},
\end{equation}
where the l.~h.~s. is the operator norm of $f(A)$, and the r.~h.~s. is the supremum norm of the continuous function $f$ (note that the spectrum $\sigma(A)$ of a bounded self-adjoint operator $A$ is compact, hence any continuous function defined on the spectrum is necessarily bounded). Moreover, it is easy to see that $f(A)^{*} = f^{*}(A)$ holds, where the l.~h.~s. denotes the adjoint of the linear operator $f(A)$, whereas the r.~h.~s. denotes the operator induced by the complex conjugate $f^{*}$ of the original function $f$.  This implies that all the operators of the form $f(A)$ are normal%
\footnote{Recall that a bounded operator $N \in L(\mathcal{H})$ is \emph{normal} if and only if $\|N\psi\| = \|N^{*}\psi\|$ holds for all $|\psi\rangle \in \mathcal{H}$.},
and that the space $\mathfrak{E}(A)$ is closed under the operation of taking the adjoint. Moreover, one sees that any two operators $f(A), g(A) \in \mathfrak{E}(A)$ commute with each other $f(A)g(A) = (fg)(A) = g(A)f(A)$, and that the space $\mathfrak{E}(A)$ thus makes itself into a commutative $C^{*}$-algebra.  We call the space $\mathfrak{E}(A)$ the sub-algebra \emph{generated by $A$}.

\paragraph{Identification}

An immediate observation one makes is that the sesquilinear form \eqref{def:alpha_sesquilinear_form} is independent of the choice of the parameter $\alpha \in \mathbb{C}$ on the space $\mathfrak{E}(A)$.  Indeed, for any choice of a pair of continuous functions $f$, $g$ defined on the spectrum $\sigma(A)$, the equality
\begin{align}\label{def:sesquilinear_form_normal}
\llangle g(A), f(A) \rrangle_{\psi, \alpha}
    &= \frac{\langle g(A)\psi, f(A)\psi \rangle}{\|\psi\|^{2}} \nonumber \\
    &= \int_{\mathbb{R}} g^{*}(a)f(a)\ d\mu_{A}^{\psi}(a)
    = \langle g, f \rangle_{\mu_{A}^{\psi}}
\end{align}
holds, where the right-most hand side denotes the standard inner product introduced on the space of square-integrable complex functions.
Following the same line of arguments we have made in the previous discussion of this section, we next intend to identify those operators that are not distinguishable in view of a given state $|\psi\rangle \in \mathcal{H}$.  To this, we introduce the subspace
\begin{equation}
Z_{\psi}(A) :=
    \{ N \in \mathfrak{E}(A) : N|\psi\rangle = N^{*}|\psi\rangle = 0 \},
\end{equation}
and define the space
\begin{equation}\label{def:subspace_generated_by_normal_operators_quotient}
\mathfrak{E}_{\psi}(A) := \overline{\mathfrak{E}(A) / Z_{\psi}(A)}
\end{equation}
by identifying those normal operators for which the action of both themselves and their adjoints on the state $|\psi\rangle$ are indistinguishable.  Here, the overline on the quotient space $\mathfrak{E}(A) / Z_{\psi}(A) \subset L_{\psi}(\mathcal{H})$ denotes its topological closure with respect to the topology on the superset $L_{\psi}(\mathcal{H})$ induced by the norm $\|\cdot\|_{\psi,\alpha}$ \eqref{def:norm_quotient}.  Note here that, as a set, the closure $\mathfrak{E}_{\psi}(A)$ is independent of the choice of the parameter $-1 \leq \alpha \leq 1$, since all the norms $\|\cdot\|_{\psi,\alpha}$ coincide on the subspace $\mathfrak{E}(A) / Z_{\psi}(A)$.  Moreover, one may readily check that all the inner products $\llangle \,\cdot\, , \,\cdot\, \rrangle_{\psi,\alpha}$ also coincide for the pair of elements of the closure $\mathfrak{E}_{\psi}(A)$ for any choice of the parameter $-1 \leq \alpha \leq 1$.

 Now, since by definition the space $\mathfrak{E}_{\psi}(A)$ is a closed subspace of the complex Hilbert space,  it is itself a complex Hilbert space.  By denoting the restriction of the inner product as $\llangle \,\cdot\,, \,\cdot\, \rrangle_{\psi} := \llangle \,\cdot\,, \,\cdot\, \rrangle_{\psi,\alpha}|_{\mathfrak{E}_{\psi}(A)}$, which does not depend on the choice of the parameter as we have mentioned above, we have:
\begin{lemma}
For a fixed $|\psi\rangle \in \mathcal{H}$, the ordered pair $(\mathfrak{E}_{\psi}(A), \llangle\, \cdot\, , \, \cdot\, \rrangle_{\psi})$ defines a complex Hilbert space.
\end{lemma}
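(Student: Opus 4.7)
The plan is to observe that $\mathfrak{E}_{\psi}(A)$ is by construction a topologically closed subspace of the complex Hilbert space $(L_{\psi}(\mathcal{H}), \llangle\,\cdot\,,\,\cdot\,\rrangle_{\psi,\alpha})$ whose Hilbert space structure was established in the preceding Proposition for $-1 \leq \alpha \leq 1$, and then to invoke the standard fact that a topologically closed subspace of a Hilbert space is itself a Hilbert space under the restricted inner product. The only substantive verification required is to confirm the two assertions made just before the statement of the lemma, namely that the set $\mathfrak{E}_{\psi}(A)$ and the inner product $\llangle\,\cdot\,,\,\cdot\,\rrangle_{\psi}$ are both independent of the particular choice of $\alpha \in [-1,1]$ used in taking the closure.

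First, I would verify that on the pre-quotient subspace $\mathfrak{E}(A)$ the sesquilinear form \eqref{def:alpha_sesquilinear_form} does not depend on $\alpha$. For $f(A), g(A) \in \mathfrak{E}(A)$ one has $f(A)^{*} = f^{*}(A)$, $g(A)^{*} = g^{*}(A)$, and since both elements belong to the commutative $C^{*}$-algebra generated by $A$, the operators $f(A)$ and $g^{*}(A)$ commute, so that
\begin{equation}
\langle f(A)^{*}\psi, g(A)^{*}\psi \rangle = \langle \psi, f(A)g^{*}(A)\psi \rangle = \langle \psi, g^{*}(A)f(A)\psi \rangle = \langle g(A)\psi, f(A)\psi \rangle.
\end{equation}
Both coefficients $(1\pm\alpha)/2$ in \eqref{def:alpha_sesquilinear_form} multiply the same number, which together sum to that number; hence $\llangle g(A), f(A) \rrangle_{\psi,\alpha}$ is independent of $\alpha$ and coincides with \eqref{def:sesquilinear_form_normal}. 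This identity descends to the quotient $\mathfrak{E}(A)/Z_{\psi}(A)$, so the induced norm $\|\cdot\|_{\psi,\alpha}$ takes a single common value there.

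Second, since the norms $\|\cdot\|_{\psi,\alpha}$, $-1 \leq \alpha \leq 1$, agree point-wise on the dense pre-closure subspace $\mathfrak{E}(A)/Z_{\psi}(A)$, a sequence in that subspace is Cauchy with respect to one such norm if and only if it is Cauchy with respect to every other, and any limit point in the ambient Hilbert space $L_{\psi}(\mathcal{H})$ is uniquely determined once one fixes the abstract Cauchy-sequence data. Consequently the closure of $\mathfrak{E}(A)/Z_{\psi}(A)$ in $L_{\psi}(\mathcal{H})$ is the same subset for every $\alpha \in [-1,1]$, justifying the denomination $\mathfrak{E}_{\psi}(A)$ without reference to $\alpha$. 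By continuity of each sesquilinear form and agreement on the dense subspace, the restrictions $\llangle\,\cdot\,,\,\cdot\,\rrangle_{\psi,\alpha}|_{\mathfrak{E}_{\psi}(A)}$ likewise coincide for every admissible $\alpha$; this common restriction is precisely $\llangle\,\cdot\,,\,\cdot\,\rrangle_{\psi}$.

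Finally, fixing any $\alpha$ in the admissible range and invoking the preceding Proposition, $(L_{\psi}(\mathcal{H}), \llangle\,\cdot\,,\,\cdot\,\rrangle_{\psi,\alpha})$ is a complex Hilbert space, and $\mathfrak{E}_{\psi}(A)$ is by construction a topologically closed subspace of it. The standard functional-analytic fact that a closed subspace of a Hilbert space, equipped with the restricted inner product, is itself a Hilbert space then yields the conclusion. I do not anticipate any serious obstacle in this proof; the only mildly delicate point is the $\alpha$-independence of the closure, which is handled by the commutativity argument sketched above.
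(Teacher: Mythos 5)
Your proposal is correct and follows essentially the same route as the paper, which obtains the lemma directly from the observation \eqref{def:sesquilinear_form_normal} that all the forms $\llangle\,\cdot\,,\,\cdot\,\rrangle_{\psi,\alpha}$ collapse to $\langle g,f\rangle_{\mu_{A}^{\psi}}$ on $\mathfrak{E}(A)$ (by exactly the commutativity/normality argument you give), so that the closure and the restricted inner product are $\alpha$-independent, and then invokes the fact that a closed subspace of the Hilbert space $(L_{\psi}(\mathcal{H}),\llangle\,\cdot\,,\,\cdot\,\rrangle_{\psi,\alpha})$ is itself a Hilbert space. Your extra remark on the $\alpha$-independence of the closure is no less rigorous than the paper's own one-line justification (both implicitly use that the ambient norms are mutually comparable), so nothing is missing relative to the paper's argument.
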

\noindent
The next Lemma is of special interest for our purpose.
\begin{lemma}
Let $A \in L(\mathcal{H})$ be self-adjoint, $|\psi\rangle \in \mathcal{H}$, and let $\mathfrak{E}_{\psi}(A)$ be defined as in \eqref{def:subspace_generated_by_normal_operators_quotient}.  Then, there exists a unique unitary operator $\Phi : L^{2}(\mu_{A}^{\psi}) \mapsto \mathfrak{E}_{\psi}(A)$ such that
\begin{equation}
\Phi(f) = [f(A)]_{\psi}
\end{equation}
holds for every continuous function $f$ on $\sigma(A)$, where the r.~h.~s. denotes the equivalent class of $f(A)$, which in turn is a bounded linear operator defined by means of the functional calculus \eqref{def:func_calc}.
\end{lemma}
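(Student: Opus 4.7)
The plan is to first define $\Phi$ on the dense subspace of continuous functions, show that it is isometric, and then extend it uniquely to all of $L^{2}(\mu_{A}^{\psi})$ via the bounded linear transformation theorem (BLT), finally verifying surjectivity onto $\mathfrak{E}_{\psi}(A)$.

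First, I would define the preliminary map $\Phi_{0} : C(\sigma(A)) \to \mathfrak{E}(A)/Z_{\psi}(A)$ by $\Phi_{0}(f) := [f(A)]_{\psi}$, where $C(\sigma(A))$ denotes the space of continuous functions on the (compact) spectrum. Linearity is immediate from the linearity of functional calculus. The key observation is that \eqref{def:sesquilinear_form_normal} directly yields
\begin{equation}
\llangle \Phi_{0}(g), \Phi_{0}(f) \rrangle_{\psi} = \llangle g(A), f(A) \rrangle_{\psi} = \langle g, f \rangle_{\mu_{A}^{\psi}},
\end{equation}
so that $\Phi_{0}$ is an isometry of pre-Hilbert spaces when $C(\sigma(A))$ is equipped with the $L^{2}(\mu_{A}^{\psi})$-inner product. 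Note in particular that this shows $\Phi_{0}$ is well-defined on equivalence classes: if $f = 0$ in $L^{2}(\mu_{A}^{\psi})$, then $\|f(A)\|_{\psi}^{2} = \|f\|_{L^{2}(\mu_{A}^{\psi})}^{2} = 0$ and hence $f(A) \in Z_{\psi}(A)$.

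Next, I would extend $\Phi_{0}$ to all of $L^{2}(\mu_{A}^{\psi})$. For this, I invoke the standard fact that $C(\sigma(A))$ is dense in $L^{2}(\mu_{A}^{\psi})$, which follows from Lusin's theorem together with the regularity of the finite Borel measure $\mu_{A}^{\psi}$ on the compact set $\sigma(A)$. Since $\Phi_{0}$ is a bounded linear map (indeed, isometric) from a dense subspace of $L^{2}(\mu_{A}^{\psi})$ into the complete space $\mathfrak{E}_{\psi}(A)$, the BLT theorem produces a unique continuous linear extension $\Phi : L^{2}(\mu_{A}^{\psi}) \to \mathfrak{E}_{\psi}(A)$, which inherits the isometry property by continuity of the inner products.

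It remains to show that $\Phi$ is surjective. By construction, the image $\Phi(L^{2}(\mu_{A}^{\psi}))$ is a closed subspace of $\mathfrak{E}_{\psi}(A)$ containing $\Phi_{0}(C(\sigma(A))) = \mathfrak{E}(A)/Z_{\psi}(A)$. Since $\mathfrak{E}_{\psi}(A)$ is defined as the closure of $\mathfrak{E}(A)/Z_{\psi}(A)$ with respect to $\|\cdot\|_{\psi}$, it follows that $\Phi$ has dense image, and combined with closedness one obtains $\Phi(L^{2}(\mu_{A}^{\psi})) = \mathfrak{E}_{\psi}(A)$. Hence $\Phi$ is a surjective isometry between complex Hilbert spaces, i.e.\ a unitary. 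Uniqueness of $\Phi$ is immediate: any continuous linear map agreeing with $f \mapsto [f(A)]_{\psi}$ on the dense subspace $C(\sigma(A))$ must coincide with $\Phi$ everywhere.

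The only genuinely non-trivial ingredient is the density of $C(\sigma(A))$ in $L^{2}(\mu_{A}^{\psi})$, which is a classical consequence of the regularity of Borel probability measures on compact metric spaces; the rest of the argument is a routine application of the BLT theorem once the isometry \eqref{def:sesquilinear_form_normal} is recognised as the bridge between the geometry on the operator side and the standard $L^{2}$-geometry on the function side.
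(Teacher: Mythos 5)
Your proposal is correct and follows essentially the same route as the paper's own proof: define the map on the dense subspace $C(\sigma(A))$, recognise the isometry via \eqref{def:sesquilinear_form_normal}, extend by the BLT theorem, and deduce surjectivity from the fact that $\mathfrak{E}_{\psi}(A)$ is defined as the closure of the image of the continuous functions. Your additional remarks on well-definedness modulo $Z_{\psi}(A)$ and on the density argument merely spell out steps the paper leaves implicit.
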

\begin{proof}
We will construct the map $\Phi$ by continuous linear extension.  To this, first recall that, since $\sigma(A)$ is compact, the space 
$C(\sigma(A))$ of all continuous functions on $\sigma(A)$ is dense in $L^{2}(\mu_{A}^{\psi})$.  Since the map
\begin{equation}
\tilde{\Phi} : L^{2}(\mu_{A}^{\psi}) \supset C(\sigma(A)) \to \mathfrak{E}_{\psi}(A), \quad f \mapsto f(A),
\end{equation}
is an isometry $\|\tilde{\Phi}(f)\|_{\psi} := \| f(A) \psi \|= \|f\|_{2}$ from a dense subspace of a normed space to a Banach space, there exists a unique isometric extension $\Phi : L^{2}(\mu_{A}^{\psi}) \to \mathfrak{E}_{\psi}(A)$.  By construction, one may also prove the surjectivity of $\Phi$, hence $\Phi$ is unitary.
\end{proof}
\noindent
This is to say that the Hilbert spaces $\mathfrak{E}_{\psi}(A) \cong L^{2}(\mu_{A}^{\psi})$ are unitarily isomorphic, and that $\Phi$ gives an embedding of the space of square-integrable functions $L^{2}(\mu_{A}^{\psi})$ into the space of bounded operators on a Hilbert space.  For simplicity of notation, we occasionally denote the image of a square-integrable function $f \in L^{2}(\mu_{A}^{\psi})$ by $f(A) := \Phi(f)$.  A word of caution is to be made here for the possible confusion for the notation used.  Here, the notation $f(A) := \Phi(f)$ is meant to denote (a representative of) the equivalence class of bounded linear operators, whereas the notation $f(A)$ is usually used to represent (generally unbounded) linear operator defined by means of the functional calculus \eqref{def:func_calc}.  The relation between the two different notations can be understood in the following way.  For $f \in L^{2}(\mu_{A}^{\psi})$, let $T \in \Phi(f)$ be a representative of the equivalence class of bounded operators, and let $f(A)$ be a (generally unbounded) operator defined by means of the functional calculus \eqref{def:func_calc}, and note that $|\psi\rangle \in \mathrm{dom}(f(A))$ by definition.  Then, we have $T|\psi\rangle = f(A)|\psi\rangle$.

\subsection{Interpretation of Conditional Quasi-expectations}

Now that we have sufficiently prepared our tools, the most important among which is the embedding
\begin{equation}\label{eq:embedding_l2}
\Phi: L^{2}(\mu_{A}^{\psi}) \cong \mathfrak{E}_{\psi}(A) \subset L_{\psi}(\mathcal{H})
\end{equation}
of the $L^{2}$-space of functions into that of bounded linear operators on a Hilbert space, we next focus on orthogonal projections and `conditioning' with respect to QJP distributions.

\subsubsection{Geometric Interpretation of Conditional Quasi-expectations}

Recall that, with each closed subspace of a Hilbert space, a unique \emph{orthogonal projection} is associated.  In what follows, we are interested in the orthogonal projection of an observable $A$ onto the subspace $\mathfrak{E}_{\psi}(B)$ generated by another observable $B$, and see that this provides a geometric interpretation of the conditional quasi-expectation $\mathbb{E}^{\alpha}[A|B; \psi]$ introduced earlier in \eqref{def:cond_quasi-exp_alpha}.

To this, let $B \in L(\mathcal{H})$ be self-adjoint, $|\psi\rangle \in \mathcal{H}$, $-1 \leq \alpha \leq 1$, and let $\mathfrak{E}_{\psi}(B)$ be the space generated by $B$, which is a closed subspace of the Hilbert spaces $(L_{\psi}(\mathcal{H}), \llangle \,\cdot\,, \,\cdot\,\rrangle_{\psi,\alpha}$).  As a closed subspace of a Hilbert space, there exists a unique orthogonal projection
\begin{equation}\label{def:orth_proj}
P_{\alpha}(\ \cdot \ |B ; \psi): L_{\psi}(\mathcal{H}) \to \mathfrak{E}_{\psi}(B),\ X \mapsto P_{\alpha}(X|B;\psi)
\end{equation}
associated to $\mathfrak{E}_{\psi}(B)$.  Recalling the relation between orthogonal projections and conditional expectations in classical probability theory (see Section~\ref{sec:app_ref}), it is natural to conjecture that an analogous relation holds for the quantum case.  To this, let $A \in L(\mathcal{H})$ be self-adjoint, and consider the projection $P_{\alpha}(A |B ; \psi)$ of $A$ onto $\mathfrak{E}_{\psi}(B)$.  We have seen in Section~\ref{sec:ps_I} that the conditional quasi-expectations $\mathbb{E}^{\alpha}[A|B; \psi] \in L^{1}(\mu_{B}^{\psi})$ introduced in \eqref{def:cond_quasi-exp_alpha} serve as possible candidates of quantum analogues of conditional expectations that can even be defined for non-commuting pair of quantum observables.  Since, the observable $A$ we consider here is bounded, one may prove that the conditional quasi-expectation $\mathbb{E}^{\alpha}[A|B; \psi] \in L^{2}(\mu_{B}^{\psi})$ is in fact square-integrable.  By letting $\mathbb{E}^{\alpha}[A|B; \psi]$ denote both the square-integrable function and its image by the unitary map $\Phi : L^{2}(\mu_{B}^{\psi}) \to \mathfrak{E}_{\psi}(B)$, it is natural to conjecture the validity of the equality $P_{\alpha}(A|B;\psi) = \mathbb{E}^{\alpha}[A|B; \psi]$, where the l.~h.~s. is the orthogonal projection of $A$ onto the space $\mathfrak{E}_{\psi}(B)$ with respect to the (parameter-dependent) inner product $\llangle \,\cdot\,, \,\cdot\,\rrangle_{\psi,\alpha}$, whereas the r.~h.~s. denotes the image of the (parameter-dependent) conditional quasi-expectation of $A$ given $B$ by the unitary map $\Phi$.

\begin{proposition}[Orthogonal Projection and Conditional Quasi-Expectation]\label{prop:orth_proj_cond_qe}
Let $A, B \in L(\mathcal{H})$ be self-adjoint, $|\psi\rangle \in \mathcal{H}$ and $-1 \leq \alpha \leq 1$.  Then, the conditional quasi-expectation $\mathbb{E}^{\alpha}[A|B; \psi]$ introduced in \eqref{def:cond_quasi-exp_alpha} is $\mu_{B}^{\psi}$-square-integrable, which could thus be identified with the equivalence class of bounded operators
\begin{equation}
\mathbb{E}^{\alpha}[A|B; \psi] := \Phi(\mathbb{E}^{\alpha}[A|B; \psi])
\end{equation}
by means of the embedding $\Phi : L^{2}(\mu_{B}^{\psi}) \to \mathfrak{E}_{\psi}(B)$ defined in \eqref{eq:embedding_l2}.  Then, the orthogonal projection of $A$ onto the subspace $\mathfrak{E}_{\psi}(B)$ generated by $B$, defined in \eqref{def:orth_proj}, reads
\begin{equation}
P_{\alpha}(A|B;\psi) = \mathbb{E}^{\alpha}[A|B; \psi],
\end{equation}
which is to say that orthogonal projections are equivalent to conditional quasi-expectations.
\end{proposition}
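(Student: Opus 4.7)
The plan is to verify the two assertions separately. For the square-integrability claim, the starting point is Corollary 4.2 (Conditional Expectations and Essential Suprema) applied to the complex measure $\Delta \mapsto \langle \psi, E_{B}(\Delta) A \psi \rangle/\|\psi\|^{2}$, which gives $\|\mathbb{E}[A|B;\psi]\|_{\infty} \leq \|A\| < \infty$ since $A$ is bounded. An analogous bound holds for its complex conjugate, so the linear combination $\mathbb{E}^{\alpha}[A|B;\psi]$ is essentially bounded; because $\mu_{B}^{\psi}$ is a probability measure, essential boundedness gives $L^{2}$-integrability via the inclusion $L^{\infty}(\mu_{B}^{\psi}) \subset L^{2}(\mu_{B}^{\psi})$ recalled at the start of Section~7.1. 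Its image under the unitary embedding $\Phi : L^{2}(\mu_{B}^{\psi}) \to \mathfrak{E}_{\psi}(B)$ is then a legitimate element of the closed subspace $\mathfrak{E}_{\psi}(B)$ of $L_{\psi}(\mathcal{H})$.

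For the main identification, I will characterise the orthogonal projection $P_{\alpha}(A|B;\psi)$ as the unique element $X \in \mathfrak{E}_{\psi}(B)$ satisfying $\llangle Y, A \rrangle_{\psi,\alpha} = \llangle Y, X \rrangle_{\psi,\alpha}$ for every $Y \in \mathfrak{E}_{\psi}(B)$, and verify that $\mathbb{E}^{\alpha}[A|B;\psi]$ satisfies this defining property. By density and continuity of $\llangle \,\cdot\,,\,\cdot\, \rrangle_{\psi,\alpha}$, it suffices to test against $Y = f(B)$ for $f$ continuous on $\sigma(B)$. Using self-adjointness of $A$ and $B$ and unfolding the definition \eqref{def:alpha_sesquilinear_form}, the left-hand side splits as
\begin{equation*}
\llangle f(B), A \rrangle_{\psi,\alpha}
= \frac{1+\alpha}{2} \frac{\langle f(B)\psi, A\psi\rangle}{\|\psi\|^{2}} + \frac{1-\alpha}{2} \frac{\langle A\psi, f^{*}(B)\psi\rangle}{\|\psi\|^{2}},
\end{equation*}
and each of these inner products can be rewritten as an integral against $\mu_{B}^{\psi}$ by applying the definition of the Radon-Nikod{\'y}m derivatives $\mathbb{E}[A|B;\psi] = d\nu/d\mu_{B}^{\psi}$ and $\mathbb{E}^{*}[A|B;\psi] = d\nu^{*}/d\mu_{B}^{\psi}$ introduced in Section~4.3.2, yielding $\int f^{*}(b)\,\mathbb{E}^{\alpha}[A|B=b;\psi]\,d\mu_{B}^{\psi}(b)$.

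On the right-hand side, since $f(B)$ and $\mathbb{E}^{\alpha}[A|B;\psi]$ both lie in the commutative sub-algebra generated by $B$, a direct computation (or Lemma~7.1 specialised to the diagonal case $A=B$) shows
\begin{equation*}
\llangle f(B), \mathbb{E}^{\alpha}[A|B;\psi] \rrangle_{\psi,\alpha} = \int f^{*}(b)\,\mathbb{E}^{\alpha}[A|B=b;\psi]\,d\mu_{B}^{\psi}(b),
\end{equation*}
matching the previous display exactly. Hence $A - \mathbb{E}^{\alpha}[A|B;\psi]$ is orthogonal to the dense subspace $\mathfrak{E}(B)/Z_{\psi}(B)$, and by continuity to all of $\mathfrak{E}_{\psi}(B)$; uniqueness of the orthogonal projection concludes the proof.

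The main obstacle I anticipate is bookkeeping rather than mathematical depth: one must keep straight the double life of $\mathbb{E}^{\alpha}[A|B;\psi]$ as both a square-integrable function on $\sigma(B)$ and a bounded operator in $\mathfrak{E}_{\psi}(B)$, and make sure the Radon-Nikod{\'y}m identities used to rewrite $\langle f(B)\psi, A\psi\rangle$ match the parameter-$\alpha$ combination appearing in the sesquilinear form. Once the self-adjointness of $A$ is used to turn $A^{*}$ back into $A$ and the commutativity within $\mathfrak{E}(B)$ is invoked to collapse the $\alpha$-dependence on the right-hand side, the two expressions align cleanly and the density argument completes the identification.
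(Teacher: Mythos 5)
Your identification of the orthogonal projection with the conditional quasi-expectation is correct and follows essentially the same route as the paper: test the defining property of the projection against $f(B) = \Phi(f)$, rewrite $\llangle f(B), A \rrangle_{\psi,\alpha}$ as $\int f^{*}(b)\,\mathbb{E}^{\alpha}[A|B=b;\psi]\,d\mu_{B}^{\psi}(b)$ via the Radon-Nikod{\'y}m definition, recognise the right-hand side as $\llangle f(B), \mathbb{E}^{\alpha}[A|B;\psi]\rrangle_{\psi}$ by unitarity of $\Phi$ (the $\alpha$-dependence collapses on $\mathfrak{E}_{\psi}(B)$), and conclude by non-degeneracy of the inner product. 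Your extra detour through continuous $f$ plus density is harmless but unnecessary, since the paper tests directly against arbitrary $f \in L^{2}(\mu_{B}^{\psi})$.

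However, your argument for the square-integrability of $\mathbb{E}^{\alpha}[A|B;\psi]$ is wrong. You invoke Corollary~\ref{cor:cond_exp_and_ess_sup} to get $\|\mathbb{E}[A|B;\psi]\|_{\infty} \leq \|A\| < \infty$, but that corollary applies to \emph{genuine} conditional expectations of integrable functions on a probability space; the conditional quasi-expectation is the Radon-Nikod{\'y}m derivative of the \emph{complex} measure $\Delta \mapsto \langle \psi, E_{B}(\Delta)A\psi\rangle/\|\psi\|^{2}$, and the paper's Propositions~\ref{prop:lim_amp_quasi_cond} and \ref{prop:tsv} are devoted precisely to showing that no such essential-supremum bound holds: for non-commuting $A$ and $B$ the quantity $\|\mathbb{E}^{\alpha}[A|B;\psi]\|_{\infty}$ can exceed any prescribed $M$, and concretely the weak value $\langle \phi^{\prime}, A\phi\rangle/\langle\phi^{\prime},\phi\rangle$ blows up as $\langle\phi^{\prime},\phi\rangle \to 0$. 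So $\mathbb{E}^{\alpha}[A|B;\psi]$ need not lie in $L^{\infty}(\mu_{B}^{\psi})$ at all, and the inclusion $L^{\infty} \subset L^{2}$ cannot be the route. The correct argument is an $L^{2}$ one: writing $\nu(\Delta) := \langle \psi, E_{B}(\Delta)A\psi\rangle/\|\psi\|^{2}$, the Cauchy-Schwarz inequality gives
\begin{equation*}
\left| \int_{\mathbb{R}} f\ d\nu \right| = \frac{|\langle f^{*}(B)\psi, A\psi\rangle|}{\|\psi\|^{2}} \leq \|f\|_{L^{2}(\mu_{B}^{\psi})} \cdot \frac{\|A\psi\|}{\|\psi\|},
\end{equation*}
so the functional $f \mapsto \int f\,d\nu$ is bounded on $L^{2}(\mu_{B}^{\psi})$ and the Riesz representation theorem (or the corresponding property of Radon-Nikod{\'y}m derivatives) yields $d\nu/d\mu_{B}^{\psi} \in L^{2}(\mu_{B}^{\psi})$ with $\|d\nu/d\mu_{B}^{\psi}\|_{2} \leq \|A\|$; the same holds for $\nu^{*}$ and hence for the linear combination $\mathbb{E}^{\alpha}[A|B;\psi]$.
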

\begin{proof}
Let $f \in L^{2}(\mu_{B}^{\psi})$, and let $f(B) := \Phi(f)$ denote the embedding of $f$ into the space $\mathfrak{E}_{\psi}(B)$.  By definition of orthogonal projections, one readily finds
\begin{equation}\label{eq:orth_proj_proof_1}
\llangle f(B), A \rrangle_{\psi,\alpha} = \llangle f(B), P_{\alpha}(A|B;\psi) \rrangle_{\psi}.
\end{equation}
On the other hand, observe that
\begin{align}
\langle f, \mathbb{E}^{\alpha}[A|B; \psi] \rangle_{\mu_{B}^{\psi}}
    &:= \int_{\mathbb{R}} f^{*}(b) \cdot \mathbb{E}^{\alpha}[A|B = b; \psi]\ d\mu_{B}^{\psi}(b) \nonumber \\
    &= \frac{1+\alpha}{2} \int_{\mathbb{R}} f^{*}(b)\ d\frac{\langle \psi,E_{B}(b)A\psi\rangle}{\|\psi\|^{2}} + \frac{1-\alpha}{2} \int_{\mathbb{R}} f^{*}(b)\ d\frac{\langle \psi, AE_{B}(b)\psi\rangle}{\|\psi\|^{2}} \nonumber \\
    &= \frac{1+\alpha}{2} \cdot \frac{\langle f(B)\psi, A\psi\rangle}{\|\psi\|^{2}} + \frac{1-\alpha}{2} \cdot \frac{\langle A^{*}\psi, f(B)^{*}\psi\rangle}{\|\psi\|^{2}} \nonumber \\
    &=: \llangle f(B), A \rrangle_{\psi,\alpha}
\end{align}
Combining this with the unitarity of the embedding $\Phi : L^{2}(\mu_{B}^{\psi}) \to \mathfrak{E}_{\psi}(B)$, one thus has
\begin{equation}\label{eq:orth_proj_proof_2}
\llangle f(B), A \rrangle_{\psi,\alpha} = \llangle f(B), \mathbb{E}^{\alpha}[A|B; \psi] \rrangle_{\psi}.
\end{equation}
The positive-definiteness of the inner product applied to the two results \eqref{eq:orth_proj_proof_1} and \eqref{eq:orth_proj_proof_2} proves our statement.
\end{proof}
\noindent
Just as we have seen for the classical case, this result provides a geometric interpretation of conditional quasi-expectations as orthogonal projections.  As a corollary, one has a geometric interpretation of Aharonov's weak value.
\begin{corollary}[Geometric Interpretation of Aharonov's Weak Value]\label{prop:orth_proj_wv}
Under the same conditions, let
\begin{equation}
A_{w}(B) := \Phi(A_{w})
\end{equation}
denote the embedding of the Aharonov's weak value $A_{w} : = \mathbb{E}^{1}[A|B; \psi]$ introduced in \eqref{def:weak_value}.  Then, the weak value 
\begin{equation}
A_{w}(B) = P_{1}(A|B;\psi) = \mathbb{E}^{1}[A|B; \psi]
\end{equation}
could be interpreted as the orthogonal projection of $A$ onto the subspace $\mathfrak{E}_{\psi}(B)$ generated by $B$.
\end{corollary}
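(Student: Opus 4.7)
The plan is to observe that this corollary is a direct specialization of Proposition~\ref{prop:orth_proj_cond_qe} to the boundary parameter value $\alpha = 1$, and that no new machinery is required beyond verifying that the endpoint $\alpha = 1$ indeed lies within the allowed range $-1 \leq \alpha \leq 1$. First I would invoke Proposition~\ref{prop:orth_proj_cond_qe} with $\alpha = 1$ to obtain the operator identity $P_{1}(A|B;\psi) = \mathbb{E}^{1}[A|B;\psi]$, where the right-hand side is understood as the image under the embedding $\Phi : L^{2}(\mu_{B}^{\psi}) \to \mathfrak{E}_{\psi}(B)$ of the $\mu_{B}^{\psi}$-square-integrable function $\mathbb{E}^{1}[A|B;\psi]$, whose square-integrability is guaranteed by the boundedness of $A$ as already established in the proposition.

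Next I would identify this function, for the specific choice $\alpha = 1$, with Aharonov's weak value. This is immediate from the definition \eqref{def:cond_quasi-exp_alpha} of the conditional quasi-expectation, which at $\alpha = 1$ reduces to $\mathbb{E}^{1}[A|B;\psi] = \mathbb{E}[A|B;\psi] = d\nu/d\mu_{B}^{\psi}$ with $\nu(\Delta) = \langle \psi, E_{B}(\Delta) A\psi\rangle/\|\psi\|^{2}$. In the special case of post-selection $B = |\phi^{\prime}\rangle\langle\phi^{\prime}|$ on a one-dimensional subspace, this evaluates pointwise $\mu_{B}^{\psi}$-a.e. to the familiar expression $A_{w} = \langle\phi^{\prime}, A\psi\rangle/\langle\phi^{\prime},\psi\rangle$ recorded in \eqref{def:weak_value}, whenever the post-selection probability is non-vanishing. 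Hence $A_{w}$ and $\mathbb{E}^{1}[A|B;\psi]$ coincide as elements of $L^{2}(\mu_{B}^{\psi})$, and consequently their images $A_{w}(B) := \Phi(A_{w})$ and $\Phi(\mathbb{E}^{1}[A|B;\psi])$ coincide as elements of $\mathfrak{E}_{\psi}(B) \subset L_{\psi}(\mathcal{H})$.

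Chaining these two observations yields $A_{w}(B) = \Phi(\mathbb{E}^{1}[A|B;\psi]) = P_{1}(A|B;\psi)$, which is the claimed geometric/statistical interpretation: the weak value is the orthogonal projection of $A$ onto the subspace $\mathfrak{E}_{\psi}(B)$ with respect to the inner product $\llangle\,\cdot\,,\,\cdot\,\rrangle_{\psi,1}$. There is no genuine obstacle in this argument; the only point meriting explicit comment is that $\alpha = 1$ is an endpoint of the admissible range for the sesquilinear form \eqref{def:alpha_sesquilinear_form} to be positive semi-definite, so one should note that the quotient construction producing the Hilbert space $(L_{\psi}(\mathcal{H}), \llangle\,\cdot\,,\,\cdot\,\rrangle_{\psi,1})$, as well as the closed subspace $\mathfrak{E}_{\psi}(B)$ and the attendant orthogonal projection $P_{1}(\,\cdot\,|B;\psi)$, are all well-defined at this boundary value—a fact already subsumed in the statement of Proposition~\ref{prop:orth_proj_cond_qe}.
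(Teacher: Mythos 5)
Your proposal is correct and matches the paper's treatment: the paper states this corollary without a separate proof, as it is precisely Proposition~\ref{prop:orth_proj_cond_qe} specialised to $\alpha = 1$ together with the definitional identification $A_{w} := \mathbb{E}^{1}[A|B;\psi]$. Your additional remarks on the post-selected special case and on the well-definedness of the geometry at the boundary value $\alpha = 1$ are accurate but not needed beyond what the proposition already guarantees.
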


\paragraph{Topic: Weak Value as Optimal Approximation}

As a direct consequence of Proposition~\ref{prop:orth_proj_wv} (specifically, Corollary~\ref{prop:orth_proj_wv}), we note an interesting result regarding conditional quasi-expectations (specifically, the weak value) and optimal approximation.  As orthogonal projections, observe that conditional quasi-expectations furnish the optimal proxy function for $A$ minimising the distance
\begin{equation}
\|A - \mathbb{E}^{\alpha}[A|B; \psi] \|_{\psi,\alpha} = \min_{f} \|A - f(B) \|_{\psi,\alpha}
\end{equation}
from an observable $A$ to the space of normal observables $\mathfrak{E}_{\psi}(B)$ generated by another observable $B$.  An equivalent expression to this is the equality
\begin{equation}
\|A - f(B) \|_{\psi,\alpha} = \|A - \mathbb{E}^{\alpha}[A|B; \psi] \|_{\psi,\alpha} + \|\mathbb{E}^{\alpha}[A|B; \psi] - f(B) \|_{\psi,\alpha},
\end{equation}
which is nothing but the `Pythagorean identity' valid for orthogonal projections%
\footnote{Specifically, observing that $\mathbb{E}^{\alpha}[A|B; \psi] = \mathrm{Re}A_{w}(B) + i \alpha \mathrm{Im}A_{w}(B)$, we have $\mathbb{E}^{\alpha}[A|B; \psi] = \mathrm{Re}A_{w}(B)$ for the choice $\alpha = 0$.  This gives
\begin{align}
\| \left( A - \mathrm{Re}A_{w}(B) \right) \psi \|
    &= \|A - \mathrm{Re}A_{w}(B) \|_{\psi,0} \nonumber \\
    &\leq \|A - f(B) \|_{\psi,0} \nonumber \\
    &= \| \left( A - f(B) \right) \psi \|, \quad f \in L^{2}(\mu_{B}^{\psi}),\ f \text{ is real},
\end{align}
as a special case.  This specific form is known by \cite{Hall_2001,Johansen_2004}, although proven from a different perspective than directly utilising the geometric observation made in this paper.}
in Hilbert spaces.
The interpretation of conditional quasi-expectations as orthogonal projections provide the core geometric observations why the weak value appears as the optimal choice for the proxy functions in the novel uncertainty relations for approximation/estimation \cite{Lee_2016}.
\begin{figure}
\includegraphics[width=160mm]{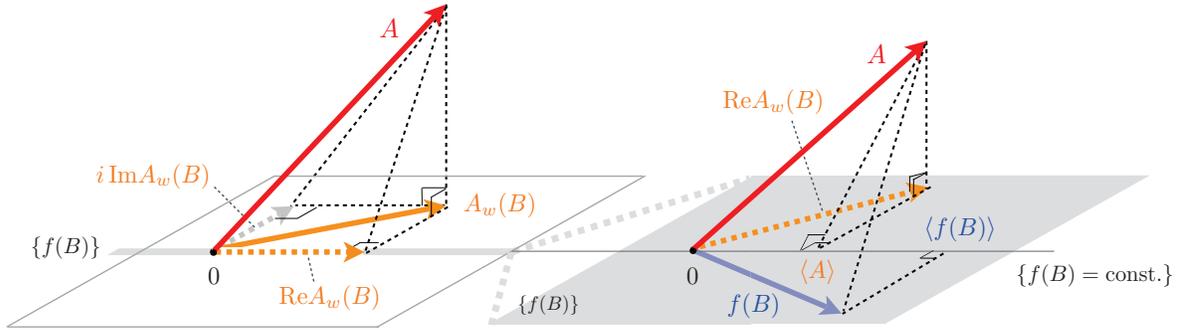}
\caption{Geometric relations among the operators involved for the choice $\alpha = 1$. The left illustrates how the operator $A$ is projected onto the subspace of normal operators $\mathfrak{E}_{\psi}(B)$ generated by $B$, with the center line representing the space of self-adjoint operators $\{ f(B)\}$.   The right elaborates the projection onto the space $\{ f(B)\}$, where
now the center line represents the space of constant functions $\{ f(B) = \hbox{const.} \}$ (more precisely, functions proportional to the identity operator $I$) including $f(B) = \langle A\rangle$.
}
\end{figure}

\subsubsection{Statistical Interpretation of Conditional Quasi-expectations}

In classical probability theory, conditional expectations not only admitted geometric interpretation as orthogonal projections, but also statistical interpretation as conditioned averages.  We next seek to provide a quantum analogue of this observation, namely, to provide a statistical interpretation of the conditional quasi-expectations as `conditional averages' with respect to QJP distributions.  To this, we first introduce a general term:

\begin{definition*}[Conditional Quasi-expectation of Quantum Observables]
Let $\mu \in \mathfrak{M}_{A,B}^{\psi}$ be a QJP distribution of a pair of quantum observables $A$ and $B$ on the state $|\psi\rangle \in \mathcal{H}$, such that is admits representation by quasi-probability measures, and suppose that the expectation value $\mathbb{E}[\pi_{A};\mu]$ exists.  Denoting the measurable functions representing the behaviour of the measurement outcomes of $A$ and $B$ by $\pi_{A}(a,b) = a$ and $\pi_{B}(a,b) = b$, respectively, we then define the conditional quasi-expectation of $A$ given $B$ under the QJP distribution $\mu$ by
\begin{equation}
\mathbb{E}[A|B;\mu] := \mathbb{E}[\pi_{A}|\pi_{B}; \mu],
\end{equation}
where the definition of the r.~h.~s. is given in Section~\ref{sec:CM_II_cond_over_QP}, whenever the Radon-Nikod{\'y}m derivatives concerned exist.
\end{definition*}
\noindent
We next see that the definition of conditional quasi-expectations agree with those introduced earlier \eqref{def:cond_quasi-exp_alpha}.
\begin{proposition}[Statistical Interpretation of Conditional Quasi-expectations]\label{prop:stat_int_cond_quasi_exp}
Let $A, B \in L(\mathcal{H})$ be self-adjoint, $|\psi\rangle \in \mathcal{H}$, and let $\mu_{\mathrm{add}}^{\psi,\alpha}$ be a member of the additive complex-parametrised sub-family of the QJP distributions of $A$ and $B$ for the choice $\alpha \in \mathbb{C}$ defined as in \eqref{def:sec_app_additive}.
Then, the conditional quasi-expectation $\mathbb{E}[A|B; \mu_{\mathrm{add}}^{\psi,\alpha}]$ of $A$ given $B$ under $\mu_{\mathrm{add}}^{\psi,\alpha}$ is well-defined, which reads
\begin{align}
\mathbb{E}^{\alpha}[A|B;\psi] = \mathbb{E}[A|B; \mu_{\mathrm{add}}^{\psi,\alpha}],
\end{align}
where the l.~h.~s. is the $\alpha$-parametrised conditional quasi-expectation introduced earlier in \eqref{def:cond_quasi-exp_alpha}.
\end{proposition}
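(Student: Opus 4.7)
The plan is to unwind both sides of the claim into explicit complex measures on $\mathfrak{B}^{1}$ and invoke the uniqueness of the Radon--Nikod{\'y}m derivative. First I would decompose $\mu_{\mathrm{add}}^{\psi,\alpha} = \frac{1+\alpha}{2}\mu_{1} + \frac{1-\alpha}{2}\mu_{2}$ into its two constituents, where $\mu_{1}$ and $\mu_{2}$ are the quasi-probability measures uniquely characterised on rectangles by
\begin{equation*}
\mu_{1}(\Delta_{A}\times\Delta_{B}) = \frac{\langle\psi, E_{B}(\Delta_{B})E_{A}(\Delta_{A})\psi\rangle}{\|\psi\|^{2}}, \quad \mu_{2}(\Delta_{A}\times\Delta_{B}) = \frac{\langle\psi, E_{A}(\Delta_{A})E_{B}(\Delta_{B})\psi\rangle}{\|\psi\|^{2}}.
\end{equation*}
That these are precisely the inverse Fourier transforms of $(s,t)\mapsto\langle\psi,e^{-itB}e^{-isA}\psi\rangle/\|\psi\|^{2}$ and $(s,t)\mapsto\langle\psi,e^{-isA}e^{-itB}\psi\rangle/\|\psi\|^{2}$, respectively, I would verify by a direct spectral-theorem computation on exponentials together with the injectivity of $\mathscr{F}$; their existence as quasi-probability measures is already granted by Section~\ref{sec:additive_subfamily}.

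Next I would check the hypotheses under which $\mathbb{E}[\pi_{A}|\pi_{B};\mu_{\mathrm{add}}^{\psi,\alpha}]$ is well defined on the sub-$\sigma$-algebra $\mathcal{I}(\pi_{B})=\{\mathbb{R}\times\Delta:\Delta\in\mathfrak{B}^{1}\}$. Setting $\Delta_{A}=\mathbb{R}$ in the rectangular formulas above gives $\mu_{\mathrm{add}}^{\psi,\alpha}|_{\mathcal{I}(\pi_{B})}=\mu_{B}^{\psi}$, a probability measure. Absolute continuity $\pi_{A}\odot\mu_{\mathrm{add}}^{\psi,\alpha}\ll\mu_{B}^{\psi}$ on $\mathcal{I}(\pi_{B})$ then follows from the Cauchy--Schwarz estimates $|\langle\psi, E_{B}(\Delta_{B})A\psi\rangle|\leq\|A\psi\|\cdot\|E_{B}(\Delta_{B})\psi\|$ and its mirror for $\mu_{2}$, in exact parallel to the reasoning preceding \eqref{def:cond_quas-exp}.

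The heart of the argument is the evaluation of $(\pi_{A}\odot\mu_{\mathrm{add}}^{\psi,\alpha})(\mathbb{R}\times\Delta_{B})$. Approximating $\pi_{A}(a)=a$ by simple functions $s_{n}=\sum_{k}c_{k}\chi_{\Delta_{k}}$, the rectangular formulas reduce the two component integrals to $\langle\psi, E_{B}(\Delta_{B})\int\!s_{n}\,dE_{A}\,\psi\rangle/\|\psi\|^{2}$ and $\langle\psi,\int\!s_{n}\,dE_{A}\,E_{B}(\Delta_{B})\psi\rangle/\|\psi\|^{2}$. Passing to the limit using boundedness of $A$ and the strong convergence $\int s_{n}\,dE_{A}\,\xi\to A\xi$ for $\xi=\psi$ and $\xi=E_{B}(\Delta_{B})\psi$, I would obtain
\begin{equation*}
(\pi_{A}\odot\mu_{\mathrm{add}}^{\psi,\alpha})(\mathbb{R}\times\Delta_{B}) = \frac{1+\alpha}{2}\cdot\frac{\langle\psi, E_{B}(\Delta_{B})A\psi\rangle}{\|\psi\|^{2}} + \frac{1-\alpha}{2}\cdot\frac{\langle\psi, AE_{B}(\Delta_{B})\psi\rangle}{\|\psi\|^{2}}.
\end{equation*}
The two inner products are precisely the values $\nu(\Delta_{B})$ and $\nu^{*}(\Delta_{B})$ of the complex measures entering \eqref{def:cond_quasi-exp_alpha}, so the right-hand side equals $\int_{\Delta_{B}}\mathbb{E}^{\alpha}[A|B;\psi]\,d\mu_{B}^{\psi}$, and uniqueness of the Radon--Nikod{\'y}m derivative $\mu_{B}^{\psi}$-a.e.\ closes the claim.

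The main obstacle, though not deep, will be the limit exchange in the third step for $\mu_{2}$, where the approximating spectral integrals sit to the \emph{left} of $E_{B}(\Delta_{B})\psi$ rather than commuting past it. The cleanest resolution is to verify the identity first on step functions, where everything is a finite sum, and only then pass to the limit, exploiting the boundedness of $A$ for uniform norm control. Beyond this single subtlety, the whole argument amounts to organised bookkeeping of spectral integrals.
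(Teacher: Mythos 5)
Your proposal is correct and follows essentially the same route as the paper: the paper's proof also defines $\nu(\Delta) := \int_{\mathbb{R}\times\Delta} a\ d\mu_{\mathrm{add}}^{\psi,\alpha}(a,b)$, identifies it with $\frac{1+\alpha}{2}\langle\psi, E_{B}(\Delta)A\psi\rangle/\|\psi\|^{2} + \frac{1-\alpha}{2}\langle\psi, AE_{B}(\Delta)\psi\rangle/\|\psi\|^{2}$, notes $\nu \ll \mu_{B}^{\psi}$, and then reads off the claim from the definition \eqref{def:cond_quasi-exp_alpha} by uniqueness of the Radon--Nikod{\'y}m derivative. You simply fill in the details (rectangle characterisation of the two constituent measures, simple-function approximation, Cauchy--Schwarz for absolute continuity) that the paper compresses into a single displayed chain of equalities.
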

\begin{proof}
We first demonstrate the well-definedness of $\mathbb{E}[A|B; \mu_{\mathrm{add}}^{\psi,\alpha}]$, and to this, let
\begin{align}
\nu(\Delta)
    &:= \int_{\pi_{B}^{-1}(\Delta)} \pi_{A}(a,b)\ d\mu_{\mathrm{add}}^{\psi,\alpha}(a,b) \nonumber \\
    &= \int_{\mathbb{R}\times\Delta} a\ d\mu_{\mathrm{add}}^{\psi,\alpha}(a,b) \nonumber \\
    &= \frac{1+\alpha}{2} \cdot \frac{\langle \psi, E_{B}(\Delta)A\psi\rangle}{\|\psi\|^{2}} + \frac{1-\alpha}{2} \cdot \frac{\langle \psi, AE_{B}(\Delta)\psi\rangle}{\|\psi\|^{2}}.
\end{align}
Since $\nu \ll \mu_{B}^{\psi}$, the Radon-Nikod{\'y}m derivative $\mathbb{E}[A|B; \mu_{\mathrm{add}}^{\psi,\alpha}] := d\nu/d\mu_{B}^{\psi}$ exists.  The validity of the equality $\mathbb{E}^{\alpha}[A|B;\psi] = \mathbb{E}[A|B; \mu_{\mathrm{add}}^{\psi,\alpha}]$ is immediate by definition.
\end{proof}
\noindent
This result provides a statistical interpretation of conditional quasi-expectations as `conditional averages' of an observable $A$ given another observable $B$ with respect to the QJP distributions concerned.  As a corollary, one also has a statistical interpretation of Aharonov's weak value.
\begin{corollary}[Statistical Interpretation of Aharonov's Weak Value]
Under the same conditions, let $A_{w} : = \mathbb{E}^{1}[A|B; \psi]$ denote the Aharonov's weak value introduced in \eqref{def:weak_value}.  Then, the weak value 
\begin{equation}
A_{w}(B) = \mathbb{E}[A|B; \mu_{\mathrm{add}}^{\psi,1}] = \mathbb{E}^{1}[A|B; \psi]
\end{equation}
admits interpretation as the `conditional averages' of an observable $A$ given another observable $B$ with respect to the additive subfamily of QJP distributions for the choice $\alpha = 1$.
\end{corollary}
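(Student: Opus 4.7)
The proof is essentially immediate from Proposition~\ref{prop:stat_int_cond_quasi_exp}, and the plan is simply to specialise that result to the case $\alpha = 1$. First, I would invoke Proposition~\ref{prop:stat_int_cond_quasi_exp} with $\alpha=1$, which asserts the equality $\mathbb{E}^{1}[A|B;\psi] = \mathbb{E}[A|B;\mu_{\mathrm{add}}^{\psi,1}]$ under the stated boundedness and self-adjointness hypotheses on $A$ and $B$. This already yields the second equality in the displayed formula of the corollary.

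Next, I would recall that the weak value $A_w$ was introduced in \eqref{def:weak_value} precisely as $A_w = \mathbb{E}^{1}[A|B=1;\psi]$ in the special case of post-selection, and more generally (as used here) as the $\alpha=1$ member of the complex-parametrised family of conditional quasi-expectations $\mathbb{E}^{\alpha}[A|B;\psi]$ defined in \eqref{def:cond_quasi-exp_alpha}. Thus, by the very definition of $A_w$, the identification $A_w(B) = \mathbb{E}^{1}[A|B;\psi]$ provides the first equality, once one passes through the embedding $\Phi : L^{2}(\mu_{B}^{\psi}) \to \mathfrak{E}_{\psi}(B)$ of \eqref{eq:embedding_l2} that allows us to view the square-integrable function $\mathbb{E}^{1}[A|B;\psi]$ as (an equivalence class of) a bounded operator, as already done in Proposition~\ref{prop:orth_proj_cond_qe} and Corollary~\ref{prop:orth_proj_wv}.

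Chaining these two equalities together gives $A_w(B) = \mathbb{E}[A|B;\mu_{\mathrm{add}}^{\psi,1}] = \mathbb{E}^{1}[A|B;\psi]$, which is the claim. Since both $A$ and $B$ are assumed bounded self-adjoint, the Radon-Nikod{\'y}m derivative defining $\mathbb{E}[A|B;\mu_{\mathrm{add}}^{\psi,1}]$ is guaranteed to exist by the argument in the proof of Proposition~\ref{prop:stat_int_cond_quasi_exp}, so there is no subtlety concerning well-definedness. There is no real obstacle here; the content of the corollary is purely that of recognising the weak value as a particular instance of the general conditional quasi-expectation whose statistical meaning was already established. The only point deserving a brief remark is that the additive sub-family at $\alpha=1$ corresponds to the hashed operator $\#_{\mathrm{add}}^{1}(s,t) = e^{-itB}e^{-isA}$, which is precisely the Kirkwood-Dirac distribution, so the `conditional average' interpretation of $A_w$ is with respect to the Kirkwood-Dirac QJP distribution of the pair $(A,B)$ on $|\psi\rangle$.
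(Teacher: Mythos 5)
Your proof is correct and follows exactly the route the paper intends: the corollary is an immediate specialisation of Proposition~\ref{prop:stat_int_cond_quasi_exp} to $\alpha=1$, combined with the definition of $A_{w}$ as the $\alpha=1$ conditional quasi-expectation and the embedding $\Phi$ used already in Corollary~\ref{prop:orth_proj_wv}. Your closing remark identifying the $\alpha=1$ additive member with the Kirkwood--Dirac distribution is likewise consistent with the paper's own definition of that distribution.
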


\newpage
\section{Summary and Discussion}\label{sec:sc}

Now we present a recap of our results obtained in this paper before going over to our discussions. 

\subsection{Summary}
The underlying motivation for our study was to obtain a coherent understanding to the formalism of quasi-joint-probabilities (QJP) of quantum observables, and to find the interpretation of Aharonov's weak value within this framework.  The main body, starting from Section 2 to 7, was devoted to the discussion of three logical groups of mutually interrelated topics, namely (i) an heuristic construction of QJP distributions (Section~\ref{sec:ups_I} to \ref{sec:ps_II}), (ii) formal definition of QJP distributions (Section~\ref{sec:qp_qo}), and (iii) its application to the interpretation of the weak value (Section~\ref{sec:app}). 
Each of these sections will be summarised concisely below.

\subsubsection{QJP: Heuristic Construction}

Four sections starting from Section~\ref{sec:ups_I} to Section~\ref{sec:ps_II} were devoted to some careful analyses on the quantum measurement models that we called the unconditioned and the conditioned measurement (UM and CM) schemes.  By inspecting each of the measurement models in terms of statistical averages and raw probability distributions, we confirmed that one may obtain the desired information of the target system by either looking into the strong or weak regions of the intensity of the interaction parameter.  Specifically, we saw that the study on the CM scheme naturally lead us to the concept of quasi-joint-probability (QJP) distributions of generally non-commuting pair of observables.

\paragraph{Section~\ref{sec:ups_I} (UM I)}
Section~\ref{sec:ups_I} was devoted to a review on the UM scheme from a standard operator-centric approach.
The quantity of interest was the statistical average of the meter observable after the interaction, from which we confirmed the well known fact that the information of both the target observable $A$ and the target state $|\phi\rangle$ can be retrieved in the form of the expectation value $\mathbb{E}[A;\phi]$ of $A$.  The expectation value $\mathbb{E}[A;\phi]$ was shown to be obtained from the measurement outcome of the meter observable, either by probing the strong limit $g \to \pm \infty$ or the weak limit $g \to 0$ of the interaction parameter.

\paragraph{Section~\ref{sec:ups_II} (UM II)}
In Section~\ref{sec:ups_II}, we took a closer look at the UM scheme in the level of probabilities, where the quantity of interest was now not just the statistical average but the `raw' probability measure describing the probabilistic behaviour of the measurement outcomes of the meter observable.  We saw that the outcome of the meter observable after the interaction was given by a convolution of both the initial profiles of the target and the meter observables. As for the retrieval of the target information, we found that, in a parallel manner to the previous section, the full profile of the target observable  
can be reclaimed by either probing the strong or the weak limit of the interaction.

\paragraph{Section~\ref{sec:ps_I} (CM I)}
In Section~\ref{sec:ps_I}, we conducted an analysis of the conditioned measurement scheme in the operator level, where the quantity of interest became the \emph{conditional expectation} of the meter observable under another given conditioning observable $B$ of the target system.  Some relevant topics, including a review and comments on the recent theoretical analyses on the alleged technical advantages of employing conditioning for precision measurements were presented, along with a measure theoretic approach to the possible limit for `amplification' by conditioning, and a systematical method (with an example) to analytically evaluate the conditional expectation in the case where $A$ has a spectrum consisting of finite points.  As for the retrieval of the target information, we exclusively studied the behaviour of the meter outcome in the weak region of the interaction parameter, and observed that the obtained value can be understood as a quantum analogue of conditional expectations, which we termed \emph{conditional quasi-expectations}, of the target observable $A$ given the conditioning observable $B$, to which Aharonov's weak value belongs as a special case.    It was also revealed that there exists some qualitative difference on the properties between the classical conditional expectations and the quantum analogue discussed here.

\paragraph{Section~\ref{sec:ps_II} (CM II)}
In Section~\ref{sec:ps_II}, the study of the conditioned measurement scheme was given a probabilistic approach, where the quantity of interest now became the \emph{QJP distribution} of a pair of canonically conjugate observables on the meter system, conditioned by the outcome of the conditioning observable $B$ of the target system.  For definiteness, this was accomplished in view of the Wigner-Ville distribution, which was primarily chosen as a convenient realisation among the various candidates of the quasi-probability distributions of the canonically conjugate pair that may be naturally associated with the quantum state of the meter system.
It was then argued that, in parallel to the UM case, one can recover the information of the target system by examining either the strong or the weak region of the interaction parameter, and that the information obtained can be understood as a quantum analogue of conditional probabilities, which we termed \emph{conditional quasi-probabilities}, of the target observable $A$ given the conditioning observable $B$ on the initial state $|\phi\rangle$. 
We then found that the conditional quasi-probability shares similar properties with the classical counterpart, while it admits complex values unlike the classical one.  We subsequently confirmed that, given $B$, the `statistical average' of the conditional quasi-probability of $A$ coincides with the conditional quasi-expectation of $A$ obtained in the preceding section.  This is precisely the same as the relation between classical conditional probabilities and conditional expectations.  

\subsubsection{QJP: Formal Definition}

Inspired by the heuristic arguments from the bottom-up and operational analyses given in the preceding four sections, in Section~\ref{sec:qp_qo} we provided the top-down discussion on QJP distributions defined for arbitrary pairs of generally non-commutative quantum observables.

\paragraph{Section~\ref{sec:qp_qo} (QJP of Quantum Observables)}
Based on the results of the spectral theorem for self-adjoint/normal operators on Hilbert spaces and their Fourier transforms, we proposed a general prescription for defining distributions describing the `joint behaviour' of a pair of generally non-commuting quantum observables, which serves as a natural generalisation to that defined for a pair of simultaneously measurable observables.  We then observed that the QJP defined this way for a non-commutative pair of observables admits  arbitrariness, that is, there exists a multitude of candidates that all share in common certain desirable properties to be qualified as QJP.  
We subsequently concentrated on a special sub-family of the class of QJP distributions parametrised by a single complex number for the ease of further discussions, such that it includes both the Wigner-Ville type and the Kirkwood-Dirac type of QJP distributions which are among the most familiar examples considered in the literature.  We then summarised our results obtained up to Section~\ref{sec:ps_II} from a relatively aerial viewpoint gained here, and discussed where the heuristic arguments and observations in the foregoing sections find their places in this broader framework.

\subsubsection{Application}

As the final topic, we gave an example of application of our observations on QJP distributions of quantum observables.

\paragraph{Section~\ref{sec:app} (Application: Interpretation of the Weak Value)}

To discuss where the mathematical observations on QJP distributions may find their use, we studied on the quantum analogue of `correlations' (inner products) that can be defined even for a pair of non-commuting observables.  As is well known, due to the non-commutative nature of quantum observables, there is no unique way to introduce a `natural inner product' on the space of quantum observables.  We showed that the ambiguity of the possible geometries that can be introduced on the space corresponds precisely to the ambiguity of the definition of QJP distributions, and that the QJP distributions provides a convenient representation of the geometries in terms of `integration' (statistics).
We then concentrated on a special sub-family of all possible QJP distributions parametrised by a single complex number and observed that
the geometric concept of \emph{orthogonal projection} may be endowed with a statistical interpretation as \emph{conditioning}.  
This fact is analogous to the classical case, while the difference lying in the fact that, for the quantum case, there could be multiple orthogonal projections due to the non-uniqueness of the inner product.  
The main finding is that, Aharonov's weak value may be understood as a special realisation of the possible orthogonal projections of a quantum observable $A$ onto the space of all normal operators generated by another observable $B$, and at the same time, as a conditioning of $A$ when the outcomes of $B$ is given.  The former is a geometric interpretation of the weak value, while the latter is its statistical interpretation, but since QJP distributions tie them together, both interpretations are equivalent.

\subsection{Discussion}

Since the advent of quantum theory founded nearly a century ago,
non-commutativity of quantum observables has undoubtedly been one of the major sources of troubles we face when we try to interpret their measurement outcomes in a sensible manner.  
This has naturally led to various attempts of `quasi-classical' interpretation of quantum observables in terms of commuting quantities familiar to us in classical theory.  
Wigner, Weyl and Moyal were among the prominent figures who have made much contribution in this effort, bearing most notably the theory of Wigner-Weyl transform \cite{Weyl_1927} and Weyl-Groenewold-Moyal product \cite{Groenewold_1946, Moyal_1949}.  In particular, the theory of 
Wigner-Weyl transform provides an invertible mapping between functions defined on a phase space and operators on a Hilbert space, in which the mapping from functions to operators is called the Weyl transform, whereas the inverse is called the Wigner transform.   
It is notable in this respect that the Wigner-Ville distributions arise as the Wigner transform of density operators, and from this follows the fact that 
the expectation values of quantum observables can be expressed as the statistical average by integration of their Wigner transforms with respect to the Wigner-Ville distribution defined on the phase space.

Viewed from the broader context of these quasi-classical transforms, the mathematical methods developed in this paper may be understood as another functional analytic approach to this problem.  Recall that, in functional analysis, a map that assigns a ring of functions onto a \emph{commutative} sub-algebra of the algebra of quantum observables is known as the functional calculus, which in turn is known to be uniquely represented by a spectral measure.  The family of quasi-joint-spectral distributions (QJSDs) introduced in Section~\ref{sec:qp_qo} are non-commutative analogues of spectral measures, which induce maps that assign functions to generally \emph{non-commutative} sets of quantum observables.  Due to the possible non-commutativity of the chosen combination of observables, QJSDs are in general highly non-unique, and this leads to various candidates of quasi-classical transforms.  In fact, the Wigner-Weyl transform can be understood as a special case in this framework, namely, the quasi-classical transform corresponding to the member of our complex parametrised convolutive sub-family of QJSDs mentioned in the text for the particular choice $\alpha = 0$.  The method of `hashing' presented in this paper thus exemplifies a procedure for constructing a  broad class of candidates of quasi-classical transforms.

The method of quasi-classical transforms, to which the Wigner-Weyl transform belongs as a special case, not only offers a statistical interpretation of the behaviour of a combination of non-commuting quantum observables, but also sheds new light on the physical analysis in quantum mechanics pertaining to that process.  It should be obvious that one can draw an analogy to various concepts and results in classical probability theory when one considers the quantum counterparts obtained by this method, which allows for an intuitive treatment of the latter based on the geometric structure present in the probability theory.   Besides, transformation of Hilbert space operators into functions or quasi-probability distributions has its own technical merit in the mathematical analysis, since familiar results in measure and integration theory, including various convergence theorems, integral inequalities and representation theorems, are readily available.  

One of the direct applications taking advantage of these properties is the geometric/statistical interpretation of the weak value discussed in Section~\ref{sec:app}.  
There, we found that the weak value can be regarded as one of the possible quantum analogues of conditional expectations, which are indeed fundamental quantities in quantum mechanics as much as the standard conditional expectations are in classical probability theory.
This interpretation also leads to novel inequalities of uncertainty relations for approximation and estimation which are capable of treating both the position-momentum inequality and the time-energy inequality \cite{Lee_2016} within a unified framework.  

Finally, we wish to note that, in any conditioned quantum measurement such as the weak measurement, non-commutative observables must be dealt with in one way or another in the context of probability theory when one tries to make sense of the measurement outcome.
Given this, we expect that our method of quasi-classical transforms, which is established on a rigorous mathematical basis, may offer
a fundamental and practical scheme in which issues involving measurement results of non-commuting observables are analysed.

\newpage
\section*{Acknowledgment}
The authors appreciate Professor A. Hosoya and S. Tanimura for helpful discussions and insightful comments.
This work was supported in part by JSPS KAKENHI No.~25400423,  No.~26011506, and by the Center for the Promotion of Integrated Sciences (CPIS) of SOKENDAI.
\vfill\pagebreak

\bibliographystyle{unsrt}
\bibliography{wm}

\newpage

\appendix

\section{Post-selected Measurement}\label{sec:PSM}

Given that the conditioning observable $B$ has a spectrum of finite cardinality, we have seen in \eqref{eq:cond_exp_B_fin} that the conditional expectation $\mathbb{E}[X|B ; \Psi^{g}]$ admits an explicit expression, of which value reduces to
\begin{align}
\mathbb{E}[X|B = b ; \Psi^{g}]
    &= \frac{\mathbb{E}\left[\Pi_{b} \otimes X; \Psi^{g} \right]}{\left\|(\Pi_{b} \otimes I)\Psi^{g}\right\|^{2}} \nonumber \\
    &= \mathbb{E}[X|\Pi_{b} = 1 ; \Psi^{g}]
\end{align}
for the choice $b \in \sigma(B)$ such that the probability of observing it is non-vanishing. This is roughly to say that the description of a conditioning by a general observable $B$, hence the study of conditioned measurement scheme, essentially reduces to that given by a projection. Of course, this should be intuitively clear, since each self-adjoint operator admits a unique spectral decomposition.
As the extreme case, the choice of the conditioning observable
\begin{equation}\label{eq:cond_obs_1_dim}
B = |\phi^{\prime}\rangle\langle\phi^{\prime}|
\end{equation}
given by a projection on a one-dimensional subspace of $\mathcal{H}$ spanned by a unit vector $|\phi^{\prime}\rangle \in \mathcal{H}$ becomes of special interest for our study. The vast majority of literatures with similar interest to this paper is devoted to the study of this special type of conditional measurement, and the act of measuring the conditional expectation
\begin{equation}
\mathbb{E}[X| |\phi^{\prime}\rangle\langle\phi^{\prime}| = 1 ; \Psi^{g}]
\end{equation}
is mostly referred to as the `post-selected measurement' or the `weak measurement'. In such context, the unit vector $|\phi^{\prime}\rangle$ is occasionally called the \emph{final state}, denoted as $|\phi_{f}\rangle := |\phi^{\prime}\rangle$, in order to contrast it with the initial state denoted as $|\phi_{i}\rangle := |\phi\rangle$.

\subsection{Example: Analytic Model}

We are now interested in the construction of a model in which the conditional expectation can be analytically computed for all range of the interaction parameter $g \in \mathbb{R}$. To this end, we assume that the target observable $A$ has a spectrum of finite cardinality. 
One readily sees that the `conditional' composite state essentially reduces to the computation of the vector
\begin{equation}\label{eq:comp_stat_pure}
(\Pi_{f} \otimes I) |\Psi^{g}\rangle = \sum_{n=1}^{N} \langle \phi_{f}, \Pi_{a_{n}} \phi_{i} \rangle \cdot |\phi_{f} \otimes e^{-iga_{n}Y}\psi\rangle
\end{equation}
for the special choice of the conditioning observable $B=\Pi_{f} := |\phi_{f}\rangle\langle\phi_{f}|$ defined by some final state $|\phi_{f}\rangle \in \mathcal{H}$.  A careful observation reveals that the `conditional' meter state \eqref{def:mixed_state_given_b}, which is in general a mixed state for the general conditioning observable $B$, in fact becomes a pure state
\begin{equation}\label{eq:cond_met_stat_pure}
|\psi^{g}_{\Pi_{f} = 1}\rangle = \frac{\sum_{n=1}^{N} \langle \phi_{f}, \Pi_{a_{n}} \phi_{i} \rangle \cdot |e^{-iga_{n}Y}\psi\rangle}{\left\| \sum_{n=1}^{N} \langle \phi_{f}, \Pi_{a_{n}} \phi_{i} \rangle \cdot |e^{-iga_{n}Y}\psi\rangle \right\|}
\end{equation}
for the post-selected measurement case, in which the conditional expectation reads
\begin{align}\label{eq:psm_cond_exp_fin}
\mathbb{E}[X|\Pi_{f} = 1; \Psi^{g}]
    &= \mathbb{E}[X; \psi^{g}_{\Pi_{f} = 1}] \nonumber \\
    &= \frac{\sum_{m = 1}^{N}\sum_{n = 1}^{N} \langle \phi_{i}, \Pi_{a_{m}} \phi_{f} \rangle\langle \phi_{f}, \Pi_{a_{n}} \phi_{i} \rangle \langle e^{-iga_{m}Y} \psi, X e^{-iga_{n}Y} \psi\rangle}{\sum_{m = 1}^{N}\sum_{n = 1}^{N} \langle \phi_{i}, \Pi_{a_{m}} \phi_{f} \rangle\langle \phi_{f}, \Pi_{a_{n}} \phi_{i} \rangle \langle e^{-iga_{m}Y} \psi, e^{-iga_{n}Y} \psi\rangle},
\end{align}
whenever the denominator is non-vanishing, {\it i.e.}, when the `conditional' meter state is not a zero vector.
One thus learns that the computation of the conditional expectation essentially reduces to the that of the quantity of the form
\begin{equation}
\langle e^{-iga_{m}Y} \psi, Z e^{-iga_{n}Y} \psi\rangle, \quad 1 \leq m, n \leq N
\end{equation}
for the choice $Z = I, Q, P$.

\paragraph{Gaussian Example}

For our demonstration, we consider the simplest non-trivial model in which the target observable $A$ is dichotomic, that is, it has a discrete spectrum consisting of only two distinct eigenvalues $\{a_{1}, a_{2}\}$.  For concreteness, we now assume that the meter system is described by the Schr{\"o}dinger representation of the CCR $\{ L^{2}(\mathbb{R}), \mathscr{S}(\mathbb{R}), \{\hat{x}, \hat{p}\}\}$, and choose $Y = \hat{p}$ without loss of generality.
Despite its simplicity, this model should retain its usefulness in the sense that it covers the situations in recent experiments of weak measurement \cite{Hosten_2008,Dixon_2009}. We also note that the condition $A^{2} = I$, under which the previous works \cite{Wu_2011,Koike_2011,Nakamura_2012} performed a full order calculation, is in fact a special case ($\{a_{1}, a_{2}\} = \{ -1, 1\}$) of our setting.

Now, by recalling that the subspace $\mathscr{S}(\mathbb{R}) \subset L^{2}(\mathbb{R})$ is in particular invariant under the operations $\hat{x}$ and $\hat{p}$, hence $\mathscr{S}(\mathbb{R}) \subset \mathcal{D}$, we see from our previous general argument that for any choice of the initial meter state $\psi \in \mathscr{S}(\mathbb{R})$ and the pair of pre- and post-selections satisfying the non-orthogonality condition $|\langle\phi_{f} |\phi_{i}\rangle| \neq 0$, the conditional expectation \eqref{eq:cond_exp_val} should be well-defined on an appropriate neighbourhood of $g=0$.
For both definiteness and practicality, we shall choose the initial meter state $\psi \in \mathscr{S}(\mathbb{R})$ to be a real Gaussian wave function
\begin{equation}
\psi(x) := \pi^{-1/4}\exp\left(-\frac{x^{2}}{2}\right)
\end{equation}
centred at $x=0$ with normalisation $\|\psi\|_{2} = 1$.
In order to see how the choice of the parameter $g$ and that of the initial meter state affects the result of the measurement, we consider the family of states $\{ \psi_{(h)}\}_{h > 0}$ scaled from the Gaussian state defined by
\begin{equation}\label{eq:Gaussian_approximate_identity}
\psi_{(h)}(x) := (\pi h^{2})^{-1/4}\exp\left(-\frac{x^{2}}{2h^{2}}\right),
\end{equation}
where now the parameter $h$ specifies the `width' of the initial Gaussian profile of $\psi$ ({\it cf}. \eqref{eq:approx_ident_state}).
One then finds
\begin{align}\label{eq:gauss_computation}
\langle e^{-iga_{m}Y} \psi, Z e^{-iga_{n}Y} \psi\rangle
    &= \int_{\mathbb{R}} \psi^{*}_{(h)}(x-ga_{m}) Z \psi_{(h)}(x-ga_{n})\ d\beta(x) \nonumber \\
    &= 
 \begin{cases}
    \exp \left(-\frac{g^{2}}{h^{2}} \left( \frac{a_{m} - a_{n}}{2} \right)^{2} \right), & Z = I,  \\
    \left(g \cdot \frac{a_{m} + a_{n}}{2} \right) \exp \left(-\frac{g^{2}}{h^{2}} \left( \frac{a_{m} - a_{n}}{2} \right)^{2} \right), & Z = \hat{x}, \\
    \left( i\frac{g}{h^{2}} \cdot \frac{a_{m} - a_{n}}{2} \right) \exp \left(-\frac{g^{2}}{h^{2}} \left( \frac{a_{m} - a_{n}}{2} \right)^{2} \right), & Z = \hat{p}.
 \end{cases}
\end{align}

Given the spectral decomposition $A = a_{1}\Pi_{a_{1}} + a_{2}\Pi_{a_{2}}$ for our case, we introduce the shorthand
\begin{equation}
\Lambda_{m} := {{a_{1} + a_{2}}\over{2}},
\quad \Lambda_{r} := {{a_{2} - a_{1}}\over{2}},
\quad A_{w}^{0} := A_{w} - \Lambda_{m}
\end{equation}
for later convenience, which respectively represents the barycentre of the two eigenvalues, the half-width of the numerical range, and the `centralised' weak value of $A$ defined by
\begin{equation}
A_{w} := \frac{\langle \phi_{f}, A \phi_{i} \rangle}{\langle \phi_{f}, \phi_{i} \rangle}.
\end{equation}
One then finds through routine computation (see Appendix \ref{comp:Gaussian} for computational detail) the following results
\begin{align}
	\mathbb{E}\left[\hat{x} | \Pi_{f}=1 ; \Psi^{g} \right] &= g \cdot \frac{\,\mathrm{Re} [A_{w}^{0}] }{1 + a\left( 1 - e^{-g^{2}\Lambda_{r}^{2}/h^{2}} \right)} + g  \Lambda_{m}, \label{eq:Gauss_Shift_Q} \\
	\mathbb{E}\left[\hat{p} | \Pi_{f}=1 ; \Psi^{g} \right] &= \frac{g}{2h^{2}} \cdot \frac{ \,\mathrm{Im} [A_{w}^{0}] e^{-g^{2}\Lambda_{r}^{2}/h^{2}}}{1 + a \left( 1 - e^{-g^{2}\Lambda_{r}^{2}/h^{2}} \right)} \label{eq:Gauss_Shift_P},
\end{align}
where we have used the quantity
\begin{equation}
a := \frac{1}{2}  \left( \left| \frac{A_{w}^{0}}{\Lambda_{r}} \right|^{2} - 1 \right),
\end{equation}
which is to be understood as a parameter corresponding to the `amplification rate' of the `centralised' weak value $A_{w}^{0}$ of $A$ to the half-width of its numerical range $\Lambda_{r}$.
We mention again that the result of the previous works in which $A^{2}=1$ is assumed is indeed a special case of the above formulae: we just put $\Lambda_{m} = 0$, $\Lambda_{r} = 1$, $a = \frac{1}{2} (| A_{w} |^{2} - 1 )$ to reproduce it.

\paragraph{Some Observations}

While the general argument only assures that the shift of the conditional expectation values are well-defined on an appropriate neighbourhood $U_{0}$ of $g=0$ for a given non-orthogonal choice of pre- and post-selections, the above result shows that it is in fact well-defined on the whole real line (hence $U_{0} = \mathbb{R}$) for our case.  Moreover, we also find that the shifts are indeed bounded for any choice of the pair of states of the target system due to the presence of the term 
$| A_{w}^{0}|^{2}$ hidden in the quantity $a$ in the denominator.

As for the recovery of the weak value $A_{w}$, one realises that, since the present choice of the meter state implies $\psi \in \mathscr{S}(\mathbb{R}) \subset \mathcal{D}$, the general argument in the previous subsection guarantees the differentiability of the shift, and by noting that $\mathbb{CV}_{\mathrm{S}}[\hat{x},\hat{p}; \psi] = 0$ and $\mathbb{CV}[\hat{p},\hat{p}; \psi] = \mathbb{V}[\hat{p}; \psi] = (2h^{2})^{-1}$, one should have
\begin{equation}
\left. \frac{d}{dg} \mathbb{E}[X|\Pi_{f} =1; \Psi^{g}] \right|_{g=0} =
    \begin{cases}
        \,\mathrm{Re}[A_{w}], & X = \hat{x}, \\
        \,\mathrm{Im}[A_{w}] \cdot (2h^{2})^{-1}, & X = \hat{p},
    \end{cases}
\end{equation}
based on the result \eqref{eq:wpsm_diff}. Indeed, observing that $\mathbb{E}\left[X | \Pi_{f}=1 ; \Psi^{0} \right] = 0$ for both choices $X \in \{\hat{x}, \hat{p}\}$, one may directly verify this as
\begin{align}
\left. \frac{d}{dg} \mathbb{E}[\hat{x}|\Pi_{f} =1; \Psi^{g}] \right|_{g=0}
    &= \lim_{g \to 0} \frac{\mathbb{E}[\hat{x}|\Pi_{f} =1; \Psi^{g}]}{g} \nonumber \\
    &= \lim_{g \to 0} \frac{\,\mathrm{Re} [A_{w}^{0}] }{1 + a\left( 1 - e^{-g^{2}\Lambda_{r}^{2}/h^{2}} \right)} + \Lambda_{m} \nonumber \\
    &= \,\mathrm{Re}[A_{w}^{0}] + \Lambda_{m} \nonumber \\
    &= \,\mathrm{Re}[A_{w}]
\end{align}
and
\begin{align}
\left. \frac{d}{dg} \mathbb{E}[\hat{p}|\Pi_{f} =1; \Psi^{g}] \right|_{g=0}
    &= \lim_{g \to 0} \frac{\mathbb{E}[\hat{p}|\Pi_{f} =1; \Psi^{g}]}{g} \nonumber \\
    &= \lim_{g \to 0} \frac{1}{2h^{2}} \cdot \frac{ \,\mathrm{Im} [A_{w}^{0}] e^{-g^{2}\Lambda_{r}^{2}/h^{2}}}{1 + a \left( 1 - e^{-g^{2}\Lambda_{r}^{2}/h^{2}} \right)} \nonumber \\
    &= \,\mathrm{Im}[A_{w}] \cdot (2h^{2})^{-1}
\end{align}
as expected

Another observation worthy of note is that the scaled outputs $\mathbb{E}[\hat{x}|\Pi_{f} =1; \Psi^{g}]/g$ and $\mathbb{E}[\hat{p}|\Pi_{f} =1; \Psi^{g}]/(g/(2h^{2}))$ are dependent on the parameters $g$ and $h$ only through the combination $hg^{-1}$, and that both tend to their respective desired values
\begin{align}
\lim_{hg^{-1} \to \infty} \frac{\mathbb{E}[\hat{x}|\Pi_{f} =1; \Psi^{g}]}{g}
    &= \lim_{hg^{-1} \to \infty} \frac{\,\mathrm{Re} [A_{w}^{0}] }{1 + a\left( 1 - e^{-g^{2}\Lambda_{r}^{2}/h^{2}} \right)} + \Lambda_{m} = \,\mathrm{Re} [A_{w}],\\
\lim_{hg^{-1} \to \infty} \frac{\mathbb{E}[\hat{p}|\Pi_{f}=1; \Psi^{g}]}{g/(2h^{2})}
    &= \lim_{hg^{-1} \to \infty} \frac{ \,\mathrm{Im} [A_{w}^{0}] e^{-g^{2}\Lambda_{r}^{2}/h^{2}}}{1 + a \left( 1 - e^{-g^{2}\Lambda_{r}^{2}/h^{2}} \right)} = \,\mathrm{Im}[A_{w}]
\end{align}
by taking the limit of the combination $hg^{-1} \to \infty$.
Observe that the manner in which we take the limit of the combination $hg^{-1}$ to recover the desired information is the opposite between the unconditioned case (`strong'/`sharp' measurement) \eqref{eq:ups_g_and_h} and the post-selected case above.  Namely, here we may either take the interaction $g \to 0$ to the `weak' limit, broaden the wave-function $h \to \infty$ to the `unsharp' limit, or appropriately balancing the combination thereof and let $hg^{-1} \to \infty$ as a whole.

\subsection{Computation of the Gaussian Example}\label{comp:Gaussian}

For better readability, we write
\begin{equation}
c_{n} := \langle \phi_{f}, \Pi_{a_{n}} \phi_{i} \rangle, \quad n=1,2.
\end{equation}
Observing that
$\langle\phi_{f}, A \phi_{i}\rangle = a_{1}c_{1} + a_{2}c_{2}$ and $\langle\phi_{f} , \phi_{i}\rangle = c_{1} + c_{2}$, one has
\begin{align}
A_{r}
    &:= \frac{\langle \phi_{f}, A\phi_{i} \rangle}{\langle \phi_{f}, \phi_{i} \rangle} - \Lambda_{m} \nonumber \\
    &= \frac{a_{1}c_{1} + a_{2}c_{2}}{c_{1} + c_{2}}  - \Lambda_{m} \nonumber \\
    &= \frac{a_{1}|c_{1}|^{2} + a_{2}|c_{2}|^{2} + a_{1}c_{1}c^{*}_{2} + a_{2}c^{*}_{1}c_{2}}{|c_{1}|^{2} + |c_{2}|^{2} + 2\,\mathrm{Re}\left[c^{*}_{1}c_{2}\right]}  - \Lambda_{m} \nonumber \\
    &= \Lambda_{r} \cdot \frac{- |c_{1}|^{2} + |c_{2}|^{2} + 2i\,\mathrm{Im}[c^{*}_{1}c_{2}]}{|c_{1}|^{2} + |c_{2}|^{2} + 2\,\mathrm{Re}\left[c^{*}_{1}c_{2}\right]},
\end{align}
whereby one obtains
\begin{align}
	\,\mathrm{Re}[A_{r}] &= \Lambda_{r} \cdot \frac{- |c_{1}|^{2} + |c_{2}|^{2}}{|c_{1}|^{2} + |c_{2}|^{2} + 2\,\mathrm{Re}\left[c^{*}_{1}c_{2}\right]}, \\
	\,\mathrm{Im}[A_{r}] &= \Lambda_{r} \cdot \frac{2 \,\mathrm{Im}\left[c^{*}_{1}c_{2}\right] }{|c_{1}|^{2} + |c_{2}|^{2} + 2\,\mathrm{Re}\left[c^{*}_{1}c_{2}\right]}
\end{align}
and
\begin{align}
a &:= \frac{1}{2} \left(\frac{\vert A_{r}\vert^{2}}{\Lambda_{r}^{2}} - 1 \right) \nonumber \\
    &= \frac{1}{2} \left( \frac{(- |c_{1}|^{2} + |c_{2}|^{2})^{2} + 4(\,\mathrm{Im} \left[c^{*}_{1}c_{2}\right])^{2}}{(\lvert c_{1} \rvert^{2} + \lvert c_{2} \rvert^{2} + 2\,\mathrm{Re} \left[c^{*}_{1}c_{2}\right])^{2}} -1 \right) \nonumber \\
    &= \frac{1}{2} \left( \frac{(|c_{1}|^{2} + |c_{2}|^{2})^{2} -4|c^{*}_{1}c_{2}|^{2} + 4(\,\mathrm{Im} \left[c^{*}_{1}c_{2}\right])^{2}}{(\lvert c_{1} \rvert^{2} + \lvert c_{2} \rvert^{2} + 2\,\mathrm{Re} \left[c^{*}_{1}c_{2}\right])^{2}} -1 \right) \nonumber \\
    &= \frac{1}{2} \left( \frac{(|c_{1}|^{2} + |c_{2}|^{2})^{2} - 4(\,\mathrm{Re} \left[c^{*}_{1}c_{2}\right])^{2}}{(\lvert c_{1} \rvert^{2} + \lvert c_{2} \rvert^{2} + 2\,\mathrm{Re} \left[c^{*}_{1}c_{2}\right])^{2}} -1 \right) \nonumber \\
    &= - \frac{2\,\mathrm{Re} \left[c^{*}_{1}c_{2}\right]}{\lvert c_{1} \rvert^{2} + \lvert c_{2} \rvert^{2} + 2\,\mathrm{Re} \left[c^{*}_{1}c_{2}\right]}
\end{align}
in terms of $\{c_{n}\}$.
Then, based on \eqref{eq:psm_cond_exp_fin} and \eqref{eq:gauss_computation}, one has
\begin{align}
\langle \psi^{g}, I \psi^{g} \rangle
    &= |c_{1}|^{2} + |c_{2}|^{2} + 2\,\mathrm{Re}[c_{1}^{*}c_{2}] e^{-g^{2}\Lambda_{r}^{2}/d^{2}}, \\
\langle \psi^{g}, \hat{x} \psi^{g} \rangle
    &= |c_{1}|^{2} \cdot ga_{1} + |c_{2}|^{2} \cdot ga_{2} + 2\,\mathrm{Re}[c_{1}^{*}c_{2}] \cdot g\Lambda_{m} e^{-g^{2}\Lambda_{r}^{2}/d^{2}}, \\
\langle \psi^{g}, \hat{p} \psi^{g} \rangle
    &= 2\,\mathrm{Im}[c_{1}^{*}c_{2}] \cdot \frac{g}{d^{2}}\Lambda_{r} e^{-g^{2}\Lambda_{r}^{2}/d^{2}},
\end{align}
which in turn yields
\begin{align}
\mathbb{E}[\hat{x}; \psi^{g}]
    &= \frac{\langle \psi^{g}, \hat{x} \psi^{g} \rangle}{\langle \psi^{g}, I \psi^{g} \rangle} \nonumber \\
    &= g\frac{|c_{1}|^{2} \cdot a_{1} + |c_{2}|^{2} \cdot a_{2} + 2\,\mathrm{Re}[c_{1}^{*}c_{2}] \cdot \Lambda_{m} e^{-g^{2}\Lambda_{r}^{2}/d^{2}}}{|c_{1}|^{2} + |c_{2}|^{2} + 2\,\mathrm{Re}[c_{1}^{*}c_{2}] e^{-g^{2}\Lambda_{r}^{2}/d^{2}}} \nonumber \\
    &= g\frac{|c_{1}|^{2} \cdot a_{1} + |c_{2}|^{2} \cdot a_{2} - (|c_{1}|^{2} + |c_{2}|^{2}) \Lambda_{m} }{|c_{1}|^{2} + |c_{2}|^{2} + 2\,\mathrm{Re}[c_{1}^{*}c_{2}] e^{-g^{2}\Lambda_{r}^{2}/d^{2}}} + g\Lambda_{m} \nonumber \\
    &= g \frac{\Lambda_{r}(- |c_{1}|^{2} + |c_{2}|^{2})}{|c_{1}|^{2} + |c_{2}|^{2} + 2\,\mathrm{Re}[c_{1}^{*}c_{2}] e^{-g^{2}\Lambda_{r}^{2}/d^{2}}} + g\Lambda_{m} \nonumber \\
    &= g \frac{\Lambda_{r}(- |c_{1}|^{2} + |c_{2}|^{2})}{(|c_{1}|^{2} + |c_{2}|^{2} + 2\,\mathrm{Re}[c_{1}^{*}c_{2}]) - 2\,\mathrm{Re}[c_{1}^{*}c_{2}] (1- e^{-g^{2}\Lambda_{r}^{2}/d^{2}})}  + g\Lambda_{m} \nonumber \\
    &= g \cdot \frac{\,\mathrm{Re}[A_{r}]}{1 + a (1- e^{-g^{2}\Lambda_{r}^{2}/d^{2}})}  + g\Lambda_{m}
\end{align}
and
\begin{align}
\mathbb{E}[\hat{p}; \psi^{g}]
    &= \frac{\langle \psi^{g}, \hat{p} \psi^{g} \rangle}{\langle \psi^{g}, I \psi^{g} \rangle} \nonumber \\
    &= \frac{2\,\mathrm{Im}[c_{1}^{*}c_{2}] \cdot \frac{g}{d^{2}}\Lambda_{r} e^{-g^{2}\Lambda_{r}^{2}/d^{2}}}{|c_{1}|^{2} + |c_{2}|^{2} + 2\,\mathrm{Re}[c_{1}^{*}c_{2}] e^{-g^{2}\Lambda_{r}^{2}/d^{2}}} \nonumber \\
    &= \frac{g}{d^{2}} \cdot \frac{\,\mathrm{Im}[A_{r}] e^{-g^{2}\Lambda_{r}^{2}/d^{2}}}{1 + a (1- e^{-g^{2}\Lambda_{r}^{2}/d^{2}})}.
\end{align}

\end{document}